\newcommand{\laplace}{\triangle}
\def \bbR{\mathbb{R}}
\def \bbP{\mathbb{P}}
\def \bbN{\mathbb{N}}
\def \bbP{\mathbb{P}}
\def \frH{\mathfrak{H}}
\def \frS{\mathfrak{S}}
\def \calL{\mathcal{L}}
\def \calE{\mathcal{E}}
\def \calP{\mathcal{P}}
\def \calN{\mathcal{N}}
\def \bfx{\mathbf{x}}
\def \wtphi{\widetilde{\phi}}
\def \wtx{\widetilde{x}}
\def \wtk{\widetilde{k}}
\def \whk{\hat{k}}
\def \barj{\bar{\jmath}}
\def \whx{\hat{x}}
\def \wtH{\widetilde{H}}
\def \wtC{\widetilde{C}}
\def \wtL{\widetilde{L}}
\def \rmd{\mathrm{d}}
\def \eps{\varepsilon}
\renewcommand{\leq}{\leqslant}
\renewcommand{\geq}{\geqslant}
\DeclareMathOperator{\meas}{meas}
\DeclareMathOperator{\card}{card}
\DeclareMathOperator{\supp}{supp}
\DeclareMathOperator{\diam}{diam}
\DeclareMathOperator{\dist}{dist}
\DeclareMathOperator{\Sym}{Sym}
\DeclareMathOperator{\sign}{sign}
\DeclareMathOperator{\Tr}{Tr}
\DeclareMathOperator{\Id}{Id}
\DeclareMathOperator{\Ex}{Ex}
\newcommand{\car}{\mathbf{1}}
\newcommand{\R}{\mathbb{R}}
\newcommand{\N}{\mathbb{N}}
\newcommand{\Z}{\mathbb{Z}}
\newcommand{\n}{\mathbb{N}}
\newcommand{\C}{\mathbb{C}}
\newcommand{\Q}{\mathbb{Q}}
\newcommand{\vers}[1]{\xrightarrow[\!#1]{}}
\newcommand{\D}{\displaystyle}
\newcommand{\Coi}{{\mathcal C}_0^{\infty}}
\newcommand{\esp}{\mathbb{E}}
\newcommand{\pro}{\mathbb{P}}
\def\n{\nabla}
\def\wtn{\widetilde n}
\def\wtx{\widetilde x}
\newcommand\Perp{\protect\mathpalette{\protect\PerP}{\perp}}
\def\PerP#1#2{\mathrel{\rlap{$#1#2$}\mkern2mu{#1#2}}}
\theoremstyle{plain}
\newtheorem{lemma}{Lemma}[section]
\newtheorem{proposition}[lemma]{Proposition}
\newtheorem{corollary}[lemma]{Corollary}
\newtheorem{Le}[lemma]{Lemma}
\newtheorem{Th}[lemma]{Theorem}
\newtheorem{Pro}[lemma]{Proposition}
\newtheorem{Cor}[lemma]{Corollary}
\theoremstyle{definition}
\newtheorem{definition}[lemma]{Definition}
\newtheorem{notation}[lemma]{Notation}
\theoremstyle{remark}
\newtheorem{Rem}[lemma]{Remark}
\newtheorem{remark}[lemma]{Remark}
\numberwithin{equation}{section}
\def \densEn{\calE}
\title[Interacting electrons in a random medium]{Interacting electrons
  in a random medium: a simple one-dimensional model}
\date{\today}
\author{Fr{\'e}d{\'e}ric Klopp} 
\address[Fr{\'e}d{\'e}ric Klopp]{ \vskip.1cm Sorbonne Universit{\'e}s, UPMC
  Univ. Paris 06, UMR 7586, IMJ-PRG, F-75005, Paris, France \vskip.1cm
  Univ. Paris Diderot, Sorbonne Paris Cit{\'e}, UMR 7586, IMJ-PRG, F-75205
  Paris, France \vskip.1cm CNRS, UMR 7586, IMJ-PRG, F-75005, Paris,
  France}
\email{\href{mailto:frederic.klopp@imj-prg.fr}{frederic.klopp@imj-prg.fr}}
\author{Nikolaj~A.~Veniaminov}
  \address[Nikolaj~A.~Veniaminov]{CEREMADE, UMR CNRS 7534
  Universit{\'e} Paris IX Dauphine, Place du Mar{\'e}chal De Lattre De
Tassigny F-75775 Paris cedex 16 France}
  \email{\href{mailto:veniaminov@ceremade.dauphine.fr}{veniaminov@ceremade.dauphine.fr}}
\keywords{Interacting electrons, random Schr{\"o}dinger operators,
  thermodynamic limit}
\subjclass[2010]{Primary 81V70, 82B44; Secondary 82D30}
\thanks{This work is partially supported by the grant
  ANR-08-BLAN-0261-01. The authors also acknowledge the support of the
  IMS (NU Singapore) where part of this work was done. F.K. thanks
  T. Duquesne for his explaining the Palm formula.}
\begin{document}

\begin{abstract}
  The present paper is devoted to the study of a simple model of
  interacting electrons in a random background. In a large interval
  $\Lambda$, we consider $n$ one dimensional particles whose evolution
  is driven by the Luttinger-Sy model, i.e., the interval $\Lambda$ is
  split into pieces delimited by the points of a Poisson process of
  intensity $\mu$ and, in each piece, the Hamiltonian is the Dirichlet
  Laplacian. The particles interact through a repulsive pair potential
  decaying polynomially fast at infinity. We assume that the particles
  have a positive density, i.e., $n/|\Lambda|\to\rho>0$ as
  $|\Lambda|\to+\infty$. In the low density or large disorder regime,
  i.e., $\rho/\mu$ small, we obtain a two term asymptotic for the
  thermodynamic limit of the ground state energy per particle of the
  interacting system; the first order correction term to the non
  interacting ground state energy per particle is controlled by pairs
  of particles living in the same piece. The ground state is described
  in terms of its one and two-particles reduced density
  matrix. Comparing the interacting and the non interacting ground
  states, one sees that the effect of the repulsive interactions is to
  move a certain number of particles living together with another
  particle in a single piece to a new piece that was free of particles
  in the non interacting ground state.
  \vskip.5cm\noindent \textsc{R{\'e}sum{\'e}.}  
  Dans ce travail, nous consid{\'e}rons un mod{\`e}le simple de {\'e}lectrons en
  interaction dans un environnement al{\'e}atoire. Dans un grand
  intervalle $\Lambda$, nous consid{\'e}rons $n$ particules
  uni-dimensionnelles dont l'{\'e}volution est r{\'e}gie par le mod{\`e}le de
  Luttinger-Sy : l'intervalle $\Lambda$ est subdivis{\'e} en pi{\`e}ces
  d{\'e}limit{\'e}es par les points d'un processus de Poisson d'intensit{\'e}
  $\mu$ et, dans chaque pi{\`e}ce, le hamiltonien est le laplacien de
  Dirichlet. Les particules interagissent par paires au travers d'un
  potentiel r{\'e}pulsif d{\'e}croissant polynomialement {\`a} l'infini. On
  suppose que la densit{\'e} de particules est positive c'est-{\`a}-dire que
  $n/|\Lambda|\to\rho>0$ quand $|\Lambda|\to+\infty$. Lorsque la
  densit{\'e} est petite ou lorsque le d{\'e}sordre est grand, c'est-{\`a}-dire
  lorsque $\rho/\mu$ est petit, nous obtenons une asymptotique {\`a} deux
  termes de la limite thermodynamique de l'{\'e}nergie fondamentale par
  particule du syst{\`e}me ; le premier terme de correction {\`a} l'{\'e}nergie
  fondamentale par particule du syst{\`e}me sans interaction est contr{\^o}l{\'e}
  par les paires de particules vivant dans la m{\^e}me pi{\`e}ce. L'{\'e}tat
  fondamental est d{\'e}crit au moyen de sa matrice de densit{\'e} r{\'e}duite {\`a}
  une et {\`a} deux particules. En comparant l'{\'e}tat fondamental avec
  interaction {\`a} l'{\'e}tat fondamental sans interaction, on voit que
  l'effet des interactions est de s{\'e}parer un certain nombre de
  particules qui vivent en paire avec une autre particule dans la m{\^e}me
  pi{\`e}ce vers des pi{\`e}ces inoccup{\'e}es dans l'{\'e}tat fondamental sans
  interaction.

\end{abstract}

\maketitle


\section{Introduction: the model and the main results}
\label{sec:introduction}
On $\R$, consider a Poisson point process $d\mu(\omega)$ of intensity
$\mu$. Let $(x_k(\omega))_{k\in\Z}$ denote its support (i.e., $\D
d\mu(\omega)=\sum_{k\in\Z}\delta_{x_k(\omega)}$), the points being
ordered increasingly.\\
On $L^2(\R)$, define the Luttinger-Sy or pieces model (see
e.g.~\cite{PhysRevA.7.701,MR1042095}), that is, the random operator
\begin{equation*}
  H_\omega=\bigoplus_{k\in\Z}-\Delta_{|[x_k,x_{k+1}]}^D
\end{equation*}
where, for an interval $I$, $-\Delta_{|I}^D$ denotes the Dirichlet
Laplacian on $I$. \\
Pick $L>0$ and let $\Lambda=\Lambda_L=[0, L]$. Restrict $H_\omega$
to $\Lambda$ with Dirichlet boundary conditions: on
$\frH:=L^2(\Lambda)$, define
\begin{equation}
  \label{eq:2}
  H_\omega(L)=H_\omega(\Lambda)=
  \bigoplus_{k_--1\leq k\leq k_+}-\Delta_{|\Delta_k(\omega)}^D
\end{equation}
where we have defined
$\Delta_k(\omega):=[x_k(\omega),x_{k+1}(\omega)]$ to be the {\it
  $k$-th piece} and we have set
\begin{equation*}
  \begin{aligned}
 k_-=\min\{k; x_k > 0\},\quad x_{k_--1} = 0,\\ 
 k_+=\max\{k; x_k < L\},\quad x_{k_++1} = L.   
  \end{aligned}
\end{equation*}
From now on, we let $m(\omega)$ be the number of pieces and renumber
them from $1$ to $m(\omega)$ (i.e., $k_-=2$ and $k_+=m(\omega)$). For
$L$ large, with probability $1-O(L^{-\infty})$, one has
$m(\omega)=\mu L+O(L^{2/3})$. \\
The pieces model admits an integrated density of states that can be
computed explicitly (see section~\ref{sec:analys-one-part}
or~\cite{MR1042095,Veniaminov_PhDthesis}), namely,
\begin{equation}
  \label{eq:6}
  \begin{split}
    N_\mu(E)&:=\lim_{L\to+\infty}\frac{\{\text{eigenvalues of }
      H_\omega(L)\text{ in }(-\infty,E]\}}L\\&=
    \frac{\mu \cdot \exp(-\mu \ell_E)}{1 - \exp(-\mu \ell_E)}\car_{E\geq0}
    \quad\text{ where }\quad\ell_E:=\frac{\pi}{\sqrt{E}}.
  \end{split}
\end{equation}
\subsection{Interacting electrons}
\label{sec:interacting-electrons}
Consider first $n$ free electrons restricted to the box $\Lambda$ in
the background Hamiltonian $H_\omega(\Lambda)$, that is, on the space
\begin{equation}
  \label{eq:19}
  \frH^n(\Lambda)=\frH^n(\Lambda_L)=\bigwedge_{j=1}^nL^2(\Lambda)=L_{-}^2(\Lambda^n),
\end{equation}
consider the operator
\begin{equation}
  \label{eq:Hn0Definition}
  H_\omega^0(\Lambda,n) = \sum_{i = 1}^n 
  \underbrace{\car_\frH \otimes \hdots \otimes \car_\frH}_{\mbox{$i -
      1$ times}}  
  \otimes H_\omega(\Lambda) \otimes 
  \underbrace{\car_\frH \otimes \hdots \otimes \car_\frH}_{\mbox{$n -
      i$ times}}. 
\end{equation}
This operator is self-adjoint and lower semi-bounded. Let
$E^0_\omega(\Lambda,n)$ be its ground state energy and
$\Psi^0_\omega(\Lambda,n)$ be its ground state.\\
To $H_\omega^0(\Lambda,n)$, we now add a repulsive pair finite range
interaction potential. Therefore, pick $U:\R\to\R$ satisfying
\begin{description}
\item[(HU)] $U$ is a repulsive (i.e., non negative), even pair
  interaction potential decaying sufficiently fast at infinity. More
  precisely, we assume
  \begin{equation}
    \label{eq:13}
    x^3 \int_x^{+\infty} U(t) \rmd{t}\vers{x\to+\infty}0.    
  \end{equation}
  To control the possible local singularities of the interactions, we
  require that $\D U \in L^p(\bbR)$ for some $p \in (1, +\infty]$.
\end{description}
On $\frH^{n}(\Lambda)$, we define
\begin{equation}
  \label{eq:HLambdanIntro}
  H^U_\omega(\Lambda, n) = H_\omega^0(\Lambda,n)+ W_n
\end{equation}
where 
\begin{equation}
  \label{eq:9}
   W_n(x^1,\cdots,x^n):=\sum_{i < j} U(x^i - x^j)
\end{equation}
on the domain
\begin{equation}
  \label{eq:29}
  \mathcal{D}^n(\Lambda):=\Coi\left(\left(\bigcup_{k=1}^{m(\omega)}
      ]x_k,x_{k+1}[\right)^{n}\right) \cap\frH^n(\Lambda).
\end{equation}
As $U$ is non negative, $H_\omega^U(\Lambda,n)$ is non negative. From
now on, we let $H^U_\omega(\Lambda,n)$ be the Friedrichs extension of
this operator. As $W_n$ is a sum of pair interactions, the fact that
$U\in L^p(\R)$ for some $p>1$ (see assumption~\textbf{(HU)})
guarantees that $W_n$ is $H_\omega^0(\Lambda,n)$-form bounded with
relative form bound $0$ (see, e.g.,~\cite[section
1.2]{MR883643}). Thus, the form domain of the operator
$H^U_\omega(\Lambda,n)$ is
\begin{equation}
  \label{eq:266}
  \frH_\infty^n(\Lambda):=\left(H_0^1\left(\bigcup_{k=1}^{m(\omega)}
      ]x_k,x_{k+1}[\right)\right)^{\otimes n} \cap\frH^n(\Lambda).
\end{equation}
Moreover, $H^U_\omega(\Lambda,n)$ admits $\mathcal{D}^n(\Lambda)$ as a
form core (see, e.g.,~\cite[section 1.3]{MR883643}) and it has a
compact resolvent, thus, only discrete spectrum.\\
We define $E^U_\omega(\Lambda,n)$ to be its ground state energy, that
is,
\begin{equation}
  \label{eq:15}
  E^U_\omega(\Lambda,n):=\inf_{\substack{\Psi\in\mathcal{D}^n(\Lambda)
      \\\|\Psi\|=1}}\langle H^U_\omega(\Lambda,n)\Psi,\Psi\rangle
\end{equation}
and $\Psi^U_\omega(\Lambda, n)$ to be a ground state, i.e., to be an
eigenfunction associated to the eigenvalue $E^U_\omega(\Lambda,n)$.\\
By construction, there is no unique continuation principle for the
pieces model (as the union of disjoint non empty intervals is not
connected); so, one should not expect uniqueness for the ground
state. Nevertheless due to the properties of the Poisson process, for
the non interacting system, one easily sees that the ground state
$\Psi^0_\omega(\Lambda, n)$ is unique $\omega$ almost surely (see
section~\ref{sec:free-electrons}). For the interacting system, it is
not as clear. Nonetheless, one proves
\begin{Th}[Almost sure non-degeneracy of the ground state]
  \label{th:asNonDegOfGroundState}
  Suppose that $U$ is real analytic. Then, $\omega$-almost surely, for
  any $L$ and $n$, the ground state of $H^U_\omega(L,n)$ is
  non-degenerate.
\end{Th}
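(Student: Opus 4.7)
The plan is threefold: (i) decompose the antisymmetric $n$-body space into sectors indexed by how the particles are distributed among the Poisson pieces; (ii) prove uniqueness of the ground state within each sector by a standard one-dimensional Perron--Frobenius argument; (iii) show that $\omega$-almost surely only one such sector realises the minimum, exploiting the real-analyticity of $U$.

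Since the form domain $\frH_\infty^n(\Lambda)$ imposes a Dirichlet condition at every Poisson point, the one-particle space splits as $L^2(\Lambda)=\bigoplus_{k=1}^{m(\omega)}L^2(\Delta_k(\omega))$, and accordingly
$$
\frH^n(\Lambda)=\bigoplus_{\mathbf{n}}\frH_{\mathbf{n}},\qquad \frH_{\mathbf{n}}:=\bigotimes_{k=1}^{m(\omega)}\bigwedge\nolimits^{n_k}L^2(\Delta_k(\omega)),
$$
where $\mathbf{n}=(n_1,\ldots,n_{m(\omega)})$ runs over tuples of non-negative integers with $\sum_k n_k=n$. Both $H_\omega^0(\Lambda,n)$ and the multiplication operator $W_n$ leave each $\frH_{\mathbf{n}}$ invariant, so $E_\omega^U(\Lambda,n)=\min_{\mathbf{n}}E_{\mathbf{n}}(\omega)$, where $E_{\mathbf{n}}(\omega)$ denotes the bottom of the spectrum of $H_\omega^U(\Lambda,n)$ restricted to $\frH_{\mathbf{n}}$. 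For (ii), representing a wavefunction in $\frH_{\mathbf{n}}$ by its restriction to the ordered product of simplices
$$
D_{\mathbf{n}}=\prod_{k:\,n_k\geq1}\{(y_1,\ldots,y_{n_k})\in\Delta_k(\omega)^{n_k}:y_1<\cdots<y_{n_k}\}
$$
and using that antisymmetric $H^1$-functions vanish on coincidence hyperplanes, the restriction of $H_\omega^U(\Lambda,n)$ to $\frH_{\mathbf{n}}$ is unitarily equivalent to the Dirichlet realisation of $-\Delta+W_n$ on the connected open set $D_{\mathbf{n}}$. Perron--Frobenius (equivalently, Courant's strict nodal theorem) then yields a unique, strictly positive ground state in each sector.

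For (iii), condition on $m(\omega)=m$: the interior Poisson points are then uniformly distributed on the simplex $\{0<x_1<\cdots<x_{m-1}<L\}$. After pulling back each sector Hamiltonian to a fixed reference domain (a product of unit intervals) via an affine change of variables on each particle coordinate, we obtain a family of self-adjoint operators depending real-analytically on $(x_1,\ldots,x_{m-1})$; thanks to the relative form-boundedness of $W_n$ with bound $0$ and the real-analyticity of $U$, this is an analytic family of type (B) in Kato's sense, and since by (ii) its ground eigenvalue is simple, $(x_1,\ldots,x_{m-1})\mapsto E_{\mathbf{n}}$ is real-analytic. For two distinct compositions $\mathbf{n}\ne\mathbf{n}'$, the analytic function $E_{\mathbf{n}}-E_{\mathbf{n}'}$ is not identically zero: picking $j$ with $n_j\ne n_j'$ and letting $\ell_j=x_j-x_{j-1}\to0$, the single-particle Dirichlet eigenvalues on the vanishing piece diverge as $\pi^2k^2/\ell_j^2$ while $W_n$ stays bounded, so the two energies blow up with different leading coefficients. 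Hence the set where $E_{\mathbf{n}}=E_{\mathbf{n}'}$ has Lebesgue measure zero in the configuration simplex, and a union bound over the finitely many relevant pairs and the countably many values of $m$ and $n$ gives, for each fixed $L$, $\omega$-a.s.\ uniqueness of the minimising sector.

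The main obstacle is step (iii). Verifying analyticity rigorously requires a careful change-of-variables argument moving the operators to a common Hilbert space and then checking the hypotheses of Kato's type-(B) theorem. Moreover, upgrading the conclusion from ``for each fixed $L$, almost surely'' to ``almost surely, for every $L>0$ and every $n$'' requires an additional observation: at fixed $\omega$ the combinatorial type of the piece structure changes only when $L$ crosses a Poisson point, so on each such maximal interval $E_{\mathbf{n}}(L)$ is itself real-analytic in $L$; combined with the previous analyticity across sectors, this forces the exceptional $L$'s to form a discrete set within each interval, and the fixed-$L$ result applied at rational values then propagates to every $L>0$.
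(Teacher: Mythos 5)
Your core argument is essentially the paper's own proof. The paper makes the same occupation decomposition of $\frH^n(\Lambda)$ into invariant sectors, proves sector-wise simplicity exactly as you do (antisymmetric functions correspond to Dirichlet functions on the connected ordered domain, then Perron--Frobenius), and separates distinct occupation branches by combining real-analyticity of the sector ground energies in the wall positions (available because $U$ is real analytic and the sector ground state is simple) with the blow-up $E\sim k^3/(b-a)^2$ when a piece on which the two occupations disagree is shrunk. The only real difference is how analyticity is exploited: the paper argues by contradiction, conditions on the number of Poisson points, extracts a set of length configurations on which two branches coincide, moves a \emph{single} wall and applies the one-variable identity theorem against the shrinking-piece asymptotics, whereas you invoke joint analyticity in all wall positions and use that the zero set of a non-trivially-vanishing real-analytic function on the configuration simplex is Lebesgue-null, plus a union bound over the finitely many occupations and countably many $(m,n)$. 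Both variants rest on the same ingredients (sector simplicity, Kato-type analyticity, absolute continuity of the conditional law of the lengths), so this is the same proof up to packaging; your multivariate measure-zero step must be justified (simple eigenvalues of a jointly analytic family are jointly analytic via the Riesz projection), but that is routine.

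The one place your write-up goes wrong is the final ``upgrade'' to non-degeneracy for \emph{every} $L>0$ simultaneously. Discreteness of the exceptional $L$'s in each interval between Poisson points does not exclude them, and non-degeneracy at rational $L$ does not propagate to the (generally irrational) crossing points, so that argument fails as stated. In fact the uniform-in-$L$ statement is false: already for $n=1$, as $L$ sweeps through a gap that exceeds all previous gaps, there is an $L$ at which the length of the last piece exactly equals the previous record length and the one-particle ground state is two-fold degenerate; such $L$ occur $\omega$-a.s. For interacting $n$, the minimizing occupation likewise changes at isolated values of $L$, and at those values two orthogonal sector ground states share the minimal energy. The paper's proof, like the substantive part of yours, establishes the statement for each fixed pair $(L,n)$, with the null set allowed to depend on $(L,n)$, and that is how the theorem must be read; your last paragraph should simply be removed rather than repaired.
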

\noindent For a general $U$, while we don't know whether the ground
state is degenerate or not, our analysis will show where the
degeneracy may come from: we shall actually write $\frH^n(\Lambda)$ as
an orthogonal sum of subspaces invariant by $H^U_\omega(L,n)$ such
that on each such subspace, the ground state of $H^U_\omega(L,n)$ is
unique. This will enable us to show that all the ground states of
$H^U_\omega(L,n)$ on $\frH^n(\Lambda)$ are very similar to each other,
i.e., they differ only by a small number of particles. 
\vskip.2cm\noindent
The goal of the present paper is to understand the thermodynamic
limits of $E^U_\omega(\Lambda,n)$ and $\Psi^U_\omega(\Lambda, n)$. As
usual, we define the \emph{thermodynamic limit} to be the limit $L \to
\infty$ and $n / L \to \rho$ where $\rho$ is a positive constant. The
constant $\rho$ is the \emph{density of particles}.\vskip.2cm\noindent
We will describe the thermodynamic limits of $E^U_\omega(\Lambda,n)$,
or rather $n^{-1}E^U_\omega(\Lambda,n)$, and $\Psi^U_\omega(\Lambda,
n)$ when $\rho$ is positive and small (but independent of $L$ and
$n$). We will be specially interested in the influence of the
interaction $U$, i.e., we will compare the thermodynamic limits for
the non-interacting and the interacting systems.
\subsection{The ground state energy per particle}
\label{sec:ground-state-energy}
Our first result describes the thermodynamic limit of
$n^{-1}E^U_\omega(\Lambda,n)$ when we assume the density of particles
$n/L$ to be $\rho$. For the sake of comparison, we also included the
corresponding result on the ground state energy of the free particles,
i.e., on $n^{-1}E^0_\omega(\Lambda,n)$. \\
We prove
\begin{Th}
  \label{thr:1}
  Under the assumptions made above, the following limits exist
  $\omega$-almost surely and in $L^1_\omega$
  \begin{equation}
    \label{eq:4}
    \densEn^0(\rho,\mu):=\lim_{\substack{L\to+\infty\\n/L \to \rho}}
    \frac{E^0_\omega(\Lambda,n)}{n}\quad\text{and}\quad      
    \densEn^U(\rho,\mu):=\lim_{\substack{L\to+\infty\\n/L \to \rho}}
    \frac{E^U_\omega(\Lambda,n)}{n}
  \end{equation}
  and they are independent of $\omega$.
\end{Th}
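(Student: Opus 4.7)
My plan is to treat the non-interacting and interacting cases separately, using an IDS argument for $\densEn^0$ and a subadditive ergodic argument for $\densEn^U$.

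Since $H_\omega^0(\Lambda,n)$ is the fermionic second quantization of $H_\omega(\Lambda)$, one has $E_\omega^0(\Lambda,n)=\sum_{k=1}^n \lambda_k(\omega,\Lambda)$, where $\lambda_k$ denotes the $k$-th eigenvalue of $H_\omega(\Lambda)$ in increasing order. The normalized eigenvalue counting function of $H_\omega(\Lambda_L)$ converges $\omega$-a.s.\ and in $L^1_\omega$ to the deterministic IDS $N_\mu$ of~\eqref{eq:6}, by standard ergodic arguments applied to the Poisson environment. Defining the Fermi level $E_F(\rho)$ by $N_\mu(E_F(\rho))=\rho$, one then obtains
\[
\frac{E_\omega^0(\Lambda,n)}{n}\ \longrightarrow\ \frac{1}{\rho}\int_0^{E_F(\rho)}E\,dN_\mu(E)=:\densEn^0(\rho,\mu)
\]
both $\omega$-a.s.\ and in $L^1_\omega$, the determinism of the limit following from translation-ergodicity of the Poisson process.

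For $\densEn^U(\rho,\mu)$, set $F_L(\omega):=E_\omega^U(\Lambda_L,\lfloor\rho L\rfloor)$. Given $L=L_1+\delta_L+L_2$ with $\delta_L\to\infty$ and $\delta_L=o(L)$, partition $\Lambda_L$ into a left block $\Lambda^{(1)}=[0,L_1]$, a buffer $[L_1,L_1+\delta_L]$ containing no particles, and a right block $\Lambda^{(2)}=[L_1+\delta_L,L]$. Using the antisymmetrized product $\Psi_1\wedge\Psi_2$ of the corresponding ground states, with particle numbers $n_1+n_2=\lfloor\rho L\rfloor$ and $n_i/L_i\to\rho$, as a trial function for $H_\omega^U(\Lambda_L,\lfloor\rho L\rfloor)$ yields the approximate subadditivity
\[
F_L(\omega)\le F_{L_1}(\omega)+F_{L_2}(\tau_{L_1+\delta_L}\omega)+R_L(\omega),
\]
where the cross-interaction $R_L(\omega)=\iint U(x-y)\rho_1(x)\rho_2(y)\,dx\,dy$ is bounded by $C(\rho)\,L\int_{\delta_L}^{\infty}U(t)\,dt=o(L)$ via assumption \textbf{(HU)} together with a uniform $L^\infty$ bound on the single-particle densities $\rho_i$. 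The Akcoglu--Krengel subadditive ergodic theorem for the translation-ergodic Poisson environment then gives $\omega$-a.s.\ convergence of $F_L/L$ to an $\omega$-independent limit $\rho\,\densEn^U(\rho,\mu)$; the a priori bound $F_L/L\le E_\omega^0(\Lambda,n)/n+C$, obtained by using $\Psi_\omega^0(\Lambda,n)$ as a trial function for $H_\omega^U$, upgrades this to $L^1_\omega$ convergence.

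The main obstacle is the uniform density bound $\|\rho_i\|_\infty\le C(\rho)$ used in the cross-interaction estimate, since Dirichlet ground states on very short pieces have large $L^\infty$ norm. This relies crucially on the small-density assumption $\rho/\mu\ll 1$, which ensures that with overwhelming probability the ground state populates only pieces of length bounded below by roughly $\pi/\sqrt{E_F(\rho)}$. Propagating such a piece-length lower bound from the non-interacting to the interacting ground state is the most delicate point, and is where the low-density regime must be used in an essential way.
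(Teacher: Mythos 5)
Your overall strategy is in the same spirit as the paper's: the free part is handled through the IDS exactly as in Proposition~\ref{pro:2}, and for the interacting part the paper likewise reduces Theorem~\ref{thr:1} to the subadditivity-plus-ergodicity machinery of \cite{MR3022666}, the only new ingredient being the subadditive estimate of Theorem~\ref{thr:8} --- precisely your claim that the cross term $\int U(x-y)\rho_{\Psi_1^U}(x)\rho_{\Psi_2^U}(y)\,dx\,dy$ is $o(n_1+n_2)$. The genuine gap is in how you propose to prove that claim. Your bound $R_L\le C(\rho)\,L\int_{\delta_L}^\infty U(t)\,dt$ rests on a uniform estimate $\|\rho_{\Psi_i^U}\|_\infty\le C(\rho)$ for the one-particle density of the \emph{interacting} ground state, which you yourself flag as the delicate point and propose to obtain from the low-density regime via a lower bound on the length of the occupied pieces. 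This cannot be the right route: Theorem~\ref{thr:1} is stated for every $\rho>0$, with no smallness assumption on $\rho_\mu$, so its proof may not invoke $\rho_\mu\ll 1$; moreover, the control of which pieces the interacting ground state occupies (Lemma~\ref{lem:minimalOccupiedInterval} and the occupation estimates of Section~\ref{sec:main-results-proofs-1}) is established in the paper only for the truncated interaction $U^p$ and only for small $\rho$, and even there it yields bounds on occupation numbers, not a pointwise $O(1)$ bound on the density: pieces carrying many particles are controlled only in an averaged sense (Lemma~\ref{lem:normPiecesWithParticleExcess}), and the only pointwise density bound used anywhere in the paper is the Sobolev-type bound \eqref{eq:73}, $\|\rho_\Psi\|_\infty\lesssim\|\Psi\|_{H^1}\|\Psi\|_2$, which on the whole box is of size $\sqrt{n}$, not $C(\rho)$.

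The paper's proof of Theorem~\ref{thr:8} (Lemma~\ref{le:30}) shows how to avoid any such bound, for all $\rho>0$ and for adjacent intervals without a buffer: partition each $\Lambda_i$ into blocks $\Lambda_i^j$ of length $\asymp\log^2|\Lambda_i|$ made of whole pieces, so that $\dist(\Lambda_1^{j_1},\Lambda_2^{j_2})\gtrsim (j_1+j_2-2)\log^2 L$; bound the number of particles in each block by $O\bigl(n^{1/3}\log^{4/3}L\bigr)$ using the total energy together with Lemma~\ref{le:13} (the integral of the density over a piece equals its occupation number, valid because the ground states may be chosen of fixed occupation); control $\|\rho_{\Psi_i^U}\|_{L^q(\Lambda_i^j)}$ by interpolating this particle count against the local bound \eqref{eq:73}; and finally apply H\"older's inequality blockwise, using $\int_0^{+\infty}x^\alpha U(x)\,dx<+\infty$ for some $\alpha>2$ (a consequence of \textbf{(HU)}, Lemma~\ref{le:31}) to make the sum over block pairs $o(n)$. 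Unless you replace your uniform density bound by an argument of this type, the key estimate $R_L(\omega)=o(L)$ remains unproved, and the repair you sketch would both restrict the theorem beyond its stated scope and rely on results that are proved later in the paper, for a cut-off interaction and small density only.
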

\noindent In~\cite{MR3022666} (see also~\cite{Veniaminov_PhDthesis}),
the almost sure existence of the thermodynamic limit of the ground
state energy per particle is established for quite general systems of
interacting electrons in a random medium if one assumes that the
interaction has compact support. For decaying interactions (as in
\textbf{(HU)}), only the $L^2_\omega$ convergence is proved. The
improvement needed on the results of~\cite{MR3022666} to obtain the
almost sure convergence is the purpose of Theorem~\ref{thr:8}.\\
In~\cite{MR2905791}, the authors study the existence of the above
limits in the grand canonical ensemble for Coulomb
interactions.\vskip.2cm\noindent
The energy $\densEn^0(\rho,\mu)$ can be computed explicitly for our
model (see section~\ref{sec:ground-state-energy-1}). We shall obtain a
two term asymptotic formula for $\densEn^U(\rho,\mu)$ in the case when
the disorder is not too large and the Fermi length $\ell_{\rho,\mu}$
is sufficiently large. \\
Define
\begin{itemize}
\item the~\emph{effective density} is defined as the ratio of the
  density of particles to the density of impurities, i.e.,
  $\D\rho_\mu=\frac\rho\mu$,
\item the~\emph{Fermi energy} $E_{\rho,\mu}$ is the unique solution to
  $N_\mu(E_{\rho,\mu})=\rho$,
\item the~\emph{Fermi length} $\ell_{\rho,\mu}:=\ell_{E_{\rho,\mu}}$ where
  $\ell_E$ is defined in~\eqref{eq:6}; the explicit formula for
  $N_\mu$ yields
  \begin{equation}
    \label{eq:67}\ell_{\rho,\mu}=\frac1\mu
    \left|\log\frac{\rho_\mu}{1+\rho_\mu}\right|=\frac1\mu
    \left|\log\frac{\rho}{\mu+\rho}\right|.      
  \end{equation}
\end{itemize}
For the free ground state energy per particle, a direct computation
using~\eqref{eq:6} yields
\begin{equation}
  \label{eq:5}
    \begin{split}
      \densEn^0(\rho,\mu)&=
      \frac1\rho\int_{-\infty}^{E_{\rho,\mu}}E\,dN_\mu(E)
      =E_{\rho,\mu} \left(1+O\left(\sqrt{E_{\rho,\mu}}\right)\right)
    \end{split}
\end{equation}
We prove
\begin{Th}
  \label{th:EnergyAsymptoticExpansion}
  Under the assumptions made above, for $\mu>0$ fixed, one computes
  \begin{equation}
    \label{eq:EnergyAsymptoticExpansion}
    \densEn^U(\rho,\mu)= \densEn^0(\rho,\mu) + \pi^2
    \,\gamma^\mu_*\,\mu^{-1}\,\rho_\mu\,
    \ell_{\rho,\mu}^{-3}\left(1+o(1)\right)\quad\text{where}\quad 
    o(1)\vers{\rho_\mu\to0} 0.
  \end{equation}
  The positive constant $\gamma^\mu_*$ depends solely on $U$ and
  $\mu$; it is defined in~\eqref{eq:12} below.
\end{Th}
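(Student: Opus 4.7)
The plan is to establish matching upper and lower bounds on the correction $\densEn^U(\rho,\mu)-\densEn^0(\rho,\mu)$ in the regime $\rho_\mu\to 0$. Since Theorem~\ref{thr:1} already provides almost sure existence of the thermodynamic limits, it is enough to identify the correct constant. The driving structural observation is that, at small $\rho_\mu$, the non-interacting ground state $\Psi^0_\omega$ is obtained by filling the ground mode of every piece of length $\ell>\ell_{\rho,\mu}$ and the first two modes of every piece of length $\ell>2\ell_{\rho,\mu}$; the density of these ``doubly-occupied'' pieces is $\mu e^{-2\mu\ell_{\rho,\mu}}\sim\mu\rho_\mu^2$, while triply-occupied pieces contribute $O(\mu\rho_\mu^3)$, which is negligible compared to the target correction.

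\textbf{Upper bound.} I would build a trial Slater-type state by modifying $\Psi^0_\omega$ piece by piece, only on the doubly-occupied pieces. On such a piece of length $\ell$, two competing strategies must be compared: \emph{(a)} leaving both particles inside the piece, in which case the two-electron antisymmetric wavefunction vanishes on the diagonal, and a Taylor expansion near $y=x$ together with the decay assumption \textbf{(HU)} yields an interaction energy of the form $c_U(\ell)\,\ell^{-3}$ with $c_U(\ell)$ bounded; \emph{(b)} relocating one of the particles to an empty piece of length slightly greater than $\ell_{\rho,\mu}$, for a kinetic cost $\pi^2(\ell_{\rho,\mu}^{-2}-\ell^{-2})$. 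Picking the cheaper option piece by piece and averaging over the Poisson realization, via the Palm formula, produces an expected per-piece cost whose normalization by $n=\rho L$ yields the factor $\mu^{-1}\rho_\mu\ell_{\rho,\mu}^{-3}$ and defines $\gamma^\mu_*$.

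\textbf{Lower bound.} Given an arbitrary antisymmetric $\Psi\in\mathcal{D}^n(\Lambda)$, I would first localize it to the pieces. Because $H^0_\omega$ is itself a direct sum over the $\Delta_k(\omega)$, no commutator error appears in the kinetic part, and the cross-piece contribution of $W_n$ is governed by the tail integral appearing in \textbf{(HU)}, which is $o(\ell_{\rho,\mu}^{-3})$. Inside a single piece, the one-body and two-body reduced density matrices of the localized state satisfy the Pauli constraints; a convexity argument expresses the piece-wise energy as a function of these, and shows that any $n_k$-particle configuration with $n_k\geq 2$ carries an interaction cost bounded below by the same two-strategy infimum as in the upper bound. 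Minimizing over allocations $(n_k)_k$ with $\sum_k n_k = n$ and invoking ergodicity of the Poisson process then matches the upper bound.

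\textbf{Main obstacle.} The lower bound is the delicate direction. Two points will require care: (i) controlling uniformly the behavior of the two-body reduced density matrix of an arbitrary $\Psi$ near the diagonal in each piece, which calls for the $L^p$ regularity of $U$ beyond mere form-boundedness; and (ii) handling atypical Poisson realizations in which either the local statistics of piece lengths, or the availability of empty ``receiver'' pieces of length just above $\ell_{\rho,\mu}$, deviates from its mean. Both issues require sharp large-deviation control for Poisson functionals, in the spirit of the methods of~\cite{MR3022666}.
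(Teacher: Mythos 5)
Your overall strategy (trial state obtained from $\Psi^0$ by treating the doubly-occupied pieces, plus a matching lower bound via a piece-wise decomposition) is in the spirit of the paper, but as written the argument has gaps that would change the constant, i.e.\ it would not yield $\gamma^\mu_*=1-e^{-\mu\gamma/(8\pi^2)}$. First, the relocation bookkeeping is wrong. In $\Psi^0$ every piece longer than the Fermi length already carries a particle, so the available ``receiver'' pieces are the empty ones of length slightly \emph{below} $\ell_{\rho,\mu}$, not above; and the particle you remove from a donor piece of length $\ell\approx 2\ell_{\rho,\mu}$ sits on the \emph{second} level, so the kinetic cost is $\pi^2/(\ell')^2-4\pi^2/\ell^2=O(\ell_{\rho,\mu}^{-3})$, not $\pi^2(\ell_{\rho,\mu}^{-2}-\ell^{-2})=O(\ell_{\rho,\mu}^{-2})$. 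With your cost, relocation never beats option (a) and the greedy rule outputs the average in-piece interaction, a different constant. Second, even with the correct cost, a ``cheaper option piece by piece'' rule cannot identify $\gamma^\mu_*$: the receivers of length in $[\ell_{\rho,\mu}-\rho_\mu x_*,\ell_{\rho,\mu})$ are scarce (about $n\rho_\mu x_*$ of them), so the marginal relocation cost grows as they are consumed, and the equilibrium thresholds $A_*,x_*$ — hence the exponential form of $\gamma^\mu_*$ — come from a coupled optimization over the exponential length statistics (the Riemann-sum computation behind Theorem~\ref{thr:6}), not from independent per-piece decisions plus the Palm formula. Third, the in-piece cost of option (a) must be the correction $\gamma\,\ell^{-3}$ of the \emph{interacting} two-particle ground state (Proposition~\ref{prop:TwoElectronProblem}, proved via a Schur-complement/limit-operator analysis); your Taylor expansion of the free Slater determinant gives $10\pi^2\int x^2U\,dx\,\ell^{-3}$, which strictly overestimates $\gamma$ for nontrivial $U$, so upper and lower bounds built on it cannot match the stated constant.

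The lower bound is the part where the real work lies, and it is essentially asserted. Dropping cross-piece interactions is fine since $U\geq0$, but ``Pauli constraints plus convexity plus the same two-strategy infimum in each piece'' does not close the argument: the pieces are coupled through the total particle number and through the scarcity of receivers, so a per-piece infimum is too generous, and one must show that \emph{any} state with energy within $o(n\rho_\mu\ell_{\rho,\mu}^{-3})$ of the trial energy has occupation numbers close to those of the trial state, controlling pieces with too many particles or too many interacting neighbours and excited intra-piece states. This is what the paper does via the decomposition into fixed-occupation subspaces, the occupation bounds of Lemma~\ref{le:17} and Lemma~\ref{lem:normPiecesWithParticleExcess}, and the case-by-case displacement analysis of Theorem~\ref{thr:3} and Theorem~\ref{th:PsiUmPsiOptEnergyEstimate}; some substitute for these quantitative steps is needed before your matching of bounds can be claimed. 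Finally, the reduction from decaying $U$ to the compactly cut-off $U^p$ (and the estimate $\langle W^r\Psi^{\textup{opt}},\Psi^{\textup{opt}}\rangle=o(n\rho_\mu\ell_{\rho,\mu}^{-3})$) should be made explicit rather than folded into ``the tail integral''.
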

\noindent At fixed disorder, in the small density regime, the Fermi
length is large and the Fermi energy is small. Moreover, the shift of
ground state energy (per particle) due to the interaction is
exponentially small compared to the free ground state energy: indeed
it is of order $\rho|\log\rho|^{-3}$ while the ground state energy is
of order $|\log\rho|^{-2}$. \\
For fixed $\mu$, a coarse version
of~\eqref{eq:EnergyAsymptoticExpansion} was established, in the PhD
thesis of the second author~\cite{Veniaminov_PhDthesis}, namely, for
$\rho$ sufficiently small, one has
\begin{equation*}
  \frac1{C_\mu} \rho|\log\rho|^{-3}\leq
  \densEn^U(\rho,\mu)-\densEn^0(\rho,\mu)\leq
  C_\mu\,\rho|\log\rho|^{-3}.
\end{equation*}
Moreover, from~\cite[Propositions~3.6 and 3.7]{MR3022666}), we know
that the function $\rho\mapsto\densEn^U(\rho,\mu)$ is a non decreasing
continuous function and that the function
$\rho^{-1}\mapsto\densEn^U(\rho,\mu)$ is convex.\vskip.2cm\noindent
Let us now define the constant $\gamma^\mu_*$. Therefore, we prove
\begin{Pro}
  \label{prop:TwoElectronProblem}
  Consider two electrons in $[0, \ell]$ interacting via an even non
  negative pair potential $U\in L^p(\R^+)$ for some $p>1$ and such that
  \begin{equation*}
    \int_\R x^2U(x)dx<+\infty.
  \end{equation*}
  That is, on $\frH^2([0, \ell]) = L^2([0, \ell]) \wedge L^2([0,
  \ell])$, consider the Hamiltonian
  \begin{equation}
    \label{eq:introTwoParticleHamiltonian}
    \left(-\Delta_{x_1|[0,\ell]}^D\right)\otimes\car_\frH +
    \car_\frH\otimes \left(-\Delta_{x_2|[0,\ell]}^D\right)+U(x_1-x_2),
  \end{equation}
  i.e., the Friedrichs extension of the same differential expression
  defined on the domain $\mathcal{C}^2([0,\ell])$ (see~\eqref{eq:29}).\\
  For large $\ell$, $E^U([0, \ell],2)$, the ground state energy of
  this Hamiltonian, admits the following expansion
  \begin{equation}
    \label{eq:32}
    E^U([0,\ell],2)=\frac{5 \pi^2}{\ell^2} + \frac{\gamma}{\ell^3} +
    o\left(\frac1{\ell^3}\right)
  \end{equation}
  where $\gamma=\gamma(U)>0$ when $U$ does not vanish a.e.
\end{Pro}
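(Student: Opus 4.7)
The ground state of the non-interacting Hamiltonian $H^0 := -\Delta^D_{x_1|[0,\ell]}-\Delta^D_{x_2|[0,\ell]}$ on $L^2([0,\ell])\wedge L^2([0,\ell])$ is the Slater determinant $\Psi_0 := \varphi_1\wedge\varphi_2$ with $\varphi_k(x):=\sqrt{2/\ell}\sin(k\pi x/\ell)$; the ground state energy is $E_0 = 5\pi^2/\ell^2$ and the spectral gap to the next antisymmetric eigenvalue $E_1 = 10\pi^2/\ell^2$ is $g := E_1 - E_0 = 5\pi^2/\ell^2$.

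\emph{Upper bound.} By the variational principle, $E^U \leq E_0 + \langle \Psi_0,U\Psi_0\rangle$. In coordinates $R=(x_1+x_2)/2$, $r=x_1-x_2$, antisymmetry of $\Psi_0$ gives $\Psi_0(R+r/2,R-r/2) = (r/\sqrt 2)\,W(R) + O(r^3\ell^{-7/2})$ with Wronskian $W := \varphi_1'\varphi_2-\varphi_2'\varphi_1$, and an elementary trigonometric computation yields $\int_0^\ell W(R)^2\,dR = 5\pi^2/\ell^3$. Combined with $\int r^2 U(r)\,dr < \infty$, dominated convergence gives
\[
\ell^3\, \langle\Psi_0, U\Psi_0\rangle \xrightarrow[\ell\to\infty]{} \gamma := \tfrac{5\pi^2}{2}\int_\R r^2 U(r)\,dr,
\]
which is positive whenever $U$ does not vanish almost everywhere.

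\emph{Lower bound.} Let $\Psi^U$ be a normalized ground state; decompose $\Psi^U = \alpha\Psi_0 + \Phi$ with $\Phi\perp\Psi_0$. The gap inequality $\langle\Psi^U, H^0\Psi^U\rangle \geq E_0 + g\|\Phi\|^2$, combined with the upper bound and $U\geq 0$, yields $\|\Phi\|^2 = O(\ell^{-1})$, $\langle\Phi, H^0\Phi\rangle = O(\ell^{-3})$ and $\alpha = 1 + o(1)$. Projecting $H^U\Psi^U = E^U\Psi^U$ onto $\Psi_0$ gives the identity
\[
E^U - E_0 = \langle\Psi_0, U\Psi_0\rangle + \alpha^{-1}\langle\Psi_0, U\Phi\rangle,
\]
so the lower bound reduces to showing $|\langle\Psi_0, U\Phi\rangle| = o(\ell^{-3})$.

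\emph{Main obstacle.} The cross-term bound is the heart of the proof. A first estimate via Cauchy--Schwarz and a Hardy-type inequality in $r$ (using $\Phi(R,0)=0$ by antisymmetry and $\int rU(r)\,dr<\infty$, both valid under \textbf{(HU)}) gives only $|\langle\Psi_0,U\Phi\rangle| = O(\ell^{-3})$, matching but not beating the leading term. To extract an extra power $\ell^{-\varepsilon}$, I would represent $\Phi$ via the resolvent identity $\Phi = -(H^0 - E^U)^{-1}_{|\Psi_0^\perp}\, QU\Psi^U$ (with norm $\lesssim 2/g \sim \ell^2$ from the gap), converting the cross term into a second-order-in-$U$ quantity; combined with a refined Hardy--Poincar\'e estimate on the diamond $\{(R,r)\colon |r|/2 \leq R \leq \ell-|r|/2\}$ that uses both the diagonal vanishing of antisymmetric functions and the Dirichlet boundary conditions, this is expected to yield the required $o(\ell^{-3})$ bound. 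The main technical difficulty is to carry out these estimates under the minimal hypothesis $U \in L^p$ for some $p > 1$ together with only $\int r^2U(r)\,dr < \infty$---so that $U^2$ need not be integrable---which forces the entire argument to be conducted at the level of quadratic forms rather than bounded operators.
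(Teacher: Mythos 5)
Your upper bound is fine (and the Wronskian computation giving $\int_0^\ell W^2 = 5\pi^2/\ell^3$ is correct), but the step you flag as the ``main obstacle'' is not merely unproven --- it is false, and with it your claimed value of $\gamma$. The cross term $\langle\Psi_0,U\Phi\rangle$ is genuinely of order $\ell^{-3}$ (negative), not $o(\ell^{-3})$: because the rescaled interaction $U^\ell(\cdot)=\ell^2U(\ell\,\cdot)$ has divergent integral $\ell\int U$, second and all higher orders of perturbation theory contribute at the \emph{same} order $\ell^{-3}$ as the first order, the only saving grace being the linear vanishing of antisymmetric wavefunctions on the diagonal (the factor $u\sqrt{U(u)}$). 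A sanity check: take $U=\lambda\car_{[-a,a]}$ with $\lambda\to+\infty$; your formula $\gamma=\tfrac{5\pi^2}{2}\int r^2U\,dr\propto\lambda$ diverges, while the true energy shift saturates at the hard--core value, so the coefficient of $\ell^{-3}$ cannot be linear in $U$. This is consistent with the remark following the proposition in the paper, $\gamma(\alpha U)/\alpha=10\pi^2\int x^2U\,dx\,(1+O(\alpha))$: your constant is only the weak-coupling linearization of $\gamma$, and no refinement of Hardy--Poincar\'e estimates will turn the exact-order cross term into an error term.

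The paper's proof handles this by resumming all orders rather than isolating the first one: after scaling to the unit square it writes the eigenvalue equation via the Schur complement on $\C\phi_0\oplus\phi_0^\perp$, $5\pi^2+U^\ell_{00}-E=U^\ell_{0+}(H_++U^\ell_{++}-E)^{-1}U^\ell_{+0}$, and then zooms in on the diagonal through the change of variables $u=\ell(x-y)$. In this limit the right-hand side does not vanish faster but converges, Lemma~\ref{le:4} giving strong resolvent convergence to an explicit operator $K$ on $L^2(\R)$ with kernel $\tfrac12\sqrt{U(u)}\,(|u+u'|-|u-u'|)\,\sqrt{U(u')}$, so that the full coefficient is $\gamma(U)=10\pi^2\langle\phi,(\Id+K)^{-1}\phi\rangle$ with $\phi(u)=u\sqrt{U(u)}$, i.e.\ the first-order term \emph{minus} a strictly positive correction $\propto\langle\phi,K(\Id+K)^{-1}\phi\rangle$ of the same order in $\ell$. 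Positivity of $\gamma$ then follows from boundedness of $K$ (Lemma~\ref{le:7}), which also explains why the minimal hypotheses $U\in L^p$, $\int x^2U<\infty$ suffice. So your scheme proves the correct upper bound $E^U\le 5\pi^2/\ell^2+\tfrac{5\pi^2}{2}\ell^{-3}\int r^2U\,(1+o(1))$, but to obtain the two-term expansion you must replace the projection identity for the cross term by a genuine Feshbach/Schur analysis (or equivalent resummation) that computes, rather than discards, the contribution of $\Phi$.
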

\noindent Let us first notice that the expansion~\eqref{eq:32}
immediately implies that $U\mapsto\gamma(U)$ is a non decreasing
concave function of the (non negative) interaction potential $U$ such
that $\gamma(0)=0$; for $\alpha$ small positive, one computes
\begin{equation*}
  \frac{\gamma(\alpha U)}\alpha=10\pi^2\int_\R x^2U(x)dx\,(1+O(\alpha)).
\end{equation*}
Concavity and monotony follow immediately from the
definition of $E^U([0,\ell],2)$ and the form of~\eqref{eq:32}. \\
In terms of $\gamma$, we then define
\begin{equation}
  \label{eq:12}
  \gamma^\mu_*:=1-\exp\left({-\dfrac{\mu\,\gamma}{8\pi^2}}\right).
\end{equation}
\subsection{The ground state: its one- and two-particle density
  matrices}
\label{sec:ground-state}
We shall now describe our results on the ground state. We start 
with a description of the spectral data of the one particle
Luttinger-Sy model. Then, we describe the non interacting ground
state.
\subsubsection{The spectrum of the one particle Luttinger-Sy model}
\label{sec:spectr-one-part}
Let $(E_{j,\omega}^\Lambda)_{j\geq1}$ and
$(\varphi_{j,\omega}^\Lambda)_{j\geq1}$ respectively denote the
eigenvalues (ordered increasingly) and the associated eigenfunctions
of $H_\omega(\Lambda)$ (see~\eqref{eq:2}). Clearly, the eigenvalues
and the eigenfunctions are explicitly computable from the points
$(x_k)_{1\leq k\leq m(\omega)+1}$. In particular, one sees that the
eigenvalues are simple $\omega$ almost surely.\\
As $n/L$ is close to $\rho$ and $L$ is large, the $n$ first
eigenvalues are essentially all the eigenvalues below the Fermi energy
$E_{\rho,\mu}$. These eigenvalues are the eigenvalues of
$-\Delta_{|\Delta_k(\omega)}^D$ below $E_{\rho,\mu}$ for all the
pieces $(\Delta_k(\omega))_{k_--1\leq k\leq k_+}$ of length at least
$\ell_{\rho,\mu}$ (see~\eqref{eq:6} and~\eqref{eq:5}). $\omega$-almost
surely, the number of pieces $(\Delta_k(\omega))_{1\leq k\leq
  m(\omega)}$ longer than $\ell_{\rho,\mu}$ is asymptotic to $n$ (see
section~\ref{sec:statistics-pieces}), the number of those longer than
$2\ell_{\rho,\mu}$ to $\rho_\mu\,n$, the number of those longer than
$3\ell_{\rho,\mu}$ to $\rho^2_\mu\,n$, etc. We refer to
section~\ref{sec:analys-one-part} for more details.
\subsubsection{The non interacting ground state}
\label{sec:non-iter-ground}
The ground state of the non interacting Hamiltonian
$H^0_\omega(\Lambda,n)$ is given by the (normalized) Slater
determinant
\begin{equation}
  \label{eq:7}
  \Psi^0_\omega(\Lambda,n)
  =\bigwedge_{j=1}^n\varphi_{j,\omega}^\Lambda
  =\frac1{\sqrt{n!}}
  \text{Det}\left(\left(\varphi_{j,\omega}^\Lambda(x_k)\right)\right)_{1\leq j,k\leq n}.
\end{equation}
Here and in the sequel, the exterior product is normalized so that the
$L^2$-norm of the product be equal to the product of the $L^2$-norms
of the factors (see~\eqref{eq:267} in
section~\ref{sec:proj-totally-antisym}).\\
It will be convenient to describe the interacting ground state using
its one-particle and two-particles reduced density matrices. Let us
define these now (see section~\ref{sec:proof-theorem6} for more
details). Let $\Psi\in\frH^n(\Lambda)$ be a normalized $n$-particle
wave function. The corresponding \emph{one-particle density matrix} is
an operator on $\frH^1(\Lambda)=L^2(\Lambda)$ with the kernel
\begin{equation}
  \label{eq:OneParticleDensityMatrixDef}
  \gamma_\Psi(x, y)=\gamma^{(1)}_\Psi(x, y) = n \int_{\Lambda^{n - 1}}
  \Psi(x,\tilde x)\Psi^\ast(y, \tilde x) d\tilde x
\end{equation}
where $\tilde x=(x^2,\dots,x^n)$ and $d\tilde x=dx^2\cdots dx^n$.\\
The \emph{two-particles density matrix} of $\Psi$ is an operator
acting on $\D\frH^2(\Lambda)=\bigwedge_{j=1}^2L^2(\Lambda)$ and its
kernel is given by
\begin{equation}
  \label{eq:TwoParticleDensityMatrixDef}
  \gamma_\Psi^{(2)}(x^1, x^2, y^1, y^2) 
  = \frac{n (n - 1)}{2} \int_{\Lambda^{n - 2}} \Psi(x^1, x^2, \tilde x)
  \Psi^\ast(y^1, y^2, \tilde x)d\tilde x 
\end{equation}
where $\tilde x=(x^3,\dots,x^n)$ and $d\tilde x=dx^3\cdots dx^n$.\\
Both $\gamma_\Psi$ and $\gamma_\Psi^{(2)}$ are positive trace class
operators satisfying
\begin{equation}
  \label{eq:tracecOfReducedMatrices}
  \Tr{\gamma_\Psi} = n, \quad\text{and}\quad
  \Tr{\gamma_\Psi^{(2)}} = \frac{n (n - 1)}{2}.
\end{equation}
So, for the non interacting ground state, using the description of the
eigenvalues and eigenvectors of $H_\omega(\Lambda)$ given in
section~\ref{sec:spectr-one-part}, as a consequence of
Proposition~\ref{prop:DensityMatrixStructure}, we obtain that
\begin{equation}
  \label{eq:11}
    \gamma_{\Psi^0_\omega(\Lambda,n)}=\sum_{j = 1}^n
    \gamma_{\varphi_{j,\omega}^\Lambda}
    =\sum_{\ell_{\rho,\mu}\leq
      |\Delta_k(\omega)|<3\ell_{\rho,\mu}}
    \gamma_{\varphi^1_{\Delta_k(\omega)}}+ \sum_{2\ell_{\rho,\mu}\leq
      |\Delta_k(\omega)|<3\ell_{\rho,\mu}}
    \gamma_{\varphi^2_{\Delta_k(\omega)}}+R^{(1)}
\end{equation}
where
\begin{itemize}
\item $|\Delta_k(\omega)|$ denotes the length of the piece
  $\Delta_k(\omega)$;
\item $\varphi^j_{\Delta_k(\omega)}$ denotes the $j$-th normalized
  eigenvector of $-\Delta_{|\Delta_k(\omega)}^D$;
\item the operator $R^{(1)}$ is trace class and $\D
  \|R^{(1)}\|_{\text{tr}}\leq 2\,n\,\rho^2_\mu$.
\end{itemize}
Here, $\|\cdot\|_{\text{tr}}$ denotes the trace norm in the ambient
space, i.e., in $L^2(\Lambda)$ for the one particle density matrix,
and in $L^2(\Lambda)\wedge L^2(\Lambda)$ for the two particles density
matrix.\vskip.2cm\noindent
For the two-particles density matrix, again as a consequence of
Proposition~\ref{prop:DensityMatrixStructure}, we obtain
\begin{equation}
  \label{eq:14}
  \gamma_{\Psi^0_\omega(\Lambda,n)}^{(2)} = 
  \frac{1}{2} (\Id - \Ex)\left[\gamma_{\Psi^0_\omega(\Lambda,n)} \otimes
    \gamma_{\Psi^0_\omega(\Lambda,n)}\right]+R^{(2)}
\end{equation}
where 
\begin{itemize}
\item $\Id$ is the identity operator, $\Ex$ is the exchange operator
  on a two-particles space:
  \begin{equation*}
    \Ex \left[f \otimes g\right] = g \otimes f, \quad f, g \in \frH \text{,}
  \end{equation*}
\item the operator $R^{(2)}$ is trace class and
  $\|R^{(2)}\|_{\text{tr}}\leq C_{\rho,\mu}n$.
\end{itemize}
\begin{figure}[h]
  \begin{center}
    \includegraphics[width=.60\textwidth]{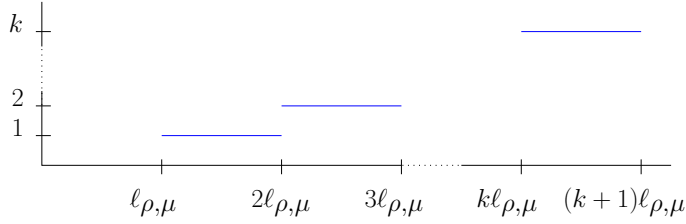}
    \caption{The distribution of particles in the non interacting
      ground state.}
    \label{fig:1}
  \end{center}
\end{figure}
\noindent One can represent graphically the ground state of the non
interacting system by representing the distribution of its particles
within the pieces: in abscissa, one puts the length of the pieces, in
ordinate, the number of particles the ground state puts in a piece of
that length. Figure~\ref{fig:1} shows the picture thus obtained.
\subsubsection{The interacting ground state}
\label{sec:iter-ground}
To describe the ground state of the interacting system, we shall
describe its one-particle and two-particles reduced density
matrices. Therefore, it will be useful to introduce the following
approximate one-particle reduced density matrices.\\
For a piece $\Delta_k(\omega)$, let $\zeta^j_{\Delta_k(\omega)}$ be
the $j$-th normalized eigenvector of
$-\laplace^D_{|\Delta_k(\omega)\times\Delta_k(\omega)} + U$ acting on
$\D L^2(\Delta_k(\omega))\wedge L^2(\Delta_k(\omega))$. We note that,
for $U = 0$, the two-particles ground state can be rewritten as
$\zeta^{1,U=0}_{\Delta_k(\omega)}= \varphi^1_{\Delta_k(\omega)} \wedge
\varphi^2_{\Delta_k(\omega)}$.\\
Define the following one-particle density matrix
\begin{equation}
  \label{10}
  \gamma_{\Psi^{\text{opt}}_{\Lambda,n}}
  =\sum_{\ell_{\rho,\mu}-\rho_\mu \gamma^\mu_*\leq |\Delta_k(\omega)|\leq
    2\ell_{\rho,\mu}-\log(1-\gamma^\mu_*)}
  \gamma_{\varphi^1_{\Delta_k(\omega)}}+
  \sum_{2\ell_{\rho,\mu}-\log(1-\gamma^\mu_*)\leq |\Delta_k(\omega)|}
  \gamma_{\zeta^1_{\Delta_k(\omega)}}.
\end{equation}
Because of the possible long range of the interaction $U$ (see the
remarks following Theorem~\ref{thr:2} below), to describe our results
precisely, it will be useful to introduce trace norms reduced to
certain pieces. For $\ell\geq0$, we define the projection onto the
pieces shorter than $\ell$
\begin{equation}
  \label{eq:241}
  \car^1_{<\ell}=\sum_{|\Delta_k(\omega)|<\ell}\car_{\Delta_k(\omega)}.
 \end{equation}
We shall use the following function to control remainder terms: define
\begin{equation}
  \label{eq:98}
  Z(x) = \sup_{x\leq v}\left(v^{3}\int_v^{+\infty} U(t)dt\right).
\end{equation}
Under assumption (HU), the function $Z$ is continuous and monotonously
decreasing on $[0,+\infty)$ and tends to $0$ at infinity.\\
We prove
\begin{Th}
  \label{thr:2}
  Fix $\mu>0$. Assume (HU) holds. Then, there exist $\rho_0>0$ such
  that, for $\rho\in(0,\rho_0)$, $\omega$-a.s., one has
  \begin{gather*}
    \limsup_{\substack{L\to+\infty\\n/L \to
        \rho}}\frac1{n}\left\|\left(\gamma_{\Psi^U_\omega(\Lambda,
          n)}-\gamma_{\Psi^{\text{opt}}_{\Lambda,n}}\right)
      \car^1_{<\ell_{\rho,\mu}+C}\right\|_{\text{tr}} \leq
    \frac1{\rho_0}\max\left(\frac{\rho_\mu}{\ell_{\rho,\mu}},
      \sqrt{\rho_\mu\,Z(\,\ell_{\rho,\mu})}\right),\\
    \limsup_{\substack{L\to+\infty\\n/L\to\rho}}
    \frac1{n}\left\|\left(\gamma_{\Psi^U_\omega(\Lambda,
          n)}-\gamma_{\Psi^{\text{opt}}_{\Lambda,n}}\right)
      \left(\car-\car^1_{<\ell_{\rho,\mu}+C}\right)\right\|_{\text{tr}}
    \leq\frac1{\rho_0}\max\left(\frac{\rho_\mu}{\ell_{\rho,\mu}},
      \rho_\mu\,\sqrt{Z(\,\ell_{\rho,\mu})}\right).
  \end{gather*}
  Here, $\|\cdot\|_{\text{tr}}$ denotes the trace norm in
  $L^2(\Lambda)$.
\end{Th}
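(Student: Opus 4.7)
The plan is to proceed in two stages. First, I would analyze the trial state $\Psi^{\text{opt}}_{\Lambda,n}$ whose one-particle density matrix appears in \eqref{10}, show that its energy matches the expansion given by Theorem~\ref{th:EnergyAsymptoticExpansion}, and thereby conclude from the variational principle that the true ground state $\Psi^U_\omega(\Lambda,n)$ has energy close to that of $\Psi^{\text{opt}}_{\Lambda,n}$. Second, I would convert this energy proximity into trace-norm proximity of the one-particle reduced density matrices via a piece-by-piece stability argument driven by the single- and two-particle spectral gaps.

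Concretely, the trial state $\Psi^{\text{opt}}_{\Lambda,n}$ is an antisymmetrized tensor product over the pieces $\Delta_k(\omega)$: pieces of length in $[\ell_{\rho,\mu} - \rho_\mu \gamma^\mu_*,\,2\ell_{\rho,\mu} - \log(1 - \gamma^\mu_*))$ get one particle in $\varphi^1_{\Delta_k(\omega)}$, and pieces of length at least $2\ell_{\rho,\mu} - \log(1 - \gamma^\mu_*)$ get two particles in $\zeta^1_{\Delta_k(\omega)}$. The thresholds are precisely those for which the piecewise energy-balance argument underlying Theorem~\ref{th:EnergyAsymptoticExpansion} is optimal: moving one of the two particles from a doubly occupied long piece to a borderline, currently empty piece of length near $\ell_{\rho,\mu}$ is energy neutral to leading order. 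Using the Poisson statistics of the pieces (via $N_\mu$) and the expansion in Proposition~\ref{prop:TwoElectronProblem}, one checks that the total number of particles equals $n(1 + o(1))$ and that the per-particle kinetic-plus-internal-interaction energy of $\Psi^{\text{opt}}_{\Lambda,n}$ reproduces $\densEn^0(\rho,\mu) + \pi^2 \gamma^\mu_* \mu^{-1} \rho_\mu \ell_{\rho,\mu}^{-3}(1 + o(1))$.

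Because $\Psi^{\text{opt}}_{\Lambda,n}$ is built piece by piece, only the cross-piece interaction terms $U(x^i - x^j)$ with $x^i$ and $x^j$ in distinct pieces need extra care. Their total contribution can be estimated by integrating $U$ against the typical inter-piece distances, which are controlled by the Poisson process. Assumption (HU) together with the definition of $Z$ in~\eqref{eq:98} yields a cross-piece energy of order $n \rho_\mu Z(\ell_{\rho,\mu})$, well below the $n\rho_\mu \ell_{\rho,\mu}^{-3}$ correction. Combined with Theorem~\ref{th:EnergyAsymptoticExpansion}, this yields $E^U_\omega(\Lambda,n) = \langle H^U_\omega(\Lambda,n)\, \Psi^{\text{opt}}_{\Lambda,n},\, \Psi^{\text{opt}}_{\Lambda,n}\rangle + o(n \rho_\mu \ell_{\rho,\mu}^{-3})$, so $\Psi^U_\omega(\Lambda,n)$ sits within this energy of the trial state.

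The main obstacle, and the heart of the argument, is converting this energy closeness into trace-norm closeness of the one-particle density matrices. The strategy is local: any reconfiguration of particles away from the optimal pattern in a piece $\Delta_k(\omega)$ costs at least the relevant spectral gap of $-\Delta^D_{|\Delta_k(\omega)}$ (or of the two-particle Hamiltonian on that piece). Summing these gaps over the "defective" pieces and comparing with the total energy budget bounds the number of defective particles, hence the trace norm. For short pieces (length $< \ell_{\rho,\mu} + C$) the relevant gap is of order $\rho_\mu / \ell_{\rho,\mu}^2$, leading to the stronger bound $\sqrt{\rho_\mu Z(\ell_{\rho,\mu})}$; for long pieces the many low-lying configurations have nearly equal energies, so the gap is smaller and one obtains only the softer bound $\rho_\mu \sqrt{Z(\ell_{\rho,\mu})}$. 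The difficulty is that the long range of $U$ prevents a clean decoupling of $H^U_\omega(\Lambda,n)$ into independent pieces: one has to propagate the defect estimate between neighboring pieces using the decay encoded by $Z$, while handling the potential non-uniqueness of the ground state by working in each of the invariant subspaces mentioned after Theorem~\ref{th:asNonDegOfGroundState} and verifying that the bound is uniform across them.
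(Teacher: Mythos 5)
Your stage one (computing the energy of $\Psi^{\text{opt}}$, bounding the cross-piece and long-range interactions, and concluding by the variational principle that $\Psi^U_\omega(\Lambda,n)$ is energetically close to $\Psi^{\text{opt}}$) is indeed how the paper starts (Theorem~\ref{thr:6}, Theorem~\ref{th:PsiUmPsiOptEnergyEstimate}, Proposition~\ref{prop:WrPsiOptFormEstimate}), although your quantitative claim there is already off: the cross-piece energy is not of order $n\,\rho_\mu\,Z(\ell_{\rho,\mu})$, which need not be small compared with $n\,\rho_\mu\,\ell_{\rho,\mu}^{-3}$ (under (HU), $Z$ may decay arbitrarily slowly), but of order $n\,\rho_\mu\,\ell_{\rho,\mu}^{-3}f_Z$ resp.\ $n\,\rho_\mu\,\ell_{\rho,\mu}^{-3}Z$, the extra $\ell_{\rho,\mu}^{-3}$ coming from the pointwise bounds of Lemmas~\ref{lem:inter11_closeEstimate} and~\ref{lem:inter11_farEstimate}.

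The genuine gap is in your stage two. First, your ``gap'' heuristic does not produce the asymmetry stated in the theorem, and you have it backwards: the short-piece bound $\sqrt{\rho_\mu Z(\ell_{\rho,\mu})}$ is the \emph{weaker} one, and the reason is not a spectral gap of order $\rho_\mu/\ell_{\rho,\mu}^2$ (no such gap appears; per-piece gaps are of order $\ell_{\rho,\mu}^{-2}$, see Lemma~\ref{le:29}). The mechanism, implemented in Theorem~\ref{thr:3} and Corollary~\ref{cor:3}, is an occupation-relocation energetics: moving a particle between pieces of length near the threshold $\ell_{\rho,\mu}-\rho_\mu x_*$ (case (f) there) is almost free, so within an energy budget $n\rho_\mu\ell_{\rho,\mu}^{-3}r^2(\rho)$ one can relocate up to $n\sqrt{\rho_\mu}\,r(\rho)$ particles among short pieces, whereas any relocation involving a piece longer than $\ell_{\rho,\mu}+C$ costs at least $c\,\ell_{\rho,\mu}^{-3}$, capping those defects at $n\rho_\mu\,r(\rho)$; taking $r(\rho)\asymp\sqrt{Z}$ from the residual interaction gives exactly the two stated bounds. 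Second, ``bounding the number of defective particles bounds the trace norm'' is not automatic for a non-factorized state: the ground state is a superposition over occupations and over per-piece excited levels, so one needs the explicit occupation-decomposition formula for $\gamma^{(1)}_\Psi$ (Theorem~\ref{thr:4}), a bound showing the inter-piece off-diagonal part is only $O(1)$ in trace norm (Lemma~\ref{le:19}), control of overcrowded and long pieces (Lemmas~\ref{le:17}, \ref{lem:normPiecesWithParticleExcess}, \ref{le:20}), and, crucially, the statement that the weight carried by excited per-piece levels is $o(n\rho_\mu/\ell_{\rho,\mu})$ (Proposition~\ref{pro:4}(b), via the energy-localization Lemma~\ref{le:21} and the gap Lemma~\ref{le:29}). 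None of this is supplied or replaced by another argument in your sketch, so the passage from energy closeness to the trace-norm estimates remains unproven.
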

\noindent This result calls for some comments. Let us first note that,
if $Z$, that is, $U$, decays sufficiently fast at infinity, typically
exponentially fast with a large rate, then the two estimates in
Theorem~\ref{thr:2} can be united into
\begin{equation*}
  \limsup_{\substack{L\to+\infty\\n/L \to
      \rho_\mu}}\frac1{n}\left\|\gamma_{\Psi^U_\omega(\Lambda, n)}-
    \gamma_{\Psi^{\text{opt}}_{\Lambda,n}}\right\|_{\text{tr}} \leq
  C\frac{\rho_\mu}{\ell_{\rho,\mu}}.
\end{equation*}
In this case, Theorem~\ref{thr:2} can be summarized graphically. In
Figure~\ref{fig:2}, using the same representation as in
Figure~\ref{fig:1}, we compare the non interacting and the interacting
ground state. The non interacting ground state distribution of
particles is represented in blue, the interacting one in green. We
assume that $U$ has compact support and restrict
ourselves to pieces shor\-ter than $3\ell_{\rho,\mu}$.\\
Indeed, in this case, comparing~\eqref{eq:11} and~\eqref{10}, we see
\begin{equation}
  \label{eq:gammaPsi0gammaPsiOptDiffIntro}
  \begin{split}
    \gamma_{\Psi^0_\omega(\Lambda,n)}-
    \gamma_{\Psi^{\text{opt}}_{\Lambda,n}} &=\sum_{
      2\ell_{\rho,\mu}-\log(1-\gamma^\mu_*)\leq |\Delta_k(\omega)|}
    \left(\gamma_{\varphi^1_{\Delta_k(\omega)}} +
      \gamma_{\varphi^2_{\Delta_k(\omega)}} -
      \gamma_{\zeta^1_{\Delta_k(\omega)}}\right) \\&\hskip1.5cm-\sum_{
          \ell_{\rho,\mu}-\rho_\mu\gamma^\mu_*\leq
          |\Delta_k(\omega)|\leq\ell_{\rho,\mu}}
        \gamma_{\varphi^1_{\Delta_k(\omega)}}\\
        &\hskip3cm+ \sum_{ 2\ell_{\rho,\mu}\leq
          |\Delta_k(\omega)|\leq
          2\ell_{\rho,\mu}-\log(1-\gamma^\mu_*)}
        \gamma_{\varphi^2_{\Delta_k(\omega)}} +\tilde R^{(1)}
  \end{split}
\end{equation}
where $\tilde R^{(1)}$ satisfies the same properties as $R^{(1)}$
in~\eqref{eq:11}.\\
\begin{floatingfigure}{.60\textwidth}
  \begin{center}
    \includegraphics[width=.60\textwidth]{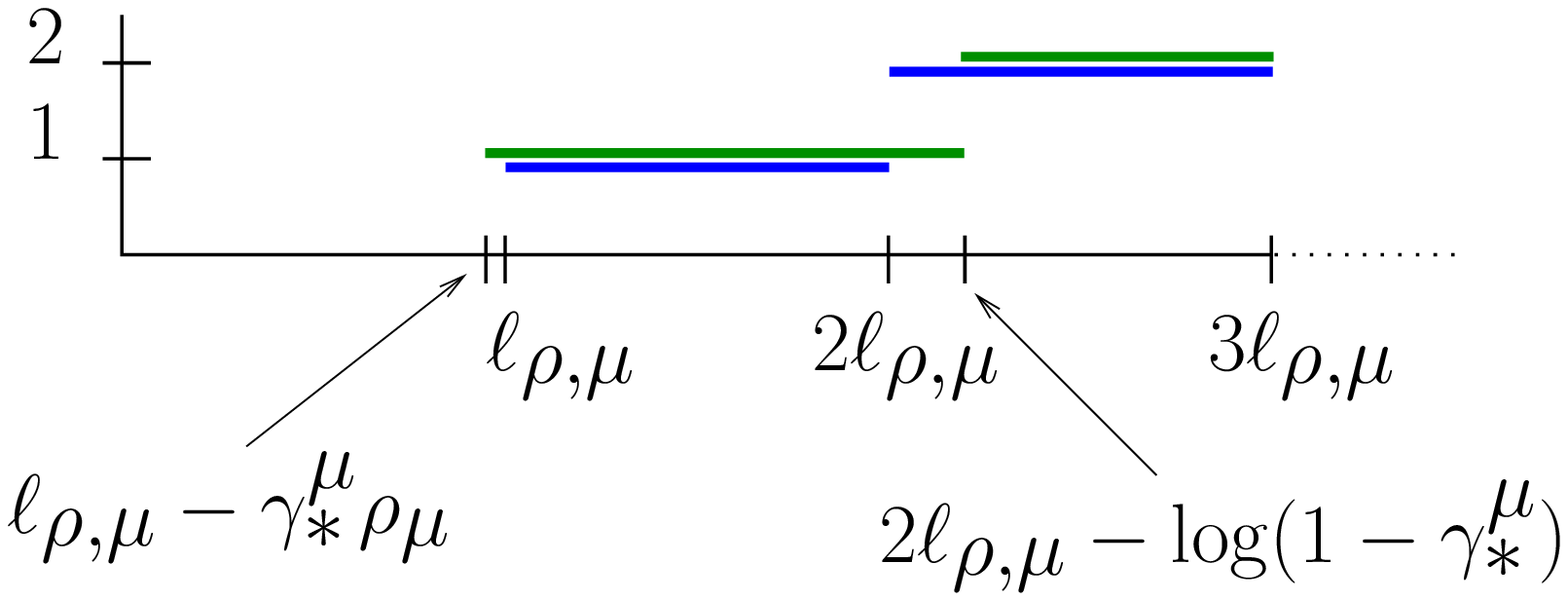}
    \caption{The distribution of particles in the interacting ground
      state.}
    \label{fig:2}
  \end{center}
\end{floatingfigure}
\noindent Thus, to obtain $\gamma_{\Psi^{\text{opt}}_{\Lambda,n}}$
from $\gamma_{\Psi^0_\omega(\Lambda,n)}$, we have displaced (roughly)
$\gamma^\mu_*\rho_\mu n$ particles living in pieces of length within
$[2\ell_{\rho,\mu},2\ell_{\rho,\mu}-\log(1-\gamma^\mu_*)]$ (i.e.,
pieces containing exactly two states below energy $E_{\rho,\mu}$ and
the energy of the top state stays above\\$E_{\rho,\mu}\left(1+
  \frac{\log(1-\gamma^\mu_*)}{\ell_{\rho,\mu}}\right)$ up to smaller
order terms in $\ell_{\rho,\mu}^{-1}$) to pieces having lengths within
$[\ell_{\rho,\mu}-\rho \gamma^\mu_*,\ell_{\rho,\mu}]$ (i.e., having
ground state energy within the interval
$\left[E_{\rho,\mu},E_{\rho,\mu}\left(1+
    \frac{2\rho\gamma^\mu_*}{\ell_{\rho,\mu}}\right)\right]$ up to
smaller order terms in $\ell_{\rho,\mu}^{-1}$). In the remaining of
(roughly) $(1 - \gamma^\mu_*) \rho n$ pieces containing exactly two
states below energy $E_{\rho,\mu}$ (that is, pieces of length within
$[2\ell_{\rho,\mu}-\log(1-\gamma^\mu_*), 3\ell_{E_{\rho, \mu}}]$ or
alternatively those with the top state below $E_{\rho,\mu}\left(1+
  \frac{\log(1-\gamma^\mu_*)}{\ell_{\rho,\mu}}\right)$ (up to smaller
order terms in $\ell_{\rho,\mu}^{-1}$), we have substituted the free
two-particles ground state (given by the anti-symmetric tensor product
of the first two Dirichlet levels in this piece) by the ground state
of the interacting system~\eqref{eq:introTwoParticleHamiltonian}.
In particular, we compute (remark that the first sum in
\eqref{eq:gammaPsi0gammaPsiOptDiffIntro} contributes only to the error
term according to Corollary~\ref{cor:twoParticleProblemComparison})
\begin{equation*}
  \lim_{\substack{L\to+\infty\\n/L \to
      \rho}}\frac1{n}\left\|\gamma_{\Psi^0_\omega(\Lambda,n)}-
    \gamma_{\Psi^{\text{opt}}_{\Lambda,n}}\right\|_{\text{tr}}
  =2\gamma^\mu_*\rho_\mu+O\left(\frac{\rho_\mu}{\ell_{\rho,\mu}}\right),
\end{equation*}
and, recalling~\eqref{eq:14}, we then compute
\begin{equation}
  \label{eq:8}
  \lim_{\substack{L\to+\infty\\n/L \to \rho}}
  \frac1{n^2}\left\|\gamma^{(2)}_{\Psi^0_\omega(\Lambda,n)}-\frac{1}{2} (\Id
    - \Ex) \left[\gamma_{\Psi^{\text{opt}}_{\Lambda,n}} \otimes
      \gamma_{\Psi^{\text{opt}}_{\Lambda,n}}\right] \right\|_{\text{tr}}
  =2\gamma^\mu_*\rho_\mu+O\left(\frac{\rho_\mu}{\ell_{\rho,\mu}}\right).    
\end{equation}
So the main effect of the interaction is to shift a macroscopic (though
small when $\rho_\mu$ is small) fraction of the particles to different
pieces.\vskip.2cm\noindent
Let us now discuss what happens when the interaction does not decay so
fast, typically, if it decays only polynomially. In this case,
Theorem~\ref{thr:2} tells us that one has to distinguish between short
and long pieces. In the long pieces, the description of the ground
state is still quite good as the error estimate is still of order
$o(\rho_\mu)$. Of course, this result only tells us something for the
pieces of length at most $3\ell_{\rho,\mu}$: the larger ones are very
few, thus, can only carry so few particles (see Lemma~\ref{le:18})
that these can be integrated into the remainder term. For short
intervals, the situation is quite different. Here, the remainder term
becomes much larger, only of order
$O\left(\sqrt{\rho_\mu}\ell_{\rho,\mu}^{-k/2}\right)$ if $Z(x)\asymp
x^{-k}$ at infinity. This loss is explained in the following way. The
short pieces carry the majority of the particles. When $U$ is of
longer range, particles in rather distant pieces start to interact in
a way that is not negligible with respect to the second term of the
expansion~\eqref{eq:EnergyAsymptoticExpansion} (which gives an average
surplus of energy per particle for the interacting ground state
compared to the free one); thus, it may become energetically
profitable to relocate some of these particles to new pieces so as to
minimize the interaction energy. When the range of the interaction
increases, the ground state will relocate more and more
particles. Nevertheless, the shift in energy will still be smaller
than the correction term obtained by relocating some of the particles
living in pairs in not too long intervals; this is going to be the
case as long as $U$ satisfies the decay assumption \textbf{(HU)}. When
$U$ decays slower than that, the main correction to the interacting
ground state energy per particle can be expected to be given by the
relocation of many particles living alone in their piece to new pieces
so as to diminish the interaction energy.\vskip.2cm\noindent
We also obtain an analogue of Theorem~\ref{thr:2} for the 2-particles
density matrix of the ground state $\Psi^U$. We prove
\begin{Th}
  \label{thr:7}
  Fix $\mu>0$. Assume (HU) holds. Then, there exist $\rho_0>0$ such
  that, for $\\rho\in(0,\rho_0)$, $\omega$-a.s., one has
  \begin{gather*}
    \begin{split}
      \limsup_{\substack{L\to+\infty\\n/L\to\rho}}&
      \frac1{n^2}\left\|\left(\gamma^{(2)}_{\Psi^U_\omega(\Lambda,
            n)}-\frac{1}{2} (\Id - \Ex)
          \left[\gamma_{\Psi^{\text{opt}}_{\Lambda,n}} \otimes
            \gamma_{\Psi^{\text{opt}}_{\Lambda,n}}\right]\right)
        \car^2_{<\ell_{\rho,\mu}+C} \right\|_{\text{tr}}
      \\&\hskip5cm\leq
      \frac1{\rho_0}\max\left(\frac{\rho_\mu}{\ell_{\rho,\mu}},
        \sqrt{\rho_\mu\,Z(\,\ell_{\rho,\mu})}\right)
    \end{split}
    \\\intertext{and}
    \begin{split}
      \limsup_{\substack{L\to+\infty\\n/L\to\rho}}&
      \frac1{n^2}\left\|\left(\gamma^{(2)}_{\Psi^U_\omega(\Lambda,
            n)}-\frac{1}{2} (\Id - \Ex)
          \left[\gamma_{\Psi^{\text{opt}}_{\Lambda,n}} \otimes
            \gamma_{\Psi^{\text{opt}}_{\Lambda,n}}\right]\right)
        \left(\car-\car^2_{<\ell_{\rho,\mu}+C}\right)
      \right\|_{\text{tr}} \\&\hskip5cm\leq
      \frac1{\rho_0}\max\left(\frac{\rho_\mu}{\ell_{\rho,\mu}},
        \rho_\mu\,\sqrt{Z(\,\ell_{\rho,\mu})}\right)
    \end{split}
  \end{gather*}
  where, for $\ell\geq0$, we recall that $\|\cdot\|_{\text{tr}}$
  denotes the trace norm in $L^2(\Lambda)\wedge L^2(\Lambda)$,
  recall~\eqref{eq:241} and define
  \begin{equation}
    \label{eq:240}
    \car^2_{<\ell}=\car^1_{<\ell}\otimes\car^1_{<\ell}.
  \end{equation}
\end{Th}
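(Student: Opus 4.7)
The plan is to split the object to be bounded via the triangle inequality into a ``structural'' piece and a ``ground state'' piece, namely
\begin{equation*}
\gamma^{(2)}_{\Psi^U_\omega(\Lambda,n)}-\tfrac12(\Id-\Ex)\bigl[\gamma_{\Psi^{\text{opt}}_{\Lambda,n}}\otimes\gamma_{\Psi^{\text{opt}}_{\Lambda,n}}\bigr]
=\Bigl[\gamma^{(2)}_{\Psi^U_\omega(\Lambda,n)}-\gamma^{(2)}_{\Psi^{\text{opt}}_{\Lambda,n}}\Bigr]
+\Bigl[\gamma^{(2)}_{\Psi^{\text{opt}}_{\Lambda,n}}-\tfrac12(\Id-\Ex)\bigl[\gamma_{\Psi^{\text{opt}}_{\Lambda,n}}\otimes\gamma_{\Psi^{\text{opt}}_{\Lambda,n}}\bigr]\Bigr],
\end{equation*}
and to treat each term on the pieces $\{|\Delta_k(\omega)|<\ell_{\rho,\mu}+C\}$ and on its complement, mirroring the strategy used for the one-particle reduced density matrix in Theorem~\ref{thr:2}.

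For the structural piece, I would exploit the product-over-pieces form of $\Psi^{\text{opt}}_{\Lambda,n}$. By construction it is an antisymmetrization of single-particle factors $\varphi^1_{\Delta_k(\omega)}$ on intermediate-length pieces and of two-particle factors $\zeta^1_{\Delta_k(\omega)}$ on long pieces; this is essentially a generalized Slater determinant. A direct computation of its two-particle reduced density matrix (in the spirit of Proposition~\ref{prop:DensityMatrixStructure}) gives
\begin{equation*}
\gamma^{(2)}_{\Psi^{\text{opt}}_{\Lambda,n}}=\tfrac12(\Id-\Ex)\bigl[\gamma_{\Psi^{\text{opt}}_{\Lambda,n}}\otimes\gamma_{\Psi^{\text{opt}}_{\Lambda,n}}\bigr]+\!\!\!\sum_{|\Delta_k(\omega)|\geq 2\ell_{\rho,\mu}-\log(1-\gamma^\mu_*)}\!\!\!\Bigl(\gamma^{(2)}_{\zeta^1_{\Delta_k(\omega)}}-\gamma_{\varphi^1_{\Delta_k(\omega)}\wedge\varphi^2_{\Delta_k(\omega)}}\Bigr),
\end{equation*}
and the corollary to Proposition~\ref{prop:TwoElectronProblem} (mentioned in the discussion of \eqref{eq:8}) tells us that each summand has trace norm $O(1/\ell_{\rho,\mu})$. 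Since the number of pieces in play is $O(\rho_\mu n)$ and those pieces are localized, the short-piece restriction $\car^2_{<\ell_{\rho,\mu}+C}$ essentially selects none of these summands (they live where at least one coordinate sits in a long piece), so the contribution there is absorbed by $\rho_\mu/\ell_{\rho,\mu}$, while on the complement one obtains a bound of order $\rho_\mu/\ell_{\rho,\mu}$ as well.

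For the ground state piece, the key input is the sharp two-term energy asymptotic of Theorem~\ref{th:EnergyAsymptoticExpansion}: combined with the explicit trial-state computation that shows $\langle H^U_\omega(\Lambda,n)\Psi^{\text{opt}}_{\Lambda,n},\Psi^{\text{opt}}_{\Lambda,n}\rangle-E^U_\omega(\Lambda,n)=o(n\,\rho_\mu\,\ell_{\rho,\mu}^{-3})$, this forces the variational closeness of $\Psi^U_\omega(\Lambda,n)$ to $\Psi^{\text{opt}}_{\Lambda,n}$ in form-norm. Positivity of $U$ then allows a Kato-type lower bound of the form $\|\Psi^U-\Psi^{\text{opt}}\|_{\frH^n_\infty(\Lambda)}^2\lesssim E^U(\Psi^{\text{opt}})-E^U_\omega$, from which the trace-norm closeness of the two-particle density matrices follows by the same Hilbert--Schmidt / trace duality argument used at the one-particle level (Theorem~\ref{thr:2}). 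The projection $\car^2_{<\ell_{\rho,\mu}+C}$ is then handled exactly as in the one-particle case: interactions between particles sitting in distinct short pieces contribute through the tail $\int_{\ell_{\rho,\mu}}^\infty U$, which produces the $\sqrt{\rho_\mu\,Z(\ell_{\rho,\mu})}$ term, whereas on the complement the paucity of long pieces (Lemma~\ref{le:18}) combined with the concentration of at most $\rho_\mu n$ particles in the two-particle blocks turns that term into $\rho_\mu\sqrt{Z(\ell_{\rho,\mu})}$.

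The main obstacle, as in Theorem~\ref{thr:2}, is controlling the contribution to $\gamma^{(2)}_{\Psi^U}$ coming from pairs of particles living in different pieces whose interaction is non-negligible in the polynomially decaying regime. Unlike the one-particle case, here both coordinates of the kernel enter the interaction, so one must upgrade the piece-localization estimates to pairs of pieces and track the coupled decay of $U$ against the distribution of piece-to-piece distances, which is done via the function $Z$ and the Palm-formula-type estimates available from the Poisson structure; once this is set up, the argument proceeds in close parallel with the proof of Theorem~\ref{thr:2}, and the two-particle analogue of the projection trick $\car^2_{<\ell}=\car^1_{<\ell}\otimes\car^1_{<\ell}$ naturally separates the two distinct error regimes.
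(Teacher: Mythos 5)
The decisive gap is your ``Kato-type'' step, i.e.\ the claimed bound $\|\Psi^U-\Psi^{\text{opt}}\|^2\lesssim \langle H^U\Psi^{\text{opt}},\Psi^{\text{opt}}\rangle-E^U_\omega(\Lambda,n)$ from which you then want to deduce trace-norm closeness of the two-particle density matrices. No inequality of this kind can hold in this model. Distinct occupation subspaces $\frH_Q$ are mutually orthogonal and invariant (Corollary~\ref{cor:OccupationSubspaces}), and Corollary~\ref{cor:3} only guarantees that the occupation of a ground state lies within $\dist_1$-distance of order $n\rho\,\max\bigl(\sqrt{Z},|\log\rho|^{-1}\bigr)$ of $Q(\Psi^{\text{opt}})$, not that the two occupations coincide; as soon as they differ in a single piece one has $\Psi^U\perp\Psi^{\text{opt}}$ exactly, so $\|\Psi^U-\Psi^{\text{opt}}\|^2=2$ while the energy difference is only $o\bigl(n\rho\,\ell_\rho^{-3}\bigr)$. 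Even inside a fixed occupation subspace there is no spectral gap of the required size: the available energy budget is compatible with order $n\rho\,\ell_\rho^{-1}$ misplaced particles (Lemma~\ref{lem:normPiecesWithParticleExcess}) and with an excited-level weight $o\bigl(n\rho|\log\rho|^{-1}\bigr)$ in the coefficients (Proposition~\ref{pro:4}), either of which allows $\Psi^U$ to be essentially orthogonal to $\Psi^{\text{opt}}$. Note also that the proof of Theorem~\ref{thr:2} does not proceed by wave-function closeness either; the only norm-to-density-matrix estimate in the paper, Lemma~\ref{le:12}, is used for two electrons in a single piece, not for the $n$-body state.

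What the paper does instead, and what your argument would have to be replaced by, is to work entirely at the level of the coefficients $a^Q_{\overline n}$ of the occupation expansion: Theorem~\ref{thr:5} gives an exact decomposition of $\gamma^{(2)}_{\Psi^U}$ into blocks according to how the two occupations differ; Lemmas~\ref{le:1} and~\ref{le:24}--\ref{le:27} show that every block except the diagonal-occupation, distinct-pieces block $\gamma^{(2),d,o}$ is $o(n^2)$ in trace norm; for that block one restricts to pieces in $\mathcal{P}^Q_-$ (Lemma~\ref{le:28}), replaces the two-piece components by the corresponding ground states using Lemma~\ref{le:29} via~\eqref{eq:173}, and compares the resulting sum with $\gamma^{(2)}_{\Psi^{\text{opt}}}$ through the $\dist_1$ bounds of Corollary~\ref{cor:3}; this is exactly where the two error regimes $\sqrt{\rho\,Z}$ (short pieces) and $\rho\sqrt{Z}$ (long pieces) come from. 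A smaller point: your displayed identity for $\gamma^{(2)}_{\Psi^{\text{opt}}}-\tfrac12(\Id-\Ex)\bigl[\gamma_{\Psi^{\text{opt}}}\otimes\gamma_{\Psi^{\text{opt}}}\bigr]$ is not what Proposition~\ref{prop:DensityMatrixStructure} yields; the correction terms are $\gamma^{(2)}_{\zeta_j}-\tfrac12(\Id-\Ex)\gamma_{\zeta_j}\otimes\gamma_{\zeta_j}$, with per-term trace norm $O(1)$ rather than $O(\ell_\rho^{-1})$, but this part is harmless since the total is $O(n)$ and vanishes after division by $n^2$.
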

\subsection{Discussion and perspectives}
\label{sec:global-context}
While a very large body of mathematical works has been devoted to one
particle random Schr{\"o}dinger operators (see
e.g.~\cite{MR2509110,MR94h:47068}), there are only few works dealing
with many interacting particles in a random medium (for the case of
finitely many particles, see, for example,
\cite{AizenmanWarzel_LocBoundsMultipartilce} or
\cite{ChulaevskySuhov_MultiparticleAndersonLoc}).\\
The general Hamiltonian describing $n$ electrons in a random background
potential $V_\omega$ interacting via a pair potential $U$ can be
described as follows. In a $d$-dimensional domain $\Lambda$, consider
the operator
\begin{equation*}
  H_\omega(\Lambda, n) = -\laplace_{n d}\bigr|_{\Lambda^n} + \sum_{i =
    1}^n V_\omega(x^i) +  \sum_{i < j} U(x^i - x^j) \text{,}
\end{equation*}
where, for $j \in \{1, \hdots, n\}$, $x^j$ denotes the coordinates of
the $j$-th particle. The operator $H_\omega(\Lambda, n)$ acts on a
space of totally anti-symmetric functions $\D\bigwedge_{i=1}^n
L^2(\Lambda)$ which reflects the electronic nature of particles.\\
The general problem is to understand the behavior of
$H_\omega(\Lambda, n)$ in the thermodynamic limit $\Lambda \to \infty$
while $n / |\Lambda| \to \rho>0$; $\rho$ is the particle density. One
of the questions of interest is that of the behavior of the ground
state energy, say, $E_\omega(\Lambda, n)$ and of the ground state
$\Psi_\omega(\Lambda, n)$.\\
While the thermodynamic limit is known to exist for various
quantities and in various settings (see~\cite{MR3022666} for the
micro-canonical ensemble that we study in the present paper
and~\cite{MR2905791} for the grand canonical ensemble), we don't know
of examples, except for the model studied in the present paper, where
the limiting quantities have been studied. In particular, it is of
interest to study the dependence of these limiting quantities in the
different physical parameters like the density of particles, the
strength of the disorder or the interaction potential.\\
As we shall argue now, for these questions to be tractable, one needs a
good description of the spectral data of the underlying one particle
random model.
\subsubsection{Why the pieces model?}
\label{sec:motiv-piec-model}
In order to tackle the question of the behavior of $n$-electron ground
state, let us first consider the system without interactions. This is
not equivalent to a one-particle system as Fermi-Dirac statistics play
a crucial role.\\
Let us assume our one particle model is ergodic and admits an
integrated density of states (see~\eqref{eq:6} and
e.g.~\cite{MR2404176,MR94h:47068}). As described above for the
pieces model, the ground state of the $n$ non interacting
electrons is given by~\eqref{eq:7} and its energy per particle is given
by
\begin{equation}
  \label{eq:68}
  \frac{E^0_\omega(\Lambda,n)}{n}=\frac1n\sum_{j=1}^nE_{j,\omega}^\Lambda
  =|\Lambda|\int_{-\infty}^{E_{n,\omega}^\Lambda}E\,
  d\left[\frac{\#\{\text{eigenvalues of }H_\omega(\Lambda)\text{ below
      }E\}}{|\Lambda|}\right]
\end{equation}
where $E_{n,\omega}^\Lambda$ is the $n$-th eigenvalue of the one
particle random Hamiltonian $H_\omega(\Lambda)$, i.e., the smallest
energy $E$ such that
\begin{equation}
  \label{eq:70}
  \frac{\#\{\text{eigenvalues of }H_\omega(\Lambda)\text{ below
    }E\}}{|\Lambda|}=\frac n{|\Lambda|}.
\end{equation}
Here, we have kept the notations of the beginning of
section~\ref{sec:ground-state}.\\
The existence of the density of states, say $N(E)$,
(see~\eqref{eq:6}), then, ensures the convergence of $E(\Lambda,n)$ to
a solution to the equation $N(E)=\rho$, say $E_\rho$. Thus, to control
the non interacting ground state, one needs to control all (or at
least most of) the energies of the random operator $H_\omega(\Lambda)$
up to some macroscopic energy $E_\rho$. In particular, one needs to
control simultaneously a number of energies of $H_\omega(\Lambda)$
that is of size the volume of $\Lambda$. \\
To our knowledge, up to now, there are no available mathematical
results that give the simultaneous control over that many eigenvalues
for general random systems. The results dealing with the spectral
statistics of (one particle) random models deal with much smaller
intervals: in~\cite{MR97d:82046}, eigenvalues are controlled in
intervals of size $K/|\Lambda|$ for arbitrary large $K$ if $\Lambda$
is sufficiently large; in~\cite{Ge-Kl:10,MR3070753}, the interval is
of size $|\Lambda|^{1-\beta}$ for some not too large positive $\beta$.\\
The second problem is that all these results only give a very rough
picture of the eigenfunctions, a picture so rough that it actually is
of no use to control the effect of the interaction on such states: the
only information is that the eigenstates live in regions of linear
size at most $\log|\Lambda|$ and decay exponentially outside such
regions (see, e.g.,~\cite{Ge-Kl:10} and references therein).\\
The pieces model that we deal with in the present paper exhibits
the typical behavior of a random system in the localized regime: for
$H_\omega(\Lambda)$, 
\begin{itemize}
\item the eigenfunctions are localized (on a scale $\log|\Lambda|$)
\item the localization centers and the eigenvalues satisfy Poisson
  statistics.
\end{itemize}
The advantage of the pieces model is that the eigenfunctions and
eigenvalues are known explicitly and easily controlled. This is a
consequence of the fact that a crucial quantum phenomenon is missing
in the pieces model, namely, tunneling. Of course, once the particles
do interact with each other, tunneling is again
re-enabled.\vskip.2cm\noindent
All of this could lead one to think that the pieces model is very
particular. Actually, at low energies, general one-dimensional random
models exhibit the same characteristics as the pieces model up to
some exponentially small errors which are essentially due to tunneling
(see~\cite{Kl:13a}).\\
It seems reasonable to guess that the behavior will be comparable for
general random operators in higher dimensions and, thus, that the
results of the present paper on interacting electrons in a random
potentials should find their analogues for these models.
\subsubsection{Outline of the paper}
\label{sec:outline-main-results}
In section~\ref{sec:probl-prel}, after rescaling the parameters of the
problem so as to send $\mu$ to $1$ and $\rho$ to $\rho/\mu$, we first
discuss the validity of our results in a more general asymptotic
regime in $\mu$ and $\rho$. We, then, gather some basic but crucial
statistical properties of the distribution of the pieces. We first
describe the free electrons. For the pieces model, a statistical
analysis of the distribution of pieces gives exact expressions for the
one-particle integrated density of states and the Fermi energy in
Proposition~\ref{prop:IDSandFermiEnergy}. We also study the non
interacting model and introduce notations for later use.
\vskip.2cm\noindent
In section~\ref{sec:main-results-proofs-1}, we first introduce the
occupation numbers (i.e., the number of particles a given state puts
in each piece); the existence of the occupation numbers is tantamount
to the existence of a particular orthogonal sum decomposition of the
Hamiltonian $H_\omega^U(\Lambda,n)$. We prove that the ground state of
$H_\omega^U(\Lambda,n)$ restricted to a fixed occupation space is non
degenerate and, from this result, derive
Theorem~\ref{th:asNonDegOfGroundState}, the almost sure non degeneracy
of the ground state for real analytic interaction.\\
Next, still in section~\ref{sec:main-results-proofs-1}, we prove the
asymptotic formula for the interacting ground state energy per
particle. The proof relies essentially on the minimizing properties of the
ground state. This minimizing property yields a good description for
the occupation numbers associated to a ground state. To get this
description, we first study the ground state of the Hamiltonian
$H_\omega^{U^p}(\Lambda,n)$ where the interactions have been cut-off
at infinity (i.e., $U^p$ is compactly supported).  We construct an
approximate ground state $\Psi^{\text{opt}}$ which can essentially be
thought of as the ground state for the Hamiltonian
$H_\omega^{U^p}(\Lambda,n)$ restricted to the pieces shorter that
$3\ell_{\rho,\mu}$. Then, letting
$W^r(\Lambda,n):=H_\omega^{U}(\Lambda,n)-H_\omega^{U^p}(\Lambda,n)$ be
the long range behavior of the interactions, one has
\begin{equation*}
  E^{U^p}_\omega(\Lambda,n)\leq E^{U}_\omega(\Lambda,n)
  \leq\langle H_\omega^{U^p}(\Lambda,n)
  \Psi^{\text{opt}},\Psi^{\text{opt}}\rangle+\langle W^r(\Lambda,n)
  \Psi^{\text{opt}},\Psi^{\text{opt}}\rangle
\end{equation*}
The minimizing property of $\Psi^{\text{opt}}$ yields
\begin{equation*}
  E^{U^p}_\omega(\Lambda,n)\geq \langle H_\omega^{U^p}(\Lambda,n)
  \Psi^{\text{opt}},\Psi^{\text{opt}}\rangle +n\,
  o(\rho_\mu\,\mu^{-1}\,\ell^{-3}_{\rho,\mu}) 
\end{equation*}
(see Theorem~\ref{th:PsiUmPsiOptEnergyEstimate}).\\
On the other hand, the decay assumption \textbf{(HU)} on $U$ and the
explicit construction of $\Psi^{\text{opt}}$ yield
\begin{equation*}
  \langle W^r(\Lambda,n)
  \Psi^{\text{opt}},\Psi^{\text{opt}}\rangle=n\,
  o(\rho_\mu\,\mu^{-1}\,\ell^{-3}_{\rho,\mu})
\end{equation*}
(see Proposition~\ref{pro:2}).\\
This yields the proof of Theorem~\ref{th:EnergyAsymptoticExpansion}.\\
In the course of these proofs, we also prove a certain number of
estimates on the distance between the occupation numbers of the
interacting ground state(s) to the state $\Psi^{\text{opt}}$.
\vskip.2cm\noindent
Section~\ref{sec:proof-theorem6} is devoted to the proofs of
Theorems~\ref{thr:2} and~\ref{thr:7}. Therefore, we transform the
bounds of the distance between occupation numbers into bounds on the
trace class norms of the difference between the one (and the two)
particle densities of the interacting ground state(s) and the state
$\Psi^{\text{opt}}$.\\
In Theorems~\ref{thr:4} (resp. Theorem~\ref{thr:5}), we derive general
formulas for the one particle (resp. two particles) density of a state
expressed in a certain well chosen basis of $\frH^n(\Lambda)$. One of
the main steps on the path going from occupation number bounds to the
trace class norm bounds is to prove that, in most pieces, once the
particle number is known, the state must be in the ground state for
the given particle number. This is the purpose of Lemma~\ref{le:21};
it relies on the minimizing properties of the ground state; actually,
it is proved for a larger set of states, states satisfying a certain
energy bound.\\
We then use Theorems~\ref{thr:4} (resp. Theorem~\ref{thr:5}) to derive
Theorems~\ref{thr:2} (resp. Theorem~\ref{thr:7}).\vskip.2cm\noindent
Section~\ref{sec:almost-sure-conv} is devoted to the proof of the
almost sure convergence of the ground state energy per particle. The
proof is essentially identical to that found in~\cite{MR3022666}
except for the sub-additive estimate crucial to the proof. This
estimate is provided by Theorem~\ref{thr:8}.\vskip.2cm\noindent
In section~\ref{sec:two-part-probl}, we prove
Proposition~\ref{prop:TwoElectronProblem} as well as a number of
estimates on the ground states and ground state energies for a finite
number of electrons living in a fixed number of pieces and interacting.
\vskip.2cm\noindent
In three appendices, we gather a number of results used in the main body
of paper. In appendix~\ref{sec:auxil-results-calc}, we prove the
results on the statistics of the pieces stated in
section~\ref{sec:probl-prel}. Appendix~\ref{sec:simple-lemma-trace} is
devoted to a simple technical lemma used intensively in the derivation
of Theorems~\ref{thr:2} and~\ref{thr:7} in
section~\ref{sec:proof-theorem6}. Appendix~\ref{sec:proj-totally-antisym}
is devoted to anti-symmetric tensor products.


\tableofcontents

\section{Preliminary results}
\label{sec:probl-prel}
In this section, we state a number results on the Luttinger-Sy model
defined in section~\ref{sec:introduction} on which our analysis is
based. We first recall some results on the thermodynamic limit
specialized to the pieces model. Then, we describe the statistics of
the eigenvalues and eigenfunctions of the pieces model defined
in~\eqref{eq:2}; in the case of the pieces model, it suffices
therefore to describe the statistics of the pieces (see
section~\ref{sec:analys-one-part}).\\
In section~\ref{sec:free-electrons}, we describe the non interacting
system of $n$ electrons.
\subsection{Rescaling the operator}
\label{sec:rescaling}
Consider the scaling $\wtx = \mu x$, that is, define
\begin{equation}
  \label{eq:scalingX}
  \begin{aligned}
    S_\mu:\ \bigwedge_{j = 1}^nL^2([0,L])&\to \bigwedge_{j =
      1}^nL^2([0,\wtL]) \\ u&\mapsto S_\mu  u\quad \text{ where } (S_\mu 
    u)(x)=\mu^{n/2} u(\mu x)\quad\text{ and }\quad \wtL = \mu L.
  \end{aligned}
\end{equation}
One then computes
\begin{equation*}
  S^*_{\mu} H_\omega(L, n) S_\mu=\mu^2 \wtH_{\omega}(\wtL, n)
\end{equation*}
where $\wtH_{\omega}(\wtL, n)$ is the interacting pieces model on the
interval $[0,\wtL]$ defined by a Poisson process of intensity $1$ and
with pair interaction potential
\begin{equation}
  \label{eq:278}
  U^\mu(\cdot)=\mu^{-2} U(\mu^{-1}
\cdot).
\end{equation}
For $\wtH_{\omega}(\wtL, n)$, the thermodynamic limit becomes
\begin{equation*}
  \frac{n}{\wtL} = \frac{n}{\mu L} \to \frac{\rho}{\mu} = \rho_\mu.
\end{equation*}
We shall prove
Theorems~\ref{th:EnergyAsymptoticExpansion},~\ref{thr:2}
and~\ref{thr:7} under the additional assumption $\mu=1$.  Let us now
explain how Theorems~\ref{th:EnergyAsymptoticExpansion},~\ref{thr:2}
and~\ref{thr:7} get modified when one goes from $\mu=1$ to arbitrary
$\mu$.  \vskip.2cm\noindent
If one denotes by $\gamma^\mu$ the constant defined by
Proposition~\ref{prop:TwoElectronProblem} applied to the interaction
potential $U^\mu$ instead of $U$, a direct computation yields
$\gamma^\mu=\mu\gamma$.\\
In the same way, a direct computation yields that $Z^\mu$, the
analogue of $Z$ in assumption~\textbf{(HU)} for $U^\mu$, is given by
$\D Z^\mu(\cdot)=\mu^2Z(\mu^{-1}\cdot)$. Thus, for the function
$f_{Z^\mu}$ (see~\eqref{eq:98},~\eqref{eq:274} and~\eqref{eq:1})
defined for $U^\mu$, see~\eqref{eq:278}, one obtains
$f_{Z^\mu}(\cdot)=\mu^2f_{Z}(\mu^{-1}\cdot)$.  This suffices to obtain
Theorems~\ref{thr:2} and~\ref{thr:7} for $\mu$ arbitrary fixed from
the case $\mu=1$.\vskip.2cm\noindent
From now on, as we fix $\mu=1$, we shall write drop the sub- or
superscript $\mu$ and write, e.g., $\ell_\rho$ for $\ell_{\rho,\mu}$,
$E_\rho$ for $E_{\rho,\mu}$, etc. Similarly, the dependence on the
random parameter $\omega$ will be frequently dropped so as to simplify
notations.
\subsubsection{Other asymptotic regimes}
\label{sec:other-asympt-regim}
In the introduction, for the sake of simplicity we chose to state our
results at fixed $\mu$ and sufficiently small $\rho$ (depending on
$\mu$). Actually, the results that we obtained stay correct under less
restrictive conditions on $\mu$ and $\rho$. The conditions that are
required are the following. Fix $\mu_0>0$; then,
Theorems~\ref{th:EnergyAsymptoticExpansion},~\ref{thr:2}
and~\ref{thr:7} stay correct as long as $\mu\in(0,\mu_0)$, $\rho_\mu$
be sufficiently small and $\ell_{\rho,\mu}$ sufficiently large
depending only on $\mu_0$. Let us now explain this.\\
Therefore, we analyze the remainder terms of~\eqref{eq:129} (thus,
of~\eqref{eq:222}). The second term in the last equality
in~\eqref{eq:129} multiplied by $\mu^2$ (to rescale energy properly,
see above) becomes
\begin{equation*}
  \pi^2\mu^2 \gamma^\mu_*
  \frac{\rho_\mu}{|\log{\rho_\mu}|^3}=\pi^2
  \,\gamma^\mu_*\,\mu^{-1}\,\rho_\mu\,\ell_{\rho,\mu}^{-3}+
  o\left(\rho_\mu\,\ell_{\rho,\mu}^{-3}\right)
\end{equation*}
by~\eqref{eq:67}. Note that, by~\eqref{eq:12},
$\gamma^\mu_*\,\mu^{-1}$ stays bound from above and below as
$\mu\to0^+$.\\ 
The remainder term in the last equality in~\eqref{eq:129} multiplied
by $\mu^2$ (to rescale energy properly, see above) becomes
\begin{equation*}
  \mu^2\frac{\rho_\mu}{|\log{\rho_\mu}|^3}
  O\left(f_{Z^\mu}[|\log{\rho_\mu}|)]\right) =
  \frac{\rho_\mu\,\mu^4}{\ell^3_{\rho,\mu}}
  O\left(f_Z\left[\ell_{\rho,\mu}(1+o(1))\right] \right)=o
  \left(\frac{\rho_\mu\,\mu^{-1}}{\ell^3_{\rho,\mu}}\right)
\end{equation*}
when $\rho_\mu\to0$ and $\ell_{\rho,\mu}\to+\infty$ while $\mu$ stays
bounded.\\
This then yields Theorem~\ref{th:EnergyAsymptoticExpansion} for
$(\mu,\rho)$ arbitrary in the regime described above from the case
$\mu=1$ and $\rho$ small.\vskip.2cm\noindent
To obtain Theorems~\ref{thr:2} and~\ref{thr:7} for $\mu$ arbitrary, we
just use $Z^\mu(\cdot)=\mu^2Z(\mu^{-1}\cdot)$ and the fact that $Z$ is
decaying; indeed, this implies that
\begin{equation*}
  Z^\mu(2|\log\rho_\mu|)=\mu^2Z(2\ell_{\rho,\mu}(1+o(1)))\leq 
  \mu^2Z(\ell_{\rho,\mu})
\end{equation*}
when $\rho_\mu\to0$ and $\ell_{\rho,\mu}\to+\infty$ while $\mu$ stays
bounded.\\
This suffices to obtain Theorems~\ref{thr:2} and~\ref{thr:7} for
$(\mu,\rho)$ arbitrary in the regime described above from the case
$\mu=1$ and $\rho$ small.\vskip.2cm\noindent
From now on, we fix $\mu=1$ and assume $\rho$ be small. Thus, we shall
drop the sub- or superscript $\mu$ and write, e.g., $\ell_\rho$ for
$\ell_{\rho,\mu}$, $E_\rho$ for $E_{\rho,\mu}$, etc. Similarly, the
dependence on the random parameter $\omega$ will be frequently dropped
so as to simplify notations.
\subsection{The analysis of the one-particle pieces model}
\label{sec:analys-one-part}
Most of the proofs of the results stated in the present section can be
found in Appendix~\ref{sec:auxil-results-calc}.\\
Recall that we partition $[0, L]$ using a Poisson process of intensity
$1$ and write
\begin{equation}
  \label{eq:IntervalDivision}
  [0, L] = \bigcup_{j = 1}^{m(\omega)} \Delta_j(\omega).
\end{equation}
Note that, by a standard large deviation principle, for
$\beta\in(0,1/2)$, with probability at least $1-e^{-L^\beta}$, one has
$m=L+O(L^\beta)$.\\
Moreover, with probability one,
\begin{itemize}
\item $\D\min_{1 \leq j \leq m(\omega)}|\Delta_j(\omega)|>0$,
\item if $j\not=j'$ then
  $\D\frac{|\Delta_j(\omega)|^2}{|\Delta_{j'}(\omega)|^2}\not\in\Q$.
\end{itemize}
Thus, distinct pieces generate distinct Dirichlet Laplacian energy
levels. In particular, with probability one, all the eigenfunctions of
the one-particle Hamiltonian $H_\omega(L)=H_\omega(L,1)$ are supported
on a single piece $\Delta_j(\omega)$ and the corresponding eigenvalues
are simple.\\
Hence, we will enumerate the eigenvalues and the eigenfunctions of
$H_\omega(L)$ using a two-component index $(\Delta_j, k)$ where
\begin{itemize}
\item $\Delta_j$ is the piece of the partition
  \eqref{eq:IntervalDivision} on which the eigenfunction is supported,
\item $k$ is the index of the eigenvalue within the ordered list of
  eigenvalues of this piece,
\end{itemize}
i.e.,
\begin{equation*}
  \psi_{(\Delta_j, k)}(x) = \sqrt{\frac{2}{|\Delta_j|}} 
  \sin\left(\dfrac{\pi k (x - \inf{\Delta_j})}{|\Delta_j|}\right)
  \car_{\Delta_j}(x) 
\end{equation*}
and the corresponding energy
\begin{equation}
  \label{eq:17}
  E_{(\Delta_j, k)} = \left(\frac{\pi k}{|\Delta_j|}\right)^2 \text{.}
\end{equation}
Let $\calP=\calP(\omega)$ denote the set of all available indices
enumerating single-particle states, i.e., $\calP =\{\Delta_j\}_{j =
  1}^{m(\omega)} \times \bbN$.\\
In parallel to this two-component enumeration system, we will use a
direct indexing procedure: $\{(E_j,\psi_j)\}_{j \in \bbN}$ are the
eigenvalues and associated eigenfunctions of the one particle
Hamiltonian $H_\omega(L)$ counted with multiplicity ordered
with increasing energy.
\subsection{The statistics of the pieces}
\label{sec:statistics-pieces}
We first study the statistical distribution of the pieces generated by
the Poisson process. We will primarily be interested in the joint
distributions of their lengths. These statistics immediately provide
the statistics of the eigenvalues and eigenfunctions of the pieces
model. These results are presumably well known; as we don't know a
convenient reference, we provide their proofs in
Appendix~\ref{sec:auxil-results-calc} for the sake of
completeness.\\
In the sequel, the probability of the events will typically be
$1-O(L^{-\infty})$: we recall that $A_k=O(k^{-\infty})$ if $\forall
N\geq0$, $\D\lim_{k\to+\infty}k^{N}A_k=0$. Actually, the proofs show
that the probabilities lie at an exponentially small distance from
$1$, i.e., $O(L^{-\infty})=e^{-L^\beta}$ for some $\beta>0$.\\
We prove
\begin{Pro}
  \label{pro:3}
  With probability $1-O(L^{-\infty})$, the largest piece has length
  bounded by $\log L\cdot\log\log L$, i.e.,
  \begin{equation*}
    \max_{1\leq k\leq m(\omega)}|\Delta_k(\omega)|\leq \log L\cdot\log\log
    L.
  \end{equation*}
\end{Pro}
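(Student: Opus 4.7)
The plan is to reduce the statement to a union bound over the lengths of the gaps of the Poisson process, using that the inter-arrival times of a rate-one Poisson process on $\R$ are i.i.d.\ exponential with parameter $1$.

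First I would condition on the (high probability) event $A_L=\{m(\omega)\leq L+L^\beta\}$ for some fixed $\beta\in(0,1/2)$; by the large deviation bound quoted just after \eqref{eq:IntervalDivision}, the complement has probability at most $e^{-L^\beta}$, which is $O(L^{-\infty})$. On $A_L$ there are at most $L+L^\beta+2$ pieces, the two extra ones being the truncated pieces $\Delta_{k_--1}(\omega)=[0,x_{k_-}]$ and $\Delta_{k_+}(\omega)=[x_{k_+},L]$ at the endpoints of $\Lambda$.

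Next I would use the structure of the Poisson process: the lengths $|\Delta_k(\omega)|$ for interior pieces are i.i.d.\ $\mathrm{Exp}(1)$ random variables, so for any $t>0$ and any fixed $k$,
\begin{equation*}
\mathbb{P}\bigl(|\Delta_k(\omega)|>t\bigr)=e^{-t}.
\end{equation*}
For the two truncated boundary pieces, each is dominated in law by a single $\mathrm{Exp}(1)$ variable (the inter-arrival time containing the corresponding endpoint is $\mathrm{Exp}(1)$, and the truncated piece is a subset of it). A union bound on $A_L$ then gives
\begin{equation*}
\mathbb{P}\Bigl(\max_{1\leq k\leq m(\omega)}|\Delta_k(\omega)|>t,\ A_L\Bigr)
\leq (L+L^\beta+2)\,e^{-t}.
\end{equation*}

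Finally I would plug in $t=\log L\cdot\log\log L$: the right-hand side becomes
\begin{equation*}
(L+L^\beta+2)\,L^{-\log\log L}=L^{1-\log\log L}(1+o(1)),
\end{equation*}
which decays faster than any inverse power of $L$, hence is $O(L^{-\infty})$. Combined with $\mathbb{P}(A_L^c)=O(L^{-\infty})$ this yields the claim. The only mildly delicate point is the treatment of the two truncated boundary pieces and the randomness of $m(\omega)$ in the union bound, but both are handled by the dominating $\mathrm{Exp}(1)$ inter-arrival time argument and the preliminary large-deviation reduction to $A_L$; there is no real obstacle beyond that.
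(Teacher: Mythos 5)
Your proof is correct, but it takes a different route from the paper's. The paper avoids dealing with the random number of pieces altogether: it observes that any piece of length at least $\log L\cdot\log\log L$ must contain an interval $n+[0,\tfrac12\log L\cdot\log\log L]$ with $n\in[0,L]\cap\N$ that is free of Poisson points, and then applies a union bound over the $\sim L$ deterministic integer translates, each empty with probability $e^{-\frac12\log L\cdot\log\log L}$; this spatial discretization costs a factor $2$ in the exponent but needs no control on $m(\omega)$. You instead index the union bound by the pieces themselves, which forces the preliminary large-deviation reduction to $A_L=\{m(\omega)\leq L+L^\beta\}$ so that the pieces are dominated by a deterministic number of i.i.d.\ $\mathrm{Exp}(1)$ inter-arrival times; in exchange you keep the full rate $e^{-t}$ per piece. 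Both yield $O(L^{-\infty})$ with equal ease. One small imprecision in your handling of the boundary pieces: for a Poisson process on $\R$ the inter-arrival interval \emph{containing} a fixed endpoint is not $\mathrm{Exp}(1)$ but size-biased (a sum of two independent $\mathrm{Exp}(1)$ variables); what is exactly $\mathrm{Exp}(1)$ is the distance from the fixed endpoint to the nearest Poisson point on the relevant side, and since the truncated piece has precisely that length, your domination claim and the rest of the argument stand unchanged.
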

\noindent On the distribution of the length of the pieces, one proves
\begin{proposition}
  \label{prop:IntervStatistics}
  Fix $\beta\in(2/3,1)$. Then, for $L$ large, for any
  $(a_L,b_L)\in[0,\log L\cdot\log\log L]^2$, with probability
  $1-O(L^{-\infty})$, the number of pieces of length contained in
  $[a_L,a_L+b_L]$ is equal to
  \begin{equation*}
    e^{-a_L} (1-e^{-b_L})\cdot L+R_L\cdot L^\beta\quad \text{ where } 
    \quad|R_L|\leq \kappa
  \end{equation*}
  and the positive constant $\kappa$ is independent of $a_L,b_L$.
\end{proposition}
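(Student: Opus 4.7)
The plan is to exploit the standard realization of a Poisson process of intensity $1$ on $[0,L]$ through its i.i.d.\ exponential spacings, so that the count of pieces whose length lies in $[a_L, a_L+b_L]$ becomes (up to negligible boundary effects) a sum of i.i.d.\ Bernoulli indicators, which I can then handle by a Bernstein-type concentration estimate.

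I would let $(X_j)_{j \geq 1}$ be i.i.d.\ Exp$(1)$, set $S_j := X_1 + \cdots + X_j$, and realize the Poisson points in $[0,L]$ as $S_1,\ldots,S_{m-1}$ with $m := \min\{j \geq 1 : S_j \geq L\}$. The $m$ pieces then have lengths $X_1, \ldots, X_{m-1}$ and the terminal length $L - S_{m-1}$, only the last of which is not a genuine Exp$(1)$ variable. Since $m - 1$ is Poisson$(L)$-distributed, a Chernoff bound yields
\begin{equation*}
  \pro(|m - L| > L^\beta/2) \leq 2\exp(-c L^{2\beta - 1}) = O(L^{-\infty}),
\end{equation*}
using $\beta > 1/2$.

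Fixing $(a_L,b_L) \in [0, \log L \cdot \log\log L]^2$, I set $p := e^{-a_L}(1-e^{-b_L})$, $M := \lfloor L \rfloor$ and $Y_j := \car\{X_j \in [a_L, a_L + b_L]\}$; the $Y_j$'s are i.i.d.\ Bernoulli$(p)$. Bernstein's inequality applied to $\sum_{j=1}^M Y_j$ gives, uniformly in $(a_L,b_L)$,
\begin{equation*}
  \pro\!\left(\bigl|\textstyle\sum_{j=1}^M Y_j - Mp\bigr| > L^\beta/2\right) \leq 2\exp\!\left(-\frac{L^{2\beta}/8}{Mp + L^\beta/6}\right) \leq 2\exp(-cL^{2\beta-1}) = O(L^{-\infty}),
\end{equation*}
using $Mp \leq L$ and $2\beta - 1 > 0$. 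Writing $N$ for the number of pieces of length in $[a_L, a_L + b_L]$ and using that each $Y_j \in \{0,1\}$ together with the fact that the terminal piece contributes at most $1$, one has the pointwise bound $|N - \sum_{j=1}^M Y_j| \leq |m - M| + 2$. Intersecting the two high-probability events and using $|Mp - Lp| \leq 1$ then gives $|N - Lp| \leq L^\beta + 4$, which is the desired bound with a universal constant $\kappa$.

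The chief subtlety is that $m$ is not independent of the finite sum $\sum_{j=1}^M Y_j$, but this causes no harm since both quantities are controlled separately by concentration and I only need an almost-sure bound on their difference on the joint event. The uniformity in $(a_L,b_L)$ of the Bernstein step is what allows $\kappa$ to be chosen independently of these parameters, matching the claim; note in particular that the condition $\beta > 2/3$ stated in the proposition is stronger than the $\beta > 1/2$ that my argument actually requires, providing room for any later union-bound refinements.
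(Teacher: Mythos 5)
Your proof is correct, but it follows a genuinely different route from the paper's. The paper first computes $\esp[X_{[0,L]}]$ exactly via the Palm formula (Proposition~\ref{pro:5}), then chops $[0,L]$ into $J\asymp L^\nu$ blocks of length $\asymp L^{1-\nu}$, notes that the rescaled block counts are i.i.d.\ sub-exponential, and applies Bernstein's inequality for sub-exponential variables, the boundary mismatch being absorbed into an error $\leq 2J$; it is the balancing of $\nu$ against $\beta$ in that scheme which produces the restriction $\beta>2/3$. You instead use the one-dimensional structure directly: the piece lengths other than the terminal one are exactly the i.i.d.\ Exp$(1)$ spacings, so the count is (up to a pointwise error $|m-M|+2$) a sum of i.i.d.\ Bernoulli$(p)$ indicators with $p=e^{-a_L}(1-e^{-b_L})$, and classical Bernstein plus Poisson concentration for the number of points finishes the job. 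Your handling of the dependence between the stopping index $m$ and the indicators — comparing with a deterministic-index sum and controlling the two deviations separately — is the right fix and is sound. What your argument buys is elementarity (no Palm formula, no blocking) and a better exponent range ($\beta>1/2$ suffices, as you note). What the paper's heavier scheme buys is reusability: the same Palm-formula-plus-blocking machinery is applied essentially verbatim to the pair and triplet statistics of Propositions~\ref{prop:IntervStatistics2}, \ref{prop:NeighborssStatistics} and \ref{prop:NeighborssStatistics2}, where the counted configurations involve several pieces and gaps and are no longer functions of a single spacing, so your single-indicator trick would need nontrivial adaptation there.
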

\noindent The proof of Proposition~\ref{prop:IntervStatistics} is
given in Appendix~\ref{sec:auxil-results-calc}.\\
We will also use the joint distributions of pairs and triplets of
pieces that are close to each other. We prove
\begin{proposition}
  \label{prop:IntervStatistics2}
  Fix $\beta\in(2/3,1)$. Then, for any $a,b$ positive and $b,d,g,f$
  all non negative, with probability $1-O(L^{-\infty})$, the number of
  pairs of pieces such that
  \begin{itemize}
  \item the length of the left most piece is contained in $[a,a+b]$,
  \item the length of the right most piece is contained in $[c,c+d]$,
  \item the distance between the two pieces belongs to $[g,g+f]$
  \end{itemize}
  is equal to
  \begin{equation}
    \label{eq:97}
    f\,e^{-a-c} (1-e^{-b})(1-e^{-d})\cdot L+R_L\cdot L^\beta
    \quad\text{where}\quad |R_L|\leq
    \kappa
  \end{equation}
  and the positive constant $\kappa$ may depend on $(a,b,c,d,f,g)$.
\end{proposition}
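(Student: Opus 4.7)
The plan is to adapt the two-step strategy underlying Proposition~\ref{prop:IntervStatistics}: first compute the expectation exactly by a Palm-type calculation, then establish concentration via a block decomposition and a Bernstein-type inequality.

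Let $N_L$ denote the counting random variable. Writing $x_1<x_2<\cdots$ for the ordered Poisson points in $[0,L]$, $N_L$ is the sum over $1\leq k<k'\leq m(\omega)$ of the indicator that $x_{k+1}-x_k\in[a,a+b]$, $x_{k'+1}-x_{k'}\in[c,c+d]$ and $x_{k'}-x_{k+1}\in[g,g+f]$. Applying the Mecke formula to the Poisson process of intensity $1$ on $\R$, the joint intensity of the quadruple $(x_k,x_{k+1},x_{k'},x_{k'+1})$ is Lebesgue measure times the probability that no further Poisson points fall in $(x_k,x_{k+1})\cup(x_{k'},x_{k'+1})$; by independent increments this equals $\exp(-(x_{k+1}-x_k))\cdot\exp(-(x_{k'+1}-x_{k'}))$, while the middle interval $(x_{k+1},x_{k'})$ is left unrestricted. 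Integrating under the three length constraints produces an $L$ factor from the translational freedom of $x_k$, factors $e^{-a}(1-e^{-b})$ and $e^{-c}(1-e^{-d})$ from the outer length integrals, and a factor $f$ from integrating the free middle length over $[g,g+f]$, yielding
\begin{equation*}
  \esp[N_L]=f\,e^{-a-c}(1-e^{-b})(1-e^{-d})\cdot L+O(1),
\end{equation*}
the $O(1)$ absorbing boundary effects near $0$ and $L$.

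For concentration, I would first condition on the event from Proposition~\ref{pro:3} that every piece has length at most $\log L\cdot\log\log L$, which holds with probability $1-O(L^{-\infty})$. On this event every counted pair involves pieces lying within a spatial window of size $a+b+c+d+g+f+O(\log L)$, making $N_L$ a local functional of the process. Partition $[0,L]$ into $K\asymp L^{1-\alpha}$ blocks of length $L^\alpha$ with $\alpha\in(1-\beta,\beta/2)$—a non-empty interval precisely because $\beta>2/3$—and decompose $N_L=\sum_j N_L^{(j)}+R_L$, where $N_L^{(j)}$ is the contribution of pairs whose left piece lies in the $j$-th block and $R_L=O(L^{1-\alpha}\log L)\ll L^\beta$ counts the cross-boundary pairs. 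Splitting the block indices into odd and even families produces two collections of independent random variables (blocks in each family are separated by more than $\log L\cdot\log\log L$ and therefore depend on disjoint portions of the Poisson process on the good event). Each $N_L^{(j)}$ has mean and variance of order $L^\alpha$ and a deterministic upper bound $O(L^{2\alpha})$; Bernstein's inequality applied to each family then gives
\begin{equation*}
  \pro\bigl(|N_L-\esp[N_L]|>\kappa L^\beta\bigr)\leq 4\exp(-cL^{\beta-2\alpha})=O(L^{-\infty})
\end{equation*}
for $\kappa$ large, since $\alpha<\beta/2$ forces the exponent to be positive.

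The main obstacle is arranging the block decomposition so that the grouped block counts are genuinely independent: the conditioning on the maximal-length event of Proposition~\ref{pro:3} is precisely what prevents a single piece from reaching across the logarithmic buffer between two blocks of a given family. The exact choice of $\alpha$ must balance two opposing requirements—$R_L\ll L^\beta$ forces $\alpha>1-\beta$, while Bernstein's exponent requires $\alpha<\beta/2$—and the hypothesis $\beta>2/3$ is exactly the condition making the two compatible. The Mecke-formula calculation and the Bernstein step are otherwise routine and closely parallel the proof of Proposition~\ref{prop:IntervStatistics} carried out in Appendix~\ref{sec:auxil-results-calc}.
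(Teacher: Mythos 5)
Your proposal follows essentially the same route as the paper's Appendix~\ref{sec:auxil-results-calc}: the expectation is computed by the Palm/Mecke formula (this is exactly Proposition~\ref{pro:1}), and the concentration comes from cutting $[0,L]$ into $\asymp L^{1-\alpha}$ blocks and applying a Bernstein inequality, with the same exponent bookkeeping (a lower bound $\alpha>1-\beta$ from the boundary terms, an upper bound on $\alpha$ from Bernstein) that is precisely what forces $\beta>2/3$. The one step to repair is your independence claim: conditioning on the event of Proposition~\ref{pro:3} destroys exact independence of the block variables, and that conditioning is in fact unnecessary, since every counted configuration has diameter at most $a+b+c+d+f+g$ (a constant), so the unconditioned odd/even block counts are already independent; the paper sidesteps this differently, using the exactly independent pair counts of the process restricted to each block and controlling the boundary mismatch by the auxiliary piece counts $Y_j,Z_j$ with the constants $K_a,K_c$.
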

\noindent For pairs of pieces, we shall also use
\begin{proposition}
  \label{prop:NeighborssStatistics}
  For $\ell,\ell',d>0$, with probability $1-O(L^{-\infty})$, one
  has
    \begin{equation*}
      \#\left\{
        \begin{aligned}
          \text{pairs of pieces at most at a dis-}\\
          \text{tance $d$ from each other such that}\\
          \text{the left most piece longer than }\ell,\\
          \text{the right most piece longer than }\ell'.
        \end{aligned}
      \right\}\leq (2+d)e^{-\ell-\ell'} L.
    \end{equation*}
\end{proposition}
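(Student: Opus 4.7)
\emph{The plan.} I will enumerate qualifying pairs by the index difference $k = j-i$ of the two pieces involved. For each $k \geq 1$, the pair $(\Delta_i, \Delta_{i+k})$ qualifies iff $|\Delta_i| \geq \ell$, $|\Delta_{i+k}| \geq \ell'$, and the gap between them, $|\Delta_{i+1}| + \cdots + |\Delta_{i+k-1}|$, is at most $d$. Since the piece lengths are i.i.d.\ exponential of parameter $1$ (up to boundary effects), the probability that a fixed pair with index difference $k$ qualifies equals $e^{-\ell - \ell'}\, P(S_{k-1} \leq d)$, where $S_m$ denotes a sum of $m$ i.i.d.\ $\mathrm{Exp}(1)$ variables (with $S_0 = 0$). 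Summing over $k$ and using the identity
\[
\sum_{m\geq 0} P(S_m \leq d) \;=\; 1 + d,
\]
which follows from the fact that $\sum_{m \geq 1} \car[S_m \leq d]$ counts the number of Poisson points in $(0,d]$ and has expectation $d$, I obtain $\bbE[N] = (1+d)\,e^{-\ell-\ell'}\,L$ up to lower-order boundary corrections (handled in the spirit of Proposition~\ref{prop:IntervStatistics}).

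\emph{From expectation to high probability.} To upgrade this into a bound holding with probability $1-O(L^{-\infty})$, I would partition $[0,L]$ into $L/M$ contiguous blocks of length $M := (\log L)^3$, chosen so that $M \gg d + \log L \cdot \log\log L$. By Proposition~\ref{pro:3}, with probability $1-O(L^{-\infty})$ no piece is longer than $\log L \cdot \log\log L$, so only $O(L/M)$ pieces can straddle block boundaries and their total pair-contribution is negligible. The counts restricted to pieces fully contained in disjoint blocks are then independent, each is bounded almost surely by $C M$, and each has expectation of order $M\,e^{-\ell-\ell'}$. A Bernstein- or Chernoff-type large deviation estimate concentrates the sum at deviation scale $o(L\, e^{-\ell-\ell'})$ with probability $1-O(L^{-\infty})$.

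\emph{Slack and main obstacle.} The leading constant obtained from the expectation is $1+d$, strictly smaller than the claimed $2+d$; the gap provides room to absorb (i) the concentration deviation, (ii) boundary contributions at the endpoints $0$ and $L$, and (iii) the correction due to pieces straddling block boundaries. The main obstacle I anticipate is controlling the weak dependence between neighboring blocks cleanly: this is resolved either by strictly telescoping, namely discarding straddling pieces and bounding them separately via Proposition~\ref{pro:3}, or by invoking concentration inequalities for Poisson functionals, using that flipping a single Poisson point modifies at most $O(1+d)$ qualifying pairs with overwhelming probability.
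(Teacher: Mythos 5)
Your proposal is correct in substance and, apart from the concentration scaffolding, takes a genuinely different route from the paper's. The paper disposes of this proposition in one line: it is proved ``identically to Proposition~\ref{prop:IntervStatistics2}, taking $b=d=+\infty$'', i.e.\ the expectation is computed via the Palm formula (Proposition~\ref{pro:1}), giving the main term $d\,e^{-\ell-\ell'}L$ (adjacent pairs are excluded there, since pairs are parametrized by two \emph{distinct} Poisson points), and the almost-certain bound follows from cutting $[0,L]$ into $\asymp L^{\nu}$ blocks of length $\asymp L^{1-\nu}$, controlling pairs that cross block boundaries through separate counts of pieces near the boundaries, and applying Bernstein's inequality for sub-exponential variables. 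You instead compute the expectation by an elementary renewal argument: qualification probability $e^{-\ell-\ell'}P(S_{k-1}\le d)$ for index gap $k$, summed via $\sum_{m\ge0}P(S_m\le d)=1+d$. This is correct, self-contained (no Palm formula), and transparently includes adjacent pairs, whence your constant $1+d$ against the paper's $d$; both sit comfortably below the stated $2+d$. Your concentration step (blocking plus Bernstein, or alternatively a bounded-difference inequality for Poisson functionals) is the same in spirit as the paper's, with polylogarithmic rather than polynomial blocks. Both arguments, yours and the paper's, implicitly treat $\ell,\ell',d$ as fixed relative to $L$, which is how the proposition is used.

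One bookkeeping point should be added to your blocking step. The within-block counts only register pairs whose two pieces lie in the \emph{same} block, so besides pairs involving a boundary-straddling piece you must also control pairs whose two pieces lie in \emph{different} blocks with neither piece straddling; these are not covered by counting straddling pieces. They are harmless: since the inter-piece distance is at most $d\ll M$, such a pair has the left piece ending within distance $d$ to the left of some block boundary and the right piece starting within distance $d$ to its right, and deterministically a window of length $d$ contains the right endpoints of at most $1+d/\ell$ pieces of length $\ge\ell$ (and the left endpoints of at most $1+d/\ell'$ pieces of length $\ge\ell'$). Summing over the $O(L/M)$ boundaries gives a contribution $O\bigl((L/M)(1+d/\ell)(1+d/\ell')\bigr)=o\bigl(L\,e^{-\ell-\ell'}\bigr)$ for fixed parameters; the same window bound also quantifies your claim that straddling pieces contribute negligibly. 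This is exactly the error your slack between $1+d$ and $2+d$ is there to absorb, and it plays the role of the boundary counts $Y_j,Z_j$ in the paper's proof of Proposition~\ref{prop:IntervStatistics2}.
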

\noindent Finally, for triplets of pieces, we shall use
\begin{proposition}
  \label{prop:NeighborssStatistics2}
  For $\ell,\ell',\ell'',d>0$, with probability
  $1-O(L^{-\infty})$, one has
    \begin{equation*}
      \#\left\{(\Delta,\ \Delta',\ \Delta'')\text{ s.t.}
        \left|\begin{aligned}
          &\Delta'\text{ between }\Delta\text{  and }\Delta''\\
          &\text{dist}(\Delta,\Delta')\leq d,\ 
          \text{dist}(\Delta',\Delta'')\leq d\\
          &|\Delta|\leq \ell,\ |\Delta'|\leq \ell',\ |\Delta''|\leq
          \ell''.
        \end{aligned}
      \right.\right\}\leq (2+d^2) e^{-\ell-\ell'-\ell''} L.
    \end{equation*}
\end{proposition}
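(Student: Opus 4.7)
The plan is to mirror the proof of Proposition~\ref{prop:NeighborssStatistics}, the pair analogue, whose proof is deferred to Appendix~\ref{sec:auxil-results-calc}. First I would express the random count as
\[
N_L := \sum_{j_1,j_2,j_3} \mathbf{1}_{A_{j_1,j_2,j_3}}
\]
over ordered triples of piece indices, where $A_{j_1,j_2,j_3}$ is the event that $\Delta_{j_2}$ lies strictly between $\Delta_{j_1}$ and $\Delta_{j_3}$, that $\dist(\Delta_{j_1},\Delta_{j_2})\leq d$ and $\dist(\Delta_{j_2},\Delta_{j_3})\leq d$, and that the three length conditions hold. Here $\dist(\Delta_{j_i},\Delta_{j_{i+1}})$ is the total length of the pieces, if any, strictly between them.

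The heart of the argument is the computation of $\bbE[N_L]$ via the Mecke/Palm formula for the homogeneous Poisson point process of intensity $1$ on $[0,L]$. A configuration contributing to $N_L$ is determined by at most six Poisson points (the endpoints of the three chosen pieces) plus an arbitrary Poisson configuration inside each of the two gap-regions. Introducing piece-length variables $(u_1,u_2,u_3)$ and gap-length variables $(v_1,v_2)$, the Mecke formula together with independence of disjoint Poisson restrictions factorizes the conditional probability into a product of independent exponential densities. Integrating the $u_i$-variables against the length constraints yields a factor $e^{-\ell-\ell'-\ell''}$; integrating the $v_i$-variables over $[0,d]$, and adding in the contribution of triples where consecutive pieces touch (i.e.\ $v_i=0$ giving the $+2$ boundary term), yields a factor at most $d^2+2$; finally the choice of the location of the middle piece contributes the factor $L$. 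Hence
\[
\bbE[N_L]\leq (2+d^2)\,e^{-\ell-\ell'-\ell''}\,L.
\]

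The step I expect to be the main obstacle is upgrading this expectation bound to the claimed bound with probability $1-O(L^{-\infty})$. This is handled exactly as in Appendix~\ref{sec:auxil-results-calc} for the pair statistics. Each indicator $\mathbf{1}_{A_{j_1,j_2,j_3}}$ is local: it depends only on Poisson points in a window of length at most $2d+u_1+u_2+u_3$, which, by Proposition~\ref{pro:3}, is at most $2d+3\log L\cdot\log\log L$ with probability $1-O(L^{-\infty})$. Slicing $[0,L]$ into cells of width of order $\log L\cdot\log\log L$ and discarding the exponentially improbable events on which a piece crosses a cell boundary or exceeds this scale, $N_L$ becomes a sum of (essentially) independent, uniformly bounded contributions. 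A Bernstein/Chernoff inequality then gives super-polynomially small deviations from the mean, producing the claimed $1-O(L^{-\infty})$ probability bound. The only delicate accounting is for the boundary triples that straddle two cells, but these are controlled by the crude count $m(\omega)=L+O(L^\beta)$ recorded after~\eqref{eq:IntervalDivision}.
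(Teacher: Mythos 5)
Your overall strategy is exactly the paper's: write the count as a sum of local indicators over the Poisson configuration, compute its mean by the Palm/Mecke formula (this is Proposition~\ref{pro:6} in Appendix~\ref{sec:auxil-results-calc}, proved like Proposition~\ref{pro:1}), and then upgrade to an almost-deterministic bound by cutting $[0,L]$ into blocks, using independence of the Poisson process on disjoint blocks and a Bernstein inequality for sub-exponential block counts, exactly as in the derivation of Proposition~\ref{prop:IntervStatistics2} (and note that, as in the paper's appendix, the length constraints must be read as ``longer than'' $\ell,\ell',\ell''$, which you implicitly do).

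There are, however, two concrete weak points. First, your bookkeeping of the ``touching'' configurations is wrong: a triple in which, say, $\Delta$ is adjacent to $\Delta'$ while the gap between $\Delta'$ and $\Delta''$ is only constrained to lie in $(0,d]$ contributes, after the Palm computation, a term of order $d\,e^{-\ell-\ell'-\ell''}L$, and there are two such mixed cases; together with the fully adjacent case and the two-gap case the mean is $\asymp(1+d)^2e^{-\ell-\ell'-\ell''}L$, so the factor ``$d^2+2$'' cannot be obtained the way you assert once $d>1/2$. The paper sidesteps this by defining its count $X_{[0,L]}$ as a sum over quadruples $X<Y<W<Z$ of \emph{distinct} Poisson points with $0<Y-X\leq d$ and $0<Z-W\leq d$, i.e.\ strictly positive gaps on both sides, for which $\esp[X_{[0,L]}/L]=d^2e^{-\ell-\ell'-\ell''}(1+O(L^{-1}))$ and the constant $2+d^2$ then comfortably absorbs error terms and fluctuations; you should either do the same or enlarge the constant to cover the $2d$ cross terms. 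Second, the straddling triples at block boundaries cannot be controlled by the crude count $m(\omega)=L+O(L^\beta)$ alone: with $\asymp L/(\log L\cdot\log\log L)$ cells this would only give a bound of order $L$ times powers of logarithms, which need not be small compared with $e^{-\ell-\ell'-\ell''}L$. One must keep the exponential length factors in the boundary counts — this is what the paper does with the auxiliary variables $Y_j$, $Z_j$ in the proof of Proposition~\ref{prop:IntervStatistics2}, to which Bernstein's inequality is applied separately — and the same device works verbatim here.
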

\noindent As a straightforward consequence of
Proposition~\ref{prop:IntervStatistics}, exploiting the
formula~\eqref{eq:17} for the Dirichlet eigenvalues of the Laplacian
on an interval, one obtains the explicit formula~\eqref{eq:6} for the
one-particle integrated density of states for the pieces model defined
in~\eqref{eq:6} (here, $\mu=1$) That is, one proves
\begin{proposition}[The one particle IDS]
  \label{prop:IDSandFermiEnergy}
  The one-particle integrated density of states for the pieces model
  is given by
  \begin{equation}\label{eq:NLifschitzTail}
    N(E) = \frac{\exp(-\ell_E)}{1 - \exp(-\ell_E)}\car_{E>0}
  \end{equation}
  where $\ell_E$ is defined in~\eqref{eq:6}.
\end{proposition}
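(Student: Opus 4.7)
For $E\leq 0$ the operator $H_\omega(L)$ is non-negative, so $N(E)=0$, hence the indicator $\car_{E>0}$ in the formula. Henceforth, fix $E>0$ and set $\ell_E=\pi/\sqrt{E}$.

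On a piece $\Delta_j$, the Dirichlet Laplacian has eigenvalues $(\pi k/|\Delta_j|)^2$, $k\geq 1$, by \eqref{eq:17}. Hence the number of eigenvalues of $-\Delta_{|\Delta_j}^D$ not exceeding $E$ equals $\lfloor|\Delta_j|/\ell_E\rfloor$. Summing over pieces, the eigenvalue counting function for $H_\omega(L)$ below $E$ is
\begin{equation*}
  \fCount_\omega(L,E):=\#\{\text{eigenvalues of } H_\omega(L)\text{ in }(-\infty,E]\}
  =\sum_{j=1}^{m(\omega)}\left\lfloor\frac{|\Delta_j(\omega)|}{\ell_E}\right\rfloor
  =\sum_{k\geq 1}\#\{j:\ |\Delta_j(\omega)|\geq k\ell_E\},
\end{equation*}
where we used $\lfloor x\rfloor=\sum_{k\geq 1}\car_{x\geq k}$.

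The plan is to divide by $L$ and pass to the limit term-by-term. By Proposition~\ref{prop:IntervStatistics} applied with $a_L=k\ell_E$ and $b_L=\log L\cdot\log\log L-k\ell_E$, with probability $1-O(L^{-\infty})$,
\begin{equation*}
  \frac{1}{L}\#\{j:\ |\Delta_j(\omega)|\geq k\ell_E\}
  =e^{-k\ell_E}\bigl(1+o(1)\bigr)+O(L^{\beta-1}),
\end{equation*}
for each fixed $k$, while Proposition~\ref{pro:3} ensures that the sum over $k$ is effectively restricted to $k\leq(\log L\cdot\log\log L)/\ell_E$, which is $O(\log L\cdot\log\log L)$ terms; thus the $O(L^{\beta-1})$ error, summed, is still $o(1)$. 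Taking $L\to\infty$ and interchanging sum and limit (legitimate by dominated convergence, since $e^{-k\ell_E}$ is geometrically summable),
\begin{equation*}
  N(E)=\lim_{L\to\infty}\frac{\fCount_\omega(L,E)}{L}
  =\sum_{k\geq 1}e^{-k\ell_E}=\frac{e^{-\ell_E}}{1-e^{-\ell_E}}.
\end{equation*}

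The only delicate point is to justify the termwise limit for the sum over $k$: the naive bound on the number of pieces of length at least $k\ell_E$ could, in principle, include pathologically long pieces. This is handled by truncating the sum at $k_{\max}\asymp(\log L\cdot\log\log L)/\ell_E$ thanks to Proposition~\ref{pro:3}, uniformly controlling the error from Proposition~\ref{prop:IntervStatistics} on this range, and noting that the tail $\sum_{k>k_{\max}}e^{-k\ell_E}$ is itself negligible; this makes the exchange of sum and limit rigorous and completes the proof.
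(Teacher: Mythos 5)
Your proof is correct and follows essentially the same route as the paper: both arguments reduce $N(E)$ to counting pieces in length classes determined by $\ell_E$, invoke Proposition~\ref{prop:IntervStatistics} together with Proposition~\ref{pro:3} to truncate at lengths of order $\log L\cdot\log\log L$, and sum the resulting geometric series; the only difference is that you perform the summation by parts at the start (writing $\lfloor|\Delta_j|/\ell_E\rfloor$ as a sum of indicators) whereas the paper counts pieces in $[k\ell_E,(k+1)\ell_E)$ weighted by $k$ and rearranges at the end.
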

\noindent Formula~\eqref{eq:NLifschitzTail} was already obtained
in~\cite{PhysRevA.7.701}; in Appendix~\ref{sec:facts-poiss-proc}, we
give a short proof for the readers convenience.\\
Recalling the scaling defined in section~\ref{sec:rescaling}
immediately yields~\eqref{eq:6} for general $\mu$.
\subsection{Free electrons}
\label{sec:free-electrons}
Understanding the system without interactions will be key to answering
the main questions raised in the present work. For free electrons,
i.e., when the interactions are absent, $U \equiv 0$, the energy per
particle $\densEn^0(\rho)$ can be expressed in terms of one-particle
density of states measure.
\subsubsection{The ground state energy per particle}
\label{sec:ground-state-energy-1}
Recall that (see Theorem~\ref{th:EnergyAsymptoticExpansion}), for a
density of particles $\rho$, the \emph{Fermi energy} $E_\rho$ is a
solution of the equation $N(E_\rho) = \rho$. In the present case, as
$N$ is continuous and strictly increasing from $0$ to $+\infty$, the
solution to this equation is unique for any $\rho>0$. The length of
the interval whose Dirichlet Laplacian has the Fermi energy $E_\rho$
as ground state energy is the Fermi length $\ell_\rho$ given by
\begin{equation}
  \label{eq:86}
 \ell_\rho:=\pi/\sqrt{E_\rho} 
\end{equation}
As a direct corollary to~\eqref{eq:6} (recall that $\mu=1$) or
equivalently Proposition~\ref{prop:IDSandFermiEnergy}, we see that the
Fermi energy is given by
\begin{equation}
  \label{eq:FermiEnergyExpression}
  E_\rho = \pi^2 \left(\log(\rho^{-1} + 1)\right)^{-2}
  \sim \pi^2 |\log{\rho}|^{-2}\quad\text{when}\quad \rho \to 0
\end{equation}
and the Fermi length by:
\begin{equation}
  \label{eq:ellRho}
  \ell_\rho = \log\left(\rho^{-1} + 1\right) \sim 
  \left|\log{\rho}\right|\quad\text{when}\quad \rho \to 0 \text{.}
\end{equation}
We recall
\begin{proposition}[{\cite[Theorem~5.13 and Lemma~5.14]{MR3022666}}]
  \label{pro:2}
  Let $E_{n,\omega}^\Lambda$ denote the $n$-th energy level of
  $H_\omega(L)$ (counting multiplicity). Then, $\omega$-a.s., one has
  \begin{equation}
    \label{eq:convEnFerm}
    E_{n,\omega}^\Lambda\vers{\substack{L\to\infty\\n/L\to\rho}}E_\rho 
    \quad\text{and}\quad
    \densEn^0(\rho) = \frac{1}{\rho} \int_{-\infty}^{E_\rho}\, E\,
    {\rmd}N(E) \text{.} 
  \end{equation}
\end{proposition}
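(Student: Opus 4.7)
The plan is to specialize the abstract thermodynamic arguments of~\cite{MR3022666} to the pieces model, where the explicit eigenvalue formula~\eqref{eq:17} and the piece statistics of Proposition~\ref{prop:IntervStatistics} provide all the information needed directly. The two assertions of the proposition amount respectively to the almost sure convergence of the empirical quantile $E_{n,\omega}^\Lambda$ and of the empirical first moment of the eigenvalue counting measure, so both will follow from a single, reasonably quantitative convergence of the empirical integrated density of states to $N$.

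Concretely, I would first establish that, $\omega$-a.s., the empirical integrated density of states
\[
  N_\omega^\Lambda(E) := \frac{1}{L}\,\#\bigl\{j \geq 1 : E_{j,\omega}^\Lambda \leq E\bigr\}
\]
converges to the function $N(E)$ given in~\eqref{eq:NLifschitzTail}, uniformly on compact subsets of $(0,+\infty)$. By~\eqref{eq:17}, one has $E_{(\Delta_j,k)} \leq E$ iff $|\Delta_j| \geq k\,\ell_E$, hence
\[
  L\,N_\omega^\Lambda(E) = \sum_{k\geq 1}\#\bigl\{j : |\Delta_j(\omega)| \geq k\,\ell_E\bigr\}.
\]
By Proposition~\ref{pro:3}, with probability $1-O(L^{-\infty})$ the sum may be truncated at $k \leq k_L := \lfloor \log L\cdot\log\log L / \ell_E\rfloor$. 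Applying Proposition~\ref{prop:IntervStatistics} term by term (either with $a_L = k\ell_E$ and $b_L$ large, or by summing dyadic slices in length) gives, for each such $k$, $L^{-1}\#\{|\Delta_j|\geq k\ell_E\} = e^{-k\ell_E} + O(L^{\beta-1})$. Summing the resulting geometric series,
\[
  N_\omega^\Lambda(E) \longrightarrow \sum_{k\geq 1} e^{-k\ell_E} = \frac{e^{-\ell_E}}{1-e^{-\ell_E}} = N(E).
\]
A Borel--Cantelli argument over a countable dense set of energies, combined with monotonicity of $N_\omega^\Lambda$ and continuity of $N$, upgrades this to $\omega$-a.s.\ uniform convergence on compact subsets of $(0,+\infty)$.

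From this convergence, both claims follow. Since $N$ is continuous and strictly increasing on $(0,+\infty)$, its inverse is continuous there; the relation $N_\omega^\Lambda(E_{n,\omega}^\Lambda) = n/L + O(1/L)$ together with $n/L\to\rho = N(E_\rho)$ yields $E_{n,\omega}^\Lambda \to E_\rho$ almost surely. For the integral formula, write
\[
  \frac{E^0_\omega(\Lambda,n)}{n} = \frac{L}{n}\int_{0}^{E_{n,\omega}^\Lambda} E\, dN_\omega^\Lambda(E);
\]
since $L/n\to 1/\rho$, the upper limit converges to $E_\rho$, and the integrand is bounded by $E_{n,\omega}^\Lambda$ which stays bounded, so the weak convergence of $dN_\omega^\Lambda$ to $dN$ on $(0, E_\rho+1]$ delivers $\densEn^0(\rho) = \rho^{-1}\int_{-\infty}^{E_\rho} E\,dN(E)$. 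The main obstacle is to make the $O(L^\beta)$ remainder in Proposition~\ref{prop:IntervStatistics} cooperate after summation over the $\sim \log L\cdot\log\log L$ values of $k$ entering the geometric sum; this is exactly what the uniformity of the constant $\kappa$ in $(a_L,b_L)$ asserted in Proposition~\ref{prop:IntervStatistics} is designed to furnish, and the total error remains $o(1)$ as $L\to\infty$.
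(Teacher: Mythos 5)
Your proposal is correct and follows essentially the same route the paper indicates: the uniform convergence of the empirical integrated density of states you derive from the explicit eigenvalue formula~\eqref{eq:17}, Proposition~\ref{pro:3} and Proposition~\ref{prop:IntervStatistics} is exactly the estimate~\eqref{eq:densEtatsPoissonModel1} proved in Appendix~\ref{sec:auxil-results-calc}, and the paper then obtains the convergence of $E_{n,\omega}^\Lambda$ via Lemma~\ref{lem:FermiEnergyConvergenceRate} and the energy formula via~\eqref{eq:68}--\eqref{eq:70}, just as you do. The only cosmetic difference is that the paper states these steps quantitatively (with rates) and quotes the almost sure statement from~\cite{MR3022666}, while you phrase the last step via a Borel--Cantelli argument, which is the standard way to pass from the $1-O(L^{-\infty})$ probabilities to the $\omega$-a.s.\ limit.
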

\noindent Proposition~\ref{pro:2} follows easily from
Lemma~\ref{lem:FermiEnergyConvergenceRate},~\eqref{eq:68},~\eqref{eq:70}
and~\eqref{eq:densEtatsPoissonModel1}.\\
We see that
\begin{itemize}
\item the highest energy level occupied by a system of non interacting
  electrons tends to the Fermi energy in the thermodynamic limit;
\item the $n$-electron ground state energy per particle is the energy
  averaged with respect to the density of states measure of the
  one-particle system conditioned on energies less than the Fermi
  energy.
\end{itemize}
Combining formulas~\eqref{eq:FermiEnergyExpression}
and~\eqref{eq:convEnFerm}, one can expand $\densEn^0(\rho)$ into
inverse powers of $\log{\rho}$ up to an arbitrary order. Taking the
scaling defined in section~\ref{sec:rescaling} into
account,~\eqref{eq:convEnFerm} immediately implies~\eqref{eq:5}.
\subsubsection{The eigenfunctions}
\label{sec:eigenfucntions}
Let us now describe the eigenfunctions of $H_\omega^0(L, n)$.  Let us
recall that $(E_p)_{p\in\mathcal{P}}$ are the eigenvalues of the
one-particle operator $H_\omega(L)$ and $(\psi_p)_{p\in\mathcal{P}}$
are the corresponding normalized eigenfunctions; here $p$ in $\calP$
is a ( piece - energy level ) index.  The $n$-electron eigenstates
without interactions are given by the following procedure. Pick a set
$\alpha:=\{\alpha_1,\dots,\alpha_n\} \subset \calP$ of $n$ indices,
$\card{\alpha} = |\alpha| = n$.  The normalized eigenstate associated
to $\alpha$ is given by the Slater determinant
\begin{equation}
  \label{eq:OmegaAlphaConstruction}
  \Psi_\alpha(x^1, x^2, \cdots, x^n) :=
  \psi_{\alpha_1}\wedge\cdots\wedge\psi_{\alpha_n}:= \frac{1}{\sqrt{n!}} 
  \det\left(\psi_p(x^j)\right)_{\substack{p \in \alpha\\1\leq j \leq n}}.
\end{equation}
One easily checks that $\D
(H_\omega^0(\Lambda,n)-E_\alpha)\Psi_\alpha=0$ for the energy
$E_\alpha$ defined by
\begin{equation}
  \label{eq:zetaAlphaConstruction}
E_\alpha = \sum_{p \in \alpha} E_p.
\end{equation}
The subset $\alpha$ indicates which one-particle energy levels are
occupied in the multi-particle state $\Psi_\alpha$. For instance, in
the ground state of $n$ electrons, one chooses the states with lowest
possible energy.
\begin{notation}
  \label{not:Particle}
  For a Slater determinant $\Psi_\alpha$
  (see~\ref{eq:OmegaAlphaConstruction})) and $p\in\alpha$, we will
  refer to the one-particle functions $\psi_p$ as \emph{particles}
  that constitute the $n$-electron state indexed by $\alpha$.
  Moreover, with a slight abuse of terminology, we will refer to an
  multi-index $\alpha$ as a ($n$-electron) state and to $p$ in $\alpha$
  as a particle.
\end{notation}
%


\section{The asymptotics for the ground state energy per particle}
\label{sec:main-results-proofs-1}
In this section, we prove Theorem~\ref{th:EnergyAsymptoticExpansion}
on the asymptotic expansion of the ground state energy per particle in
terms of small particle density. We assume that the pair interaction
potential $U$ satisfies condition \textbf{(HU)}.
\subsection{Decomposition by occupation numbers}
\label{sec:decomp-occup}
We give a definition of the number of particles occupying a given
piece. Therefore, we shall use the special structure of the
Hamiltonian $H^0_\omega(\Lambda,n)$, that is, that of
$H_\omega(L)$ (see~\eqref{eq:Hn0Definition} and~\eqref{eq:2}).\\
Fix $\omega$. Recall that $(\Delta_j(\omega))_{1\leq j\leq m})$ are
the pieces defined in~\eqref{eq:IntervalDivision} ($m=m(\omega)$). The
one particle space is then decomposed into
\begin{equation}
  \label{eq:18}
  L^2(\Lambda)=L^2([0,L])=
  \overset{\perp}{\bigoplus_{1\leq j\leq m}}L^2(\Delta_j(\omega)).
\end{equation}
Thus, for the $n$-particle space $\frH^n$ (see~\eqref{eq:19}), we
obtain the decomposition
\begin{equation}
  \label{eq:frHfrHQdecomposition}
  \frH^n=\frH^n(\Lambda)=\bigwedge_{j = 1}^n L^2(\Lambda)= 
  \bigoplus_{\substack{Q=(Q_1,\cdots,Q_m)\in
      \bbN^m\\ Q_1+\cdots Q_m=n}} \frH_Q
\end{equation}
where we have defined
\begin{definition}
  \label{def:Occupation}
  For $Q=(Q_1,\cdots,Q_m)\in\bbN^m$ s.t. $Q_1+\cdots Q_m=n$, the space
  of states of fixed occupation $Q$ denoted by $\frH_Q$ is given by
  \begin{equation}
    \label{eq:28}
    \frH_Q=\bigwedge_{j=1}^m\left(\bigwedge_{k=1}^{Q_j}
      L^2(\Delta_j(\omega))\right).
  \end{equation}
  Here, as usual, we set $\bigwedge_{k=1}^0 L^2(\Delta_j(\omega))=\C$.
\end{definition}
\noindent An occupation $Q$ is a multi-index of length $m$ and of
``modulus'' $n$. Note that, as
$\Delta_j(\omega)\cap\Delta_{j'}(\omega)=\emptyset$ for $j\not=j'$, we
can identify
\begin{equation*}
  \frH_Q=\bigotimes_{j=1}^m\left(\bigwedge_{k=1}^{Q_j}
    L^2(\Delta_j(\omega))\right).
\end{equation*}
\begin{Rem}
  \label{rem:1}
  The spaces of fixed occupation could also be defined starting from
  the eigenstates of $H^0_\omega(L,n)$ as
  in~\cite{Veniaminov_PhDthesis}. Indeed, each of the eigenstates of
  $H^0_\omega(L,n)$, the non interacting Hamiltonian, belongs to a
  state of fixed occupation. More precisely, if $\Psi_\alpha \in
  \frH^n$ is the eigenstate of $H^0_\omega(L,n)$ given
  by~\eqref{eq:OmegaAlphaConstruction} where $\alpha \subset\calP$,
  $\card{\alpha} = n$, then, defining the occupation $\D Q(\alpha) =
  (Q_1(\alpha),\cdots,Q_m(\alpha))$ where, for $1\leq j\leq m$,
  $Q_j(\alpha):=\#\left\{p\in\alpha|\ \supp{\psi_p}=\Delta_j\right\}$,
  we see that $\Psi_\alpha\in\frH_Q$.
\end{Rem}
\noindent The following lemma is crucial in our analysis as it gives
global information on the structure of the ground state of the
Hamiltonian $H^U_\omega(L,n)=H^0_\omega(L,n)+W_n$. We prove
\begin{Le}
  \label{lem:occupationDecomposition}
  Let $\omega$ be fixed and let $\alpha$ and $\beta$ be two
  $n$-electron indices corresponding each to an eigenstate of
  $H^0_\omega(L,n)$.\\
  If their occupations are different, then the corresponding
  $n$-particle states do not interact:
  \begin{equation*}
    Q(\alpha) \ne Q(\beta) \Rightarrow
    \langle\Psi_\alpha,W_n\Psi_\beta\rangle = 0 .
  \end{equation*}
\end{Le}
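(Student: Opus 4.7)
The cleanest approach is to recognise the lemma as essentially the assertion that $W_n$ preserves the orthogonal decomposition \eqref{eq:frHfrHQdecomposition} of $\frH^n$ by occupation numbers. Once this invariance is in hand, the conclusion follows at once by combining it with the fact noted in Remark~\ref{rem:1} that each Slater determinant $\Psi_\alpha$ lies in the single sector $\frH_{Q(\alpha)}$, together with the built-in orthogonality $\frH_Q\perp\frH_{Q'}$ whenever $Q\ne Q'$. I would therefore organise the proof in three short steps: (i) $\Psi_\alpha\in\frH_{Q(\alpha)}$; (ii) $W_n\,\frH_Q\subset\frH_Q$ for every admissible $Q$; (iii) conclude by orthogonality.

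Step (i) is immediate from the one-particle analysis of section~\ref{sec:analys-one-part}: $\omega$-almost surely, every single-particle eigenfunction $\psi_{\alpha_i}$ is supported on exactly one piece $\Delta_{j(\alpha_i)}(\omega)$. The wedge product in \eqref{eq:OmegaAlphaConstruction} is then supported on those configurations $(x^1,\dots,x^n)\in\Lambda^n$ in which, for each $j$, exactly $Q_j(\alpha)$ of the coordinates lie in $\Delta_j(\omega)$, which is precisely the support condition defining $\frH_{Q(\alpha)}$ inside $\frH^n$.

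Step (ii) is the main step but is also straightforward. View $W_n$ as multiplication by the function $(x^1,\dots,x^n)\mapsto\sum_{i<j}U(x^i-x^j)$ on $\Lambda^n$. Two elementary observations suffice. First, this multiplier is symmetric under permutations of the $n$ variables, so multiplying by it preserves total antisymmetry and hence maps $\frH^n$ into itself. Second, as a multiplication operator it cannot enlarge the essential support of a wave function; but $\frH_Q$ is characterised (up to null sets) precisely by a support condition on $(x^1,\dots,x^n)$, namely that exactly $Q_j$ of the coordinates lie in $\Delta_j(\omega)$. Combining these, $W_n\Psi\in\frH_Q$ whenever $\Psi\in\frH_Q$.

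Putting the pieces together: if $Q(\alpha)\ne Q(\beta)$ then $W_n\Psi_\beta\in\frH_{Q(\beta)}$ by (i)+(ii), while $\Psi_\alpha\in\frH_{Q(\alpha)}$, and these two sectors are orthogonal in the decomposition \eqref{eq:frHfrHQdecomposition}. I do not anticipate a substantial obstacle; the only technical point requiring any care is that $U$ may possess local singularities, so one should check that $\langle \Psi_\alpha,W_n\Psi_\beta\rangle$ is well-defined in the first place and that the pointwise manipulations (multiplication, restriction of support) are justified — but this is exactly what the assumption $U\in L^p(\R)$ for some $p>1$ in \textbf{(HU)}, together with boundedness of the sinusoidal eigenfunctions $\psi_p$, secures.
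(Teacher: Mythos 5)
Your proposal is correct and rests on the same mechanism as the paper's proof: $W_n$ is a multiplication operator, the occupation sectors are characterised by support conditions, and states with different occupations have supports intersecting in a set of measure zero, so the pairing vanishes. The paper simply phrases this directly as $\meas\left(\supp\Psi_\alpha\cap\supp\Psi_\beta\right)=0$ hence $\int_{\Lambda^n}W_n\Psi_\alpha\Psi_\beta^\ast=0$, which also sidesteps the (minor) issue you flag of whether $W_n\Psi_\beta$ itself lies in $L^2$ when $U\in L^p$ with $p<2$.
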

\begin{proof}
  If $\alpha$ and $\beta$ have different occupation numbers, the
  supports of $\Psi_\alpha$ and $\Psi_\beta$ in $\Lambda^n$ intersect
  at a set of measure zero: indeed, these supports are obtained by
  symmetrizing different collections of products of pieces (with
  repetitions for the pieces that are occupied more than once):
  \begin{equation*}
    Q(\alpha) \ne Q(\beta) \quad \Rightarrow \quad
    \meas\left(\supp{\Psi_\alpha} \cap \supp{\Psi_\beta}\right)=0.
  \end{equation*}
  The latter means that $\Psi_\alpha \cdot\Psi_\beta\equiv0$ as a
  function in $L^2\left(\Lambda^n\right)$.  Then, clearly, by
  definition, for the matrix elements, one obtains
  \begin{equation*}
    \langle\Psi_\alpha,W_n\Psi_\beta\rangle = \int_{\Lambda^n} W_n(\bfx)
    \Psi_\alpha(\bfx) 
    \Psi_\beta^\ast(\bfx) \rmd{\bfx} = 0.
  \end{equation*}
  Lemma~\ref{lem:occupationDecomposition} is proved.
\end{proof}
\noindent As an immediate corollary to
Lemma~\ref{lem:occupationDecomposition}, we obtain
\begin{Cor}[Decomposition by occupation]
  \label{cor:OccupationSubspaces}
  Fix $\omega$. For any $Q\in\N^m$ (here and in the sequel,
  $\N=\{0,1,\cdots\}$), $m=m(\omega)$, the subspace $\frH_Q$ are
  invariant under the action of the $n$-particle Hamiltonian
  $H^U_\omega(L,n)=H^0_\omega(L,n)+W_n$, i.e.,
  \begin{equation}
    \label{eq:frHQinvar}
    (H^U_\omega(L,n)+i)^{-1}\frH_Q \subset \frH_Q.
  \end{equation}
  Thus, the total Hamiltonian $H^U_\omega(L,n)$ is decomposed
  according to~\eqref{eq:frHfrHQdecomposition} in direct sum of its
  parts $H_Q$ on subspaces of fixed occupation, i.e.,
  \begin{equation}
    \label{eq:HQdecomposition}
    H^U_\omega(L,n) = 
    \bigoplus_{\substack{Q \in \bbN^m\\ Q_1+\cdots+Q_m = n}} H_{Q},
  \end{equation}
  where $H_Q = \left.H^U_\omega(L,n)\right|_{\frH_Q}$.
\end{Cor}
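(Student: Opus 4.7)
The plan is to reduce the corollary to Lemma~\ref{lem:occupationDecomposition} by three observations, then deal with the domain technicalities separately. First, I would note that $\frH_Q$ is exactly the closed linear span of those Slater determinants $\Psi_\alpha$ (constructed in~\eqref{eq:OmegaAlphaConstruction} from eigenvectors of $H_\omega(L)$) whose occupation is $Q(\alpha)=Q$. This is because an eigenfunction of the one particle Luttinger-Sy operator is supported on a single piece $\Delta_j$, so varying $\alpha$ over indices with fixed occupation $Q$ produces, after antisymmetrization, an orthonormal basis of $\bigotimes_{j=1}^m \bigwedge_{k=1}^{Q_j} L^2(\Delta_j)$, which is exactly $\frH_Q$ (after the obvious tensor/antisymmetric identification). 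Together with the pieces decomposition~\eqref{eq:18} of $L^2(\Lambda)$, this gives the orthogonal direct sum~\eqref{eq:frHfrHQdecomposition}.

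Next, since the $\Psi_\alpha$ with $Q(\alpha)=Q$ form an orthonormal basis of $\frH_Q$ consisting of eigenfunctions of $H^0_\omega(L,n)$, the free operator $H^0_\omega(L,n)$ clearly leaves each $\frH_Q$ invariant (as an operator and as a form). For the interaction, Lemma~\ref{lem:occupationDecomposition} says $\langle\Psi_\alpha,W_n\Psi_\beta\rangle=0$ whenever $Q(\alpha)\ne Q(\beta)$; by bilinearity this extends from basis vectors to all pairs in $\mathcal{D}^n(\Lambda)\cap\frH_Q$ and $\mathcal{D}^n(\Lambda)\cap\frH_{Q'}$. Actually the stronger statement $\Psi_\alpha\cdot\Psi_\beta\equiv 0$ pointwise a.e.\ from the proof of the lemma is what one really uses: if $\Psi\in\frH_Q$ and $\Phi\in\frH_{Q'}$ with $Q\ne Q'$, their supports in $\Lambda^n$ intersect in a set of measure zero, hence $\Psi\cdot\overline{\Phi}=0$ a.e., which makes $\langle W_n\Psi,\Phi\rangle$ vanish as soon as $W_n\Psi\cdot\overline{\Phi}$ is integrable.

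Combining the two preceding points, both quadratic forms $\langle H^0_\omega(L,n)\cdot,\cdot\rangle$ and $\langle W_n\cdot,\cdot\rangle$ decompose orthogonally along $\frH^n=\bigoplus_Q\frH_Q$. Since assumption \textbf{(HU)} makes $W_n$ form bounded with respect to $H^0_\omega(L,n)$ with relative bound $0$, the form sum defining $H^U_\omega(L,n)$ is the direct sum of its restrictions to the subspaces $\frH_Q\cap\frH_\infty^n(\Lambda)$. The Friedrichs extension of an orthogonal direct sum of closed semibounded forms is the direct sum of the Friedrichs extensions; in particular the resolvent acts block diagonally, giving~\eqref{eq:frHQinvar} and the decomposition~\eqref{eq:HQdecomposition}.

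The main obstacle I expect is purely bookkeeping: checking that the form core $\mathcal{D}^n(\Lambda)$ from~\eqref{eq:29} decomposes as an algebraic direct sum $\bigoplus_Q(\mathcal{D}^n(\Lambda)\cap\frH_Q)$ that is dense in each $\frH_Q\cap\frH_\infty^n(\Lambda)$ for the form norm of $H^0_\omega(L,n)+W_n$, so that the form decomposition is genuine and the standard direct sum theorem for Friedrichs extensions applies. This uses that the product structure of $\mathcal{D}^n(\Lambda)$ with respect to the pieces exactly mirrors the tensor structure of $\frH_Q$, and is routine once the pieces are fixed.
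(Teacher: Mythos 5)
Your proposal is correct and follows essentially the same route as the paper: the paper also reduces everything to Lemma~\ref{lem:occupationDecomposition} (the measure-zero overlap of supports for different occupations) together with the fact that $\mathcal{D}^n(\Lambda)$ is a (form) core, observing that $H^U_\omega(L,n)$ maps $\frH_Q\cap\mathcal{D}^n(\Lambda)$ into $\frH_Q$ and concluding the block decomposition~\eqref{eq:HQdecomposition}. The only difference is packaging: you carry out the argument at the level of quadratic forms (orthogonal decomposition of the form plus the direct-sum property of the associated self-adjoint operators), which is a legitimate and slightly more explicit implementation of the same idea, including the routine density check you flag at the end.
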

\begin{remark}
  All terms of this decomposition as well as the number of pieces $m$
  depend on the randomness $\omega$, i.e., the configuration of
  pieces.
\end{remark}
\begin{proof}[Proof of
  Corollary~\textup{\ref{cor:OccupationSubspaces}}]
  Fix $\omega$. The space
  \begin{equation*}
    \mathcal{D}^n_\omega:=\Coi\left(\left(\bigcup_{1\leq j\leq
          m}\overset{\circ}{\Delta}_j(\omega)\right)^n\right)\bigcap\frH^n
  \end{equation*}
  is a core for $H^U_\omega(L,n)$. Here, $\overset{\circ}{\Delta}_j(\omega)$
  denotes the interior of $\Delta_j(\omega)$.\\
  It, thus, suffices to check that, for $\D
  H^U_\omega(L,n)\left(\frH_Q\cap\mathcal{D}^n_\omega\right)\subset\frH_Q$;
  this follows immediately from
  Lemma~\ref{lem:occupationDecomposition}. This ensures the existence
  of the decomposition~\eqref{eq:HQdecomposition} and completes the
  proof of Corollary~\ref{cor:OccupationSubspaces}.
\end{proof}
\noindent Corollary~\ref{cor:OccupationSubspaces} states that the
interaction operator $W_n$ is partially diagonalized in the basis of
eigenfunctions of $H^0_\omega(L,n)$, i.e., its matrix representation
has a block structure corresponding to the subspaces of constant
occupation.
\subsection{Almost sure non-degeneracy of the interacting ground
  state}
\label{sec:almost-sure-non}
We first restrict ourselves to spaces with fixed occupation to prove
\begin{Le}
  \label{le:2}
  Fix an occupation $Q$. The ground state of $
  \left.\left(H^U_\omega(L,n)\right)\right|_{\frH_Q}$ is
  non-degenerate.
\end{Le}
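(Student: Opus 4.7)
The strategy is to unfold the antisymmetry via a Weyl chamber representation, turning the restriction of $H^U_\omega(L,n)$ to $\frH_Q$ into a standard Dirichlet Schrödinger operator $-\Delta + W_n$ on a connected open subset of $\R^n$, and then invoke the Perron--Frobenius theorem in the form that applies to positivity-improving semigroups. The key observation underlying this reduction is that on a subspace of fixed occupation the particles are effectively distinguishable \emph{between} pieces (whose supports are disjoint) and indistinguishable only \emph{within} each piece, so the relevant antisymmetry is block by block.

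Concretely, write $\overset{\circ}{\Delta}_j$ for the interior of $\Delta_j(\omega)$ and introduce the Weyl chambers
\begin{equation*}
  \mathcal{W}_j := \left\{(y_1,\ldots,y_{Q_j})\in \overset{\circ}{\Delta}_j^{Q_j} : y_1<\cdots<y_{Q_j}\right\},\qquad \mathcal{W}:=\prod_{j=1}^m\mathcal{W}_j\subset \R^n.
\end{equation*}
Each $\mathcal{W}_j$ is convex (hence connected) and open, so $\mathcal{W}$ is a connected open subset of $\R^n$. The map $\Psi\mapsto \sqrt{Q_1!\cdots Q_m!}\,\Psi|_{\mathcal{W}}$ provides a unitary isomorphism $\frH_Q \cong L^2(\mathcal{W})$, the inverse being antisymmetrisation of a given function on $\mathcal{W}$ within each block of $Q_j$ variables. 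Under this isomorphism, $H_Q$ transforms into the Friedrichs extension of $-\Delta + W_n$ on $L^2(\mathcal{W})$ with Dirichlet conditions on the whole of $\partial \mathcal{W}$: the conditions on $\partial \overset{\circ}{\Delta}_j$ descend from the original pieces Hamiltonian, while those on the diagonals $\{y_k^{(j)}=y_{k+1}^{(j)}\}$ are forced by antisymmetry.

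Once this reduction is performed, the non-degeneracy of the ground state follows from the positivity-improving character of the semigroup $e^{-tH_Q}$, itself a consequence of the Feynman--Kac representation
\begin{equation*}
  (e^{-tH_Q}\psi)(X) = \bbE_X\left[\car_{\{\tau_{\partial\mathcal{W}}>t\}}\,\exp\left(-\int_0^t W_n(B_s)\,\rmd s\right)\psi(B_t)\right].
\end{equation*}
Indeed, connectedness of $\mathcal{W}$ gives strict positivity of the Dirichlet heat kernel on $\mathcal{W}\times\mathcal{W}$, while the fact that $W_n\geq 0$ is locally integrable (form-bounded with relative bound $0$ by \textbf{(HU)}) keeps the exponential weight almost surely finite and positive. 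The Perron--Frobenius theorem for positivity-improving semigroups (as in Reed--Simon, Theorem XIII.44) then yields simplicity of the lowest eigenvalue of $H_Q$.

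The main technical point I expect to have to check carefully is that the form core $\mathcal{D}^n(\Lambda)\cap\frH_Q$ of $H_Q$ maps precisely onto a form core of the Dirichlet realization of $-\Delta + W_n$ on $L^2(\mathcal{W})$. This reduces to showing that a smooth, compactly supported, antisymmetric function of $\mathcal{D}^n(\Lambda)\cap\frH_Q$ vanishes (together with enough derivatives) on the diagonals so that its restriction to $\mathcal{W}$ lies in $H^1_0(\mathcal{W})$; this is a routine consequence of antisymmetrising $\Coi$ functions defined away from $\partial\overset{\circ}{\Delta}_j$, and is the only place where care is needed in the otherwise standard argument.
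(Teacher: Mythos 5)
Your proposal is correct and follows essentially the same route as the paper: the paper likewise de-symmetrizes $\frH_Q$ onto the ordered fundamental domain $\delta_Q$ (which coincides with your product of Weyl chambers $\mathcal{W}$), identifies $H_Q$ with the Dirichlet realization of $-\Delta+W_n$ on that connected piecewise-linear domain, and concludes simplicity of the ground state by citing Davies and Reed--Simon, i.e.\ exactly the positivity-improving semigroup / Perron--Frobenius mechanism you spell out via Feynman--Kac.
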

\begin{proof}
  To simplify notations, let us write $H=H^U_\omega(L,n)$ and
  $H^0=H^0_\omega(L,n)$. Let $(\Delta_{j_p})_{1\leq p\leq n}$ be the
  pieces such that $Q_{j_p}\geq1$; in the list $(\Delta_{j_p})_{1\leq
    p\leq n}$, each piece $\Delta_{j_p}$ is repeated $Q_{j_p}$
  times. We enumerate the pieces so that their left endpoints are non
  decreasing (i.e., from the leftmost piece to the rightmost
  piece). So, $p\mapsto j_p$ is non decreasing. Then, the operator
  $H^0_Q$ is the Dirichlet Laplacian on a space of anti-symmetric
  functions defined on the symmetrized domain
  \begin{equation}
    \label{eq:SymDomain}
    \Delta_Q = \Sym\left(\bigtimes_{p=1}^n \Delta_{j_p}\right) 
    := \bigcup_{\sigma\in\frS_n}\bigtimes_{p=1}^n\Delta_{\sigma(j_p)}.
  \end{equation}
  Anti-symmetric functions on the domain (\ref{eq:SymDomain}) that
  vanish on the boundary $\partial(\Delta_Q)$ are in one-to-one
  correspondence with functions defined on the domain
  \begin{equation}
    \label{eq:DesymDomain}
    \delta_Q=\left\{(x^1,\hdots, x^n)\text{
        s.t. }x^p\in\Delta_{j_p}\text{ and }
      x^p\leq x^q \text{ for }p<q\right\}
  \end{equation}
  that vanish on $\partial(\delta_Q)$, the boundary of
  $\delta_Q$. Actually,
  \begin{equation*}
    \Delta_Q=\bigcup_{\sigma\in\frS_n}\sigma(\delta_Q)\quad\text{and,
      for }(\sigma,\sigma')\in\frS^2_n,
    \quad\sigma(\delta_Q)\cap\sigma'(\delta_Q)=\emptyset\text{ if
    }\sigma\not=\sigma'. 
  \end{equation*}
  Here, for $\sigma\in\frS_n$, we have set $\sigma:\
  (x^1,\cdots,x^n)\mapsto(x^{\sigma(1)},\cdots,x^{\sigma(n)})$.\\
  Thus, finding the ground state of $H_Q=H^0+W$ is equivalent to
  finding the ground state of the Schr{\"o}dinger operator $-\Delta+W$
  with Dirichlet boundary conditions on the domain $\delta_Q$. As the
  domain $\delta_Q$ is connected and has a piecewise linear boundary,
  the ground state of $-\Delta+W$ is non-degenerate
  (see~\cite[Theorems 1.4.3, 1.8.2 and 3.3.5]{MR1103113}
  and~\cite[Section XIII.12]{ReedSimonIV}). This completes the proof
  of Lemma~\ref{le:2}.
\end{proof}
\subsection{The proof of Theorem~\ref{th:asNonDegOfGroundState}}
\label{sec:proof-theorem-1}
Considering the decomposition (\ref{eq:HQdecomposition}),
Lemma~\ref{le:2} implies that the only possible source of degeneracy
of the ground state is that different occupations, i.e., distributions
of particles in the pieces, provide the same ground state
energy. Let us show that, almost surely, this does not happen.\\
Let $\Pi$ be the support of $d\mu(\omega)$, the Poisson process of
intensity $1$ on $\R_+$.  Let $\#(\Pi\cap[0,L])$ be the number of
points the Poisson process puts into $(0,L)$. Suppose now that the
probability that the ground state of $H^U_\omega(L,n)$ is degenerate
is positive.\\
Thus, for some $m\geq0$, conditioned on the fact that the Poisson
process puts $m$ points into $(0,L)$ (i.e., $\#(\Pi\cap[0,L])=m$), the
probability that the ground state of $H^U_\omega(L,n)$ be degenerate
is positive. Let $(\ell_j)_{j}$ be the lengths of the pieces
$(\Delta_j(\omega))_j$, i.e., the $(\Delta_j)_j$ are connected and
$\cup_j\Delta_j(\omega)=(0,L)\setminus(\Pi\cap[0,L])$. Conditioned
$\#(\Pi\cap[0,L])=m$, the joint distribution of the vector
$(\ell_j)_{j}$ is known.
\begin{proposition}[\cite{Molchanov_OneDimDisord}]
  \label{prop:IntervDistr}
  Under the condition $\#(\Pi\cap[0,L])=m$, the vector $(\ell_1,
  \hdots, \ell_{m+1})$ has the same distribution as the random vector
  \begin{equation}\label{eq:IntervDistr}
    \left(\frac{L \cdot \eta_1}{\eta_1 + \hdots + \eta_{m + 1}}, 
      \frac{L \cdot \eta_2}{\eta_1 + \hdots + \eta_{m + 1}}, \hdots,
      \frac{L \cdot \eta_{m + 1}}{\eta_1 + \hdots + \eta_{m + 1}}\right) \text{,}
  \end{equation}
  where $(\eta_i)_{1\leq i\leq m}$ are i.i.d. exponential random
  variables of parameter $1$.
\end{proposition}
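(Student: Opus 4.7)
My plan is to establish the claim by reducing to two classical facts: the conditional uniformity of Poisson points and the Dirichlet distribution of normalized i.i.d.\ exponentials.

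First, I would recall the basic conditioning property of the Poisson process: conditionally on $\#(\Pi\cap[0,L])=m$, the $m$ points of $\Pi\cap[0,L]$ have the same joint distribution as the order statistics $U_{(1)}<\cdots<U_{(m)}$ of $m$ i.i.d.\ random variables $U_1,\ldots,U_m$ uniform on $[0,L]$. This is a standard consequence of the definition of a homogeneous Poisson process and can be read off by computing, for any disjoint Borel sets in $[0,L]$, the conditional probability that each point falls in a given set. Under this identification, the piece lengths become $(\ell_1,\ldots,\ell_{m+1})=(U_{(1)},\,U_{(2)}-U_{(1)},\,\ldots,\,L-U_{(m)})$.

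Next, I would compute the joint law of these spacings. A direct Jacobian computation (the map from $(U_{(1)},\ldots,U_{(m)})$ to its consecutive differences is affine with determinant $\pm 1$) shows that $(\ell_1,\ldots,\ell_{m+1})$ is uniformly distributed on the simplex
\begin{equation*}
\Sigma_L=\Bigl\{(y_1,\ldots,y_{m+1})\in[0,L]^{m+1}\ :\ y_1+\cdots+y_{m+1}=L,\ y_i\geq 0\Bigr\}.
\end{equation*}
Equivalently, if one pushes the uniform law on the ordered simplex $\{0\le u_1\le\cdots\le u_m\le L\}$ forward by the spacings map, one lands on the uniform measure on $\Sigma_L$ (which is the surface measure normalized by $L^m/m!$).

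The third step identifies this uniform law on $\Sigma_L$ with the distribution of $\bigl(L\eta_i/(\eta_1+\cdots+\eta_{m+1})\bigr)_{i}$. This is the classical Dirichlet representation: if $\eta_1,\ldots,\eta_{m+1}$ are i.i.d.\ with density $e^{-t}\mathbf{1}_{t\geq 0}$ and $S=\sum_i\eta_i$, then the vector $(\eta_1/S,\ldots,\eta_{m+1}/S)$ is $\mathrm{Dirichlet}(1,\ldots,1)$ distributed, hence uniform on the standard simplex, and is moreover independent of $S$. This is obtained by the change of variables $(\eta_1,\ldots,\eta_{m+1})\mapsto(\eta_1/S,\ldots,\eta_m/S,S)$, whose Jacobian is $S^m$, yielding the joint density $S^m e^{-S}\mathbf{1}_{S\ge 0}\,\mathbf{1}_{\text{simplex}}$, which factorizes. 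Scaling by $L$ produces the uniform law on $\Sigma_L$.

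Combining the three steps identifies the conditional law of $(\ell_1,\ldots,\ell_{m+1})$ with the law of $\bigl(L\eta_i/(\eta_1+\cdots+\eta_{m+1})\bigr)_{1\leq i\leq m+1}$, which is the claim. The only real bookkeeping obstacle is keeping the Jacobian of the spacings map and the normalization of the Dirichlet density consistent; both reduce to the same combinatorial factor $m!/L^m$, so the identification is clean once the two sides are written down.
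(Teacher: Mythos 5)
Your argument is correct: conditional uniformity of the Poisson points given $\#(\Pi\cap[0,L])=m$, the uniform law of the spacings on the simplex $\{y_i\ge 0,\ y_1+\cdots+y_{m+1}=L\}$, and the Dirichlet$(1,\ldots,1)$ representation of normalized i.i.d.\ exponentials fit together exactly as you describe, and the Jacobian bookkeeping you flag does cancel as claimed. Note that the paper itself does not prove this proposition but quotes it from \cite{Molchanov_OneDimDisord}; your proof is the standard one underlying that reference (and it also confirms that the index range in the statement should read $1\leq i\leq m+1$ for the exponential variables).
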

\noindent As the lengths $(\ell_j)_{j}$ are continuous functions of
the parameters $(\eta_j)_j$, we know that there exists an open set in
$(\R^+)^{m+1}$, say $O$, such that, for each $(\ell_j)_{1\leq j\leq
  m+1}\in O$, there are at least two occupations $Q_1((\ell_j)_{1\leq
  j\leq m+1})$ and $Q_2((\ell_j)_{1\leq j\leq m+1})$ that have the
same ground state energy (which is at the same time the smallest
possible among the ground state energies for all the occupations). Let
us denote these branches of energy by $(\ell_j)_{1\leq j\leq
  m+1}\mapsto E_1((\ell_j)_{1\leq j\leq m+1})$ and $(\ell_j)_{1\leq
  j\leq m+1}\mapsto E_2((\ell_j)_{1\leq j\leq m+1})$ respectively.\\
For a fixed number of pieces, there are finitely many occupations and
a change in the number of pieces occurs only when a wall, i.e., an
endpoint of a piece, crosses $0$ or $L$. Thus, there exists a subset
$O_1 \subset O$ of positive measure, such that $Q_1((\ell_j)_{1\leq
  j\leq m+1})$ and $Q_2((\ell_j)_{1\leq j\leq m+1})$ are constant on
$O_1$.\\
Now, let us fix an initial set of lengths $(\ell^0_j)_{1\leq j\leq
  m+1}$ in $O_1$ and move it continuously inside this exceptional set
$O_1$. This actually corresponds to moving continuously walls inside
the interval $(0, L)$.  As $Q_1$ and $Q_2$ are two different
occupations, there exists a piece $[a, b] \subset [0, L]$, such that
$Q_1$ and $Q_2$ put different number of particles in this piece, i.e.,
$Q_1([a,b])\ne Q_2([a,b])$.\\
Now, we move $a$ continuously towards $b$; if $a=0$, we will move $b$
towards $a$. Let $a^0$ be the value of $a$ in the configuration
$(\ell^0_j)_{1\leq j\leq m+1}$. Let $E_1(a)$ and $E_2(a)$ be the
ground state energies corresponding to the two different occupations
$Q_1$ and $Q_2$. In a small neighborhood of $a_0$, by the definition
of $O_1$, one has
\begin{equation*}
  E_1(a) = E_2(a)
\end{equation*}
As $U$ is real analytic and as the ground state of $H_Q$ is simple for
any occupation $Q$, the functions $E_1(a)$ and $E_2(a)$ are analytic
in the open interval $(c,b)$ where $c$ is the end of the piece $[c,
a]$ to the left of the piece $[a,b]$. Indeed, $E_1$ (and $E_2$) is
analytic around $a_0$. Assume that $E_1(a)$ stops being analytic
somewhere inside $(c,b)$. This would mean that the eigenvalue $E_1(a)$
of $H_{Q_1}$ becomes degenerate, thus, that the ground state of
$H_{Q_1}$ becomes degenerate. This was already ruled out.\\
This immediately implies that $E_1(a) = E_2(a)$ for all $a\in(c,b)$.\\
But this cannot be. Indeed, if $Q_1$ puts $k_1$ particles in the piece
$[a,b]$, and $Q_2$ puts $k_2$ particles in the piece $[a,b]$ with $k_1
\ne k_2$, the functions $E_1$ and $E_2$ have different asymptotics as
$a$ approaches $b$, indeed,
\begin{equation*}
  E_i(a) \sim k_i^3 / (b - a)^2\quad\text{as}\quad a\to b.
\end{equation*}
This contradicts the fact that the two functions agree on the whole
interval.  This completes the proof of Theorem
\ref{th:asNonDegOfGroundState}.\qed\vskip.2cm
\noindent Finally, we use the results from
sections~\ref{sec:decomp-occup} together with
Theorem~\ref{th:asNonDegOfGroundState} to obtain the following
\begin{Cor}
  \label{cor:SameOccupationSubspace}
  Assume $U$ is real analytic. Then, $\omega$-almost surely, for any
  $L$ and $n$, the ground state of $H^U_\omega(L, n)$ belongs to the a
  unique occupation subspace $\frH_Q$.
\end{Cor}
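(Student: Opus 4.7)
The plan is to deduce the corollary directly by combining the orthogonal decomposition of Corollary~\ref{cor:OccupationSubspaces}, the non-degeneracy within each sector established in Lemma~\ref{le:2}, and the almost sure global non-degeneracy of Theorem~\ref{th:asNonDegOfGroundState}.

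First, I would recall from Corollary~\ref{cor:OccupationSubspaces} that, for every $\omega$, $H^U_\omega(L,n)=\bigoplus_Q H_Q$ on $\frH^n(\Lambda)=\bigoplus_Q \frH_Q$, where the sum ranges over occupations $Q\in\bbN^m$ with $|Q|=n$. Since each $\frH_Q$ is invariant, the spectrum of $H^U_\omega(L,n)$ is the union of the spectra of the restrictions $H_Q$, and in particular its ground state energy satisfies
\begin{equation*}
E^U_\omega(\Lambda,n)=\min_{Q}E_Q,\quad\text{where}\quad E_Q:=\inf\Spec(H_Q).
\end{equation*}
Any ground state of $H^U_\omega(L,n)$ is then a sum of vectors in the sectors $\frH_Q$ realizing the minimum, and Lemma~\ref{le:2} tells us that inside each such sector the minimizer is one-dimensional.

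Next, I would fix $\omega$ in the full-probability event supplied by Theorem~\ref{th:asNonDegOfGroundState}. Suppose, for contradiction, that two distinct occupations $Q\ne Q'$ both achieve $E^U_\omega(\Lambda,n)$. Let $\Psi_Q\in\frH_Q$ and $\Psi_{Q'}\in\frH_{Q'}$ be the corresponding unit ground states in each sector. Since $\frH_Q\perp\frH_{Q'}$, the vectors $\Psi_Q$ and $\Psi_{Q'}$ are orthogonal, hence linearly independent, and both lie in the ground state eigenspace of $H^U_\omega(L,n)$. This contradicts the almost sure non-degeneracy given by Theorem~\ref{th:asNonDegOfGroundState}. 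Therefore the minimum is attained at a unique $Q$, and the (unique) ground state lies entirely in the corresponding $\frH_Q$.

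There is essentially no obstacle: the argument is a two-line deduction, the only point requiring care is to ensure that the almost sure event of Theorem~\ref{th:asNonDegOfGroundState} can be chosen simultaneously for all integers $L$ and $n$, which follows by taking a countable intersection over $(L,n)\in\bbN^2$ of the corresponding full-probability events.
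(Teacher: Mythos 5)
Your argument is correct and is essentially the paper's own proof: both combine the occupation decomposition of Corollary~\ref{cor:OccupationSubspaces} with the almost sure simplicity from Theorem~\ref{th:asNonDegOfGroundState} (the paper phrases it as ``each projection of the ground state onto an $\frH_Q$ is a ground state or zero, so simplicity forces a single nonzero projection,'' which is the contrapositive of your two-sector contradiction). Your final remark about intersecting over $(L,n)\in\bbN^2$ is superfluous (and note $L$ is a real parameter), since Theorem~\ref{th:asNonDegOfGroundState} already asserts the a.s. statement simultaneously for all $L$ and $n$.
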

\begin{proof}
  Consider the orthogonal decomposition~\eqref{eq:HQdecomposition}.
  As any projection of $\Psi_\omega(L, n)$ on $\frH_Q$ is either a
  ground state or zero and as the ground state is
  $\omega$-a.s. simple, only one of the projections of the ground
  state on a space of fixed occupation is different from zero. Thus,
  $\Psi_\omega(L, n)$ belongs to one of the subspaces $\frH_Q$. This
  completes the proof of Corollary~\ref{cor:SameOccupationSubspace}.
\end{proof}
\subsection{The approximate ground state $\Psi^{\text{opt}}$}
\label{sec:appr-ground-state}
The basic idea of the construction of $\Psi^{\text{opt}}$ is to find
the optimal configuration with respect to different occupations. All
the $n$-electron states are considered as deformations of the
unperturbed ground state $\Psi^{0}$ which, we
recall~\eqref{eq:OmegaAlphaConstruction}, is given by the Slater
determinant:
\begin{equation*}
  \Psi^0 = \psi_1 \wedge \psi_2 \wedge \hdots \wedge \psi_n.
\end{equation*}
When the interactions are turned on, the particles in the state
$\Psi^0$ start to interact. For some particles, these interactions may
be quite large. In particular, it may become energetically favorable
to ``decouple'' some particles by moving them apart from each other to
unoccupied pieces; obviously, it is better to move the more excited
particles. One, thus, reduces the interaction energy but this will
necessarily result in an increase of the ``non interaction'' energy of
the state, i.e., of $\langle H^0_\omega(L,n)\Psi,\Psi\rangle$: indeed,
in the non interacting ground state, the $n$ particles occupy the $n$
lowest levels of the system. Nevertheless the decrease of the
interaction energy, i.e., $\langle W_n\Psi,\Psi\rangle$ may compensate
the increase in ``non interacting'' energy. The ``optimal''
configuration then arises through the optimization on the occupation
governed by the interplay between the loss of interaction energy and
the gain of ``non interacting'' energy: it is achieved when loss and
gain balance.\\
Let us note that a ground state $\Psi$ is obviously the ground state
of the Hamiltonian restricted to the appropriate fixed occupation
subspace, i.e., $\Psi$ is the ground state of $H_{Q(\Psi)}$
(see~\eqref{eq:HQdecomposition}). This corresponds to writing the
minimization problem in the form
\begin{equation}
  \label{eq:TwoStageMinimizing}
  \inf_{\substack{\Phi \in \frH^n\\\|\Phi\| = 1}} \langle H_\omega(L, n)
  \Phi, \Phi \rangle
  = \inf_{\substack{Q \in \bbN^m\\|Q| = n}} \inf_{\substack{\Phi \in
      \frH_Q\\\|\Phi\| = 1}} \langle H_Q \Phi, \Phi \rangle.
\end{equation}
This reduces the problem to finding the optimal occupations rather
than the optimal $n$-electron state itself.\vskip.2cm\noindent
Recalling that the constant $\gamma$ is defined in
Proposition~\ref{prop:TwoElectronProblem}, we set
\begin{equation}
  \label{eq:AstarXstarDef}
  A_* := \frac{\gamma}{8 \pi^2}, \qquad 
  x_* := 1 - e^{-\tfrac{\gamma}{8 \pi^2}} \text{.}
\end{equation}
Note that
\begin{equation*}
  A_* = -\log(1 - x_*) \text{.}
\end{equation*}
\noindent Let us now define $\Psi^{\text{opt}}$. Therefore, recall
that the pieces in the model are denoted by $(\Delta_k(\omega))_{1\leq
  k\leq m(\omega)}$ (see section~\ref{sec:introduction}) and that for
$\Delta_k(\omega)$, a piece, we define (see
sections~\ref{sec:non-iter-ground} and~\ref{sec:iter-ground})
\begin{itemize}
\item $\varphi^j_{\Delta_k(\omega)}$ to be the $j$-th normalized
  eigenvector of $-\Delta_{|\Delta_k(\omega)}^D$,
\item $\zeta^j_{\Delta_k(\omega)}$ to be the $j$-th normalized
  eigenvector of $-\laplace^D_{|\Delta_k(\omega)^2} + U$ acting on
  $\D\bigwedge_{j=1}^2L^2(\Delta_k(\omega))$.
\end{itemize}
We will define the state $\Psi^{\text{opt}}$ in two steps. We first
define $\Psi_m^{\text{opt}}$: it will contain less than $n$ particles
and will be the main part of $\Psi^{\text{opt}}$. We, then, add the
missing particles to get the $n$-particle state $\Psi^{\text{opt}}$.
\begin{definition}
  \label{def:PsiOptm}
  Consider all the pieces in $[0, L]$. For each piece, depending on
  its length, do one of the following:
  \begin{enumerate}
  \item keep the pieces of length in $[0,\ell_\rho-\rho
    x_*)\cup[3\ell_\rho,\infty)$ empty;
  \item put one particle in its ground state in each piece of length
    in $[\ell_\rho-\rho x_*,2\ell_\rho+A_*)$;
  \item\label{item:1} in pieces of length in
    $[2\ell_\rho+A_*,3\ell_\rho)$, put the ground state of a
    two-particles system with interactions (see
    Proposition~\ref{prop:TwoElectronProblem} and section~\ref{sec:two-int-electrons});
  \end{enumerate}
  We define the state $\Psi_m^{\text{opt}}=\Psi_m^{\text{opt}}(L,n)$
  to be the anti-symmetric tensor product of the thus constructed one-
  and two-particles sub-states, that is,
  \begin{equation}
    \label{eq:79}
    \Psi_m^{\text{opt}}(L,n)=\bigwedge_{|\Delta_j(\omega)|
      \in[\ell_\rho-\rho x_*,2\ell_\rho+A_*)}\varphi^1_{\Delta_j(\omega)}
    \wedge \bigwedge_{|\Delta_j(\omega)|\in[2\ell_\rho+A_*,
      3\ell_\rho)}\zeta^1_{\Delta_j(\omega)}.
  \end{equation}
\end{definition}
\noindent Note that, as the $(\zeta^1_{\Delta_j(\omega)})_j$ carry two
particles, $\Psi_m^{\text{opt}}(L,n)$ is not given by a Slater
determinant; an explicit formula for such an anti-symmetric tensor
product is given in~\eqref{eq:267} in
Appendix~\ref{sec:proj-totally-antisym}.
\begin{Rem}
  Note that, in step (\ref{item:1}) of Definition~\ref{def:PsiOptm}, we
  put two interacting particles within these pieces. Because of the
  interactions, this is different from putting separately two
  particles on the two lowest one-particle energy levels (see
  appendix~\ref{sec:two-part-probl}).
\end{Rem}
\noindent Let us now compute the total number of particles contained
in $\Psi_m^{\textup{opt}}$. We prove
\begin{Le}
  \label{le:3}
  With probability $1-O(L^{-\infty})$, for $L$ sufficiently
  large, in the thermodynamic limit, the total number of particles in
  $\Psi_m^{\textup{opt}}$ constructed in
  Definition~\textup{\ref{def:PsiOptm}} is given by
  \begin{equation*}
    \D\calN(\Psi_m^{\textup{opt}})=n \left[1-\rho^2 \left(3  - x_* - 
          \frac{x_*^2}{2}\right)+O(\rho^3)\right].
  \end{equation*}
\end{Le}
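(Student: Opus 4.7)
By Definition~\ref{def:PsiOptm}, the total particle count splits according to the length of the pieces carrying particles:
\[
\calN(\Psi_m^{\text{opt}}) = N_1(\omega) + 2\, N_2(\omega),
\]
where $N_1(\omega) = \#\{k : |\Delta_k(\omega)| \in [\ell_\rho - \rho x_*, 2\ell_\rho + A_*)\}$ and $N_2(\omega) = \#\{k : |\Delta_k(\omega)| \in [2\ell_\rho + A_*, 3\ell_\rho)\}$. Because $\rho$ is held fixed, all interval endpoints are bounded by $3\ell_\rho$ and hence lie in $[0, \log L \cdot \log\log L]$ for $L$ large. Applying Proposition~\ref{prop:IntervStatistics} separately to each of the two intervals and taking a union bound yields, with probability $1 - O(L^{-\infty})$,
\begin{align*}
N_1 &= L\, e^{-(\ell_\rho - \rho x_*)}\bigl(1 - e^{-(\ell_\rho + A_* + \rho x_*)}\bigr) + O(L^\beta),\\
N_2 &= L\, e^{-(2\ell_\rho + A_*)}\bigl(1 - e^{-(\ell_\rho - A_*)}\bigr) + O(L^\beta),
\end{align*}
for some $\beta \in (2/3, 1)$.

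Next, I would substitute the closed forms $e^{-\ell_\rho} = \rho/(1+\rho)$ (from~\eqref{eq:ellRho}) and $e^{-A_*} = 1 - x_*$ (from~\eqref{eq:AstarXstarDef}), and Taylor expand in $\rho$ through order $\rho^3$, simultaneously developing the geometric factors $\rho/(1+\rho)$, $\rho^2/(1+\rho)^2$ and the exponential factors $e^{\pm \rho x_*}$. A direct computation gives
\begin{align*}
\frac{N_1}{L} &= \rho - 2(1 - x_*)\rho^2 + \left(3 - 3 x_* + \tfrac12 x_*^2\right)\rho^3 + O(\rho^4),\\
\frac{N_2}{L} &= (1 - x_*)\rho^2 - (3 - 2 x_*)\rho^3 + O(\rho^4).
\end{align*}

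Summing, the $\rho^2$ contributions from $N_1$ and $2 N_2$ cancel exactly, and the $\rho^3$ contributions combine to produce
\[
\calN(\Psi_m^{\text{opt}}) = L\rho - \left(3 - x_* - \tfrac12 x_*^2\right)\rho^3 L + O(\rho^4 L) + O(L^\beta).
\]
Dividing by $n$, using $n/L \to \rho$ in the thermodynamic limit, and absorbing the $O(L^\beta)$ remainder (which is $o(n)$ since $\beta < 1$ and $\rho$ is fixed) then yields the announced asymptotic formula.

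The only real obstacle is the bookkeeping of the two Taylor expansions through order $\rho^3$ and verifying that the $\rho^2$ coefficients cancel. This cancellation is, however, the arithmetic heart of the lemma: the thresholds $\ell_\rho - \rho x_*$ and $2\ell_\rho + A_*$ appearing in Definition~\ref{def:PsiOptm} are precisely those for which $\Psi_m^{\text{opt}}$ carries $n$ particles up to a correction of order $\rho^2$; if they were chosen otherwise, the resulting state could not serve as a candidate approximation of the $n$-particle ground state. Once this cancellation is observed, the remainder of the expansion is purely mechanical.
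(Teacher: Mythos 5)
Your proposal is correct and follows essentially the same route as the paper: count the singly- and doubly-occupied length classes via Proposition~\ref{prop:IntervStatistics}, substitute $e^{-\ell_\rho}=\rho/(1+\rho)$ and $e^{-A_*}=1-x_*$, and Taylor expand in $\rho$ (your expansions of $N_1/L$ and $N_2/L$, the cancellation at order $\rho^2$, and the resulting coefficient $3-x_*-x_*^2/2$ all check out). The only difference is cosmetic: the paper factors out $\rho/(1+\rho)$ before expanding rather than expanding the two classes separately.
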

\begin{proof}
  It suffices to count the number of pieces of each type and multiply
  by the corresponding number of particles.  We recall that,
  by~\eqref{eq:AstarXstarDef}, one has $\D \exp(-\ell_\rho) =
  \frac{\rho}{1 + \rho}$ and $\D \exp(-A_*) = 1 - x_*$. Thus, for
  $\beta\in(0,1/2)$, using Proposition~\ref{prop:IntervStatistics} and
  the second equation in~\eqref{eq:AstarXstarDef}, with probability
  $1-O(L^{-\infty})$, one computes
  \begin{equation*}
    \begin{split}
      \calN(\Psi_m^{\text{opt}})&=\sharp\{l\in[\ell_\rho-\rho x_*,
      2 \ell_\rho+A_*)\}+2\cdot \sharp\{l\in[2 \ell_\rho+A_*, 3\ell_\rho)\} \\
      &= L \left[e^{-(\ell_\rho-\rho x_*)}-e^{-(2\ell_\rho+A_*)} +
        2e^{-(2 \ell_\rho+A_*)} - 2e^{-3\ell_\rho}\right]
      + O\left(L^{1/2+\beta}\right) \\
      &= \frac{L\rho}{1+\rho}\left[e^{\rho x_*}+\rho e^{-A_*}-\rho^2
        e^{-A_*}-2\rho^2 +O(\rho^3)\right]
      \\&=\frac{L\rho}{1+\rho}\left[1+\rho-\rho^2 \left(e^{-A_*}+2  -
          \frac{x_*^2}{2}\right) + O(\rho^3)\right] \\
      &=n \left[1-\rho^2 \left(3  - x_* - 
          \frac{x_*^2}{2}\right)+O(\rho^3)\right].
    \end{split}
  \end{equation*}
  This completes the proof of Lemma~\ref{le:3}.
\end{proof}
\noindent Lemma~\ref{le:3} shows that, for $\rho$ small,
$\Psi_m^{\text{opt}}$ contains less than $n$ particles. Let us now add
particles to $\Psi_m^{\text{opt}}$ to complete it into
$\Psi^{\text{opt}}$. Therefore, we prove
\begin{Le}
  \label{le:16}
  Let $(\widetilde\varphi_k)_{1\leq k\leq k_\rho(\omega)}$ be the
  particles that $\Psi^0$, the non interacting ground state, puts in
  the pieces longer than $3\ell_\rho$
  ordered by increasing energy.\\
  With probability $1-O(L^{-\infty})$, for $L$ sufficiently large, one
  has $k_\rho(\omega)\geq n\rho^2(3-18\rho)$.
\end{Le}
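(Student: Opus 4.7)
The plan is to lower-bound $k_\rho(\omega)$ by three times the number of pieces of length strictly larger than $3\ell_\rho$ and to evaluate this count via Proposition~\ref{prop:IntervStatistics} followed by a small-$\rho$ expansion. The key observation is that on a piece $\Delta_j$ of length $\ell$ the $k$-th Dirichlet eigenvalue equals $(\pi k/\ell)^2$, so as soon as $\ell>3\ell_\rho$ the three lowest eigenvalues lie strictly below $E_\rho=(\pi/\ell_\rho)^2$; once the finite-volume Fermi level $E_n^\Lambda$ is close enough to $E_\rho$, these three one-particle states are occupied in $\Psi^0$, and hence $\Delta_j$ contributes at least three of the $\widetilde\varphi_k$'s.

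To turn this into a quantitative statement I would fix a cushion $\delta_L\to 0$ decaying like a small power of $L^{-1}$ and count only pieces of length $\geq 3\ell_\rho+\delta_L$. On such a piece the third eigenvalue is bounded above by $E_\rho(1+\delta_L/(3\ell_\rho))^{-2}$, which is separated from $E_\rho$ by a margin of order $E_\rho\delta_L/\ell_\rho$. Coupling this with the quantitative rate for $|E_n^\Lambda-E_\rho|$ supplied by Lemma~\ref{lem:FermiEnergyConvergenceRate} ensures, on an event of probability $1-O(L^{-\infty})$, that every such piece indeed contributes at least three particles to $k_\rho(\omega)$.

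Applying Proposition~\ref{prop:IntervStatistics} with $a_L=3\ell_\rho+\delta_L$, and truncating the length range at $\log L\cdot\log\log L$ thanks to Proposition~\ref{pro:3}, yields
\begin{equation*}
\#\{j:\,|\Delta_j|\geq 3\ell_\rho+\delta_L\}=Le^{-3\ell_\rho-\delta_L}+O(L^\beta)=\frac{L\rho^3}{(1+\rho)^3}(1+o(1))+O(L^\beta)
\end{equation*}
with probability $1-O(L^{-\infty})$, where I used the explicit relation $e^{-\ell_\rho}=\rho/(1+\rho)$ from~\eqref{eq:67}. Multiplying by three and invoking the elementary inequality $(1+\rho)^{-3}\geq 1-3\rho$, valid for all $\rho\geq 0$, gives $k_\rho(\omega)\geq 3L\rho^3-9L\rho^4+o(L\rho^3)+O(L^\beta)$. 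For $\rho$ small and $L$ large, the surplus term $-9L\rho^4$ has enough room to absorb both the $o(L\rho^3)$ term from $\delta_L$ and the $O(L^\beta)$ fluctuation of the counting function, and the resulting bound $k_\rho(\omega)\geq 3L\rho^3-18L\rho^4$ rewrites as $k_\rho(\omega)\geq n\rho^2(3-18\rho)$ after using $n/L\to\rho$.

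The one genuine difficulty I anticipate is the quantitative control of $|E_n^\Lambda-E_\rho|$ needed in the second paragraph; I expect the rate to follow from Lemma~\ref{lem:FermiEnergyConvergenceRate} (equivalently from the explicit density of states~\eqref{eq:NLifschitzTail} combined with a concentration bound on the eigenvalue counting function), but this is the only non-bookkeeping step. Everything else is a comfortable combination of Poisson statistics for the pieces and a small-$\rho$ expansion.
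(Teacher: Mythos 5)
Your overall route is exactly the paper's: count the pieces of length slightly exceeding $3\ell_\rho$ via Proposition~\ref{prop:IntervStatistics}, argue via Lemma~\ref{lem:FermiEnergyConvergenceRate} that each such piece carries at least three particles of $\Psi^0$, and multiply by three; the paper even gets the same constants by counting pieces of length in $[3\ell_\rho+\rho,4\ell_\rho)$, bounding their number below by $n\rho^2(1-6\rho)$.

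There is, however, a genuine gap in your second paragraph. You take a cushion $\delta_L$ decaying like a power of $L^{-1}$, so the spectral margin separating the third Dirichlet level of an admissible piece from $E_\rho$ is of order $E_\rho\,\delta_L/\ell_\rho$, and you claim that Lemma~\ref{lem:FermiEnergyConvergenceRate} then forces these three levels below the finite-volume Fermi level $E_{n,\omega}^\Lambda$. But that lemma gives $E_{n,\omega}^\Lambda=E_\rho+O(L^{-(1/2-\delta)})+O\left(\left|\tfrac nL-\rho\right|\right)$, and in the thermodynamic limit the convergence $n/L\to\rho$ comes with no rate whatsoever: along a sequence with $n/L=\rho-\eta_L$ and $\eta_L\to0$ slowly, the term $O(|n/L-\rho|)$ dominates your margin $\delta_L/\ell_\rho$, the Fermi level sits below the threshold energy of your borderline pieces, and those pieces then contribute only two occupied states, so the assertion that \emph{every} counted piece contributes three particles is not justified. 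The repair is cheap and is what the paper does: take a cushion of fixed size (there, $\rho$), so that the margin is a fixed positive quantity which eventually beats both error terms; the resulting loss in the piece count is only a relative $O(\rho)$, comfortably absorbed by the slack between $-9\rho$ and $-18\rho$ in your bookkeeping. Alternatively, keep your $\delta_L$ but note that the pieces whose third level falls between $E_{n,\omega}^\Lambda$ and your threshold number only $o(L\rho^3)$ and can be absorbed into the error term — but as written, the coupling with Lemma~\ref{lem:FermiEnergyConvergenceRate} does not close the argument.
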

\begin{proof}
  By Proposition~\ref{prop:IntervStatistics}, with probability
  $1-O(L^{-\infty})$, the number of pieces of length in
  $\ell_\rho[3+\rho,4)$ is equal to
  \begin{equation*}
    n\frac{\rho^2}{(1+\rho)^3}
    \left(e^{-\rho}-\frac{\rho}{1+\rho}\right)+o(L)\geq
    n\rho^2 \left(1-6\rho\right)
  \end{equation*}
  for $L$ large.\\
  To complete the proof of Lemma~\ref{le:16}, let us now establish
  some auxiliary results.  By~\eqref{eq:convEnFerm} in
  Proposition~\ref{pro:2}, we know that $E_{n,\omega}^\Lambda$
  converges to $E_\rho$ in the thermodynamic limit. We will first
  investigate the rate of convergence in~\eqref{eq:convEnFerm}.
  \begin{Le}
    \label{lem:FermiEnergyConvergenceRate}
    Denote by $\ell_{n,L}$ the length of an interval having a ground
    state energy equal to $E_{n,\omega}^\Lambda$, i.e.,
    \begin{equation*}
      \ell_{n, L} = \frac{\pi}{\sqrt{E_{n,\omega}^\Lambda}} \text{.}
    \end{equation*}
    Let $\rho > 0$ be fixed.  For any $\delta>0$, in the thermodynamic
    limit $L\to\infty$, $n/L\to\rho$, with probability
    $1-O(L^{-\infty})$, one has
    \begin{equation*}
      \begin{aligned}
        \ell_{n, L} &= \ell_\rho + O(L^{-(1/2 - \delta)})+O
        \left(\left|\frac nL-\rho\right|\right), \\
        E_{n,\omega}^\Lambda &= E_\rho + O(L^{-(1/2 - \delta)})+O
        \left(\left|\frac nL-\rho\right|\right).
      \end{aligned}
    \end{equation*}
  \end{Le}
  \noindent In view of Lemma~\ref{lem:FermiEnergyConvergenceRate} and
  by the definition of $\Psi^0$, for $L$ sufficiently large, each
  piece of length in $\ell_\rho[3+\rho,4)$ contains at least 3
  particles of $\Psi^0$. This completes the proof of
  Lemma~\ref{le:16}.
\end{proof}
\begin{proof}[Proof of Lemma~\ref{lem:FermiEnergyConvergenceRate}]
  By~\eqref{eq:densEtatsPoissonModel1}, with probability
  $1-O(L^{-\infty})$, the normalized counting function for the
  Dirichlet eigenvalues of $H_\omega(L,1)$ (see~\eqref{eq:17})
  satisfies
  \begin{equation*}
    \frac nL=N_L^D(E_{n,\omega}^\Lambda) = \frac{\exp(-\ell_{n, L})}{1 - \exp(-\ell_{n,
        L})} + O(L^{-(1/2 - \delta)}).
  \end{equation*}
  Taking into account the fact that
  \begin{equation*}
    \rho = N(E_\rho) = \frac{\exp(-\ell_\rho)}{1 - \exp(-\ell_\rho)},
  \end{equation*}
  we deduce that
  \begin{equation*}
    \frac{\exp(-\ell_{n,L})}{1-\exp(-\ell_{n,L})}=\frac{\exp(-\ell_\rho)}
    {1-\exp(-\ell_\rho)}+O(L^{-(1/2 - \delta)})+O
    \left(\left|\frac nL-\rho\right|\right).
  \end{equation*}
  This immediately yields
  \begin{equation*}
    \exp(-\ell_{n, L}) = \exp(-\ell_\rho) +  O(L^{-(1/2 - \delta)})+O
    \left(\left|\frac nL-\rho\right|\right).
  \end{equation*}
  The proof of Lemma~\ref{lem:FermiEnergyConvergenceRate} is complete.
\end{proof}
\noindent For $\rho$ small, by Lemmas~\ref{le:3} and~\ref{le:16}, one
has $n-\calN(\Psi_m^{\text{opt}})<k_\rho(\omega)$. Thus, to construct
$\Psi^{\text{opt}}$, we just add $n-\calN(\Psi_m^{\text{opt}})$
particles of $\Psi^0$ living in pieces of length in
$\ell_\rho[3+\rho,4)$ to $\Psi_m^{\text{opt}}$.
\begin{definition}
  \label{def:PsiOptr}
  We define
  \begin{equation}
    \label{eq:80}
    \Psi^{\text{opt}}=\Psi^{\text{opt}}(L,n):=\Psi_m^{\text{opt}}(L,n) \wedge
    \bigwedge_{k=1}^{n-\calN(\Psi_m^{\text{opt}})}\widetilde\varphi_k.
  \end{equation}
\end{definition}
\begin{Rem}
  \label{rem:7}
  Let us give an alternative approach to defining $\Psi^{\text{opt}}$
  which does not result in exactly the same $\Psi^{\text{opt}}$ but
  which can serve exactly the same purpose in the subsequent arguments.\\
  We start with the non interacting ground state $\Psi^0$ and describe
  how it is modified:
  \begin{itemize}
  \item for pairs of particles living in the same piece, the
    modification depends on the length of this piece:
    \begin{itemize}
    \item for the pieces of length between $2 \ell_\rho$ and $2
      \ell_\rho + A_*$, remove the more excited particle and put it
      into an unoccupied piece of length between $\ell_\rho - \rho
      x_*$ and $\ell_\rho$;
    \item for the remaining pieces, i.e., the pieces of length between
      $2 \ell_\rho + A_*$ and $3 \ell_\rho$, the factorized
      two-particles state corresponding to $\Psi^0$ should be replaced
      by a true ground state of a two-particles system with
      interaction in this piece (see
      section~\ref{sec:two-int-electrons} for a description of such a
      two-particle state);
    \end{itemize}
  \item do not modify any of the particles in $\Psi^0$ that are either
    alone or live in groups of three or more pieces.
  \end{itemize}
  One can easily verify that, in the above procedure, up to a small
  relative error, the number of pieces to which the excited particles
  are displaced is equal to the number of pieces where we decouple the
  particles. Indeed, according to
  Proposition~\ref{prop:IntervStatistics}, with probability at least
  $1-O(L^{-\infty})$, for the former, one has
  \begin{equation}
    \label{eq:NPiecesDonors}
    \begin{split}
      \sharp\{l \in (2 \ell_\rho, 2 \ell_\rho - \log(1 - x_*))\}
      &= L \exp(-2 \ell_\rho) x_* + O(L^{1/2 + \beta}) \\
      &= n \rho x_* (1 + O(\rho)),
    \end{split}
  \end{equation}
  and, for the latter, one has
  \begin{equation}
    \label{eq:NPiecesAcceptors}
    \begin{split}
      \sharp\{l \in (\ell_\rho - \rho x_*, \ell_\rho)\}
      &= L\exp(-\ell_\rho) (\exp(\rho x_*) - 1) + O(L^{1/2 + \beta}) \\
      &= n\rho x_* (1 + O(\rho)).
    \end{split}
  \end{equation}
  Thus, both sets contain the same number of pieces (up to an error of
  order $n\rho^2$). This completes the construction of
  $\Psi^{\text{opt}}$.
\end{Rem}
\subsection{Comparing $\Psi^{\text{opt}}$ with the ground state of the
  interacting system}
\label{sec:comp-psiopt-ground}
Our goal in the sections to come is to estimate how much
$\Psi^{\textup{opt}}$ differs from a true ground state $\Psi^U =
\Psi^U_\omega(L,n)$ (and to show that it doesn't differ much). This
will be done through the comparison of their occupation numbers. We
shall see that the ground states of the interacting Hamiltonian must
live in subspaces with special occupation numbers (see
Corollary~\ref{cor:3}).\\
To compare occupation numbers, we introduce the distance dist$_1$.
\begin{definition}
  \label{def:2}
  Let $m=m(\omega)$ be the number of pieces in $[0,L]$. For
  $j\in\{1,2\}$, pick an occupation
  \begin{equation*}
    Q^j = (Q_1^j, Q_2^j, \hdots, Q_m^j) \in \bbN^m, 
    \quad |Q^j| = n.
  \end{equation*}
  Define
  \begin{equation*}
    \dist_1(Q^1, Q^2) = \sum_{i = 1}^{m} |Q_i^1- Q_i^2|.
  \end{equation*}
\end{definition}
\begin{Rem}
  \label{rem:3}
  Recall that the non interacting ground state $\Psi^0$ has a single
  occupation $Q(\Psi^0)$: all the states with energy below
  $E_{n,\omega}^\Lambda$ (where we recall that $E_{n,\omega}^\Lambda$
  denote the $n$-th (counting multiplicity) energy level of the
  one-particle Hamiltonian $H_\omega(L)$); moreover, only those states
  are occupied. In~\cite{Veniaminov_PhDthesis}, for $U$ compactly
  supported, for $\Psi^U$ an interacting ground state, it was proved
  that
  \begin{equation}
    \label{eq:diffQPsiQPsi0}
    C^{-1} n \rho \leq \dist_0(Q(\Psi^U), Q(\Psi^0)) \leq C n \rho
    \text{.}
  \end{equation}
  where $\dist_0$ is defined by $\D \dist_0(Q^1, Q^2) = \sum_{i =
    1}^{m} \car_{Q_i^1 \ne Q_i^2}$. Clearly, one has
  dist$_0\leq$dist$_1$.\\
  In the sequel, we shall prove that $\Psi^{\textup{opt}}$ is a better
  approximation of a ground state of the interacting system than is
  the non-interacting ground state $\Psi^0$ (compare~\eqref{eq:142}
  with~\eqref{eq:diffQPsiQPsi0}).
\end{Rem}
\noindent For interaction potentials that decrease at infinity
sufficiently fast (see~\textbf{(HU)}), we will prove that the main
modification to the ground state energy comes from $U$ restricted to
some (sufficiently large) compact set.\\
Fix a constant $B > 2$. We decompose the interaction potential in the
sum of the ``principal'' and ``residual'' parts that is, write $U =
U^p + U^r$ where
\begin{equation}
  \label{eq:99}
  U^p:=\car_{[-B \ell_\rho, B \ell_\rho]}U\quad\text{and}\quad U^r
  :=\car_{(-\infty, - B \ell_\rho) \cup (B \ell_\rho, +\infty)}U. 
\end{equation}
As the sum of pair interactions $W_n$ is linear in $U$, this yields
the following decomposition for the full Hamiltonian:
\begin{equation}
  \label{eq:245}
  H^U = H^0 + W_n = H^0 + W_n^{U^p} + W_n^{U^r} = H^{U^p} + W_n^r.
\end{equation}
Our analysis is done in the following steps:
\begin{enumerate}
\item first, we prove that $\Psi^{\textup{opt}}$ approximates well the
  ground state for the system with compactified interactions
  $\Psi^{U^p}$;
\item second, we show that the quadratic form of the residual
  interactions $W^r$ on $\Psi^{\textup{opt}}$ contributes only to the
  error term; this will imply~\eqref{eq:32};
\item finally, we will conclude that the same $\Psi^{\textup{opt}}$
  gives also a good approximation for the full Hamiltonian $H^U$
  ground state $\Psi^U$ in terms of the distance $\dist_1$ for the
  respective occupations.
\end{enumerate}
\begin{Rem}
  \label{rem:6}
  Let us clarify a point of terminology: we will minimize the
  quadratic form $\langle H_Q\Psi,\Psi\rangle=\langle
  H^0_Q\Psi,\Psi\rangle+ \langle W_n\Psi,\Psi\rangle$; the term
  $\langle H^0_Q\Psi,\Psi\rangle$ is referred to as the ``non
  interacting energy'' term and $\langle W_n\Psi,\Psi\rangle$ the
  ``interaction energy'' term; we use the same decomposition and
  terminology for smaller groups of particles or at the single
  particle level.
\end{Rem}
\subsection{The analysis of $H^{U^p}$}
\label{sec:analysis-hum}
We start with the analysis of $H^{U^p}$, in particular, of its ground
state energy and ground state(s). Later, we show that the addition of
$W_n^r$ will not change much in the ground state energy and ground
state(s).\\
%
First, we compute the energy of $\Psi^{\textup{opt}}$. We prove
\begin{Th}
  \label{thr:6}
  There exists $\rho_0>0$ such that, for $\rho\in(0,\rho_0)$, in the
  thermodynamic limit, with probability $1$, one has
  \begin{multline}
    \label{eq:89}
    \lim_{\substack{L \to \infty\\ n / L \to \rho}}
    \frac1{n}\bigl\langle H_\omega^{U^p}(L,n)\Psi^{\textup{opt}}(L,n),
    \Psi^{\textup{opt}}(L,n)\bigr\rangle \\
    = \densEn^0(\rho)+ \pi^2 \gamma_* \rho |\log{\rho}|^{-3} \left(1 +
      O\left(f_Z(|\log{\rho}|)\right)\right)
  \end{multline}
  where $\gamma_*$ is defined in~\eqref{eq:12} and $f_Z$ is a
  continuous function satisfying $f_Z(x)\to0$ as $x \to +\infty$ no
  faster than $1/x$ (for more details, see~\eqref{eq:1}).
\end{Th}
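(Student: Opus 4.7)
The strategy rests on the factorized structure of $\Psi^{\textup{opt}}$: it is an anti-symmetric tensor product of states each supported on a single piece of the Poisson partition. This allows one to split
\begin{equation*}
\bigl\langle H_\omega^{U^p}(L,n)\Psi^{\textup{opt}},\Psi^{\textup{opt}}\bigr\rangle=\sum_{\Delta}\mathcal{E}_\Delta+\sum_{\Delta\neq\Delta'}\mathcal{W}_{\Delta,\Delta'},
\end{equation*}
where $\mathcal{E}_\Delta$ gathers the intra-piece energies and $\mathcal{W}_{\Delta,\Delta'}$ the cross-piece contributions coming from the non-local potential $U^p$. By Proposition~\ref{prop:TwoElectronProblem} and the construction of $\Psi_m^{\textup{opt}}$, one has $\mathcal{E}_\Delta=\pi^2/|\Delta|^2$ for single-particle pieces and $\mathcal{E}_\Delta=5\pi^2/|\Delta|^2+\gamma/|\Delta|^3+o(|\Delta|^{-3})$ for two-particle pieces of length in $[2\ell_\rho+A_*,3\ell_\rho)$; the completion particles $\widetilde\varphi_k$ in pieces longer than $3\ell_\rho$ contribute Dirichlet single-particle eigenvalues which, by Lemma~\ref{lem:FermiEnergyConvergenceRate}, are localized within $O(\ell_\rho^{-1})$ of $E_\rho$.

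Then, using Proposition~\ref{prop:IntervStatistics} to replace sums over pieces by integrals against the Poisson density $e^{-\ell}\,d\ell$, I would expand each contribution in powers of $\rho=e^{-\ell_\rho}/(1-e^{-\ell_\rho})$ and $\ell_\rho^{-1}$, keeping explicit remainders of order $\rho^2/\ell_\rho^2$ and $\rho/\ell_\rho^4$, and compare piece by piece with $n\,\densEn^0(\rho)$ decomposed along the same partition via Proposition~\ref{pro:2}. Pieces on which $\Psi^{\textup{opt}}$ and $\Psi^0$ agree cancel exactly, and the residual splits into three contributions: (i) the displacement of $\approx n\rho x_*$ excited-state particles from pieces of length $[2\ell_\rho,2\ell_\rho+A_*)$ to the ground state of pieces of length $[\ell_\rho-\rho x_*,\ell_\rho)$; (ii) the interaction correction $\gamma/|\Delta|^3$ in the two-particle pieces; (iii) the loss of the $\approx n\rho^2(x_*+x_*^2/2)$ highest-energy particles of $\Psi^0$ in pieces longer than $3\ell_\rho$ that $\Psi^{\textup{opt}}$ does not retain. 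After expanding both the exponential density and $k^2\pi^2/\ell^2$ near the thresholds, using $\int_0^{A_*}ae^{-a}da=x_*-A_*e^{-A_*}$ together with $\gamma=8\pi^2 A_*$ and $1-x_*=e^{-A_*}$, contributions (i) and (ii) combine to
\begin{equation*}
\frac{n\rho\pi^2}{\ell_\rho^3}\bigl[(x_*-A_*e^{-A_*})+A_*e^{-A_*}\bigr]=\frac{n\rho\pi^2\gamma_*}{\ell_\rho^3},
\end{equation*}
the announced leading correction, while (iii) is of order $n\rho^2 E_\rho\sim n\rho\cdot(\rho\ell_\rho)/\ell_\rho^3=o(n\rho/\ell_\rho^3)$ by Lemma~\ref{lem:FermiEnergyConvergenceRate}.

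For the off-diagonal part, since the single- and two-particle components of $\Psi^{\textup{opt}}$ sitting in distinct pieces have disjoint supports, the exchange contributions vanish and $\mathcal{W}_{\Delta,\Delta'}$ reduces to the Hartree integral $\iint U^p(x-y)\,\rho_\Delta(x)\rho_{\Delta'}(y)\,dx\,dy$. Since Dirichlet eigenfunctions vanish to first order at the boundary of their piece, the integrand near a common boundary is bounded by $(|\Delta||\Delta'|)^{-3}\,a^2b^2\,|U(-(a+b))|$ with $a,b$ the distances from the respective boundaries, so that a small-separation pair contributes $O(\ell_\rho^{-6}\int u^4 U(u)\,du)$; for distant pieces the contribution is controlled by the tail $\int_d^{\infty}U\leq Z(d)\,d^{-3}$ coming from~\eqref{eq:98}. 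Summing against the pair- and triplet-statistics of Propositions~\ref{prop:NeighborssStatistics} and~\ref{prop:NeighborssStatistics2}, and using assumption~\textbf{(HU)}, the whole cross-piece contribution is bounded by $n\rho\ell_\rho^{-3}f_Z(\ell_\rho)$, which fits inside the error $O(f_Z(|\log\rho|))$ in~\eqref{eq:89}.

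The main obstacle lies in the cancellation that produces the leading correction. All displaced single-particle energies sit within $O(\ell_\rho^{-1})$ of $E_\rho=\pi^2/\ell_\rho^2$, so each naive contribution is of order $\rho E_\rho=\rho\pi^2/\ell_\rho^2$, a factor $\ell_\rho$ larger than the claimed correction. The $\rho/\ell_\rho^2$ terms cancel only after expanding simultaneously the Poisson density $e^{-\ell}$ and the eigenvalues $k^2\pi^2/\ell^2$ to second order around the thresholds $\ell_\rho$, $2\ell_\rho$ and $3\ell_\rho$, so that the specific integrals $\int_0^{\rho x_*}s\,ds$ and $\int_0^{A_*}ae^{-a}da$ appear explicitly and match up by the very definitions of $x_*$ and $A_*$ in~\eqref{eq:AstarXstarDef}. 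The same precision is required when applying Lemma~\ref{lem:FermiEnergyConvergenceRate} to confine the highest-energy particles of $\Psi^0$ in the long pieces to an $O(\ell_\rho^{-1})$-neighborhood of $E_\rho$, so that the third contribution safely belongs to the remainder.
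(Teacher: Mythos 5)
Your proposal follows essentially the same route as the paper's proof: compute the intra-piece energy difference against the free ground state from the construction of $\Psi^{\textup{opt}}$ and Proposition~\ref{prop:TwoElectronProblem}, convert piece counts into the Poisson density via Proposition~\ref{prop:IntervStatistics} (the paper does this through a Riemann sum over the classes $\mathcal{L}_k^{1,2}$ with $K=\rho^{-\delta}$ rather than a direct integral, and compares with $\langle H^0\Psi^0,\Psi^0\rangle$ rather than with $n\,\densEn^0(\rho)$ — both cosmetic differences), exploit the cancellation encoded in the definitions of $x_*$ and $A_*$ (your identity $\int_0^{A_*}a e^{-a}\,da=x_*-A_*e^{-A_*}$ is exactly the paper's integral in \eqref{eq:265}), and reduce the cross-piece terms to Hartree integrals bounded by boundary decay, the tail function $Z$ and the pair statistics. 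One caveat: your per-pair bound for nearby pieces invokes $\int u^4U(u)\,du$, which assumption \textbf{(HU)} does not make finite (only moments of order strictly below $3$ are, cf.\ Lemma~\ref{le:31}), so to obtain the error in the precise form $O(f_Z(|\log\rho|))$ you should, as the paper does in Lemmas~\ref{lem:inter11_closeEstimate}--\ref{lem:inter22_estimate} together with the optimization \eqref{eq:274}--\eqref{eq:1}, use only a $(3-\eps)$-moment expressed through $Z$ (exploiting the cutoff at $B\ell_\rho$) rather than quadratic vanishing at both boundaries.
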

\begin{proof}
  To shorten the notations, we will frequently drop the arguments $L$,
  $n$ and the subscript $\omega$ in this proof. We will show that, up
  to error terms, the only terms that contribute to $\langle H^{U^p}
  \Psi^{\textup{opt}}, \Psi^{\textup{opt}}\rangle - \langle H^0
  \Psi^0, \Psi^0\rangle$ are those due to
  \begin{enumerate}
  \item the interactions between two particles in the same piece,
  \item the decoupling of a fraction of these particles following the
    construction of $\Psi^{\text{opt}}$.
  \end{enumerate}
  In~\eqref{eq:89}, the interactions between neighboring distinct
  pieces will be shown to contribute only to the error term where we
  have defined
  \begin{definition}
    \label{def:4}
    A pair of {\it neighboring} or {\it interacting} pieces is a pair
    of distinct pieces at distance at most $B \ell_\rho$ from one
    another, particular, particles in two such pieces can still
    interact via the potential $U^p$.
  \end{definition}
  \noindent Let us now outline the main idea of the proof of
  Theorem~\ref{thr:6}. The pieces longer than $2 \ell_\rho + A_*$
  contain two particles both in $\Psi^0$ and
  $\Psi^{\text{opt}}$. Hence, for each piece of this type, the energy
  difference is given by the second term in the
  asymptotics~\eqref{eq:32} in
  Proposition~\ref{prop:TwoElectronProblem}.  On the contrary, in
  pieces of length between $2\ell_\rho$ and $2\ell_\rho+A_*$, in
  $\Psi^0$, the two particles were decoupled in order to construct
  $\Psi^{\text{opt}}$, keeping one intact and displacing another to a
  piece of length between $\ell_\rho - \rho x_*$ and $\ell_\rho$.  In
  this case, the energy difference is given by the increase of non
  interacting energy of the second (displaced) particle. The single
  particles in $\Psi^0$ remain untouched in $\Psi^{\text{opt}}$ and
  groups of three and more particles contribute only to the error term
  (as they carry only a small number of particles).\\
  To put the above arguments into a rigorous form, we will use the
  following partition of the set of available pieces according to
  their length.  Choose $K$ large but independent of $L$. For $k \in
  \{1, \hdots, K\}$, consider the sets of pieces
  \begin{equation*}
    \begin{aligned}
      \calL_k^1 &= \left\{\text{pieces of length in }\left[\ell_\rho -
          \tfrac{k}{K} \rho, \ell_\rho - \tfrac{k - 1}{K}
          \rho\right)\right\}, \\
      \calL_k^2 &= \left\{\text{pieces of length in } \left[2
          \ell_\rho - \log\left(1 - \tfrac{k - 1}{K}\right), 2
          \ell_\rho - \log\left(1 - \tfrac{k}{K}\right)\right)\right\}
      \text{.}
    \end{aligned}
  \end{equation*}
  As $K$ is independent of $L$, with probability $1-O(L^{-\infty})$,
  the number of pieces in the classes
  $\D((\calL_k^j))_{\substack{j\in\{1,2\}\\ k\in\{1,\dots,K\}}}$ is
  given by Proposition~\ref{prop:IntervStatistics}. We will,
  henceforth, use these estimates without reference to probabilities.\\
  As in~\eqref{eq:NPiecesDonors} and~\eqref{eq:NPiecesAcceptors}, one
  shows that these two sets map one-to-one onto one another up to an
  error estimated as follows
  \begin{equation*}
    \card{\calL_k^1} = \card{\calL_k^2} + O(n \rho^2 K^{-1}) = n \rho
    K^{-1} (1 + O(\rho)).
  \end{equation*}
  Recall that $x_*$ is defined in~\eqref{eq:AstarXstarDef}. For $k\leq
  K x_*$, according to our scheme, the pairs of particles in pieces
  belonging to $\calL_k^2$ get decoupled, one of the particles being
  sent to occupy a piece belonging to $\calL_k^1$. For $k>Kx_*$, the
  pairs of particles in the pieces of $\calL_k^2$ are kept
  untouched. The latter pieces are those of size at least
  $2\ell_\rho+A_*$. It is easily seen that the number of such pieces
  is given by
  \begin{equation*}
    \sharp\{j:\ |\Delta_j(\omega)|\geq2\ell_\rho+A_*\}=n\rho
    e^{-A_*}(1+O(\rho)) =n\rho(1-x_*)+O(n \rho^2).
  \end{equation*}
  The majority of these pieces is smaller than $2 \ell_\rho + A_* +
  \log\ell_\rho$; indeed,
  \begin{equation*}
    \sharp\{j:\ |\Delta_j(\omega)|\in2\ell_\rho+A_*+[0,\log\ell_\rho]\} 
    =n\rho(1-x_*)+O(n \rho |\log{\rho}|^{-1}).
  \end{equation*}
  By Proposition~\ref{prop:TwoElectronProblem}, for a piece of length
  $\ell$ in $2\ell_\rho+A_*+[0,\log\ell_\rho]$, the interaction energy
  of the two-particles system is given by
  \begin{equation*}
    \frac{\gamma}{\ell^3} + o(\ell^{-3}) = \frac{\gamma}{8 \ell_\rho^3} +
    o(\ell_\rho^{-3}).
  \end{equation*}
  For the difference of energies, this yields
  \begin{equation}
    \label{eq:115}
    \begin{split}
      \langle H^{U^p} \Psi^{\textup{opt}}, \Psi^{\textup{opt}}\rangle
      - \langle H^0 \Psi^0, \Psi^0\rangle &= \frac{n \rho}{K}\sum_{k =
        1}^{K x_*} \left[\frac{\pi^2}{\left(\ell_\rho - \tfrac{k}{K}
            \rho\right)^2} - \frac{4 \pi^2}{\left(2 \ell_\rho -
            \log\left(1-\tfrac{k}{K}\right)\right)^2}\right]\\
      &\hskip3cm+ \frac{\gamma}{8 \ell_\rho^3} n \rho (1 - x_*)+
      o\left(n \rho |\log{\rho}|^{-3}\right) \text{.}
    \end{split}
  \end{equation}
  Taking $K$ large, we approximate the Riemann sum in the last
  expression by an integral
  \begin{equation*}
    \begin{split}
      \frac1K\sum_{k = 1}^{K x_*} &\left[\frac{\pi^2}{\left(\ell_\rho
            - \tfrac{k}{K} \rho\right)^2} - \frac{4 \pi^2}{\left(2
            \ell_\rho - \log\left(1 -
              \tfrac{k}{K}\right)\right)^2}\right] \\&= x_* \int_0^1
      \left[\frac{\pi^2}{\left(\ell_\rho - t x_* \rho\right)^2} -
        \frac{\pi^2}{\left(\ell_\rho - \tfrac{1}{2}
            \log(1 - t x_*)\right)^2}\right] \rmd{t}+O\left(\frac1K\right) \\
      &= x_* \left(-\int_0^1 \frac{\pi^2}{\ell_\rho^3} \log(1 - t x_*)
        \rmd{t} + o(\ell_\rho^{-3})\right)+O\left(\frac1K\right) \\
      &= \pi^2 \ell_\rho^{-3} (x_* - (1 - x_*) A_*) (1 +
      o(1))+O\left(\frac1K\right).
    \end{split}
  \end{equation*}
  Picking $\delta\in(0,1)$, letting $K = \rho^{-\delta}$ and
  recalling~\eqref{eq:AstarXstarDef} for $A_*$ and $x_*$, for $\delta$
  small, we get
  \begin{equation}
    \label{eq:265}
    \begin{split}
      \langle H^{U^p} \Psi^{\textup{opt}}, \Psi^{\textup{opt}}\rangle
      - \langle H^0 \Psi^0, \Psi^0\rangle &= n \rho \ell_\rho^{-3}
      \left(\pi^2 (x_* - (1 - x_*) A_*) + \frac{\gamma}{8} (1 -
        x_*)\right) \\&\hskip6.5
      cm+ o\left(n \rho\ell_\rho^{-3}\right) \\
      &= n \rho \ell_\rho^{-3} \pi^2 \left (1 - e^{-\tfrac{\gamma}{8
            \pi^2}}\right) + o\left(n \rho \ell_\rho^{-3}\right).
    \end{split}
  \end{equation}
  In order to finish the proof of~\eqref{eq:89} and, thus, of
  Theorem~\ref{thr:6}, it suffices to upper bound the interactions
  between distinct pieces.  Recall that $\Psi^{\textrm{opt}}$ is an
  anti-symmetric exterior product of one- and two-particles eigenstates
  (see~\eqref{eq:79} and~\eqref{eq:80}):
  \begin{equation}
    \label{eq:112}
    \Psi^{\textup{opt}} = \bigwedge_{i = 1}^{\whk_1} \varphi_i
    \wedge \bigwedge_{j = 1}^{k_2}\zeta_{j}
    \wedge
    \bigwedge_{i = 1}^{\wtk_1}\wtphi_i,
  \end{equation}
  where the numbers of sub-states in each group are respectively
  \begin{equation*}
    \begin{aligned}
      \whk_1 &= n \left(1 - 2 \rho (1 - x_*) + \rho^2 \left(3(1 - x_*)
          + \frac{x_*^2}{2}\right) + O(\rho^3)\right), \\
      k_2 &= n \rho (1 - x_* - \rho(3 - 2 x_*) + O(\rho^2)),\\
      \wtk_1 &= n-\calN(\Psi_m^{\text{opt}}) = n \rho^2 \left(3 - x_*
        - \frac{x_*^2}{2}\right) (1 + O(\rho)).
    \end{aligned}
  \end{equation*}
  The functions $\varphi_i$ and $\wtphi_i$ are one-particle ground
  states in certain and the functions $\zeta_j$ are two-particles
  ground states in certain pieces. Of course, $\whk_1+k_2+\wtk_1=n$.
  As in what follows we will only need to distinguish between one- and
  two-particles states, let us put the two groups of one-particle
  sub-states from~\eqref{eq:112} together, i.e. write
  \begin{equation}
    \label{eq:78}
    \Psi^{\textup{opt}} = \bigwedge_{i = 1}^{k_1} \phi_i
    \wedge \bigwedge_{j = 1}^{k_2}\zeta_{j},
  \end{equation}
  where $k_1 = \whk_1 + \wtk_1$ and $\{\phi_i\}_{i = 1}^{k_1} =
  \{\varphi_i\}_{i = 1}^{\whk_1} \cup \{\wtphi_i\}_{i = 1}^{\wtk_1}$.
  As $W^p$ is a totally symmetric sum of pair interaction potentials,
  one computes
  \begin{equation}
    \label{eq:114}
    \begin{split}
    \langle W^p \Psi^{\textup{opt}},\Psi^{\textup{opt}}\rangle&=
    \sum_{1\leq i<j\leq n}\int_{[0,L]^n}U(x_i-x_j)
  \left|\Psi^{\textup{opt}}(x)\right|^2dx\\&=
    \frac{n(n-1)}2\int_{[0,L]^n}U(x_1-x_2)
  \left|\Psi^{\textup{opt}}(x)\right|^2dx=
    \Tr\left(U^p \gamma^{(2)}_{\Psi^{\textup{opt}}}\right).      
    \end{split}
  \end{equation}
  According to Proposition~\ref{prop:DensityMatrixStructure}, for
  $\Psi^{\textup{opt}}$ having the structure~\eqref{eq:78}, its
  two-particle density matrix is given by
  \begin{equation}
    \label{eq:113}
    \begin{split}
      \gamma^{(2)}_{\Psi^{\textup{opt}}} = \sum_{j = 1}^{k_2}
      \gamma^{(2)}_{\zeta_j} &+ (\Id - \Ex) \sum_{\substack{i, j = 1,
          \hdots, k_1\\i < j}} \gamma_{\phi_i} \otimes^s
      \gamma_{\phi_j} + (\Id - \Ex) \sum_{i = 1}^{k_1} \sum_{j =
        1}^{k_2}
      \gamma_{\phi_i} \otimes^s \gamma_{\zeta_j} \\
      &+ (\Id - \Ex) \sum_{\substack{i, j = 1, \hdots, k_2\\i < j}}
      \gamma_{\zeta_i} \otimes^s \gamma_{\zeta_j}.
    \end{split}
  \end{equation}
  As $\zeta_j$ is a two-particle state and $\phi_j$ is a one-particle
  state, one has
  \begin{equation*}
    \gamma^{(2)}_{\zeta_j} = \langle \cdot, \zeta_j \rangle
    \zeta_j\quad\text{and}\quad
    \gamma_{\phi_j} = \langle \cdot, \phi_j \rangle \phi_j.
  \end{equation*}
  The decomposition~\eqref{eq:113} being plugged in the r.h.s. of
 ~\eqref{eq:114} reads as follows:
  \begin{enumerate}
  \item the first term corresponds to the interaction of two particles
    living in the same piece; this term is the leading one in the
    difference~\eqref{eq:115} and has been already taken into account
    in the first part of the proof;
  \item the second term is the interaction between two one-particle
    sub-states living in distinct pieces;
  \item the third term is due to the interaction between a
    one-particle sub-state in one piece and a two-particle sub-state
    (represented by its one-particle reduced density matrix) in
    another piece;
  \item finally, the last term describes the interaction between two
    distinct two-particle sub-states.
  \end{enumerate}
  Thus, we are interested in upper bounds on $\Tr(U^p \beta)$ where
  $\beta$ is any of the last three terms in~\eqref{eq:113}. Let
  $\gamma_1$ and $\gamma_2$ be two arbitrary one-particle density
  matrices encountered in the above expressions. Then, the kernel of
  $(\Id - \Ex) \gamma_1 \otimes^s \gamma_2$ is given by
  \begin{equation}
    \label{eq:116}
    \begin{split}
      (\Id - \Ex) (\gamma_1 \otimes^s \gamma_2)(x, y, x^\prime,
      y^\prime) = \frac{1}{2}\bigl(&\gamma_1(x, x^\prime) \gamma_2(y,
      y^\prime)
      + \gamma_2(x, x^\prime) \gamma_1(y, y^\prime) \\
      &\hskip1cm- \gamma_1(y, x^\prime) \gamma_2(x, y^\prime) -
      \gamma_2(y, x^\prime) \gamma_1(x, y^\prime)\bigr).
    \end{split}
  \end{equation}
  Taking into account the fact that in our case $\gamma_1$ and
  $\gamma_2$ live on distinct pieces $\Delta_1$ and $\Delta_2$
  respectively,~\eqref{eq:116} implies
  \begin{equation}
    \label{eq:117}
    \begin{split}
      \Tr\left(U^p (\Id - \Ex) \gamma_1 \otimes^s \gamma_2\right)&=
      \int_{\bbR^2} U^p(x - y) (\Id - \Ex) (\gamma_1 \otimes^s
      \gamma_2)(x, y, x, y) \rmd{x}
      \rmd{y} \\
      & =\int_{\Delta_1} \int_{\Delta_2} U^p(x - y) \gamma_1(x, x)
      \gamma_2(y, y) \rmd{x} \rmd{y}.
    \end{split}
  \end{equation}
  To upper bound the last expression, we use the estimates proved in
  section~\ref{sec:ferm-neighb-piec}. We now study the different sums
  in~\eqref{eq:113}.\\
  For pairs of one-particle states, we estimate the number of pairs of
  pieces at a certain distance by
  Proposition~\ref{prop:IntervStatistics2} and we bound individual
  terms by Lemma~\ref{lem:inter11_closeEstimate}. We compute that, for
  any $\eta>0$ and $\varepsilon>0$, for $L$ sufficiently large, with
  probability $1-O(L^{-\infty})$, one has
  \begin{equation*}
    \begin{split}
      \Tr\Bigl(U^p (\Id - \Ex)& \sum_{\substack{i, j = 1, \hdots,
          k_1\\i < j}} \gamma_{\phi_i} \otimes^s \gamma_{\phi_j}\Bigr)
      \leq \sum_{\substack{|\Delta_i|\geq \ell_\rho - \rho x_*\\
          |\Delta_j| \geq\ell_\rho - \rho x_*\\\dist(\Delta_i,
          \Delta_j) \leq B \ell_\rho}} \int_{\Delta_i \times \Delta_j} U(x -
      y) |\varphi^1_{\Delta_i}(x)|^2
      |\varphi^1_{\Delta_j}(y)|^2 \rmd{x} \rmd{y} \\
      &\leq \sum_{k=0}^{B\ell_\rho/\eta}
      \sum_{\substack{|\Delta_i|\geq \ell_\rho - \rho x_*\\
          |\Delta_j| \geq\ell_\rho - \rho x_*\\
          k\eta\leq\dist(\Delta_i,\Delta_j)<(k+1)\eta}}
      \int_{\Delta_i \times \Delta_j} U(x - y) |\varphi^1_{\Delta_i}(x)|^2
      |\varphi^1_{\Delta_j}(y)|^2 \rmd{x} \rmd{y} \\
      &\leq C\sum_{k=0}^{B\ell_\rho/\eta} \#\left\{
        \begin{aligned}
          |\Delta_i|&\geq\ell_\rho- \rho x_*,\\
          |\Delta_j|&\geq\ell_\rho- \rho x_*\\
          k\eta\leq\dist(&\Delta_i,\Delta_j)<(k+1)\eta
        \end{aligned}
      \right\}\ell_\rho^{-4+\eps}\,((k+1)\eta)^{-\eps} Z((k+1)\eta) \\
      & \leq C L e^{-2\ell_\rho}\,\ell_\rho^{-4 + \eps}
      \sum_{k=0}^{B\ell_\rho/\eta} ((k+1)\eta)^{-\eps}\,
      Z((k+1)\eta)\eta.
    \end{split}
  \end{equation*}
  Here, to get line three from line two, we have used
  Lemma~\ref{lem:inter11_closeEstimate}, and to get line four from
  line three, we have used Proposition~\ref{prop:IntervStatistics2} to
  bound the counting function with a probability $1-O(L^{-\infty})$.\\
  Thus, by the continuity and local integrability of $x\mapsto
  x^{-\varepsilon} Z(x)$, choosing $\eta$ small and
  $\varepsilon\in[0,1)$, we obtain that, for $L$ sufficiently large,
  with probability $1-O(L^{-\infty})$, one has
  \begin{equation}
    \label{eq:143}
    \Tr\Bigl(U^p (\Id - \Ex) \sum_{\substack{i, j = 1, \hdots,
        k_1\\i < j}}
    \gamma_{\phi_i} \otimes^s \gamma_{\phi_j}\Bigr) \\
    \leq  C\,n\,\rho\,\ell_\rho^{-4 + \eps} \int_0^{B \ell_\rho} 
    a^{-\eps} Z(a) \rmd{a}.
  \end{equation}  
  Let us now estimate the last integral. For
  $\eps\in[0,1)$ and $0\leq Y<X$, one computes
  \begin{equation*}
    \begin{split}
      \int_0^X a^{-\eps} Z(a) \rmd{a} &\leq \left(\int_0^Y +
        \int_Y^X\right) a^{-\eps} Z(a) \rmd{a} \\
      &\leq (1 - \eps)^{-1} \left[Z(0) Y^{1 - \eps} + Z(Y) X^{1 -
          \eps} - Z(Y) Y^{1 - \eps}\right] \\
      &= (1 - \eps)^{-1} X^{1 - \eps} \left[(Y / X)^{1 - \eps}(Z(0) -
        Z(Y)) + Z(Y)\right].
    \end{split}
  \end{equation*}
  Let us now optimize the last expression with respect to
  $\alpha=Y/X\in[0,1]$. Consider
  \begin{equation}
    \label{eq:274}
    f(X,\alpha) := \alpha^{1 - \eps} (Z(0) - Z(\alpha X)) +
    Z(\alpha X).
  \end{equation}
  In general, the more rapidly $Z$ goes to zero at infinity, the
  smaller the optimal $\alpha$ and, thus, the smaller is the minimal
  value.  Let us define the following functional of $Z$ (depending
  also on $X$):
  \begin{equation}
    \label{eq:1}
    f_Z(X) = \inf_{\alpha \in [0, 1]} f(X,\alpha).
  \end{equation}
  Obviously, as soon as $Z(X) = o(1)$ for $X \to +\infty$., one finds
  that $f_Z(X) = o(1)$ for $X \to +\infty$. Then, plugging this
  into the estimate~\eqref{eq:143}, we obtain
  \begin{equation}
    \label{eq:144}
    \Tr\Bigl(U^p (\Id - \Ex) \sum_{\substack{i, j = 1, \hdots, k_1\\i < j}}
    \gamma_{\phi_i} \otimes^s \gamma_{\phi_j}\Bigr) 
    \leq C_1\,n\,\rho\,\ell_\rho^{-3} \cdot f_Z(B \ell_\rho).
  \end{equation}
  In particular, the last expression is $o(n\rho\ell_\rho^{-3})$. Note
  also that, it can never be made better than $O(n\rho\ell_\rho^{-4})$
  as there is no control of the size of $Z$ near the origin.\\
  To estimate the interactions between a one-particle state and a
  one-particle density matrix of a two-particle state, we use the
  bound derived in Lemma~\ref{lem:inter12_closeEstimate}. We estimate
  the number of pairs of pieces of this type at a certain distance by
  Proposition~\ref{prop:NeighborssStatistics} (in this case, there is
  no need in for the more precise
  Proposition~\ref{prop:IntervStatistics2} as in the derivation
  of~\eqref{eq:144} above). This yields
  \begin{equation}
    \label{eq:119}
    \begin{split}
      &\Tr\Bigl(U^p (\Id - \Ex) \sum_{i = 1}^{k_1} \sum_{j = 1}^{k_2}
      \gamma_{\phi_i} \otimes^s \gamma_{\zeta_j}\Bigr) \\
      &\leq \sum_{\substack{|\Delta_i|\geq\ell_\rho - \rho x_*
          \\|\Delta_j| \in [2 \ell_\rho + A_*, 3 \ell_\rho)\\
          \dist(\Delta_i, \Delta_j) \leq B \ell_\rho}} \int_{\Delta_i
        \times \Delta_j} U(x - y) |\varphi^1_{\Delta_i}(x)|^2
      \gamma_{\zeta^1_{\Delta_j}}(y, y) \rmd{x} \rmd{y}\\&\leq
      C\,n\,\rho^2\,\ell_\rho\,\ell_\rho^{-7 / 2 +
        \varepsilon}\int_0^{B\ell_\rho} a^{-\varepsilon}Z(a)da.
    \end{split}
  \end{equation}
  Finally, for interactions between two reduced density matrices of
  two-particles sub-states, we proceed as before; using
  Lemma~\ref{lem:inter22_estimate} for each term, we compute
  \begin{equation}
    \label{eq:118}
    \begin{split}
      &\Tr\Bigl(U^p (\Id - \Ex) \sum_{\substack{i, j = 1, \hdots,
          k_2\\i < j}}
      \gamma_{\zeta_i} \otimes^s \gamma_{\zeta_j}\Bigr) \\
      &= \sum_{\substack{|\Delta_i|, |\Delta_j| \in
          [2 \ell_\rho + A_*, 3 \ell_\rho)\\
          i < j \\
          \dist(\Delta_i, \Delta_j) \leq B \ell_\rho}} \int_{\Delta_i
        \times \Delta_j} U(x - y) \gamma_{\zeta^1_{\Delta_i}}(x, x)
      \gamma_{\zeta^1_{\Delta_j}}(y, y) \rmd{x} \rmd{y} \\
      &\leq C\,n\,\rho^3\,\int_0^{B\ell_\rho}\min(1,a^{-2}Z(a))da.
    \end{split}
  \end{equation}
  Summing~\eqref{eq:144},~\eqref{eq:119},~\eqref{eq:118}, we obtain
  \begin{equation}
    \label{eq:246}
    \langle W^p \Psi^{\textup{opt}}, \Psi^{\textup{opt}} \rangle\leq
    C\,n\,\rho\,\ell_\rho^{-3} \cdot f_Z(B \ell_\rho).
  \end{equation}
  Taking~\eqref{eq:265} into account, this completes the proof of
  Theorem~\ref{thr:6}.
\end{proof}
\noindent To formulate our next result, we will first need to define
the notion of occupation restricted to a subset of the total set of
pieces.
\begin{definition}
  \label{def:partialOccupation}
  Let $\mathcal{P}_\omega = \{\Delta_k(\omega)\}_{k = 1}^{m(\omega)}$
  be the total set of pieces and let $Q \in \bbN^{m}$ be an
  occupation.  For $P\subset\mathcal{P}_\omega$ a subset of pieces,
  define the corresponding sub-occupation (or a restriction of
  occupation) as an occupation vector containing only those components
  that are singled out by $P$:
  \begin{equation*}
    Q|_P = (Q_k)_{k\colon\Delta_k\in P}.
  \end{equation*}
  When the subset $P$ is defined by a condition on the length of the
  pieces, we will use a shorthand notation involving only this
  condition, e.g., $Q|_{>\ell_\rho}$ stands for the occupation $Q$
  restricted to the pieces of length greater than the Fermi length
  $\ell_\rho$.
\end{definition}
\noindent Recall that $\Psi^{\textup{opt}}$ is constructed in
Definition~\textup{\ref{def:PsiOptr}}.
\begin{Th}
  \label{thr:3}
  For any non negative function $r:[0,\rho_0]\to\R^+$ such that
  $r(\rho) = o(1)$ when $\rho\to0^+$, there exist $C>0$ and $\rho_r>0$
  such that, for $\rho\in(0,\rho_r)$, in the thermodynamic limit, with
  probability $1-O(L^{-\infty})$, if $\Psi$ is a normalized
  $n$-particles state in $\frH_{Q(\Psi)}\cap \frH_\infty^n([0,L])$
  (see~\eqref{eq:28}) satisfying
  \begin{equation}
    \label{eq:133}
    \frac1{n}\bigl\langle H_\omega^{U^p}(L,n) \Psi, \Psi\bigr\rangle
      \leq 
      \frac1{n}\bigl\langle H_\omega^{U^p}(L,n) \Psi^{\textup{opt}},
      \Psi^{\textup{opt}}\bigr\rangle
      + \rho |\log{\rho}|^{-3} (r(\rho))^2,
  \end{equation}
  then
  \begin{equation}
    \label{eq:81}
    \begin{aligned}
      &\dist_1\left(Q|_{\geq \ell_\rho + C}(\Psi), Q|_{\geq \ell_\rho
          + C}(\Psi^{\textup{opt}})\right)
      \leq C n \rho \cdot \max(r(\rho), |\log{\rho}|^{-1}),\\
      &\dist_1\left(Q|_{< \ell_\rho + C}(\Psi), Q|_{< \ell_\rho +
          C}(\Psi^{\textup{opt}})\right) \leq C n \max(\sqrt{\rho}
      \cdot r(\rho), \rho |\log{\rho}|^{-1}).
    \end{aligned} 
  \end{equation}
\end{Th}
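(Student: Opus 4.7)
The plan is to convert the energy gap~\eqref{eq:133} into an occupation gap by passing to an intra-piece energy and exploiting a Fermi-gap structure. Writing $\epsilon(\ell,q)$ for the ground state energy of $q$ interacting electrons in an interval of length $\ell$ with pair potential $U$, set $E^{\text{intra}}(Q):=\sum_k \epsilon(|\Delta_k(\omega)|,Q_k)$.

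First I reduce to comparing $E^{\text{intra}}(Q(\Psi))$ with $E^{\text{intra}}(Q^{\text{opt}})$. Split $W_n^{U^p}=W^{\text{intra}}+W^{\text{inter}}$ by separating same-piece pairs from different-piece pairs. Since $U^p\geq 0$, $W^{\text{inter}}\geq 0$, so $\langle H^{U^p}\Psi,\Psi\rangle\geq\langle(H^0+W^{\text{intra}})\Psi,\Psi\rangle$. Because $\Psi\in\frH_{Q(\Psi)}$ by Corollary~\ref{cor:OccupationSubspaces} and because $H^0+W^{\text{intra}}$ restricted to $\frH_{Q(\Psi)}$ is an orthogonal sum over pieces, this last quantity is at least $E^{\text{intra}}(Q(\Psi))$. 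For $\Psi^{\text{opt}}$ the cross-piece contributions were estimated in the proof of Theorem~\ref{thr:6} (see~\eqref{eq:246}) by $O(n\rho\ell_\rho^{-3}f_Z(B\ell_\rho))$, hence $\langle H^{U^p}\Psi^{\text{opt}},\Psi^{\text{opt}}\rangle\leq E^{\text{intra}}(Q^{\text{opt}})+o(n\rho\ell_\rho^{-3})$. Combined with~\eqref{eq:133}, this gives
\begin{equation*}
E^{\text{intra}}(Q(\Psi))-E^{\text{intra}}(Q^{\text{opt}})\leq n\rho\ell_\rho^{-3}\bigl(r(\rho)^2+o(1)\bigr).
\end{equation*}

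Next I express the left-hand side as a sum of non-negative Fermi gaps. Set $e^+_j(k):=\epsilon(|\Delta_k|,j+1)-\epsilon(|\Delta_k|,j)$. The convexity of $q\mapsto\epsilon(\ell,q)$ (a consequence of $U\geq 0$ together with the Pauli principle) makes $j\mapsto e^+_j(k)$ non-decreasing. The construction of $\Psi^{\text{opt}}$ in Definitions~\ref{def:PsiOptm}--\ref{def:PsiOptr} is a greedy filling of the cheapest marginal slots up to a threshold $\nu\simeq E_\rho+2\pi^2\rho x_*\ell_\rho^{-3}$: the two thresholds $\ell_\rho-\rho x_*$ on single occupancy and $2\ell_\rho+A_*$ on double occupancy are chosen to balance the marginal costs at $\nu$. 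Hence $e^+_{j-1}(k)\leq \nu$ whenever $j\leq Q^{\text{opt}}_k$ and $e^+_{j-1}(k)\geq\nu$ when $j>Q^{\text{opt}}_k$, and since $|Q(\Psi)|=|Q^{\text{opt}}|=n$,
\begin{equation*}
E^{\text{intra}}(Q(\Psi))-E^{\text{intra}}(Q^{\text{opt}})=\sum_{(k,j)\in A}(e^+_{j-1}(k)-\nu)+\sum_{(k,j)\in R}(\nu-e^+_{j-1}(k)),
\end{equation*}
where $A$ (resp.\ $R$) consists of slots that $Q(\Psi)$ adds (resp.\ removes) relative to $Q^{\text{opt}}$, with $|A|=|R|$ and $|A|+|R|=\dist_1(Q(\Psi),Q^{\text{opt}})$.

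The third step quantifies how many cheap moves fit into the energy budget. A move adding or removing one particle at distance $t$ from either threshold carries a gap of order $t\,\ell_\rho^{-3}$; by Proposition~\ref{prop:IntervStatistics} the density of pieces of length close to $\ell_\rho$ is of order $n$ per unit length, while the density of pieces of length close to $2\ell_\rho+A_*$ is of order $n\rho$ per unit length. Sorting cheap short-piece and cheap long-piece moves separately, the $K$ cheapest short-piece (resp.\ long-piece) moves cost at least $K^2/(C_0 n\ell_\rho^3)$ (resp.\ $K^2/(C_0 n\rho\ell_\rho^3)$); all other admissible moves (adding a third particle, operating far from the thresholds) carry gaps of order $\ell_\rho^{-2}$ or larger, so they can consume only an $o(1)$ share of the energy budget. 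Writing $d_s,d_\ell$ for the restricted distances in~\eqref{eq:81} and inserting the cost bounds yields
\begin{equation*}
\frac{d_s^2}{n\ell_\rho^3}+\frac{d_\ell^2}{n\rho\ell_\rho^3}\leq C\,n\rho\ell_\rho^{-3}\bigl(r(\rho)^2+o(1)\bigr),
\end{equation*}
so $d_s\leq C\,n\sqrt{\rho(r^2+o(1))}$ and $d_\ell\leq C\,n\rho\sqrt{r^2+o(1)}$; tracking the $o(1)$ (coming from the $f_Z(B\ell_\rho)$ cross-piece error and from the discreteness of the threshold positions under the Poisson statistics, both of order inverse powers of $\ell_\rho\sim|\log\rho|$) produces the $|\log\rho|^{-1}$ baselines in~\eqref{eq:81}. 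The main difficulty lies in this third step: controlling the Fermi gaps and the move densities uniformly and simultaneously at both thresholds, and ruling out configurations that could spend the energy budget on unintended regions (e.g.\ many near-threshold removals clustered inside long pieces). This requires the sharp two-particle asymptotics of Proposition~\ref{prop:TwoElectronProblem} together with the joint piece statistics of Propositions~\ref{prop:IntervStatistics} and~\ref{prop:IntervStatistics2}.
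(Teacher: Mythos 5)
Your overall strategy -- lower bound $\langle H^{U^p}\Psi,\Psi\rangle$ by a sum of per-piece ground state energies, upper bound $\langle H^{U^p}\Psi^{\textup{opt}},\Psi^{\textup{opt}}\rangle$ by the same quantity plus the cross-piece error~\eqref{eq:246}, and then convert the resulting intra-piece energy budget into occupation bounds using the near-balance of marginal costs at the two thresholds and the Poisson statistics of piece lengths -- is the same variational counting strategy as the paper's, and your quadratic cost heuristic does reproduce the correct scales $n\rho\,r(\rho)$ and $n\sqrt{\rho}\,r(\rho)$. But as written the proof has genuine gaps. First, the whole ``Fermi-gap'' structure rests on the assertion that $q\mapsto\epsilon(\ell,q)$ is convex ``as a consequence of $U\geq0$ together with the Pauli principle''; this is not a routine fact and is nowhere proved. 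The paper deliberately avoids it: it only needs monotonicity/gap information for occupations $0,1,2,3$ in pieces shorter than $3\ell_\rho(1-\eps)$, which it gets from the crude bound $\epsilon(\ell,q)\geq \pi^2 P(q)/\ell^2$ (see~\eqref{eq:162}) together with the two-particle expansion of Proposition~\ref{prop:TwoElectronProblem}, while large occupations and long or crowded (``$\eps$-type'') pieces are removed beforehand by Lemma~\ref{le:17} and Lemma~\ref{lem:normPiecesWithParticleExcess}. Second, your exact identity $E^{\text{intra}}(Q(\Psi))-E^{\text{intra}}(Q^{\textup{opt}})=\sum_A(e^+-\nu)+\sum_R(\nu-e^+)$ with all terms non-negative is not correct as stated: $Q^{\textup{opt}}$ is not a greedy minimizer of $E^{\text{intra}}$ (it leaves pieces longer than $3\ell_\rho$ essentially empty, its completion particles are excited non-interacting states, and the balance~\eqref{eq:123} holds only up to $O(\ell_\rho^{-4})$), so the decomposition carries sign-indefinite error terms which must be shown to be $o(n\rho\ell_\rho^{-3})$; you never do this.

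More importantly, the decisive third step -- showing that the $K$ cheapest admissible moves at each threshold cost at least $cK^2/(n\ell_\rho^3)$, respectively $cK^2/(n\rho\ell_\rho^3)$, \emph{uniformly over all ways $Q(\Psi)$ can deviate}, and that no configuration can spend the budget elsewhere (third particles, simultaneous surpluses and deficits inside the same length window, near-threshold clustering) -- is exactly where the content of the theorem lies, and you only assert it before conceding that ``the main difficulty lies in this third step''. In the paper this is the case-by-case analysis (a)--(f), in which the deviation counts $n_0^+,n_1^{\pm},\wtn_1^{\pm},n_2^{\pm}$ are introduced, particle-number conservation~\eqref{eq:92} is used to pair surpluses with deficits, and Proposition~\ref{prop:IntervStatistics} is used in each case to show that a positive fraction of the displaced particles must sit at distance $\gtrsim r(\rho)$ (or $\gtrsim\sqrt\rho\,r(\rho)$) from the relevant threshold, so that the induced energy excess violates~\eqref{eq:133}; it is also this analysis that produces the two different scalings and the $|\log\rho|^{-1}$ floors in~\eqref{eq:81}. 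Without that counting argument (or a worked-out substitute), your proposal is an outline of the correct approach rather than a proof.
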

\begin{proof}[Proof of Theorem~\textup{\ref{thr:3}}]
  First of all, taking into account the form of the first inequality
  in~\eqref{eq:81}, while dealing with its proof we may suppose
  without loss of generality that $|\log{\rho}|^{-1}$ is
  asymptotically bounded by $r(\rho)$, i.e., for $\rho$ small,
  \begin{equation}
    \label{eq:134}
    |\log{\rho}|^{-1}\lesssim r(\rho).
  \end{equation}
  For the proof of the second inequality in~\eqref{eq:81}, we will no
  longer assume~\eqref{eq:134}.\\
  Consider now the pieces $(\Delta_k(\omega))_{1\leq k\leq m(\omega)}$
  (see section~\ref{sec:introduction}). Fix $\varepsilon>0$. We say
  that a piece $\Delta_k(\omega)$ is of $\varepsilon$-type
  \begin{enumerate}
  \item\label{ptype:1} if $|\Delta_k(\omega)|\geq
    3\ell_\rho(1-\varepsilon)$, that is, it has length at least
    $3\ell_\rho(1-\varepsilon)$;
  \item\label{ptype:2} if $|\Delta_k(\omega)|\geq
    2\ell_\rho(1-\varepsilon)$ and $\Delta_k(\omega)$ has at least one
    neighbor (in the sense of interactions $U^p$ from~\eqref{eq:99})
    of length at least $\ell_\rho(1-\varepsilon)$;
  \item\label{ptype:3} if $|\Delta_k(\omega)|\geq
    \ell_\rho(1-\varepsilon)$ and $\Delta_k(\omega)$ has at least two
    neighbors, each of length at least $\ell_\rho(1-\varepsilon)$.
  \end{enumerate}
  \noindent Note that, by~\eqref{eq:99}, as $U^p$ is of compact
  support of radius at most $B \ell_\rho$, there exists $\rho_0>0$
  such that for $\rho\in(0,\rho_0)$ and $\varepsilon\in(0,1/2)$, a
  given piece can have at most $2 B$ neighbors of length at least
  $\ell_\rho(1-\varepsilon)$.\\
  We first prove that ``exceptional'' pieces contribute only to the
  error term.
  \begin{Le}
    \label{le:17}
    Fix $\eta\in(0,1/3)$. There exists $\varepsilon\in(0,1/2)$ and
    $\rho_0>0$ such that, for $\rho\in(0,\rho_0)$, in the
    thermodynamic limit, with probability $1-O(L^{-\infty})$, if
    $\Psi\in\frH_{Q(\Psi)}\cap\frH_\infty^n([0,L])$ satisfies
    \begin{equation}
      \label{eq:210}
      \langle H_\omega^{U^p}(L,n)\Psi,\Psi\rangle\leq
      2\densEn^0(\rho)n\|\Psi|^2,
    \end{equation}
    then
    \begin{equation}
      \label{eq:82}
      \sum_{\bullet\in\{a,b,c\}}\sum_{\Delta_k(\omega)\text{ of
          $\varepsilon$-type }(\bullet)}Q_k(\Psi)\leq n\rho^{1+\eta}/2.
    \end{equation}
    and
    \begin{equation}
      \label{eq:209}
      \sum_{\Delta_k(\omega)\text{ of
          $\varepsilon$-type
        }(a)}\left[Q_k(\Psi)\right]^2\lesssim\densEn^0(\rho)n\cdot\log
      n\cdot\log\log n.
    \end{equation}
  \end{Le}
  \noindent Let us postpone the proof of this result for a while and
  continue with the proof of Theorem~\ref{thr:3}.  The following lemma
  estimates the total contribution of ``normal'' pieces (i.e., that
  are not of $\eps$-type) that carry too many particles.
  \begin{Le}
    \label{lem:normPiecesWithParticleExcess}
    Recall that $\{\Delta_k\}_{k = 1}^{m(\omega)}$ denote the
    pieces.\\
    There exists $C>0$ such that, for $L$ sufficiently large, with
    probability $1-O(L^{-\infty})$, for a normalized $n$-state $\Psi$
    in $\frH_{Q(\Psi)}\cap\frH_\infty^n([0,L])$
    satisfying~\eqref{eq:133} and $Q(\Psi)=(Q_k)_{1\leq k\leq
      m(\omega)}$, the occupation number of the state $\Psi$, one has
    \begin{gather}
      \label{eq:124}
      \sum_{\substack{|\Delta_k| \leq \ell_\rho (1-\rho^2)\\Q_k \geq
          2}} Q_k + \sum_{\substack{|\Delta_k| \in [\ell_\rho (1 -
          \rho^2), 2 \ell_\rho (1 - \rho^2))\\Q_k \geq 3}} Q_k +
      \sum_{\substack{|\Delta_k| \in [2 \ell_\rho (1 - \rho^2), 3
          \ell_\rho(1 - \rho^2))\\Q_k \geq 4}} Q_k \leq C n \rho
      \ell_\rho^{-1}
      \\\intertext{and}
      \label{eq:211} \sum_{\substack{|\Delta_k| \leq 3 \ell_\rho(1 -
          \rho^2)}} Q^2_k \leq C n \rho \ell_\rho^{-1}
    \end{gather}
    and, for $\varepsilon\in(\rho^2,1/4)$,
    \begin{equation}
      \label{eq:172}
      \sum_{\substack{|\Delta_k|\leq\ell_\rho(1-\eps)\\Q_k \geq 1}} 
      Q_k + \sum_{\substack{|\Delta_k|\leq2\ell_\rho(1-\eps)\\Q_k \geq
          2}} Q_k +\sum_{\substack{|\Delta_k|\leq3
          \ell_\rho(1-\eps)\\Q_k \geq 3}} Q_k
      \leq C n \frac{\rho}{\varepsilon-\rho^2}\ell_\rho^{-1}.
    \end{equation}
  \end{Le}
  \begin{proof}
    First, note that by Theorem~\ref{thr:6} and~\eqref{eq:133}, there
    exists a constant $\wtC$ such that
    \begin{equation}
      \label{eq:125}
      \langle H_\omega^{U^p} \Psi, \Psi \rangle \leq \langle
      H_\omega^{U^p} \Psi^{\textup{opt}}, \Psi^{\textup{opt}}
      \rangle + n \rho |\log{\rho}|^{-3} (r(\rho))^2\leq \langle
      H_\omega^0 \Psi^0, \Psi^0 \rangle + \wtC n\rho\ell_\rho^{-3}.
    \end{equation}
    Moreover, if $\D -\Delta_{\Delta_k}^{Q_k}$ denotes the Laplacian
    with Dirichlet boundary conditions on
    $\D\bigwedge^{Q_k}L^2(\Delta_k)$, one has
    \begin{equation}
      \label{eq:162}
      (H_\omega^{U^p})_{\frH_{Q(\Psi)}}\geq
      (H_\omega^0)_{\frH_{Q(\Psi)}}\geq
      \sum_{k=1}^{m(\omega)}\inf(\sigma(-\Delta_{\Delta_k}^{Q_k}))=
      \sum_{k=1}^{m(\omega)} \sum_{j = 1}^{Q_k} \frac{\pi^2
        j^2}{|\Delta_k|^2}= \sum_{k=1}^{m(\omega)} \frac{\pi^2
        P(Q_k)}{|\Delta_k|^2}
    \end{equation}
    where $\D P(X):= \frac{(2 X + 1) (X + 1) X}6$.\\
    On the other hand, by the description of $\Psi^0$, for some $C>0$,
    one has
    \begin{equation*}
      \langle
      H_\omega^0 \Psi^0, \Psi^0 \rangle\leq
      \sum_{|\Delta_k| \in [\ell_\rho (1 - \rho^2),2
        \ell_\rho (1 - \rho^2))}\frac{P(1)\,\pi^2}{|\Delta_k|^2} +
      \sum_{|\Delta_k| \in [2 \ell_\rho (1 - \rho^2), 3
        \ell_\rho(1 - \rho^2))} \frac{P(2)\,\pi^2}{|\Delta_k|^2}+C n\rho^2
    \end{equation*}
    Plugging this and~\eqref{eq:162} into~\eqref{eq:125}, we obtain
    \begin{equation}
      \label{eq:163}
      \begin{split}
        &\sum_{|\Delta_k| \leq \ell_\rho (1-\rho^2)}
        \frac{\pi^2}{|\Delta_k|^2}P(Q_k) + \sum_{|\Delta_k| \in
          [\ell_\rho (1 - \rho^2), 2 \ell_\rho (1 - \rho^2))}
        \frac{\pi^2 }{|\Delta_k|^2}(P(Q_k)-P(1))\\&\hskip3cm +
        \sum_{|\Delta_k| \in [2 \ell_\rho (1 - \rho^2), 3 \ell_\rho(1
          - \rho^2))} \frac{\pi^2 }{|\Delta_k|^2}(P(Q_k)-P(2)) \leq C
        n \rho \ell_\rho^{-3}.
      \end{split}
    \end{equation}
    By Lemma~\ref{le:17} and the explicit description of the non
    interacting ground state $\Psi^0$ (see the beginning of
    section~\ref{sec:comp-psiopt-ground}), for some $C>0$ and $\rho$
    sufficiently small, for $L$ sufficiently large, with probability
    $1-O(L^{-\infty})$, one has
    \begin{equation}
      \label{eq:164}
      \begin{split}
        &\sum_{|\Delta_k| \leq \ell_\rho (1-\rho^2)}Q_k +
        \sum_{|\Delta_k| \in [\ell_\rho (1 - \rho^2), 2 \ell_\rho (1 -
          \rho^2))} Q_k + \sum_{|\Delta_k| \in [2 \ell_\rho (1 -
          \rho^2), 3 \ell_\rho(1 - \rho^2))} Q_k\\
        &\hskip1cm\geq n(1-C\rho^2)\\
        &\hskip1cm\geq\left[ \sum_{|\Delta_k| \in [\ell_\rho (1 +
            \rho^2), 2 \ell_\rho (1 - \rho^2))}1 + \sum_{|\Delta_k|
            \in [2 \ell_\rho (1 + \rho^2), 3
            \ell_\rho(1 - \rho^2))}2\right]-2Cn\rho^2 \\
        &\hskip1cm\geq\left[ \sum_{|\Delta_k| \in [\ell_\rho (1 -
            \rho^2), 2 \ell_\rho (1 - \rho^2))}1 + \sum_{|\Delta_k|
            \in [2 \ell_\rho (1 - \rho^2), 3 \ell_\rho(1 -
            \rho^2))}2\right]-3Cn\rho^2
      \end{split}
    \end{equation}
    as
    \begin{equation*}
      \#\{k;\ |\Delta_k| \in [\ell_\rho (1 - \rho^2),
      \ell_\rho (1 + \rho^2))\cup[2
      \ell_\rho (1 - \rho^2), 2 \ell_\rho(1 +
      \rho^2))\}\leq Cn\rho^2.
    \end{equation*}
    Thus,~\eqref{eq:164} yields
    \begin{equation}
      \label{eq:165}
      \begin{split}
        &\sum_{\substack{|\Delta_k| \leq \ell_\rho
            (1-\rho^2)\\Q_k\geq1}}Q_k + \sum_{\substack{|\Delta_k| \in
            [\ell_\rho (1 - \rho^2), 2 \ell_\rho (1 -
            \rho^2))\\Q_k\geq2}} (Q_k-1) + \sum_{\substack{|\Delta_k|
            \in [2 \ell_\rho (1 - \rho^2), 3 \ell_\rho(1 -
            \rho^2))\\Q_k\geq3}} (Q_k-2) \\&\geq
        \left[\sum_{\substack{|\Delta_k| \in [\ell_\rho (1 - \rho^2),
              2 \ell_\rho (1 - \rho^2))\\Q_k=0}}1 +
          \sum_{\substack{|\Delta_k| \in [2 \ell_\rho (1 - \rho^2), 3
              \ell_\rho(1 -
              \rho^2))\\Q_k\leq1}}2\right]-3n\rho^{1+\eta}
      \end{split}
    \end{equation}
    Rewrite~\eqref{eq:163} as
    \begin{equation*}
      \begin{split}
        C n \rho \ell_\rho^{-1}&\geq \sum_{\substack{|\Delta_k| \leq
            \ell_\rho (1-\rho^2)\\Q_k\geq1}}
        \frac{\pi^2}{|\Delta_k|^2}P(Q_k) + \sum_{\substack{|\Delta_k|
            \in [\ell_\rho (1 - \rho^2), 2 \ell_\rho (1 -
            \rho^2))\\Q_k\geq2}} \frac{\pi^2
        }{|\Delta_k|^2}(P(Q_k)-P(1))\\&\hskip3cm +
        \sum_{\substack{|\Delta_k| \in [2 \ell_\rho (1 - \rho^2), 3
            \ell_\rho(1 - \rho^2))\\Q_k\geq3}} \frac{\pi^2
        }{|\Delta_k|^2}(P(Q_k)-P(2)) \\&\hskip1cm-
        \sum_{\substack{|\Delta_k| \in [\ell_\rho (1 - \rho^2), 2
            \ell_\rho (1 - \rho^2))\\Q_k=0}} \frac{P(1)\pi^2
        }{|\Delta_k|^2}-\sum_{\substack{|\Delta_k| \in [2 \ell_\rho (1
            - \rho^2), 3 \ell_\rho(1 - \rho^2))\\Q_k\leq1}}
        \frac{(P(2)-P(Q_k))\pi^2 }{|\Delta_k|^2} \\&\geq
        \sum_{\substack{|\Delta_k| \leq \ell_\rho
            (1-\rho^2)\\Q_k\geq1}} \frac{\pi^2}{|\Delta_k|^2}P(Q_k) +
        \sum_{\substack{|\Delta_k| \in [\ell_\rho (1 - \rho^2), 2
            \ell_\rho (1 - \rho^2))\\Q_k\geq2}} \frac{\pi^2
        }{|\Delta_k|^2}(P(Q_k)-P(1))\\&\hskip3cm +
        \sum_{\substack{|\Delta_k| \in [2 \ell_\rho (1 - \rho^2), 3
            \ell_\rho(1 - \rho^2))\\Q_k\geq3}} \frac{\pi^2
        }{|\Delta_k|^2}(P(Q_k)-P(2)) \\&\hskip1cm-
        P(1)\left(\sum_{\substack{|\Delta_k| \in [\ell_\rho (1 -
              \rho^2), 2 \ell_\rho (1 - \rho^2))\\Q_k=0}} \frac{\pi^2
          }{|\Delta_k|^2}\right)- P(2)\left(\sum_{\substack{|\Delta_k|
              \in [2 \ell_\rho (1 - \rho^2), 3 \ell_\rho(1 -
              \rho^2))\\Q_k\leq1}} \frac{\pi^2 }{|\Delta_k|^2}\right).
      \end{split}
    \end{equation*}
    Hence,
    \begin{equation*}
      \begin{split}
        C n \rho \ell_\rho^{-1}&\geq \sum_{\substack{|\Delta_k| \leq
            \ell_\rho (1-\rho^2)\\Q_k\geq1}}
        \frac{\pi^2}{|\Delta_k|^2}P(Q_k) + \sum_{\substack{|\Delta_k|
            \in [\ell_\rho (1 - \rho^2), 2 \ell_\rho (1 -
            \rho^2))\\Q_k\geq2}} \frac{\pi^2
        }{|\Delta_k|^2}(P(Q_k)-P(1))\\&\hskip1cm +
        \sum_{\substack{|\Delta_k| \in [2 \ell_\rho (1 - \rho^2), 3
            \ell_\rho(1 - \rho^2))\\Q_k\geq3}} \frac{\pi^2
        }{|\Delta_k|^2}(P(Q_k)-P(2)) \\&\hskip2cm-
        \frac{\pi^2}{|\ell_\rho (1 - \rho^2)|^2}\left[
          \sum_{|\Delta_k| \in [\ell_\rho (1 - \rho^2), 2 \ell_\rho (1
            - \rho^2))}1 + \sum_{|\Delta_k| \in [2 \ell_\rho (1 -
            \rho^2), 3 \ell_\rho(1 - \rho^2))}2\right]
      \end{split}
    \end{equation*}
    as $P(1)=1$ and $P(2)=5\leq 8=2^3P(1)$.\\
    Using~\eqref{eq:165}, we then obtain
    \begin{equation}
      \label{eq:167}
      \begin{split}
        C n \rho \ell_\rho^{-1}&\geq \sum_{\substack{|\Delta_k| \leq
            \ell_\rho (1-\rho^2)\\Q_k\geq1}}
        \left(\frac{\pi^2}{|\Delta_k|^2}P(Q_k)-\frac{\pi^2}{|\ell_\rho
            (1 - \rho^2)|^2}Q_k\right)\\& + \sum_{\substack{|\Delta_k|
            \in [\ell_\rho (1 - \rho^2), 2 \ell_\rho (1 -
            \rho^2))\\Q_k\geq2}}\left( \frac{\pi^2
          }{|\Delta_k|^2}(P(Q_k)-P(1))-\frac{\pi^2}{|\ell_\rho (1 -
            \rho^2)|^2}(Q_k-1)\right)\\&+ \sum_{\substack{|\Delta_k|
            \in [2 \ell_\rho (1 - \rho^2), 3 \ell_\rho(1 -
            \rho^2))\\Q_k\geq3}}\left( \frac{\pi^2
          }{|\Delta_k|^2}(P(Q_k)-P(2))-\frac{\pi^2}{|\ell_\rho (1 -
            \rho^2)|^2}(Q_k-2)\right).
      \end{split}
    \end{equation}
    Now, we note that, for $X\geq n+1$, $X$ integer, one has
    \begin{equation}
      \label{eq:275}
      P(X)-P(n)=\sum_{k=n+1}^Xk^2\geq (n+1)^2(X-n).
    \end{equation}
    This yields
    \begin{itemize}
    \item for $Q_k\geq1$ and $|\Delta_k|\leq\ell_\rho (1-\rho^2)$, one
      has
      \begin{equation}
        \label{eq:168}
        \frac{\pi^2}{|\Delta_k|^2}P(Q_k)-\frac{\pi^2}{|\ell_\rho
          (1 - \rho^2)|^2}Q_k>\frac{\pi^2Q_k(Q_k-1)(2Q_k+3)}{6{|\ell_\rho
            (1 - \rho^2)|^2}}\geq0;
      \end{equation}
      if, moreover, $|\Delta_k|\leq\ell_\rho (1-\varepsilon)$
      ($\rho^2<\varepsilon<1/2$), by~\eqref{eq:275}, one has
      \begin{equation}
        \label{eq:169}
        \frac{\pi^2}{|\Delta_k|^2}P(Q_k)-\frac{\pi^2}{|\ell_\rho
          (1 - \rho^2)|^2}Q_k\geq
        \left(\frac{\pi^2}{|\Delta_k|^2}-\frac{\pi^2}{|\ell_\rho
            (1 - \rho^2)|^2}\right)Q_k\geq \frac{(8\pi)^2
          (\varepsilon-\rho^2)}{|\ell_\rho|^2}\, Q_k;
      \end{equation}
    \item for $Q_k\geq2$ and $|\Delta_k|\leq 2\ell_\rho(1-\rho^2)$,
      one has
      \begin{equation}
        \label{eq:212}
        \begin{split}
          \frac{\pi^2}{|\Delta_k|^2}(P(Q_k)-P(1))-\frac{\pi^2}{|\ell_\rho
            (1-\rho^2)|^2}(Q_k-1)&>
          \frac{\pi^2(2Q_k+9)(Q_k-2)(Q_k-1)}{24|\ell_\rho
            (1-\rho^2)|^2}\\&\geq0;
        \end{split}
      \end{equation}
      if, moreover, $|\Delta_k| \leq 2\ell_\rho (1-\varepsilon)$
      ($\rho^2<\varepsilon<1/2$), by~\eqref{eq:275}, one has
      \begin{equation}
        \label{eq:170}
        \begin{split}
          \frac{\pi^2}{|\Delta_k|^2}(P(Q_k)-P(1))-\frac{\pi^2}{|\ell_\rho
            (1-\rho^2)|^2}(Q_k-1)&\geq
          \left(\frac{4\pi^2}{|\Delta_k|^2}-\frac{\pi^2}{|\ell_\rho
              (1-\rho^2)|^2}\right)(Q_k-1)\\&\geq \frac{(8\pi)^2
            (\varepsilon-\rho^2)}{|\ell_\rho|^2}\, (Q_k-1);
        \end{split}
      \end{equation}
    \item for $Q_k\geq3$ and $|\Delta_k|\leq3\ell_\rho(1-\rho^2)$, one
      has
      \begin{equation}
        \label{eq:213}
        \begin{split}
          \frac{\pi^2}{|\Delta_k|^2}(P(Q_k)-P(2))-\frac{\pi^2}{|\ell_\rho
            (1-\rho^2)|^2}(Q_k-2)&>
          \frac{\pi^2(2Q_k+13)(Q_k-3)(Q_k-2)}{|\ell_\rho
            (1-\rho^2)|^2}\\&\geq0;
        \end{split}
      \end{equation}
      if, moreover, $|\Delta_k| \leq 3\ell_\rho (1-\varepsilon)$
      ($\rho^2<\varepsilon<1/2$), by~\eqref{eq:275}, one has
      \begin{equation}
        \label{eq:171}
        \begin{split}
        \frac{\pi^2}{|\Delta_k|^2}(P(Q_k)-P(2))-\frac{\pi^2}{|\ell_\rho
          (1-\rho^2)|^2}(Q_k-2)&\geq
          \left(\frac{9\pi^2}{|\Delta_k|^2}-\frac{\pi^2}{|\ell_\rho
              (1-\rho^2)|^2}\right)(Q_k-9)\\&\geq \frac{(8\pi)^2
          (\varepsilon-\rho^2)}{|\ell_\rho|^2}\, (Q_k-2).          
        \end{split}
      \end{equation}
    \end{itemize}
    Plugging~\eqref{eq:168}~-~\eqref{eq:171} into~\eqref{eq:167}
    immediately yields~\eqref{eq:124} and~\eqref{eq:172}, thus,
    completes the proof of~\eqref{eq:124} and~\eqref{eq:172} in
    Lemma~\ref{lem:normPiecesWithParticleExcess}.\\
    To derive~\eqref{eq:211}, we proceed as follows. Clearly, for
    $Q_k\geq4$, the right hand sides of~\eqref{eq:168}, \eqref{eq:212}
    and~\eqref{eq:213} is larger than $\delta\cdot Q^2_k$ (for some
    $\delta\in(0,1)$). Thus,~\eqref{eq:167} implies
    \begin{equation*}
      \sum_{\substack{|\Delta_k| \leq 3 \ell_\rho(1 -
          \rho^2)\\Q_k\geq4}} Q^2_k \leq C n \rho \ell_\rho^{-1}.
    \end{equation*}
    On the other hand, by~\eqref{eq:124}, one clearly has
    \begin{equation*}
      \sum_{\substack{|\Delta_k| \leq 3 \ell_\rho(1 -
          \rho^2)\\Q_k\leq3}} Q^2_k \leq 3
      \sum_{\substack{|\Delta_k| \leq 3 \ell_\rho(1 - 
          \rho^2)\\Q_k\leq3}} Q_k \leq C n \rho \ell_\rho^{-1}.
    \end{equation*}
    Thus, the proof of~\eqref{eq:211} is complete. This completes the
    proof of Lemma~\ref{lem:normPiecesWithParticleExcess}.
  \end{proof}
  \noindent We also remark the following
  \begin{Le}
    \label{lem:minimalOccupiedInterval}
    Consider $\Psi^{U^p}_\omega$, the ground state of
    $H^{U^p}_\omega(L, n)$.\\
    There exists $C>0$ such that for $L$ sufficiently large, with
    probability at least $1 - O(L^{-\infty})$, no piece of length
    smaller than
    \begin{equation}
      \label{eq:120}
      \ell_{min} = \ell_\rho - C \rho \ell_\rho
    \end{equation}
    is occupied by particles of $\Psi^{U^p}$.
  \end{Le}
  \begin{Rem}
    \label{rem:4}
    The proof of Lemma~\ref{lem:minimalOccupiedInterval} shows that it
    suffices to take $C> 4B+4$ for $\rho$ sufficiently small; here,
    $B$ is the constant defining $U^p$ (see~\eqref{eq:99}).
  \end{Rem}
  \begin{proof}
    Suppose that the claim of the lemma is false. Then, a piece
    shorter than $\ell_{min}$ is occupied.\\
    Let us show now that, as there are too many such pieces, pieces
    longer than $\ell_{min}$ cannot be all in interaction with $n$
    particles, no matter where these $n$ particles are.\\
    First of all, according to
    Proposition~\ref{prop:IntervStatistics}, the total number of
    pieces longer than $\ell_{min}$ is
    \begin{equation*}
      \begin{split}
        \sharp\{j:\ |\Delta_j(\omega)| \geq \ell_{min}\} &= L
        e^{-\ell_{min}} + O(L^{1/2 + 0})= L \frac{\rho}{1 + \rho} (1 +
        C \rho \ell_\rho + O(\rho^2 \ell_\rho^2) \\ &= n (1 + C \rho
        \ell_\rho + O(\rho)).
      \end{split}
    \end{equation*}
    The number of pieces of length larger than $2 \ell_\rho$ is $n
    \rho (1 + O(\rho))$. If a particle lies in one of these pieces, it
    can interact with at most $2 B$ other pieces of length greater
    than $\ell_{min}$.\\
    For pieces smaller than $2 \ell_\rho$ (but as always larger than
    $\ell_{min}$), we remark that if two such pieces are at a distance
    greater than $(2 B + 2) \ell_\rho$ from one another then they
    cannot interact with the same particle, except for the cases
    already taken into account above. \\
    Moreover, according to Proposition~\ref{prop:IntervStatistics2},
    the number of pairs of such pieces at distance at most $(2 B + 2)
    \ell_\rho$ is given by
    \begin{equation*}
      \begin{split}
        &\sharp\{(\Delta_i, \Delta_j), |\Delta_i| > \ell_{min},
        |\Delta_j| > \ell_{min}, \dist(\Delta_i, \Delta_j) \leq (2 B +
        2) \ell_\rho\} \\
        &= 2 (2 B + 2) \ell_\rho L \left(e^{-\ell_{min}}\right)^2 +
        O(L^{3/4}) \\
        &= (4 B + 4) n \rho \ell_\rho (1 + O(\rho \ell_\rho)).
      \end{split}
    \end{equation*}
    Consequently, the rest of these pieces are at larger distances
    from each other. This leaves at least
    \begin{multline*}
      n (1 + C \rho \ell_\rho + O(\rho)) - (2 B + 1) n \rho (1 +
      O(\rho)) - (4 B + 4) n \rho \ell_\rho (1 + O(\rho \ell_\rho)) \\
      = n (1 + (C - 4 B - 4) \rho \ell_\rho + O(\rho))
    \end{multline*}
    pieces such that no two of them can interact with the same
    particle.  Remark that it suffices to take $C > 4 B + 4$ to ensure
    that this number is larger than $n$ for $\rho$ small. This proves
    that there exists at least one piece longer than $\ell_{min}$
    which neither occupied nor interacting with any particle in a
    ground state $\Psi^{U^p}_\omega(L,n)$.\\
    This leads to a contradiction with the fact that the ground state
    $\Psi^{U^p}_\omega(L,n)$ puts at least one particle in a piece
    smaller than $\ell_{min}$: indeed, moving this particle to the
    piece longer than $\ell_{min}$ which was singled out just above
    would result in a decrease of energy as no interaction energy
    would be added and non interacting energy would obviously decrease
    with the increase of the piece's length. This completes the proof
    of Lemma~\ref{lem:minimalOccupiedInterval}.
  \end{proof}
  \noindent Let us now resume the proof of Theorem~\ref{thr:3}. In
  what follows, $\Psi$ is a function satisfying
  condition~\eqref{eq:133}. By Theorem~\ref{thr:6}, using
  $\Psi^{\text{opt}}(L,n)$ as a trial function, we see that both
  $\Psi$ and $\Psi^{\text{opt}}(L,n)$ satisfy the assumptions of
  Lemma~\ref{le:17}. Thus, picking $\eta \in(0, 1/3)$ and
  $\varepsilon$ sufficiently small, by Lemma~\ref{le:17}, for $\rho$
  sufficiently small and $L$ sufficiently large, with probability
  $1-O(L^{-\infty})$, we have
  \begin{equation}
    \label{eq:90}
    \sum_{\bullet\in\{a,b,c\}}\sum_{\Delta_k(\omega)\text{ of
        $\varepsilon$-type
      }(\bullet)}\left(Q_k(\Psi^{\text{opt}}(L,n))+
      Q_k(\Psi)\right)\leq n\rho^{1+\eta}.
  \end{equation}
  We will now reason on the particles in $\Psi^{U^p}_\omega(L,n)$ that
  live in pieces that are not of $\varepsilon$-type (a), (b) or (c).\\
  Recall that, by definition (see Definitions~\ref{def:PsiOptm}
  and~\ref{def:PsiOptr}), $\Psi^{\textup{opt}}(L,n)$ puts
  \begin{itemize}
  \item no particle in each piece of length in $(0,\ell_\rho-x_*\rho)$;
  \item one particle in each piece of length in
    $[\ell_\rho-x_*\rho,2\ell_\rho+A_*)$;
  \item two particles (as a true two-particles state) in each piece of
    length in $[2\ell_\rho+A_*,3\ell_\rho)$;
  \end{itemize}
  Let $C$ be the constant from the claim of Theorem~\ref{thr:3} that
  we will fix later on. Define
  \begin{itemize}
  \item $n_0^+$ to be the total number of pieces of length in
    $(0,\ell_\rho-x_*\rho)$ where $\Psi$ puts exactly 1 particle;
  \item $n_1^-$ to be the total number of pieces
     of length in $[\ell_\rho-x_*\rho, \ell_\rho + C)$ where
    $\Psi$ puts no particle;
  \item $n_1^+$ to be the total number of pieces of length in
    $[\ell_\rho-x_*\rho, \ell_\rho + C)$ where
    $\Psi$ puts exactly $2$ particles;
  \item $\wtn_1^-$ to be the total number of pieces
     of length in $[\ell_\rho + C, 2 \ell_\rho + A_*)$ where
    $\Psi$ puts no particle;
  \item $\wtn_1^+$ to be the total number of pieces of length in
    $[\ell_\rho + C, 2 \ell_\rho + A_*)$ where
    $\Psi$ puts exactly $2$ particles;
  \item $n_2^-$ to be the total number of pieces of length in
    $[2\ell_\rho+A_*,3\ell_\rho(1-\varepsilon))$ where
    $\Psi$ puts exactly $1$ particle;
  \item $n_2^+$ to be the total number of pieces of length in
    $[2\ell_\rho+A_*,3\ell_\rho(1-\varepsilon))$ where
    $\Psi$ puts exactly $3$ particles.
  \end{itemize}
  The general idea of the forthcoming proof is the following. On the
  one hand, Lemma~\ref{le:17} tells that pieces with too many
  neighbors are a sort of exception in a sense that they occur
  relatively rarely and carry relatively few particles.  From the
  other hand, according to
  Lemma~\ref{lem:normPiecesWithParticleExcess}, pieces with too
  many particles are also relatively exceptional.\\
  Finally, let us complement these two observations by noting that no
  particle in a piece of length in $[2 \ell_\rho + A_*, 3 \ell_\rho (1
  - \eps))$ can also occur for a small fraction of them. Therefore, we
  first note that it is sufficient to argue for pieces that are not of
  $\eps$-type (as those of $\eps$-type are already handled by
  Lemma~\ref{le:17}). Let us now take a look at the distribution of
  particles in the state $\Psi^{\textup{opt}}$ in the pieces of length
  in $[2\ell_\rho+A_*, 3\ell_\rho(1-\eps))$ that have no particles and
  no neighbors (as they are not of $\eps$-type) in $\Psi$.  Obviously,
  moving a particle from a piece of length greater than $2 \ell_\rho +
  A_*$ to a smaller piece induces an increase of the non interacting
  energy of order $\ell_\rho^{-2}$ just because the pieces longer than
  $\ell_\rho-\rho x_*$ are already occupied by at least one particle
  (thus the non interacting energy of a second particle is a best
  $4\pi^2/(2\ell_\rho+A_*)^2$ and $\pi^2/(\ell_\rho-\rho x_*)^2$ if a
  particle is placed in a non occupied piece). Thus, the total number
  of pieces of length greater than $2\ell_\rho+A_*$ with no particles
  is bounded by $O(n\rho\ell_\rho^{-1})$.\vskip.2cm\noindent
  The last three arguments together prove essentially that the
  distances $\dist_0$ and $\dist_1$ coincide for the matter of the
  current proof up to an admissible error i.e. of size $O(n \rho
  \ell_\rho^{-1})$. Namely, by the definition of the distance
  dist$_1$, one has
  \begin{equation}
    \label{eq:91}
    \begin{split}
      \dist_1(Q|_{<\ell_\rho + C}(\Psi), Q|_{<\ell_\rho +
        C}(\Psi^{\text{opt}})) 
      &= n_0^++n_1^++n_1^- + r, \\
    \dist_1(Q|_{\geq \ell_\rho + C}(\Psi), Q|_{\geq \ell_\rho +
      C}(\Psi^{\text{opt}})) &= \wtn_1^+ + \wtn_1^-+n_2^++n_2^-+r', 
    \end{split}
  \end{equation}
  and, by the fact that the total number of particles in both states
  is the same, one gets
  \begin{equation}
    \label{eq:92}
    n_0^++n_1^+ + \wtn_1^+ +n_2^++r'' = n_1^- + \wtn_1^- +n_2^-+r'''
  \end{equation}
  where 
  \begin{equation}
    \label{eq:93}
    \max(r,r',r'', r''')\leq C n \rho \ell_\rho^{-1}.  
  \end{equation}
  Recall that $r(\rho)$ is of order at most $|\log{\rho}|^{-1}$.
  Hence, if~\eqref{eq:81} does not hold, for any constant $C_1$, if $L$
  is large enough, either one has
  \begin{equation}
    \label{eq:94}
    \wtn_1^+ + \wtn_1^- + n_2^+ + n_2^- \geq C_1 n \rho \cdot r(\rho)
  \end{equation}
  or one has
  \begin{equation}
    \label{eq:128}
    n_0^+ + n_1^+ + n_1^- \geq C_1 n \sqrt{\rho} \cdot r(\rho).
  \end{equation}
  First, we simplify~\eqref{eq:94}. Suppose that, for some $C_1$
  large, one has
  \begin{equation}
    \label{eq:121}
    n_2^+ \geq \frac{C_1}{4} n \rho \cdot r(\rho).
  \end{equation}
  The number of pieces of length in
  $\left[\tfrac{5}{2}\ell_\rho,3\ell_\rho(1-\eps)\right)$ is given by
  \begin{equation*}
    \sharp\left\{j:\ |\Delta_j(\omega)|\in\left[\frac{5}{2}\ell_\rho,
        3\ell_\rho(1-\eps)\right)\right\}=O(n\rho^{3/2}). 
  \end{equation*}
  Thus, at least $\tfrac{C_1}{5} n \rho \cdot r(\rho)$ of the pieces
  with three particles (as given by~\eqref{eq:121}) have their length
  in $\left[2 \ell_\rho + A_*, \tfrac{5}{2} \ell_\rho\right)$. Hence,
  the non interacting energy excess (compared to the non interacting
  energy in the ground state) for each of these pieces is lower
  bounded by $O(\ell_\rho^{-2})$ which, in turn, being multiplied by
  their total number, contradicts~\eqref{eq:133}. This
  simplifies~\eqref{eq:94} into
  \begin{equation}
    \label{eq:122}
    \wtn_1^+ + \wtn_1^- + n_2^- \geq C_1 n \rho \cdot r(\rho).  
  \end{equation}
  The conditions~\eqref{eq:92},~\eqref{eq:93} and either
  ~\eqref{eq:128} or~\eqref{eq:122} lead us to a number of
  possibilities that we will now study one by one. More precisely,
  there are nine possible variants as at least one among $n_1^-$,
  $\tilde n_1^-$ and $n_2^-$ should be ``large'' and the same is true
  for either $n_0^+$, $n_1^+$, $n_2^+$ and $\tilde n_1^+$. We now
  discuss these cases.
  \begin{enumerate}
  \item Consider first the case when 
    \begin{equation}
      \label{eq:77}
      \min(\wtn_1^+, n_2^-) \geq C_2 n \rho \cdot r(\rho)
    \end{equation}
    with $C_2 < C_1 / 3$. \\
    This corresponds to taking the same configuration of particles as
    in $\Psi^{\textup{opt}}$ and move some of them from pieces of
    length in $[2\ell_\rho+A_*,3\ell_\rho(1-\eps))$ to pieces of
    length in $[\ell_\rho+C,2\ell_\rho+A_*)$ that already contain one
    particle each. As we are now dealing only with pieces that are not
    of $\eps$-type, this implies in particular that the pieces of
    length in $[2\ell_\rho+A_*,3\ell_\rho(1-\eps))$ from which we
    withdraw particles and that originally contain $2$ particles, do
    not have any neighbors.\\
    Taking the smallest available pieces for particle donors and the
    largest available for particle acceptors gives a lower bound on
    the total energy increase induced by this operation. Suppose that
    $C_2n\rho r(\rho)$ smallest pieces have their length between $2
    \ell_\rho+A_*$ and $2\ell_\rho+A_*+\delta$. Then, choosing $C_1$
    (thus, $C_2$) much larger than the constant in
    Lemma~\ref{lem:normPiecesWithParticleExcess} for the case when
    $r(\rho)\asymp|\log{\rho}|^{-1}$, we obtain
    \begin{equation*}
      L e^{-2\ell_\rho-A_*}(1-e^{-\delta})\geq\frac{C_2}{2}n\rho\cdot r(\rho),
    \end{equation*}
    which yields
    \begin{equation}
      \label{eq:135}
      \delta \geq \frac{C_2 e^{A_*}}{2} r(\rho).
    \end{equation}
    Moreover, analogous calculations show that at least
    $\tfrac{C_2}{3} n \rho r(\rho)$ of these pieces have length in
    $(2\ell_\rho+A_*+\delta/2,2\ell_\rho+A_*+\delta)$. For the
    particles in these pieces, the increase of energy is lower bounded
    by
    \begin{multline}
      \label{eq:137}
      \frac{4 \pi^2}{(2 \ell_\rho + A_* + \delta / 2)^2} +
      \frac{\gamma}{(2 \ell_\rho + A_* + \delta / 2)^3} \\
      - \frac{4 \pi^2}{(2 \ell_\rho + A_*)^2} - \frac{\gamma}{(2
        \ell_\rho + A_*)^3} + O(\ell_\rho^{-4}) \geq C_3 r(\rho)
      \ell_\rho^{-3},
    \end{multline}
    where $C_3>0$. Multiplying the number of pieces by the lower bound
    ~\eqref{eq:137} gives a total energy excess that contradicts
    ~\eqref{eq:133} if we choose $C_2$ (hence, $C_1$) sufficiently
    large.
  \item The case
    \begin{equation*}
      \min(n_1^+, n_2^-) \geq C_2 n \rho \cdot r(\rho)
    \end{equation*}
    is even simpler than the previous one. Indeed, in
    $\Psi^{\textup{opt}}$, the occupations of the pieces of length in
    $[\ell_\rho-\rho x_*,\ell_\rho+C)$ and in
    $[\ell_\rho+C,2\ell_\rho+A_*)$ are the same but the lengths
    considered in the previous case are smaller. Hence, the arguments
    developed in point (a) above enable one to conclude with the only
    difference that the increase of energy is even larger. Moreover,
    there is no need to remove the small interval of size $\delta$.
  \item Next, the situation when
    \begin{equation}
      \label{eq:126}
      \min(n_0^+, n_2^-) \geq C_2 n \rho \cdot r(\rho)
    \end{equation}
    corresponds to moving excited particles, i.e., particles occupying
    the second energy level, from pieces of length in $[2 \ell_\rho +
    A_*, 3 \ell_\rho (1 - \eps))$ to empty pieces of length smaller
    than $\ell_\rho - \rho x_*$. Recall that actually the approximate
    equilibrium between the gain in interaction energy due to
    decoupling and the increase of non-interaction energy was part of
    the definition of values of $x_*$ and $A_*$, i.e.,
    \begin{equation}
      \label{eq:123}
      \frac{4 \pi^2}{(2 \ell_\rho + A_*)^2} + \gamma \ell_\rho^{-3} =
      \frac{\pi^2}{(\ell_\rho - \rho x _*)^2} + O(\ell_\rho^{-4}).
    \end{equation}
    Obviously, the smaller the piece we choose to remove the second
    particle from, the more energy one gains. On the other hand, the
    larger the piece where one puts the particle, the smaller
    the non interacting energy increase, thus, the better.\\
    According to these two observations, we choose to move particles
    from the $C_2n \rho\cdot r(\rho)$ smallest pieces longer than
    $2\ell_\rho+A_*$. Suppose that the largest of these pieces has
    length $2\ell_\rho+A_*+B_2$. Then, by
    Proposition~\ref{prop:IntervStatistics}, $B_2$ satisfies
    \begin{equation*}
      Le^{-2\ell_\rho-A_*}(1-e^{-B_2})+O(L^{1/2+0})=C_2n\rho\cdot r(\rho).
    \end{equation*}
    Hence, $B_2=C_2 e^{A_*}r(\rho)(1+O(r(\rho)))$. Moreover, the
    number of such pieces with length in
    $[2\ell_\rho+A_*+B_2/2,2\ell_\rho+A_*+B_2)$ is
    \begin{equation}
      \label{eq:138}
      \begin{split}
        \sharp\{k;\ |\Delta_k(\omega)|-2\ell_\rho-A_*\in[B_2/2,B_2)\}
        &=Le^{-2\ell_\rho-A_*}(e^{-B_2/2}-e^{-B_2})+O(L^{1/2+0}) \\
        &\geq \frac{C_2}{3} n \rho \cdot r(\rho).
      \end{split}
    \end{equation}
    Clearly, for all these $\tfrac{C_2}{3} n \rho \ell_\rho^{-1}$
    pieces, the non interacting energy excess is proportional to
    $C_2\ell_\rho^{-3}r(\rho)$; thus, multiplied by their total
    number~\eqref{eq:138}, for large $C_2$, this energy excess does
    not fit within the margin allowed by~\eqref{eq:133}.
  \item Yet another possibility for~\eqref{eq:122} is that
    \begin{equation*}
      \min(\max(n_1^+, \wtn_1^+), \max(n_1^-, \wtn_1^-)) 
      \geq C_2 n \rho \cdot r(\rho).
    \end{equation*}
    Obviously, the variant
    \begin{equation*}
      \min(\wtn_1^+, n_1^-) \geq C_2 n \rho \cdot r(\rho).
    \end{equation*}
    is more advantageous from the energetic point of view.  The
    question here is whether it is worth moving a particle from a
    piece of length close to the lower bound of the corresponding
    group, i.e., $\ell_\rho-\rho x_*$, to another piece (but as the
    second particle because there is already another particle in that
    piece) of length close to the upper bound, i.e., $2\ell_\rho+A_*$.
    In a certain sense, this is the opposite to the case~(c) as the
    latter tells that the threshold value $A_*$ is not too small,
    while the current case will explain why $A_*$ is not too big.\\
    As above, one shows that, in order to choose the $C_2n\rho\cdot
    r(\rho)$ largest pieces of length in $[\ell_\rho-\rho x_*, 2
    \ell_\rho+A_*)$, it is sufficient to solve
    \begin{equation*}
      L e^{-2\ell_\rho-A_*}(e^{B_1}-1)+O(L^{1/2+0})=C_2n\rho\cdot r(\rho),
    \end{equation*}
    which also implies $B_1=C_2e^{A_*}r(\rho)(1+O(r(\rho)))$. Then, as
    above, the energy excess is proportional to $C_2 \ell_\rho^{-3}
    r(\rho)$ (where the constant $C_2$ can be chosen arbitrarily
    large) whereas the interaction terms are uniformly bounded by
    $O(\ell_\rho^{-4 + 0})$. Thus, the total energy gained by such an
    operation exceeds the limits imposed by~\eqref{eq:133}.
  \item The next possible option is that
    \begin{equation}
      \label{eq:139}
      \min(n_0^+, \wtn_1^-) \geq C_2 n \rho \cdot r(\rho).      
    \end{equation}
    This corresponds to moving particles in $\Psi^{\textup{opt}}$ from
    pieces of longer than $\ell_\rho+C$ to pieces shorter than
    $\ell_\rho-\rho x_*$. Remark first that the increase of non
    interacting energy is at least
    \begin{equation}
      \label{eq:140}
      \frac{\pi^2}{(\ell_\rho - \rho x_*)^2} - \frac{\pi^2}{(\ell_\rho +
        C)^2} \geq \frac{2 \pi^2 C}{\ell_\rho^3},
    \end{equation}
    which always dominates the possible interaction with a particle in
    a neighboring piece: this interaction is $O(\ell_\rho^{-4 + 0})$
    by Lemma~\ref{lem:inter11_closeEstimate}. Multiplying the left
    hand sides of~\eqref{eq:139} and~\eqref{eq:140} gives a lower
    estimate on the energy excess that contradicts~\eqref{eq:133}
    because $r(\rho)=o(1)$.
  \item Finally, the only case left is when
    \begin{equation}
      \label{eq:127}
      \min(n_0^+, n_1^-) \geq C_2 n \sqrt{\rho} \cdot r(\rho).    
    \end{equation}
    Informally speaking, this is about the question if the threshold
    $\ell_\rho - \rho x_*$ between occupation zero and occupation one
    is placed correctly.\\
    It is also remarkable that the allowed number of particle
    displacements for this case is much larger than in the other
    cases: one has to compare $o(n \sqrt{\rho})$ to $o(n \rho)$. This
    is due to the following mechanism. First, note that moving a
    particle that interacts with another particle in a neighboring
    piece may result to a decrease of the total energy. Obviously, the
    contribution of the displacement of such particles is upper
    bounded by $O(n \rho \ell_\rho^{-4 + 0})$ because there are at
    most $O(n \rho)$ neighboring particles and the size of interaction
    is $O(\ell_\rho^{-4+0})$ by Lemma~\ref{lem:inter11_closeEstimate}.
    Thus, these particles may be neglected for the precision of the
    current proof.\\
    Then, reasoning as we did many times above, we observe that, at
    least $\tfrac{C_2}{2}n\sqrt{\rho}r(\rho)$ of particles that are
    removed from pieces of length in $[\ell_\rho-\rho
    x_*,\ell_\rho+C)$ have their length greater than $\ell_\rho+C_3
    \sqrt{\rho}r(\rho)$, where the constant $C_3$ grows together with
    $C_2$.  But, for each of these particles the non interacting
    energy increase is of order $C_3 \ell_\rho^{-3} \sqrt{\rho}
    \cdot r(\rho)$. As above, multiplying the number of particles involved
    by the lower bound on the energy change, we get a contradiction
    with~\eqref{eq:133}.
  \end{enumerate}
  This completes the proof of Theorem~\ref{thr:3}.
\end{proof}
\noindent We are now left with proving Lemmas~\ref{le:17}.
\begin{proof}[The proof of Lemma~\textup{\ref{le:17}}]
  We first prove the estimate~\eqref{eq:82}. It will be a consequence
  of the fact that the number of pieces in any of the three type is
  small and of the following
  \begin{Le}
    \label{le:18}
    Pick $k$ pieces of respective lengths $l_1\leq l_2\leq\cdots\leq
    l_k$. Assume that, for $1\leq i\leq k$, the state
    $\Psi\in\frH^n_{Q(\Psi)}\cap\frH_\infty^n([0,L])$ puts exactly
    $\nu_i$ particles in the piece $i$ so that
    $\nu_1+\cdots+\nu_k=\nu$. Then, one has
    \begin{equation}
      \label{eq:83}
      \frac{\pi^2 \nu^3}{3 l_k^2k^2}\leq
      \langle H^0(L,n)\Psi,\Psi\rangle\leq\langle
      H_\omega^{U^p}(L,n)\Psi,\Psi\rangle\leq\langle
      H_\omega^U(L,n)\Psi,\Psi\rangle.
    \end{equation}
  \end{Le}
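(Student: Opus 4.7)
The two rightmost inequalities are immediate: under assumption \textbf{(HU)}, the potential $U$ is non-negative, hence so is $U^p=\car_{[-B\ell_\rho,B\ell_\rho]}U$, and therefore the pair interaction operators satisfy $0\leq W_n^{U^p}\leq W_n^{U}$ as quadratic forms. Adding them to $H_\omega^0(L,n)$ yields the chain $\langle H^0\Psi,\Psi\rangle\leq\langle H^{U^p}\Psi,\Psi\rangle\leq\langle H^{U}\Psi,\Psi\rangle$.

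For the left inequality, the key point is that, because $\Psi$ lies in the occupation subspace $\frH_{Q(\Psi)}$, the non-interacting Hamiltonian $H^0_\omega(L,n)$ restricted to $\frH_{Q(\Psi)}$ splits as a direct sum of Dirichlet Laplacians acting on $\bigwedge^{\nu_i}L^2(\Delta_i)$ for each occupied piece $\Delta_i$ (cf.~\eqref{eq:162}). The infimum of each summand is the ground state energy of $\nu_i$ non-interacting fermions in an interval of length $l_i$, namely $\frac{\pi^2}{l_i^2}\sum_{j=1}^{\nu_i}j^2=\frac{\pi^2}{l_i^2}P(\nu_i)$ with $P(\nu)=\frac{\nu(\nu+1)(2\nu+1)}{6}$. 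Since $\Psi$ belongs to this direct sum, the variational principle gives
\begin{equation*}
\langle H^0\Psi,\Psi\rangle\geq\sum_{i=1}^k\frac{\pi^2}{l_i^2}P(\nu_i).
\end{equation*}

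It remains to bound this sum from below by $\frac{\pi^2\nu^3}{3l_k^2 k^2}$. Using $l_i\leq l_k$ on every term and the elementary bound $P(\nu_i)\geq\frac{\nu_i^3}{3}$ (valid because $(\nu_i+1)(2\nu_i+1)\geq 2\nu_i^2$), we obtain
\begin{equation*}
\sum_{i=1}^k\frac{\pi^2}{l_i^2}P(\nu_i)\geq\frac{\pi^2}{3l_k^2}\sum_{i=1}^k\nu_i^3.
\end{equation*}
The conclusion then follows from the power-mean (Jensen) inequality applied to the convex function $x\mapsto x^3$ on $k$ non-negative numbers of sum $\nu$:
\begin{equation*}
\sum_{i=1}^k\nu_i^3\geq k\left(\frac{\nu}{k}\right)^3=\frac{\nu^3}{k^2}.
\end{equation*}

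There is no substantive obstacle here; the only point requiring any care is the use of the occupation decomposition to reduce the quadratic form on $\frH_\infty^n([0,L])$ to a sum of one-piece Dirichlet problems, which is exactly the invariance established in Corollary~\ref{cor:OccupationSubspaces} together with the product structure~\eqref{eq:28} of $\frH_Q$.
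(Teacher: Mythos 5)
Your proposal is correct and follows essentially the same route as the paper: non-negativity of $U^p\leq U$ for the two right-hand inequalities, then the occupation/Pauli structure reducing $\langle H^0\Psi,\Psi\rangle$ to the sum over pieces of the lowest $\nu_i$ Dirichlet levels, i.e.\ $\sum_i \pi^2 P(\nu_i)/l_i^2$, followed by $P(\nu_i)\geq\nu_i^3/3$, $l_i\leq l_k$ and convexity to get $\sum_i\nu_i^3\geq\nu^3/k^2$. No gap.
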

  \noindent Let us postpone the proof of this result for a while and
  complete the proof of Lemma~\ref{le:17}. We shall write out the
  proof for pieces of type (a). Those for pieces of type (b) and (c)
  is similar. \\
  Pick $\eta\in(0,1)$ and $\varepsilon>0$ such that
  $\eta+2\varepsilon<1/6$. The proof of
  Propositions~\ref{prop:IntervStatistics} and~\ref{pro:3}) show that
  there exists $\rho_\varepsilon>0$ such that, for
  $\rho\in(0,\rho_\varepsilon)$, for $L$ sufficiently large, with
  probability $1-O(L^{-\infty})$, for one has
  \begin{equation}
    \label{eq:87}
    \#\left\{k;\ |\Delta_k(\omega)|\in[3\ell_\rho(1-\varepsilon),
      4\ell_\rho)\right\}\leq 
    n\rho^{2-3\varepsilon}
  \end{equation}
  and, for $4\leq k\leq\log L\cdot\log\log L$,
  \begin{equation}
    \label{eq:88}
    \#\left\{k;\ |\Delta_k(\omega)|\in[k\ell_\rho,(k+1)\ell_\rho)\right\}\leq
    n\rho^{k-1-\varepsilon}.
  \end{equation}
  Now, if $\Psi$ places more than $n\rho^{1+\eta}$ particles in pieces
  of type $a$ then
  \begin{itemize}
  \item either it places at least $2^{-1}n\rho^{1+\eta}$ particles in
    pieces of length in $[3\ell_\rho(1-\varepsilon),4\ell_\rho)$; in
    this case, by Lemma~\ref{le:18}, as $3(\eta+2\varepsilon)<1$, we
    know that
    \begin{equation}
      \label{eq:84}
      \langle H^0(L,n)\Psi,\Psi\rangle\geq\frac{\pi^2(n\rho^{1+\eta})^3}
      {8(4\ell_\rho)^2(n\rho^{2-3\varepsilon})^2}\gtrsim n
      \ell_\rho^{-2}\rho^{-1+3(\eta+2\varepsilon)}\gg n
      \ell_\rho^{-2}
    \end{equation}
    for $\rho$ small;
  \item or, for some $4\leq k\leq\log L$, it places at least
    $n\rho^{1+\eta}2^{-k+2}$ particles in pieces of length in
    $[k\ell_\rho,(k+1)\ell_\rho)$; in this case, by Lemma~\ref{le:18},
    we know that
    \begin{equation}
      \label{eq:85}
      \langle H^0(L,n)\Psi,\Psi\rangle\gtrsim
      \frac{n\rho^{3+3\eta-2k-2\varepsilon}}{((k+1)\ell_\rho)^{2}2^{3k}}
      \gtrsim n\ell_\rho^{-2}\rho^{-1}\frac{(8\rho)^{-k}}{(k+1)^2} \geq
      n\ell_\rho^{-2}\rho^{-1}
    \end{equation}
    for $\rho$ sufficiently small.
  \end{itemize}
  Hence, for $\rho$ sufficiently small, recalling~\eqref{eq:5}
  and~\eqref{eq:86} (and that here $\mu=1$), one has $\langle
  H^0(L,n)\Psi,\Psi\rangle> 2\densEn^0(\rho)n$. \\
  This completes the proof of~\eqref{eq:82} in Lemma~\ref{le:17} for
  particles of type (a).\\
  To deal with the particles of type (b) (resp. (c)), we replace the
  upper bounds~\eqref{eq:87} and~\eqref{eq:88} obtained using
  Proposition~\ref{prop:IntervStatistics} by analogous upper bounds on
  the numbers of pieces of type (b) (resp. (c)) obtained through
  Proposition~\ref{prop:NeighborssStatistics}
  (resp. Proposition~\ref{prop:NeighborssStatistics2}).\\
  This completes the proof of~\eqref{eq:82} in Lemma~\ref{le:17}.\\
  Let us now prove~\eqref{eq:209}. By~\eqref{eq:162}, one has 
  \begin{equation*}
    \sum_{k=1}^{m(\omega)} \frac{\pi^2
      P(Q_k(\Psi))}{|\Delta_k|^2}\leq       \langle
    H_\omega^{U^p}(L,n)\Psi,\Psi\rangle\leq 2\densEn^0(\rho)n
  \end{equation*}
  where $P$ is defined in~\eqref{eq:162}.\\
  Taking Proposition~\ref{pro:3} into account immediately
  yields~\eqref{eq:209} and completes the proof of~Lemma~\ref{le:17}.
\end{proof}
\begin{proof}[The proof of Lemma~\ref{le:18}]
  The form of the
  Hamiltonians~\eqref{eq:Hn0Definition},~\eqref{eq:99} (the
  definition of $U^p$),~\eqref{eq:HLambdanIntro} and the non
  negativity of the interactions guarantee that
  \begin{equation*}
    \langle
    H_\omega^{U^p}(L,n)\Psi,\Psi\rangle\geq
    \langle H^0(L,n)\Psi,\Psi\rangle\geq
    \sum_{i=1}^k\sum_{m=1}^{\nu_i}\left(\frac{\pi \alpha^i_m}{l_i}\right)^2
  \end{equation*}
  where $(\alpha^i_m)_{1\leq m\leq \nu_i}\in(\N^*)^{\nu_i}$ and
  $\alpha^i_1<\alpha^i_2<\cdots<\alpha^i_{\nu_i}$.\\
  Thus
  \begin{equation*}
    \langle H^0(L,n)\Psi,\Psi\rangle\geq
    \sum_{i=1}^k\sum_{m=1}^{\nu_i}\left(\frac{\pi m}{l_i}\right)^2\geq
    \frac{\pi^2}{3l^2_k}\sum_{i=1}^k\nu^3_i\geq       \frac{\pi^2
      \nu^3}{3 l_k^2k^2}
  \end{equation*}
  as $\nu_1+\cdots+\nu_k=\nu$.\\
  This completes the proof of Lemma~\ref{le:18}.
\end{proof}
  \begin{Th}
    \label{th:PsiUmPsiOptEnergyEstimate}
    For $\rho$ sufficiently small, in the thermodynamic limit, with
    probability $1-O(L^{-\infty})$, for any function
    $\Psi\in\frH^n\cap\frH_\infty^n([0,L])$,
    \begin{equation}
      \label{eq:130}
      \frac{1}{n} \langle H^{U^p}_\omega(L, n) \Psi, \Psi \rangle 
      \geq \frac{1}{n} \langle H^{U^p}_\omega(L, n) \Psi^{\textup{opt}},
      \Psi^{\textup{opt}} \rangle - o(\rho |\log{\rho}|^{-3}).
    \end{equation}
  \end{Th}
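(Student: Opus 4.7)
My plan is to reduce~\eqref{eq:130} to a combinatorial water-filling estimate over occupation numbers, and then to show that the corresponding minimum is essentially attained by $Q^*:=Q(\Psi^{\textup{opt}})$. First, by Corollary~\ref{cor:OccupationSubspaces} the operator $H^{U^p}_\omega(L,n)$ is block-diagonal along $\frH^n=\bigoplus_Q\frH_Q$; writing $\Psi=\sum_Q\Psi_Q$ yields $\langle H^{U^p}\Psi,\Psi\rangle=\sum_Q\langle H^{U^p}\Psi_Q,\Psi_Q\rangle$, so it suffices to prove~\eqref{eq:130} for $\Psi\in\frH_Q\cap\frH^n_\infty$ normalized with a fixed occupation $Q$, $|Q|=n$. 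Inside $\frH_Q$, decompose
\begin{equation*}
H^{U^p}\big|_{\frH_Q}=\sum_k H^{U^p}_{\Delta_k,Q_k}+\sum_{k<l}W^p_{k,l}
\end{equation*}
into in-piece Hamiltonians acting on $\bigwedge^{Q_k}L^2(\Delta_k)$ plus non-negative cross-piece pair interactions. Since $W^p_{k,l}\geq 0$, one obtains the piecewise lower bound $\langle H^{U^p}\Psi,\Psi\rangle\geq\sum_k E^{U^p}(\Delta_k,Q_k)$. Choosing $B$ in~\eqref{eq:99} large, $U^p$ coincides with $U$ on every piece of length $\leq B\ell_\rho$, so $E^{U^p}(\Delta_k,Q_k)=E^U(\Delta_k,Q_k)$ there; pieces longer than $B\ell_\rho$ are very rare by Proposition~\ref{prop:IntervStatistics} and, invoking Lemmas~\ref{le:17} and~\ref{lem:normPiecesWithParticleExcess} to bound how many particles they can carry, contribute at most $o(n\rho|\log\rho|^{-3})$.

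On the other side, $\Psi^{\textup{opt}}$ is an anti-symmetric tensor product of in-piece ground states of $H^U_{\Delta_k,Q^*_k}$ with $Q^*_k\in\{0,1,2\}$, together with the $O(n\rho^2)$ extra particles of Definition~\ref{def:PsiOptr} whose cumulative contribution is $o(n\rho|\log\rho|^{-3})$, while the cross-piece estimates~\eqref{eq:144}--\eqref{eq:118} of Theorem~\ref{thr:6} control the inter-piece interaction energy by $o(n\rho|\log\rho|^{-3})$. Therefore
\begin{equation*}
\langle H^{U^p}\Psi^{\textup{opt}},\Psi^{\textup{opt}}\rangle=\sum_k E^U(\Delta_k,Q^*_k)+o(n\rho|\log\rho|^{-3}),
\end{equation*}
so that~\eqref{eq:130} reduces to the combinatorial claim
\begin{equation*}
\min_{|Q|=n}\sum_k E^U(\Delta_k,Q_k)\geq\sum_k E^U(\Delta_k,Q^*_k)-o(n\rho|\log\rho|^{-3}).
\end{equation*}

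I would establish this claim by a water-filling argument. Introduce the marginal cost $e_j(\Delta):=E^U(\Delta,j)-E^U(\Delta,j-1)$; super-additivity of $Q\mapsto E^U(\Delta,Q)$ (a consequence of Pauli exclusion together with $U\geq 0$) makes $j\mapsto e_j(\Delta)$ non-decreasing, so the minimum equals the sum of the $n$ smallest values in $\{e_j(\Delta_k)\}_{k,j}$. The asymptotics $e_1(\Delta)=\pi^2|\Delta|^{-2}$ and $e_2(\Delta)=4\pi^2|\Delta|^{-2}+\gamma|\Delta|^{-3}+o(|\Delta|^{-3})$ from Proposition~\ref{prop:TwoElectronProblem}, combined with the Poisson piece-length statistics (Proposition~\ref{prop:IntervStatistics}), show that the water level lies in $\pi^2\ell_\rho^{-2}+O(\ell_\rho^{-3})$ and that the water-filling occupation agrees with $Q^*$ except on $O(n\rho^2)$ boundary pieces clustered near the thresholds $\ell_\rho-\rho x_*$ and $2\ell_\rho+A_*$; the values $x_*=1-e^{-\gamma/(8\pi^2)}$ and $A_*=\gamma/(8\pi^2)$ in~\eqref{eq:AstarXstarDef} are dictated precisely by the marginal-cost balance $e_1(\ell_\rho-\rho x_*)\simeq e_2(2\ell_\rho+A_*)$. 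A Riemann-sum estimate parallel to~\eqref{eq:115}--\eqref{eq:265} in the proof of Theorem~\ref{thr:6} then bounds the gap between the water-filling minimum and $\sum_k E^U(\Delta_k,Q^*_k)$ by $o(n\rho|\log\rho|^{-3})$. The hardest point will be securing this $o(\cdot)$ rate rather than merely a matching $O(\rho|\log\rho|^{-3})$: this requires the same delicate boundary bookkeeping as in Theorem~\ref{thr:6} combined with the exchange-step case analysis of the proof of Theorem~\ref{thr:3} (its cases~(a)--(f)), now applied to an arbitrary $Q$ in place of a near-minimizer, with convexity of the marginal costs guaranteeing that any deviation of $Q$ from the water-filling optimum away from the marginal-cost boundary incurs a strictly positive energy penalty of order $\ell_\rho^{-3}$ per displaced particle.
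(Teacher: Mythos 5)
Your reduction to a per-piece optimization (drop the non-negative cross-piece terms of $W^{U^p}$, compare $\sum_k E^{U^p}(\Delta_k,Q_k)$ with $\sum_k E^{U^p}(\Delta_k,Q^{\textup{opt}}_k)$) is a legitimate skeleton, and the boundary bookkeeping near the thresholds $\ell_\rho-\rho x_*$ and $2\ell_\rho+A_*$ is the same accounting as in~\eqref{eq:115}--\eqref{eq:265}. But the step on which your whole combinatorial argument rests is not justified: you claim that $j\mapsto e_j(\Delta):=E^U(\Delta,j)-E^U(\Delta,j-1)$ is non-decreasing because $Q\mapsto E^U(\Delta,Q)$ is super-additive ``by Pauli exclusion and $U\geq0$''. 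Super-additivity does not imply monotone marginal costs (take $E(1)=1$, $E(2)=3$, $E(3)=4$: super-additive, yet $e_3<e_2$), and convexity of the ground-state energy in the particle number for an interacting Fermi system in a fixed interval is neither proved in the paper nor a standard fact you can invoke. Without monotone marginals, the identity ``minimum over $|Q|=n$ of $\sum_k E^U(\Delta_k,Q_k)$ equals the sum of the $n$ smallest marginal costs'' — the water-filling reduction — fails, and with it the identification of the optimum with $Q^{\textup{opt}}$ up to boundary pieces. The paper avoids this entirely: for high occupancies it only uses the free lower bound $E^{U^p}(\Delta_k,Q_k)\geq \pi^2P(Q_k)/|\Delta_k|^2$ of~\eqref{eq:162} (whose marginals \emph{are} convex), together with Lemmas~\ref{le:17},~\ref{le:18} and~\ref{lem:normPiecesWithParticleExcess} to show such configurations are either impossible or cost too much, and then runs the exchange analysis of Theorem~\ref{thr:3} with $r\equiv0$.

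A second, smaller point: you invoke Lemma~\ref{lem:normPiecesWithParticleExcess} (and, at the end, the cases (a)--(f) of Theorem~\ref{thr:3}) ``for an arbitrary $Q$'', but that lemma requires the energy hypothesis~\eqref{eq:133}. This is harmless only after the observation — which you should make explicit — that~\eqref{eq:130} is trivial unless $\langle H^{U^p}\Psi,\Psi\rangle\leq\langle H^{U^p}\Psi^{\textup{opt}},\Psi^{\textup{opt}}\rangle$, so one may restrict to such $\Psi$, i.e.\ to~\eqref{eq:133} with $r=0$; this is exactly how the paper phrases its proof. Also note that $B$ is a fixed constant $>2$ in~\eqref{eq:99}, so you cannot ``choose $B$ large'' inside the proof to identify $E^{U^p}(\Delta_k,Q_k)$ with $E^U(\Delta_k,Q_k)$ on pieces up to $3\ell_\rho$; either keep $E^{U^p}$ throughout or bound the discrepancy by the tail of $U$, which is a fixable but currently missing step.
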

  \begin{proof}
    This result can easily be traced throughout the proof of
    Theorem~\ref{thr:3} by considering each of the cases. Before doing
    so, let us give some preliminary remarks that correspond exactly
    to the three remarks found in the beginning of the proof of
    Theorem~\ref{thr:3}.\\
    First, the energy gain due to moving a single particle is always
    bounded by $O(\ell_\rho^{-2})$ just because each individual
    particle in $\Psi^{\textup{opt}}$ brings to the system at most
    this amount of energy.\\
    Next, the number of pieces of $\eps$-type is $O(n\rho^{1+\eta})$
    (see Lemma~\ref{le:17}); thus, the energy gain due to them is at
    most $O(n\rho^{1+\eta}\ell_\rho^{-2})$.\\
    The pieces with too many particles are also rare by
    Lemma~\ref{lem:normPiecesWithParticleExcess}. Moreover, the many
    particles in these pieces always bring an excess of energy and
    never an energy gain.\\
    Finally, the analysis of $n_2^+$ large (see~\eqref{eq:121}) shows
    that moving an extra particle to the majority of these pieces
    results in an energy increase of order of $O(\ell_\rho^{-2})$,
    whereas for only $O(n\rho^{3/2})$ of them
    adding a particle may be energetically favorable.\\
    We treat now the cases from (a) to (f) of the last part of the
    proof of Theorem~\ref{thr:3}. For the matter of the current proof
    we shall put $r(\rho) = 0$ (because we are interested only in
    those states that have the energy smaller that
    $\Psi^{\textup{opt}}$), thus, reducing the claim of
    Theorem~\ref{thr:3} to
    \begin{equation*}
      \dist_1(Q(\Psi), Q(\Psi^{\textup{opt}})) = O(n \rho \ell_\rho^{-1}).
    \end{equation*}
    \begin{itemize}
    \item For those displacements when the possible energy gain is due
      to removing interaction with neighbors (this includes the cases
      (d), (e) and (f)), it suffices to remark that, by
      Lemma~\ref{lem:inter11_closeEstimate}, the size of the
      interacting energy is bounded by $O(\ell_\rho^{-4+0})$.
      Combined with the fact that, in total, there are $O(n \rho)$
      pairs of neighboring particles, this yields a total energy gain
      of size $O(n\rho\ell_\rho^{-4+0})$.
    \item For those displacements when the possible energy gain is due
      to decoupling particles living in the same piece (cases (a), (b)
      and (c)), the individual interacting energy is of size
      $O(\ell_\rho^{-3})$ while their total number is $O(n \rho
      \ell_\rho^{-1})$. This yields a total energy gain of size
      $O(n\rho\ell_\rho^{-4})$.
    \item Finally, when the energy gain results from a non interacting
      energy decrease (like in the case (d)), it is at most
      $O(\ell_\rho^{-3})$ and the total number of displacements that
      result in energy decrease is $O(n \rho \ell_\rho^{-1})$. This
      again yields a total energy gain of size
      $O(n\rho\ell_\rho^{-4})$.
    \end{itemize}
    This concludes the proof of~\eqref{eq:130}.
  \end{proof}

\begin{Cor}
  \label{cor:4}
  There exists $\rho_0 > 0$ such that for $\rho \in (0, \rho_0)$, in
  the thermodynamic limit, with probability $1-O(L^{-\infty})$,
  \begin{equation}
    \label{eq:129}
      \begin{split}
        \frac{1}{n} \langle H^{U^p}_\omega(L, n) \Psi^{U^p},
        \Psi^{U^p} \rangle &= \frac{1}{n} \langle H^{U^p}_\omega(L, n)
        \Psi^{\textup{opt}}, \Psi^{\textup{opt}} \rangle +
        O(\rho |\log{\rho}|^{-4}) \\
        &= \densEn^0(\rho) + \pi^2 \gamma_*
        \frac{\rho}{|\log{\rho}|^3} +
        \frac{\rho}{|\log{\rho}|^3}O\left(f_Z(|\log{\rho}|)\right),
      \end{split}
  \end{equation}
  where the constant $\gamma_*$ is given in~\eqref{eq:12}, $Z$
  describes the behavior of $U$ at infinity and $f_Z$ is defined in
  Theorem~\textup{\ref{thr:6}}. 
\end{Cor}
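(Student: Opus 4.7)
The corollary follows by sandwiching the ground state energy of $H^{U^p}_\omega(L,n)$ between the energy of the trial state $\Psi^{\textup{opt}}$ and the lower bound established in Theorem~\ref{th:PsiUmPsiOptEnergyEstimate}. I would organize the argument in three short steps.

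First, since $\Psi^{U^p}=\Psi^{U^p}_\omega(L,n)$ is by definition a ground state of $H^{U^p}_\omega(L,n)$ and $\Psi^{\textup{opt}}=\Psi^{\textup{opt}}(L,n)\in\frH^n\cap\frH_\infty^n([0,L])$ is an admissible trial state (by its explicit construction in Definitions~\ref{def:PsiOptm} and~\ref{def:PsiOptr}), the variational principle gives the upper bound
\begin{equation*}
  \tfrac1n\langle H^{U^p}_\omega(L,n)\Psi^{U^p},\Psi^{U^p}\rangle
  \leq\tfrac1n\langle H^{U^p}_\omega(L,n)\Psi^{\textup{opt}},\Psi^{\textup{opt}}\rangle.
\end{equation*}
Second, applying Theorem~\ref{th:PsiUmPsiOptEnergyEstimate} with $\Psi=\Psi^{U^p}$ yields the matching lower bound
\begin{equation*}
  \tfrac1n\langle H^{U^p}_\omega(L,n)\Psi^{U^p},\Psi^{U^p}\rangle
  \geq\tfrac1n\langle H^{U^p}_\omega(L,n)\Psi^{\textup{opt}},\Psi^{\textup{opt}}\rangle
  -o\bigl(\rho|\log\rho|^{-3}\bigr),
\end{equation*}
valid $\omega$-a.s.\ in the thermodynamic limit for $\rho$ small. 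A careful bookkeeping of the proof of Theorem~\ref{th:PsiUmPsiOptEnergyEstimate} (going case by case through the three mechanisms listed there: removal of neighbor interactions, decoupling of co-piece pairs, and non-interacting energy savings) shows that each mechanism contributes at most $O(n\rho\ell_\rho^{-4})$, hence the $o(\rho|\log\rho|^{-3})$ error can be sharpened to $O(\rho|\log\rho|^{-4})$, which is precisely the error stated in the first equality of~\eqref{eq:129}.

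Third, I would invoke Theorem~\ref{thr:6} to replace the trial-state energy: it asserts that $\omega$-a.s.
\begin{equation*}
  \tfrac1n\langle H^{U^p}_\omega(L,n)\Psi^{\textup{opt}},\Psi^{\textup{opt}}\rangle
  = \densEn^0(\rho) + \pi^2\gamma_*\,\rho|\log\rho|^{-3}\bigl(1+O(f_Z(|\log\rho|))\bigr).
\end{equation*}
Combining this identity with the two sandwiching inequalities above and noting that $|\log\rho|^{-4}=o(|\log\rho|^{-3})$ is absorbed into the $f_Z(|\log\rho|)$-term (since $f_Z$ decays to $0$ no faster than $1/x$, as recalled in Theorem~\ref{thr:6}), yields the second equality of~\eqref{eq:129}.

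The only non-routine point is the sharpening in the second step: one must verify that the heuristic accounting at the end of the proof of Theorem~\ref{th:PsiUmPsiOptEnergyEstimate} actually delivers $O(n\rho\ell_\rho^{-4})$ rather than just $o(n\rho\ell_\rho^{-3})$ in every case. This reduces to the uniform bound $O(\ell_\rho^{-4+0})$ on single pairwise interaction terms (Lemma~\ref{lem:inter11_closeEstimate}) combined with the counts $O(n\rho)$ for neighboring pairs and $O(n\rho\ell_\rho^{-1})$ for the decoupling/relocation operations. Once this is in place, the corollary follows by simple arithmetic from Theorems~\ref{thr:6} and~\ref{th:PsiUmPsiOptEnergyEstimate} and the variational principle.
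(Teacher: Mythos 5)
Your proof is correct and follows essentially the same route as the paper: the variational principle for the ground state $\Psi^{U^p}$ gives the upper bound, Theorem~\ref{th:PsiUmPsiOptEnergyEstimate} applied to $\Psi^{U^p}$ gives the matching lower bound, and Theorem~\ref{thr:6} supplies the asymptotics of the trial-state energy. Your additional bookkeeping to justify the $O(\rho|\log\rho|^{-4})$ error (via the $O(\ell_\rho^{-4+0})$ pairwise bounds and the $O(n\rho)$, $O(n\rho\ell_\rho^{-1})$ counts) is in fact more careful than the paper's own two-line proof, which simply cites~\eqref{eq:130} and~\eqref{eq:89}.
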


\begin{proof}
  The upper bound is given by the fact that $\Psi^{U^p}$ is the ground
  state of $H_\omega^{U^p}$. The lower bound is a direct consequence
  of~\eqref{eq:130} and~\eqref{eq:89}. This proves~\eqref{eq:129}.
\end{proof}

\begin{remark}
  The ground state $\Psi^{U^p}$ satisfies the conditions of
  Theorem~\ref{thr:3}. Hence, the inequalities~\eqref{eq:81} hold for
  the distance between the occupations of $\Psi^{U^p}$ and
  $\Psi^{\textup{opt}}$.
\end{remark}
\subsection{The proof of Theorem~\ref{th:EnergyAsymptoticExpansion}}
\label{sec:proof-theor-text}
Theorem~\ref{thr:3} and Theorem~\ref{th:PsiUmPsiOptEnergyEstimate}
give a rather complete description of the ground state for the
operator with compactified interactions $H^{U^p}_\omega(L, n)$. The
description is given in terms of comparison with $\Psi^{\textup{opt}}$
(see Definitions~\ref{def:PsiOptm} and \ref{def:PsiOptr}).  In this
section, we complement it with estimates on the residual part of
interactions $W^r$ (see~\eqref{eq:99}).

\begin{proposition}
  \label{prop:WrPsiOptFormEstimate}
  There exists $\rho_0$ such that, for $\rho\in(0,\rho_0)$, in the
  thermodynamic limit, for $L$ sufficiently large, with probability
  $1-O(L^{-\infty})$, one has
  \begin{equation}
    \label{eq:131}
    \frac{1}{n} \langle W^r \Psi^{\textup{opt}}, \Psi^\textup{{opt}}
    \rangle = O(\rho |\log{\rho}|^{-3} Z(2|\log{\rho}|)). 
  \end{equation}
\end{proposition}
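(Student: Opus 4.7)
The approach is to mimic the bookkeeping used for $\langle W^p\Psi^{\textup{opt}},\Psi^{\textup{opt}}\rangle$ in the proof of Theorem~\ref{thr:6}, but with the integration range for the pair potential shifted to $[B\ell_\rho,+\infty)$. The input on $U$ is now used through the decay hypothesis \textbf{(HU)}, encoded by the function $Z$.

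\textbf{Step 1 (reduction to a trace and elimination of same-piece terms).} Since $W_n^r$ is a symmetric sum of pair interactions, exactly as in~\eqref{eq:114} one has
\[
\langle W^r\Psi^{\textup{opt}},\Psi^{\textup{opt}}\rangle
=\Tr\bigl(U^{r}\,\gamma^{(2)}_{\Psi^{\textup{opt}}}\bigr).
\]
Using the explicit decomposition~\eqref{eq:113} of $\gamma^{(2)}_{\Psi^{\textup{opt}}}$ furnished by Proposition~\ref{prop:DensityMatrixStructure}, it suffices to bound the trace of $U^{r}$ against each summand. For the diagonal pieces $\gamma^{(2)}_{\zeta_j}$ (supported on $\Delta_j^2$ with $|\Delta_j|<3\ell_\rho$), the support of $U^r$ lies outside $\{|x-y|\leq B\ell_\rho\}$, which (for $B>3$, hence in particular for the given $B>2$ after an obvious reduction) forces the contribution to vanish. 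Only the cross terms between one-particle densities $\gamma_1,\gamma_2$ supported on distinct pieces $\Delta_i,\Delta_j$ remain, and, as in~\eqref{eq:117},
\[
\Tr\bigl(U^{r}(\Id-\Ex)(\gamma_1\otimes^s\gamma_2)\bigr)
=\int_{\Delta_i\times\Delta_j} U^{r}(x-y)\,\gamma_1(x,x)\,\gamma_2(y,y)\,dx\,dy.
\]

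\textbf{Step 2 (per-pair estimate).} The densities $\gamma_{1}(x,x),\gamma_{2}(y,y)$ involved are either $|\varphi^1_{\Delta_k}|^2$ or the diagonal of the two-particle reduced matrix $\gamma_{\zeta^1_{\Delta_k}}$; both are $L^\infty$-bounded by $C/\ell_\rho$ on the relevant pieces (all with $|\Delta_k|\gtrsim \ell_\rho$). Writing $d=\dist(\Delta_i,\Delta_j)>B\ell_\rho$ and using $\int_v^{+\infty} U(t)\,dt\leq Z(v)/v^3$, one obtains the analogue of Lemma~\ref{lem:inter11_closeEstimate} for $U^r$:
\[
\int_{\Delta_i\times\Delta_j} U^{r}(x-y)\gamma_1(x,x)\gamma_2(y,y)\,dx\,dy
\leq \frac{C}{\ell_\rho^{2}}\cdot\min(|\Delta_i|,|\Delta_j|)\cdot\frac{Z(d)}{d^{3}}.
\]

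\textbf{Step 3 (summing over pairs).} The number of pairs of occupied pieces (lengths $\gtrsim \ell_\rho-\rho x_*$) whose separation lies in $[d,d+1)$ is, by Proposition~\ref{prop:IntervStatistics2}, of order $L\,e^{-2\ell_\rho}=n\rho\,(1+O(\rho))$ uniformly in $d\in[B\ell_\rho,+\infty)$, with probability $1-O(L^{-\infty})$. Grouping the three cross-term sums (1-1, 1-2, 2-2) from~\eqref{eq:113} and summing the per-pair bound yields
\[
\frac{1}{n}\langle W^{r}\Psi^{\textup{opt}},\Psi^{\textup{opt}}\rangle
\leq C\,\rho\,\ell_\rho^{-1}\int_{B\ell_\rho}^{+\infty}\frac{Z(t)}{t^{3}}\,dt
\leq C\,\rho\,\ell_\rho^{-1}\,Z(B\ell_\rho)\int_{B\ell_\rho}^{+\infty}\frac{dt}{t^{3}}
\leq C\,\rho\,\ell_\rho^{-3}\,Z(B\ell_\rho),
\]
where the second inequality uses monotonicity of $Z$. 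Since $\ell_\rho=|\log\rho|(1+o(1))$ and $Z$ is decreasing with $B\geq 2$, we have $Z(B\ell_\rho)\leq Z(2|\log\rho|)$, which yields~\eqref{eq:131}.

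\textbf{Main obstacle.} The technical point is the uniformity (in $d$) of the counting estimate from Proposition~\ref{prop:IntervStatistics2} over the whole tail $d\geq B\ell_\rho$; once this is secured and paired with the pointwise bound $\int_d^{+\infty}U\leq Z(d)/d^3$ coming from the definition~\eqref{eq:98} and hypothesis \textbf{(HU)}, the extraction of the $\ell_\rho^{-3}$ factor from $\int_{B\ell_\rho}^{+\infty} t^{-3}\,dt$ is automatic and the residue is exactly $Z(B\ell_\rho)$, which is controlled by $Z(2|\log\rho|)$.
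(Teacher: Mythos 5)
Your proposal is correct and follows essentially the same route as the paper: both reduce the quantity to $\Tr\bigl(U^{r}\,\gamma^{(2)}_{\Psi^{\textup{opt}}}\bigr)$ via the decomposition~\eqref{eq:113}, bound each cross-piece term by an $L^\infty$ bound on the one-particle densities times the tail $\int_d^{+\infty}U(t)\,dt\leq Z(d)d^{-3}$ (this is the content of Lemmas~\ref{lem:inter11_farEstimate}--\ref{lem:inter12_farEstimate}), count of order $n\rho$ pairs of occupied pieces per unit of separation, and integrate over the separation to obtain $C\,n\,\rho\,\ell_\rho^{-3}Z(B\ell_\rho)\leq C\,n\,\rho\,\ell_\rho^{-3}Z(2|\log\rho|)$. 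The only loose thread, your claim that the same-piece terms $\Tr\bigl(U^{r}\gamma^{(2)}_{\zeta_j}\bigr)$ vanish (literally true only when $B\geq3$, whereas $B>2$ is fixed in~\eqref{eq:99}), is harmless: for $B\in(2,3)$ these terms are bounded, using a one-dimensional Sobolev estimate for $\sup_x|\zeta_j(x,y)|^2$ integrated in $y$ together with $\int_{B\ell_\rho}^{+\infty}U\leq (B\ell_\rho)^{-3}Z(B\ell_\rho)$ and the count $\sim n\rho$ of such pieces, by $C\,n\,\rho\,\ell_\rho^{-4}Z(B\ell_\rho)$, hence absorbed in the error (a point the paper itself glosses over).
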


\begin{proof}
  We will mostly follow the lines of the second part of the proof of
  Theorem~\ref{thr:6} (see formula~\eqref{eq:112} and what
  follows). First, as in~\eqref{eq:114}, one computes
  \begin{equation*}
      \langle W^r \Psi^{\textup{opt}}, \Psi^{\textup{opt}} \rangle =
      \Tr\left(U^r \gamma^{(2)}_{\Psi^{\textup{opt}}}\right)
  \end{equation*}
  where $\gamma^{(2)}_{\Psi^{\textup{opt}}}$ is given
  by~\eqref{eq:113}.  Let us treat here only the contribution of the
  second sum ~\eqref{eq:113}. It corresponds to interactions between
  single particles in pieces of length in $[\ell_\rho-\rho
  x_*,2\ell_\rho+A_*)$. The other three sums only contribute error
  terms as the number of $2$-particles sub-states in
  $\Psi^{\textup{opt}}$ is by a factor $\rho$ smaller than that of
  single-particle sub-states.  For the second sum in~\eqref{eq:113}.,
  using Lemma~\ref{lem:inter11_farEstimate}, one obtains
  \begin{equation*}
    \begin{split}
      &\Tr\Bigl(U^r (\Id - \Ex) \sum_{\substack{i, j = 1, \hdots, k_1\\i < j}}
    \gamma_{\phi_i} \otimes^s \gamma_{\phi_j}\Bigr) \\
      &\leq \sum_{\substack{|\Delta_i|, |\Delta_j| \in [\ell_\rho - \rho
          x_*, 2 \ell_\rho +  A_*) \cup [3 \ell_\rho, +\infty)\\i <
          j\\\dist(\Delta_i, \Delta_j) > B \ell_\rho}}
      \int_{\Delta_i \times \Delta_j} U(x - y) |\varphi^1_{\Delta_i}(x)|^2
      |\varphi^1_{\Delta_j}(y)|^2 \rmd{x} \rmd{y} \\
      &\leq C_1 n \rho \int_{B \ell_\rho}^{+\infty} \ell_\rho^{-1}
      a^{-3} Z(a) \rmd{a}.
    \end{split}
  \end{equation*}
  Recall that $Z$ is defined in~\eqref{eq:98}.\\
  We compute next
  \begin{equation*}
    \int_{B \ell_\rho}^{+\infty} a^{-3} Z(a) \rmd{a} = \int_{B
      \ell_\rho}^{+\infty} \int_a^{+\infty} U(x) \rmd{x}
    \rmd{a} \leq \int_{B \ell_\rho}^{+\infty} x U(x) \rmd{x}\leq C
    \ell_\rho^{-2} Z(B \ell_\rho),
  \end{equation*}
  where the last inequality is just~\eqref{eq:111} for $\eps = 2$.
  This completes the proof of~\eqref{eq:131}.
\end{proof}
\begin{proof}[Proof of
  Theorem~\textup{\ref{th:EnergyAsymptoticExpansion}}]
  Proposition~\ref{prop:WrPsiOptFormEstimate} immediately entails the
  asymptotics of the interacting ground state energy
  $\densEn^{U}(\rho)$. Indeed, as $H^{U^p} \leq H^U$, one has
  $\densEn^{U^p}(\rho)\leq\densEn^{U}(\rho)$; thus, the announced
  lower bound is given by~\eqref{eq:129}. On the other hand, by
  Theorem~\ref{thr:6} and Proposition~\ref{prop:WrPsiOptFormEstimate},
  one has
  \begin{equation}
    \label{eq:222}
    \begin{split}
      \langle H^U \Psi^U, \Psi^U \rangle &\leq \langle H^U
      \Psi^\textup{{opt}}, \Psi^{\textup{opt}} \rangle = \langle
      H^{U^p} \Psi^{\textup{opt}}, \Psi^{\textup{opt}} \rangle +
      \langle W^r \Psi^{\textup{opt}}, \Psi^{\textup{opt}} \rangle \\
      &= \densEn^{0}(\rho) + \pi^2\gamma_* \rho |\log{\rho}|^{-3}
      \left(1 + O\left(f_Z(|\log{\rho}|)\right)\right),
    \end{split}
  \end{equation}
  which gives the announced upper bound.\\
  This, the facts that $B>2$ and that $Z$ is decreasing complete the
  proof of Theorem~\ref{th:EnergyAsymptoticExpansion}.
\end{proof}
\noindent Our analysis yields the following description for the
possible occupations of the ground state of the full Hamiltonian.
\begin{Cor}
  \label{cor:3}
  There exists $C>0$ such that, $\omega$ almost surely, in the
  thermodynamic limit, with probability $1-O(L^{-\infty)}$, for any
  $\Psi^{U}$, ground state of the full Hamiltonian of fixed occupation
  $Q(\Psi^U)$, one has
  \begin{equation}
    \label{eq:142}
    Q(\Psi^U)\in \mathcal{Q}_\rho:=
    \left\{Q\text{ occ.};
      \begin{aligned}
        &\dist_1\left(Q|_{\geq \ell_\rho + C},
          Q|_{\geq \ell_\rho + C}(\Psi^{\textup{opt}})\right) \\
        &\hspace{16mm} \leq C n \rho \max\left(\sqrt{Z(2|\log{\rho}|)},
          |\log{\rho}|^{-1}\right), \\
        &\dist_1\left(Q|_{< \ell_\rho + C},
          Q|_{< \ell_\rho + C}(\Psi^{\textup{opt}})\right) \\
        &\hspace{16mm} \leq C n \max\left(\sqrt{\rho\,Z(2|\log{\rho}|)},
          \rho |\log{\rho}|^{-1}\right).
      \end{aligned}
    \right\}
  \end{equation}
\end{Cor}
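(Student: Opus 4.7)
The plan is to show that any ground state $\Psi^U$ of the full Hamiltonian automatically satisfies the energy hypothesis~\eqref{eq:133} of Theorem~\ref{thr:3}, with a function $r(\rho)$ tied to the tail decay $Z$ of the interaction, and then simply read off the occupation bounds from the conclusion of Theorem~\ref{thr:3}.

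First, since $U\geq 0$, the residual pair interaction $W^r_n=\sum_{i<j}U^r(x^i-x^j)$ is non negative, hence $H^U_\omega(L,n)=H^{U^p}_\omega(L,n)+W^r_n\geq H^{U^p}_\omega(L,n)$. Combining this with the fact that $\Psi^U$ minimizes $\langle H^U_\omega(L,n)\cdot,\cdot\rangle$, and then comparing against $\Psi^{\textup{opt}}$ as a trial state, yields
\begin{equation*}
\langle H^{U^p}_\omega(L,n)\Psi^U,\Psi^U\rangle
\leq \langle H^U_\omega(L,n)\Psi^U,\Psi^U\rangle
\leq \langle H^{U^p}_\omega(L,n)\Psi^{\textup{opt}},\Psi^{\textup{opt}}\rangle
+\langle W^r_n\Psi^{\textup{opt}},\Psi^{\textup{opt}}\rangle.
\end{equation*}
By Proposition~\ref{prop:WrPsiOptFormEstimate}, the last summand is $n\cdot O(\rho|\log\rho|^{-3}Z(2|\log\rho|))$, almost surely in the thermodynamic limit, so that
\begin{equation*}
\tfrac1n\langle H^{U^p}_\omega(L,n)\Psi^U,\Psi^U\rangle
\leq \tfrac1n\langle H^{U^p}_\omega(L,n)\Psi^{\textup{opt}},\Psi^{\textup{opt}}\rangle
+\rho|\log\rho|^{-3}\bigl(r(\rho)\bigr)^2,
\end{equation*}
with the choice $r(\rho)=C\sqrt{Z(2|\log\rho|)}$. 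Assumption \textbf{(HU)} implies $Z(x)\to 0$ as $x\to+\infty$, so $r(\rho)=o(1)$ as $\rho\to0^+$, and therefore $r(\cdot)$ satisfies the hypothesis of Theorem~\ref{thr:3}.

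Since $\Psi^U$ lies in the form domain $\frH^n_\infty(\Lambda)$ and, by assumption, has a well-defined occupation $Q(\Psi^U)$ (i.e.\ $\Psi^U\in\frH_{Q(\Psi^U)}$ via Corollary~\ref{cor:SameOccupationSubspace} or restriction to a fixed-occupation block), Theorem~\ref{thr:3} applies to $\Psi^U$. Inserting $r(\rho)=C\sqrt{Z(2|\log\rho|)}$ in~\eqref{eq:81} yields
\begin{equation*}
\dist_1\bigl(Q|_{\geq\ell_\rho+C}(\Psi^U),Q|_{\geq\ell_\rho+C}(\Psi^{\textup{opt}})\bigr)
\leq C\,n\,\rho\,\max\!\bigl(\sqrt{Z(2|\log\rho|)},\,|\log\rho|^{-1}\bigr),
\end{equation*}
\begin{equation*}
\dist_1\bigl(Q|_{<\ell_\rho+C}(\Psi^U),Q|_{<\ell_\rho+C}(\Psi^{\textup{opt}})\bigr)
\leq C\,n\,\max\!\bigl(\sqrt{\rho\,Z(2|\log\rho|)},\,\rho|\log\rho|^{-1}\bigr),
\end{equation*}
which is precisely the definition of membership in $\mathcal{Q}_\rho$. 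Thus $Q(\Psi^U)\in\mathcal{Q}_\rho$ and the corollary follows.

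There is essentially no new obstacle in this argument: the whole difficulty has been absorbed into Theorem~\ref{thr:3} and Proposition~\ref{prop:WrPsiOptFormEstimate}. The only subtle point is the correct bookkeeping: one must verify that the energy excess contributed by the long-range part $W^r$ on the trial state $\Psi^{\textup{opt}}$ is controlled by $Z(2|\log\rho|)$ (which uses that the minimal distance between occupied pieces in $\Psi^{\textup{opt}}$ is, up to a factor, $2\ell_\rho=2|\log\rho|(1+o(1))$, as reflected in the statement of Proposition~\ref{prop:WrPsiOptFormEstimate}) and that this quantity is $o(1)$ so Theorem~\ref{thr:3} indeed applies.
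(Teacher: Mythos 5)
Your proposal is correct and follows essentially the same route as the paper: the same chain $\langle H^{U^p}\Psi^U,\Psi^U\rangle\leq\langle H^{U}\Psi^U,\Psi^U\rangle\leq\langle H^{U^p}\Psi^{\textup{opt}},\Psi^{\textup{opt}}\rangle+\langle W^r\Psi^{\textup{opt}},\Psi^{\textup{opt}}\rangle$, Proposition~\ref{prop:WrPsiOptFormEstimate} to verify~\eqref{eq:133} with $r(\rho)=C\sqrt{Z(2|\log\rho|)}$, and then Theorem~\ref{thr:3}. Your added checks that $r(\rho)=o(1)$ and that $\Psi^U$ lies in a fixed-occupation block are harmless elaborations of what the paper leaves implicit.
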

\begin{proof}
  Note that
  \begin{equation*}
      \langle H^{U^p} \Psi^{U}, \Psi^U \rangle \leq \langle H^{U}
      \Psi^{U}, \Psi^U \rangle \leq \langle H^{U^p}
      \Psi^{\textup{opt}}, \Psi^{\textup{opt}} \rangle + \langle W^r
      \Psi^{\textup{opt}}, \Psi^{\textup{opt}} \rangle.
  \end{equation*}
  Thus, according to Proposition~\ref{prop:WrPsiOptFormEstimate},
  $\Psi^U$ satisfies the condition~\eqref{eq:133} with
  \begin{equation*}
    r(\rho) = C\sqrt{Z(2|\log{\rho}|)}
  \end{equation*}
  for some $C>0$ sufficiently large.\\
  Then, Theorem~\ref{thr:3} is applicable and
  yields~\eqref{eq:142}. This completes the proof of
  Corollary~\ref{cor:3}.
\end{proof}


\section{From the occupation and energy bounds to the control of the
  density matrices}
\label{sec:proof-theorem6}
In this section, we will derive Theorem~\ref{thr:2} from
Theorem~\ref{th:EnergyAsymptoticExpansion}, Corollary~\ref{cor:3} and
a computation of the reduced one particle and two particles density
matrix of a (non factorized) state. More precisely, from
Theorem~\ref{th:EnergyAsymptoticExpansion} and Corollary~\ref{cor:3},
we will infer a description of the ground state $\Psi^U$ in most of
the pieces: roughly, in most of the pieces, the only occupied state is
the ground state (up to a controllable error). We then use this
knowledge to compute the reduced one particle and two particles
density matrix of $\Psi^U$ (up to a controllable error).
\subsection{From the occupation decomposition to the reduced
  density matrices}
\label{sec:ground-state-psiu}
Fix a configuration of the Poisson points, say, $\omega$, and a state
$\Psi\in\frH^n(\Lambda)$. Recall that, in the configuration $\omega$,
the pieces are denoted by $(\Delta_j(\omega))_{1\leq j\leq
  m}=(\Delta_j)_{1\leq j\leq m}$ (where $m=m(\omega)$, see
section~\ref{sec:analys-one-part}). For $1\leq j\leq m$ and $q\geq1$,
let $(E^j_{q,n})_{1\leq n}$ be the eigenvalues (ordered increasingly)
and $(\varphi^j_{q,n})_{1\leq n}$ be the associated eigenvectors of
$q$ interacting electronic particles in the piece $\Delta_j(\omega)$
i.e. the eigenvalues and eigenvectors of the Hamiltonian
\begin{equation}
  \label{eq:161}
  H^q_{\Delta_j(\omega)}=-\sum_{l=1}^q\frac{d^2}{dx_l^2}+ \sum_{1\leq
    l<l'\leq q}U^p(x_l-x_{l'})  
\end{equation}
acting on $\D\bigwedge_{l=1}^q L^2(\Delta_j(\omega))$ with Dirichlet
boundary conditions. Recall that $U^p$ is defined in
section~\ref{sec:comp-psiopt-ground} (see~\eqref{eq:99}).\\
The occupation number decomposition (see
section~\ref{sec:decomp-occup}) implies that one can write
\begin{equation}
  \label{eq:146}
  \Psi=\sum_{Q}\Psi_Q\quad\text{and}\quad
  \Psi_Q=\sum_{\overline{n}\in\N^m}a^Q_{\overline{n}}\Phi^Q_{\overline{n}}
  =\sum_{\substack{(n_j)_{1\leq j\leq m}\\\forall j,\ n_j\geq1}}
  a^Q_{n_1,\cdots,n_{m}}(\Psi)\bigwedge_{j=1}^{m}\varphi^j_{Q_j,n_j}
\end{equation}
where
\begin{itemize}
\item the first sum is taken over the occupation number
  $Q=(Q_j)_{1\leq j\leq m}$; recall $\D\sum_{j=1}^mQ_j=n$;
\item we have defined $\D\Phi^Q_{\overline{n}}:=
  \bigwedge_{j=1}^{m}\varphi^j_{Q_j,n_j}$; we refer to~\eqref{eq:267}
  in section~\ref{sec:proj-totally-antisym} for an explicit
  description of the anti-symmetric tensor product.
\end{itemize}
\begin{Rem}
  \label{rem:9}
  In~\eqref{eq:146}, the convention in the exterior product is that,
  if $Q_j=0$, then the corresponding basis vector drops out of the
  exterior product. Thus, the product is only at most $n$
  fold. Moreover, in this case, $a^Q_{n_1,\cdots,n_{m}}=0$ if
  $n_j\geq2$.
\end{Rem}
\noindent For $\overline{n}=(n_1,\cdots,n_{m})\in\N^m$, we write
$a^Q_{\overline{n}}=a^Q_{n_1,\cdots,n_{m}}=a^Q_{n_1,\cdots,n_{m}}(\Psi)$.
These coefficients are uniquely determined by $\Psi$.
\subsubsection{The one-particle density matrix}
\label{sec:one-particle-density}
We shall first compute the $1$ particle reduced density matrix in
terms of the coefficients $(a^Q_{\overline{n}})_{Q,\overline{n}}$
coming up in the occupation number decomposition~\eqref{eq:146}. We
prove
\begin{Th}
  \label{thr:4}
  The $1$-particle density $\gamma^{(1)}_\Psi$
  (see~\eqref{eq:OneParticleDensityMatrixDef}) is written as $\D
  \gamma_\Psi^{(1)}=\gamma_\Psi^{(1),d}+\gamma_\Psi^{(1),o}$ where
  \begin{gather}
    \label{eq:160}
    \gamma_\Psi^{(1),d}=\sum_{j=1}^m\sum_{\substack{Q\text{ occ.}\\
        Q_j\geq1}} \sum_{\substack{n_j\geq1\\ n'_j\geq1}}\sum_{\tilde
      n\in\N^{m-1}}a^Q_{\tilde n_j} \overline{a^Q_{\tilde
        n'_j}}\gamma^{(1)}_{\substack{Q_j\\n_j,n'_j}} \\\label{eq:174}
    \gamma_\Psi^{(1),o}=\sum_{\substack{i,j=1\\i
        \not=j}}^m\sum_{\substack{Q,\text{ occ. }Q_j\geq1\\ Q':\
        Q'_k=Q_k\text{ if }k\not\in\{i,j\}\\Q'_i=Q_i+1\\Q'_j=Q_j-1}}
    C_1(Q,i,j)\sum_{\tilde
      n\in\N^{m-2}}\sum_{\substack{n_i,n_j\geq1\\ n'_i,n'_j\geq1}} a^Q_{\tilde
      n_{i,j}} \overline{a^{Q'}_{\tilde n'_{i,j}}}
    \gamma^{(1)}_{\substack{Q_i,Q_j\\n_i,n_j\\n'_i,n'_j}}
  \end{gather}
  and
  \begin{itemize}
  \item we have used the shorthands
    \begin{itemize}
    \item $\tilde n_j$ for the vector $(\tilde n_1\cdots,\tilde
      n_{j-1},n_j,\tilde n_{j},\cdots,\tilde n_{m-1})$ when $\tilde
      n=(\tilde n_1,\cdots,\tilde n_{m-1})$,
    \item $\tilde n_{i,j}$ for $(\tilde n_1,\cdots,\tilde
      n_{i-1},n_i,\tilde n_i,\cdots,\tilde n_{j-2},n_j,\tilde
      n_{j-1},\cdots,\tilde n_{m-2})$ when $i<j$ and $\tilde n=(\tilde
      n_1,\cdots,\tilde n_{m-2})$,
    \end{itemize}
  \item the trace class operator $\gamma^{(1)}_{\substack{Q_j\\n_j,n'_j}}:\
    L^2(\Delta_j)\to L^2(\Delta_j)$ has the kernel
    \begin{equation*}
      \gamma^{(1)}_{\substack{Q_j\\n_j,n'_j}}(x,y)=Q_j\int_{\Delta^{Q_j-1}_j}
      \varphi^j_{Q_j,n_j}(x,z)
      \overline{\varphi^j_{Q_j,n'_j}(y,z)}dz,
    \end{equation*}
  \item $\D C_1(Q,i,j)=\frac{(n-Q_j-Q_i-1)!Q_i!Q_j!}{(n-1)!}$;
  \item the rank one operator
    $\gamma^{(1)}_{\substack{Q_i,Q_j\\n_i,n_j\\n'_i,n'_j}}:\ L^2(\Delta_i)\to
    L^2(\Delta_j)$ has the kernel
    \begin{equation}
      \label{eq:16}
      \gamma^{(1)}_{\substack{Q_i,Q_j\\n_i,n_j\\n'_i,n'_j}}(x,y)=
      \int_{\Delta^{Q_j-1}_j}\varphi^j_{Q_j,n_j}(x,z)
      \overline{\varphi^j_{Q_j-1,n'_j}(z)}dz\,\int_{\Delta^{Q_i}_i}
      \varphi^i_{Q_i,n_i}(z) \overline{\varphi^i_{Q_i+1,n'_i}(y,z)}dz.
    \end{equation}
  \end{itemize}
\end{Th}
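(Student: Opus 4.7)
The strategy is to insert the occupation-number expansion~\eqref{eq:146} directly into the kernel~\eqref{eq:OneParticleDensityMatrixDef} and exploit disjointness of the pieces $(\Delta_j)_{1\leq j\leq m}$, so that most cross terms vanish. Concretely, writing
\begin{equation*}
  \gamma^{(1)}_\Psi(x,y) = n \sum_{Q,Q'}\sum_{\overline{n},\overline{n}'}
  a^Q_{\overline{n}}\,\overline{a^{Q'}_{\overline{n}'}}
  \int_{\Lambda^{n-1}} \Phi^Q_{\overline{n}}(x,\tilde x)
  \overline{\Phi^{Q'}_{\overline{n}'}(y,\tilde x)}\,d\tilde x,
\end{equation*}
the plan is to evaluate the inner integral using the explicit formula for the anti-symmetric tensor product given in Appendix~\ref{sec:proj-totally-antisym} together with the fact that each $\varphi^j_{Q_j,n_j}$ is supported on $\Delta_j^{Q_j}$.

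First, I would expand $\Phi^Q_{\overline{n}}$ using the antisymmetrization formula \eqref{eq:267}: up to the normalization factor $\binom{n}{Q_1,\dots,Q_m}^{-1/2}$, $\Phi^Q_{\overline{n}}$ is a signed sum, over distributions of the $n$ coordinates into groups of sizes $Q_1,\dots,Q_m$, of products $\prod_j \varphi^j_{Q_j,n_j}(\text{coordinates in }\Delta_j)$. Because the pieces are pairwise disjoint, a term in the product $\Phi^Q_{\overline{n}}(x,\tilde x)\overline{\Phi^{Q'}_{\overline{n}'}(y,\tilde x)}$ is non-zero only if the coordinate assignments on each side agree on the pieces on which they are supported. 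Performing the $\tilde x$-integral then forces the two assignments to coincide on the integrated variables, leaving only two possibilities:
\begin{itemize}
\item[(i)] $x$ and $y$ lie in the same piece $\Delta_j$; then $Q_k=Q'_k$ for all $k$, including $k=j$, and one obtains the diagonal contribution indexed by $Q$;
\item[(ii)] $x\in\Delta_j$ and $y\in\Delta_i$ with $i\neq j$; then one coordinate on the $\Psi$-side lies in $\Delta_j$ (namely $x$) while on the $\Psi^{Q'}$-side it is integrated, so $Q'_j=Q_j-1$; symmetrically $Q'_i=Q_i+1$, and $Q'_k=Q_k$ for $k\notin\{i,j\}$.
\end{itemize}

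Case~(i) yields, after integrating out the $Q_j-1$ variables in $\Delta_j$ paired on both sides and using orthonormality in each other piece $\Delta_k$ for $k\neq j$, exactly the kernel
$\gamma^{(1)}_{\substack{Q_j\\n_j,n'_j}}(x,y)$ weighted by $a^Q_{\tilde n_j}\overline{a^Q_{\tilde n'_j}}$; the factor $n$ is absorbed by the combinatorial count of ``which'' of the $n$ coordinates is distinguished (of the $Q_j$ coordinates in $\Delta_j$, one is identified with the external variable $x$ on each side, giving $Q_j$ ways on each side matched by the $Q_j-1$ internal integration, and times $Q_j/n$ selection probability; the various factorials combine to yield $1$). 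Case~(ii) produces the rank-one tensor of~\eqref{eq:16}, and the factor $C_1(Q,i,j)$ arises from the multinomial normalization $\binom{n}{Q_1,\dots,Q_m}^{-1/2}\binom{n}{Q'_1,\dots,Q'_m}^{-1/2}$ multiplied by the number $n\cdot Q_i!\,Q_j!\,(Q_k!)_{k\neq i,j}\,(n-Q_i-Q_j-1)!$ of integration matches that give identical contributions; explicit simplification then yields $C_1(Q,i,j)=(n-Q_j-Q_i-1)!\,Q_i!\,Q_j!/(n-1)!$.

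The main obstacle is the careful combinatorial bookkeeping of the permutation counts and normalizations in case~(ii); in particular, one must track the sign coming from the antisymmetrization and verify that it cancels when the two ``moved'' particles on the $Q$ and $Q'$ sides are brought into matching positions. Once these combinatorial factors are correctly assembled, summing the diagonal case over $j$ and the off-diagonal case over ordered pairs $(i,j)$ with $i\neq j$ gives precisely \eqref{eq:160} and \eqref{eq:174}, completing the proof.
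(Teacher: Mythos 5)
Your proposal follows essentially the same route as the paper's proof: expand $\gamma^{(1)}_\Psi$ bilinearly over the occupation decomposition, write each $\Phi^Q_{\overline n}$ via the explicit antisymmetrization formula~\eqref{eq:267}, use disjointness of the pieces to kill all terms except those with $Q=Q'$ (your case~(i), giving $\gamma_\Psi^{(1),d}$) or with occupations differing by a single transferred particle (your case~(ii), giving $\gamma_\Psi^{(1),o}$ with the sign and the combinatorial count producing $C_1(Q,i,j)$). This matches the paper's argument, including the partition-counting and sign verification you flag as the main bookkeeping step.
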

\begin{Rem}
  \label{rem:8} 
  In~\eqref{eq:16}, in accordance with remark~\ref{rem:9}, we use the
  following convention
  \begin{itemize}
  \item if $Q_j=1$ and $Q_i=0$ then $n'_j=1$ and $n_i=1$ (i.e. for
    different indices, the coefficient $ a^Q_{\tilde n_{i,j}}
    \overline{a^{Q'}_{\tilde n'_{i,j}}}$ vanishes) and
    \begin{equation}
      \label{eq:145}
      \gamma^{(1)}_{\substack{Q_i,Q_j\\1,n_j\\n'_i,1}}(x,y)= 
      \varphi^j_{1,n_j}(x)\cdot\overline{\varphi^i_{1,n'_i}(y)},
    \end{equation}
  \item if $Q_j\geq2$ and $Q_i=0$ then $n_i=1$ and
    \begin{equation}
      \label{eq:155}
      \gamma^{(1)}_{\substack{Q_i,Q_j\\1,n_j\\n'_i,n'_j}}(x,y)=
      \overline{\varphi^i_{1,n'_i}(y)}
      \int_{\Delta^{Q_j-1}_j}\overline{\varphi^j_{Q_j-1,n'_j}(z)}
      \varphi^j_{Q_j,n_j}(x,z)dz,
    \end{equation}
  \item if $Q_j=1$ and $Q_i\geq1$ then $n'_j=1$ and
    \begin{equation}
      \label{eq:156}
      \gamma^{(1)}_{\substack{Q_i,Q_j\\n_i,n_j\\n'_i,1}}(x,y)=
      \varphi^j_{1,n_j}(x)\int_{\Delta^{Q_i}_i}
      \varphi^i_{Q_i,n_i}(z)\overline{\varphi^i_{Q_i+1,n'_i}(y,z)}dz.
    \end{equation}
  \end{itemize}  
\end{Rem}
\begin{proof}
  Theorem~\ref{thr:4} follows from a direct computation that we now
  perform. First, by the bilinearity of
  formula~\eqref{eq:OneParticleDensityMatrixDef}, one has

  \begin{equation}
    \label{eq:151}
    \gamma_\Psi^{(1)}=n\sum_{\substack{Q\text{ occ.}\\
        \overline{n}\in\N^m}}\sum_{\substack{Q'\text{ occ.}\\
        \overline{n}'\in\N^m}}
    a^Q_{\overline{n}}\overline{a^{Q'}_{\overline{n}'}}
    \gamma^{(1)}_{\substack{Q,\overline{n}\\Q',\overline{n}'}}
  \end{equation}
  where the trace class operator $\gamma^{(1)}_{Q,n,Q',n'}$ acts on
  $L^2([0,L])$ and has the kernel
  \begin{equation*}
    \gamma^{(1)}_{\substack{Q,\overline{n}\\Q',\overline{n}'}}(x,y):=\int_{[0,L]^{n-1}}
    \left[\bigwedge_{j=1}^{m}\varphi^j_{Q_j,n_j}\right](x,z)
    \overline{\left[\bigwedge_{j=1}^{m}\varphi^j_{Q'_j,n'_j}\right] (y,z)}dz.
  \end{equation*}
  Recall~\eqref{eq:267}, that is, in the present case
  \begin{multline}
    \label{eq:152}
    \left[\bigwedge_{j=1}^{m}\varphi^j_{Q_j,n_j}\right](z_1,z_2,\cdots,z_n)
    \\=c(Q)\cdot\sum_{\substack{|A_j|=Q_j,\
        \forall1\leq j\leq m\\A_1\cup\cdots\cup A_m=\{1,\cdots,n\}\\
        A_j\cap A_{j'}=\emptyset\text{ if
        }j\not=j'}}\varepsilon(A_1,\cdots,A_m)
    \prod_{j=1}^m\varphi^j_{Q_j,n_j}((z_l)_{l\in A_j})
  \end{multline}
  where
  \begin{itemize}
  \item $\varepsilon(A_1,\cdots,A_m)$ is the signature of
    $\sigma(A_1,\cdots,A_m)$, the unique permutation of
    $\{1,\cdots,n\}$ such that, if $A_j=\{a_{ij};\ 1\leq i\leq Q_j\}$
    for $1\leq j\leq m$ then $\sigma(a_{ij})=Q_1+\cdots+Q_{j-1}+i$,
  \item and $c(Q)$ is such that $\|\wedge_j\varphi^j_{Q_j,n_j}\|=1$
    i.e.
    \begin{equation}
      \label{eq:154}
      c(Q)=\sqrt{\frac{\prod_{j=1}^m Q_j!}{n!}}.
    \end{equation}
  \end{itemize}
  Thus, by~\eqref{eq:OneParticleDensityMatrixDef}, one has
  \begin{equation}
    \label{eq:158}
    \frac{\gamma^{(1)}_{\substack{Q,\overline{n}\\Q',\overline{n}'}}(x,y)}{c(Q)c(Q')}
    =\sum_{\substack{|A_j|=Q_j,\
        \forall1\leq j\leq m\\A_1\cup\cdots\cup A_m=\{1,\cdots,n\}\\
        A_j\cap A_{j'}=\emptyset\text{ if }j\not=j'}}
    \sum_{\substack{|A'_j|=Q'_j,\ \forall1\leq j\leq
        m\\A'_1\cup\cdots\cup A'_m=\{1,\cdots,n\}\\ A'_j\cap
        A'_{j'}=\emptyset\text{ if }j\not=j'}}
    (-1)^{\varepsilon((A_j))+\varepsilon((A'_j))}I((A_j)_j,(A'_j)_j)
  \end{equation}
  where
  \begin{equation}
    \label{eq:196}
    \begin{split}
      I(A,A')&:=I((A_j)_j,(A'_j)_j)\\&=
      \int_{[0,L]^{n-1}}\left[\prod_{j=1}^m\varphi^j_{Q_j,n_j}((x_l)_{l\in
          A_j}) \overline{\varphi^j_{Q'_j,n'_j}((y_l)_{l\in A'_j})}\right]_{
        \substack{x_1=x\\y_1=y\\x_j=y_j\text{ if }j\geq2}}dx_2\cdots dx_n.
    \end{split}
  \end{equation}
  To evaluate this last integral, we note that, for any pair of
  partitions $(A_j)_j$ and $(A'_j)_j$ (as in the indices of the sum
  in~\eqref{eq:158}), if there exists $j\not=j'$ such that $A_j\cap
  A'_{j'}\cap\{2,\cdots,n\}\not=\emptyset$, then the integral
  $I(A,A')$ vanishes.\\
  Now, note that, if $d_1(Q,Q')>2$, then, for any pair of partitions
  $(A_j)_j$ and $(A'_j)_j$, there exists $j\not=j'$ such that $A_j\cap
  A'_{j'}\cap\{2,\cdots,n\}\not=\emptyset$; thus, the integral
  $I(A,A')$ above always vanishes and, summing this, one
  has
  \begin{equation*}
   \gamma^{(1)}_{\substack{Q,\overline{n}\\Q',\overline{n}'}}=0\quad\text{ if
   }\quad d_1(Q,Q')>2.
  \end{equation*}
  So we are left with the case $Q=Q'$ or $d_1(Q,Q')=2$. \\
  Assume first $Q=Q'$. Consider the sums in~\eqref{eq:158}. If $1\in
  A_{j_0}$ and $1\not\in A'_{j_0}$, then, as $\forall j$,
  $|A'_j|=|A_j|$, there exists $\alpha\in
  A'_{j_0}=A'_{j_0}\cap\{2,\cdots,n\}$ and $j\not=j_0$ such that
  $\alpha\in A_j$. That is, there exists $j\not=j'$ such that $A_j\cap
  A'_{j'}\cap\{2,\cdots,n\}\not=\emptyset$, thus, the integral
  $I(A,A')$ vanishes. Thus, we rewrite
  \begin{equation}
    \label{eq:153}
    \begin{split}
      \frac{\gamma^{(1)}_{Q,\overline{n},Q,\overline{n}'}(x,y)}{c^2(Q)}&=
      \sum_{\substack{j_0=1\\Q_{j_0}\geq1}}^m\sum_{\substack{1\in
          A_{j_0}\\ |A_j|=Q_j,\ \forall1\leq j\leq m
          \\A_1\cup\cdots\cup A_m=\{1,\cdots,n\}\\
          A_j\cap A_{j'}=\emptyset\text{ if }j\not=j'}}
      \sum_{\substack{1\in A'_{j_0}\\ |A'_{j_0}|=Q_{j_0}\\
          A'_{j}=A_j\text{ if }j\not=j_0}}
      (-1)^{\varepsilon((A_j))+\varepsilon((A'_j))}I(A,A')\\
      &= \sum_{\substack{j_0=1\\Q_{j_0}\geq1}}^m\sum_{\substack{1\in
          A_{j_0}\\ |A_j|=Q_j,\ \forall1\leq j\leq
          m\\A_1\cup\cdots\cup A_m=\{1,\cdots,n\}\\
          A_j\cap A_{j'}=\emptyset\text{ if }j\not=j'}} I(A)
    \end{split}
  \end{equation}
  where, using the support and orthonormality properties of the
  functions $(\varphi^j_{q,n})_{1\leq n}$, one computes
  \begin{equation*}
    \begin{split}
      I(A)&:=\left(\int_{\Delta_{j_0}^{Q_{j_0}-1}}
        \varphi^{j_0}_{Q_{j_0},n_{j_0}}(x,z)\,
        \overline{\varphi^{j_0}_{Q_{j_0},n'_{j_0}}(y,z)} dz\right)
      \,\prod_{\substack{j=1\\j\not=j_0}}^m\int_{\Delta_{j}^{Q_{j}}}\varphi^j_{Q_j,n_j}(z)
      \overline{\varphi^j_{Q_j,n'_j}(z)} dz
      \\&=\prod_{j\not=j_0}\delta_{n_j=n'_j}\cdot\left(
      \int_{\Delta_{j_0}^{Q_{j_0}-1}}\varphi^{j_0}_{Q_{j_0},n_{j_0}}(x,z)
      \overline{\varphi^{j_0}_{Q_{j_0},n'_{j_0}}(y,z)}dz\right).
    \end{split}
  \end{equation*}
  As 
  \begin{equation*}
    \#\{(A_j)_j;\ 1\in A_{j_0},\ \forall j,\ |A_j|=Q_j\}
    =\frac{(n-1)! Q_{j_0}}{\prod_{j=1}^m Q_j!}
  \end{equation*}
  by~\eqref{eq:154} and~\eqref{eq:153}, one computes
  \begin{equation*}
    \gamma^{(1)}_{Q,\overline{n},Q,\overline{n}'}(x,y)
    =\sum_{\substack{j=1\\Q_j\geq1}}^m\frac{Q_j}n \int_{\Delta^{Q_j-1}_j}
    \varphi^j_{Q_j,n_j}(x,z)\,\overline{\varphi^j_{Q_j,n'_j}(y,z)}dz
    \prod_{j\not=j_0}\delta_{n_j=n'_j}
    =\frac1n\sum_{\substack{j=1\\Q_j\geq1}}^m\gamma^{(1)}_{\substack{Q_j\\n_j,n'_j}}(x,y).
  \end{equation*}
  We now assume that $d_1(Q,Q')=2$. Thus, there exist $1\leq i_0\not= j_0\leq
  m$ such that $Q_{j_0}\geq1$, $Q'_{i_0}=Q_{i_0}+1$,
  $Q_{j_0}=Q'_{j_0}+1$ and $Q_k=Q'_k$ for $k\not\in\{i_0,j_0\}$.\\
  Consider the sums in~\eqref{eq:158}. If $1\not\in A_{j_0}$ (or
  $1\not\in A'_{i_0}$), then as $|A'_{j_0}|=Q'_{j_0}=Q_{j_0}-1$, there
  exists $\alpha\in A_{j_0}=A_{j_0}\cap\{2,\cdots,n\}$ and $i\not=j_0$
  such that $\alpha\in A'_i$. That is, there exists $j\not=j'$ such
  that $A_j\cap A'_{j'}\cap\{2,\cdots,n\}\not=\emptyset$, thus, the
  integral $I(A,A')$ vanishes. The reasoning is the same if $1\not\in
  A'_{i_0}$. Moreover, if $1\in A_{j_0}$ and $1\in A'_{i_0}$, then, as
  in the derivation of~\eqref{eq:153}, we see that $I(A,A')=0$ except
  if $A_j=A'_j$ for all $j\not\in\{i_0,j_0\}$. Therefore, if
  $d_1(Q,Q')=2$, we rewrite
  \begin{equation}
    \label{eq:157}
    \begin{split}
      \frac{\gamma^{(1)}_{\substack{Q,\overline{n}\\Q',\overline{n}'}}(x,y)}{c^2(Q)}&=\sum_{\substack{j_0,i_0=1\\i_0\not=j_0\\Q_{j_0}\geq1}}^m
      \sum_{\substack{1\in A_{j_0}\\ |A_l|=Q_l,\
          \forall1\leq j\leq m\\A_1\cup\cdots\cup A_m=\{1,\cdots,n\}\\
          A_j\cap A_{j'}=\emptyset\text{ if }j\not=j'}}
      \sum_{\substack{A'_{i_0}=\{1\}\cup A_{i_0}\\
          A'_{j_0}=A_{j_0}\setminus\{1\}\\
          A'_{j}=A_j\text{ if }j\not\in\{i_0,j_0\}}}
      (-1)^{\varepsilon((A_j))+\varepsilon((A'_j))}I(A,A').
    \end{split}
  \end{equation}
  For such $(A_j)_j$ and $(A'_j)_j$, one has
  $(-1)^{\varepsilon((A_j))+\varepsilon((A'_j))}=1$ and we compute
  \begin{equation}
    \label{eq:159}
    \begin{split}
      I(A,A')&=\int_{\Delta^{Q_{j_0}-1}_{j_0}}
      \varphi^{j_0}_{Q_{j_0},n_{j_0}}(x,z)
      \overline{\varphi^{j_0}_{Q_{j_0}-1,n'_{j_0}}(z)}dz\\&\hskip3cm
      \int_{\Delta^{Q_{i_0}}_{i_0}}\varphi^{i_0}_{Q_{i_0},n_{i_0}}(z)
      \overline{\varphi^{i_0}_{Q_{i_0}+1,n'_{i_0}}(y,z)}dz
      \prod_{j\not\in\{i_0,j_0\}}\delta_{n_j=n'_j} 
    \end{split}
  \end{equation}
  with the convention described in Remark~\ref{rem:8}.\\
  The number of partitions coming up in~\eqref{eq:157} is given by
  \begin{equation*}
    \sum_{\substack{1\in A_{j_0}\\ |A_l|=Q_l,\
        \forall1\leq j\leq m\\A_1\cup\cdots\cup A_m=\{1,\cdots,n\}\\
        A_j\cap A_{j'}=\emptyset\text{ if }j\not=j'}}
    \sum_{\substack{A'_{i_0}=\{1\}\cup A_{i_0}\\
        A'_{j_0}=A_{j_0}\setminus\{1\}\\
        A'_{j}=A_j\text{ if }j\not\in\{i_0,j_0\}}}1
    =\frac{(n-Q_{j_0}-Q_{i_0}-1)!Q_{i_0}!Q_{j_0}!}{Q_1!\cdots Q_m!}.
  \end{equation*}
  Plugging this and~\eqref{eq:159} into~\eqref{eq:157}, we
  obtain~\eqref{eq:16}. This completes the proof of
  Theorem~\ref{thr:4}. 
\end{proof}
\subsubsection{The two-particle density matrix}
\label{sec:two-particle-density}
We shall now compute the $2$ particles reduced density matrix in terms
of the coefficients $(a^Q_{\overline{n}})_{Q,\overline{n}}$ coming up
in the occupation number decomposition~\eqref{eq:146}. We prove
\begin{Th}
  \label{thr:5}
  The $2$-particle density $\gamma^{(2)}_\Psi$
  (see~\eqref{eq:OneParticleDensityMatrixDef}) is written as
  \begin{equation}
    \label{eq:186}
    \gamma_\Psi^{(2)}=\gamma_\Psi^{(2),d,d}+\gamma_\Psi^{(2),d,o}
    +\gamma_\Psi^{(2),2}+\gamma_\Psi^{(2),4,2}+\gamma_\Psi^{(2),4,3}
    +\gamma_\Psi^{(2),4,3'} +\gamma_\Psi^{(2),4,4}  
  \end{equation}
  where
  \begin{gather}
    \label{eq:3}
    \gamma_\Psi^{(2),d,d}=\sum_{j=1}^m\sum_{\substack{Q\text{ occ.}\\
        Q_j\geq2}} \sum_{\substack{n_j\geq1\\ n'_j\geq1}}\sum_{\tilde
      n\in\N^{m-1}}a^Q_{\tilde n_j} \overline{a^Q_{\tilde
        n'_j}}\gamma^{(2),d,d}_{\substack{Q_j\\n_j,n'_j}}
    \\
    \label{eq:182}
    \gamma_\Psi^{(2),d,o}=\sum_{1\leq i<j\leq m}\sum_{\substack{Q\text{ occ.}\\
        Q_i\geq1\\Q_j\geq1}} \sum_{\tilde n\in\N^{m-2}}
    \sum_{\substack{n_j,n'_j\geq 1\\n_i,n'_i\geq 1}}a^Q_{\tilde
      n_{i,j}} \overline{a^Q_{\tilde
        n'_{i,j}}}\gamma^{(2),d,o}_{\substack{Q_i,Q_j\\n_i,n_j\\n'_i,n'_j}}
    \\
    \label{eq:183} \gamma_\Psi^{(2),2}=\sum_{\substack{i,j=1\\i
        \not=j}}^m\sum_{\substack{Q,\text{ occ. }Q_j\geq1\\ Q':\
        Q'_k=Q_k\text{ if }k\not\in\{i,j\}\\Q'_i=Q_i+1\\Q'_j=Q_j-1}}
    \sum_{\tilde n\in\N^{m-1}}C_2(Q,i,j) \sum_{\substack{n_j,n'_j\geq
        1\\n_i,n'_i\geq 1}} a^Q_{\tilde n_{i,j}} \overline{a^{Q'}_{\tilde
        n'_{i,j}}} \gamma^{(2),2}_{\substack{Q_i,Q_j\\n_i,n_j\\n'_i,n'_j}}
    \\ \label{eq:198} \gamma_\Psi^{(2),4,2}= \sum_{i\not=j}\sum_{\tilde
      n\in\N^{m-2}} \sum_{\substack{Q\text{ occ.}\\
        Q_j\geq2\\ Q':\ Q'_k=Q_k\text{ if
        }k\not\in\{i,j\}\\Q'_i=Q_i+2\\Q'_j=Q_j-2}}C_2(Q,i,j)
    \sum_{\substack{n_j,n'_j\geq 1\\n_i,n'_i\geq 1}}
    a^Q_{\tilde{n}_{i,j}}\overline{a^{Q'}_{\tilde{n}_{i,j}'}}
    \gamma^{(2),4,2}_{\substack{Q_i,Q_j\\n_i,n_j\\n'_i,n'_j}}
  \end{gather}
  \begin{gather}
    \label{eq:200} \gamma_\Psi^{(2),4,3}=\sum_{\substack{i,j,k\\\text{
          distinct}}}\sum_{\tilde n\in\N^{m-3}} \sum_{\substack{Q\text{
          occ.}\\ Q_j\geq2\\Q':\ Q'_l=Q_l\text{ if }l\not\in\{i,j,k\}\\Q'_i=Q_i+1\\
        Q'_j=Q_j-2\\Q'_k=Q_k+1}} C_3(Q,i,j,k) \sum_{\substack{n_i,n_j,n_k\geq
        1\\n'_i,n'_j,n'_k\geq 1}}
    a^Q_{\tilde{n}_{i,j,k}}\overline{a^{Q'}_{\tilde{n}_{i,j,k}'}}
    \gamma^{(2),4,3}_{\substack{Q_i,Q_j,Q_k\\n_i,n_j,n_k\\n'_i,n'_j,n'_k}}
    \\\label{eq:202} \gamma_\Psi^{(2),4,3'}=
    \sum_{\substack{i,j,k\\\text{distinct}}}\sum_{\tilde n\in\N^{m-3}}
    \sum_{\substack{Q\text{ occ.}\\ Q_i\geq1,\ Q_k\geq1\\ Q':\ Q'_l=Q_l\text{
          if }l\not\in\{i,j,k\}\\Q'_i=Q_i-1\\ Q'_j=Q_j+2\\Q'_k=Q_k-1}}
    C_3(Q,i,j,k)\sum_{\substack{n_i,n_j,n_k\geq 1\\n'_i,n'_j,n'_k\geq 1}}
    a^Q_{\tilde{n}_{i,j,k}}\overline{a^{Q'}_{\tilde{n}_{i,j,k}'}}
    \gamma^{(2),4,3'}_{\substack{Q_i,Q_j,Q_k\\n_i,n_j,n_k\\n'_i,n'_j,n'_k}},
    \\\intertext{and}\label{eq:203} \gamma_\Psi^{(2),4,4}=
    \sum_{\substack{i,j,k,l\\\text{distinct}}}\sum_{\tilde
      n\in\N^{m-4}} \sum_{\substack{Q\text{ occ.}\\ Q_i\geq1,\
        Q_j\geq1\\ Q':\ Q'_l=Q_l\text{ if
        }l\not\in\{i,j,k,l\}\\Q'_i=Q_i-1,\ Q'_j=Q_j-1\\Q'_k=Q_k+1,\
        Q'_l=Q_l+1}} C_4(Q,i,j,k,l)\sum_{\substack{n_i,n_j,n_k,n_l\geq
        1\\n'_i,n'_j,n'_k,n'_l\geq 1}}
    a^Q_{\tilde{n}_{i,j,k}}\overline{a^{Q'}_{\tilde{n}_{i,j,k}'}}
    \gamma^{(2),4,4}_{\substack{Q_i,Q_j,Q_k,Q_l\\n_i,n_j,n_k,n_l\\n'_i,n'_j,n'_k,n'l}},
  \end{gather}
  where
  \begin{itemize}
  \item we have used the shorthands defined in Theorem~\ref{thr:4} and
    defined
    \begin{itemize}
    \item $\tilde n_{i,j,k}$ for $(\tilde n_1,\cdots,\tilde
      n_{i-1},n_i,\tilde n_i,\cdots,\tilde n_{j-2},n_j,\tilde
      n_{j-1},\cdots,\tilde n_{k-3},n_j,\tilde n_{k-2},\cdots,\tilde
      n_{m-3})$\\when $i<j<k$ and $\tilde n=(\tilde n_1,\cdots,\tilde
      n_{m-3})$,
    \item $\tilde n_{i,j,k,l}$ for $(\tilde n_1,\cdots,\tilde
      n_{i-1},n_i,\tilde n_i,\cdots,\tilde n_{j-2},n_j,\tilde
      n_{j-1},\cdots,\tilde n_{k-3},n_k,\tilde n_{k-2},\cdots,$\\
      $\cdots,\tilde n_{l-4},n_l,\tilde n_{l-3},\cdots,\tilde n_{m-4})$
      when $i<j<k<l$ and $\tilde n=(\tilde n_1,\cdots,\tilde
      n_{m-4})$,
    \end{itemize}
  \item the trace class operator $\gamma^{(2),d,d}_{\substack{Q_j\\n_j,n'_j}}:\
    L^2(\Delta_j)\bigwedge L^2(\Delta_j)\to L^2(\Delta_j)\bigwedge
    L^2(\Delta_j)$ has the kernel
    \begin{equation}
      \label{eq:184}
      \gamma^{(2),d,d}_{\substack{Q_j\\n_j,n'_j}}(x,x',y,y')=\frac{Q_j(Q_j-1)}2
      \int_{\Delta^{Q_{j}-2}_{j}}\varphi^{j}_{Q_{j},n_{j}}(x,x',z)
      \overline{\varphi^{j}_{Q_{j},n'_{j}}(y,y',z)}dz
    \end{equation}
  \item the trace class operator
    $\gamma^{(2),d,o}_{\substack{Q_i,Q_j\\n_i,n_j\\n'_i,n'_j}}:\
    L^2(\Delta_i)\otimes L^2(\Delta_j)\to L^2(\Delta_i)\otimes
    L^2(\Delta_j)$ has the kernel
    \begin{equation}
      \label{eq:185}
      \begin{split}
        \gamma^{(2),d,o}_{\substack{Q_i,Q_j\\n_i,n_j\\n'_i,n'_j}}
        (x,x',y,y')&=\frac{Q_iQ_j}2\times
        \int_{\Delta^{Q_{i}-1}_{i}\times\Delta^{Q_{j}-1}_{j}}dzdz'
        \\&\hskip1cm \left|\begin{matrix}
            \varphi^{i}_{Q_{i},n_{i}}(x,z)
            &\varphi^{i}_{Q_{i},n_{i}}(x,z') \\
            \varphi^{j}_{Q_{j},n_{j}}(x',z)&
            \varphi^{j}_{Q_{j},n_{j}}(x',z')
          \end{matrix}\right|
        \cdot\overline{\left|\begin{matrix}
              \varphi^{i}_{Q_{i},n'_{i}}(y,z)&
              \varphi^{i}_{Q_{i},n'_{i}}(y',z)\\
              \varphi^{j}_{Q_{j},n'_{j}}(y,z')&
              \varphi^{j}_{Q_{j},n'_{j}}(y',z')
            \end{matrix}\right|}
      \end{split}
    \end{equation}
  \item $\D C_2(Q,i,j)=\frac{(n-Q_j-Q_i-2)!Q_i!Q_j!}{2\,(n-2)!}$;
  \item the trace-class operator $\gamma^{(2),2}_{\substack{Q_i,Q_j\\n_i,n_j\\n'_i,n'_j}}:\
    L^2(\Delta_j)\bigwedge L^2(\Delta_j)\to L^2(\Delta_i)\bigwedge
    L^2(\Delta_i)$ has the kernel
    \begin{equation}
      \label{eq:76}
      \begin{split}
        \gamma^{(2),2}_{\substack{Q_i,Q_j\\n_i,n_j\\n'_i,n'_j}}(x,y) &=
        \car_{Q_j\geq2}\int_{\Delta^{Q_j-2}_j\times\Delta^{Q_i}_i}
        \varphi^j_{Q_j,n_j}(x,x',z)\varphi^i_{Q_i,n_i}(z') \overline{\left|
              \begin{matrix}
                \varphi^j_{Q_j-1,n'_j}(y',z)&\varphi^j_{Q_j-1,n'_j}(y,z)
                \\\varphi^i_{Q_i+1,n'_i}(y',z')&\varphi^i_{Q_i+1,n'_i}(y,z') 
              \end{matrix}
            \right|} dzdz'
          \\&+\car_{Q_i\geq1}\int_{\Delta^{Q_j-1}_j\times\Delta^{Q_i-1}_i}
          \left|\begin{matrix}
              \varphi^j_{Q_j,n_j}(x',z)&\varphi^j_{Q_j,n_j}(x,z)
              \\\varphi^i_{Q_i,n_i}(x',z')&\varphi^i_{Q_i,n_i}(x,z')
              \end{matrix}
            \right|
            \overline{\varphi^j_{Q_j-1,n'_j}(z)\varphi^i_{Q_i+1,n'_i}(y,y',z')}
        dzdz',
      \end{split}
    \end{equation}
  \item the rank one operator $\gamma^{(2),4,2}_{\substack{Q_i,Q_j\\n_i,n_j\\n'_i,n'_j}}:\
    L^2(\Delta_j)\bigwedge L^2(\Delta_j)\to L^2(\Delta_i)\bigwedge
    L^2(\Delta_i)$ has the kernel
    \begin{equation}
      \label{eq:194}
     \begin{split}
        \gamma^{(2),4,2}_{\substack{Q_i,Q_j\\n_i,n_j\\n'_i,n'_j}}(x,x',y,y')
        &=\int_{\Delta^{Q_{j}-2}_{j}}
        \varphi^{j}_{Q_{j},n_{j}}(x,x',z)
        \overline{\varphi^{j}_{Q_{j}-2,n'_{j}}(z)}dz
        \int_{\Delta^{Q_{i}}_{i}}\varphi^{i}_{Q_{i},n_{i}}(z)
        \overline{\varphi^{i}_{Q_{i}+2,n'_{i}}(y,y',z)}dz.
      \end{split}
    \end{equation}
  \item the rank 2 operator
    $\gamma^{(2),4,3}_{\substack{Q_i,Q_j,Q_k\\n_i,n_j,n_k\\n'_i,n'_j,n'_k}}:\
    L^2(\Delta_i)\otimes L^2(\Delta_k)\to L^2(\Delta_j)\bigwedge
    L^2(\Delta_j)$ has the kernel
    \begin{equation}
      \label{eq:199}
      \begin{split}
        \gamma^{(2),4,3}_{\substack{Q_i,Q_j,Q_k\\n_i,n_j,n_k\\n'_i,n'_j,n'_k}}&(x,x',y,y')=
        \int_{\Delta^{Q_{j}-2}_{j}} \varphi^{j}_{Q_{j},n_{j}}(x,x',z)
        \overline{\varphi^{j}_{Q_{j}-2,n'_{j}}(z)}dz\\&\times
        \left|
          \begin{matrix}
            \int_{\Delta^{Q_{i}}_{i}}\varphi^{i}_{Q_{i},n_{i}}(z)
            \overline{\varphi^{i}_{Q_{i}+1,n'_{i}}(y,z)}dz&
            \int_{\Delta^{Q_{i}}_{i}}\varphi^{i}_{Q_{i},n_{i}}(z)
            \overline{\varphi^{i}_{Q_{i}+1,n'_{i}}(y',z)}dz \\
            \int_{\Delta^{Q_{k}}_{k}}\varphi^{k}_{Q_{k},n_{k}}(z)
            \overline{\varphi^{k}_{Q_{k}+1,n'_{k}}(y,z)}dz &
            \int_{\Delta^{Q_{k}}_{k}}\varphi^{k}_{Q_{k},n_{k}}(z)
            \overline{\varphi^{k}_{Q_{k}+1,n'_{k}}(y',z)}dz
          \end{matrix}
        \right|,
      \end{split}
    \end{equation}
  \item $\D C_3(Q,i,j,k)=\frac{(n-Q_i-Q_j-Q_k-2)!Q_i!Q_j!Q_k!}{2\,(n-2)!}$;
  \item the rank 2 operator
    $\gamma^{(2),4,3'}_{\substack{Q_i,Q_j,Q_k\\n_i,n_j,n_k\\n'_i,n'_j,n'_k}}:\
    L^2(\Delta_j)\bigwedge L^2(\Delta_j)\to L^2(\Delta_i)\otimes
    L^2(\Delta_k)$ has the kernel
    \begin{equation}
      \label{eq:53}
      \begin{split}
        \gamma^{(2),4,3'}_{\substack{Q_i,Q_j,Q_k\\n_i,n_j,n_k\\n'_i,n'_j,n'_k}}(x,x',y,y')&=
        \left|
          \begin{matrix}
            \int_{\Delta^{Q_{i}-1}_{i}}\varphi^{i}_{Q_{i},n_{i}}(x,z)
            \overline{\varphi^{i}_{Q_{i}-1,n'_{i}}(z)}dz&
            \int_{\Delta^{Q_{i}-1}_{i}}\varphi^{i}_{Q_{i},n_{i}}(x',z)
            \overline{\varphi^{i}_{Q_{i}-1,n'_{i}}(z)}dz\\
            \int_{\Delta^{Q_{k}-1}_{k}}\varphi^{k}_{Q_{k},n_{k}}(x,z)
            \overline{\varphi^{k}_{Q_{k}-1,n'_{k}}(z)}dz &
            \int_{\Delta^{Q_{k}-1}_{k}}\varphi^{k}_{Q_{k},n_{k}}(x',z)
            \overline{\varphi^{k}_{Q_{k}-1,n'_{k}}(z)}dz
          \end{matrix}
        \right|\\&\times
        \int_{\Delta^{Q_{j}}_{j}} \varphi^{j}_{Q_{j},n_{j}}(z)
        \overline{\varphi^{j}_{Q_{j}+2,n'_{j}}(y,y',z)}dz,
      \end{split}
    \end{equation}
  \item the rank 4 operator
    $\gamma^{(2),4,4}_{\substack{Q_i,Q_j,Q_k,Q_l\\n_i,n_j,n_k,n_l
        \\n'_i,n'_j,n'_k,n'_l}}:\ L^2(\Delta_k)\otimes
    L^2(\Delta_l)\to L^2(\Delta_i)\otimes L^2(\Delta_j)$ has the
    kernel
    \begin{multline}
      \label{eq:166}
      \gamma^{(2),4,4}_{\substack{Q_i,Q_j,Q_k,Q_l\\n_i,n_j,n_k,n_l\\n'_i,n'_j,n'_k,n'_l}}(x,x',y,y')=
      \left|\begin{matrix}\int_{\Delta^{Q_{i}-1}_{i}}
          \varphi^{i}_{Q_{i},n_{i}}(x,z)
          \overline{\varphi^{i}_{Q_{i}-1,n'_{i}}(z)}dz&
          \int_{\Delta^{Q_{i}-1}_{i}}\varphi^{i}_{Q_{i},n_{i}}(x',z)
          \overline{\varphi^{i}_{Q_{i}-1,n'_{i}}(z)}dz\\
          \int_{\Delta^{Q_{j}-1}_{j}} \varphi^{j}_{Q_{j},n_{j}}(x,z)
          \overline{\varphi^{j}_{Q_{j}-1,n'_{j}}(z)}dz &
          \int_{\Delta^{Q_{j}-1}_{j}} \varphi^{j}_{Q_{j},n_{j}}(x',z)
          \overline{\varphi^{j}_{Q_{j}-1,n'_{j}}(z)}dz          
        \end{matrix}\right|
      \\\times \left|\begin{matrix}\int_{\Delta^{Q_{k}}_{k}}
          \varphi^{k}_{Q_{k},n_{k}}(z)
          \overline{\varphi^{k}_{Q_{k}+1,n'_{k}}(y,z)}dz&
          \int_{\Delta^{Q_{k}}_{k}} \varphi^{k}_{Q_{k},n_{k}}(z)
          \overline{\varphi^{k}_{Q_{k}+1,n'_{k}}(y',z)}dz
          \\\int_{\Delta^{Q_{l}}_{l}} \varphi^{l}_{Q_{l},n_{l}}(z)
          \overline{\varphi^{l}_{Q_{l}+1,n'_{l}}(y,z)}dz &
          \int_{\Delta^{Q_{l}}_{l}} \varphi^{l}_{Q_{l},n_{l}}(z)
          \overline{\varphi^{l}_{Q_{l}+1,n'_{l}}(y',z)}dz
        \end{matrix}\right|
    \end{multline}
  \item $\D
    C_4(Q,i,j,k,l)=\frac{(n-Q_i-Q_j-Q_k-Q_l-2)!Q_i!Q_j!Q_k!Q_l!}{2\,(n-2)!}$;
  \end{itemize}
\end{Th}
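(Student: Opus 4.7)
The plan is to mimic the computational proof of Theorem~\ref{thr:4}, but now integrating out $n-2$ variables rather than $n-1$. Starting from the occupation number decomposition~\eqref{eq:146} and the bilinearity of~\eqref{eq:TwoParticleDensityMatrixDef}, one writes
\begin{equation*}
  \gamma^{(2)}_\Psi = \frac{n(n-1)}{2}\sum_{\substack{Q,Q'\text{ occ.}\\ \overline{n},\overline{n}'\in\N^m}}
  a^Q_{\overline{n}}\,\overline{a^{Q'}_{\overline{n}'}}\,
  \gamma^{(2)}_{\substack{Q,\overline{n}\\Q',\overline{n}'}},
\end{equation*}
where the kernel of $\gamma^{(2)}_{Q,\overline{n},Q',\overline{n}'}$ is obtained by expanding both $\bigwedge_j\varphi^j_{Q_j,n_j}$ and $\bigwedge_j\varphi^j_{Q'_j,n'_j}$ using the explicit formula~\eqref{eq:152} and integrating over the $n-2$ ``inner'' variables. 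The key mechanism is the same as in Theorem~\ref{thr:4}: whenever a partition pair $((A_j),(A'_j))$ satisfies $A_j\cap A'_{j'}\cap\{3,\dots,n\}\ne\emptyset$ for some $j\ne j'$, the corresponding inner integral vanishes by orthonormality of the $(\varphi^j_{Q_j,n_j})_n$ and the disjointness of distinct pieces $\Delta_j\cap\Delta_{j'}=\emptyset$.

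The main step is a case analysis on $d_1(Q,Q')$. A parity argument analogous to the one in Theorem~\ref{thr:4} (reading off from $|A'_j|=Q'_j$ versus $|A_j|=Q_j$) shows that $\gamma^{(2)}_{Q,\overline{n},Q',\overline{n}'}=0$ unless $d_1(Q,Q')\in\{0,2,4\}$. One then splits into the following sub-cases, according to how the two distinguished coordinates $\{1,2\}$ are distributed between the pieces in the two partitions:
\begin{itemize}
\item $Q=Q'$ with $1,2\in A_j=A'_j$ for the same $j$ yields the single-piece term $\gamma^{(2),d,d}$ in~\eqref{eq:3};
\item $Q=Q'$ with $1\in A_i=A'_i$, $2\in A_j=A'_j$, $i\ne j$, or with a swap $1\in A_i,\;1\in A'_j$ and $2\in A_j,\;2\in A'_i$, produces the two-piece term $\gamma^{(2),d,o}$ of~\eqref{eq:182}; the swap is what creates the $2\times 2$ determinant structure in~\eqref{eq:185};
\item $d_1(Q,Q')=2$ with $Q'_i=Q_i+1$, $Q'_j=Q_j-1$ forces exactly one of $\{1,2\}$ to be reshuffled between $A_i,A_j,A'_i,A'_j$, giving $\gamma^{(2),2}$ (the two subcases in~\eqref{eq:76} come from which of the two distinguished coordinates lies in the ``moved'' piece);
\item $d_1(Q,Q')=4$ decomposes into four geometrically distinct sub-cases according to whether the two pairs of creation/annihilation indices involve $2,3$ or $4$ distinct pieces, producing $\gamma^{(2),4,2}$, $\gamma^{(2),4,3}$, $\gamma^{(2),4,3'}$, and $\gamma^{(2),4,4}$ respectively.
\end{itemize}

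In each case the inner integral evaluates to a product of one-piece integrals of the form $\int\varphi^j_{Q_j,n_j}(\cdot,z)\overline{\varphi^j_{Q_j-p,n'_j}(z)}dz$; when two indices from $\{1,2\}$ land in the same piece, antisymmetry of the wedge products produces a $2\times 2$ determinant, which is the origin of the determinantal kernels~\eqref{eq:185},~\eqref{eq:199},~\eqref{eq:53}, and~\eqref{eq:166}. The combinatorial constants $C_2(Q,i,j)$, $C_3(Q,i,j,k)$, $C_4(Q,i,j,k,l)$ arise by counting partitions $(A_j)$ of $\{1,\dots,n\}$ compatible with the prescribed placement of $\{1,2\}$, multiplied by the normalization $c(Q)c(Q')$ from~\eqref{eq:154} and the prefactor $n(n-1)/2$; this bookkeeping is routine but lengthy.

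The main obstacle is purely combinatorial: correctly enumerating the subcases for $d_1(Q,Q')=4$ and verifying that the signatures $(-1)^{\varepsilon((A_j))+\varepsilon((A'_j))}$ assemble into the determinants displayed in the statement. Once the sign calculation is in hand (most conveniently done by reducing the two partitions to a canonical form in which only the ``moved'' indices contribute to the relative signature), the rest is a direct transcription of the one-particle computation in Theorem~\ref{thr:4} with one extra free coordinate carried through the argument.
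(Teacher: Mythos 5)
Your proposal follows essentially the same route as the paper's proof: expand $\gamma^{(2)}_\Psi$ bilinearly via~\eqref{eq:152}, use orthonormality and disjointness of the pieces to kill all terms with $d_1(Q,Q')>4$, and then enumerate the cases $d_1(Q,Q')\in\{0,2,4\}$ according to how the two distinguished coordinates and the moved particles are distributed among $2$, $3$ or $4$ pieces, which is exactly how the seven terms, the determinantal kernels and the constants $C_2,C_3,C_4$ arise in the paper. It is only a sketch (the partition counting and signature bookkeeping are deferred), but the strategy and the case decomposition are correct and match the paper's argument.
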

\begin{Rem}
  \label{rem:2}
  In~\eqref{eq:184}~-~\eqref{eq:166}, in accordance with
  Remark~\ref{rem:9}, in the degenerate cases, we use the conventions
  derived from those in Remark~\ref{rem:8} in a obvious way.\\
  For example, in~\eqref{eq:185}, if $Q_i=Q_j=1$, one has
  \begin{equation}
    \label{eq:220}
    \begin{split}
      \gamma^{(2),d,o}_{\substack{Q_i,Q_j\\n_i,n_j\\n'_i,n'_j}}
      (x,&x',y,y')=\frac{Q_iQ_j}2\left|\begin{matrix}
            \varphi^{i}_{Q_{i},n_{i}}(x)&
            \varphi^{i}_{Q_{i},n_{i}}(x')\\
            \varphi^{j}_{Q_{j},n_{j}}(x)&
            \varphi^{j}_{Q_{j},n_{j}}(x')
          \end{matrix}\right| \cdot\overline{
          \left|\begin{matrix} \varphi^{i}_{Q_{i},n'_{i}}(y)&
          \varphi^{i}_{Q_{i},n'_{i}}(y')\\
          \varphi^{j}_{Q_{j},n'_{j}}(y)&
          \varphi^{j}_{Q_{j},n'_{j}}(y')
          \end{matrix}\right|}.
    \end{split}
  \end{equation}
\end{Rem}
\begin{proof}[Proof of Theorem~\ref{thr:5}]
  Theorem~\ref{thr:5} follows from a direct computation that we now
  perform. First, by the bilinearity of
  formula~\eqref{eq:TwoParticleDensityMatrixDef}, one has
  \begin{equation}
    \label{eq:132}
    \gamma_\Psi^{(2)}=\frac{n(n-1)}2\sum_{\substack{Q\text{ occ.}\\
        \overline{n}\in\N^m}}\sum_{\substack{Q'\text{ occ.}\\
        \overline{n}'\in\N^m}}
    a^Q_{\overline{n}}\overline{a^{Q'}_{\overline{n}'}}
    \gamma^{(2)}_{\substack{Q,\overline{n}\\Q',\overline{n}'}}
  \end{equation}
  where the trace class operator
  $\gamma^{(2)}_{\substack{Q,\overline{n}\\Q',\overline{n}'}}$ acts on
  $L^2([0,L])\bigwedge L^2([0,L])$ and has the kernel
  \begin{multline}
    \label{eq:136}
    \gamma^{(2)}_{\substack{Q,\overline{n}\\Q',\overline{n}'}}(x,x',y,y')
    :=\int_{[0,L]^{n-2}}
    \left[\bigwedge_{j=1}^{m}\varphi^j_{Q_j,n_j}\right](x,x',z_3,\cdots,z_n)\\
    \overline{\left[\bigwedge_{j=1}^{m}\varphi^j_{Q'_j,n'_j}\right]
      (y,y',z_3,\cdots,z_n)}dz_3\cdots dz_n.
  \end{multline}
  By~\eqref{eq:152}, one has
  \begin{equation}
    \label{eq:141}
    \frac{\gamma^{(2)}_{\substack{Q,\overline{n}\\Q',\overline{n}'}}(x,x',y,y')}{c(Q)c(Q')}
    =\sum_{\substack{|A_j|=Q_j,\
        \forall1\leq j\leq m\\A_1\cup\cdots\cup A_m=\{1,\cdots,n\}\\
        A_j\cap A_{j'}=\emptyset\text{ if }j\not=j'}}
    \sum_{\substack{|A'_j|=Q'_j,\ \forall1\leq j\leq m\\A'_1\cup\cdots\cup
        A'_m=\{1,\cdots,n\}\\ A'_j\cap A'_{j'}=\emptyset\text{ if }j\not=j'}}
    (-1)^{\varepsilon((A_j))+\varepsilon((A'_j))}I(A,A')
  \end{equation}
  where
  \begin{equation}
    I(A,A'):=
    \int_{[0,L]^{n-2}}\left[\prod_{j=1}^m\varphi^j_{Q_j,n_j}((z_l)_{l\in
        A_j}) \overline{\varphi^j_{Q'_j,n'_j}((y_l)_{l\in
          A'_j})}\right]_{ \substack{x_1=x,\ x_2=x'\\y_1=y,\
        y_2=y'\\x_j=y_j\text{ if }j\geq3}}dx_3\cdots dx_n
  \end{equation}
  To evaluate this last integral, we note that, for any pair of
  partitions $(A_j)_j$ and $(A'_j)_j$ (as in the indices of the above
  sum), if there exists $j\not=j'$ such that $A_j\cap
  A'_{j'}\cap\{3,\cdots,n\}\not=\emptyset$, then the integral
  $I(A,A')$ vanishes.\\
  Now, note that, if $d_1(Q,Q')>4$:, then, for any pair of partitions
  $(A_j)_j$ and $(A'_j)_j$, there exists $j\not=j'$ such that $A_j\cap
  A'_{j'}\cap\{3,\cdots,n\}\not=\emptyset$; thus, the integral
  $I(A,A')$ above always vanishes and, summing this, one
  has
  \begin{equation*}
    \gamma^{(2)}_{\substack{Q,\overline{n}\\Q',\overline{n}'}}=0\quad\text{ if
    }\quad d_1(Q,Q')>4.
  \end{equation*}
  So we are left with the cases $Q=Q'$, $d_1(Q,Q')=2$ or $d_1(Q,Q')=4$. \\
  Assume first $Q=Q'$. Consider the sums in~\eqref{eq:158}. If
  $\{1,2\}\subset A_{i_0}\cup A_{j_0}$ and $\{1,2\}\not\subset
  A'_{i_0}\cup A'_{j_0}$ then, as $\forall j$, $|A'_j|=|A_j|$, there
  exists $\alpha\in (A'_{i_0}\cup A'_{j_0})\cap\{3,\cdots,n\}$ and
  $j\not\in\{i_0,j_0\}$ such that $\alpha\in A_j$. That is, there
  exists $j\not=j'$ such that $A_j\cap
  A'_{j'}\cap\{3,\cdots,n\}\not=\emptyset$, thus, the integral
  $I(A,A')$ vanishes. Moreover, if $\{1,2\}\subset
  A_{j_0}$ and $\{1,2\}\not\subset A'_{j_0}$ then, there exists
  $\alpha\in A'_{j_0}\cap\{3,\cdots,n\}$ and $j\not=j_0$ such that
  $\alpha\in A_j$, thus, the integral $I(A,A')$ vanishes.
  Thus, we rewrite
  \begin{equation}
    \label{eq:106}
    \begin{split}
      \frac{\gamma^{(2)}_{Q,\overline{n},Q,\overline{n}'}(x,x',y,y')}{c^2(Q)}&=
      \sum_{i_0,j_0=1}^m\sum_{\substack{\{1,2\}\subset A_{i_0}\cup A_{j_0}\\
          |A_l|=Q_l,\
          \forall1\leq j\leq m\\A_1\cup\cdots\cup A_m=\{1,\cdots,n\}\\
          A_j\cap A_{j'}=\emptyset\text{ if }j\not=j'}}
      \sum_{\substack{\{1,2\}\subset A'_{i_0}\cup A'_{j_0}\\\
          |A'_{i_0}|=Q_{i_0}\\ |A'_{j_0}|=Q_{j_0}\\
          A'_{j}=A_j\text{ if }j\not\in\{i_0,j_0\}}}
      (-1)^{\varepsilon((A_j))+\varepsilon((A'_j))}I(A,A')\\
      &=\sum_{\substack{j_0=1\\Q_{j_0}\geq2}}^m\sum_{\substack{\{1,2\}\subset A_{j_0}\\
          |A_j|=Q_j,\ \forall1\leq j\leq m\\A_1\cup\cdots\cup A_m=\{1,\cdots,n\}\\
          A_j\cap A_{j'}=\emptyset\text{ if }j\not=j'}} I(A)+
      \sum_{\substack{i_0\not=j_0\\Q_{i_0}\geq1\\Q_{j_0}\geq1}}
      \sum_{\substack{1\in A_{i_0},\ 2\in A_{j_0}\\
          |A_j|=Q_j,\ \forall1\leq j\leq m\\A_1\cup\cdots\cup A_m=\{1,\cdots,n\}\\
          A_j\cap A_{j'}=\emptyset\text{ if }j\not=j'}} J(A)
    \end{split}
  \end{equation}
  where
  \begin{gather*}
    I(A):= \prod_{j\not=j_0}\delta_{n_j=n'_j}
    \int_{\Delta^{Q_{j_0}-2}_{j_0}}\varphi^{j_0}_{Q_{j_0},n_{j_0}}(x,x',z)
    \overline{\varphi^{j_0}_{Q_{j_0},n'_{j_0}}(y,y',z)}dz\\
    \intertext{and}
    \begin{split}
      J(A)&:= \prod_{j\not\in\{i_0,j_0\}}\delta_{n_j=n'_j}\left(
        \int_{\Delta^{Q_{i_0}-1}_{i_0}}\varphi^{i_0}_{Q_{i_0},n_{i_0}}(x,z)
        \overline{\varphi^{i_0}_{Q_{i_0},n'_{i_0}}(y,z)}dz\right.\\&\hskip7cm
      \left.\cdot\int_{\Delta^{Q_{j_0}-1}_{j_0}}\varphi^{j_0}_{Q_{j_0},n_{j_0}}(x',z')
        \overline{\varphi^{j_0}_{Q_{j_0},n'_{j_0}}(y',z')}dz'
      \right.\\&\hskip3.5cm -\left.
        \int_{\Delta^{Q_{i_0}-1}_{i_0}}\varphi^{i_0}_{Q_{i_0},n_{i_0}}(x,z)
        \overline{\varphi^{i_0}_{Q_{i_0},n'_{i_0}}(y',z)}dz\right.\\&\hskip7cm
      \left.\cdot\int_{\Delta^{Q_{j_0}-1}_{j_0}}\varphi^{j_0}_{Q_{j_0},n_{j_0}}(x',z')
        \overline{\varphi^{j_0}_{Q_{j_0},n'_{j_0}}(y,z')}dz'
      \right).
    \end{split}
  \end{gather*}
  As 
  \begin{gather*}
    \#\{(A_j)_j;\ \{1,2\}\subset A_{j_0},\ \forall j,\ |A_j|=Q_j\}
    =\frac{(n-2)! Q_{j_0}(Q_{j_0}-1)}{\prod_{j=1}^m Q_j!}\\\intertext{and}
    \#\{(A_j)_j;\ 1\in A_{i_0},\ 2\in A_{j_0},\ \forall j,\
    |A_j|=Q_j\} =\frac{(n-2)! Q_{i_0} Q_{j_0}}{\prod_{j=1}^m Q_j!}
    \quad\text{if}\quad i_0\not=j_0
  \end{gather*}
  by~\eqref{eq:154} and~\eqref{eq:106}, one obtains
  \begin{equation}
    \label{eq:187}
    \sum_{\substack{Q\text{
          occ.}\\\overline{n}\in\N^m\\\overline{n}'\in\N^m}}
    a^Q_{\overline{n}}\overline{a^{Q}_{\overline{n}'}}
    \gamma^{(2)}_{Q,\overline{n},Q,\overline{n}'}=\gamma_\Psi^{(2),d,d}
    +\gamma_\Psi^{(2),d,o} 
  \end{equation}
  where $\gamma_\Psi^{(2),d,d}$ and $\gamma_\Psi^{(2),d,o}$ are
  defined in Theorem~\ref{thr:5}.\\
  Let us now assume $d_1(Q,Q')=2$. Thus, there exists $1\leq i_0\not=
  j_0\leq m$ such that $Q_{j_0}\geq1$, $Q'_{i_0}=Q_{i_0}+1$,
  $Q_{j_0}=Q'_{j_0}+1$ and $Q_k=Q'_k$ for $k\not\in\{i_0,j_0\}$.\\
  Consider now the sums in~\eqref{eq:141}. If $\{1,2\}\cap
  A_{j_0}=\emptyset$, then as $|A'_{j_0}|=Q'_{j_0}=Q_{j_0}-1$, there
  exists $\alpha\in A_{j_0}=A_{j_0}\cap\{3,\cdots,n\}$ and $i\not=j_0$
  such that $\alpha\in A'_i$. Thus, the integral $I(A,A')$
  vanishes. If $A_{j_0}=\{1\}\cup B$ (resp. $A_{j_0}=\{2\}\cup B$)
  with $B\subset\{3,\cdots,n\}$, either $A'_{j_0}=B$ (and
  $\{1,2\}\subset A'_{i_0}$) or the integral $I(A,A')$
  vanishes. Finally, if $A_{j_0}=\{1,2\}\cup B$ with
  $B\subset\{3,\cdots,n\}$, then, $A'_{j_0}=\{1\}\cup B$ or
  $A'_{j_0}=\{2\}\cup B$ or $I(A,A')=0$. The
  same holds true for $A_{j_0}$ replaced with $A'_{i_0}$.\\
  Therefore, using the definition of $\varepsilon((A_j))$, if
  $d_1(Q,Q')=2$, we rewrite
  \begin{equation}
    \label{eq:188}
    \frac{\gamma^{(2)}_{\substack{Q,\overline{n}\\Q',\overline{n}'}}(x,y)}{c^2(Q)}=
    \sum_{\substack{i_0\not=j_0\\Q_{j_0}\geq2}}
    \Sigma_1(i_0,j_0)-\Sigma_2(i_0,j_0)+
    \sum_{\substack{i_0\not=j_0\\Q_{i_0}\geq1}}
    \Sigma_3(i_0,j_0)-\Sigma_4(i_0,j_0) 
  \end{equation}
  where
  \begin{gather}
    \label{eq:189}
    \Sigma_1(i_0,j_0):=\sum_{\substack{\{1,2\}\subset A_{j_0}\\
        |A_j|=Q_j,\ \forall1\leq j\leq m\\A_1\cup\cdots\cup A_m=\{1,\cdots,n\}\\
        A_j\cap A_{j'}=\emptyset\text{ if }j\not=j'}}
    \sum_{\substack{A'_{i_0}=\{1\}\cup A_{i_0}\\
        A'_{j_0}=A_{j_0}\setminus\{1\}\\
        A'_{j}=A_j\text{ if }j\not\in\{i_0,j_0\}}}I(A,A'),\\
    \label{eq:190}
    \Sigma_2(i_0,j_0):=\sum_{\substack{\{1,2\}\subset A_{j_0}\\
        |A_j|=Q_j,\ \forall1\leq j\leq m\\A_1\cup\cdots\cup A_m=\{1,\cdots,n\}\\
        A_j\cap A_{j'}=\emptyset\text{ if }j\not=j'}}
    \sum_{\substack{A'_{i_0}=\{2\}\cup A_{i_0}\\
        A'_{j_0}=A_{j_0}\setminus\{2\}\\
        A'_{j}=A_j\text{ if }j\not\in\{i_0,j_0\}}}I(A,A'),\\
    \label{eq:191}
    \Sigma_3(i_0,j_0):=\sum_{\substack{\{1,2\}\subset A'_{i_0}\\
        |A_j|=Q_j,\ \forall1\leq j\leq m\\A_1\cup\cdots\cup A_m=\{1,\cdots,n\}\\
        A_j\cap A_{j'}=\emptyset\text{ if }j\not=j'}}
    \sum_{\substack{A_{i_0}=A'_{i_0}\setminus\{1\} \\
        A_{j_0}=A'_{j_0}\cup\{1\}\\
        A'_{j}=A_j\text{ if }j\not\in\{i_0,j_0\}}}I(A,A'),\\
    \label{eq:192}
    \Sigma_4(i_0,j_0):=\sum_{\substack{\{1,2\}\subset A'_{i_0}\\
        |A_j|=Q_j,\ \forall1\leq j\leq m\\A_1\cup\cdots\cup A_m=\{1,\cdots,n\}\\
        A_j\cap A_{j'}=\emptyset\text{ if }j\not=j'}}
    \sum_{\substack{A_{i_0}=A'_{i_0}\setminus\{2\} \\
        A_{j_0}=A'_{j_0}\cup\{2\}\\
        A'_{j}=A_j\text{ if }j\not\in\{i_0,j_0\}}}I(A,A')
  \end{gather}
  and
  \begin{itemize}
  \item for the summands in $\Sigma_1(i_0,j_0)$:
   \begin{equation*}
    \begin{split}
      I(A,A')&=\int_{\Delta^{Q_{j_0}-2}_{j_0}}
      \varphi^{j_0}_{Q_{j_0},n_{j_0}}(x,x',z)
      \overline{\varphi^{j_0}_{Q_{j_0}-1,n'_{j_0}}(y',z)}dz\\&\hskip3cm
      \int_{\Delta^{Q_{i_0}}_{i_0}}\varphi^{i_0}_{Q_{i_0},n_{i_0}}(z')
      \overline{\varphi^{i_0}_{Q_{i_0}+1,n'_{i_0}}(y,z')}dz'
      \prod_{j\not\in\{i_0,j_0\}}\delta_{n_j=n'_j} 
    \end{split}
  \end{equation*}
  \item for the summands in $\Sigma_2(i_0,j_0)$:
   \begin{equation*}
    \begin{split}
      I(A,A')&=\int_{\Delta^{Q_{j_0}-2}_{j_0}}
      \varphi^{j_0}_{Q_{j_0},n_{j_0}}(x,x',z)
      \overline{\varphi^{j_0}_{Q_{j_0}-1,n'_{j_0}}(y,z)}dz\\&\hskip3cm
      \int_{\Delta^{Q_{i_0}}_{i_0}}\varphi^{i_0}_{Q_{i_0},n_{i_0}}(z')
      \overline{\varphi^{i_0}_{Q_{i_0}+1,n'_{i_0}}(y',z')}dz'
      \prod_{j\not\in\{i_0,j_0\}}\delta_{n_j=n'_j} 
    \end{split}
  \end{equation*}
  \item for the summands in $\Sigma_3(i_0,j_0)$:
   \begin{equation*}
    \begin{split}
      I(A,A')&=\int_{\Delta^{Q_{j_0}-1}_{j_0}}
      \varphi^{j_0}_{Q_{j_0},n_{j_0}}(x',z)
      \overline{\varphi^{j_0}_{Q_{j_0}-1,n'_{j_0}}(z)}dz\\&\hskip3cm
      \int_{\Delta^{Q_{i_0}-1}_{i_0}}\varphi^{i_0}_{Q_{i_0},n_{i_0}}(x,z')
      \overline{\varphi^{i_0}_{Q_{i_0}+1,n'_{i_0}}(y,y',z')}dz'
      \prod_{j\not\in\{i_0,j_0\}}\delta_{n_j=n'_j}
    \end{split}
  \end{equation*}
  \item for the summands in $\Sigma_4(i_0,j_0)$:
   \begin{equation*}
    \begin{split}
      I(A,A')&=\int_{\Delta^{Q_{j_0}-1}_{j_0}}
      \varphi^{j_0}_{Q_{j_0},n_{j_0}}(x,z)
      \overline{\varphi^{j_0}_{Q_{j_0}-1,n'_{j_0}}(z)}dz\\&\hskip3cm
      \int_{\Delta^{Q_{i_0}-1}_{i_0}}\varphi^{i_0}_{Q_{i_0},n_{i_0}}(x',z')
      \overline{\varphi^{i_0}_{Q_{i_0}+1,n'_{i_0}}(y,y',z')}dz'
      \prod_{j\not\in\{i_0,j_0\}}\delta_{n_j=n'_j} 
    \end{split}
  \end{equation*}
  \end{itemize}
  with the convention described in Remark~\ref{rem:9}.\\
  The number of partitions coming up
  in~\eqref{eq:189},~\eqref{eq:190},~\eqref{eq:191} and~\eqref{eq:192}
  are the same: indeed, it suffices to invert the roles of $1$ and $2$
  and $i_0$ and $j_0$. We compute
  \begin{equation*}
    \sum_{\substack{\{1,2\}\subset A_{j_0}\\
        |A_j|=Q_j,\ \forall1\leq j\leq m\\A_1\cup\cdots\cup
        A_m=\{1,\cdots,n\}\\ A_j\cap A_{j'}=\emptyset
        \text{ if }j\not=j'}}\sum_{\substack{A'_{i_0}=\{1\}\cup
        A_{i_0}\\ A'_{j_0}=A_{j_0}\setminus\{1\}\\
        A'_{j}=A_j\text{ if }j\not\in\{i_0,j_0\}}}1= 
    \frac{(n-Q_{j_0}-Q_{i_0}-2)!Q_{i_0}!Q_{j_0}!}{Q_1!\cdots
      Q_m!}.
  \end{equation*}
  Hence, we get that
  \begin{equation}
    \label{eq:206}
    \frac{n(n-1)}2\sum_{\substack{Q,\ Q'\text{ occ.}\\ d_1(Q,Q')=2\\
        \overline{n},\,\overline{n}'\in\N^m}}
    a^Q_{\overline{n}}\overline{a^{Q'}_{\overline{n}'}}
    \gamma^{(2)}_{\substack{Q,\overline{n}\\Q',\overline{n}'}}=
    \sum_{i\not=j}\sum_{\tilde n\in\N^{m-2}} \sum_{\substack{Q\text{ occ.}\\
        Q_j\geq1\\ Q':\ Q'_k=Q_k\text{ if
        }k\not\in\{i,j\}\\Q'_i=Q_i+1\\Q'_j=Q_j-1}}
    \sum_{\substack{n_j,n'_j\geq 1\\n_i,n'_i\geq 1}}
    a^Q_{\tilde{n}_{i,j}}\overline{a^{Q'}_{\tilde{n}_{i,j}'}}
    \gamma^{(2),2}_{\substack{Q_i,Q_j\\n_i,n_j\\n'_i,n'_j}}
  \end{equation}
  where $\gamma^{(2),2}_{\substack{Q_i,Q_j\\n_i,n_j\\n'_i,n'_j}}$ is defined
  in~\eqref{eq:76}.\\
  Let us now assume $d_1(Q,Q')=4$. Thus,
  \begin{enumerate}
  \item either there exist $1\leq i_0\not= j_0\leq m$ such that
    $Q_{j_0}\geq2$, $Q'_{i_0}=Q_{i_0}+2$, $Q_{j_0}=Q'_{j_0}+2$ and
    $Q_k=Q'_k$ for $k\not\in\{i_0,j_0\}$.\\
    In this case, either $A_{j_0}=\{1,2\}\cup A'_{j_0}$ and
    $A'_{i_0}=\{1,2\}\cup A_{i_0}$ with $A_{i_0},A'_{j_0},\subset\{3,\cdots,n\}$ or
    $I(A,A')=0$ vanishes. Thus,
    \begin{equation}
      \label{eq:193}
      \frac{\gamma^{(2)}_{\substack{Q,\overline{n}\\Q',\overline{n}'}}(x,y)}{c^2(Q)}=
      \sum_{\substack{\{1,2\}\subset A_{j_0}\\
          |A_j|=Q_j,\ \forall1\leq j\leq m\\A_1\cup\cdots\cup
          A_m=\{1,\cdots,n\}\\A_j\cap A_{j'}=\emptyset\text{ if }j\not=j'}}
      \sum_{\substack{A'_{i_0}=\{1,2\}\cup A_{i_0}\\
          A'_{j_0}=A_{j_0}\setminus\{1,2\}\\
          A'_{j}=A_j\text{ if }j\not\in\{i_0,j_0\}}}
      (-1)^{\varepsilon((A_j))+\varepsilon((A'_j))}I(A,A'),
    \end{equation}
    and
    \begin{equation*}
      \begin{split}
        I(A,A')&=\int_{\Delta^{Q_{j_0}-2}_{j_0}}
        \varphi^{j_0}_{Q_{j_0},n_{j_0}}(x,x',z)
        \overline{\varphi^{j_0}_{Q_{j_0}-2,n'_{j_0}}(z)}dz\\&\hskip3cm
        \int_{\Delta^{Q_{i_0}}_{i_0}}\varphi^{i_0}_{Q_{i_0},n_{i_0}}(z')
        \overline{\varphi^{i_0}_{Q_{i_0}+2,n'_{i_0}}(y,y',z')}dz'
        \prod_{j\not\in\{i_0,j_0\}}\delta_{n_j=n'_j}.
      \end{split}
    \end{equation*}
    Hence, taking~\eqref{eq:194} into account, we get
    \begin{multline}
      \label{eq:195}
      \frac{n(n-1)}2\sum_{\substack{Q,\ Q'\text{ occ.}\\ \exists i\not=j,\ Q_j\geq2\\
          Q':\ Q'_k=Q_k\text{ if }k\not\in\{i,j\}\\Q'_i=Q_i+2\\Q'_j=Q_j-2}}
      \sum_{\substack{\overline{n}\in\N^m\\\overline{n}'\in\N^m}}
      a^Q_{\overline{n}}\overline{a^{Q'}_{\overline{n}'}}
      \gamma^{(2)}_{\substack{Q,\overline{n}\\Q',\overline{n}'}}\\=
      \sum_{i\not=j}\sum_{\tilde n\in\N^{m-2}} \sum_{\substack{Q\text{ occ.}\\
          Q_j\geq2\\ Q':\ Q'_k=Q_k\text{ if
          }k\not\in\{i,j\}\\Q'_i=Q_i+2\\Q'_j=Q_j-2}}C_2(Q,i,j)
      \sum_{\substack{n_j,n'_j\geq 1\\n_i,n'_i\geq 1}}
      a^Q_{\tilde{n}_{i,j}}\overline{a^{Q'}_{\tilde{n}_{i,j}'}}
      \gamma^{(2),4,2}_{\substack{Q_i,Q_j\\n_i,n_j\\n'_i,n'_j}}
    \end{multline}
    as
    \begin{equation*}
      \sum_{\substack{\{1,2\}\subset A_{j}\\
          |A_l|=Q_l,\ \forall1\leq l\leq m\\A_1\cup\cdots\cup
          A_m=\{1,\cdots,n\}\\A_l\cap A_{l'}=\emptyset\text{ if }l\not=l'}}
      \sum_{\substack{A'_{i}=\{1,2\}\cup A_{i}\\
          A'_{j}=A_{j}\setminus\{1,2\}\\
          A'_{l}=A_l\text{ if }j\not\in\{i,j\}}}1=
      \frac{(n-Q_{j}-Q_{i}-2)!Q_{i}!Q_{j}!}{Q_1!\cdots
        Q_m!}=\frac{2\,C_2(Q,i,j)}{n(n-1)\,c(Q)^2}.
    \end{equation*}
  \item or there exist $1\leq i_0,j_0,k_0\leq m$ distinct such that
    $Q_{j_0}\geq2$, $Q'_{j_0}=Q_{j_0}-2$, $Q_{i_0}=Q'_{i_0}+1$,
    $Q_{k_0}=Q'_{k_0}+1$, and $Q_k=Q'_k$ for
    $k\not\in\{i_0,j_0,k_0\}$.\\
    In this case, either $A_{j_0}=\{1,2\}\cup A'_{j_0}$ and
    (($A'_{i_0}=\{1\}\cup A_{i_0}$ and $A'_{k_0}=\{2\}\cup A_{k_0}$) or
    ($A'_{i_0}=\{2\}\cup A_{i_0}$ and $A'_{k_0}=\{1\}\cup A_{k_0}$) ) with
    $A_{j_0}, A'_{i_0}, A'_{k_0}\subset\{3,\cdots,n\}$ or
    $I(A,A')=0$ vanishes. Thus,
    \begin{equation}
      \label{eq:268}
      \frac{\gamma^{(2)}_{\substack{Q,\overline{n}\\Q',\overline{n}'}}(x,y)}{c^2(Q)}=
      \sum_{\substack{\{1,2\}\subset A_{j_0}\\
          |A_j|=Q_j,\ \forall1\leq j\leq m\\A_1\cup\cdots\cup
          A_m=\{1,\cdots,n\}\\A_j\cap A_{j'}=\emptyset\text{ if
          }j\not=j'}}\left(
        \sum_{\substack{A'_{i_0}=\{1\}\cup A_{i_0}\\
            A'_{k_0}=\{2\}\cup A_{k_0}\\
            A'_{j_0}=A_{j_0}\setminus\{1,2\}\\
            A'_{j}=A_j\text{ if }j\not\in\{i_0,j_0,k_0\}}}I(A,A')
        -        \sum_{\substack{A'_{i_0}=\{2\}\cup A_{i_0}\\ 
            A'_{k_0}=\{1\}\cup A_{k_0}\\
            A'_{j_0}=A_{j_0}\setminus\{1,2\}\\
            A'_{j}=A_j\text{ if }j\not\in\{i_0,j_0,k_0\}}}
        I(A,A')\right)
    \end{equation}
    and, if $A'_{i_0}=\{1\}\cup A_{i_0}$ and $A'_{k_0}=\{2\}\cup A_{k_0}$, one has
    \begin{gather*}
      \begin{split}
        I(A,A')&=\int_{\Delta^{Q_{j_0}-2}_{j_0}}
        \varphi^{j_0}_{Q_{j_0},n_{j_0}}(x,x',z)
        \overline{\varphi^{j_0}_{Q_{j_0}-2,n'_{j_0}}(z)}dz\\&\hskip1.5cm
        \int_{\Delta^{Q_{i_0}}_{i_0}}\varphi^{i_0}_{Q_{i_0},n_{i_0}}(z')
        \overline{\varphi^{i_0}_{Q_{i_0}+1,n'_{i_0}}(y,z')}dz'\\&\hskip3cm
        \int_{\Delta^{Q_{k_0}}_{k_0}}\varphi^{i_0}_{Q_{k_0},n_{k_0}}(z'')
        \overline{\varphi^{k_0}_{Q_{k_0}+1,n'_{k_0}}(y',z'')}dz''
        \prod_{j\not\in\{i_0,j_0,k_0\}}\delta_{n_j=n'_j}
      \end{split}
      \\ \intertext{and, if $A'_{i_0}=\{2\}\cup A_{i_0}$ and
        $A'_{k_0}=\{1\}\cup A_{k_0}$, one has}
      \begin{split}
        I(A,A')&=\int_{\Delta^{Q_{j_0}-2}_{j_0}}
        \varphi^{j_0}_{Q_{j_0},n_{j_0}}(x,x',z)
        \overline{\varphi^{j_0}_{Q_{j_0}-2,n'_{j_0}}(z)}dz\\&\hskip1.5cm
        \int_{\Delta^{Q_{i_0}}_{i_0}}\varphi^{i_0}_{Q_{i_0},n_{i_0}}(z')
        \overline{\varphi^{i_0}_{Q_{i_0}+1,n'_{i_0}}(y',z')}dz'\\&\hskip3cm
        \int_{\Delta^{Q_{k_0}}_{k_0}}\varphi^{i_0}_{Q_{k_0},n_{k_0}}(z'')
        \overline{\varphi^{k_0}_{Q_{k_0}+1,n'_{k_0}}(y,z'')}dz''
        \prod_{j\not\in\{i_0,j_0,k_0\}}\delta_{n_j=n'_j}.
      \end{split}
    \end{gather*}
    For $i_0,j_0,k_0$ distinct, one has
    \begin{equation*}
      \begin{split}
        \sum_{\substack{\{1,2\}\subset A_{j_0}\\
            |A_j|=Q_j,\ \forall1\leq j\leq m\\A_1\cup\cdots\cup
            A_m=\{1,\cdots,n\}\\A_j\cap A_{j'}=\emptyset\text{ if }j\not=j'}}
        \sum_{\substack{A'_{i_0}=\{1\}\cup A_{i_0}\\
            A'_{k_0}=\{2\}\cup A_{k_0}\\
            A'_{j_0}=A_{j_0}\setminus\{1,2\}\\
            A'_{j}=A_j\text{ if }j\not\in\{i_0,j_0,k_0\}}}1
        &=\frac{(n-Q_{j_0}-Q_{i_0}-Q_{k_0}-2)!Q_{i_0}!Q_{j_0}!Q_{k_0}!}{Q_1!\cdots
          Q_m!}
        \\&=\frac{2\,C_3(Q,i_0,j_0,k_0)}{n(n-1)\,c(Q)^2}.
      \end{split}
    \end{equation*}
    Inverting the roles of $1$ and $2$ we see that the number of
    partitions coming up in the second sum in~\eqref{eq:268} is the
    same. Thus, taking~\eqref{eq:194} into account, we get
    \begin{multline}
      \label{eq:197}
      \frac{n(n-1)}2\sum_{\substack{Q,\ Q'\text{ occ.}\\ \exists
          i,j,k\text{ distinct}\\ Q_j\geq2\\
          Q':\ Q'_l=Q_l\text{ if }l\not\in\{i,j,k\}\\
          Q'_j=Q_j-2\\Q'_i=Q_i+1,\ Q'_k=Q_k+1}}
      \sum_{\substack{\overline{n}\in\N^m\\\overline{n}'\in\N^m}}
      a^Q_{\overline{n}}\overline{a^{Q'}_{\overline{n}'}}
      \gamma^{(2)}_{\substack{Q,\overline{n}\\Q',\overline{n}'}}\\=
      \sum_{\substack{i,j,k\\\text{distinct}}}\sum_{\tilde n\in\N^{m-3}}
      \sum_{\substack{Q\text{ occ.}\\ Q_j\geq2\\ Q':\ Q'_l=Q_l\text{ if
          }l\not\in\{i,j,k\}\\ Q'_j=Q_j-2\\Q'_i=Q_i+1,\ Q'_k=Q_k+1}}
      C_3(Q,i,j,k)\sum_{\substack{n_i,n_j,n_k\geq 1\\n'_i,n'_j,n'_k\geq 1}}
      a^Q_{\tilde{n}_{i,j,k}}\overline{a^{Q'}_{\tilde{n}_{i,j,k}'}}
      \gamma^{(2),4,3}_{\substack{Q_i,Q_j,Q_k\\n_i,n_j,n_k\\n'_i,n'_j,n'_k}}.
    \end{multline}
  \item or there exist $1\leq i_0,j_0,k_0\leq m$ distinct such that
    $Q_{i_0}\geq1$, $Q_{k_0}\geq1$, $Q'_{j_0}=Q_{j_0}+2$,
    $Q_{i_0}=Q'_{i_0}-1$, $Q_{k_0}=Q'_{k_0}-1$, and $Q_k=Q'_k$ for
    $k\not\in\{i_0,j_0,k_0\}$.\\
    We see that we are back to case (b) if we invert the roles of $Q$
    and $Q'$. Thus, we get
    \begin{multline}
      \label{eq:201}
      \frac{n(n-1)}2\sum_{\substack{Q,\ Q'\text{ occ.}\\ \exists i,j,k\text{
            distinct}\\ Q_i\geq1,\ Q_k\geq1\\Q':\ Q'_l=Q_l\text{ if
          }l\not\in\{i,j,k\}\\Q'_j=Q_j+2\\Q'_i=Q_i-1,\ Q'_k=Q_k-1}}
      \sum_{\substack{\overline{n}\in\N^m\\\overline{n}'\in\N^m}}
      a^Q_{\overline{n}}\overline{a^{Q'}_{\overline{n}'}}
      \gamma^{(2)}_{\substack{Q,\overline{n}\\Q',\overline{n}'}}\\=
      \sum_{\substack{i,j,k\\\text{distinct}}}\sum_{\tilde n\in\N^{m-3}}
      \sum_{\substack{Q\text{ occ.}\\ Q_i\geq1,\ Q_k\geq1\\ Q':\
          Q'_l=Q_l\text{ if }l\not\in\{i,j,k\}\\Q'_j=Q_j+2\\Q'_i=Q_i-1,\
          Q'_k=Q_k-1}} C_3(Q,i,j,k)\sum_{\substack{n_i,n_j,n_k\geq
          1\\n'_i,n'_j,n'_k\geq 1}}
      a^Q_{\tilde{n}_{i,j,k}}\overline{a^{Q'}_{\tilde{n}_{i,j,k}'}}
      \gamma^{(2),4,3'}_{\substack{Q_i,Q_j,Q_k\\n_i,n_j,n_k\\n'_i,n'_j,n'_k}}.
    \end{multline}
  \item or there exist $1\leq i_0,j_0,k_0,l_0\leq m$ distinct such
    that $Q_{j_0}\geq1$, $Q_{l_0}\geq1$, $Q'_{i_0}=Q_{i_0}-1$,
    $Q_{j_0}=Q'_{j_0}-1$, $Q'_{k_0}=Q_{k_0}+1$, $Q_{l_0}=Q'_{l_0}+1$
    and $Q_k=Q'_k$ for $k\not\in\{i_0,j_0,k_0,l_0\}$.\\
    Then, either $I(A,A')=0$ or
    \begin{enumerate}
    \item[(i)] either $A_{i_0}=\{1\}\cup A'_{i_0}$ and $A_{j_0}=\{2\}\cup
      A'_{j_0}$ and $A'_{i_0}, A'_{j_0}\subset\{3,\cdots,n\}$,
    \item[(ii)] or $A_{i_0}=\{2\}\cup A'_{i_0}$ and $A_{j_0}=\{1\}\cup A'_{j_0}$ and
      $A'_{i_0}, A'_{j_0}\subset\{3,\cdots,n\}$, in which case 
    \end{enumerate}
    Moreover, in each of the cases (i) and (ii), either $I(A,A')=0$ or
    \begin{enumerate}
    \item[(i)] either $A'_{k_0}=\{1\}\cup A_{k_0}$ and $A'_{l_0}=\{2\}\cup
      A_{l_0}$ and $A_{k_0},A_{l_0}\subset\{3,\cdots,n\}$,
    \item[(ii)] or $A'_{k_0}=\{2\}\cup A_{k_0}$ and $A'_{l_0}=\{1\}\cup A_{l_0}$
      and $A_{k_0},A_{l_0}\subset\{3,\cdots,n\}$
    \end{enumerate}
    In the 4 cases when $I(A,A')$ does not vanish, one computes
    \begin{itemize}
    \item $I(A,A')=\alpha(x,x',y,y')$ in case (i.i),
    \item $I(A,A')=\alpha(x',x,y,y')$ in case (ii.i),
    \item $I(A,A')=\alpha(x,x',y,y')$ in case (i.ii),
    \item $I(A,A')=\alpha(x',x,y',y)$ in case (ii.ii),
    \end{itemize}
    where
    \begin{equation*}
      \begin{split}
        \alpha(x,x',y,y')&:=\int_{\Delta^{Q_{i}-1}_{i}}
        \varphi^{i}_{Q_{i},n_{i}}(x,z)
        \overline{\varphi^{i}_{Q_{i}-1,n'_{i}}(z)}dz
        \int_{\Delta^{Q_{j}-1}_{j}} \varphi^{j}_{Q_{j},n_{j}}(x',z)
        \overline{\varphi^{j}_{Q_{j}-1,n'_{j}}(z)}dz\\&\hskip1cm
        \times\int_{\Delta^{Q_{k}}_{k}} \varphi^{k}_{Q_{k},n_{k}}(z)
        \overline{\varphi^{k}_{Q_{k}+1,n'_{k}}(y,z)}dz
        \int_{\Delta^{Q_{l}}_{l}} \varphi^{l}_{Q_{l},n_{l}}(z)
        \overline{\varphi^{l}_{Q_{l}+1,n'_{l}}(y',z)}dz.
      \end{split}
    \end{equation*}
    Hence, if $d_1(Q,Q')=4$, we obtain
    \begin{equation}
      \label{eq:204}
      \begin{split}
        \frac{\gamma^{(2)}_{\substack{Q,\overline{n}\\Q',
              \overline{n}'}}(x,y)}{c^2(Q)}&=
        \sum_{\substack{1\in A_{i_0},\ 2\in A_{j_0}\\
            |A_j|=Q_j,\ \forall1\leq j\leq m\\A_1\cup\cdots\cup
            A_m=\{1,\cdots,n\}\\A_j\cap A_{j'}=\emptyset\text{ if
            }j\not=j'}}\left(\sum_{\substack{A'_{k_0}=\{1\}\cup
              A_{k_0},\ A'_{l_0}=\{2\}\cup
              A_{l_0}\\A'_{i_0}=A_{i_0}\setminus\{1\},\
              A'_{j_0}=A_{j_0}\setminus\{2\}\\ A'_{j}=A_j\text{ if
              }j\not\in\{i_0,j_0,k_0,l_0\}}}
          I(A,A')-\sum_{\substack{A'_{k_0}=\{2\}\cup A_{k_0},\
              A'_{l_0}=\{1\}\cup
              A_{l_0}\\A'_{i_0}=A_{i_0}\setminus\{1\},\
              A'_{j_0}=A_{j_0}\setminus\{2\}\\
              A'_{j}=A_j\text{ if }j\not\in\{i_0,j_0,k_0,l_0\}}}
          I(A,A')\right)
        \\&-\sum_{\substack{2\in A_{i_0},\ 1\in A_{j_0}\\
            |A_j|=Q_j,\ \forall1\leq j\leq m\\A_1\cup\cdots\cup
            A_m=\{1,\cdots,n\}\\A_j\cap A_{j'}=\emptyset\text{ if
            }j\not=j'}}\left(\sum_{\substack{A'_{k_0}=\{1\}\cup
              A_{k_0},\ A'_{l_0}=\{2\}\cup
              A_{l_0}\\A'_{i_0}=A_{i_0}\setminus\{2\},\
              A'_{j_0}=A_{j_0}\setminus\{1\}\\
              A'_{j}=A_j\text{ if }j\not\in\{i_0,j_0,k_0,l_0\}}}
          I(A,A')-\sum_{\substack{A'_{k_0}=\{2\}\cup A_{k_0},\
              A'_{l_0}=\{1\}\cup
              A_{l_0}\\A'_{i_0}=A_{i_0}\setminus\{2\},\
              A'_{j_0}=A_{j_0}\setminus\{1\}\\
              A'_{j}=A_j\text{ if }j\not\in\{i_0,j_0,k_0,l_0\}}}
          I(A,A')\right).
      \end{split}
    \end{equation}
    For $i_0,j_0,k_0,l_0$ distinct, the number of partitions coming up
    in the first sum in~\eqref{eq:204} is given by
    \begin{equation*}
      \begin{split}
        \sum_{\substack{1\in A_{i_0},\ 2\in A_{j_0}\\
            |A_j|=Q_j,\ \forall1\leq j\leq m\\A_1\cup\cdots\cup
            A_m=\{1,\cdots,n\}\\A_j\cap A_{j'}=\emptyset\text{ if
            }j\not=j'}}\sum_{\substack{A'_{k_0}=\{1\}\cup A_{k_0},\
            A'_{l_0}=\{2\}\cup
            A_{l_0}\\A'_{i_0}=A_{i_0}\setminus\{1\},\
            A'_{j_0}=A_{j_0}\setminus\{2\}\\ A'_{j}=A_j\text{ if
            }j\not\in\{i_0,j_0,k_0,l_0\}}}1 &=
        \frac{(n-Q_{j_0}-Q_{i_0}-Q_{k_0}-Q_{l_0}-2)!Q_{i_0}!Q_{j_0}!
          Q_{k_0}!Q_{l_0}!}{Q_1!\cdots Q_m!}
        \\&=\frac{2\,C_4(Q,i_0,j_0,k_0,l_0)}{n(n-1)\,c(Q)^2}.
      \end{split}
    \end{equation*}
    Inverting the roles of $i_0,j_0,k_0,l_0$, we see that the number
    of partitions involved is the same in the three remaining sums
    of~\eqref{eq:204}.\\
    Thus, taking~\eqref{eq:194} into account, we get
    \begin{multline}
      \label{eq:207}
      \frac{n(n-1)}2\sum_{\substack{Q,\ Q'\text{ occ.}\\ \exists
          i,j,k,l\text{ distinct}\\ Q_i\geq1,\ Q_j\geq1\\
          Q':\ Q'_p=Q_p\text{ if }p\not\in\{i,j,k,l\}\\Q'_i=Q_i-1,\
          Q'_j=Q_j-1\\Q'_k=Q_k+1,\ Q'_l=Q_l+1}}
      \sum_{\substack{\overline{n}\in\N^m\\\overline{n}'\in\N^m}}
      a^Q_{\overline{n}}\overline{a^{Q'}_{\overline{n}'}}
      \gamma^{(2)}_{\substack{Q,\overline{n}\\Q',\overline{n}'}}\\=
      \sum_{\substack{i,j,k,l\\\text{distinct}}}\sum_{\tilde n\in\N^{m-4}}
      \sum_{\substack{Q\text{ occ.}\\ Q_i\geq1,\ Q_j\geq1\\
          Q':\ Q'_p=Q_p\text{ if }p\not\in\{i,j,k,l\}\\Q'_i=Q_i-1,\
          Q'_j=Q_j-1\\Q'_k=Q_k+1,\ Q'_l=Q_l+1}}
      C_4(Q,i,j,k)\sum_{\substack{n_i,n_j,n_k,n_l\geq
          1\\n'_i,n'_j,n'_k,n'_l\geq 1}}
      a^Q_{\tilde{n}_{i,j,k,l}}\overline{a^{Q'}_{\tilde{n}_{i,j,k,l}'}}
      \gamma^{(2),4,4}_{\substack{Q_i,Q_j,Q_k,Q_l\\n_i,n_j,n_k,n_l\\n'_i,n'_j,n'_k,n'_l}}.
    \end{multline}
  \end{enumerate}
  Plugging this,~\eqref{eq:195} and~\eqref{eq:206} into~\eqref{eq:132}, we
  obtain~\eqref{eq:186}. This completes the proof of Theorem~\ref{thr:5}.
\end{proof}
\subsubsection{A particular case}
\label{sec:particular-case}
Let us now explain how the structure of the one-particle and
two-particles density matrices may be simplified in the particular
case when the ground state is factorized. This in particular
immediately yields the expansions~\eqref{eq:11} and~\eqref{eq:14} for
the one-particle and two-particles density matrices of the non
interacting ground state.
\begin{definition}
  \label{def:completeOrthgonalityCondition}
  Let $\alpha \in \frH^i(L)$ and $\beta \in \frH^j(L)$ be two states
  describing $i$ and $j$ electrons respectively.  We say $\alpha$ and
  $\beta$ \emph{do not interact} if for all $(x^2, \hdots, x^{i}, y^2,
  \hdots, y^{j}) \in [0, L]^{i + j - 2}$,
  \begin{equation}
    \label{eq:completeOrthgonalityCondition}
    \int_0^L \alpha(x^1, \hdots, x^{i}) \beta^\ast(y^1, \hdots,
    y^{j}) \bigr|_{x^1 = y^1} \rmd{x^1} = 0 \text{.}
  \end{equation}
  To denote this complete orthogonality, we will write $\alpha\Perp\beta$.
\end{definition}
\begin{remark}
  Because of the anti-symmetric nature of the states $\alpha$ and
  $\beta$ in the above definition, it is sufficient to impose the
  orthogonality only on the first variables. Thus, an integral of the
  type~\eqref{eq:completeOrthgonalityCondition} vanishes for any pair
  of coordinates $x^{i_1} = y^{j_1}$ for $i_1\in\{1,\hdots,i\}$, and
  $j_1\in\{1, \hdots,j\}$.
\end{remark}
\noindent We prove
\begin{proposition}
  \label{prop:DensityMatrixStructure}
  Suppose that a $n$-particle state $\Psi \in \frH^n(L)$ is decomposed
  in its non interacting parts:
  \begin{equation*}
    \Psi = \bigwedge_{j = 1}^k \zeta_j \text{,}
  \end{equation*}
  where each $\zeta_j \in \frH^{k_j}(L)$ is a $k_j$-particle state
  describing a packet of particles that do not interact with other
  packets, i.e., for $i \ne j$, $\zeta_i \Perp \zeta_j$ in the sens of
  Definition~\textup{\ref{def:completeOrthgonalityCondition}}.  Then
  \begin{equation}
    \label{eq:OneParticleDensityMatrixReduction}
    \gamma_\Psi = \sum_{j = 1}^k \gamma_{\zeta_j}
  \end{equation}
  and
  \begin{equation}
    \label{eq:TwoParticleDensityMatrixReduction}
    \gamma_\Psi^{(2)} = \sum_{j = 1}^k \left[\gamma_{\zeta_j}^{(2)} -
      \frac{1}{2} (\Id - \Ex) \gamma_{\zeta_j} \otimes
      \gamma_{\zeta_j}\right]
    + \frac{1}{2} (\Id - \Ex) \gamma_{\Psi} \otimes \gamma_{\Psi},
  \end{equation}
  where $\Id$ is the identity, $\Ex$ is the exchange operator on the
  two-particles space defined as
  \begin{equation*}
    \Ex f \otimes g = g \otimes f, \quad f,g\in\frH,
  \end{equation*}
  and with the obvious convention that $\gamma_{\zeta_j}^{(2)} = 0$ if
  $\zeta_j$ is a one-particle state.
\end{proposition}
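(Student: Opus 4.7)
\medskip

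\noindent\textbf{Proof plan for Proposition~\ref{prop:DensityMatrixStructure}.}
The strategy is to expand the antisymmetric tensor product $\Psi=\bigwedge_{j=1}^k\zeta_j$ explicitly using the formula~\eqref{eq:267} from Appendix~\ref{sec:proj-totally-antisym}, plug this into the definitions~\eqref{eq:OneParticleDensityMatrixDef} and~\eqref{eq:TwoParticleDensityMatrixDef} of the reduced density matrices, and then use the mutual non-interaction hypothesis $\zeta_i\Perp\zeta_j$ to kill the off-diagonal contributions. Concretely, writing $n=k_1+\cdots+k_k$, one has
\begin{equation*}
  \Psi(x^1,\dots,x^n)=c\sum_{(A_1,\dots,A_k)}\varepsilon(A_1,\dots,A_k)\prod_{j=1}^k\zeta_j\bigl((x^l)_{l\in A_j}\bigr),
\end{equation*}
where the sum runs over ordered partitions of $\{1,\dots,n\}$ with $|A_j|=k_j$, $\varepsilon$ is the corresponding signature, and $c$ is the normalization constant from~\eqref{eq:154}. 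Substituting this into both $\Psi$ and $\Psi^\ast$ in the kernels, one gets a double sum over pairs of partitions $((A_j),(A'_j))$.

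\medskip

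\noindent For the one-particle density matrix, the key observation is that in the integral
\begin{equation*}
  \gamma_\Psi(x,y)=n\int_{[0,L]^{n-1}}\Psi(x,\tilde x)\Psi^\ast(y,\tilde x)\,d\tilde x,
\end{equation*}
one of the integration variables $x^p=y^p$ links a block $A_j$ to a block $A'_{j'}$. If $j\neq j'$, this integral factors through $\int\zeta_j(\dots,x^p)\zeta_{j'}^\ast(\dots,x^p)dx^p$, which vanishes by the assumption $\zeta_j\Perp\zeta_{j'}$. Hence only partition pairs with $A_j=A'_j$ for every $j\neq j_0$ (where $j_0$ is the block containing the index $1$) survive; combinatorial bookkeeping then reduces the surviving sum exactly to $\sum_{j=1}^k\gamma_{\zeta_j}(x,y)$, yielding~\eqref{eq:OneParticleDensityMatrixReduction}. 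The same mechanism applies to $\gamma_\Psi^{(2)}$: the two variables $x^1=y^1$ and $x^2=y^2$ force cross integrals between distinct packets to vanish, so the only non-zero contributions come from partition pairs in which either both $x^1,x^2$ lie in a single block (same $\zeta_j$) or $x^1$ and $x^2$ lie in blocks $A_i,A_j$ respectively with $A'_i=A_i$ and $A'_j=A_j$ (different packets, both blocks matched). Counting signs from $\varepsilon(A,A')$ in the two-particle case produces the antisymmetrization $\Id-\Ex$ in the cross terms, and the calculation gives
\begin{equation*}
  \gamma_\Psi^{(2)}=\sum_{j=1}^k\gamma_{\zeta_j}^{(2)}+\frac{1}{2}(\Id-\Ex)\sum_{\substack{i,j=1\\i\neq j}}^k\gamma_{\zeta_i}\otimes\gamma_{\zeta_j}.
\end{equation*}

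\medskip

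\noindent The final step is purely algebraic: using~\eqref{eq:OneParticleDensityMatrixReduction}, one has
\begin{equation*}
  \frac{1}{2}(\Id-\Ex)\gamma_\Psi\otimes\gamma_\Psi=\frac{1}{2}(\Id-\Ex)\sum_{i,j=1}^k\gamma_{\zeta_i}\otimes\gamma_{\zeta_j}=\frac{1}{2}(\Id-\Ex)\sum_{i\neq j}\gamma_{\zeta_i}\otimes\gamma_{\zeta_j}+\sum_{j=1}^k\frac{1}{2}(\Id-\Ex)\gamma_{\zeta_j}\otimes\gamma_{\zeta_j},
\end{equation*}
so substituting the displayed expression for $\gamma_\Psi^{(2)}$ rearranges immediately to~\eqref{eq:TwoParticleDensityMatrixReduction}. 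The only genuinely technical step is the sign-and-combinatorics bookkeeping in the partition sums; this is analogous to (and a special case of) the computations already performed in the proofs of Theorems~\ref{thr:4} and~\ref{thr:5}, where the orthogonality between eigenstates living on disjoint pieces played exactly the role that $\Perp$ plays here. I expect no real obstacle beyond careful tracking of signatures and normalization constants.
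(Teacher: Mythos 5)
Your proposal is correct, but it follows a different route from the paper. The paper first isolates the two-packet case as Lemma~\ref{lem:DensityMatrixFactorizationReduction}: for $\alpha\Perp\beta$ it expands $\alpha\wedge\beta$ over subsets $J\subset\{1,\dots,n+m\}$, kills the cross terms using $\Perp$, and obtains $\gamma^{(2)}_{\alpha\wedge\beta}=\gamma^{(2)}_\alpha+\gamma^{(2)}_\beta+(\Id-\Ex)\,\gamma_\alpha\otimes^s\gamma_\beta$; the proposition then follows by induction on $k$, establishing the intermediate identity $\gamma^{(2)}_\Psi=\sum_j\gamma^{(2)}_{\zeta_j}+(\Id-\Ex)\sum_{i<j}\gamma_{\zeta_i}\otimes^s\gamma_{\zeta_j}$, which is converted to the stated form by the same algebraic identity $A\otimes^sB=\tfrac12\bigl((A+B)\otimes(A+B)-A\otimes A-B\otimes B\bigr)$ that you use at the end. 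You instead expand the full $k$-fold wedge at once via~\eqref{eq:267} and run the partition-pair argument directly; this is exactly the alternative the paper itself mentions ("could be obtained as a consequence of Theorems~\ref{thr:4} and~\ref{thr:5}"), with the $\Perp$ hypothesis playing the role that disjoint supports play there. The trade-off: the paper's induction confines the signature and normalization bookkeeping to the two-packet case, where it reduces to a few binomial coefficients, whereas your direct expansion must track $\varepsilon(A_1,\dots,A_k)$ and the constant $c(Q)$ over general partitions — precisely the step you defer as "careful tracking", and also the only place where your write-up is slightly loose (in the cross-packet case you should state explicitly that both the direct configuration, $1\in A'_i,\,2\in A'_j$, and the exchanged one, $1\in A'_j,\,2\in A'_i$, survive, these two producing the $\Id$ and $\Ex$ terms respectively). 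With that bookkeeping carried out as in the proofs of Theorems~\ref{thr:4} and~\ref{thr:5}, your argument is complete, and your final reduction from the $\sum_{i\neq j}$ form to~\eqref{eq:TwoParticleDensityMatrixReduction} matches the paper's.
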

\noindent While Proposition~\ref{prop:DensityMatrixStructure} could be
obtained as a consequence of Theorems~\ref{thr:4} and~\ref{thr:5}, we
will derive it from the following auxiliary lemma.
\begin{lemma}
  \label{lem:DensityMatrixFactorizationReduction}
  Let $\alpha \in \frH^n(L)$ and $\beta \in \frH^m(L)$ be two vectors
  describing $n$ and $m$ electrons respectively.  Suppose that $\alpha$
  and $\beta$ do not interact:
  \begin{equation*}
    \alpha \bot \beta.
  \end{equation*}
  Then,
  \begin{equation}
    \label{eq:OneParticleDensityMatrixOfTensorProduct}
    \gamma_{\alpha \wedge \beta} = \gamma_\alpha + \gamma_\beta
  \end{equation}
  and
  \begin{equation}
    \label{eq:TwoParticleDensityMatrixOfTensorProduct}
    \gamma_{\alpha \wedge \beta}^{(2)} = \gamma_\alpha^{(2)} +
    \gamma_\beta^{(2)} + (\Id - \Ex) \gamma_\alpha \otimes^s \gamma_\beta
  \end{equation}
  where $\otimes^s$ denotes the symmetrized tensor product:
  \begin{equation*}
    A \otimes^s B = \frac{1}{2} (A \otimes B + B \otimes A) \text{.}
  \end{equation*} 
\end{lemma}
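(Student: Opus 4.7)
The plan is to derive both identities by direct computation using the explicit formula for the antisymmetric product given in~\eqref{eq:267}. Recall that $\alpha\wedge\beta$ can be written as a normalized signed sum over partitions $(A,B)$ of $\{1,\dots,n+m\}$ into a set $A$ of size $n$ and a complementary set $B$ of size $m$, where in each summand the variables indexed by $A$ are fed into $\alpha$ and those indexed by $B$ into $\beta$, with a sign $\varepsilon(A,B)$. The central mechanism is that, since $\alpha\Perp\beta$, any integral where a single variable is shared between a factor of $\alpha$ and a factor of $\beta$ vanishes; hence, when expanding a product $(\alpha\wedge\beta)(x,\tilde x)\overline{(\alpha\wedge\beta)(y,\tilde x)}$ and integrating over enough $\tilde x$-variables, only those paired partitions $(A,B)$ and $(A',B')$ survive in which $A$ and $A'$ coincide on the variables being integrated.

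For the one-particle identity~\eqref{eq:OneParticleDensityMatrixOfTensorProduct}, first expand the squared wedge in~\eqref{eq:OneParticleDensityMatrixDef} as a double sum over partitions $(A,B)$ and $(A',B')$. When we integrate over $x^2,\dots,x^{n+m}$, the orthogonality relation~\eqref{eq:completeOrthgonalityCondition} forces $A\cap B'\cap\{2,\dots,n+m\}=\emptyset=A'\cap B\cap\{2,\dots,n+m\}$, which imposes $A=A'$. The remaining free variable $x^1=x$, $y^1=y$ must then both be assigned consistently either to $\alpha$ or to $\beta$. Splitting the resulting sum according to whether $1\in A$ or $1\in B$ and using the normalization constant $c$ from~\eqref{eq:154}, one sees that the two contributions give exactly $\gamma_\alpha(x,y)$ and $\gamma_\beta(x,y)$ respectively; the combinatorial factors collapse because the number of partitions compatible with a given assignment of variable~$1$ is $\binom{n+m-1}{n-1}$ or $\binom{n+m-1}{m-1}$, which cancels against the normalization factor $(n+m)$ coming from the prefactor in~\eqref{eq:OneParticleDensityMatrixDef} and~\eqref{eq:154}.

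For the two-particle identity~\eqref{eq:TwoParticleDensityMatrixOfTensorProduct}, repeat the expansion but now keep two free variables $x^1=x,x^2=x'$ (resp.~$y^1=y,y^2=y'$) and integrate over $x^3,\dots,x^{n+m}$. The same orthogonality argument again forces the integrated indices to respect the partition, so only three situations yield a non-vanishing integral: both of $\{1,2\}\subset A$ (and simultaneously $\{1,2\}\subset A'$), producing the contribution $\gamma_\alpha^{(2)}(x,x',y,y')$; both $\{1,2\}\subset B$, producing $\gamma_\beta^{(2)}$; or one of the indices lies in $A$ and the other in $B$. In this last mixed case, both permutations $(1\in A,\,2\in B)$ and $(2\in A,\,1\in B)$ occur with appropriate signs, and the resulting sum is precisely $(\Id-\Ex)(\gamma_\alpha\otimes^s\gamma_\beta)(x,x',y,y')$; the exchange operator appears naturally from the antisymmetrization and the symmetrized tensor product $\otimes^s$ accounts for the two ways to assign $\{1,2\}$ between $A$ and $B$.

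The main obstacle is purely bookkeeping: one must carefully track the normalization constants (the $\sqrt{n!\,m!/(n+m)!}$ arising from the wedge normalization), the signs $\varepsilon(A,B)\varepsilon(A',B')$ coming from reordering variables, and the counting of partitions compatible with each prescribed assignment of the free indices $1$ (resp.~$1,2$) to $A$ or $B$. Once one verifies that the combinatorial prefactors collapse correctly in each of the three cases, the formulas follow. A useful shortcut to avoid repeating the partition counting from scratch would be to mimic the argument in the proofs of Theorems~\ref{thr:4} and~\ref{thr:5}, specialized here to $m=2$ blocks (the blocks being the supports of $\alpha$ and $\beta$), where the non-interaction condition replaces the disjoint-support structure used there.
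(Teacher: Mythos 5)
Your proposal is correct and follows essentially the same route as the paper: expand $\alpha\wedge\beta$ via the explicit antisymmetrized sum~\eqref{eq:267}, use the non-interaction condition~\eqref{eq:completeOrthgonalityCondition} to force the two partitions to agree on the integrated variables, and then split according to the placement of the free indices ($1$, resp.\ $\{1,2\}$), yielding $\gamma_\alpha+\gamma_\beta$ and the three contributions $\gamma_\alpha^{(2)}$, $\gamma_\beta^{(2)}$, $(\Id-\Ex)\,\gamma_\alpha\otimes^s\gamma_\beta$ after the same combinatorial bookkeeping the paper carries out. The only difference is that the paper writes the counting explicitly with binomial coefficients and treats the one-particle case as the "simpler" analogue, exactly as you indicate.
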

\begin{proof}
  Define $\bbN_n:=\{1, \hdots, n\}$. Consider the two-particles
  density matrix. By~\eqref{eq:267}, the anti-symmetrized product of
  two eigenfunctions in respectively $n$ and $m$ variables is given by
  \begin{equation*}
    (\alpha \wedge \beta)(x^1, \hdots, x^{n + m})
    = \frac{1}{\sqrt{\binom{n + m}{n}}} \sum_{\substack{J\cup
        J'=\bbN_{n + m}\\J\cap J'=\emptyset,\ |J| = n}}
    (-1)^{\sign{J}} \alpha(x^J) \beta(x^{J'}). 
  \end{equation*}
  Thus, the corresponding two-particles density matrix can be written
  as
  \begin{equation}
    \label{eq:AlphaBetaGamma2Expansion}
    \begin{split}
      &\gamma_{\alpha \wedge \beta}^{(2)}(x^1, x^2, y^1, y^2)\\& =
      \frac{n (n - 1)}{2} \int_{[0, L]^{n + m - 2}} (\alpha \wedge
      \beta)(x^1, x^2, \overline{x})\,(\alpha \wedge \beta)^\ast(y^1,
      y^2, \overline{x}) d\overline{x}
      \\
      &=\frac{n (n - 1)}{2\binom{n + m}{n}} \sum_{\substack{I\cup
          I'=\bbN_{n + m}\\I\cap I'=\emptyset,\ |I| = n\\ J\cup
          J'=\bbN_{n + m}\\J\cap J'=\emptyset,\ |J| = n}} \int_{[0,
        L]^{n + m - 2}}(-1)^{\sign{I} + \sign{J}}\alpha(x^I)
      \beta(x^{I'}) \alpha^\ast(y^J)
      \beta^\ast(y^{J'})\Bigr|_{\substack{y^j = x^j\\ j \in \{3,
          \hdots, n + m\}}} d\overline{x}.
    \end{split}
  \end{equation}
  As $\alpha$ and $\beta$ do not interact, the integrals in the sum in
  the last part of~\eqref{eq:AlphaBetaGamma2Expansion} vanish if $I$
  differs from $J$ by more than two elements, i.e., $|I\setminus J|
  \geq2$. Moreover, if such an integral does not vanish, one
  distinguishes the following cases:
  \begin{enumerate}
  \item if $\{1,2\}\subset I$, then $I=J$; indeed, otherwise $J$ would
    contain an index in $I'$ and the integration of
    $\beta(x^{I'})\alpha^\ast(y^J)\Bigr|_{\substack{y^j = x^j\\ j \in
        \{3, \hdots, n + m\}}}$ over the corresponding variable would
    produce zero because $\alpha \bot \beta$.
  \item if $\{1, 2\} \subset J$, then $I = J$.
  \item if $(1,2)\in (I\times I')\cup(I'\times I)$ then $(1,2)\in(J\times J')\cup(J'\times J)$
    by the same argument as above.
  \end{enumerate}
  As the functions $\alpha$ and $\beta$ are completely anti-symmetric
  under permutations of variables, the terms of the sums over $I$ and
  $J$ corresponding to different cases described above are all the
  same. If we denote $\whx^{k} = x^3, \hdots, x^{k}$ and $\rmd{\whx^k}
  = \rmd{x^3} \hdots \rmd{x^k}$ for $k \in \{n, m, n + n\}$), this
  finally yields
  \begin{equation*}
    \gamma_{\alpha \wedge \beta}^{(2)}(x^1, x^2, y^1, y^2)
    = A+B+C
  \end{equation*}
  where
  \begin{equation*}
    \begin{split}
      A&:=\frac{n (n - 1)}{2} \frac{1}{\binom{n + m}{n}} \binom{n + m
        - 2}{n - 2} \int_{[0, L]^{n - 2}}\alpha(x^1, x^2, \whx^n)
      \alpha^\ast(y^1, y^2, \whx^n) \rmd{\whx^n}\\&
      =\gamma_{\alpha}^{(2)}(x^1, x^2, y^1, y^2),
    \end{split}
  \end{equation*}
  \begin{equation*}
    \begin{split}
      B&:=\frac{n (n - 1)}{2} \frac{1}{\binom{n + m}{n}} \binom{n + m
        - 2}{m - 2} \int_{[0, L]^{m - 2}}\beta(x^1, x^2, \whx^m)
      \beta^\ast(y^1, y^2, \whx^m) \rmd{\whx^m}\\&
      =\gamma_{\beta}^{(2)}(x^1, x^2, y^1, y^2)
    \end{split}
  \end{equation*}
  and
  \begin{equation*}
    \begin{split}
      C&:= \frac{n (n - 1)}{2} \frac{1}{\binom{n + m}{n}} \binom{n + m
        - 2}{m - 1} \int_{[0, L]^{n + m - 2}}\rmd{\whx^{n + m}}\\
      &\hskip2cm\left(\alpha(x^1, \hdots) \beta(x^2,\hdots)
        \alpha^\ast(y^1, \hdots)
        \beta^\ast(y^2,\hdots)\right.\\
      &\hskip3cm- \alpha(x^1, \hdots) \beta(x^2,\hdots)
      \alpha^\ast(y^2, \hdots)
      \beta^\ast(y^1,\hdots)\\
      &\hskip4cm- \alpha(x^2, \hdots) \beta(x^1,\hdots)
      \alpha^\ast(y^1, \hdots)
      \beta^\ast(y^2,\hdots)\\
      &\left.\hskip5cm+ \alpha(x^2, \hdots) \beta(x^1,\hdots)
        \alpha^\ast(y^2, \hdots) \beta^\ast(y^1,\hdots)\right) \\& =
      \frac{1}{2} \left(\gamma_{\alpha}(x^1, y^1) \gamma_{\beta}(x^2,
        y^2)
        - \gamma_{\alpha}(x^1, y^2) \gamma_{\beta}(x^2, y^1)\right. \\
      &\left.\hskip3cm- \gamma_{\alpha}(x^2, y^1) \gamma_{\beta}(x^1,
        y^2) + \gamma_{\alpha}(x^2, y^2) \gamma_{\beta}(x^1,
        y^1)\right).
    \end{split}
  \end{equation*}
  This completes the proof
  of~\eqref{eq:TwoParticleDensityMatrixOfTensorProduct}.  The proof
  for the one-particle density
  matrix~\eqref{eq:OneParticleDensityMatrixOfTensorProduct} is done
  similarly and is even simpler. This completes the proof of
  Lemma~\ref{lem:DensityMatrixFactorizationReduction}.
\end{proof}
\begin{proof}[Proof of
  Proposition~\textup{\ref{prop:DensityMatrixStructure}}]
  The identity~\eqref{eq:OneParticleDensityMatrixReduction} for
  one-particle density matrix is a direct consequence
  of~\eqref{eq:OneParticleDensityMatrixOfTensorProduct}.  We
  prove~\eqref{eq:TwoParticleDensityMatrixReduction} by induction
  on $k$.\\
  For $k = 2$, \eqref{eq:TwoParticleDensityMatrixReduction} is
  equivalent to \eqref{eq:TwoParticleDensityMatrixOfTensorProduct}
  after noting that
  \begin{equation*}
    A \otimes^s B = \frac{1}{2} \left((A + B) \otimes (A + B) - A \otimes A
      - B \otimes B\right) \text{.}
  \end{equation*}
  This remark also proves that
  \begin{equation}
    \label{eq:ModifiedTwoParticleDensityMatrixReduction}
    \gamma_\Psi^{(2)}
    = \sum_{j = 1}^k \gamma_{\zeta_j}^{(2)}
    + (\Id - \Ex) \sum_{i < j} \gamma_{\zeta_i} \otimes^s \gamma_{\zeta_j}
  \end{equation}
  which is equality~\eqref{eq:TwoParticleDensityMatrixReduction}.\\
  Let us prove~\eqref{eq:ModifiedTwoParticleDensityMatrixReduction}
  inductively. Suppose now
  that~\eqref{eq:ModifiedTwoParticleDensityMatrixReduction} holds true
  and consider
  \begin{equation*}
    \Psi_{k + 1} = \bigwedge_{j = 1}^{k + 1} \zeta_j 
    = \left(\bigwedge_{j = 1}^{k} \zeta_j\right) \wedge \zeta_{k + 1}
    = \Psi_k \wedge \zeta_{k + 1}.
  \end{equation*}
  By~\eqref{eq:TwoParticleDensityMatrixOfTensorProduct}, we get
  \begin{equation*}
    \begin{split}
      \gamma_{\Psi_{k + 1}}^{(2)} &= \gamma_{\Psi_{k}}^{(2)} +
      \gamma_{\zeta_{k + 1}}^{(2)}
      + (\Id - \Ex) \gamma_{\Psi_{k}} \otimes^s \gamma_{\zeta_{k + 1}} \\
      & = \sum_{j = 1}^k \gamma_{\zeta_j}^{(2)} + (\Id - \Ex)
      \left(\sum_{\substack{i < j\\i, j = 1, \hdots, k}}
        \gamma_{\zeta_i} \otimes^s \gamma_{\zeta_j}\right)
      + \gamma_{\zeta_{k + 1}}^{(2)} \\
      &\hskip5cm+ (\Id - \Ex) \left(\sum_{j = 1}^k
        \gamma_{\zeta_{j}}\right) \otimes^s \gamma_{\zeta_{k + 1}}
      \\
      &= \sum_{j = 1}^{k + 1} \gamma_{\zeta_j}^{(2)} + (\Id - \Ex)
      \sum_{\substack{i < j\\i, j = 1, \hdots, k + 1}}
      \gamma_{\zeta_i} \otimes^s \gamma_{\zeta_j}.
    \end{split}
  \end{equation*}
  This completes the proof of
  Proposition~\ref{prop:DensityMatrixStructure}.
\end{proof}
\subsection{The proof of Theorem~\ref{thr:2}}
\label{sec:proof-theorem2}
The proof of Theorem~\ref{thr:2} will rely on Theorem~\ref{thr:4} and
the analysis of $\Psi^U_\omega(L,n)$ performed in
section~\ref{sec:main-results-proofs-1}. The two sums
in~\eqref{eq:160} will be analyzed separately and will be split into
various components according to the lengths of the pieces coming into
play in each component. \\
As in the beginning of section~\ref{sec:ground-state-psiu}
(see~\eqref{eq:146}), write $\D
\Psi^U_\omega(L,n)=\sum_{\substack{Q\text{
      occ. }\\\overline{n}\in\N^m}}
a^Q_{\overline{n}}\Phi_{Q,\overline{n}}$.  We will first transform the
results on the ground state obtained in
section~\ref{sec:main-results-proofs-1} into a statement on the
coefficients $((a^Q_{\overline{n}}))_{Q,\overline{n}}$, namely,
\begin{Pro}
  \label{pro:4}
  There exists $\rho_0>0$ such that, for $\rho\in(0,\rho_0)$ and
  $\varepsilon\in(0,1/10)$, $\omega$ almost surely, in the
  thermodynamic limit, with probability $1-O(L^{-\infty})$, one has
  \begin{enumerate}
  \item for an occupation $Q\not\in\mathcal{Q}_\rho$
    (see~\eqref{eq:142}) and any $\overline{n}\in\N^m$, one has
    $a^Q_{\overline{n}}=0$;
  \item let $\mathcal{P}_-$ be the (indices $j$ of the) pieces
    $(\Delta_j(\omega))_j$ of lengths less than
    $3\ell_\rho(1-\varepsilon)$, and, for $Q$ an occupation, let
    $\mathcal{P}^Q_-$ be the (indices $j$ of the) pieces
    in $\mathcal{P}_-$ such that $Q_j\leq3$.\\
    Then, for $Q$, an occupation number of a ground state
    $\Psi^U_\omega(L,n)$, letting
    $(a_{\overline{n}}^Q)_{Q,\overline{n}}$ be its coefficients in the
    decomposition~\eqref{eq:146}, one has
    \begin{equation}
      \label{eq:173}
      \sum_{\substack{Q\text{ occ.}\\\overline{n}\in\N^m}}
      \#\{j\in\mathcal{P}^Q_-; n_j\geq2\}\left|a_{\overline{n}}^Q\right|^2
      \leq o\left(\frac{n\cdot\rho}{|\log\rho|}\right).
    \end{equation}
  \end{enumerate}
\end{Pro}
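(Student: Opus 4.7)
My plan is to derive part~(a) from the orthogonal decomposition of $H^U_\omega(L,n)$ by fixed occupations set up in Section~\ref{sec:main-results-proofs-1}, and to derive part~(b) from a uniform intra-piece spectral gap combined with the sharp energy expansion of the interacting ground state.

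For part~(a), I write $\Psi^U_\omega(L,n)=\sum_Q\Pi_Q\Psi^U_\omega(L,n)$ using \eqref{eq:frHfrHQdecomposition}. By Corollary~\ref{cor:OccupationSubspaces} the subspaces $\frH_Q$ are $H^U_\omega(L,n)$-invariant and mutually orthogonal, so
\begin{equation*}
E^U_\omega(\Lambda,n)=\sum_Q\langle H^U_\omega(L,n)\Pi_Q\Psi^U,\Pi_Q\Psi^U\rangle.
\end{equation*}
For the global minimum to be attained, every non-zero $\Pi_Q\Psi^U$ has to be proportional to a ground state of $H^U_\omega(L,n)\bigr|_{\frH_Q}$ achieving the global minimum $E^U_\omega(\Lambda,n)$; Corollary~\ref{cor:3} then forces $Q\in\mathcal{Q}_\rho$, reducing the expansion \eqref{eq:146} to a sum over $Q\in\mathcal{Q}_\rho$.

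For part~(b), set $c_Q^2:=\sum_{\overline n}|a^Q_{\overline n}|^2$ and introduce the intra-piece Hamiltonian $H^{\mathrm{intra}}$ which acts on each $\frH_Q$ as $\sum_k H^{Q_k}_{\Delta_k(\omega)}$ (recall \eqref{eq:161}) and is therefore diagonalized by the basis $\{\Phi^Q_{\overline n}\}$ with eigenvalues $\sum_k E^k_{Q_k,n_k}$. The difference $H^U_\omega(L,n)-H^{\mathrm{intra}}$ consists of the $U^p$ inter-piece interactions and the whole residual interaction $W^r$, both non-negative, so $H^{\mathrm{intra}}\leq H^U_\omega(L,n)$. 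Using Theorem~\ref{th:EnergyAsymptoticExpansion}, this yields
\begin{equation*}
\sum_{Q,\overline n}|a^Q_{\overline n}|^2\sum_k E^k_{Q_k,n_k}\leq E^U_\omega(\Lambda,n)=\densEn^0(\rho)\,n+\pi^2\gamma_*\,\rho\,n\,\ell_\rho^{-3}+o(n\rho\ell_\rho^{-3}).
\end{equation*}
On the other hand, a direct spectral computation gives the uniform kinetic gap $E^k_{Q_k,n_k}-E^k_{Q_k,1}\geq c(\varepsilon)\ell_\rho^{-2}$ whenever $n_k\geq2$, $|\Delta_k|<3\ell_\rho(1-\varepsilon)$ and $Q_k\in\{1,2,3\}$: the non-interacting splitting $(2Q_k+1)\pi^2/|\Delta_k|^2$ dominates the $O(|\Delta_k|^{-3})$ interaction correction provided by Proposition~\ref{prop:TwoElectronProblem} and its analogues for $Q_k=1,3$.

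To convert this gap into \eqref{eq:173} I need a matching lower bound $\sum_Q c_Q^2\sum_k E^k_{Q_k,1}\geq E^U_\omega(\Lambda,n)-o(n\rho\ell_\rho^{-3})$. I obtain it by testing Theorem~\ref{th:PsiUmPsiOptEnergyEstimate} against the trial state
\begin{equation*}
\widetilde\Psi:=\sum_{Q\in\mathcal{Q}_\rho}c_Q\,\Phi^Q_{\overline 1}\in\frH^n\cap\frH_\infty^n([0,L]),\qquad\overline 1=(1,\dots,1),
\end{equation*}
which is a unit vector thanks to part~(a). Since the $\Phi^Q_{\overline 1}$ lie in mutually orthogonal $H^{U^p}_\omega(L,n)$-invariant subspaces,
\begin{equation*}
\langle H^{U^p}_\omega(L,n)\widetilde\Psi,\widetilde\Psi\rangle=\sum_Q c_Q^2\sum_k E^k_{Q_k,1}+\sum_Q c_Q^2\,\bigl\langle(H^{U^p}_\omega(L,n)-H^{\mathrm{intra}})\Phi^Q_{\overline 1},\Phi^Q_{\overline 1}\bigr\rangle,
\end{equation*}
while Theorems~\ref{th:PsiUmPsiOptEnergyEstimate} and~\ref{thr:6} combine to give $\langle H^{U^p}_\omega(L,n)\widetilde\Psi,\widetilde\Psi\rangle\geq\densEn^0(\rho)n+\pi^2\gamma_*\rho n\ell_\rho^{-3}-o(n\rho\ell_\rho^{-3})$. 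The second sum above is controlled, uniformly for $Q\in\mathcal{Q}_\rho$, by the trace-class argument in the second half of the proof of Theorem~\ref{thr:6} (Lemmas~\ref{lem:inter11_closeEstimate}--\ref{lem:inter22_estimate} combined with Proposition~\ref{prop:IntervStatistics2}) and is $o(n\rho\ell_\rho^{-3})$. Subtracting the resulting lower bound on $\sum_Q c_Q^2\sum_k E^k_{Q_k,1}$ from the earlier upper bound and dividing by $c(\varepsilon)\ell_\rho^{-2}$ produces \eqref{eq:173}. The chief technical obstacle is this last uniform inter-piece estimate: the proof of Theorem~\ref{thr:6} explicitly handled only $Q(\Psi^{\mathrm{opt}})$, so one must verify that the $o(n)$ occupation displacements allowed within $\mathcal{Q}_\rho$ add only $o(n\rho\ell_\rho^{-3})$ to the inter-piece energy, which follows by repeating the trace computation with the one-particle density of $\Phi^Q_{\overline 1}$ in place of that of $\Psi^{\mathrm{opt}}$ and using that the two agree up to $o(n\rho/\ell_\rho)$ in trace norm.
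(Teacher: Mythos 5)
Part~(a) and the skeleton of part~(b) (intra-piece upper bound via $H^{\mathrm{intra}}\leq H^U_\omega(L,n)$, the uniform gap $E^k_{Q_k,n_k}-E^k_{Q_k,1}\gtrsim\ell_\rho^{-2}$ for $n_k\geq2$, $Q_k\leq3$, $|\Delta_k|<3\ell_\rho(1-\varepsilon)$, and then subtracting a matching lower bound on $\sum_Q c_Q^2\sum_k E^k_{Q_k,1}$ and dividing by the gap) are sound and essentially parallel the paper's argument (eq.~\eqref{eq:224}, Lemma~\ref{le:29}, eq.~\eqref{eq:269}). The genuine gap is exactly at the step you flag as ``the chief technical obstacle'': the claim that $\sum_Q c_Q^2\bigl\langle(H^{U^p}_\omega(L,n)-H^{\mathrm{intra}})\Phi^Q_{\overline 1},\Phi^Q_{\overline 1}\bigr\rangle=o(n\rho\ell_\rho^{-3})$ uniformly over $Q\in\mathcal{Q}_\rho$. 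Your justification rests on the assertion that the one-particle densities of $\Phi^Q_{\overline 1}$ and $\Psi^{\mathrm{opt}}$ agree up to $o(n\rho/\ell_\rho)$ in trace norm, but membership in $\mathcal{Q}_\rho$ (see \eqref{eq:142}) only gives $\dist_1$-closeness up to $C\,n\max\bigl(\sqrt{\rho Z(2|\log\rho|)},\rho|\log\rho|^{-1}\bigr)$ on the short pieces, which under \textbf{(HU)} alone can be far larger than $n\rho/\ell_\rho$ (the hypothesis only forces $Z\to0$, arbitrarily slowly). Worse, even granting trace-norm closeness, it does not control the inter-piece energy: the estimates in the proof of Theorem~\ref{thr:6} rely on the pointwise bounds $\|\varphi^1_{\Delta}\|_\infty^2\lesssim|\Delta|^{-1}$, $\gamma_{\zeta^1_\Delta}(y,y)\lesssim|\Delta|^{-1}$ for the pieces actually occupied by $\Psi^{\mathrm{opt}}$, all of length $\gtrsim\ell_\rho$. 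An occupation $Q\in\mathcal{Q}_\rho$ may place its $O\bigl(n\sqrt{\rho Z}\bigr)$ displaced particles into arbitrarily short pieces, or pile extra particles into adjacent pieces; a single pair of particles in two adjacent pieces of length $\delta\ll1$ already contributes an inter-piece interaction of order $1$, so no bound of the form $o(n\rho\ell_\rho^{-3})$ holds uniformly on $\mathcal{Q}_\rho$, and ``repeating the trace computation'' of Theorem~\ref{thr:6} does not go through.

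Consequently the needed lower bound $\sum_Q c_Q^2\sum_k E^k_{Q_k,1}\geq n\densEn^0(\rho)+\pi^2\gamma_*n\rho\ell_\rho^{-3}-o(n\rho\ell_\rho^{-3})$ is not established by your argument; note also that applying Theorem~\ref{th:PsiUmPsiOptEnergyEstimate} to $\widetilde\Psi$ (or to each $\Phi^Q_{\overline 1}$) bounds from below a quantity that \emph{contains} the nonnegative inter-piece term, so the circularity cannot be avoided without an independent upper bound on that term. This is precisely the content of the paper's Lemma~\ref{le:21}, which is proved by a different mechanism: a direct occupation-by-occupation comparison of $\sum_{j}E^{j,U^p}_{Q_j,1}$ with the value at $Q^{\mathrm{opt}}$, using particle-number conservation \eqref{eq:258}, the explicit bound \eqref{eq:250} on the maximal energetic benefit $B$ of a vacancy, and the convexity-type inequalities \eqref{eq:168}--\eqref{eq:171}; configurations putting particles in short pieces are penalized there through their kinetic energy rather than excluded by $\mathcal{Q}_\rho$. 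To repair your proof you would either have to reprove such a lemma, or add an argument showing that occupations with large inter-piece energy for $\Phi^Q_{\overline 1}$ carry negligible weight $c_Q^2$ (using the kinetic-energy constraint), which is a nontrivial additional step not present in your write-up.
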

\noindent The second part of Proposition~\ref{pro:4} controls the
excited particles in the ground state $\Psi^U_\omega(L,n)$. Actually,
as the proof shows, we shall prove~\eqref{eq:173} not only for a
ground state of $H^U_\omega(L,n)$, but, also for any state $\Psi$
satisfying
\begin{equation}
  \label{eq:270}
  \frac{1}{n} \langle H^{U^p}_\omega(L,n)\Psi,\Psi\rangle \leq
  \densEn^0(\rho) + \pi^2 \gamma_* 
  \frac{\rho}{|\log{\rho}|^{-3}} +o\left(
    \frac{\rho}{|\log{\rho}|^{-3}}\right).
\end{equation}
\begin{proof}[Proof of Proposition~\ref{pro:4}]
  Point (a) is a rephrasing of Corollary~\ref{cor:3}.\\
  Let us prove point (b). Pick an $n$-state $\Psi$ and decompose it as
  $\D\Psi^U_\omega(L,n)=\sum_{Q\in\mathcal{Q}_\rho}\Psi_Q$. Then, if
  $E^{j,U^p}_{Q_j,n_j}$ denotes the $n_j$-th eigenvalue of $\D
  -\sum_{l=1}^{Q_j} \frac{d^2}{dx_l^2}+\sum_{1\leq k\leq l\leq
    Q_j}U^p(x_k-x_l)$ acting on $\D\bigwedge_{l=1}^{Q_j}
  L^2(\Delta_j(\omega))$ with Dirichlet boundary conditions (if
  $Q_j=0$, we set $E^{j,U^p}_{Q_j,n_j}=0$ for all $n_j$), as $H^U\geq
  H^{U^p}$ (see~\eqref{eq:245}), by~\eqref{eq:222}, one has
  \begin{equation}
    \label{eq:224}
    \begin{split}
      n\left(\densEn^0(\rho)+\pi^2\gamma_\star\rho|\log{\rho}|^{-3}
        \left(1+O\left(f_Z(|\log{\rho}|)\right)\right)\right)
      &\geq \langle H^{U^p}\Psi^{U^p},\Psi^{U^p}\rangle\\
      &\geq\sum_{\substack{Q\text{ occ.}\\\overline{n}\in\N^m}}
      \left(\sum_{\substack{j\in\mathcal{P}^Q_-\\Q_j\geq1}}
        E^{j,U^p}_{Q_j,n_j}\right)\left|a_{\overline{n}}^Q\right|^2.
    \end{split}
  \end{equation}
  One proves
  \begin{Le}
    \label{le:29}
    There exists $C>0$ such that, for $j\in\mathcal{P}^Q_-$, $Q_j\geq1$ and
    $n_j\geq2$, one has
    \begin{equation}
      \label{eq:223}
      E^{j,U^p}_{Q_j,n_j}\geq E^{j,U^p}_{Q_j,1}+\frac{1}{C\ell_\rho^2}.
    \end{equation}
  \end{Le}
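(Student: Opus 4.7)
Since the map $n_j \mapsto E^{j,U^p}_{Q_j,n_j}$ is non-decreasing, it suffices to establish the bound for $n_j = 2$, namely to produce a spectral gap of size at least $1/(C\ell_\rho^2)$ between the ground state and the first excited state of the $Q_j$-electron Hamiltonian $H^{Q_j}_{\Delta_j}$ from~\eqref{eq:161}, with $Q_j \in \{1,2,3\}$ and $|\Delta_j| \leq 3\ell_\rho(1-\varepsilon)$. The case $Q_j = 1$ is immediate: there is no interaction at all and the Dirichlet Laplacian on $\Delta_j$ has eigenvalues $k^2\pi^2/|\Delta_j|^2$, so the gap from the ground state equals $3\pi^2/|\Delta_j|^2 \geq \pi^2/(9\ell_\rho^2)$.

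For $Q_j \in \{2,3\}$ I plan to combine two complementary bounds. First, the operator inequality $H^{Q_j}_{\Delta_j,U^p} \geq H^{Q_j}_{\Delta_j,0}$ (which holds because $U^p \geq 0$) together with the min-max principle yields $E^{j,U^p}_{Q_j,n} \geq E^{j,0}_{Q_j,n}$ for every $n$. Since the free fermionic eigenvalues on an interval of length $\ell$ are $\pi^2 \ell^{-2}\sum_i k_i^2$ indexed by ordered tuples $k_1<\cdots<k_{Q_j}$ in $\mathbb{N}^*$, the first excited level is obtained by shifting the top index from $Q_j$ to $Q_j+1$, so
\[
E^{j,U^p}_{Q_j,2} \;\geq\; E^{j,0}_{Q_j,2} \;=\; (P(Q_j)+2Q_j+1)\,\pi^2/|\Delta_j|^2.
\]
Second, I will insert the free ground state Slater determinant $\varphi^1_{Q_j,\Delta_j}$ as a trial function to get $E^{j,U^p}_{Q_j,1} \leq E^{j,0}_{Q_j,1} + a_1$, where $a_1 := \sum_{k<l}\int U^p(x_k-x_l)\,|\varphi^1_{Q_j,\Delta_j}|^2\,dx$. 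Subtracting leaves the gap bounded below by $(2Q_j+1)\pi^2/|\Delta_j|^2 - a_1$, so everything reduces to an estimate on $a_1$.

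The main technical step is to control $a_1$. By Pauli exclusion $\varphi^1_{Q_j,\Delta_j}$ vanishes whenever any two coordinates coincide, and a first-order Taylor expansion along $x_k = x_l$ yields the pointwise bound $|\varphi^1_{Q_j,\Delta_j}|^2 \leq C_{Q_j}\,(x_k-x_l)^2/|\Delta_j|^{Q_j+2}$ in a neighbourhood of that diagonal, the constant depending only on $Q_j$ (and arising from the explicit bounds $\|\varphi_k\|_\infty, |\Delta_j|\,\|\varphi_k'\|_\infty = O(|\Delta_j|^{-1/2})$ on the one-particle sines $\varphi_k(x) = \sqrt{2/|\Delta_j|}\sin(k\pi x/|\Delta_j|)$). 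Changing variables to $r = x_k - x_l$ and integrating out the remaining $Q_j-1$ coordinates produces the refined estimate $a_1 \leq \widetilde C_{Q_j}\,|\Delta_j|^{-3}\int_{\mathbb{R}} r^2 U(r)\,dr$, the latter integral being finite under assumption \textbf{(HU)} (which forces $\int r^2 U(r)\,dr < \infty$). Together with the elementary crude estimate $a_1 \leq C'_{Q_j}\,\|U\|_{L^1}/|\Delta_j|$ (valid for all $|\Delta_j|$ via $|\varphi^1_{Q_j,\Delta_j}|^2 \leq Q_j!\,(2/|\Delta_j|)^{Q_j}$), one obtains $a_1 \leq \min\!\bigl(C'/|\Delta_j|,\,\widetilde C/|\Delta_j|^3\bigr)$. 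For $|\Delta_j|$ larger than an absolute threshold $\ell_*$ depending only on $U$, the refined bound is negligible compared with $(2Q_j+1)\pi^2/|\Delta_j|^2$; for $|\Delta_j| < \ell_*$, the crude bound is in turn dominated by the same quantity since $|\Delta_j|^{-2}$ beats $|\Delta_j|^{-1}$ on a bounded interval. In either regime the gap is bounded below by $C''/|\Delta_j|^2$, hence by $C''/(9\ell_\rho^2) = 1/(C\ell_\rho^2)$, completing the argument.
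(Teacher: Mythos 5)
Your strategy is essentially the paper's: the paper rescales the piece to unit length and combines the variational upper bound $E^{U^\ell}_{q,1}\leq E^{0}_{q,1}+\langle\phi_0,U^\ell\phi_0\rangle$ — the interaction term being $O(1/\ell)$ precisely because the antisymmetric free ground state vanishes on the diagonal, which is the content of~\eqref{eq:phiEllNorm} imported from the two-particle analysis — with the lower bound $E^{U^\ell}_{q,n}\geq E^{0}_{q,n}$ coming from $U^p\geq0$, and concludes from the order-one free gap. You carry out the same two bounds in unscaled variables and rederive the $O(|\Delta_j|^{-3})$ estimate on $\langle U^p\rangle$ by Taylor-expanding the Slater determinant along the diagonal; this part, i.e.\ the regime of pieces longer than a $U$-dependent threshold, is correct and is exactly the regime the paper's proof addresses (``for $\ell$ large'' in~\eqref{eq:226}).

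The flaw is in how you patch the two regimes. Your claim that for $|\Delta_j|<\ell_*$ ``the crude bound is dominated \ldots since $|\Delta_j|^{-2}$ beats $|\Delta_j|^{-1}$ on a bounded interval'' ignores the constants: $C'/|\Delta_j|\leq(2Q_j+1)\pi^2/|\Delta_j|^2$ only when $|\Delta_j|\leq(2Q_j+1)\pi^2/C'$, with $C'\sim\|U\|_{L^1}$, whereas the threshold $\ell_*$ above which the refined bound $\widetilde C/|\Delta_j|^3$ is small compared with the free gap is of order $\widetilde C\sim\int r^2U(r)\,dr$. Nothing forces $(2Q_j+1)\pi^2/C'\geq\ell_*$: for instance $U=\lambda\car_{[-1,1]}$ with $\lambda$ large leaves a whole window of intermediate lengths in which $\min\bigl(C'/|\Delta_j|,\widetilde C/|\Delta_j|^3\bigr)$ exceeds $(2Q_j+1)\pi^2/|\Delta_j|^2$, so your two estimates do not even show the gap is positive there, let alone of size $|\Delta_j|^{-2}$. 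The lemma is still true in that window, but an extra ingredient is needed; the cheapest repair is to observe that only a gap $\geq 1/(C\ell_\rho^2)$ is required and $\ell_\rho\to\infty$ as $\rho\to0$, so on the remaining compact range of lengths it suffices that the gap be bounded below by a positive constant, which follows from the simplicity of the interacting ground state in a single piece (the Perron–Frobenius argument of Lemma~\ref{le:2}) together with continuity of the eigenvalues in the piece length.
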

  \noindent Plugging~\eqref{eq:223} into~\eqref{eq:224} yields
  \begin{multline}
    \label{eq:269}
    \sum_{\substack{Q\text{ occ.}\\\overline{n}\in\N^m}}
    \left(\sum_{\substack{j\in\mathcal{P}^Q_-\\Q_j\geq1}}
      E^{j,U^p}_{Q_j,1}\right)\left|a_{\overline{n}}^Q\right|^2 +
    \sum_{\substack{Q\text{ occ.}\\\overline{n}\in\N^m}}
    \frac{\#\{j\in\mathcal{P}^Q_-;
      n_j\geq2\}}{C\ell^2_\rho}\left|a_{\overline{n}}^Q\right|^2
    \\\leq
    n\left(\densEn^0(\rho)+\pi^2\gamma_\star\rho|\log{\rho}|^{-3}
      \left(1+O\left(f_Z(|\log{\rho}|)\right)\right)\right)
  \end{multline}
  We prove
  \begin{Le}
    \label{le:21}
    There exists $\rho_0>0$ such that, for $\rho\in(0,\rho_0)$,
    $\varepsilon\in(0,1)$ and $\omega$ almost surely, for $L$
    sufficiently large and $|n/L-\rho|$ sufficiently small, if $Q$ is
    an occupation such that
    \begin{equation}
      \label{eq:247}
      \sum_{j\in\mathcal{P}_-}
      E^{j,U^p}_{Q_j,1}\leq
      n\left(\densEn^0(\rho)+\rho|\log{\rho}|^{-3}
        \left(\pi^2\gamma_\star+\varepsilon\right)\right)
    \end{equation}
    then,
    \begin{equation}
      \label{eq:244}
      \sum_{\substack{j\in\mathcal{P}^Q_-\\Q_j\geq1}}
      E^{j,U^p}_{Q_j,1}\geq n\left(\densEn^0(\rho)+
        \rho|\log{\rho}|^{-3}\left(\pi^2\gamma_\star 
          -\frac1{\rho_0}\left(\varepsilon+
            f_Z(|\log{\rho}|)\right)\right)\right).
    \end{equation}
  \end{Le}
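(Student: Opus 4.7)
The plan is to argue variationally. Given the occupation $Q$, I would introduce the trial state $\wtPsi_Q:=\bigwedge_{j:\,Q_j\geq 1}\varphi^j_{Q_j,1}$, with $\varphi^j_{Q_j,1}$ the normalized ground state of the $Q_j$-particle Hamiltonian $H^{Q_j}_{\Delta_j(\omega)}$ introduced in~\eqref{eq:161}. Since the pieces $\Delta_j(\omega)$ are pairwise disjoint, $\wtPsi_Q$ lies in $\frH_Q\cap\frH_\infty^n([0,L])$ and splits the energy as
\[
\langle H^{U^p}_\omega(L,n)\wtPsi_Q,\wtPsi_Q\rangle = \sum_{j:\,Q_j\geq 1} E^{j,U^p}_{Q_j,1} + W_{\mathrm{inter}}(\wtPsi_Q),
\]
where $W_{\mathrm{inter}}(\wtPsi_Q)$ collects only the pair interactions between particles in distinct pieces (the intra-piece contributions being absorbed into each $E^{j,U^p}_{Q_j,1}$).

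Following the computation performed for $\Psi^{\textup{opt}}$ in the second half of the proof of Theorem~\ref{thr:6} (formulas~\eqref{eq:113}--\eqref{eq:246}) and in Proposition~\ref{prop:WrPsiOptFormEstimate}, based on Lemmas~\ref{lem:inter11_closeEstimate}--\ref{lem:inter22_estimate} and Propositions~\ref{prop:IntervStatistics2}--\ref{prop:NeighborssStatistics2}, one obtains $W_{\mathrm{inter}}(\wtPsi_Q)\leq C\, n\rho|\log\rho|^{-3}f_Z(|\log\rho|)$. The combinatorial estimates are made uniform in $Q$ by first applying Lemma~\ref{le:17} to $\wtPsi_Q$ in order to rule out pieces of $\varepsilon$-type carrying macroscopically many particles; the required upper bound on $\langle H^{U^p}\wtPsi_Q,\wtPsi_Q\rangle$ is supplied by~\eqref{eq:247} in $\mathcal{P}_-$ and by the elementary count $|\mathcal{P}_+|=O(n\rho^2)$ (Proposition~\ref{prop:IntervStatistics}) together with the trivial lower bound $E^{j,U^p}_{Q_j,1}\geq\pi^2 P(Q_j)/|\Delta_j|^2$ in $\mathcal{P}_+$. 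The variational principle and Corollary~\ref{cor:4} then yield
\[
\sum_{j:\,Q_j\geq 1} E^{j,U^p}_{Q_j,1}\geq E^{U^p}_\omega(L,n)-W_{\mathrm{inter}}(\wtPsi_Q)\geq n\densEn^0(\rho)+n\rho|\log\rho|^{-3}\bigl(\pi^2\gamma_\star-C_1 f_Z(|\log\rho|)\bigr).
\]

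It then remains to isolate the target partial sum via
\[
\sum_{\substack{j\in\mathcal{P}^Q_-\\Q_j\geq 1}} E^{j,U^p}_{Q_j,1} = \sum_{j:\,Q_j\geq 1} E^{j,U^p}_{Q_j,1} - \sum_{\substack{j\in\mathcal{P}_+\\Q_j\geq 1}} E^{j,U^p}_{Q_j,1} - \sum_{\substack{j\in\mathcal{P}_-\\Q_j\geq 4}} E^{j,U^p}_{Q_j,1},
\]
and to bound the two removed sums by $O(n\rho|\log\rho|^{-3}(\varepsilon+f_Z(|\log\rho|)))$. The long-piece contribution follows from $|\mathcal{P}_+|=O(n\rho^2)$ (Proposition~\ref{prop:IntervStatistics}) combined with the $\varepsilon$-type occupational control from Lemma~\ref{le:17}, giving $o(n\rho|\log\rho|^{-3})$. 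The over-occupancy contribution is obtained by adapting the argument of Lemma~\ref{lem:normPiecesWithParticleExcess} to $\wtPsi_Q$ (its required bound~\eqref{eq:133} being delivered by~\eqref{eq:247} combined with the interaction bound of the second paragraph): this yields $\sum_{j\in\mathcal{P}_-,\,Q_j\geq 4} Q_j^2=O(n\rho|\log\rho|^{-1})$, and hence a total energy contribution of order $n\rho|\log\rho|^{-3}(\varepsilon+f_Z)$ after tracking the excess-energy budget provided by~\eqref{eq:247}. The main technical obstacle will be the apparent circularity in the second paragraph: the occupational control needed to bound the interactions and the interaction bound itself must be obtained simultaneously. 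This is resolved by a short bootstrap in which Lemma~\ref{le:17} is first applied using only the non-interacting part of the energy (controlled in $\mathcal{P}_-$ by~\eqref{eq:247} and in $\mathcal{P}_+$ by Poisson statistics), and the interaction bound is then derived as a consequence. Absorbing the implicit constants into a single $1/\rho_0$ coefficient yields~\eqref{eq:244}.
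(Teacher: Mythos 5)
Your route is genuinely different from the paper's: you build the trial state $\wtPsi_Q$ and argue variationally, whereas the paper never introduces a trial state for $Q$ at all — it compares $\sum_{j\in\mathcal{P}_-}E^{j,U^p}_{Q_j,1}$ directly with the reference sum attached to $Q^{\text{opt}}$, sets $\Delta E$ as in \eqref{eq:255}, and exploits the particle-counting identity \eqref{eq:258}, the bound \eqref{eq:250} on the maximal benefit $B$ of relocating a particle (via Proposition~\ref{prop:TwoElectronProblem}), and per-piece convexity inequalities to reach \eqref{eq:253}. The decisive gap in your plan is the final subtraction step. You need $\sum_{j\in\mathcal{P}_+,\,Q_j\geq1}E^{j,U^p}_{Q_j,1}+\sum_{j\in\mathcal{P}_-,\,Q_j\geq4}E^{j,U^p}_{Q_j,1}$ to be of size $n\rho|\log\rho|^{-3}\bigl(\varepsilon+f_Z(|\log\rho|)\bigr)$, but the tools you cite cannot give this: Lemma~\ref{le:17} and Lemma~\ref{lem:normPiecesWithParticleExcess} only produce occupancy bounds of order $n\rho\ell_\rho^{-1}$ (see \eqref{eq:124}, \eqref{eq:211}), hence a removed energy of order $n\rho\ell_\rho^{-3}$ — the same order as the term $\pi^2\gamma_*\,n\rho\ell_\rho^{-3}$ that \eqref{eq:244} must preserve. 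The extra smallness $\varepsilon+f_Z$ can only come from the tightness of the hypothesis \eqref{eq:247}, which leaves a surplus of merely $(\varepsilon+Cf_Z)\,n\rho\ell_\rho^{-3}$ above the energy of $\Psi^{\textup{opt}}$; extracting it forces the piece-by-piece comparison of $Q$ with $Q^{\text{opt}}$ that constitutes \eqref{eq:255}--\eqref{eq:253}. Calling this ``tracking the excess-energy budget'' and routing it through Lemma~\ref{lem:normPiecesWithParticleExcess} skips precisely the heart of the argument.

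There is a second gap in your bootstrap. The hypothesis \eqref{eq:247} constrains $Q$ only on $\mathcal{P}_-$; Poisson statistics bound the number of long pieces but say nothing about how many particles $Q$ places in them or at what energy, so neither the bound \eqref{eq:210} required by Lemma~\ref{le:17} nor \eqref{eq:133} required by Lemma~\ref{lem:normPiecesWithParticleExcess} can be verified for $\wtPsi_Q$ from \eqref{eq:247} and ``Poisson statistics in $\mathcal{P}_+$'' alone. Even granting Lemma~\ref{le:17}, it only controls pieces of $\varepsilon$-type (long pieces or pieces with long neighbors), whereas your uniform estimate $W_{\mathrm{inter}}(\wtPsi_Q)\lesssim n\rho\ell_\rho^{-3}f_Z$ needs control of particles placed in short or slightly sub-Fermi-length pieces and of occupancies above two, for which the interaction Lemmas~\ref{lem:inter11_closeEstimate}--\ref{lem:inter22_estimate} (stated for the lowest levels and two-particle ground states) are not directly applicable; such occupational control again comes only from the fine budget comparison with $Q^{\text{opt}}$. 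The paper's proof sidesteps both issues because the quantity it manipulates is the pure sum of intra-piece ground-state energies: the only state whose inter-piece interactions are ever estimated is the explicitly structured $\Psi^{\textup{opt}}$ (Theorem~\ref{thr:6}). As it stands, your argument would at best give \eqref{eq:244} with an $O(1)$ loss in place of $\frac1{\rho_0}(\varepsilon+f_Z)$, which destroys the $\pi^2\gamma_\star$ term the lemma is designed to capture.
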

  \noindent Lemma~\ref{le:21} shows that, for low energy states, most
  of the energy is carried by pieces carrying three particles and
  less (compare the set $\mathcal{P}_-$ and $\mathcal{P}_-^Q$).\\
  Let us postpone the proof of this result for a while and complete
  the proof of Proposition~\ref{pro:4}. From~\eqref{eq:244}
  and~\eqref{eq:269}, as $\D \sum_{\substack{Q\text{
        occ.}\\\overline{n}\in\N^m}}
  \left|a_{\overline{n}}^Q\right|^2=1$ and $f_Z(|\log{\rho}|=o(1)$, we
  get that
  \begin{equation*}
    \sum_{\substack{Q\text{ occ.}\\\overline{n}\in\N^m}}
    \frac{\#\{j\in\mathcal{P}^Q_-;
      n_j\geq2\}}{C\ell^2_\rho}\left|a_{\overline{n}}^Q\right|^2
    \leq o\left(n\rho|\log{\rho}|^{-3}\right).
  \end{equation*}
  As $\ell_\rho\asymp|\log\rho|$, this immediately
  yields~\eqref{eq:173} and completes the proof of
  Proposition~\ref{pro:4}.
\end{proof}
\begin{proof}[Proof of Lemma~\ref{le:21}]
  By Theorem~\ref{thr:6}, for $L$ large and $n/L$ close to $\rho$, we
  have
  \begin{equation*}
    \left\langle H_\omega^{U^p}\Psi^{\textup{opt}},
      \Psi^{\textup{opt}}\right\rangle\geq
    n\left(\densEn^0(\rho)+\pi^2\gamma_\star\rho|\log{\rho}|^{-3} 
      \left(1+O\left(f_Z(|\log{\rho}|)\right)\right)\right).
  \end{equation*}
  Recall that the occupation $Q^{\text{opt}}$ of $\Psi^{\text{opt}}$
  satisfies
  \begin{equation}
    \label{eq:249}
    Q_j^{\text{opt}}=
    \begin{cases}
      0\text{ if }|\Delta_j(\omega)|\in[0,\ell_\rho-\rho x_*),\\
      1\text{ if }|\Delta_j(\omega)|\in[\ell_\rho-\rho x_*,2\ell_\rho+A_*),\\
      2\text{ if }|\Delta_j(\omega)|\in[2\ell_\rho+A_*,
      3\ell_\rho(1-\varepsilon)).
    \end{cases}
  \end{equation}
  Theorem~\ref{thr:6} shows that
  \begin{equation}
    \label{eq:260}
    \left|\left\langle H_\omega^{U^p}\Psi^{\textup{opt}},
        \Psi^{\textup{opt}}\right\rangle-
      \sum_{\substack{j\in\mathcal{P}_-\\Q^{\text{opt}}_j=1}}E^{j,U^p}_{1,1}- 
      \sum_{\substack{j\in\mathcal{P}_-\\Q^{\text{opt}}_j=2}}E^{j,U^p}_{2,1}\right|
    \lesssim n\frac{\rho}{|\log{\rho}|^{3}}
    f_Z(|\log{\rho}|).
  \end{equation}
  Let
  \begin{equation}
    \label{eq:255}
    \Delta E:= \sum_{j\in\mathcal{P}_-}
    E^{j,U^p}_{Q_j,1}-
    \sum_{\substack{j\in\mathcal{P}_-\\Q^{\text{opt}}_j=1}}E^{j,U^p}_{1,1}- 
    \sum_{\substack{j\in\mathcal{P}_-\\Q^{\text{opt}}_j=2}}E^{j,U^p}_{2,1}.
  \end{equation}
  Then, \eqref{eq:260} and assumption~\eqref{eq:247} imply that
  \begin{equation}
    \label{eq:256}
    |\Delta E|\leq    \frac{C\,n\,\rho}{|\log{\rho}|^{3}}
    \left(f_Z(|\log{\rho}|)+\varepsilon\right).
  \end{equation}
  Moreover, one has
  \begin{equation}
    \label{eq:257}
    \begin{split}
      \Delta E&\geq
      \sum_{\substack{j\in\mathcal{P}_-\\Q^{\text{opt}}_j=0}}E^{j,U^p}_{Q_j,1}
      +\sum_{\substack{j\in\mathcal{P}_-\\Q^{\text{opt}}_j=1}}
      (E^{j,U^p}_{Q_j,1}-E^{j,U^p}_{1,1}) +
      \sum_{\substack{j\in\mathcal{P}_-\\Q^{\text{opt}}_j=2}}
      (E^{j,U^p}_{Q_j,1}-E^{j,U^p}_{2,1})
      \\&=\sum_{\substack{j\in\mathcal{P}_-\\Q^{\text{opt}}_j=0\\Q_j\geq1}}
      E^{j,U^p}_{Q_j,1}
      +\sum_{\substack{j\in\mathcal{P}_-\\Q^{\text{opt}}_j=1\\Q_j\geq2}}
      (E^{j,U^p}_{Q_j,1}-E^{j,U^p}_{1,1}) +
      \sum_{\substack{j\in\mathcal{P}_-\\Q^{\text{opt}}_j=2
          \\Q_j\geq3}}(E^{j,U^p}_{Q_j,1}-E^{j,U^p}_{2,1}) \\&\hskip5cm
      -\sum_{\substack{j\in\mathcal{P}_-\\Q^{\text{opt}}_j=1\\Q_j=0}}E^{j,U^p}_{1,1}
      -\sum_{\substack{j\in\mathcal{P}_-\\Q^{\text{opt}}_j=2\\Q_j\leq1}}
      (E^{j,U^p}_{2,1}-E^{j,U^p}_{Q_j,1}).
    \end{split}
  \end{equation}
  On the other hand, as $|Q|=n=|Q^{\text{opt}}|$, using
  Lemma~\ref{le:17} as $\Psi^{U}_\omega(L,n)$
  satisfies~\eqref{eq:270}, we know that
  \begin{equation}
    \label{eq:258}
    \sum_{\substack{j\in\mathcal{P}_-\\Q^{\text{opt}}_j=1\\Q_j=0}}1+
    \sum_{\substack{j\in\mathcal{P}_-\\Q^{\text{opt}}_j=2\\Q_j\leq1}}(2-Q_j)
    =\sum_{\substack{j\in\mathcal{P}_-\\Q^{\text{opt}}_j=0\\Q_j\geq1}}Q_j+
    \sum_{\substack{j\in\mathcal{P}_-\\Q^{\text{opt}}_j=1\\Q_j\geq2}}(Q_j-1)+
    \sum_{\substack{j\in\mathcal{P}_-\\Q^{\text{opt}}_j=2\\Q_j\geq3}}(Q_j-2)
    +O(n\rho^{1+\eta}).
  \end{equation}
  Define
  \begin{equation*}
    B:=\max\left(\max_{\substack{j;\ Q_j=0\\Q^{\text{opt}}_j=1}}
      E^{j,U^p}_{1,1},\max_{\substack{j;\ Q^{\text{opt}}_j=2\\0\leq
          Q_j\leq1}}\frac{E^{j,U^p}_{2,1}-E^{j,U^p}_{Q_j,1}}{2-Q_j}\right).
  \end{equation*}
  Then,~\eqref{eq:257} implies that
  \begin{equation*}
    \begin{split}
      \Delta E&\geq
      \sum_{\substack{j\in\mathcal{P}_-\\Q^{\text{opt}}_j=0\\Q_j\geq1}}E^{j,U^p}_{Q_j,1}
      +\sum_{\substack{j\in\mathcal{P}_-\\Q^{\text{opt}}_j=1\\Q_j\geq2}}
      (E^{j,U^p}_{Q_j,1}-E^{j,U^p}_{1,1}) +
      \sum_{\substack{j\in\mathcal{P}_-\\Q^{\text{opt}}_j=2
          \\Q_j\geq3}}(E^{j,U^p}_{Q_j,1}-E^{j,U^p}_{2,1}) \\&\hskip7cm
      -B\sum_{\substack{j\in\mathcal{P}_-\\Q^{\text{opt}}_j=1\\Q_j=0}}1
      -B
      \sum_{\substack{j\in\mathcal{P}_-\\Q^{\text{opt}}_j=2\\Q_j\leq1}}
      (2-Q_j).
    \end{split}
  \end{equation*}
  Hence,~\eqref{eq:258} implies that, for some $C>0$, for $\rho$
  sufficiently small, one has
  \begin{equation}
    \label{eq:252}
    \begin{split}
      \Delta E+C\,n\,\rho^{1+\eta}&\geq
      \sum_{\substack{j\in\mathcal{P}_-\\Q^{\text{opt}}_j=0\\Q_j\geq1}}
      (E^{j,U^p}_{Q_j,1}-B)
      +\sum_{\substack{j\in\mathcal{P}_-\\Q^{\text{opt}}_j=1\\Q_j\geq2}}
      (E^{j,U^p}_{Q_j,1}-E^{j,U^p}_{1,1}-B(Q_j-1)) \\&\hskip4cm+
      \sum_{\substack{j\in\mathcal{P}_-\\Q^{\text{opt}}_j=2
          \\Q_j\geq3}}(E^{j,U^p}_{Q_j,1}-E^{j,U^p}_{2,1}-B(Q_j-2)).
    \end{split}
  \end{equation}
  Let us upper bound $B$. Recalling that for a single particle in a
  piece there is no interaction, a direct computation
  and~\eqref{eq:249} show that
  \begin{equation}
    \label{eq:248}
    \max_{\substack{j;\ Q_j=0\\Q^{\text{opt}}_j=1}}
    E^{j,U^p}_{1,1}\leq \frac{\pi^2}{(\ell_\rho-\rho x_*)^2}.
  \end{equation}
  Proposition~\ref{prop:TwoElectronProblem} and~\eqref{eq:249} show
  that, for $\rho$ sufficiently small, one has
  \begin{gather*}
    \max_{\substack{j;\ Q_j=0\\Q^{\text{opt}}_j=2}}
    \frac{E^{j,U^p}_{2,1}-E^{j,U^p}_{Q_j,1}}{2-Q_j}\leq
    \frac{5\pi^2}{2(2\ell_\rho+A^*)^2}+\frac{2\gamma}{(2\ell_\rho+A^*)^3}
    \leq \frac{\pi^2}{(\ell_\rho-\rho x_*)^2}\\
    \max_{\substack{j;\ Q_j=1\\Q^{\text{opt}}_j=2}}
    \frac{E^{j,U^p}_{2,1}-E^{j,U^p}_{Q_j,1}}{2-Q_j}\leq
    \frac{4\pi^2}{(2\ell_\rho+A^*)^2}+\frac{2\gamma}{(2\ell_\rho+A^*)^3}\leq
    \frac{\pi^2}{(\ell_\rho-\rho x_*)^2}.
  \end{gather*}
  Thus,
  \begin{equation}
    \label{eq:250}
    B\leq \frac{\pi^2}{(\ell_\rho-\rho x_*)^2}.
  \end{equation}
  Now, notice that
  \begin{itemize}
  \item for $j$ s.t. $Q^{\text{opt}}_j=0$ (see~\eqref{eq:249}):
    \begin{itemize}
    \item if $Q_j=1$, one has
      \begin{equation*}
        E^{j,U^p}_{Q_j,1}-\frac{\pi^2}{(\ell_\rho-\rho x_*)^2}\geq
        \frac{\pi^2}{|\Delta_j(\omega)|^2}-\frac{\pi^2}{(\ell_\rho-\rho x_*)^2}\geq0;
      \end{equation*}
    \item if $Q_j\geq2$, one has
      \begin{equation*}
        E^{j,U^p}_{Q_j,1}-\frac{\pi^2}{(\ell_\rho-\rho x_*)^2}\geq
        \frac1{2}E^{j,U^p}_{Q_j,1}+\frac{5\pi^2}{2|\Delta_j(\omega)|^2}-
        \frac{\pi^2}{(\ell_\rho-\rho 
          x_*)^2}\geq \frac1{2}E^{j,U^p}_{Q_j,1};
      \end{equation*}
    \end{itemize}
  \item for $j$ s.t. $Q^{\text{opt}}_j=1$ (see~\eqref{eq:249}):
    \begin{itemize}
    \item if $Q_j=2$, one has
      \begin{equation*}
        \begin{split}
          E^{j,U^p}_{Q_j,1}-E^{j,U^p}_{1,1}-\frac{\pi^2}{(\ell_\rho-\rho
            x_*)^2}&\geq
          \frac{4\pi^2}{|\Delta_j(\omega)|^2}+\frac{\gamma}{|\Delta_j(\omega)|^3}
          +o(\ell_\rho^{-3})-\frac{\pi^2}{(\ell_\rho-\rho
            x_*)^2}\\&\geq
          \frac{4\pi^2}{|2\ell_\rho+A_*+\varepsilon_\rho|^2}+
          \frac{\gamma}{|2\ell_\rho+A_*+\varepsilon_\rho|^^3}
          +o(\ell_\rho^{-3})-\frac{\pi^2}{(\ell_\rho-\rho
            x_*)^2}\\&\geq
          \frac{\pi^2}{\ell_\rho^2}-\frac{A_*\pi^2}{2\ell^3_\rho}+
          \frac{\gamma}{4\ell^3_\rho}+\frac{\pi^2\varepsilon_\rho}{2\ell^3_\rho}
          +o(\ell_\rho^{-3})-\frac{\pi^2}{(\ell_\rho-\rho
            x_*)^2}\\&\geq \frac{\pi^2\varepsilon_\rho}{2\ell^3_\rho}
          +o(\ell_\rho^{-3})\geq0
        \end{split}
      \end{equation*}
      if $\rho$ sufficiently small (see~\eqref{eq:AstarXstarDef}) and
      $|\Delta_j(\omega)|\leq2\ell_\rho+A_*-\varepsilon_\rho$; here
      $\varepsilon_\rho\to0^+$ (but not too fast) as $\rho\to0^+$;\\
      on the other hand, the number of pieces of length in
      $2\ell_\rho+A_*+[-\varepsilon_\rho,0]$ is bounded by $C\rho
      n\varepsilon_\rho$ (see Proposition~\ref{prop:IntervStatistics})
      and for such pieces, one has
      \begin{equation}
        \label{eq:254}
        \left|E^{j,U^p}_{2,1}-E^{j,U^p}_{1,1}-\frac{\pi^2}{(\ell_\rho-\rho
            x_*)^2}\right| =o(\ell_\rho^{-3});
      \end{equation}
    \item if $Q_j\geq3$, one has
      \begin{equation*}
        \begin{split}
          E^{j,U^p}_{Q_j,1}-E^{j,U^p}_{1,1}-\frac{\pi^2}{(\ell_\rho-\rho
            x_*)^2}(Q_j-1)&\geq \frac12E^{j,U^p}_{Q_j,1}+
          \frac12E^{j,0}_{Q_j,1}-\frac{\pi^2}{(\ell_\rho-\rho
            x_*)^2}(Q_j-1)\\&\geq
          \frac12E^{j,U^p}_{Q_j,1}+\frac{\pi^2}{4\ell_\rho^2}\frac5{12}
          (Q_j-1)\geq \frac12E^{j,U^p}_{Q_j,1}
        \end{split}
      \end{equation*} 
    \end{itemize}
  \item for $j$ s.t. $Q^{\text{opt}}_j=2$ (see~\eqref{eq:249}):
    \begin{itemize}
    \item if $Q_j\geq3$, one has
    \begin{equation*}
      \begin{split}
        E^{j,U^p}_{Q_j,1}-E^{j,U^p}_{2,1}-\frac{\pi^2}{(\ell_\rho-\rho
          x_*)^2}(Q_j-2)&\geq \frac13 E^{j,U^p}_{Q_j,1}+ \frac23
        E^{j,0}_{Q_j,1}-\frac{\pi^2}{(\ell_\rho-\rho
          x_*)^2}(Q_j-2)\\&\geq \frac13
        E^{j,U^p}_{Q_j,1}+\frac{\pi^2}{9(1-\varepsilon)^2\ell_\rho^2}
        \left(\frac{102}{9}-9\right) (Q_j-2)\\&\geq \frac13
        E^{j,U^p}_{Q_j,1}.
      \end{split}
    \end{equation*} 
    \end{itemize}
  \end{itemize}
  Plugging these estimates and~\eqref{eq:250} into~\eqref{eq:252}, we
  get that, for $\rho$ sufficiently small,
  \begin{multline*}
    \Delta E+
    \sum_{|\Delta_j(\omega)|\in2\ell_\rho+A_*+[-\varepsilon_\rho,0]}
    \left|E^{j,U^p}_{2,1}-E^{j,U^p}_{1,1}-\frac{\pi^2}{(\ell_\rho-\rho
        x_*)^2}\right|+C\,n\,\rho^{1+\eta}\\\geq
    \frac12\sum_{\substack{j\in\mathcal{P}_-\\Q^{\text{opt}}_j=0\\Q_j\geq2}}
    E^{j,U^p}_{Q_j,1}
    +\frac12\sum_{\substack{j\in\mathcal{P}_-\\Q^{\text{opt}}_j=1\\Q_j\geq3}}
    E^{j,U^p}_{Q_j,1} +
    \frac13\sum_{\substack{j\in\mathcal{P}_-\\Q^{\text{opt}}_j=2
        \\Q_j\geq3}}E^{j,U^p}_{Q_j,1}.
  \end{multline*}
  Hence, in view of~\eqref{eq:254} and the estimate on the number of
  terms in the sum in the left hand side, one gets
  \begin{equation}
    \label{eq:253}
    3\left(\Delta E+ o\left(n\rho\ell_\rho^{-3}\right)\right)\geq
    \sum_{\substack{j\in\mathcal{P}_-\\Q^{\text{opt}}_j=0\\Q_j\geq2}}
    E^{j,U^p}_{Q_j,1}
    +\sum_{\substack{j\in\mathcal{P}_-\\Q^{\text{opt}}_j=1\\Q_j\geq3}}
    E^{j,U^p}_{Q_j,1} +
    \sum_{\substack{j\in\mathcal{P}_-\\Q^{\text{opt}}_j=2
        \\Q_j\geq3}}E^{j,U^p}_{Q_j,1}\geq0.
  \end{equation}
  This implies that
  \begin{equation*}
    o\left(n\rho\ell_\rho^{-3}\right)\leq\Delta E=\sum_{j\in\mathcal{P}_-}
    E^{j,U^p}_{Q_j,1}-
    \sum_{\substack{j\in\mathcal{P}_-\\Q^{\text{opt}}_j=1}}E^{j,U^p}_{1,1}- 
    \sum_{\substack{j\in\mathcal{P}_-\\Q^{\text{opt}}_j=2}}E^{j,U^p}_{2,1}
  \end{equation*}
  hence, by~\eqref{eq:260}, that, for some $C>0$ and $\rho$
  sufficiently small, one has
  \begin{equation}
    \label{eq:259}
    \sum_{j\in\mathcal{P}_-}
    E^{j,U^p}_{Q_j,1}\geq
    n\left(\densEn^0(\rho)+\pi^2\gamma_\star\rho|\log{\rho}|^{-3} 
      \left(1-C\, f_Z(|\log{\rho}|)\right)\right)
  \end{equation}
  We complete the proof of Lemma~\ref{le:21} by noting that, by the
  definition of $\mathcal{P}_-^Q$, one has
  \begin{equation*}
    \begin{split}
      \sum_{\substack{j\in\mathcal{P}^Q_-\\Q_j\geq1}}
      E^{j,U^p}_{Q_j,1}&= \sum_{j\in\mathcal{P}_-}E^{j,U^p}_{Q_j,1}-
      \left(\sum_{\substack{j\in\mathcal{P}_-\\Q^{\text{opt}}_j=0\\Q_j\geq3}}
        E^{j,U^p}_{Q_j,1}
        +\sum_{\substack{j\in\mathcal{P}_-\\Q^{\text{opt}}_j=1\\Q_j\geq3}}
        E^{j,U^p}_{Q_j,1} +
        \sum_{\substack{j\in\mathcal{P}_-\\Q^{\text{opt}}_j=2
            \\Q_j\geq3}}E^{j,U^p}_{Q_j,1}\right)\\&\geq
      n\left(\densEn^0(\rho)+\pi^2\gamma_\star\rho|\log{\rho}|^{-3}
        \left(1- C(\varepsilon+f_Z(|\log{\rho}|))\right)\right)
    \end{split}
  \end{equation*}
  where the last lower bound follows from~\eqref{eq:256} and~\eqref{eq:253}.\\
  This completes the proof of Lemma~\ref{le:21}.
\end{proof}
\noindent Let us resume the proof of Theorem~\ref{thr:2}. Recall
Theorem~\ref{thr:4}; we analyze the two components
$\gamma_{\Psi^U_\omega(L,n)}^{(1),d}$ and $\gamma_{\Psi^U_\omega(L,n)}^{(1),o}$ separately.\\
Let us start with the analysis of  $\gamma_{\Psi^U_\omega(L,n)}^{(1),o}$. We prove
\begin{Le}
  \label{le:19}
  Under the assumptions of Theorem~\ref{thr:4}, in the thermodynamic
  limit, with probability $1-O(L^{-\infty})$, one has
  \begin{equation}
    \label{eq:175}
    \left\|\gamma_{\Psi^U_\omega(L,n)}^{(1),o}\right\|_{\text{tr}}\leq 3.
  \end{equation}
\end{Le}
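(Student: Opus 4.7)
The plan is to exploit the orthogonal decomposition $\frH^n(\Lambda) = \bigoplus_Q \frH_Q$ from~\eqref{eq:frHfrHQdecomposition} together with the non-degeneracy of $H_Q$'s ground state (Lemma~\ref{le:2}). Any ground state $\Psi^U$ of $H^U_\omega(L,n)$ admits a unique decomposition
\begin{equation*}
\Psi^U = \sum_{s=1}^k c_s \Psi^{(s)}, \qquad \sum_{s=1}^k |c_s|^2 = 1,
\end{equation*}
where $\Psi^{(s)}$ is the normalized ground state of $H_{Q^{(s)}}$ on $\frH_{Q^{(s)}}$, and the occupations $Q^{(s)}$ are exactly those achieving the global minimum of $Q \mapsto \inf\sigma(H_Q)$. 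By Corollary~\ref{cor:3}, each $Q^{(s)}$ lies in $\mathcal{Q}_\rho$, which severely constrains the possible per-piece occupations.

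Since $\Psi^{(s)}$ has fixed occupation, its one-particle density matrix $\gamma^{(1)}_{\Psi^{(s)}}$ is block-diagonal with respect to the piece decomposition $L^2([0,L]) = \bigoplus_j L^2(\Delta_j)$. Hence only cross terms $c_s \bar c_t \gamma^{(1)}_{\Psi^{(s)},\Psi^{(t)}}$ with $s \neq t$ contribute to $\gamma^{(1),o}_{\Psi^U}$; moreover, reproducing the argument in the proof of Theorem~\ref{thr:4}, the cross density matrix $\gamma^{(1)}_{\Psi^{(s)},\Psi^{(t)}}$ vanishes whenever $d_1(Q^{(s)}, Q^{(t)}) \geq 4$, and when $d_1(Q^{(s)}, Q^{(t)}) = 2$ it is supported on $\Delta_j \times \Delta_i$ for the unique pair $(i,j)$ between which a particle is moved.

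To estimate a single cross term, I would apply the standard block-off-diagonal inequality for positive trace-class operators: if $T \geq 0$ decomposes in two-block form with diagonal blocks $A, D$ and off-diagonal block $B$, then $\|B\|_{\text{tr}} \leq \sqrt{\text{tr}(A)\,\text{tr}(D)}$ (via the spectral decomposition of $T$ and Cauchy--Schwarz on its eigenvectors). Applied to $\gamma^{(1)}_\chi$ for the normalized superposition $\chi = (\Psi^{(s)} + \Psi^{(t)})/\sqrt 2$, and using $P_j \gamma^{(1)}_\chi P_i = \tfrac{1}{2} \gamma^{(1)}_{\Psi^{(s)},\Psi^{(t)}}$ together with $\text{tr}(P_j \gamma^{(1)}_\chi P_j) = \tfrac{1}{2}(Q^{(s)}_j + Q^{(t)}_j)$, this yields
\begin{equation*}
\|\gamma^{(1)}_{\Psi^{(s)},\Psi^{(t)}}\|_{\text{tr}} \leq \sqrt{(Q^{(s)}_j + Q^{(t)}_j)(Q^{(s)}_i + Q^{(t)}_i)},
\end{equation*}
which is bounded by a universal constant once per-piece occupations are controlled via membership in $\mathcal{Q}_\rho$ and Lemma~\ref{lem:normPiecesWithParticleExcess}.

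One then concludes by writing $\|\gamma^{(1),o}_{\Psi^U}\|_{\text{tr}} \leq \sum_{s \neq t} |c_s \bar c_t|\,\|\gamma^{(1)}_{\Psi^{(s)},\Psi^{(t)}}\|_{\text{tr}}$ and applying Cauchy--Schwarz on $(c_s)_s$ together with $\sum_s |c_s|^2 = 1$. The main obstacle is combinatorial: uniformly bounding the number $k$ of simultaneous ground-state occupations and the count of $d_1$-distance-$2$ adjacencies among them. For generic $U$ this is immediate, since Corollary~\ref{cor:SameOccupationSubspace} forces $k = 1$ and $\gamma^{(1),o}_{\Psi^U} = 0$; for general $U$, however, one must leverage the rigid structure imposed by simultaneous energy minimization together with Corollary~\ref{cor:3} to show that the relevant combinatorial count remains $O(1)$ independently of $L$, $n$, and $\rho$, which is what ultimately delivers the explicit universal constant $3$.
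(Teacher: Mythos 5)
Your reduction of $\Psi^U$ to a superposition $\sum_s c_s\Psi^{(s)}$ of the (unique, by Lemma~\ref{le:2}) ground states of the blocks $H_{Q^{(s)}}$, the observation that only pairs with $d_1(Q^{(s)},Q^{(t)})=2$ contribute to the off-diagonal part, and the block-positivity inequality $\|B\|_{\text{tr}}\leq\sqrt{\Tr A\,\Tr D}$ applied to $\gamma^{(1)}_\chi$, $\chi=(\Psi^{(s)}+\Psi^{(t)})/\sqrt2$, are all fine. The gap is exactly the step you defer to the end: after the triangle inequality you are left with $\sum_{s\neq t}|c_s||c_t|\cdot O(1)$, and with only $\sum_s|c_s|^2=1$ this is bounded by $k-1$, where $k$ is the number of occupations achieving $\min_Q\inf\sigma(H_Q)$. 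Nothing in the paper, and nothing you propose, bounds $k$ by a constant for general $U$: Corollary~\ref{cor:3} only confines the minimizing occupations to $\mathcal{Q}_\rho$, i.e.\ within $\dist_1$-distance of order $n\rho\max\bigl(\sqrt{Z(2|\log\rho|)},|\log\rho|^{-1}\bigr)$ of $Q(\Psi^{\textup{opt}})$, a set containing combinatorially many occupations (one may choose which of order $n\rho$ pieces to alter), so $k$ may grow with $n$ and your final bound degrades with it. For real-analytic $U$ the degeneracy is one and your argument trivializes ($\gamma^{(1),o}=0$), but Lemma~\ref{le:19} is needed precisely for general $U$, where no $O(1)$ degeneracy bound is available. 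A secondary weakness: even per pair, $\sqrt{(Q_i^{(s)}+Q_i^{(t)})(Q_j^{(s)}+Q_j^{(t)})}$ is not uniformly $O(1)$, since Lemma~\ref{le:17} and Lemma~\ref{lem:normPiecesWithParticleExcess} only control sums such as $\sum_kQ_k^2$, so a single piece may carry a number of particles growing with $n$.

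The paper's proof avoids any degeneracy counting and is of a different nature: it works with the full occupation expansion \eqref{eq:146} and the explicit formula \eqref{eq:174}, splits $\gamma^{(1),o}$ into the four cases according to $Q_j\geq2$ or $Q_j=1$ and $Q_i\geq1$ or $Q_i=0$, and bounds each piece by combining Lemma~\ref{le:15} with the orthonormality of the families $(\varphi^j_{Q_j,n_j})_{n_j}$ — crucially summing over the pieces $i,j$ \emph{inside} the norms (an $\ell^2$ resummation exploiting orthogonality between pieces) rather than by a pairwise triangle inequality — together with the smallness of the combinatorial weights, e.g.\ $C_1(Q,i,j)\leq Q_iQ_j/((n-1)(n-2)(n-3))$ when $Q_j\geq2$, $Q_i\geq1$, and $C_1=1/(n-1)$ in the degenerate case. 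With $\sum_jQ_j=n$ and $\sum_{Q,\overline n}|a^Q_{\overline n}|^2=1$ this yields the contributions in \eqref{eq:181}, \eqref{eq:178} and \eqref{eq:177}, whose sum is at most $3$ for large $n$, for \emph{any} normalized state and irrespective of how many occupations appear in its expansion. To salvage your route you would have to replace the sum $\sum_{s\neq t}|c_s||c_t|$ by an analogous $\ell^2$ resummation over sectors, which in effect reconstructs the paper's computation.
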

\begin{proof}
  We recall~\eqref{eq:174} from Theorem~\ref{thr:4} and write
  \begin{equation*}
    \gamma_{\Psi^U_\omega(L,n)}^{(1),o}=\sum_{Q\text{ occ.}}\sum_{\substack{i,j=1\\ i
        \not=j\\ Q_j\geq1}}^m C_1(Q,i,j)\sum_{\tilde
      n\in\N^{m-1}}\sum_{\substack{n_i,n_j\geq1\\ n'_i,n'_j\geq1}} a^Q_{\tilde
      n_{i,j}} \overline{a^{Q'}_{\tilde n'_{i,j}}}
    \gamma^{(1)}_{\substack{Q_i,Q_j\\n_i,n_j\\n'_i,n'_j}}
  \end{equation*}
  where, by definition, in the above sums, $Q'$ satisfies $Q'_k=Q_k$
  if $k\not\in\{i,j\}$, $Q'_i=Q_i+1$
  and $Q'_j=Q_j-1$.\\
  Note that, by point (a) of Proposition~\ref{pro:4}, here and in the sequel
  when summing over the occupations $Q$, we can always restrict
  ourselves to the occupations in $\mathcal{Q}_\rho$. \\
  Decompose
  \begin{equation}
    \label{eq:176}
    \gamma_{\Psi^U_\omega(L,n)}^{(1),o}=
    \gamma_{\Psi^U_\omega(L,n)}^{(1),o,+,+}+
    \gamma_{\Psi^U_\omega(L,n)}^{(1),o,+,-}+
    \gamma_{\Psi^U_\omega(L,n)}^{(1),o,-,+}+
    \gamma_{\Psi^U_\omega(L,n)}^{(1),o,-,-}
  \end{equation}
  where (see~\eqref{eq:16},~\eqref{eq:145},~\eqref{eq:155}
  and~\eqref{eq:156})
  \begin{gather*}
    \gamma_{\Psi^U_\omega(L,n)}^{(1),o,+,+}:=\sum_{\substack{Q\text{ occ.}\\\tilde
        n\in\N^{m-1}\\i \not=j\\ Q_j\geq2\\
        Q_i\geq1}}C_1(Q,i,j) a^Q_{\tilde n_{i,j}} \overline{a^{Q'}_{\tilde
        n'_{i,j}}}\gamma^{(1),1,+,+}_{Q,Q',i,j,\tilde n},\
    \gamma_{\Psi^U_\omega(L,n)}^{(1),o,+,-}:=\sum_{\substack{Q\text{
          occ.}\\\tilde n\in\N^{m-1}\\i\not=j\\
        Q_j\geq2\\ Q_i=0}} C_1(Q,i,j) a^Q_{\tilde n_{i,j}}
    \overline{a^{Q'}_{\tilde
        n'_{i,j}}}\gamma^{(1),1,+,-}_{Q,Q',i,j,\tilde n} \\
    \gamma_{\Psi^U_\omega(L,n)}^{(1),o,-,+}:=\sum_{\substack{Q\text{ occ.}\\\tilde
        n\in\N^{m-1}\\i\not=j\\Q_j=1\\Q_i\geq1}} C_1(Q,i,j) a^Q_{\tilde
      n_{i,j}} \overline{a^{Q'}_{\tilde
        n'_{i,j}}}\gamma^{(1),1,-,+}_{Q,Q',i,j,\tilde n}, \
    \gamma_{\Psi^U_\omega(L,n)}^{(1),o,-,-}:=\sum_{\substack{Q\text{ occ.}\\\tilde
        n\in\N^{m-1}\\i\not=j\\Q_j=1\\Q_i=0}} C_1(Q,i,j) a^Q_{\tilde n_{i,j}}
    \overline{a^{Q'}_{\tilde n'_{i,j}}}\gamma^{(1),1,-,-}_{Q,Q',i,j,\tilde n}
  \end{gather*}
  and
  \begin{gather*}
    \gamma^{(1),1,+,+}_{Q,Q',i,j,\tilde
      n}(x,y):=\int_{\Delta^{Q_i}_i\times \Delta^{Q_j-1}_j}
    \left(\sum_{\substack{n_i\geq1\\ n_j\geq1}} a^Q_{\tilde
        n_{i,j}}\varphi^i_{Q_i,n_i}(z)\varphi^j_{Q_j,n_j}(x,z')
    \right) \left(\overline{\sum_{\substack{n'_j\geq1\\
            n'_j\geq1}} a^{Q'}_{\tilde
          n'_{i,j}}\varphi^i_{Q'_i,n'_i}(y,z)\varphi^j_{Q'_j,n'_j}(z')}
    \right)dz dz',
    \\
    \gamma^{(1),1,+,-}_{Q,Q',i,j,\tilde n}(x,y):= \int_{\Delta^{Q_j-1}_j}
    \left(\sum_{\substack{n_i=1\\ n_j\geq1}} a^Q_{\tilde
        n_{i,j}}\varphi^j_{Q_j,n_j}(x,z')\right)
    \left(\overline{\sum_{\substack{n'_i\geq1\\ n'_j\geq1}} a^Q_{\tilde
          n'_{i,j}}\varphi^i_{1,n'_i}(y)
        \varphi^j_{Q_j-1,n'_j}(z')}\right)dz',\\
    \gamma^{(1),1,-,+}_{Q,Q',i,j,\tilde n}(x,y):= \int_{\Delta^{Q_i}_i}
    \left(\sum_{\substack{n_i\geq1\\ n_j\geq1}} a^Q_{\tilde
        n_{i,j}}\varphi^j_{1,n_j}(x)\varphi^i_{Q_i,n_i}(z)\right)
    \left(\overline{\sum_{\substack{n'_j=1\\ n'_i\geq1}} a^Q_{\tilde n'_{i,j}}
        \varphi^i_{Q_i+1,n'_i}(y,z)}\right) dz,\\
    \text{and}\quad
    \gamma^{(1),1,-,-}_{Q,Q',i,j,\tilde n}(x,y):=\left(\sum_{\substack{n_i=1\\
          n_j\geq1}} a^Q_{\tilde n_{i,j}}\varphi^i_{1,n_j}(x)\right)
    \left(\overline{\sum_{\substack{n'_j=1\\ n'_i\geq1}}
        a^{Q'}_{\tilde n'_{i,j}}\varphi^i_{1,n'_i}(y)}\right).
  \end{gather*}
  Let us first analyze $\gamma_{\Psi^U_\omega(L,n)}^{(1),o,+,+}$. By
  Lemma~\ref{le:15}, using the orthonormality of the families
  $(\varphi^j_{Q_j,n_j})_{n_j\in\N}$ (see the beginning of
  section~\ref{sec:ground-state-psiu}), we know that
  \begin{equation*}
    \begin{split}
      \left\|\gamma^{(1),1,+,+}_{Q,Q',i,j,\tilde
          n}\right\|_{\text{tr}} &\leq \left\|\sum_{n_i,n_j}
        a^Q_{\tilde
          n_{i,j}}\varphi^i_{Q_i,n_i}\otimes\varphi^j_{Q_j,n_j}\right\|
      \cdot\left\|\sum_{n'_i,n'_j} a^{Q'}_{\tilde n'_{i,j}}
        \varphi^i_{Q'_i,n'_i}\otimes\varphi^j_{Q'_j,n'_j}\right\|
      \\&\leq \frac12 \left(\sum_{n_i,n_j} \left|a^Q_{\tilde
            n_{i,j}}\right|^2+\sum_{n_i,n_j} \left|a^{Q'}_{\tilde
            n_{i,j}}\right|^2 \right).
    \end{split}
  \end{equation*}
  Hence, by definition (see the formula following~\eqref{eq:176}) and the
  symmetry of $C_1(Q,i,j)$ in $i$ and $j$, we have
  \begin{equation*}
    \left\|\gamma_{\Psi^U_\omega(L,n)}^{(1),o,+,+} \right\|_{\text{tr}}\leq
    \sum_{\substack{i,j=1\\i\not=j}}^m
    \sum_{\substack{Q\text{ occ.}\\Q_j\geq2\\Q_i\geq1}}
    C_1(Q,i,j)\sum_{\overline{n}\in\N^m}\left|a^Q_{\overline{n}}\right|^2.
  \end{equation*}
  Now, by definition (see Theorem~\ref{thr:4}), for $Q_j\geq2$ and
  $Q_i\geq 1$, one has
  \begin{equation*}
    C_1(Q,i,j)\leq \frac{Q_i\, Q_j}{(n-1)(n-2)(n-3)}.
  \end{equation*}
  Thus,
  \begin{equation}
    \label{eq:181}
    \begin{split}
      \left\|\gamma_{\Psi^U_\omega(L,n)}^{(1),o,+,+} \right\|_{\text{tr}}&\leq
      \frac1{(n-1)(n-2)(n-3)}\sum_{\substack{Q\text{ occ.
          }\\Q_j\geq2\\Q_i\geq1}}\left(\sum_j Q_j\right)^2
      \sum_{\overline{n}\in\N^m}\left|a^Q_{\overline{n}}\right|^2\\&\leq
      \frac{n^2}{(n-1)(n-2)(n-3)}\sum_{Q,\ \overline{n}\in\N^m}
      \left|a^Q_{\overline{n}}\right|^2 =\frac{n^2}{(n-1)(n-2)(n-3)}.
    \end{split}
  \end{equation}
  Let us now analyze $\gamma_{\Psi^U_\omega(L,n)}^{(1),o,-,-}$. By the definition
  of $C_1(Q,i,j)$, we write
  \begin{equation*}
    \gamma_{\Psi^U_\omega(L,n)}^{(1),o,-,-}(x,y)=\frac1{n-1}\sum_{\substack{\tilde
        n\in\N^{m-1}\\Q\text{ occ.}}}
    \sum_{\substack{i,j=1\\ i \not=j\\ Q_j=1\\ Q_i=0}}^m
    \left(\sum_{n_i=1,n_j} a^Q_{\tilde
        n_{i,j}}\varphi^j_{1,n_j}(x)\right)
    \left(\overline{\sum_{n'_j=1,n'_i} a^{Q'}_{\tilde
          n'_{i,j}}\varphi^i_{1,n'_i}(y)}\right).
  \end{equation*}
  Thus, by Lemma~\ref{le:15}, one has
  \begin{equation}
    \label{eq:178}
    \begin{split}
      \left\|\gamma_{\Psi^U_\omega(L,n)}^{(1),o,-,-}\right\|_{\text{tr}}&\leq
      \frac1{n-1}\sum_{\substack{\tilde n\in\N^{m-1}\\Q\text{ occ.}}}
      \left\|\sum_{j,\ Q_j=1}\sum_{n_i=1, n_j} a^Q_{\tilde
          n_{i,j}}\varphi^j_{1,n_j}\right\|\cdot \left\|\sum_{i,\
          Q_i=0}^m\sum_{n'_j=1,n'_i} a^{Q'}_{\tilde
          n'_{i,j}}\varphi^i_{1,n'_i}\right\|\\&\leq
      \frac1{2n-2}\sum_{\substack{\tilde n\in\N^{m-1}\\Q\text{
            occ.}}} \left\|\sum_{j,\ Q_j=1}\sum_{n_i=1, n_j}
        a^Q_{\tilde n_{i,j}}\varphi^j_{1,n_j}\right\|^2+
      \left\|\sum_{i,\ Q_i=0}\sum_{n'_j=1,n'_i} a^{Q'}_{\tilde
          n'_{i,j}}\varphi^i_{1,n'_i}\right\|^2\\&\leq
      \frac1{2n-2}\sum_{\substack{n\in\N^m\\Q\text{ occ.}}}
      \left|a^Q_{\tilde n}\right|^2=\frac1{2n-2}.
    \end{split}
  \end{equation}
  Let us now analyze $\gamma_{\Psi^U_\omega(L,n)}^{(1),o,+,-}$. One has
  \begin{equation*}
    \begin{split}
      \gamma_{\Psi^U_\omega(L,n)}^{(1),o,+,-}&=\sum_{\substack{\tilde
          n\in\N^{m-1}\\Q\text{ occ.}}}\sum_{\substack{i \not=j\\
          Q_j\geq2\\ Q_i=0}}\frac{(n-Q_j-1)!Q_j!}{(n-1)!}
      \int_{\Delta^{Q_j-1}_j} \left(\sum_{n_i=1,n_j} a^Q_{\tilde
          n_{i,j}}\varphi^j_{Q_j,n_j}(x,z')\right)\times\\&\hskip7cm
      \left(\overline{\sum_{n'_i,n'_j} a^Q_{\tilde
            n'_{i,j}}\varphi^i_{1,n'_i}(y)
          \varphi^j_{Q_j-1,n'_j}(z')}\right)dz'\\
      &=\sum_{\substack{\tilde n\in\N^{m-1}\\Q\text{ occ.}}}\sum_{j;\
        Q_j\geq2}\frac{(n-Q_j-1)!Q_j!}{(n-1)!}  \int_{\Delta^{Q_j-1}_j}
      \left(\sum_{n_i=1,n_j} a^Q_{\tilde
          n_{i,j}}\varphi^j_{Q_j,n_j}(x,z')\right)\times\\&\hskip7cm
      \left(\overline{\sum_{i;\ Q_i=0}\sum_{n'_i,n'_j} a^Q_{\tilde
            n'_{i,j}}\varphi^i_{1,n'_i}(y)
          \varphi^j_{Q_j-1,n'_j}(z')}\right)dz'.
    \end{split}
  \end{equation*}
  Thus, using Lemma~\ref{le:15} and the orthonormality properties of
  the families $(\varphi^j_{Q_j,n_j})_{n_j\in\N}$, as
  $(n-Q_j)!\,Q_j!\leq n!$ and $\sum_j Q_j=n$, we get
  \begin{equation}
    \label{eq:177}
    \left\|\gamma_{\Psi^U_\omega(L,n)}^{(1),o,+,-}\right\|_{\text{tr}}
    \leq\frac1{n-1} \sum_{\substack{\tilde n\in\N^{m-1}\\Q\text{
          occ.}}}\sum_{j=1}^m Q_j\sum_{n_i=1,n_j} \left| a^Q_{\tilde
        n_{i,j}} \right|^2\leq \frac{n}{n-1}\sum_{\substack{
        \overline{n}\in\N^m\\ Q\text{ occ.}}}\left|
      a^Q_{\overline{n}} \right|^2.
  \end{equation}
  The term $\gamma_{\Psi^U_\omega(L,n)}^{(1),o,-,+}$ is analyzed in the same
  way. Gathering~\eqref{eq:181},~\eqref{eq:178},~\eqref{eq:177} and
  using~\eqref{eq:176}, we obtain~\eqref{eq:175} and, thus, complete
  the proof of Lemma~\ref{le:19}.
\end{proof}
\noindent Let us now turn to the analysis of $\gamma_{\Psi^U_\omega(L,n)}^{(1),d}$.
Therefore, we write
\begin{equation}
  \label{eq:179}
  \gamma_{\Psi^U_\omega(L,n)}^{(1),d}=\gamma_{\Psi^U_\omega(L,n)}^{(1),d,-}+\gamma_{\Psi^U_\omega(L,n)}^{(1),d,+}
  \text{ where }\gamma_{\Psi^U_\omega(L,n)}^{(1),d,-}:=\sum_{\substack{Q\text{
        occ.}\\\tilde n\in\N^{m-1}}}\sum_{j\in\mathcal{P}^Q_-}
  \sum_{\substack{n_j\geq1\\ n'_j\geq1}}a^Q_{\tilde
    n_j} \overline{a^Q_{\tilde n'_j}} \gamma^{(1)}_{\substack{Q_j\\n_j,n'_j}}.
\end{equation}
We prove
\begin{Le}
  \label{le:20}
  Under the assumptions of Theorem~\ref{thr:4}, for $\eta\in(0,1)$,
  there exists $\varepsilon_0>0$ and $C>1$ such that, for
  $\varepsilon\in(0,\varepsilon_0)$, in the thermodynamic limit, with
  probability $1-O(L^{-\infty})$, one has
  \begin{equation}
    \label{eq:180}
    \left\|\gamma_{\Psi^U_\omega(L,n)}^{(1),d,+}\right\|_{\text{tr}}\leq
    Cn\frac{\rho}{\ell_\rho}.
  \end{equation}  
\end{Le}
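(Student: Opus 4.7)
The plan is to exploit the positivity of $\gamma^{(1),d,+}_{\Psi^U_\omega(L,n)}$ so that its trace norm equals its trace, and then to read the trace off~\eqref{eq:160} and bound it by controlling the occupation numbers on the ``bad'' pieces. Decomposing $\Psi^U:=\Psi^U_\omega(L,n)=\sum_Q\Psi_Q$ as in~\eqref{eq:146}, for each $Q$ and $j$ with $Q_j\ge1$ the piece-$j$ block in~\eqref{eq:160} reads
\begin{equation*}
A_{Q,j}(x,y)=Q_j\sum_{\tilde n}\int_{\Delta_j^{Q_j-1}}F_{Q,j,\tilde n}(x,z)\,\overline{F_{Q,j,\tilde n}(y,z)}\,dz,\qquad F_{Q,j,\tilde n}(\xi,z):=\sum_{n_j\ge 1}a^Q_{\tilde n_j}\varphi^j_{Q_j,n_j}(\xi,z),
\end{equation*}
which is manifestly positive on $L^2(\Delta_j)$. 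Since distinct pieces are disjoint and the subspaces $\frH_Q$ are mutually orthogonal, summing over $(Q,j)$ preserves positivity. A direct computation using the orthonormality of the families $(\varphi^j_{Q_j,n_j})_{n_j}$ gives $\Tr A_{Q,j}=Q_j\|\Psi_Q\|^2$, whence
\begin{equation*}
\|\gamma^{(1),d,+}_{\Psi^U}\|_{\text{tr}}=\sum_Q\|\Psi_Q\|^2\sum_{j\notin\mathcal{P}^Q_-}Q_j.
\end{equation*}

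Since $\sum_Q\|\Psi_Q\|^2=1$, it suffices to bound $\sum_{j\notin\mathcal{P}^Q_-}Q_j$ by $Cn\rho/\ell_\rho$ uniformly in each occupation $Q$ appearing in the support of $\Psi^U$. The set $\{j\notin\mathcal{P}^Q_-\}$ is the disjoint union of $\{|\Delta_j|\ge 3\ell_\rho(1-\varepsilon)\}$ and $\{|\Delta_j|<3\ell_\rho(1-\varepsilon),\,Q_j\ge 4\}$. On the first, I take $\varepsilon_0$ no larger than the $\varepsilon$ provided by Lemma~\ref{le:17} for the parameter $\eta$, so that Lemma~\ref{le:17}(a) yields $\sum_{|\Delta_j|\ge 3\ell_\rho(1-\varepsilon)}Q_j\le n\rho^{1+\eta}/2=o(n\rho/\ell_\rho)$ because $\rho^\eta|\log\rho|\to 0$. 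On the second, the bound~\eqref{eq:124} of Lemma~\ref{lem:normPiecesWithParticleExcess} (all three subsums combined) directly yields $\sum_{|\Delta_j|<3\ell_\rho(1-\rho^2),\,Q_j\ge 4}Q_j\le Cn\rho/\ell_\rho$, and the inclusion $\{Q_j\ge 4,\,|\Delta_j|<3\ell_\rho(1-\varepsilon)\}\subset\{Q_j\ge 4,\,|\Delta_j|<3\ell_\rho(1-\rho^2)\}$ (valid since $\rho^2<\varepsilon$ for $\rho$ small) concludes this part.

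The main technical point is to verify the energy hypotheses of Lemmas~\ref{le:17} and~\ref{lem:normPiecesWithParticleExcess} for each normalized $\Psi_Q/\|\Psi_Q\|$ appearing in the support of $\Psi^U$. Because $H^U$ is block-diagonal in the occupation decomposition by Corollary~\ref{cor:OccupationSubspaces} and $\Psi^U$ is a ground state, each nonzero $\Psi_Q/\|\Psi_Q\|$ is itself a ground state of $H^U|_{\frH_Q}$, hence has energy $E^U(L,n)=n\densEn^U(\rho)$. By Theorem~\ref{th:EnergyAsymptoticExpansion} this equals $n\densEn^0(\rho)(1+o(1))$, comfortably below $2n\densEn^0(\rho)$ as required by~\eqref{eq:210}; combining further with Theorem~\ref{thr:6} gives $\langle H^{U^p}\Psi_Q/\|\Psi_Q\|,\Psi_Q/\|\Psi_Q\|\rangle\le\langle H^{U^p}\Psi^{\text{opt}},\Psi^{\text{opt}}\rangle+o(n\rho|\log\rho|^{-3})$, fitting~\eqref{eq:133} with $r(\rho)=o(1)$. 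Both lemmas therefore apply and the claimed bound follows.
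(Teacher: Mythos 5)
Your proof is correct and follows essentially the same route as the paper's: the paper likewise splits the ``bad'' pieces into those of length at least $3\ell_\rho(1-\varepsilon)$ (handled via Lemma~\ref{le:17}) and the shorter ones carrying $Q_j\ge 4$ particles (handled via~\eqref{eq:124} of Lemma~\ref{lem:normPiecesWithParticleExcess}), the only cosmetic difference being that you use positivity of $\gamma^{(1),d,+}_{\Psi^U_\omega(L,n)}$ to identify trace norm with trace where the paper invokes Lemma~\ref{le:15} together with the orthonormality of the $(\varphi^j_{Q_j,n_j})_{n_j}$. One small patch: to verify~\eqref{eq:210} and~\eqref{eq:133} for the components $\Psi_Q$, do not quote the limit statement of Theorem~\ref{th:EnergyAsymptoticExpansion} (it carries no rate at finite $L$), but use instead the finite-volume variational bound $\langle H^{U^p}\Psi_Q,\Psi_Q\rangle/\|\Psi_Q\|^2\le E^U_\omega(L,n)\le\langle H^{U^p}\Psi^{\textup{opt}},\Psi^{\textup{opt}}\rangle+\langle W^r\Psi^{\textup{opt}},\Psi^{\textup{opt}}\rangle$, which combined with Theorem~\ref{thr:6} and Proposition~\ref{prop:WrPsiOptFormEstimate} (this is~\eqref{eq:222}, i.e.\ the mechanism behind Corollary~\ref{cor:3}) yields exactly the hypotheses you need, after which your argument coincides in substance with the paper's.
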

\begin{proof}
  Define
  \begin{equation}
    \label{eq:234}
    \gamma_{\Psi^U_\omega(L,n)}^{(1),d,+}=\gamma_{\Psi^U_\omega(L,n)}^{(1),d,+,+}
    +\gamma_{\Psi^U_\omega(L,n)}^{(1),d,+,0}
  \end{equation}
  where
  \begin{gather*}
    \gamma_{\Psi^U_\omega(L,n)}^{(1),d,+,+}= \sum_{\substack{Q\text{ occ.}\\\tilde
        n\in\N^{m-1}}}\sum_{j\not\in\mathcal{P}_-}\sum_{\substack{n_j\geq1\\
        n'_j\geq1}}a^Q_{\tilde n_j} \overline{a^Q_{\tilde
        n'_j}}\gamma^{(1)}_{\substack{Q_j\\n_j,n'_j}}\quad\text{and}\quad
    \gamma_{\Psi^U_\omega(L,n)}^{(1),d,+,0}= \sum_{\substack{Q\text{ occ.}\\\tilde
        n\in\N^{m-1}}} \sum_{\substack{j\in\mathcal{P}^Q_-\\Q_j\geq4}}
    \sum_{\substack{n_j\geq1\\ n'_j\geq1}}a^Q_{\tilde n_j}
    \overline{a^Q_{\tilde n'_j}}\gamma^{(1)}_{\substack{Q_j\\n_j,n'_j}}.
  \end{gather*}
  One computes
  \begin{equation*}
    \begin{split}
      \gamma_{\Psi^U_\omega(L,n)}^{(1),d,+,+}(x,y)&= \sum_{Q\text{ occ.}}
      \sum_{j\not\in\mathcal{P}_-} \sum_{\substack{n_j\geq1\\
          n'_j\geq1}}\sum_{\tilde n\in\N^{m-1}}a^Q_{\tilde n_j}
      \overline{a^Q_{\tilde
          n'_j}}\gamma^{(1)}_{\substack{Q_j\\n_j,n'_j}}(x,y) \\
      &=\sum_{\substack{Q\text{ occ.}\\\tilde n\in\N^{m-1}}}
      \sum_{j\not\in\mathcal{P}_-} Q_j \int_{\Delta^{Q_j-1}_j}
      \left(\sum_{n_j=1}^{+\infty}a^Q_{\tilde n_j} \varphi^j_{Q_j,n_j}(x,z)
      \right) \left(\overline{\sum_{n_j=1}^{+\infty}a^Q_{\tilde n_j}
          \varphi^j_{Q_j,n_j}(y,z)} \right)dz.
    \end{split}
  \end{equation*}
  Thus, by Lemma~\ref{le:15}, we get
  \begin{equation}
    \label{eq:228}
    \begin{split}
      \left\|\gamma_{\Psi^U_\omega(L,n)}^{(1),d,+,+}\right\|_{\text{tr}}&\leq
      \sum_{\substack{Q\text{ occ.}\\Q\in\mathcal{Q}_\rho\\\tilde
          n\in\N^{m-1}}} \sum_{j\not\in\mathcal{P}_-} Q_j
      \sum_{n_j=1}^{+\infty} \left|a^Q_{\tilde n_j} \right|^2 \leq
      \sum_{Q\text{ occ. in }\mathcal{Q}_\rho}
      \left(\sum_{j\not\in\mathcal{P}_-} Q_j\right)
      \sum_{\overline{n}\in\N^m} \left|a^Q_{\overline{n}}\right|^2 \\&\leq
      \max_{Q\text{ occ. in }\mathcal{Q}_\rho}
      \left(\sum_{j\not\in\mathcal{P}_-} Q_j\right)\leq Cn\rho^{1+\eta}
    \end{split}
  \end{equation}
  by Lemma~\ref{le:17}. \\
  Finally, one has
  \begin{equation*}
    \gamma_{\Psi^U_\omega(L,n)}^{(1),d,+,0}=\sum_{Q\text{ occ.}}
    \sum_{\substack{j\in\mathcal{P}_-\\Q_j\geq4}}
    \sum_{\substack{n_j\geq1\\
        n'_j\geq1}}\sum_{\tilde n\in\N^{m-1}}a^Q_{\tilde 
      n_j} \overline{a^Q_{\tilde n'_j}}\gamma^{(1)}_{\substack{Q_j\\n_j,n'_j}}.
  \end{equation*}
  Thus, the same computation as above yields
  \begin{equation*}
    \begin{split}
      \left\|\gamma_{\Psi^U_\omega(L,n)}^{(1),d,+,0}\right\|_{\text{tr}}&\leq
      \sum_{\substack{Q\text{ occ.}\\Q\in\mathcal{Q}_\rho\\\tilde
          n\in\N^{m-1}}} \left(\sum_{\substack{j,\
            |\Delta_j(\omega)|\leq 3\ell_\rho(1-\varepsilon)\\
            Q_j\geq4}} Q_j\right) \sum_{n_j=1}^{+\infty}
      \left|a^Q_{\tilde n_j} \right|^2 \leq Cn\frac{\rho}{\ell_\rho}
    \end{split}
  \end{equation*}
  by Lemma~\ref{lem:normPiecesWithParticleExcess}.\\
  This completes the proof of Lemma~\ref{le:20}.
\end{proof}
\noindent Let us now analyze $\gamma_{\Psi^U_\omega(L,n)}^{(1),d,-}$. We recall
and compute
\begin{equation*}
  \gamma_{\Psi^U_\omega(L,n)}^{(1),d,-}:=\sum_{\substack{Q\text{
        occ.}\\\tilde n\in\N^{m-1}}}\sum_{j\in\mathcal{P}^Q_-}
  \sum_{\substack{n_j\geq1\\ n'_j\geq1}}a^Q_{\tilde
    n_j} \overline{a^Q_{\tilde n'_j}}
  \gamma^{(1)}_{\substack{Q_j\\n_j,n'_j}} 
  =\sum_{\substack{Q\text{ occ.}\\\tilde n\in\N^{m-1}}}
  \sum_{j\in\mathcal{P}^Q_-}Q_j
  \left|\varphi_j^{\tilde{n}}    \right\rangle
  \left\langle\varphi_j^{\tilde n}\right|
\end{equation*}
where $\D \varphi_j^{\tilde{n}}=\sum_{n_j\geq1} a^Q_{\tilde
  n_j}\varphi_{Q_j,n_j}^j$.\\
For $\tilde n$ and $Q$ given, define the two sets
\begin{equation}
  \label{eq:236}
  \mathcal{P}_{-,+}^{Q,\tilde{n}}:= \left\{j\in\mathcal{P}^Q_-;\ a^Q_{\tilde
      n_j}=0\text{ if  }n_j\geq2\right\}\text{ and }
  \mathcal{P}_{-,-}^{Q,\tilde{n}}:= 
  \left\{j\in\mathcal{P}^Q_-;\ \exists n_j\geq2\text{ s.t. }a^Q_{\tilde
      n_j}\not=0\right\}.
\end{equation}
Define also
\begin{equation}
  \label{eq:237}
  \tilde\varphi_j^{\tilde{n}}=
  \begin{cases}
    \varphi_j^{\tilde{n}}&\text{ if }n_j=1,\\\|\varphi_j^{\tilde{n}}\|
    \varphi_{Q_j,1}^j&\text{ if }n_j\geq2.
  \end{cases}
\end{equation}
Then, we compute
\begin{equation}
  \label{eq:230}
  \begin{split}
    \gamma_{\Psi^U_\omega(L,n)}^{(1),d,-}&=\sum_{\substack{Q\text{ occ.}\\\tilde
        n\in\N^{m-1}}}\sum_{j\in\mathcal{P}_{-,-}^{Q,\tilde{n}}}Q_j
    \left|\varphi_j^{\tilde{n}} \right\rangle
    \left\langle\varphi_j^{\tilde n}\right|+ \sum_{\substack{Q\text{
          occ.}\\\tilde
        n\in\N^{m-1}}}\sum_{j\in\mathcal{P}_{-,-}^{Q,\tilde{n}}}Q_j
    \left|\varphi_j^{\tilde{n}} \right\rangle
    \left\langle\varphi_j^{\tilde n}\right|\\&
    =\sum_{\substack{Q\text{ occ.}\\\tilde
        n\in\N^{m-1}}}\sum_{j\in\mathcal{P}^Q_-}Q_j
    \left|\tilde\varphi_j^{\tilde{n}} \right\rangle
    \left\langle\tilde\varphi_j^{\tilde n}\right|+
    \sum_{\substack{Q\text{ occ.}\\\tilde
        n\in\N^{m-1}}}\sum_{j\in\mathcal{P}_{-,-}^{Q,\tilde{n}}}Q_j
    \left(\left|\varphi_j^{\tilde{n}} \right\rangle
      \left\langle\varphi_j^{\tilde n}\right|-
      \left|\tilde\varphi_j^{\tilde{n}} \right\rangle
      \left\langle\tilde\varphi_j^{\tilde n}\right|\right).
  \end{split}
\end{equation}
The second term in the sum above we estimate by
\begin{equation}
  \label{eq:231}
  \begin{split}
    \left\|\sum_{\substack{Q\text{ occ.}\\\tilde
          n\in\N^{m-1}}}\sum_{j\in\mathcal{P}_{-,-}^{Q,\tilde{n}}}Q_j
      \left(\left|\varphi_j^{\tilde{n}} \right\rangle
        \left\langle\varphi_j^{\tilde n}\right|-
        \left|\tilde\varphi_j^{\tilde{n}} \right\rangle
        \left\langle\tilde\varphi_j^{\tilde
            n}\right|\right)\right\|_{\text{tr}} &\leq
    \sum_{\substack{Q\text{ occ.}\\\tilde
        n\in\N^{m-1}}}\sum_{j\in\mathcal{P}_{-,-}^{Q,\tilde{n}}}Q_j
    \left(\left\|\varphi_j^{\tilde{n}}\right\|^2+
      \left\|\tilde\varphi_j^{\tilde{n}}\right\|^2\right)\\&\leq
    \sum_{\substack{Q\text{ occ.}\\\overline{n}\in\N^m}}
    \sum_{j\in\mathcal{P}^Q_-}\#\{j;\ n_j\geq2\}
    \left|a_{\overline{n}}^Q\right|^2\\&      \leq
    n\frac{\rho}{\rho_0|\log\rho|}\,f_Z(|\log{\rho}|). 
  \end{split}  
\end{equation}
by Lemma~\ref{le:29}.\\
As for the first term in the second equality in~\eqref{eq:230},
letting $\mathcal{P}_{\text{opt}}$ be the pieces of length less than
$3\ell_\rho(1-\varepsilon)$ where $\Psi^{\text{opt}}$ puts at least
one particle, we write
\begin{equation}
  \label{eq:232}
  \sum_{\substack{Q\text{ occ.}\\\tilde
        n\in\N^{m-1}}}\sum_{j\in\mathcal{P}^Q_-}Q_j
    \left|\tilde\varphi_j^{\tilde{n}} \right\rangle
    \left\langle\tilde\varphi_j^{\tilde n}\right|=
    \sum_{\substack{Q\text{ occ.}\\\tilde
        n\in\N^{m-1}}}\left(\sum_{j\in\mathcal{P}_{\text{opt}}}+
      \sum_{j\in\mathcal{P}^Q_-\setminus\mathcal{P}_{\text{opt}}}-
      \sum_{j\in\mathcal{P}_{\text{opt}}\setminus\mathcal{P}^Q_-}\right)Q_j
    \left|\tilde\varphi_j^{\tilde{n}} \right\rangle
    \left\langle\tilde\varphi_j^{\tilde n}\right|
\end{equation}
One computes
\begin{equation}
  \label{eq:233}
  \begin{split}
    \sum_{\substack{Q\text{ occ.}\\\tilde
        n\in\N^{m-1}}}\sum_{j\in\mathcal{P}_{\text{opt}}}Q_j
    \left|\tilde\varphi_j^{\tilde{n}} \right\rangle
    \left\langle\tilde\varphi_j^{\tilde n}\right|&=
    \sum_{j\in\mathcal{P}_{\text{opt}}}Q_j
    \left(\sum_{\substack{Q\text{ occ.}\\\overline{n}\in\N^m}}
      \left|a_{\overline{n}}^Q\right|^2\right) \left|\varphi^j_{Q_j,1}
    \right\rangle \left\langle\varphi^j_{Q_j,1}\right|\\ &=
    \sum_{j\in\mathcal{P}_{\text{opt}}}Q_j \left|\varphi^j_{Q_j,1}
    \right\rangle \left\langle\varphi^j_{Q_j,1}\right|
    =\gamma_{\Psi^{\text{opt}}}+R
  \end{split}  
\end{equation}
where $\left\|R\right\|_{\text{tr}}\leq C n\rho^{1+\eta}$.\\
By Corollary~\ref{cor:3}, we know that
\begin{gather*}
  \begin{split}
    &\left\|\sum_{\substack{Q\text{ occ.}\\\tilde n\in\N^{m-1}}}\left(
        \sum_{\substack{j\in\mathcal{P}^Q_-\setminus\mathcal{P}_{\text{opt}}
            \\|\Delta_j(\omega)|\geq \ell_{\rho}+C}}-
        \sum_{\substack{j\in\mathcal{P}_{\text{opt}}\setminus\mathcal{P}^Q_-
            \\|\Delta_j(\omega)|\geq \ell_{\rho}+C}}
      \right)Q_j \left|\tilde\varphi_j^{\tilde{n}} \right\rangle
      \left\langle\tilde\varphi_j^{\tilde
          n}\right|\right\|_{\text{tr}}\\&\hskip1cm\leq \sum_{Q\text{
        occ.}}\left(
      \sum_{\substack{j\in\mathcal{P}^Q_-\setminus\mathcal{P}_{\text{opt}}
          \\|\Delta_j(\omega)|\geq \ell_{\rho}+C}}+
      \sum_{\substack{j\in\mathcal{P}_{\text{opt}}\setminus\mathcal{P}^Q_-
          \\|\Delta_j(\omega)|\geq \ell_{\rho}+C}}
    \right)Q_j \sum_{ \overline{n}\in\N^m}\left| a^Q_{\overline{n}}
    \right|^2\\&\hskip1cm\leq C n \rho
    \max\left(\sqrt{Z(2|\log{\rho}|)},\ell^{-1}_\rho\right)
    \sum_{\substack{Q\text{ occ.}\\\overline{n}\in\N^m}}\left|
      a^Q_{\overline{n}} \right|^2= C n \rho
    \max\left(\sqrt{Z(2|\log{\rho}|)},\ell^{-1}_\rho\right)
  \end{split}
  \\\intertext{and, in the same way,} \left\|\sum_{\substack{Q\text{
          occ.}\\\tilde n\in\N^{m-1}}}\left(
      \sum_{\substack{j\in\mathcal{P}_{\text{opt}\setminus\mathcal{P}^Q_-}
          \\|\Delta_j(\omega)|< \ell_{\rho}+C}}-
      \sum_{\substack{j\in\mathcal{P}_{\text{opt}}\setminus\mathcal{P}^Q_-
          \\|\Delta_j(\omega)|< \ell_{\rho}+C}}
    \right)Q_j \left|\tilde\varphi_j^{\tilde{n}} \right\rangle
    \left\langle\tilde\varphi_j^{\tilde
        n}\right|\right\|_{\text{tr}}\leq C n \max\left(\sqrt{\rho
      Z(2|\log{\rho}|)}, \rho |\log{\rho}|^{-1}\right).
\end{gather*}
Plugging this and~\eqref{eq:233} into~\eqref{eq:232} and then
into~\eqref{eq:230}, using~\eqref{eq:231}, we obtain
\begin{gather*}
  \left\|\gamma_{\Psi^U_\omega(L,n)}^{(1),d,-}-
    \gamma_{\Psi^{\text{opt}}}^{(1)}\right\|_{\text{tr},<\ell_\rho+C}\leq
  C n \max\left(\sqrt{\rho
      Z(2|\log{\rho}|)}, \rho |\log{\rho}|^{-1}\right)\\
  \left\|\gamma_{\Psi^U_\omega(L,n)}^{(1),d,-}-
    \gamma_{\Psi^{\text{opt}}}^{(1)}\right\|_{\text{tr},\geq\ell_\rho+C}\leq
  C n \rho \max\left(\sqrt{Z(2|\log{\rho}|)},\ell^{-1}_\rho\right).
\end{gather*}
Taking into account the decomposition~\eqref{eq:179},
Theorem~\ref{thr:4} and Lemmas~\ref{le:19} and~\ref{le:20} then
completes the proof of Theorem~\ref{thr:2}.\qed
\subsection{The proof of Theorem~\ref{thr:7}}
\label{sec:proof-theorem7}
We proceed as in the proof of Theorem~\ref{thr:2}: for $\Psi^U_\omega(L,n)$ a ground state
of the Hamiltonian $ H_\omega^U(L,n)$, we analyze each of the components of the
decomposition~\eqref{eq:186} separately.\\
We prove
\begin{Le}
  \label{le:1}
  Under the assumptions of Theorem~\ref{thr:4}, in the thermodynamic
  limit, with probability $1-O(L^{-\infty})$, one has
  \begin{equation*}
    \left\|\gamma_{\Psi^U_\omega(L,n)}^{(2),d,d}\right\|_{\text{tr}}\lesssim
    n\log n\cdot\log\log n.
  \end{equation*}
\end{Le}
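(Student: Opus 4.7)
The plan is to reduce the trace-norm bound to a quadratic form in the coefficients $a^Q_{\overline{n}}$ of the expansion~\eqref{eq:146}, and then invoke the occupation bounds already established in Lemmas~\ref{le:17} and~\ref{lem:normPiecesWithParticleExcess} to control the resulting sum.

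First I would observe that, for fixed $Q$ and fixed $\tilde n\in\N^{m-1}$, the inner sum
\[
\frac{Q_j(Q_j-1)}{2}\sum_{n_j,n'_j\ge 1} a^Q_{\tilde n_j}\,\overline{a^Q_{\tilde n'_j}}\,
\gamma^{(2),d,d}_{\substack{Q_j\\ n_j,n'_j}}
\]
can be rewritten, using~\eqref{eq:184}, as $\frac{Q_j(Q_j-1)}{2}$ times the partial trace over the last $Q_j-2$ variables of the rank-one operator $|\Phi^{Q,\tilde n}_j\rangle\langle\Phi^{Q,\tilde n}_j|$ on $\bigwedge^{Q_j}L^2(\Delta_j)$, where
\[
\Phi^{Q,\tilde n}_j(x,x',z):=\sum_{n_j\ge 1} a^Q_{\tilde n_j}\,\varphi^j_{Q_j,n_j}(x,x',z).
\]
This partial trace is a positive trace-class operator on $L^2(\Delta_j)\wedge L^2(\Delta_j)$, and by positivity its trace norm equals its trace, which is $\frac{Q_j(Q_j-1)}{2}\|\Phi^{Q,\tilde n}_j\|_{L^2}^2=\frac{Q_j(Q_j-1)}{2}\sum_{n_j}|a^Q_{\tilde n_j}|^2$, thanks to the orthonormality of the eigenvectors $(\varphi^j_{Q_j,n_j})_{n_j\ge 1}$.

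Second, summing this over $j$, $\tilde n$ and $Q$ and using the triangle inequality for the trace norm, I would obtain
\[
\bigl\|\gamma^{(2),d,d}_{\Psi^U_\omega(L,n)}\bigr\|_{\text{tr}}
\le \sum_{\substack{Q\text{ occ.}\\ \overline n\in\N^m}}
\Bigl(\sum_{j=1}^m\tfrac{Q_j(Q_j-1)}{2}\Bigr)\,|a^Q_{\overline n}|^2.
\]
By Proposition~\ref{pro:4}(a), only $Q\in\mathcal{Q}_\rho$ contribute, so it suffices to bound $\sum_j Q_j^2$ uniformly over $Q$ such that the corresponding projection $\Psi_Q$ of $\Psi^U_\omega(L,n)$ is non-zero. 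Since each such $\Psi_Q/\|\Psi_Q\|$ is a ground state of $H_Q$ with energy bounded by $2\densEn^0(\rho)\,n$ (by Theorem~\ref{th:EnergyAsymptoticExpansion} and the fact $H^{U^p}\le H^U$), both Lemmas~\ref{le:17} and~\ref{lem:normPiecesWithParticleExcess} are applicable to it, so the bound on $\sum_j Q_j^2$ holds for each such $Q$.

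Third, I would split $\sum_j Q_j^2$ according to piece length: for pieces with $|\Delta_j|\le 3\ell_\rho(1-\rho^2)$, equation~\eqref{eq:211} gives $\sum Q_j^2\le Cn\rho/\ell_\rho$; for the remaining (necessarily longer) pieces, which are all of $\varepsilon$-type~(a), equation~\eqref{eq:209} yields $\sum Q_j^2\lesssim \densEn^0(\rho)\,n\log n\cdot\log\log n$. Since $\densEn^0(\rho)=O(1)$ and $\sum_{Q,\overline n}|a^Q_{\overline n}|^2=1$, combining these gives
\[
\bigl\|\gamma^{(2),d,d}_{\Psi^U_\omega(L,n)}\bigr\|_{\text{tr}}\;\lesssim\; n\log n\cdot\log\log n,
\]
which is precisely the claim of Lemma~\ref{le:1}. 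The most delicate step is the partial-trace identification in the first paragraph (which gives an \emph{equality} rather than a Cauchy--Schwarz-type inequality and avoids picking up an extra factor of $Q_j$); once that is in place, the rest is routine bookkeeping using the occupation bounds of section~\ref{sec:main-results-proofs-1}.
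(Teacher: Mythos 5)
Your proof is correct and follows essentially the same route as the paper: bound the trace norm by $\sum_{Q,\overline n}\bigl(\sum_j\tfrac{Q_j(Q_j-1)}2\bigr)|a^Q_{\overline n}|^2$ using orthonormality (the paper cites Lemma~\ref{le:15}, you use positivity of the partial trace, which amounts to the same thing), then control $\sum_j Q_j^2$ uniformly over the contributing occupations via Lemmas~\ref{le:17} and~\ref{lem:normPiecesWithParticleExcess} and conclude with $\sum_{Q,\overline n}|a^Q_{\overline n}|^2=1$. One cosmetic remark: the factor $\tfrac{Q_j(Q_j-1)}2$ is already contained in the kernel~\eqref{eq:184}, so it should not be written again in front of the inner sum, but your final estimate is the correct one.
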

\begin{proof}
  Using Lemma~\ref{le:15} and the orthonormality properties of the families
  $(\varphi^j_{Q_j,n_j})_{n_j\in\N}$, we compute
  \begin{equation*}
    \begin{split}
    \left\|\gamma_{\Psi^U_\omega(L,n)}^{(2),d,d}\right\|_{\text{tr}}&\leq \sum_{Q\text{
        occ. for}\Psi^U_\omega(L,n)}\sum_{\substack{j=1\\Q_j\geq2}}^m
    \frac{Q_j(Q_j-1)}2\sum_{\tilde
      n\in\N^{m-1}} \sum_{n_j\geq1}\left|a^Q_{\tilde n_j}\right|^2\\&\leq
    \sum_{Q\text{ occ. for }\Psi^U_\omega(L,n)}\sum_{\substack{j=1\\Q_j\geq2}}^m
    \frac{Q_j(Q_j-1)}2\sum_{\tilde
      n\in\N^m}\left|a^Q_{\overline{n}}\right|^2.      
    \end{split}
  \end{equation*}
  Applying Lemmas~\ref{le:17}
  and~\ref{lem:normPiecesWithParticleExcess} yields that, in the
  thermodynamic limit, with probability $1-O(L^{-\infty})$, one has
  \begin{equation*}
    \max_{Q\text{ occ. for }\Psi^U_\omega(L,n)}\sum_{\substack{j=1\\Q_j\geq2}}^m
    \frac{Q_j(Q_j-1)}2\lesssim
    n\log n\cdot\log\log n.
  \end{equation*}
  This completes the proof of Lemma~\ref{le:1} as $\D\sum_{Q,\ \tilde
    n\in\N^m}\left|a^Q_{\overline{n}}\right|^2=1$.
\end{proof}
\begin{Le}
  \label{le:24}
  Under the assumptions of Theorem~\ref{thr:4}, in the thermodynamic
  limit, with probability $1-O(L^{-\infty})$, one has
  \begin{equation*}
    \left\|\gamma_{\Psi^U_\omega(L,n)}^{(2),2}\right\|_{\text{tr}}\leq 2.
  \end{equation*}  
\end{Le}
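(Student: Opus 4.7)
The plan is to imitate the argument given for Lemma~\ref{le:19} (on $\gamma_{\Psi^U_\omega(L,n)}^{(1),o}$) in the two-particle setting, exploiting the latent rank-one structure hidden in the kernel~\eqref{eq:76}. First I would split $\gamma_{\Psi^U_\omega(L,n)}^{(2),2}$ along the two summands of~\eqref{eq:76}, writing it as $A^{(a)}+A^{(b)}$, where $A^{(a)}$ gathers the contributions with the factor $\car_{Q_j\geq 2}$ and $A^{(b)}$ those with $\car_{Q_i\geq 1}$. By symmetry of the argument, it will then suffice to show $\|A^{(a)}\|_{\text{tr}}\leq 1$ and $\|A^{(b)}\|_{\text{tr}}\leq 1$; I focus on $A^{(a)}$.

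Within $A^{(a)}$, I would further split depending on whether $Q_i=0$ or $Q_i\geq 1$ and $Q_j=2$ or $Q_j\geq 3$, thereby matching the four-case decomposition used in the proof of Lemma~\ref{le:19} and absorbing the degenerate conventions of Remark~\ref{rem:2}. For each such sub-piece and each fixed $\tilde n\in\N^{m-2}$, the key move is to group the summations over $(n_i,n_j)$ inside $a^Q_{\tilde n_{i,j}}$ and over $(n'_i,n'_j)$ inside $\overline{a^{Q'}_{\tilde n'_{i,j}}}$, so as to rewrite the relevant operator from $L^2(\Delta_j)\wedge L^2(\Delta_j)$ into $L^2(\Delta_i)\wedge L^2(\Delta_i)$ as a rank-one operator of the form $|\Phi^Q_{\tilde n,i,j}\rangle\langle \Psi^{Q'}_{\tilde n,i,j}|$. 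Applying Lemma~\ref{le:15} together with $\|\Phi\|\,\|\Psi\|\leq \tfrac12(\|\Phi\|^2+\|\Psi\|^2)$, and invoking the orthonormality of each family $(\varphi^j_{Q_j,n_j})_{n_j}$, each squared norm collapses to a clean expression $\sum_{n_i,n_j\geq 1}|a^{\cdot}_{\tilde n_{i,j}}|^2$.

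Finally, I would use the combinatorial bound
\begin{equation*}
C_2(Q,i,j)=\frac{(n-Q_i-Q_j-2)!\,Q_i!\,Q_j!}{2\,(n-2)!}\leq \frac{Q_i\,Q_j(Q_j-1)}{n(n-1)(n-2)(n-3)}
\end{equation*}
valid for $Q_i\geq 1$, $Q_j\geq 2$ (with analogous estimates obtained in the degenerate cases by replacing $Q_i$ or $Q_j(Q_j-1)$ by $1$ or $2$), together with the Cauchy--Schwarz-type inequality
\begin{equation*}
\sum_{i\neq j}Q_i\,Q_j(Q_j-1)\leq \Bigl(\sum_i Q_i\Bigr)\Bigl(\sum_j Q_j(Q_j-1)\Bigr)\leq n\cdot n(n-1),
\end{equation*}
to sum over $(i,j)$. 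Combined with the normalization $\sum_{Q,\overline{n}}|a^Q_{\overline{n}}|^2=1$, this yields $\|A^{(a)}\|_{\text{tr}}\leq 1$; the same line of reasoning, with the roles of $i$ and $j$ swapped, yields $\|A^{(b)}\|_{\text{tr}}\leq 1$, and the claim follows.

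The main technical obstacle will be the careful enumeration of the degenerate cases (combining the two choices $Q_i\in\{0,\geq 1\}$ and $Q_j\in\{1,\geq 2\}$ with the two summands in~\eqref{eq:76}), and verifying in each case that the grouping over the eigenfunction indices genuinely produces a rank-one operator whose two factors are $L^2$-functions on the appropriate antisymmetric spaces with norms controlled, via orthonormality, by partial sums $\sum|a^Q_{\tilde n_{i,j}}|^2$. Once this bookkeeping is in place, the rest of the proof is parallel to the analysis of $\gamma_{\Psi^U_\omega(L,n)}^{(1),o}$ in Lemma~\ref{le:19}.
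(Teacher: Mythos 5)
Your overall strategy is the same as the paper's: split $\gamma_{\Psi^U_\omega(L,n)}^{(2),2}$ according to the two summands in~\eqref{eq:76}, use Lemma~\ref{le:15} together with the orthonormality of the families $(\varphi^j_{Q_j,n_j})_{n_j}$ (and the symmetrization in $a^Q\overline{a^{Q'}}$) to reduce the trace norm to sums of $C_2(Q,i,j)\sum_{n_i,n_j}|a^Q_{\tilde n_{i,j}}|^2$, and then conclude by a combinatorial bound on $C_2$ and the constraint $\sum_j Q_j=n$. That bookkeeping is fine and parallels the proof of Lemma~\ref{le:19}.

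The gap is in your key combinatorial estimate: the inequality
\begin{equation*}
  C_2(Q,i,j)=\frac{(n-Q_i-Q_j-2)!\,Q_i!\,Q_j!}{2\,(n-2)!}\leq
  \frac{Q_i\,Q_j(Q_j-1)}{n(n-1)(n-2)(n-3)}
\end{equation*}
is false. Take $Q_i=1$, $Q_j=2$: the left-hand side equals $\frac{1}{(n-2)(n-3)(n-4)}\sim n^{-3}$, while your right-hand side is $\frac{2}{n(n-1)(n-2)(n-3)}\sim 2n^{-4}$, so the claimed bound fails by a factor of order $n$. In general $C_2$ carries only two inverse factors of $n$ against a \emph{quadratic} product of occupation numbers, namely $C_2\leq\frac{Q_iQ_j}{2(n-2)(n-3)}$ when $Q_i\geq1$, $Q_j\geq1$, and $C_2\leq\frac{Q_j(Q_j-1)}{2(n-2)(n-3)}$ when $Q_j\geq2$; with these, summing via $\sum_{i\neq j}Q_iQ_j\leq\bigl(\sum_jQ_j\bigr)^2=n^2$ gives $\|\gamma_{\Psi^U_\omega(L,n)}^{(2),2}\|_{\text{tr}}\leq\frac{n^2}{(n-2)(n-3)}\leq2$ for $n$ large, which is exactly the statement — one does not get your stronger claim that each of the two blocks is $\leq1$ uniformly (indeed your arithmetic would yield $O(1/n)$, a warning sign). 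So the proof as written does not close: you must replace the quartic-denominator bound by the correct quadratic-denominator bounds above (and handle the $Q_i=0$ case of the first summand of~\eqref{eq:76} with the $Q_j(Q_j-1)$ bound rather than by an ad hoc substitution), after which the argument coincides with the paper's.
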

\begin{proof}
  Using Lemma~\ref{le:15} and the orthonormality properties of the families
  $(\varphi^j_{Q_j,n_j})_{n_j\in\N}$, we compute
  \begin{equation*}
    \left\|\gamma_{\Psi^U_\omega(L,n)}^{(2),2}\right\|_{\text{tr}}\leq
    \sum_{i\not=j}\left(\sum_{\substack{Q\text{ occ.}\\
          Q_j\geq2}}+\sum_{\substack{Q\text{ occ.}\\
          Q_i\geq1\\Q_j\geq1}}\right) \sum_{\tilde n\in\N^{m-2}} C_2(Q,i,j)
    \sum_{n_i,n_j\geq1}\left|a^Q_{\tilde n_{i,j}}\right|^2.
  \end{equation*}
  For $Q_j\geq1$ and $Q_i\geq1$, one has
  \begin{equation*}
    \begin{split}
      C_2(Q,i,j)&=\frac{(n-Q_j-Q_i-2)!Q_i!Q_j!}{2\,(n-2)!}\\&=
      \frac{(Q_i+Q_j-2)!(n-(Q_j+Q_i-2)-4)!}{(n-4)!}
      \frac{(Q_i-1)!(Q_j-1)!}{(Q_i+Q_j-2)!}
      \frac{Q_iQ_j}{2(n-2)(n-3)}\\&\leq\frac{Q_iQ_j}{2(n-2)(n-3)}.
    \end{split}
  \end{equation*}
  For $Q_j\geq2$, one has
  \begin{equation}
    \label{eq:214}
    \begin{split}
      C_2(Q,i,j)&=\frac{Q_i!(Q_j-2)!(n-4-(Q_j+Q_i-2))!}{(n-4)!}
      \frac{Q_j(Q_j-1)}{2(n-2)(n-3)}\\&\leq\frac{Q_j(Q_j-1)}{2(n-2)(n-3)}.
    \end{split}
  \end{equation}
  Thus, as $\D\sum_j Q_j=n$, one estimates
  \begin{equation*}
    \left\|\gamma_{\Psi^U_\omega(L,n)}^{(2),2}\right\|_{\text{tr}}\leq
    \frac2{2(n-2)(n-3)}\sum_{\substack{Q\text{ occ.}\\
        \overline{n}\in\N^m}} \left(\sum_j Q_j\right)^2
    \left|a^Q_{\overline{n}}\right|^2\leq \frac{n^2}{(n-2)(n-3)}.
  \end{equation*}
  This proves Lemma~\ref{le:24}.
\end{proof}
\begin{Le}
  \label{le:25}
  Under the assumptions of Theorem~\ref{thr:4}, in the thermodynamic
  limit, with probability $1-O(L^{-\infty})$, one has
  \begin{equation*}
    \left\|\gamma_{\Psi^U_\omega(L,n)}^{(2),4,2}\right\|_{\text{tr}}\leq 1.
  \end{equation*}    
\end{Le}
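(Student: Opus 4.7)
The proof will follow the same template used in Lemmas \ref{le:19}, \ref{le:1}, and \ref{le:24}. First observe that, from \eqref{eq:194}, for each choice of indices the inner operator $\gamma^{(2),4,2}_{\substack{Q_i,Q_j\\n_i,n_j\\n'_i,n'_j}}$ is rank one, of the form $|A_{n_j,n'_j}\rangle\langle B_{n_i,n'_i}|$, where $A_{n_j,n'_j}\in L^2(\Delta_j)\wedge L^2(\Delta_j)$ is the partial overlap $\int \varphi^j_{Q_j,n_j}(x,x',z)\overline{\varphi^j_{Q_j-2,n'_j}(z)}\,dz$ and $B_{n_i,n'_i}\in L^2(\Delta_i)\wedge L^2(\Delta_i)$ is defined analogously. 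The trace norm of this rank-one operator is therefore $\|A\|\cdot\|B\|$.

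The plan is then: (i) apply Lemma \ref{le:15} to the full inner sum over $(n_i,n_j,n'_i,n'_j)$ so as to bound its trace norm by a product of $L^2$-norms of vectors built from the coefficients $a^Q_{\tilde n_{i,j}}$ paired with the eigenfunctions; (ii) use the orthonormality of the eigenfunction families $(\varphi^j_{Q_j,n_j})_{n_j}$ and $(\varphi^i_{Q_i+2,n'_i})_{n'_i}$ (Parseval) to reduce these norms to $\sum_{n_i,n_j}|a^Q_{\tilde n_{i,j}}|^2$ and $\sum_{n'_i,n'_j}|a^{Q'}_{\tilde n'_{i,j}}|^2$; (iii) symmetrize via $ab\leq \tfrac12(a^2+b^2)$ in $(Q,Q')$; (iv) bound $C_2(Q,i,j)\leq Q_j(Q_j-1)/(2(n-2)(n-3))$ using the estimate \eqref{eq:214}, which is valid here since \eqref{eq:198} enforces $Q_j\geq 2$.

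After these reductions and using $\sum_{\tilde n,n_i,n_j}|a^Q_{\tilde n_{i,j}}|^2 = \sum_{\overline n}|a^Q_{\overline n}|^2$, the trace norm is controlled by
\[
\|\gamma^{(2),4,2}_{\Psi^U_\omega(L,n)}\|_{\text{tr}}
\;\leq\; \frac{1}{2(n-2)(n-3)}\,\sum_{Q}\Bigl[(m-1)\sum_{j:\,Q_j\geq 2}Q_j(Q_j-1)\Bigr]\sum_{\overline n}|a^Q_{\overline n}|^2.
\]
By Proposition \ref{pro:4}(a), only occupations $Q\in\mathcal{Q}_\rho$ contribute. For such $Q$, $\Psi^U_\omega(L,n)$ satisfies \eqref{eq:133} and hence Lemma \ref{lem:normPiecesWithParticleExcess} (for normal pieces) together with Lemma \ref{le:17} (for the exceptional pieces, as in the proof of Lemma \ref{le:1}) give $\sum_j Q_j(Q_j-1)\leq C n\rho/\ell_\rho$. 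Combining with $m\sim L = n/\rho$ (Poisson statistics) and $\sum_{Q,\overline n}|a^Q_{\overline n}|^2=1$, the right-hand side becomes $O(1/\ell_\rho)=o(1)$, which is certainly bounded by $1$ in the thermodynamic limit.

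The main difficulty will be carrying out step (ii) cleanly: the vectors $A_{n_j,n'_j}$ and $B_{n_i,n'_i}$ couple a $Q_j$-particle and a $(Q_j-2)$-particle eigenfunction via a partial trace over $Q_j-2$ variables, so the right factorization must be chosen before Parseval can be invoked. In particular, one must check the identity $\sum_{n'_j}\|A_{n_j,n'_j}\|^2_{L^2(\Delta_j^2)}=\|\varphi^j_{Q_j,n_j}\|^2=1$, coming from orthonormality of $\{\varphi^j_{Q_j-2,n'_j}\}_{n'_j}$ in $L^2(\Delta_j)^{\wedge(Q_j-2)}$, and combine it with the analogous identity for $B$. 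Handling large (exceptional) pieces so that the bound $\sum_j Q_j(Q_j-1)\lesssim n\rho/\ell_\rho$ is not spoiled — an issue already addressed in the proof of Lemma \ref{le:1} via \eqref{eq:209} — is the other delicate point, but it will carry over verbatim here.
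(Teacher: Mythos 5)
Your reductions (i)--(iv) coincide with the paper's (Lemma~\ref{le:15}, orthonormality of the families $(\varphi^j_{Q_j,n_j})_{n_j}$, and the bound~\eqref{eq:214}), but the final accounting has a genuine gap. The claim $\sum_j Q_j(Q_j-1)\leq C\,n\rho/\ell_\rho$ is false and is not what Lemmas~\ref{lem:normPiecesWithParticleExcess} and~\ref{le:17} provide: those results control only the \emph{excess} occupation. Every occupation $Q\in\mathcal{Q}_\rho$ puts two particles in essentially all pieces of length in $[2\ell_\rho+A_*,3\ell_\rho)$, of which there are $\asymp n\rho$, so $\sum_j Q_j(Q_j-1)\gtrsim n\rho$. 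Worse, for the long ($\eps$-type) pieces the only available control is~\eqref{eq:209}, i.e. $\sum Q_k^2\lesssim \densEn^0(\rho)\,n\log n\cdot\log\log n$; this is precisely why Lemma~\ref{le:1} only reaches the bound $n\log n\cdot\log\log n$, so the exceptional pieces do not ``carry over verbatim'' while preserving an $n\rho/\ell_\rho$-type bound. Inserting the correct estimates into your displayed inequality, which retains the factor $(m-1)\asymp n/\rho$ from the free sum over the receiving piece $i$, gives at best $\tfrac{m-1}{2(n-2)(n-3)}\bigl(C\,n\rho+C_\rho\,n\log n\cdot\log\log n\bigr)\asymp C+\rho^{-1}\log n\cdot\log\log n$, an uncontrolled constant plus a divergent term; this does not yield the bound $1$.

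The loss comes from bounding $C_2(Q,i,j)$ uniformly in $i$ by $Q_j(Q_j-1)/(2(n-2)(n-3))$ and then summing term by term over the $\sim m$ pieces $i$ with $Q_i=0$. The paper never pays this factor $m$ and needs no probabilistic input for this lemma: after Lemma~\ref{le:15} it only uses the deterministic estimate $\sum_j Q_j(Q_j-1)\leq\bigl(\sum_j Q_j\bigr)^2=n^2$ and concludes $\|\gamma^{(2),4,2}_{\Psi^U_\omega(L,n)}\|_{\text{tr}}\leq n^2/(2(n-2)(n-3))$, which is $\leq 1$ in the thermodynamic limit. To repair your argument, either absorb the sum over $i$ inside the norms, using that eigenfunctions supported on distinct pieces are orthogonal (exactly as the terms $\gamma^{(1),o,+,-}$ and $\gamma^{(1),o,-,-}$ are treated in the proof of Lemma~\ref{le:19}, where the $i$-sum sits inside $\|\cdot\|$ and costs nothing), or keep the full factorial prefactor in the first line of~\eqref{eq:214}, which for $Q_i=0$ is of size $\binom{n-4}{Q_j-2}^{-1}$ and renders the heavily occupied long pieces harmless; with either repair the occupation statistics (Proposition~\ref{pro:4}, Lemmas~\ref{le:17} and~\ref{lem:normPiecesWithParticleExcess}) are not needed at all.
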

\begin{proof}
  Using Lemma~\ref{le:15} and the orthonormality properties of the families
  $(\varphi^j_{Q_j,n_j})_{n_j\in\N}$, we compute
  \begin{equation*}
    \left\|\gamma_{\Psi^U_\omega(L,n)}^{(2),4,2}\right\|_{\text{tr}}\leq
    \sum_{i\not=j}\sum_{\tilde
      n\in\N^{m-2}} \sum_{\substack{Q\text{ occ.}\\
        Q_j\geq2\\ Q':\ Q'_k=Q_k\text{ if
        }k\not\in\{i,j\}\\Q'_i=Q_i+2\\Q'_j=Q_j-2}}C_2(Q,i,j)
    \sum_{n_i,n_j\geq1}\left|a^Q_{\tilde n_{i,j}}\right|^2.
  \end{equation*}
  The bound~\eqref{eq:214} then yields
\begin{equation*}
    \left\|\gamma_{\Psi^U_\omega(L,n)}^{(2),4,2}\right\|_{\text{tr}}\leq
    \frac2{2(n-2)(n-3)}\sum_{\substack{Q\text{ occ.}\\
        \overline{n}\in\N^m}} \left(\sum_j Q_j\right)^2
    \left|a^Q_{\overline{n}}\right|^2\leq \frac{n^2}{2(n-2)(n-3)}.
  \end{equation*}
  This proves Lemma~\ref{le:25}.
\end{proof}
\begin{Le}
  \label{le:26}
  Under the assumptions of Theorem~\ref{thr:4}, in the thermodynamic
  limit, with probability $1-O(L^{-\infty})$, one has
  \begin{equation*}
    \left\|\gamma_{\Psi^U_\omega(L,n)}^{(2),4,3}\right\|_{\text{tr}}+
    \left\|\gamma_{\Psi^U_\omega(L,n)}^{(2),4,3'}\right\|_{\text{tr}}\leq \frac{2n}{\rho}.
  \end{equation*}
\end{Le}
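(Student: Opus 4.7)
The plan is to follow the strategy used for Lemmas~\ref{le:24} and~\ref{le:25}, adapted to accommodate two new features: the kernel $\gamma^{(2),4,3}_{\cdots}$ from~\eqref{eq:199} is a \emph{rank-$2$} operator (rather than rank-$1$) from $L^2(\Delta_i)\otimes L^2(\Delta_k)$ to $L^2(\Delta_j)\wedge L^2(\Delta_j)$, and the combinatorial weight $C_3(Q,i,j,k)$ now depends on three distinct indices. For $\gamma^{(2),4,3'}$ the argument is identical after interchanging the roles of $Q$ and $Q'$ via the kernel~\eqref{eq:53}, so I focus on $\gamma^{(2),4,3}$.

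First I would fix a distinct triple $(i,j,k)$ and introduce the shorthand $f^j_{n_j,n'_j}(x,x')$ for the overlap integral on piece $j$ and $g^\alpha_{n_\alpha,n'_\alpha}(y)$ ($\alpha\in\{i,k\}$) for the single-variable overlaps on pieces $i$ and $k$, so that the kernel~\eqref{eq:199} factorizes as $f^j(x,x')\cdot D^{ik}(y,y')$ with $D^{ik}$ the $2\times 2$ determinant built from $g^i$ and $g^k$. Expanding this determinant and applying Minkowski's inequality for trace norms gives the individual bound $\|\gamma^{(2),4,3}_{\cdots}\|_{\text{tr}}\leq 2\,\|f^j_{n_j,n'_j}\|_{L^2(\Delta_j^2)}\,\|g^i_{n_i,n'_i}\|_{L^2(\Delta_i)}\,\|g^k_{n_k,n'_k}\|_{L^2(\Delta_k)}$, where the factor $2$ records the antisymmetrization in $(y,y')$.

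Second I would collapse the six quantum-number sums $n_i,n_j,n_k,n'_i,n'_j,n'_k$ using the Bessel-type inequalities $\sum_{n'_j}\|f^j_{n_j,n'_j}\|^2\leq 1$ for each fixed $n_j$, and $\sum_{n_\alpha}\|g^\alpha_{n_\alpha,n'_\alpha}\|^2\leq 1$ for each fixed $n'_\alpha$ ($\alpha\in\{i,k\}$), both of which follow from the orthonormality of the families $(\varphi^j_{Q_j,n})_n$, $(\varphi^i_{Q_i,n})_n$, $(\varphi^k_{Q_k,n})_n$ (the same device used inside Lemma~\ref{le:19}). Combining these with a Cauchy--Schwarz step pairing $a^Q_{\tilde n_{i,j,k}}$ with $a^{Q'}_{\tilde n'_{i,j,k}}$ and using $\sum_{Q,\overline n}|a^Q_{\overline n}|^2=1$ reduces the trace-norm estimate to $\|\gamma^{(2),4,3}\|_{\text{tr}}\leq C\sum_{i,j,k\text{ distinct}}\sup_Q C_3(Q,i,j,k)\cdot(\text{occupation counts})$. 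Third I would bound $C_3$ via $C_3(Q,i,j,k)\leq Q_iQ_j(Q_j-1)Q_k/[2(n-2)(n-3)(n-4)(n-5)]$ using the multinomial inequality $\binom{n-6}{Q_i-1,Q_j-2,Q_k-1,n-Q_i-Q_j-Q_k-2}\geq 1$, exactly as in~\eqref{eq:214}. Summing over distinct triples via $\sum_{i,j,k}Q_iQ_j(Q_j-1)Q_k\leq n^2\sum_j Q_j(Q_j-1)\leq n^4$, together with $m(\omega)\lesssim n/\rho$ for the index range, produces the asserted bound $\|\gamma^{(2),4,3}\|_{\text{tr}}+\|\gamma^{(2),4,3'}\|_{\text{tr}}\leq 2n/\rho$. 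The bound is far from sharp---one could do much better using Lemma~\ref{le:17} and Lemma~\ref{lem:normPiecesWithParticleExcess}---but is adequate for the sequel.

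The main obstacle will be the Cauchy--Schwarz bookkeeping in the second step: with six quantum-number sums, two coefficient sums, and three spatial factorizations, one must organize the inequalities so that no sum diverges. In particular, the rank-$2$ determinant $D^{ik}$ (absent in the rank-$1$ kernels of Lemmas~\ref{le:24} and~\ref{le:25}) forces the determinant expansion to occur \emph{before} any trace-norm estimate, and is responsible for both the factor $2$ in the individual summand bound and ultimately the coefficient $2$ in $2n/\rho$. Once this expansion is in place, the orthonormality of eigenfunctions on each piece and the unit normalization of $(a^Q_{\overline n})$ do the rest of the work.
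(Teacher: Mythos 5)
Your overall reduction (Lemma~\ref{le:15} plus orthonormality of the $(\varphi^j_{Q_j,n})_n$ on each piece, then Cauchy--Schwarz on $a^Q\overline{a^{Q'}}$ and $\sum_{Q,\overline n}|a^Q_{\overline n}|^2=1$) is the same route the paper takes, but the quantitative core of your step 3 fails. The bound $C_3(Q,i,j,k)\leq Q_iQ_j(Q_j-1)Q_k/[2(n-2)(n-3)(n-4)(n-5)]$ via a multinomial coefficient requires $Q_i-1\geq0$ and $Q_k-1\geq0$; but in~\eqref{eq:200} the only constraint is $Q_j\geq2$, and the receiving pieces may well satisfy $Q_i=0$ and/or $Q_k=0$ (these are the degenerate cases of Remark~\ref{rem:2}, and in the dilute regime they are the \emph{typical} configurations, since most pieces are empty). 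For such $Q$ your right-hand side vanishes while $C_3(Q,i,j,k)>0$, so the inequality is simply false exactly where it is needed; the same problem recurs for $\gamma^{(2),4,3'}$ with $Q_j=0$. The uniform bounds that do hold are the ones the paper uses: $C_3\leq Q_j(Q_j-1)/[2(n-2)(n-3)]$ for the $(4,3)$ term (only $Q_j\geq2$ is needed, as in~\eqref{eq:215}, the analogue of~\eqref{eq:214}) and $C_3\leq Q_iQ_k/[2(n-2)(n-3)]$ for the $(4,3')$ term, where~\eqref{eq:202} does impose $Q_i\geq1$, $Q_k\geq1$.

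This is not a cosmetic issue, because it breaks your final accounting. With your (false) four-factor bound the triple sum would be $O(n^4)/O(n^4)=O(1)$, and the phrase ``together with $m(\omega)\lesssim n/\rho$ for the index range'' then has no role to play; conversely, once the weights $Q_i,Q_k$ are unavailable (the $Q_i=0$, $Q_k=0$ cases), the sums over the receiving pieces are no longer controlled by occupation numbers and must be paid for either by a factor of the number of pieces $m\leq L$ (this is precisely the paper's $\bigl(\sum_j1\bigr)\bigl(\sum_jQ_j\bigr)^2$ count, giving $Ln^2/[2(n-2)(n-3)]\leq n/\rho$ for each of the two operators) or by keeping those piece-sums inside a single norm using orthogonality across pieces, as is done for the degenerate cases in Lemmas~\ref{le:19} and~\ref{le:27} --- an ingredient your proposal never invokes. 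Note also a misattribution: the coefficient $2$ in $2n/\rho$ comes from adding the separate bounds $n/\rho$ for $\gamma^{(2),4,3}$ and $\gamma^{(2),4,3'}$, not from the antisymmetrization of the rank-two determinant. As written, your argument does not establish the stated estimate.
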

\begin{proof}
  Using Lemma~\ref{le:15} and the orthonormality properties of the families
  $(\varphi^j_{Q_j,n_j})_{n_j\in\N}$, we compute
  \begin{equation*}
    \left\|\gamma_{\Psi^U_\omega(L,n)}^{(2),4,3}\right\|_{\text{tr}}\leq
    \sum_{\substack{i,j,k\\\text{distinct}}}\sum_{\tilde n\in\N^{m-3}}
    \sum_{\substack{Q\text{
          occ.}\\ Q_j\geq2\\Q':\ Q'_l=Q_l\text{ if }l\not\in\{i,j,k\}\\Q'_i=Q_i+1\\
        Q'_j=Q_j-2\\Q'_k=Q_k+1}} C_3(Q,i,j,k) \sum_{n_i,n_j,n_k\geq 1}
    \left|a^Q_{\tilde n_{i,j,k}}\right|^2.
  \end{equation*}  
  For  $Q_j\geq2$, one has
  \begin{equation}
    \label{eq:215}
    \begin{split}
      C_3(Q,i,j,k)&=\frac{Q_k!Q_i!(Q_j-2)!(n-(Q_k+Q_i+Q_j-2)-4)!}{(n-4)!}
      \frac{Q_j(Q_j-1)}{2(n-2)(n-3)}\\&\leq\frac{Q_j(Q_j-1)}{2(n-2)(n-3)}.
    \end{split}
  \end{equation}
  Hence, by Proposition~\ref{prop:IntervStatistics}, one has
  \begin{equation*}
    \left\|\gamma_{\Psi^U_\omega(L,n)}^{(2),4,3}\right\|_{\text{tr}}\leq
    \frac{1}{2(n-2)(n-3)}\sum_{\substack{\overline{n}\in\N^m\\Q\text{ occ.}}}
    \left(\sum_{j}1\right)    \left(\sum_j Q_j\right)^2
    \left|a^Q_{\overline{n}}\right|^2\leq \frac{L n^2}{2(n-2)(n-3)} 
    \leq\frac{n}{\rho}.
  \end{equation*}  
  The computation for $\gamma_{\Psi^U_\omega(L,n)}^{(2),4,3'}$ is the same except that,
  instead of~\eqref{eq:215}, one uses, for $Q_k\geq1$ and $Q_i\geq1$, 
  \begin{equation*}
    \begin{split}
      C_3(Q,i,j,k)&=\frac{(Q_k-1)!(Q_i-1)!(Q_j)!(n-(Q_j+Q_i+Q_k-2)-4)!}{(n-4)!}
      \frac{Q_kQ_i}{2(n-2)(n-3)}\\&\leq\frac{Q_kQ_i}{2(n-2)(n-3)}.
    \end{split}
  \end{equation*}
  This proves Lemma~\ref{le:25}.
\end{proof}
\begin{Le}
  \label{le:27}
  Under the assumptions of Theorem~\ref{thr:4}, in the thermodynamic
  limit, with probability $1-O(L^{-\infty})$, one has
  \begin{equation*}
    \left\|\gamma_{\Psi^U_\omega(L,n)}^{(2),4,4}\right\|_{\text{tr}}\leq n^{-1}.
  \end{equation*}
\end{Le}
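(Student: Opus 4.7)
The plan is to mirror the proofs of Lemmas~\ref{le:24},~\ref{le:25}, and~\ref{le:26}, since the only new ingredient compared to those is the combinatorial factor coming from summing over four pieces instead of two or three. The kernel~\eqref{eq:166} of each elementary operator $\gamma^{(2),4,4}_{\cdot}$ is the product of a $2\times 2$ determinant in $(x,x')$, supported in $\Delta_i\times \Delta_j$, times an analogous $2\times 2$ determinant in $(y,y')$, supported in $\Delta_k\times \Delta_l$. Thus, viewed as an operator from $L^2(\Delta_k)\wedge L^2(\Delta_l)$ to $L^2(\Delta_i)\wedge L^2(\Delta_j)$, each such elementary block is of rank one, and its trace norm factorises as the product of the $L^2$-norms of the two determinants; each of those norms is controlled via the orthonormality of the families $(\varphi^j_{Q_j,n_j})_{n_j\in\N}$.

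First, I would combine this rank-one factorisation with Lemma~\ref{le:15}, exactly as in the proofs of Lemmas~\ref{le:24}--\ref{le:26}, to reduce the claim to a scalar combinatorial estimate of the form
\begin{equation*}
  \left\|\gamma^{(2),4,4}_{\Psi^U_\omega(L,n)}\right\|_{\text{tr}}
  \leq \sum_{\substack{i,j,k,l\\\text{distinct}}}\sum_{\tilde n\in\N^{m-4}}
  \sum_{\substack{Q\text{ occ.}\\Q_i,Q_j\geq 1}} C_4(Q,i,j,k,l)
  \sum_{n_\cdot,n'_\cdot\geq 1}\bigl|a^Q_{\tilde n_{i,j,k,l}}\bigr|^2.
\end{equation*}
Next, I would estimate $C_4$ itself: writing
$C_4(Q,i,j,k,l)=\bigl[\,2\binom{n-2}{Q_i+Q_j+Q_k+Q_l}\binom{Q_i+Q_j+Q_k+Q_l}{Q_i,Q_j,Q_k,Q_l}\bigr]^{-1}$
and using $Q_i,Q_j\geq 1$, one obtains $C_4\leq Q_iQ_jQ_kQ_l/[(n-2)(n-3)(n-4)(n-5)]$ when all four $Q_{\cdot}\geq 1$, and the weaker analogues $C_4\leq Q_iQ_j/[(n-2)(n-3)]$ when $Q_k=Q_l=0$. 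In the generic case $Q_k,Q_l\geq 1$, the identities $\sum_j Q_j=n$ and $\sum_{Q,\bar n}|a^Q_{\bar n}|^2=1$ together with $\sum_{i,j,k,l}Q_iQ_jQ_kQ_l\leq n^4$ already produce an $O(n^{-4}\cdot n^4)=O(1)$ bound, with substantial slack available.

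The main obstacle is the degenerate cases in which $Q_k$ or $Q_l$ vanishes, for then the factor $Q_k$ (respectively $Q_l$) in $C_4$ is lost, and the naive sum $\sum_{k:\,Q_k=0}1\lesssim L=n/\rho$ over unoccupied pieces would spoil the estimate. To recover the sharp $n^{-1}$ bound, one must exploit (i) the restriction $Q\in\mathcal{Q}_\rho$ from Corollary~\ref{cor:3}, which via Proposition~\ref{pro:4} forces $Q$ and the associated $Q'$ to coincide with $Q^{\textup{opt}}$ outside an exceptional set of only $O(n\rho)$ pieces, thereby constraining where the four changing indices $(i,j,k,l)$ may lie; (ii) the lower bound $\ell_{min}$ on the length of any occupied piece provided by Lemma~\ref{lem:minimalOccupiedInterval}; and (iii) the statistics on configurations of three or four nearby pieces provided by Propositions~\ref{prop:IntervStatistics2} and~\ref{prop:NeighborssStatistics2}. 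Combining these three ingredients should absorb the missing powers of $Q_k,Q_l$ and deliver the announced bound.
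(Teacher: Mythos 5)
There is a genuine gap in both halves of your plan. In your ``generic'' case (all four occupations at least $1$), the bound $C_4\leq Q_iQ_jQ_kQ_l/[(n-2)(n-3)(n-4)(n-5)]$ together with $\sum_{i,j,k,l}Q_iQ_jQ_kQ_l\leq\bigl(\sum_iQ_i\bigr)^4=n^4$ yields only $O(1)$, not $n^{-1}$; there is no ``substantial slack'' in that computation. The paper's proof of Lemma~\ref{le:27} obtains the decay precisely from the stronger constraints $Q_i\geq2$, $Q_j\geq2$ in the non-degenerate block $\sigma=(1,1,1,1)$, which give the extra numerator factors $(Q_i-1)(Q_j-1)$ and two more factors of $n$ in the denominator of the bound on $C_4$, combined with the a priori estimate $\sum_jQ_j^2\lesssim n\log n\cdot\log\log n$ coming from Lemma~\ref{le:17} (see~\eqref{eq:209}) and Lemma~\ref{lem:normPiecesWithParticleExcess} — an ingredient you never invoke; this is what produces~\eqref{eq:218}. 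The subcases $Q_i=1$ or $Q_j=1$, for which this gain is unavailable, have to be treated separately (and note also that the coefficients are $a^Q_{\cdot}\overline{a^{Q'}_{\cdot}}$ with $Q\neq Q'$, so your displayed reduction involving only $|a^Q|^2$ needs at least a symmetrization in $Q$ and $Q'$).

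For the degenerate cases your remedy points in the wrong direction. The sum defining $\gamma^{(2),4,4}_{\Psi^U_\omega(L,n)}$ carries no constraint on the mutual distances of the pieces $i,j,k,l$, so Propositions~\ref{prop:IntervStatistics2} and~\ref{prop:NeighborssStatistics2} give no leverage, and Lemma~\ref{lem:minimalOccupiedInterval} concerns the ground state of $H^{U^p}$, not the coefficients $a^Q_{\overline n}$. Restricting to $Q,Q'\in\mathcal{Q}_\rho$ via Corollary~\ref{cor:3} would at best replace the offending factor $L$ (the number of empty pieces) by counts of order $n$ times a quantity small in $\rho$, so after multiplying by $C_4\sim n^{-2}$ one gets a bound that is small in $\rho$ but does not decay in $n$ — it cannot give $n^{-1}$. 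The paper's treatment of these blocks is purely deterministic and uses no probabilistic or geometric input at all: exactly as for $\gamma^{(1),o,+,-}$ and $\gamma^{(1),o,-,-}$ in Lemma~\ref{le:19} (see~\eqref{eq:177} and~\eqref{eq:178}), one groups the sums over the degenerate indices (empty target pieces, or source pieces with occupation $1$) \emph{inside} the vector norms before applying Lemma~\ref{le:15}; since states supported on distinct pieces are orthonormal, those norms equal $\ell^2$-sums of coefficients and are controlled by the normalization $\sum_{Q,\overline n}|a^Q_{\overline n}|^2=1$, with no factor of $L$ ever appearing. This yields $O(n^{-2})$ for $\sigma=(-1,-1,-1,-1)$ and $O(n^{-3/2})$ for the mixed signatures, which is what closes the proof.
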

\begin{proof}
  As in the proof of Lemma~\ref{le:19}, we will have to deal with the
  degenerate cases separately (see Remarks~\ref{rem:8}
  and~\ref{rem:2}).\\
  Recall~\eqref{eq:203} and write
  \begin{equation}
    \label{eq:217}
    \gamma_{\Psi^U_\omega(L,n)}^{(2),4,4}=
    \sum_{\sigma\in\{\pm\}^4}\gamma_{\Psi^U_\omega(L,n)}^{(2),4,\sigma}
  \end{equation}
  where $\sigma=(\sigma_i,\sigma_j,\sigma_k,\sigma_l)\in\{\pm1\}^4$,
  \begin{equation}
    \label{eq:216}
    \gamma_{\Psi^U_\omega(L,n)}^{(2),4,\sigma}=
    \sum_{\substack{i,j,k,l\\\text{distinct}}}\sum_{\tilde n\in\N^{m-4}}
    \sum_{\substack{Q\text{ occ.}\\
        (Q_i,Q_j,Q_k,Q_l)\in\mathcal{Q}_\sigma\\ Q':\ Q'_o=Q_o\text{
          if }o\not\in\{i,j,k,l\}\\Q'_i=Q_i-1,\
        Q'_j=Q_j-1\\Q'_k=Q_k+1,\ Q'_l=Q_l+1}} 
    C_4(Q,i,j,k,l)\sum_{\substack{n_i,n_j,n_k,n_l\geq
        1\\n'_i,n'_j,n'_k,n'_l\geq 1}}
    a^Q_{\tilde{n}_{i,j,k,l}}\overline{a^{Q'}_{\tilde{n}_{i,j,k,l}'}}
    \gamma^{(2),4,4}_{\substack{Q_i,Q_j,Q_k,Q_l\\n_i,n_j,n_k,n_l\\n'_i,n'_j,n'_k,n'l}},
  \end{equation}
  and
  \begin{equation*}
    \begin{split}
      \mathcal{Q}_\sigma&=\left\{Q_i\geq1\text{ and
        }\sigma_i(Q_i-1)\geq\frac{\sigma_k+1}2\right\}
      \cap \left\{Q_j\geq1\text{ and
        }\sigma_j(Q_j-1)\geq\frac{\sigma_j+1}2\right\}
      \\&\hskip2.5cm\cap\left\{Q_k\geq0\text{ and }\sigma_k
        Q_k\geq\frac{\sigma_k+1}2\right\}
      \cap\left\{Q_l\geq0\text{ and
        }\sigma_l Q_l\geq\frac{\sigma_l+1}2\right\}.
    \end{split}
  \end{equation*}
  A term in the right hand side of~\eqref{eq:217} degenerates if some
  $\sigma_\bullet$ takes the value $-1$.\\
  Assume now $\sigma=(1,1,1,1)$. Then, 
  \begin{equation*}
    \gamma_{\Psi^U_\omega(L,n)}^{(2),4,(1,1,1,1)}=
    \sum_{\substack{i,j,k,l\\\text{distinct}}}\sum_{\tilde n\in\N^{m-4}}
    \sum_{\substack{Q\text{ occ.}\\
        Q_i,Q_j\geq2,\ Q_k,Q_l\geq1\\ Q':\ Q'_o=Q_o\text{
          if }o\not\in\{i,j,k,l\}\\Q'_i=Q_i-1,\
        Q'_j=Q_j-1\\Q'_k=Q_k+1,\ Q'_l=Q_l+1}} 
    C_4(Q,i,j,k,l)\sum_{\substack{n_i,n_j,n_k,n_l\geq
        1\\n'_i,n'_j,n'_k,n'_l\geq 1}}
    a^Q_{\tilde{n}_{i,j,k,l}}\overline{a^{Q'}_{\tilde{n}_{i,j,k,l}'}}
    \gamma^{(2),4,4}_{\substack{Q_i,Q_j,Q_k,Q_l\\n_i,n_j,n_k,n_l\\n'_i,n'_j,n'_k,n'_l}}.
  \end{equation*}
  Using Lemma~\ref{le:15} and the orthonormality properties of the families
  $(\varphi^j_{Q_j,n_j})_{n_j\in\N}$, we compute
  \begin{equation*}
    \left\|\gamma_{\Psi^U_\omega(L,n)}^{(2),4,(+,+,+,+)}\right\|_{\text{tr}}\leq
    4\sum_{\substack{i,j,k,l\\\text{distinct}}}\sum_{\tilde n\in\N^{m-4}}
    \sum_{\substack{Q\text{ occ.}\\
        Q_i,Q_j\geq2,\ Q_k,Q_l\geq1\\ Q':\ Q'_o=Q_o\text{
          if }o\not\in\{i,j,k,l\}\\Q'_i=Q_i-1,\
        Q'_j=Q_j-1\\Q'_k=Q_k+1,\ Q'_l=Q_l+1}} 
    C_4(Q,i,j,k,l) \sum_{n_i,n_j,n_k,n_l\geq 1}
    \left|a^Q_{\tilde n_{i,j,k,l}}\right|^2.
  \end{equation*}  
  When $Q_i\geq2$, $Q_j\geq2$, $Q_k\geq1$ and $Q_l\geq1$ one has
  \begin{equation*}
    C_4(Q,i,j,k,l)\leq\frac{Q_i(Q_i-1)Q_j(Q_j-1)Q_kQ_l}
    {2n(n-2)(n-3)(n-4)(n-5)(n-6)(n-7)}.
  \end{equation*}
  Thus, by Lemma~\ref{le:17}, we obtain
  \begin{equation}
    \label{eq:218}
    \begin{split}
      \left\|\gamma_{\Psi^U_\omega(L,n)}^{(2),4,(1,1,1,1)}\right\|_{\text{tr}}&\leq
      \frac{2}{(n-5)^6}\sum_{\overline{n}\in\N^m} \sum_{Q\text{ occ.}}
      \left(\sum_{j}Q_j\right)^2 \left(\sum_j Q_j^2\right)^2
      \left|a^Q_{\overline{n}}\right|^2 \leq \frac{n^4(\log
        n)^4}{2(n-7)^6}\\&\leq n^{-1}
    \end{split}
  \end{equation}
  for $n$ large.\\
  Assume now $\sigma=(-1,-1,-1,-1)$. Then,
  \begin{equation*}
    \gamma_{\Psi^U_\omega(L,n)}^{(2),4,(-1,-1,-1,-1)}=
    \sum_{\substack{i,j,k,l\\\text{distinct}}}\sum_{\tilde n\in\N^{m-4}}
    \sum_{\substack{Q\text{ occ.}\\
        Q_i=Q_j=1,\ Q_k=Q_l=0\\ Q':\ Q'_o=Q_o\text{
          if }o\not\in\{i,j,k,l\}\\Q'_i=Q_i-1,\
        Q'_j=Q_j-1\\Q'_k=Q_k+1,\ Q'_l=Q_l+1}} 
    C_4(Q,i,j,k,l)\sum_{\substack{n_i,n_j\geq
        1\\n_k=n_l=1\\n'_i=n'_j=1\\n'_k,n'_l\geq 1}}
    a^Q_{\tilde{n}_{i,j,k,l}}\overline{a^{Q'}_{\tilde{n}_{i,j,k,l}'}}
    \gamma^{(2),4,4}_{\substack{1,1,0,0\\n_i,n_j,1,1\\1,1,n'_k,n'_l}}
  \end{equation*}
  where
  \begin{equation*}
    \begin{split}
      \gamma^{(2),4,4}_{\substack{1,1,0,0\\n_i,n_j,1,1\\1,1,n'_k,n'_l}}(x,x',y,y')&=
      \varphi^{i}_{1,n_{i}}(x) \varphi^{j}_{1,n_{j}}(x')
      \overline{\varphi^{k}_{1,n'_{k}}(y)\varphi^{l}_{1,n'_{l}}(y')} +
      \varphi^{i}_{1,n_{i}}(x') \varphi^{j}_{1,n_{j}}(x)
      \overline{\varphi^{k}_{1,n'_{k}}(y)\varphi^{l}_{1,n'_{l}}(y')}
      \\&\hskip.5cm+ \varphi^{i}_{1,n_{i}}(x) \varphi^{j}_{1,n_{j}}(x')
      \overline{\varphi^{k}_{1,n'_{k}}(y')\varphi^{l}_{1,n'_{l}}(y)} +
      \varphi^{i}_{1,n_{i}}(x') \varphi^{j}_{1,n_{j}}(x)
      \overline{\varphi^{k}_{1,n'_{k}}(y')\varphi^{l}_{1,n'_{l}}(y)}.
    \end{split}
  \end{equation*}
  As in the derivation of~\eqref{eq:178}, using Lemma~\ref{le:15} and
  the orthonormality properties of the families
  $(\varphi^j_{Q_j,n_j})_{n_j\in\N}$, we compute
  \begin{equation*}
    \begin{split}
      \left\|\gamma_{\Psi^U_\omega(L,n)}^{(2),4,(-1,-1,-1,-1)}\right\|_{\text{tr}}&\leq
      \frac2{(n-2)(n-3)}\sum_{\substack{\tilde n\in\N^{m-4}\\Q\text{ occ.}}}
      \left\|\sum_{\substack{(i,j)\\Q_i=Q_j=1}} \sum_{\substack{n_i=1\\n_j=1\\
            n_k,n_l}} a^Q_{\tilde n_{i,j,k,l}}\varphi^k_{1,n_k}\otimes
        \varphi^l_{1,n_l}\right\|^2\\&\hskip5cm+
      \left\|\sum_{\substack{(k,l)\\Q_k=Q_l=0}} \sum_{\substack{n_k=1\\n_l=1\\
            n_i,n_j}} a^Q_{\tilde n_{i,j,k,l}}\varphi^i_{1,n_i}\otimes
        \varphi^j_{1,n_j}\right\|^2\\&\leq
      \frac4{(n-3)^2}\sum_{\substack{\overline{n}\in\N^m\\Q\text{ occ.}}}
      \left|a^Q_{\overline{n}}\right|^2=\frac4{(n-3)^2}.
    \end{split}
  \end{equation*}  
  Assume now $\sigma=(-1,1,1,1)$. Then,
  \begin{equation*}
    \gamma_{\Psi^U_\omega(L,n)}^{(2),4,(-1,1,1,1)}=
    \sum_{\substack{i,j,k,l\\\text{distinct}}}\sum_{\tilde n\in\N^{m-4}}
    \sum_{\substack{Q\text{ occ.}\\
        Q_i=1,\ Q_j\geq2\\ Q_k,Q_l\geq1\\ Q':\ Q'_o=Q_o\text{
          if }o\not\in\{i,j,k,l\}\\Q'_i=Q_i-1,\
        Q'_j=Q_j-1\\Q'_k=Q_k+1,\ Q'_l=Q_l+1}} 
    C_4(Q,i,j,k,l)\sum_{\substack{n_i,n_j,n_k,n_l\geq
        1\\n'_j,n'_k,n'_l\geq 1\\n'_i=1}}
    a^Q_{\tilde{n}_{i,j,k,l}}\overline{a^{Q'}_{\tilde{n}_{i,j,k,l}'}}
    \gamma^{(2),4,4}_{\substack{1,Q_j,Q_k,Q_l\\n_i,n_j,n_k,n_l\\1,n'_j,n'_k,n'_l}}
  \end{equation*}
  where 
  \begin{equation}
    \label{eq:208}
    \begin{split}
      C_4(Q,i,j,k,l)&=\frac{(n-Q_j-Q_k-Q_l-3)!Q_j!Q_k!Q_l!}{2\,(n-2)!} \\
      &\leq \frac{Q_j(Q_j-1)Q_kQ_l}{2(n-2)(n-3)(n-4)(n-5)(n-6)}.
    \end{split}
  \end{equation}
  The operator
  $\gamma^{(2),4,4}_{\substack{1,Q_j,Q_k,Q_l\\n_i,n_j,n_k,n_l\\1,n'_j,n'_k,n'_l}}$
  is given by~\eqref{eq:166} and
  \begin{equation*}
    \begin{split}
      \sigma(x,x',y,y')&= \varphi^{i}_{1,n_{i}}(x) \int_{\Delta^{Q_{j}-1}_{j}}
      \varphi^{j}_{Q_{j},n_{j}}(x',z)
      \overline{\varphi^{j}_{Q_{j}-1,n'_{j}}(z)}dz\\&\hskip3cm\times
      \int_{\Delta^{Q_{k}}_{k}} \varphi^{k}_{Q_{k},n_{k}}(z)
      \overline{\varphi^{k}_{Q_{k}+1,n'_{k}}(y,z)}dz \int_{\Delta^{Q_{l}}_{l}}
      \varphi^{l}_{Q_{l},n_{l}}(z)
      \overline{\varphi^{l}_{Q_{l}+1,n'_{l}}(y',z)}dz.
    \end{split}
  \end{equation*}
  Hence, as in the derivation of~\eqref{eq:177}, using
  Lemma~\ref{le:15},~\eqref{eq:208} and the orthonormality properties of the
  families $(\varphi^j_{Q_j,n_j})_{n_j\in\N}$, we compute
  \begin{equation*}
    \begin{split}
      \left\|\gamma_{\Psi^U_\omega(L,n)}^{(2),4,(-1,1,1,1)}\right\|_{\text{tr}}&
      \leq\frac2{(n-2)(n-3)(n-4)(n-5)(n-6)} \sum_{\substack{\tilde
          n\in\N^m\\Q\text{ occ.}}}\left(\sum_{j=1}^m Q^2_j\right)
      \left(\sum_{j=1}^m Q_j\right)^2 \left| a^Q_{\tilde n_{i,j}}
      \right|^2\\&\leq \frac{n^{10/3}(\log n)^{2/3}}{(n-6)^5}\leq n^{-3/2}.
    \end{split}
  \end{equation*}  
  In the same way, we obtain that, if $\sigma$ contains a least one
  $-1$ then $\D \left\|\gamma_{\Psi^U_\omega(L,n)}^{(2),4,\sigma}
  \right\|_{\text{tr}}\leq n^{-1}$.\\
  This completes the proof of Lemma~\ref{le:27}.
\end{proof}
\noindent Let us now turn to the analysis of
$\gamma_{\Psi^U_\omega(L,n)}^{(2),d,o}$, the main term of
$\gamma_{\Psi^U_\omega(L,n)}^{(2)}$. The analysis will be similar of
that o $\gamma_\Psi^{(1),d}$ in the proof of Theorem~\ref{thr:4}.\\
Recall that $\mathcal{P}^Q_-$ is defined in Proposition~\ref{pro:4} and write
\begin{equation}
  \label{eq:205}
  \gamma_{\Psi^U_\omega(L,n)}^{(2),d,o}=\gamma_{\Psi^U_\omega(L,n)}^{(2),d,o,-}+\gamma_{\Psi^U_\omega(L,n)}^{(2),d,o,+}
\end{equation}
where
\begin{equation}
  \label{eq:219}
  \gamma_{\Psi^U_\omega(L,n)}^{(2),d,o,-}
  =\sum_{\substack{Q\text{ occ.}\\
      Q_i\geq1\\Q_j\geq1}} \sum_{\tilde n\in\N^{m-2}}
  \sum_{\substack{1\leq i<j\leq m\\ (i,j)\in(\mathcal{P}^Q_-)^2}}
  \sum_{\substack{n_j,n'_j\geq 1\\n_i,n'_i\geq 1}}a^Q_{\tilde
    n_{i,j}} \overline{a^Q_{\tilde
      n'_{i,j}}}\gamma^{(2),d,o}_{\substack{Q_i,Q_j\\n_i,n_j\\n'_i,n'_j}}.
\end{equation}
We prove
\begin{Le}
  \label{le:28}
  Under the assumptions of Theorem~\ref{thr:5}, for $\eta\in(0,1)$,
  there exists $\varepsilon_0>0$ such that, for
  $\varepsilon\in(0,\varepsilon_0)$, in the thermodynamic limit, with
  probability $1-O(L^{-\infty})$, one has
  \begin{equation*}
    \left\|\gamma_{\Psi^U_\omega(L,n)}^{(2),d,o,+}\right\|_{\text{tr}}\leq
    n^2\frac{\rho}{\ell_\rho}.
  \end{equation*}  
\end{Le}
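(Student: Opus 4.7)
The plan is to imitate the proof of Lemma~\ref{le:20} in the two-particle setting. I would decompose $\gamma_{\Psi^U_\omega(L,n)}^{(2),d,o,+} = \gamma_{\Psi^U_\omega(L,n)}^{(2),d,o,+,+} + \gamma_{\Psi^U_\omega(L,n)}^{(2),d,o,+,0}$, where $\gamma_{\Psi^U_\omega(L,n)}^{(2),d,o,+,+}$ gathers the pairs $(i,j)$ for which at least one of the two pieces lies outside $\mathcal{P}_-$ (i.e.\ has length $\geq 3\ell_\rho(1-\varepsilon)$), while $\gamma_{\Psi^U_\omega(L,n)}^{(2),d,o,+,0}$ collects the pairs with both $i,j\in\mathcal{P}_-$ but with at least one of $Q_i,Q_j$ at least $4$. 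By point (a) of Proposition~\ref{pro:4}, only occupations $Q\in\mathcal{Q}_\rho$ contribute.

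The technical heart is a universal trace-norm bound for each fixed $Q$ and each pair $(i,j)$: expanding the two $2\times 2$ determinants in the kernel~\eqref{eq:185} produces four terms, each factoring, since $\Delta_i$ and $\Delta_j$ are disjoint, into a tensor product of operators supported respectively on $\Delta_i$ and $\Delta_j$. Applying Lemma~\ref{le:15} and the orthonormality of each family $(\varphi^j_{Q_j,n_j})_{n_j\geq 1}$ to each of the four factors, exactly as in the derivation of~\eqref{eq:181} in Lemma~\ref{le:19}, I expect to obtain
\begin{equation*}
\Bigl\|\sum_{\tilde n,\,n_i,n_j,n'_i,n'_j} a^Q_{\tilde n_{i,j}}\,\overline{a^Q_{\tilde n'_{i,j}}}\, \gamma^{(2),d,o}_{\substack{Q_i,Q_j\\n_i,n_j\\n'_i,n'_j}}\Bigr\|_{\text{tr}} \leq C\, Q_i\, Q_j \sum_{\overline n\in\N^m} |a^Q_{\overline n}|^2.
\end{equation*}
This is the two-particle analogue of the bound internally used in the proof of Lemma~\ref{le:20}.

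Summing the trace-norm bound then reduces everything to occupation counts. For $\gamma^{(2),d,o,+,+}$, one has $\bigl(\sum_{i\notin\mathcal{P}_-}Q_i\bigr)\bigl(\sum_j Q_j\bigr)\leq n\sum_{i\notin\mathcal{P}_-}Q_i$, and Lemma~\ref{le:17} bounds the last sum by $Cn\rho^{1+\eta}$, giving $\|\gamma^{(2),d,o,+,+}\|_{\text{tr}}\leq Cn^2\rho^{1+\eta}=o(n^2\rho/\ell_\rho)$ for $\rho$ small. For $\gamma^{(2),d,o,+,0}$, using $Q_i\leq Q_i^2/4$ when $Q_i\geq 4$ one gets
\begin{equation*}
\sum_{\substack{i\in\mathcal{P}_-\\Q_i\geq 4}} Q_i \cdot \sum_j Q_j \leq \frac{n}{4}\sum_{i\in\mathcal{P}_-}Q_i^2,
\end{equation*}
and~\eqref{eq:211} of Lemma~\ref{lem:normPiecesWithParticleExcess} controls the latter by $Cn\rho/\ell_\rho$ (since $\mathcal{P}_-\subset\{|\Delta_k|\leq 3\ell_\rho(1-\rho^2)\}$ for $\varepsilon_0>\rho^2$). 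Combining the two pieces, and using $\sum_{Q,\overline n}|a^Q_{\overline n}|^2=1$, yields the announced estimate. The main obstacle is the first step: verifying that the four determinantal cross-terms in~\eqref{eq:185}, once summed against the coefficients of $\Psi^U_\omega(L,n)$, recombine in such a way that Lemma~\ref{le:15} applies with a total loss of only the factor $Q_i Q_j$; once this algebraic lemma is in place, everything else reduces to occupation counts already established in section~\ref{sec:main-results-proofs-1}.
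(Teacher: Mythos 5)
Your proposal is correct and follows essentially the same route as the paper: restrict to pairs with at least one index outside $\mathcal{P}^Q_-$, bound each fixed-$(Q,i,j)$ block in trace norm by $\tfrac{Q_iQ_j}{2}\sum_{n_i,n_j}\bigl|a^Q_{\tilde n_{i,j}}\bigr|^2$ via Lemma~\ref{le:15} and the orthonormality of the families $(\varphi^j_{Q_j,n_j})_{n_j}$ (exactly the step the paper performs on~\eqref{eq:185}), and then control the resulting occupation sums by Lemmas~\ref{le:17} and~\ref{lem:normPiecesWithParticleExcess}. Your finer split into ``long piece'' versus ``$Q_i\geq4$'' and your use of~\eqref{eq:211} instead of~\eqref{eq:124} are only cosmetic variations on the paper's argument.
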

\begin{proof}
  The proof follows that of Lemma~\ref{le:20}. One estimates
  \begin{equation}
    \label{eq:221}
    \begin{split}
      \left\|\gamma_{\Psi^U_\omega(L,n)}^{(2),d,o,+}\right\|_{\text{tr}}&
      =\left\|\sum_{\substack{Q\text{ occ.}\\
            Q_i\geq1\\Q_j\geq1}} \sum_{\substack{1\leq i<j\leq m\\
            (i,j)\not\in(\mathcal{P}^Q_-)^2}} \sum_{\tilde
          n\in\N^{m-2}} \sum_{\substack{n_j,n'_j\geq 1\\n_i,n'_i\geq
            1}}a^Q_{\tilde n_{i,j}} \overline{a^Q_{\tilde
            n'_{i,j}}}\gamma^{(2),d,o}_{\substack{Q_i,Q_j
            \\n_i,n_j\\n'_i,n'_j}}\right\|_{\text{tr}}\\&\leq
      \sum_{\substack{Q\text{ occ.}\\
          Q_i\geq1\\Q_j\geq1}} \sum_{\tilde n\in\N^{m-2}}
      \left(\sum_{\substack{1\leq i<j\leq m\\
            i\not\in\mathcal{P}^Q_-}} +
        \sum_{\substack{1\leq i<j\leq m\\
            j\not\in\mathcal{P}^Q_-}}\right)
      \left\|\sum_{\substack{n_j,n'_j\geq 1\\n_i,n'_i\geq
            1}}a^Q_{\tilde n_{i,j}} \overline{a^Q_{\tilde
            n'_{i,j}}}\gamma^{(2),d,o}_{\substack{Q_i,Q_j
            \\n_i,n_j\\n'_i,n'_j}}\right\|_{\text{tr}}.
    \end{split}
  \end{equation}
  Let us analyze the first sum in the right hand side
  above. Using~\eqref{eq:185}, Lemma~\ref{le:15} and the
  orthonormality properties of the families
  $(\varphi^j_{Q_j,n_j})_{n_j\in\N}$, we compute
  \begin{equation*}
    \begin{split}
      \sum_{\substack{Q\text{ occ.}\\Q_i\geq1\\Q_j\geq1}}
      \sum_{\substack{1\leq i<j\leq m\\i\not\in\mathcal{P}^Q_-}}
      \sum_{\tilde n\in\N^{m-2}}\left\| \sum_{\substack{n_j,n'_j\geq
            1\\n_i,n'_i\geq 1}}a^Q_{\tilde
          n_{i,j}}\overline{a^Q_{\tilde
            n'_{i,j}}}\gamma^{(2),d,o}_{\substack{Q_i,Q_j\\n_i,n_j\\n'_i,n'_j}}
      \right\|_{\text{tr}}&\leq \sum_{\substack{Q\text{
            occ.}\\Q_i\geq1\\Q_j\geq1}} \sum_{\substack{1\leq i<j\leq
          m\\i\not\in\mathcal{P}^Q_-}} \sum_{\tilde
        n\in\N^{m-2}}\frac{Q_iQ_j}2 \sum_{n_i,n_j\geq 1}
      \left|a^Q_{\tilde
          n_{i,j}}\right|^2\\
      &\leq \frac12\sum_{\substack{\overline{n}\in\N^m\\Q\text{
            occ.}}}  \left(\sum_{i\not\in\mathcal{P}^Q_-}Q_i\right)
      \left(\sum_jQ_j\right) \left|a^Q_{\overline{n}}\right|^2\\&\leq
      C n^2\frac{\rho}{\ell_\rho}
    \end{split}
  \end{equation*}
  as in the proof of Lemma~\ref{le:20} by Lemma~\ref{le:17}
  and~\ref{lem:normPiecesWithParticleExcess}.\\ 
  The other sum in the right hand side of~\eqref{eq:221} is analyzed
  in the same way. This completes the proof of Lemma~\ref{le:28}.
\end{proof}
\noindent Let us now analyze $\gamma_{\Psi^U_\omega(L,n)}^{(2),d,o,-}$. We proceed
as in the analysis of $\gamma_{\Psi^U_\omega(L,n)}^{(1),d}$ (see~\eqref{eq:179}
and Lemma~\ref{le:20}). We recall and compute
\begin{equation*}
    \gamma_{\Psi^U_\omega(L,n)}^{(2),d,o,-}=\sum_{\substack{Q\text{ occ.}\\
        Q_i\geq1\\Q_j\geq1\\\tilde n\in\N^{m-2}}}
    \sum_{\substack{1\leq i<j\leq m\\ (i,j)\in(\mathcal{P}^Q_-)^2}}
    \sum_{\substack{n_j,n'_j\geq 1\\n_i,n'_i\geq 1}}a^Q_{\tilde
      n_{i,j}} \overline{a^Q_{\tilde
        n'_{i,j}}}\gamma^{(2),d,o}_{\substack{Q_i,Q_j\\n_i,n_j\\n'_i,n'_j}}
=\sum_{\substack{Q\text{ occ.}\\\tilde n\in\N^{m-2}}}
    \sum_{\substack{1\leq i<j\leq m\\ (i,j)\in(\mathcal{P}^Q_-)^2}}
    \frac{Q_iQ_j}2(\text{Id}-\text{Ex})\varphi_{i,j}^{\tilde{n}}\otimes^s
    \varphi_{i,j}^{\tilde{n}}.
\end{equation*}
where $\D\varphi_{i,j}^{\tilde{n}}:=\sum_{\substack{n_i\geq 1\\n_j\geq
    1}}a^Q_{\tilde
  n_{i,j}}\varphi^i_{Q_i,n_i}\wedge\varphi^j_{Q_j,n_j}$ and the
operators Ex and $\otimes^s$ are defined in
Proposition~\ref{prop:DensityMatrixStructure}.\\
Define also
\begin{equation}
  \label{eq:238}
  \tilde\varphi_{i,j}^{\tilde{n}}=
  \begin{cases}
    \varphi_{i,j}^{\tilde{n}}&\text{ if
    }n_i+n_j=2\\\|\varphi_{i,j}^{\tilde{n}}\| \varphi_{Q_i,1}^i
    ²\wedge\varphi_{Q_j,1}^j &\text{ if }n_i+n_j\geq3.
  \end{cases}
\end{equation}
Then, recalling~\eqref{eq:236}, we compute
\begin{equation}
  \label{eq:235}
  \begin{split}
    \gamma_{\Psi^U_\omega(L,n)}^{(2),d,o,-}&=\sum_{\substack{Q\text{ occ.}\\\tilde
        n\in\N^{m-2}}}\sum_{\substack{1\leq i<j\leq m\\
        (i,j)\in(\mathcal{P}^Q_{-,-})^2}}
    \frac{Q_iQ_j}2(\text{Id}-\text{Ex})\varphi_{i,j}^{\tilde{n}}\otimes^s
    \varphi_{i,j}^{\tilde{n}}\\&\hskip5cm+ \sum_{\substack{Q\text{
          occ.}\\\tilde n\in\N^{m-2}}}\sum_{\substack{1\leq i<j\leq m\\
        i\in\mathcal{P}^Q_{-,+}\\\text{or }j\in\mathcal{P}^Q_{-,+}}}
    \frac{Q_iQ_j}2(\text{Id}-\text{Ex})\varphi_{i,j}^{\tilde{n}}\otimes^s
    \varphi_{i,j}^{\tilde{n}}\\
    & =\sum_{\substack{Q\text{ occ.}\\\tilde
        n\in\N^{m-2}}}\sum_{\substack{1\leq i<j\leq m\\
        (i,j)\in(\mathcal{P}^Q_-)^2}}
    \frac{Q_iQ_j}2(\text{Id}-\text{Ex})\tilde\varphi_{i,j}^{\tilde{n}}\otimes^s
    \tilde\varphi_{i,j}^{\tilde{n}}\\&\hskip5cm+
    \sum_{\substack{Q\text{
          occ.}\\\tilde n\in\N^{m-2}}}\sum_{\substack{1\leq i<j\leq m\\
        i\in\mathcal{P}^Q_{-,+}\\\text{or }j\in\mathcal{P}^Q_{-,+}}}
    \frac{Q_iQ_j}2(\text{Id}-\text{Ex})\left(\varphi_{i,j}^{\tilde{n}}\otimes^s
      \varphi_{i,j}^{\tilde{n}}-\tilde\varphi_{i,j}^{\tilde{n}}\otimes^s
      \tilde\varphi_{i,j}^{\tilde{n}}\right).
  \end{split}
\end{equation}
The second term in the sum above we estimate by
\begin{equation}
  \label{eq:242}
  \begin{split}
    &\left\| \sum_{\substack{Q\text{
            occ.}\\\tilde n\in\N^{m-2}}}\sum_{\substack{1\leq i<j\leq m\\
          i\in\mathcal{P}^Q_{-,+}\\\text{or }j\in\mathcal{P}^Q_{-,+}}}
      \frac{Q_iQ_j}2(\text{Id}-\text{Ex})\left(\varphi_{i,j}^{\tilde{n}}\otimes^s
        \varphi_{i,j}^{\tilde{n}}-\tilde\varphi_{i,j}^{\tilde{n}}\otimes^s
        \tilde\varphi_{i,j}^{\tilde{n}}\right)\right\|_{\text{tr}}\\
    &\hskip6cm\lesssim \sum_{\substack{Q\text{ occ.}\\\tilde
        n\in\N^{m-2}}}\sum_{\substack{1\leq i<j\leq m\\
        i\in\mathcal{P}^Q_{-,+}\\\text{or }j\in\mathcal{P}^Q_{-,+}}}Q_iQ_j
    \left(\left\|\varphi_{i,j}^{\tilde{n}}\right\|^2+
      \left\|\tilde\varphi_{i,j}^{\tilde{n}}\right\|^2\right)\\
    &\hskip6cm\lesssim \sum_{\substack{Q\text{
          occ.}\\\overline{n}\in\N^m}}
    \left(\sum_{j\in\mathcal{P}^Q_-}\#\{j;\ n_j\geq2\}\right)
    \left(\sum_{j\in\mathcal{P}^Q_-}Q_j\right)
    \left|a_{\overline{n}}^Q\right|^2\\
    &\hskip6cm\lesssim
    n^2\frac{\rho}{\rho_0|\log\rho|}\,f_Z(2|\log{\rho}|).
  \end{split}  
\end{equation}
by Lemma~\ref{le:29}.\\
As for the first term in the second equality in~\eqref{eq:235}, letting
$\mathcal{P}_{\text{opt}}$ be the pieces of length less than
$3\ell_\rho(1-\varepsilon)$ where $\Psi^{\text{opt}}$ puts at least
one particle, we write
\begin{multline}
  \label{eq:229}
  \sum_{\substack{Q\text{ occ.}\\\tilde
      n\in\N^{m-2}}}\sum_{\substack{1\leq i<j\leq m\\
      (i,j)\in(\mathcal{P}^Q_-)^2}}
  \frac{Q_iQ_j}2(\text{Id}-\text{Ex})\tilde\varphi_{i,j}^{\tilde{n}}\otimes^s
  \tilde\varphi_{i,j}^{\tilde{n}} \\= \sum_{\substack{Q\text{
        occ.}\\\tilde
      n\in\N^{m-2}}}\left(\sum_{\substack{1\leq i<j\leq m\\
        (i,j)\in(\mathcal{P}_{\text{opt}})^2}}+\sum_{\substack{1\leq i<j\leq m\\
        i\text{ or }j\text{ in }\mathcal{P}^Q_-\setminus
        \mathcal{P}_{\text{opt}}}}-\sum_{\substack{1\leq i<j\leq m\\
        i\text{ or }j\text{ in
        }\mathcal{P}_{\text{opt}}\setminus\mathcal{P}^Q_-}} \right)
  \frac{Q_iQ_j}2(\text{Id}-\text{Ex})\tilde\varphi_{i,j}^{\tilde{n}}\otimes^s
  \tilde\varphi_{i,j}^{\tilde{n}}
\end{multline}
For the first of the three sums above, one computes
\begin{equation}
  \label{eq:239}
  \begin{split}
    \sum_{\substack{Q\text{ occ.}\\\tilde
        n\in\N^{m-2}}}\sum_{\substack{1\leq i<j\leq m\\
        (i,j)\in(\mathcal{P}_{\text{opt}})^2}}
    \frac{Q_iQ_j}2(\text{Id}-\text{Ex})\tilde\varphi_{i,j}^{\tilde{n}}\otimes^s
    \tilde\varphi_{i,j}^{\tilde{n}}&=
    \sum_{\substack{1\leq i<j\leq m\\
        (i,j)\in(\mathcal{P}_{\text{opt}})^2}}\left(
      \sum_{\substack{Q\text{ occ.}\\\overline{n}\in\N^m}}
      \left|a_{\overline{n}}^Q\right|^2\right)
    \frac{Q_iQ_j}2(\text{Id}-\text{Ex})
    \tilde\varphi_{i,j}^{\tilde{n}}\otimes^s
    \tilde\varphi_{i,j}^{\tilde{n}}\\ &=
    \sum_{\substack{1\leq i<j\leq m\\
        (i,j)\in(\mathcal{P}_{\text{opt}})^2}}
    \frac{Q_iQ_j}2(\text{Id}-\text{Ex})\gamma^{(1)}_{\varphi^i_{Q_i,1}}
    \otimes^s\gamma^{(1)}_{\varphi^j_{Q_j,1}}
    \\&=\gamma_{\Psi^{\text{opt}}}^{(2)}+R
  \end{split}  
\end{equation}
where $\left\|R\right\|_{\text{tr}}\leq C n^2\rho^{1+\eta}$.\\
In the last line of~\eqref{eq:239}, we have used
Proposition~\ref{prop:DensityMatrixStructure}, the definition of
$\Psi^{\text{opt}}$~\eqref{eq:80}
and Lemma~\ref{le:17} to obtain the bound on $R$.\\
To estimate the remaining two sums in~\eqref{eq:242}, we split them
into sums where the summation over pieces is restricted to pieces
either longer than $\ell_\rho+C$ or shorter than $\ell_\rho+C$ ($C$ is
given by Corollary~\ref{cor:3}).\\
By Corollary~\ref{cor:3}, we know that
\begin{equation*}
  \begin{split}
    &\left\| \sum_{\substack{Q\text{ occ.}\\\tilde
          n\in\N^{m-2}}}\left(\sum_{\substack{1\leq i<j\leq m\\
            i\in\mathcal{P}^Q_-\setminus\mathcal{P}_{\text{opt}}\\\text{
              and }|\Delta_i(\omega)|< \ell_{\rho}+C}}
        -\sum_{\substack{1\leq i<j\leq m\\
            i\in\mathcal{P}^Q_-\setminus\mathcal{P}_{\text{opt}}\\\text{
              and }|\Delta_i(\omega)|< \ell_{\rho}+C}}\right)
      \frac{Q_iQ_j}2(\text{Id}-\text{Ex})\tilde\varphi_{i,j}^{\tilde{n}}\otimes^s
      \tilde\varphi_{i,j}^{\tilde{n}}\right\|_{\text{tr}}\\&\hskip1cm\leq
    \sum_{\substack{Q\text{ occ.}\\\tilde
        n\in\N^{m-2}}}\left(\sum_{\substack{1\leq i<j\leq m\\
          i\in\mathcal{P}^Q_-\setminus\mathcal{P}_{\text{opt}}\\\text{and
          }|\Delta_i(\omega)|< \ell_{\rho}+C}}+\sum_{\substack{1\leq i<j\leq m\\
          i\in\mathcal{P}^Q_-\setminus\mathcal{P}_{\text{opt}}\\\text{and
          }|\Delta_i(\omega)|< \ell_{\rho}+C}}\right)
    \frac{Q_iQ_j}2\left\|(\text{Id}-\text{Ex})
      \tilde\varphi_{i,j}^{\tilde{n}}\otimes^s
      \tilde\varphi_{i,j}^{\tilde{n}}\right\|_{\text{tr}}
    \\\\&\hskip1cm\leq C n^2 \rho
    \max\left(\sqrt{Z(2|\log{\rho}|)},\ell^{-1}_\rho\right)
    \sum_{\substack{Q\text{ occ.}\\\overline{n}\in\N^m}}\left|
      a^Q_{\overline{n}} \right|^2= C n^2 \max\left(\sqrt{\rho
        Z(2|\log{\rho}|)}, \rho|\log{\rho}|^{-1}\right).
  \end{split}
\end{equation*}
In the same way, we estimate
\begin{multline*}
  \left\| \sum_{\substack{Q\text{ occ.}\\\tilde
        n\in\N^{m-2}}}\left(\sum_{\substack{1\leq i<j\leq m\\
          i\in\mathcal{P}_{\text{opt}}\setminus\mathcal{P}^Q_-\\\text{and
          }|\Delta_i(\omega)|\geq\ell_{\rho}+C}}-\sum_{\substack{1\leq
          i<j\leq
          m\\i\in\mathcal{P}_{\text{opt}}\setminus\mathcal{P}^Q_-\\\text{and
          }|\Delta_i(\omega)|\geq \ell_{\rho}+C}}\right)
    \frac{Q_iQ_j}2(\text{Id}-\text{Ex})\tilde\varphi_{i,j}^{\tilde{n}}\otimes^s
    \tilde\varphi_{i,j}^{\tilde{n}}\right\|_{\text{tr}} \\\leq C n^2
  \rho\max\left(\sqrt{Z(2|\log{\rho}|)},\ell^{-1}_\rho\right)
\end{multline*}
and one has the same estimates when $i$ is replaced by $j$.\\
Plugging these estimates,~\eqref{eq:242} and~\eqref{eq:229}
into~\eqref{eq:205}, recalling~\eqref{eq:240}, we obtain
\begin{gather*}
  \left\|\left(\gamma_{\Psi^U_\omega(L,n)}^{(2),d,o,-}
      -\gamma_{\Psi^{\text{opt}}}^{(2)}\right)\car^2_{<\ell_\rho+C}
  \right\|_{\text{tr}}\leq C n^2
  \max\left(\sqrt{\rho Z(2|\log{\rho}|)},\rho|\log{\rho}|^{-1}\right)\\
  \left\| \left(\gamma_{\Psi^U_\omega(L,n)}^{(2),d,o,-}-
      \gamma_{\Psi^{\text{opt}}}^{(2)}\right)
    \left(\car-\car^2_{<\ell_\rho+C}\right)
  \right\|_{\text{tr},\geq\ell_\rho+C}\leq C n^2 \rho
  \max\left(\sqrt{Z(2|\log{\rho}|)},\ell^{-1}_\rho\right).
\end{gather*}
Taking into account the decomposition~\eqref{eq:186} and
Lemmas~\ref{le:1},~\ref{le:24},~\ref{le:25},~\ref{le:26},~\ref{le:27}
then completes the proof of Theorem~\ref{thr:2}.\qed


\section{Almost sure convergence for the ground state energy per particle}
\label{sec:almost-sure-conv}
In this section, we prove that, if interactions decay sufficiently
fast at infinity, then the convergence in the thermodynamic limit of
the ground state energy per particle $E^U_\omega(L, n) / n$ to
$\densEn^U(\rho)$ holds not only in $L^2_\omega$ (see~\cite[Theorem
3.5]{MR3022666}) but also $\omega$-almost surely.\\
From the proof of~\cite[Theorem 3.5]{MR3022666}, one clearly sees that
it suffices to improve upon the sub-additive estimate given
in~\cite[Lemma 4.1]{MR3022666}. We prove
\begin{Th}
 \label{thr:8}
 Assume that the pair potential $U$ be even and such that $U\in
 L^r(\R)$ for some $r>1$ and that for some $\alpha>2$, one has
 $\D\int_0^{+\infty} x^\alpha U(x)dx<+\infty$.\\
 In the thermodynamic limit, for disjoint intervals $\Lambda_1$ and
 $\Lambda_2$ with $n_1$ and $n_2$ electrons respectively, for
 $\min(|\Lambda_1|,|\Lambda_2|)$ sufficiently large, with probability
 $1-O(\min(|\Lambda_1|,|\Lambda_2|)^{-\infty})$, one has
 \begin{equation}
   \label{eq:62}
   E^U_\omega(\Lambda_1 \cup \Lambda_2, n_1 + n_2) \leq
   E^U_\omega(\Lambda_1, n_1) + E^U_\omega(\Lambda_2, n_2) + o(n_1 + n_2).  
 \end{equation}
 Here, $E^U_\omega(\Lambda, n)$ denotes the ground state energy of
 $H^U_\omega(\Lambda,n)$ (see
 section~\ref{sec:interacting-electrons}).
\end{Th}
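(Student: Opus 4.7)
The plan is to use the antisymmetrized tensor product $\Psi:=\Psi_1\wedge\Psi_2$ as a variational trial state for $H^U_\omega(\Lambda_1\cup\Lambda_2,n_1+n_2)$, where $\Psi_j$ denotes a ground state of $H^U_\omega(\Lambda_j,n_j)$. Since $\Lambda_1$ and $\Lambda_2$ are disjoint, each $\Psi_j$ is supported in $\Lambda_j^{n_j}$, and one-particle functions living in pieces of $\Lambda_j$ extend by zero to pieces of $\Lambda_1\cup\Lambda_2$ without any increase in kinetic energy (Dirichlet values remain zero). In particular the Luttinger--Sy kinetic part decouples: $\langle H^0\Psi,\Psi\rangle=\langle H^0\Psi_1,\Psi_1\rangle+\langle H^0\Psi_2,\Psi_2\rangle$. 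Using Proposition \ref{prop:DensityMatrixStructure} (the one-particle density matrix of $\Psi$ is $\gamma_{\Psi_1}+\gamma_{\Psi_2}$), the interaction energy splits as intra-region plus cross terms, giving the key identity
\begin{equation*}
\langle H^U\Psi,\Psi\rangle = E^U_\omega(\Lambda_1,n_1)+E^U_\omega(\Lambda_2,n_2)+E_{\text{cross}},\qquad
E_{\text{cross}}=\iint_{\Lambda_1\times\Lambda_2}U(x-y)\rho_1(x)\rho_2(y)\,dx\,dy,
\end{equation*}
where $\rho_j=\gamma^{(1)}_{\Psi_j}(\cdot,\cdot)$. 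It remains to prove $E_{\text{cross}}=o(n_1+n_2)$ with probability $1-O(\min(|\Lambda_j|)^{-\infty})$.

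I would control $E_{\text{cross}}$ by combining the polynomial decay of $U$ with a one-dimensional Lieb--Thirring estimate on the densities. Pick an intermediate scale $R_L\to\infty$ with $R_L=o(\min(|\Lambda_1|,|\Lambda_2|))$ (for instance $R_L=L^\beta$ with small $\beta\in(0,1)$) and split $E_{\text{cross}}=E_{\text{far}}+E_{\text{near}}$ according to whether $|x-y|>R_L$ or $|x-y|\le R_L$. Since the ground state energies per particle converge (in $L^2_\omega$, by \cite{MR3022666}), with probability $1-O(L^{-\infty})$ one has $\langle H^0\Psi_j,\Psi_j\rangle\le\langle H^U\Psi_j,\Psi_j\rangle\le Cn_j$ because $U\ge 0$. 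The one-dimensional Lieb--Thirring inequality for fermion densities then gives $\int\rho_j^3\,dx\le C\,\langle -\Delta\Psi_j,\Psi_j\rangle\le Cn_j$, and interpolating with $\int\rho_j\,dx=n_j$ yields $\|\rho_j\|_2\le Cn_j^{1/2}$ and $\|\rho_j\mathbf{1}_X\|_2^2\le C n_j^{2/3}|X|^{1/3}$ for any measurable $X$.

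For $E_{\text{far}}$, write $E_{\text{far}}=\langle\rho_1,U^{>R_L}\!*\rho_2\rangle$ and apply Young's convolution inequality with $p=p'=2$:
\begin{equation*}
E_{\text{far}}\le \|U^{>R_L}\|_1\,\|\rho_1\|_2\,\|\rho_2\|_2\le Cn\,R_L^{-\alpha},
\end{equation*}
using the hypothesis $\int t^\alpha U(t)\,dt<\infty$ with $\alpha>2$ to obtain $\int_{R_L}^\infty U(t)\,dt\le CR_L^{-\alpha}$. For $E_{\text{near}}$, the integrand is supported in $\Lambda_1^{R_L}\times\Lambda_2^{R_L}$, the strips of width $R_L$ near the common interface, so $U\in L^1$ (which follows from assumption (HU) together with local integrability of $U$ near the origin via $U\in L^r$, $r>1$) combined with Young gives
\begin{equation*}
E_{\text{near}}\le \|U\|_1\,\|\rho_1\mathbf{1}_{\Lambda_1^{R_L}}\|_2\,\|\rho_2\mathbf{1}_{\Lambda_2^{R_L}}\|_2\le Cn^{2/3}R_L^{1/3}.
\end{equation*}
Choosing $R_L=n^\beta$ with $0<\beta<\min(1,3/(1+3\alpha))$ makes both $E_{\text{far}}/n\le Cn^{-\alpha\beta}$ and $E_{\text{near}}/n\le Cn^{-1/3+\beta/3}$ tend to zero, yielding~\eqref{eq:62}. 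The main technical obstacle is securing the uniform, high-probability bound $E^U_\omega(\Lambda_j,n_j)\le Cn_j$ needed to apply Lieb--Thirring: this follows from the already-available $L^2_\omega$-convergence of the ground state energy per particle established in \cite{MR3022666}, combined with a Chebyshev argument showing that the exceptional event $\{E^U_\omega(\Lambda_j,n_j)>2\densEn^U(\rho)n_j\}$ has probability $O(L^{-\infty})$ once one iterates the subadditive estimate being proved here along dyadic scales, so some care is needed to avoid circularity---this can be achieved by first running the argument with the rough bound $E^U_\omega\le E^0_\omega+\|U\|_1 n^2/|\Lambda|$, which is deterministic and already sufficient to drive the density estimate.
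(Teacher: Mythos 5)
Your reduction is exactly the paper's first step: the paper also takes $\Psi^U_1\wedge\Psi^U_2$ as a trial state and reduces \eqref{eq:62} to showing $\int_{\Lambda_1\times\Lambda_2}U(x-y)\rho_{\Psi^U_1}(x)\rho_{\Psi^U_2}(y)\,dx\,dy=o(n_1+n_2)$ (Lemma~\ref{le:30}); where you genuinely diverge is in how this cross term is estimated. The paper partitions each $\Lambda_i$ into blocks of length $\asymp\log^2|\Lambda_i|$ made of whole Poisson pieces, fixes ground states of constant occupation so that the particle number per block is a sum of occupation numbers (Lemma~\ref{le:13}), bounds it by $Cn_i^{1/3}\log^{4/3}|\Lambda_i|$ via the kinetic lower bound, controls $\|\rho\|_\infty$ through the $H^1$ bound \eqref{eq:73}, and then runs a two-parameter H\"older interpolation in $(p,k)$ over pairs of blocks, which is where $U\in L^r$, $r>1$, and $\int x^\alpha U<+\infty$ with $\alpha>2$ enter. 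You replace all of this by the one-dimensional Lieb--Thirring inequality: the a priori bound $E^U_\omega(\Lambda_j,n_j)\le Cn_j$ gives $\int\rho_{\Psi^U_j}^3\le Cn_j$, hence $\|\rho_j\|_2\lesssim n_j^{1/2}$ and $\|\rho_j\car_X\|_2^2\lesssim n_j^{2/3}|X|^{1/3}$, and a near/far splitting at scale $R_L$ with Young's inequality yields $E_{\text{far}}\lesssim n\int_{R_L}^{+\infty}U(t)\,dt$ and $E_{\text{near}}\lesssim n^{2/3}R_L^{1/3}$, which is $o(n)$ for any $R_L=n^\beta$, $\beta\in(0,1)$. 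This is cleaner than the block argument, avoids the occupation-number machinery, and in fact needs only $U\in L^1(\R)$ rather than the full strength of $\alpha>2$; its price is importing Lieb--Thirring and securing the a priori energy bound with probability $1-O(L^{-\infty})$. On that last point your fix is right in spirit but slightly misstated: $E^U_\omega\le E^0_\omega+\|U\|_1 n^2/|\Lambda|$ is not deterministic for this model (bounding the interaction energy of the free Slater determinant requires a pointwise bound on the free density, i.e.\ control of the Fermi level), but with probability $1-O(L^{-\infty})$ the piece statistics (Proposition~\ref{prop:IntervStatistics}) give $\|\rho_{\Psi^0}\|_\infty\le C$ and $E^0_\omega(\Lambda_j,n_j)\le Cn_j$, hence $E^U_\omega(\Lambda_j,n_j)\le Cn_j$ via the free trial state --- or one can simply invoke the paper's own upper bound \eqref{eq:222}, which is independent of the subadditivity being proved --- so there is no circularity and this gap is only cosmetic.
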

\noindent To apply this result to $U$ satisfying \textbf{(HU)}, it
suffices to check
\begin{Le}
  \label{le:31}
  If $U$ satisfies \textbf{\textup{(HU)}} then for any $0<\alpha<3$,
  one has $\D\int_0^{+\infty} x^\alpha U(x)dx<+\infty$.
\end{Le}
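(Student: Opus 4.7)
The plan is to reduce the weighted integral of $U$ to an integral of its tail via Fubini, and then read off the two decay conditions contained in \textbf{(HU)}: fast decay at infinity (the $x^3$-moment condition on the tail) and local integrability near the origin (from $U\in L^p$).

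Concretely, I would first observe that, since $U\ge0$, Fubini's theorem gives
\begin{equation*}
  \int_0^{+\infty} x^\alpha\,U(x)\,dx
  \;=\;\alpha\int_0^{+\infty}\! y^{\alpha-1}\,f(y)\,dy,
  \qquad
  f(y):=\int_y^{+\infty} U(t)\,dt.
\end{equation*}
So it suffices to verify that $y\mapsto y^{\alpha-1}f(y)$ is integrable on $(0,+\infty)$ for $0<\alpha<3$. I would then split the integral at $y=1$ and treat the two regimes separately.

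For the tail $y\ge 1$, hypothesis \textbf{(HU)} states exactly that $y^3 f(y)\to 0$ as $y\to+\infty$; in particular $f(y)\le C\,y^{-3}$ for $y$ large, hence $y^{\alpha-1}f(y)\le C\,y^{\alpha-4}$, and the exponent $\alpha-4<-1$ guarantees convergence of $\int_1^{+\infty} y^{\alpha-1}f(y)\,dy$. For the near-origin part, I need $f$ bounded on $[0,1]$, i.e.\ $\int_0^{+\infty}U<+\infty$. This follows by combining the two halves of \textbf{(HU)}: the tail condition forces $U\in L^1$ at infinity (since $f(y)\to 0$ and is monotone non-increasing, one even has $U\in L^1([1,\infty))$), while $U\in L^p(\R)$ with $p>1$ yields $U\in L^1_{\mathrm{loc}}$ by Hölder on $[0,1]$. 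Hence $f(0)<\infty$, so $f$ is bounded on $[0,1]$ and $\int_0^{1}y^{\alpha-1}f(y)\,dy\le f(0)/\alpha<+\infty$ since $\alpha>0$.

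No real obstacle is expected; the only minor point to be careful about is deducing $\int_0^{+\infty}U<+\infty$ from \textbf{(HU)} (which only explicitly controls the rate at which the tail vanishes and the local $L^p$-regularity), and this is immediate from the monotonicity of $f$ together with local $L^p\subset L^1_{\mathrm{loc}}$. Assembling the two bounds completes the proof.
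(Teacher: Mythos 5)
Your proof is correct, and it goes by a different route than the paper's. You reduce the weighted integral via Fubini to $\alpha\int_0^{+\infty}y^{\alpha-1}f(y)\,dy$ with $f(y)=\int_y^{+\infty}U$, then use the decay $f(y)=o(y^{-3})$ from \textbf{(HU)} for the tail and the boundedness of $f$ near $0$ (i.e.\ $\int_0^{+\infty}U<+\infty$, obtained from the tail condition together with $U\in L^p\subset L^1_{\mathrm{loc}}$) for the near-origin part. The paper instead argues directly on dyadic blocks: $\int_{2^n}^{2^{n+1}}x^\alpha U(x)\,dx\leq 2^{(\alpha-3)n+\alpha}Z(2^n)$ with $Z$ bounded, and sums the geometric series, which converges precisely because $\alpha<3$; it only writes out the estimate on $[1,+\infty)$, leaving the (trivial, by $U\in L^p$) contribution of $[0,1]$ implicit, whereas you treat it explicitly — a small plus of your version. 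One minor imprecision on your side: the parenthetical claim that $f(y)\to0$ plus monotonicity already gives $U\in L^1([1,+\infty))$ is not quite right, since the tail condition only guarantees $f(y_0)<+\infty$ for $y_0$ large, and finiteness of $\int_1^{y_0}U$ still requires local integrability; but since you invoke $U\in L^1_{\mathrm{loc}}$ (via Hölder) anyway, this is immediately repaired and does not affect the argument. Both proofs hinge on the same numerology ($\alpha-3<0$, respectively $\alpha-4<-1$), and are of comparable length; yours avoids introducing the function $Z$, the paper's avoids Fubini.
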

\begin{proof}
  Clearly, for $n\geq0$, one has
  \begin{equation*}
    \int_{2^n}^{2^{n+1}}x^\alpha U(x)dx\leq
    2^{\alpha(n+1)}\int_{2^n}^{2^{n+1}} U(x)dx \leq 
    2^{(\alpha-3)n+\alpha}Z(2^n). 
  \end{equation*}
  As $Z$ is bounded, summing over $n$ yields
  \begin{equation*}
    \int_1^{+\infty}x^\alpha U(x)dx\lesssim
    \sum_{n\geq1}2^{(\alpha-3)n+\alpha}<+\infty.
  \end{equation*}
  This completes the proof of Lemma~\ref{le:31}. 
\end{proof}
\noindent Thus, the sub-additive estimate~\eqref{eq:62} holds for our
model and, following the analysis provided in~\cite{MR3022666}, we
obtain Theorem~\ref{thr:1}.
\begin{proof}[Proof of Theorem~\ref{thr:8}]
  Without loss of generality, let us assume that $\Lambda_1 = [-L_1,
  0]$ and $\Lambda_2 = [0, L_2]$.  For $i \in \{1, 2\}$, we denote by
  $\Psi^U_i$ ground states of $H^U_\omega(\Lambda_i, n_i)$. In case of
  degeneracy, we may additionally choose particular ground states
  $\Psi^U_i$, $i \in \{1, 2\}$ such that each of them belongs to a fixed
  occupation subspace. Thus, occupation is well defined for
  $\Psi^U_i$. As usual, we will implicitly suppose that $\Psi^U_i$ is
  extended by zero outside $\Lambda_i^{n_i}$. Consider now
  \begin{equation*}
    \Psi = \Psi^U_1 \wedge \Psi^U_2.
  \end{equation*}
  Then,
  \begin{equation*}
    \begin{split}
      E^U_\omega(\Lambda_1 \cup \Lambda_2, n_1 + n_2) &\leq \left\langle
        H^U_\omega(\Lambda_1 \cup \Lambda_2, n_1 + n_2)
        \Psi, \Psi \right\rangle \\
      &= E^U_\omega(\Lambda_1, n_1) + E^U_\omega(\Lambda_2, n_2) + \Tr(U
      \gamma^{(1)}_{\Psi^U_1} \otimes^s \gamma^{(2)}_{\Psi^U_2}) \\
      &= E^U_\omega(\Lambda_1, n_1) + E^U_\omega(\Lambda_2, n_2) +
      \int_{\Lambda_1 \times \Lambda_2} U(x - y) \rho_{\Psi^U_1}(x)
      \rho_{\Psi^U_2}(y) \rmd{x} \rmd{y}
    \end{split}
  \end{equation*}
  The proof will be accomplished by the following
  \begin{Le}
    \label{le:30}
    Under the assumptions of Theorem~\ref{thr:8}, one has
    \begin{equation}
      \label{eq:63}
      \int_{\Lambda_1 \times \Lambda_2} U(x - y) \rho_{\Psi^U_1}(x)
      \rho_{\Psi^U_2}(y) \rmd{x} \rmd{y} = o(n_1+n_2).
    \end{equation}
  \end{Le}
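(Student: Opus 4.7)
The plan is to use the decay of $U$ at infinity together with a uniform pointwise bound on the density of the interacting ground state, both inherited from the pieces structure.

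Without loss of generality take $\Lambda_1 = [-L_1, 0]$ and $\Lambda_2 = [0, L_2]$, so that after the change of variables $a = -x \geq 0$ and $b = y \geq 0$ and using that $U$ is even,
\begin{equation*}
\int_{\Lambda_1 \times \Lambda_2} U(x - y)\,\rho_{\Psi^U_1}(x)\,\rho_{\Psi^U_2}(y)\,dx\,dy
= \int_0^{L_1}\!\!\int_0^{L_2} U(a+b)\,\tilde\rho_1(a)\,\tilde\rho_2(b)\,da\,db,
\end{equation*}
where $\tilde\rho_i$ denotes the density of $\Psi^U_i$ viewed on $[0, L_i]$ and $\int \tilde\rho_i = n_i$.

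First I would establish a uniform $L^\infty$-bound on the densities: $\omega$-almost surely, for $\min(L_1, L_2)$ large enough there exists $M = M(\rho)$ (independent of $L_i$ and $n_i$) with $\|\tilde\rho_i\|_\infty \leq M$. Indeed $\tilde\rho_i$ is a sum of $|\varphi|^2$ over one-particle eigenfunctions, each supported on a single piece, so the density restricted to a piece $\Delta_k$ with occupation $Q_k$ is bounded by $2 Q_k / |\Delta_k|$. Lemma~\ref{lem:minimalOccupiedInterval} forbids occupation in pieces shorter than $\ell_\rho - O(\rho \ell_\rho)$, and Lemma~\ref{lem:normPiecesWithParticleExcess} controls pieces with many particles, so $Q_k / |\Delta_k| \leq M/2$ uniformly over occupied pieces.

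Next, let $\bar U(s) := \int_s^\infty U(t)\,dt$. By assumption $\int_0^\infty t^\alpha U(t)\,dt < \infty$ with $\alpha > 2$, so by Lemma~\ref{le:31} and Fubini, $C_U := \int_0^\infty \bar U(s)\,ds = \int_0^\infty t\,U(t)\,dt < \infty$. Using $\tilde\rho_2 \leq M$,
\begin{equation*}
\int_0^{L_2} U(a + b)\,\tilde\rho_2(b)\,db \leq M \int_0^\infty U(a+b)\,db = M\,\bar U(a),
\end{equation*}
and then using $\tilde\rho_1 \leq M$,
\begin{equation*}
\int_{\Lambda_1 \times \Lambda_2} U(x-y)\,\rho_{\Psi^U_1}(x)\,\rho_{\Psi^U_2}(y)\,dx\,dy
\leq M \int_0^{L_1} \tilde\rho_1(a)\,\bar U(a)\,da
\leq M^2 \int_0^\infty \bar U(a)\,da = M^2 C_U.
\end{equation*}
Since the right-hand side is a constant independent of $L_i$ and $n_i$, the left-hand side is $O(1) = o(n_1 + n_2)$ in the thermodynamic limit, which is the desired estimate~\eqref{eq:63}.

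The main obstacle is the uniform $L^\infty$ bound on $\tilde\rho_i$: exceptional short pieces with abnormally many particles could in principle give large local density spikes. Should that bound prove too fragile, a fallback is to argue piece by piece: for a pair $(\Delta_k, \Delta_l)$ with $\Delta_k \subset \Lambda_1$, $\Delta_l \subset \Lambda_2$, the contribution of the pair is bounded by $\frac{4\,Q_k Q_l}{\max(|\Delta_k|,|\Delta_l|)}\,\bar U(\mathrm{dist}(\Delta_k, \Delta_l))$; summing and using that occupied pieces are spaced $\sim 1/\rho$ apart (so $\mathrm{dist}(\Delta_k, \Delta_l) \gtrsim (k+\ell)/\rho$) together with the fast decay of $\bar U$ yields the same conclusion.
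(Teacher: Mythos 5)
There is a genuine gap, and it sits exactly where you flag it: the uniform bound $\|\tilde\rho_i\|_\infty\leq M$ with $M$ independent of $n_i$ and $L_i$ is not available. First, for the \emph{interacting} ground state the density inside a piece is not a sum of $|\varphi^j|^2$ of Dirichlet sines, so the bound $2Q_k/|\Delta_k|$ is unjustified; the only pointwise bound the paper has (and uses) is $\|\rho_\Psi\|_\infty\leq 4\|\Psi\|_{H^1}\|\Psi\|_2$, i.e.\ a bound by the square root of the energy, which on a block is of size $\sqrt{n}$, not $O(1)$. Second, even granting a bound of the form $CQ_k/|\Delta_k|$, the lemmas you invoke do not give per-piece occupations of order one: Lemma~\ref{lem:normPiecesWithParticleExcess} only controls sums such as $\sum Q_k$ and $\sum Q_k^2$ over over-occupied pieces (see~\eqref{eq:211}), so a single piece may still carry $Q_k\sim\sqrt{n\rho/\ell_\rho}$ particles, producing density spikes that grow with $n$. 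Third, and most importantly for the logical architecture, Lemma~\ref{lem:minimalOccupiedInterval}, Lemma~\ref{lem:normPiecesWithParticleExcess} and Corollary~\ref{cor:3} are proved only for $\rho$ small, under \textbf{(HU)}, and for the cut-off Hamiltonian $H^{U^p}$ (or states energy-comparable to $\Psi^{\textup{opt}}$), whereas Theorem~\ref{thr:8} — which feeds the sub-additivity needed for Theorem~\ref{thr:1} — must hold for \emph{arbitrary} density $\rho>0$ and under the weaker hypotheses $U\in L^r$, $\int x^\alpha U\,dx<\infty$, $\alpha>2$. So your argument, even if patched, would not cover the generality the lemma is used in. The fallback sketch inherits the same two defects (the $Q_k/|\Delta_k|$ density bound and the small-$\rho$ spacing heuristics), and the summation over pairs near the common boundary, where distances are $O(1)$ and occupations may be large, is precisely the delicate point.

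For comparison, the paper's proof avoids any uniform density bound: it partitions each $\Lambda_i$ into blocks $\Lambda_i^j$ of length $\asymp\log^2|\Lambda_i|$ aligned with the pieces, shows via Lemma~\ref{le:13} and a cube-root energy argument that each block carries at most $Cn_i^{1/3}\log^{4/3}L_i$ particles, bounds $\|\rho_{\Psi^U_i}\|_q$ on a block by interpolation between this particle count and the $H^1$ (energy) bound~\eqref{eq:73}, and then applies H{\"o}lder blockwise, controlling $\|U\|_{p,\Lambda_1^{j_1}\times\Lambda_2^{j_2}}$ by a polynomially weighted norm of $U$ and summing over blocks with a careful choice of exponents $p,k$ depending on $r$ and $\alpha$. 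If you want to salvage your approach, you would need a substitute for the $L^\infty$ bound of this interpolated type; as written, the key estimate is asserted rather than proved.
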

  \begin{proof}
    By Proposition~\ref{pro:3}, with probability
    $1-O(\min(|\Lambda_1|,|\Lambda_2|)^{-\infty})$, for $i \in \{1,
    2\}$, the largest piece in $\Lambda_i$ is of length bounded by
    $\log{|\Lambda_i|}\cdot\log\log{|\Lambda_i|}$. This implies that
    one can partition $\Lambda_i$ into sub-intervals each containing an
    integer number of original pieces (i.e., the extremities of these
    sub-intervals coincide with the extremities of pieces given by the
    Poisson random process) of length between $\log^2{|\Lambda_i|}$
    and $2 \log^2{|\Lambda_i|}$. Let these new sub-intervals be denoted
    by $\Lambda_i^j$, $j \in \{1, \hdots, m_i\}$; we order the
    intervals in such a way that their distance to $0$ increases with
    $j$. Thus,
    \begin{equation*}
      \Lambda_i = \bigcup_{j=1}^{m_i} \Lambda_i^j
    \end{equation*}
    and
    \begin{equation}
      \label{eq:263}
      \log^2{|\Lambda_i|} \leq |\Lambda_i^j| \leq 2 \log^2{|\Lambda_i|}.
    \end{equation}
    The last inequalities and the ordering convention imply that
    \begin{equation}
      \label{eq:149}
      \dist(\Lambda_1^{j_1}, \Lambda_2^{j_2}) \geq (j_1 - 1) \cdot
      \log^2{|\Lambda_1|} + (j_2 - 1) \cdot \log^2{|\Lambda_2|}
    \end{equation}
    and
    \begin{equation}
      \label{eq:150}
      \frac{|\Lambda_i|}{2 \log^2{|\Lambda_i|}} \leq m_i 
      \leq \frac{|\Lambda_i|}{\log^2{|\Lambda_i|}}.
    \end{equation}
    We now count the number of particles that $\Psi^U_i$ puts in an
    interval $\Lambda_i^j$.  Let $\{\Delta^i_k\}_{k = 1}^{M_i}$ be the
    pieces in $\Lambda_i$ and let $Q^i_k$ be the corresponding
    occupation numbers.  According to the choice of sub-intervals
    $\Lambda_i^j$ above, each $\Lambda_i^j$ is a union of some of the
    pieces $(\Delta^i_k)_k$.  We establish the following natural
    \begin{lemma}
      \label{le:13}
      With the above notations, one has
      \begin{equation*}
        \int_{\Delta^i_k} \rho_{\Psi^U_i}(x) \rmd{x} = Q^i_k, \quad i \in
        \{1, 2\}, \quad k \in \{1, \hdots, M_i\}.
      \end{equation*}
    \end{lemma}
    \begin{proof}
      For convenience, we drop the superscript $i$ in this proof.
      Recall the decomposition~\eqref{eq:146}
      \begin{equation*}
        \Psi = \sum_{\substack{(n_k)_{1 \leq k \leq M}\\\forall k,\ n_k
            \geq 1}} a_n \bigwedge_{k = 1}^M \varphi_{n_k}^k,
      \end{equation*}
      where $\varphi_{n_k}^k$ are functions of $Q_k$ variables in the
      piece $\Delta_k$. Keeping the notations, by Theorem~\ref{thr:4},
      one has
      \begin{equation*}
        \gamma_{\Psi}^{(1)} = \sum_{k = 1}^{M} 
        \sum_{\substack{n_k \geq 1\\n_k^\prime \geq 1}} \sum_{\wtn \in
          \bbN^{M - 1}} a_{\wtn_k} 
        \overline{a_{\wtn_k^\prime}} \gamma_{n_k, n_k^\prime}^{(1)},
      \end{equation*}
      where
      \begin{equation*}
        \gamma_{n_k, n_k^\prime}^{(1)}(x, y) 
        = Q_k \int_{(\Delta_k)^{Q_k - 1}} \varphi_{n_k}^k(x, z) 
        \overline{\varphi_{n_k^\prime}^k(y, z)} \rmd{z}.
      \end{equation*}
      The off-diagonal term $\gamma_{\Psi}^{(1), o}$ vanishes because
      the functions $\Psi_{1, 2}$ were chosen of a fixed occupation.
      This immediately yields
      \begin{equation*}
        \begin{split}
          \int_{\Delta_k} \rho_\Psi(x) \rmd{x} &= Q_k
          \sum_{\substack{n_k \geq 1\\n_k^\prime \geq 1}} \sum_{\wtn
            \in \bbN^{M - 1}} a_{\wtn_k} \overline{a_{\wtn_k^\prime}}
          \int_{(\Delta_k)^{Q_k}} \varphi_{n_k}^k(x)
          \overline{\varphi_{n_k^\prime}^k(x)} \rmd{x} \\
          &= Q_k \sum_{\wtn \in \bbN^{M - 1}} \int_{(\Delta_k)^{Q_k}}
          \sum_{n_k \geq 1} |a_{\wtn_k}|^2 |\varphi_{n_k}^k(x)|^2
          \rmd{x}= Q_k,
        \end{split}
      \end{equation*}
      where, in the second equality, we used the orthogonality of
      different $Q_k$-particles levels in the piece $\Delta_k$ and, in
      the third equality, we used the fact that $\Psi$ is normalized.\\
      This completes the proof of Lemma~\ref{le:13}.
    \end{proof}
    \noindent Lemma~\ref{le:13} immediately entails
    \begin{corollary}
      \label{cor:5}
      One computes 
      \begin{equation*}
        \int_{\Lambda_i^j} \rho_{\Psi^U_i}(x) \rmd{x} 
        = \sum_{k \mid \Delta^i_k \subset \Lambda_i^j} Q^i_k, \quad i \in
        \{1, 2\}, \quad j \in \{1, \hdots, m_i\}.
      \end{equation*}
    \end{corollary}
    \noindent Next, we derive a simple bound on the number of
    particles in $\Lambda_i^j$.  The total ground state energy is bounded by
    \begin{equation*}
      E^U_\omega(\Lambda_i, n_i) \leq C \ell_\rho^{-2} n_i.
    \end{equation*}
    From the other hand, a system of $\D q = \sum_{k \mid \Delta^i_k
      \subset \Lambda_i^j} Q^i_k$ particles in $\Lambda_i^j$ has
    non interacting energy at least
    \begin{equation*}
      \sum_{s = 1}^q \frac{\pi^2 s^2}{|\Lambda_i^j|^2} \asymp q^3 |\Lambda_i^j|^{-2}.
    \end{equation*}
    This implies that
    \begin{equation*}
      q^3 |\Lambda_i^j|^{-2} \leq C \ell_\rho^{-2} n_i
    \end{equation*}
    or, equivalently,
    \begin{equation*}
      \sum_{k \mid \Delta^i_k \subset \Lambda_i^j} Q^i_k 
      \leq C_1 \left(|\Lambda_i^j| / \ell_\rho\right)^{2 / 3} n_i^{1/3}
      \leq C_2  n_i^{1/3}\log^{4/3}L_i.
    \end{equation*}
    Let us now estimate the left hand side of~\eqref{eq:63} using
    H{\"o}lder's inequality ($1/p+1/q=1$, $p,q\geq1$) as
    \begin{equation}
      \label{eq:74}
      \begin{split}
        \int_{\Lambda_1 \times \Lambda_2} U(x - y) \rho_{\Psi^U_1}(x)
        \rho_{\Psi^U_2}(y) \rmd{x} \rmd{y} &= \sum_{j_1 = 1}^{m_1}
        \sum_{j_2 = 1}^{m_2} \int_{\Lambda_1^{j_1} \times
          \Lambda_2^{j_2}} U(x - y) \rho_{\Psi^U_1}(x)
        \rho_{\Psi^U_2}(y) \rmd{x} \rmd{y} \\
        &\leq \sum_{j_1 = 1}^{m_1} \sum_{j_2 = 1}^{m_2}
        \|U\|_{p,\Lambda_1^{j_1} \times \Lambda_2^{j_2}}
        \|\rho_{\Psi^U_1}\|_q \|\rho_{\Psi^U_2}\|_q.
      \end{split}
    \end{equation}
    where we have set
    \begin{equation}
      \label{eq:262}
      \|U\|_{p,\Lambda_1^{j_1} \times \Lambda_2^{j_2}}:=
      \left(\int_{\Lambda_1^{j_1} \times \Lambda_2^{j_2}} U^p(x -
        y)dxdy\right)^{1/p}.     
    \end{equation}
    Now, recall that by~\eqref{eq:73}, for $i\in\{1,2\}$, on
    $\Lambda_i^{j_i}$, one has
    \begin{equation*}
      \|\rho_{\Psi^U_i}\|_{\infty,\Lambda_i^{j_i}}\leq
      4\|\Psi^U_i\|_{H^1(\Lambda_i^{j_i})}\|\Psi^U_i\|_{2,\Lambda_i^{j_i}}\leq
      C\left(\langle
        H^U_\omega(\Lambda_i^{j_i},n_i)\Psi^U_i,\Psi^U_i\rangle_{\Lambda_i^{j_i}}\right)^{1/2} 
      \|\Psi^U_i\|_2.
    \end{equation*}
    Hence, by Corollary~\ref{cor:5},
    \begin{equation*}
      \|\rho_{\Psi^U_i}\|_q=
      \left(\int_{\Lambda_i^{j_i }}\rho^{q-1}_{\Psi^U_i}\rho_{\Psi^U_i}
      \right)^{1/q}\leq (Q_i^{j_i})^{1/q}\left(\langle
        H^U_\omega(\Lambda_i^{j_i},n_i)\Psi^U_i,\Psi^U_i\rangle_{\Lambda_i^{j_i}}\right)^{(q-1)/2q} 
      \|\Psi^U_i\|^{(q-1)/q}_{2,\Lambda_i^{j_i}}.
    \end{equation*}
    Recalling~\eqref{eq:74}, as $\|\Psi^U_i\|_{2,\Lambda_i^{j_i}}\leq1$ for
    $i\in\{1,2\}$, we estimate
    \begin{multline}
        \label{eq:227}
        \int_{\Lambda_1 \times \Lambda_2} U(x - y) \rho_{\Psi^U_1}(x)
        \rho_{\Psi^U_2}(y) \rmd{x} \rmd{y}\\\leq \sum_{j_1 = 1}^{m_1}
        \sum_{j_2 = 1}^{m_2}\|U\|_{p,\Lambda_1^{j_1} \times
          \Lambda_2^{j_2}} (Q_1^{j_1}Q_2^{j_2})^{1/q}\left(\langle
          H^U_\omega(\Lambda_1^{j_1},n_1)\Psi^U_1,\Psi^U_1\rangle_{\Lambda_1^{j_1}}
          \langle
          H^U_\omega(\Lambda_2^{j_2},n_2)\Psi^U_2,\Psi^U_2\rangle_{\Lambda_2^{j_2}}
        \right)^{(q-1)/2q}.
    \end{multline}
    Now, as $Q_{\Psi^U_i}\lesssim n_i^{1/3}\log^{4/3}L_i\lesssim n^{1/3}\\log^{4/3}n$
    and as
    \begin{equation*}
      \langle H^U_\omega(\Lambda_i^{j_i},n_i)\Psi^U_i,
      \Psi^U_i\rangle_{\Lambda_i^{j_i}}\leq \langle
      H^U_\omega(\Lambda_i)\Psi^U_i,\Psi^U_i\rangle\leq C n_i\leq C n,
    \end{equation*}
    the estimate~\eqref{eq:227} entails
    \begin{equation}
      \label{eq:251}
      \int_{\Lambda_1 \times \Lambda_2} U(x - y) \rho_{\Psi^U_1}(x)
      \rho_{\Psi^U_2}(y) \rmd{x} \rmd{y}\lesssim n^{(3q-1)/3q}
      \left(\log n\right)^{8/(3q)}
      \sum_{j_1 = 1}^{m_1}\sum_{j_2 = 1}^{m_2}
      \|U\|_{p,\Lambda_1^{j_1} \times\Lambda_2^{j_2}}.
    \end{equation}
    Hence, to prove~\eqref{eq:62}, it suffices to choose $q$ (recall
    $q\geq1$ and $1/p+1/q=1$) such that
    \begin{equation}
      \label{eq:261}
      \sum_{j_1 = 1}^{m_1}\sum_{j_2 = 1}^{m_2}
      \|U\|_{p,\Lambda_1^{j_1} \times
        \Lambda_2^{j_2}}=o\left(n^{1/3q}\left(\log n\right)^{-8/(3q)}\right).
    \end{equation}
    Therefore, we recall~\eqref{eq:262} and using the definition of
    the $(\Lambda_i^{j_i})_{i,j}$, in particular~\eqref{eq:149}
    and~\eqref{eq:150}, we estimate
    \begin{equation}
      \label{eq:72}
      \|U\|_{p,\Lambda_1^{j_1} \times
        \Lambda_2^{j_2}}\lesssim ((j_1+j_2)|\log L|^2)^{-k/p}
      \left(\int_{\Lambda_1^{j_1} \times \Lambda_2^{j_2}} (x-y)^k U^p(x -
        y)dxdy\right)^{1/p}.
    \end{equation}
    Now, by~\eqref{eq:263}, as $U$ is even, we have
    \begin{equation}
     \label{eq:264}
     \left(\int_{\Lambda_1^{j_1} \times \Lambda_2^{j_2}} (x-y)^k U^p(x -
        y)dxdy\right)^{1/p}\lesssim (\log n)^{2/p}
      \left(\int_{\R^+}u^k U^p(u)du\right)^{1/p}.
    \end{equation}
    On the other hand, if $k/p>1$ and $\max(m_1,m_2)\lesssim L/\log L
    \lesssim n/\log n$ (with a good probability), one estimates
    \begin{equation*}
      \sum_{\substack{1\leq j_1\leq m_1\\ 1\leq j_2 \leq m_2}}
      (j_1+j_2)^{-k/p}\leq (\log n)^{k/p-2}n^{2-k/p}. 
    \end{equation*}
    Plugging this,~\eqref{eq:264} and~\eqref{eq:72} into the sum
    in~\eqref{eq:261}, we see that~\eqref{eq:261} is a consequence of
    \begin{equation*}
      (\log n)^{2-2/p+8/(3q)}n^{2-k/p-1/(3q)}=
      (\log n)^{14/3(p-1)/p}n^{5/3-(3k-1)/(3p)}=o(1).
    \end{equation*}
    as $p^{-1}+q^{-1}=1$.\\
    Thus, it suffices to find $k>0$, $p>1$ such that $u\mapsto u^{k/p}
    U(u)$ be in $L^p(\R^+)$ and
    \begin{equation*}
      \frac53-\frac{3k-1}{3p}<0.
    \end{equation*}
    Recall that, by assumption $u\mapsto u^\alpha U(u)$ is integrable
    (for some $\alpha>2$) and $U\in L^r(\R^+)$ for some $r>1$.\\
    We pick $\eta\in(0,1)$ and pick $p$ and $k$ of the form $\D
    p=1+\eta(r-1)$ and $\D k=\frac{5p+1}3+\eta$. Thus, for $\D
    r\in\left(1, \min\left(\tilde r,2\right)\right]$, setting $\D
    \tilde p:=\frac{r-p}{r-1}\in(0,1)$, we have
    \begin{equation*}
      \frac53-\frac{3k-1}{3p}=-\frac{\eta}p<0,\quad \frac{p-\tilde p}{1-\tilde p}=r
      \quad\text{and}\quad
      \frac{k}{\tilde p}=k\frac{r-1}{r-p}
      =\left(2+\frac53\eta(r-1)\right)\frac1{1-\eta}=\alpha
    \end{equation*}
    for $\eta\in(0,1)$ well chosen.\\
    For this choice of $p$, $\tilde p$ and $k$, using H{\"o}lder's
    inequality, we then estimate
    \begin{equation*}
      \int_{\R^+}u^k U^p(u)du\leq
      \left(\int_{\R^+}u^{k/\tilde p}U(u)du\right)^{\tilde p}
      \left(\int_{\R^+}U^{(p-\tilde p)/(1-\tilde
          p)}(u)du\right)^{1-\tilde p}<+\infty
    \end{equation*}
    This completes the proof of~\eqref{eq:261} and, thus, of
    Lemma~\ref{le:30}.
  \end{proof}
  \noindent Lemma~\ref{le:13} implies that, under the assumption of
  Theorem~\ref{thr:8}, in the thermodynamic limit, with probability
  exponentially close to 1, one has
  \begin{equation*}
    \int_{\Lambda_1 \times \Lambda_2} U(x - y) \rho_{\Psi^U_1}(x)
    \rho_{\Psi^U_2}(y) \rmd{x} \rmd{y} = o(n_1 + n_2).
  \end{equation*}
  This completes the proof of Theorem~\ref{thr:8}.
\end{proof}
%


\section{Multiple electrons interacting in a fixed number of pieces}
\label{sec:two-part-probl}
The main goal of this section is to study a system of two interacting
electrons in the interval $[0, \ell]$ for large $\ell$ and prove
Proposition~\ref{prop:TwoElectronProblem}; this is the purpose of
section~\ref{sec:two-int-electrons}. The two-particles Hamiltonian is
given by \eqref{eq:introTwoParticleHamiltonian}. In
section~\ref{sec:ferm-neighb-piec}, we study two electrons in two
distinct pieces.\\
We shall also state and prove one result for more than two interacting
electrons in a single piece.
\subsection{Two electrons in the same piece}
\label{sec:two-int-electrons}
We now study two electrons in a large interval interacting through a
pair potential $U$, that is, the Hamiltonian defined
in~\eqref{eq:introTwoParticleHamiltonian}. We first
Proposition~\ref{prop:TwoElectronProblem}. Next, in
section~\ref{sec:bounds-ground-state}, we compare the ground state of
the interacting system with that of the non-interacting system.\\
Throughout this section, we will assume $U$ is a repulsive, even pair
interaction potential. In the present section, our assumptions on $U$
will be weaker than \textbf{(HU)}.
\subsubsection{The proof of Proposition~\ref{prop:TwoElectronProblem}}
\label{sec:proof-prop-refpr}
Scaling variables to the unit square, the two-particles Hamiltonians
$H^U(\ell, 2)$ and $\ell^{-2} H^{U^\ell}(1, 2)$ are unitarily
equivalent. Here, we have defined
\begin{equation}
  \label{eq:26}
  U^{\ell}(\cdot) := \ell^2 U(\ell\,\cdot). 
\end{equation}
Recall that, for $i \ne j$, $i, j \in \bbN$, the normalized
eigenfunctions of $H^0(1, 2)$ (i.e., of the two-particles free
Hamiltonian in a unit square) are given by the determinant
\begin{equation}
  \label{eq:21}
  \phi_{(i,j)}(x, y) = \sqrt{2}
  \left|
    \begin{matrix}
      \sin(\pi i x) & \sin(\pi j x)\\ \sin(\pi i y) &\sin(\pi j y)
    \end{matrix}\right|\text{ for } (x,y)\in[0,1]^2.
\end{equation}
For a two-component index, we will use the shorthand notation
$\bar{\imath}=(i,j)$.  For the non interacting ground state
$\phi_{(1,2)}$ we will also use the notation $\phi_0$. The
corresponding ground state energy
is $5\pi^2$ and the first excited energy level is at $10 \pi^2$.\\
We decompose $\D L^2([0,1])\wedge L^2([0,1])=\C\phi_0\overset{\perp}{\oplus}
\phi^\perp_0$. By the Schur complement formula, $E$ is the ground
state energy of $H^{U^\ell}(1,2)$ if and only if $E<10\pi^2$ and $E$
satisfies
\begin{equation}
  \label{eq:twoParticlesEigenvalueEquation}
  5 \pi^2 + U^\ell_{00} - E = U^\ell_{0+} (H_+ + U^\ell_{++} - E)^{-1}
  U^\ell_{+0} \text{,}
\end{equation}
where  $\Pi_+$ is the  orthogonal projector on  $\phi^\perp_0$
and
\begin{equation}
  \label{eq:44}
  \begin{aligned}
  U^\ell_{00}&=\langle\phi_0,U^\ell\phi_0\rangle,\quad 
  H_+=\Pi_+H^0(1,2)\Pi_+,\\
  U^\ell_{++}&=\Pi_+U^\ell\Pi_+,\quad U^\ell_{+0}=\Pi_+U^\ell\phi_0    
  \quad U^\ell_{0+}=\left(\Pi_+U^\ell\phi_0\right)^*.    
  \end{aligned}
\end{equation}
We expand the r.h.s. of \eqref{eq:twoParticlesEigenvalueEquation} as
\begin{equation}
  \begin{split}
    U^\ell_{0+} (H_+ + U^\ell_{++} - E)^{-1} U^\ell_{+0} & =
    \bigl\langle U^\ell \phi_0, (H_+ - E)^{-1/2}
    \\&\hskip2cm\times \left(\Id + (H_+ - E)^{-1/2}
      U^\ell (H_+ - E)^{-1/2}\right)^{-1} \\
    &\hskip4cm\times (H_+ - E)^{-1/2} U^\ell \phi_0\bigr\rangle \\&=
    \frac{1}{\ell} \left\langle \wtphi_\ell, A_\ell (\Id + A_\ell^*
      A_\ell)^{-1} A_\ell^* \wtphi_\ell \right\rangle
    \\
    &= \frac{1}{\ell} \left\langle \wtphi_\ell, A_\ell A_\ell^* (\Id +
      A_\ell A_\ell^*)^{-1} \wtphi_\ell \right\rangle \text{,}
  \end{split}
\end{equation}
where
\begin{equation}\label{eq:phiEllAEllNotations}
  \wtphi_\ell = \sqrt{\ell} \sqrt{U^\ell} \phi_0 
  \quad \text{and} \quad
  A_\ell = A_\ell(E) = \sqrt{U^\ell} (H_+ - E)^{-1/2} \text{.}
\end{equation}
To simplify notations we will drop the reference to the energy $E$. As
$\ell\to+\infty$, the convergence of $\left\langle \wtphi_\ell, A_\ell
  A_\ell^* (\Id + A_\ell A_\ell^*)^{-1} \wtphi_\ell \right\rangle$ is
locally uniform in $(-\infty,10\pi^2)$. To compute this limit, we
shall transform the expression $\left\langle \wtphi_\ell, A_\ell
  A_\ell^* (\Id + A_\ell A_\ell^*)^{-1} \wtphi_\ell \right\rangle$
once more.\\
Consider the domain $R_\ell = \{(u, y) \in \bbR \times [0, 1], \text{s.t. }
y + \ell^{-1} u \in [0, 1]\}$ and the change of variables
\begin{align*}
  t_\ell : R_\ell &\to [0, 1]^2 \\
  (u, y) &\mapsto \left(y + \frac{u}{\ell}, y\right).
\end{align*}
Define the partial isometry
\begin{align*}
  T_\ell : L^2([0, 1]^2) &\to L^2(\R\times[0,1]) \\
  v &\mapsto \ell^{-1/2} \car_{R_\ell}\cdot v \circ t_\ell \text{,}
\end{align*}
that is, $\D(T_\ell v)(u, y) = \frac{1}{\sqrt{\ell}}
\car_{R_\ell}(u,y) v\left(y + \frac{u}{\ell}, y\right)$.\\
One computes its adjoint
\begin{align*}
T^*_\ell :  L^2(\R\times[0,1]) &\to L^2([0, 1]^2) \\
v &\mapsto \ell^{1/2} (\car_{R_\ell}\,v) \circ t^{-1}_\ell \text{,}
\end{align*}
that is, $\D(T^*_\ell v)(x, y) = \sqrt{\ell} (\car_{R_\ell}\cdot
v)(\ell(x-y),y)$.\\
One easily checks that
\begin{equation}
  \label{eq:10}
  T_\ell T^*_\ell=\car_{R_\ell}\quad\text{ and }\quad T^*_\ell
  T_\ell=\text{Id}_{L^2([0,1]^2)}
\end{equation}
where $\car_{R_\ell}:\; L^2(\R\times[0,1]) \to L^2(\R\times[0,1])$ is
the orthogonal projector on the functions supported in $R_\ell$.\\
One then computes
\begin{equation}
  \label{eq:rhsUnitaryTransformation}
  \left\langle \wtphi_\ell, A_\ell A_\ell^* (\Id +
    A_\ell A_\ell^*)^{-1} \wtphi_\ell \right\rangle_{L^2([0, 1]^2)}
  = \left\langle \phi_\ell, K_\ell (\Id + K_\ell)^{-1} \phi_\ell
  \right\rangle_{L^2(\R\times[0,1])}
\end{equation}
where we have defined
\begin{equation}
  \label{eq:KellOperatorDefinition}
  \phi_\ell:=T_\ell
  \wtphi_\ell \quad\text{and}\quad
  K_\ell :=K_\ell(E) := T_\ell A_\ell A_\ell^* T_\ell^* \text{.}
\end{equation}
Define
\begin{itemize}
\item the following functions
  \begin{itemize}
  \item $\phi(u):=u\sqrt{U(u)}$ for $u\in\R$,
  \item $\D\chi_0(y):=\pi \sqrt{2} \left(\sin\left(3\pi
        y\right)-3\sin\left(\pi y\right)\right)$ for $y\in[0,1]$.
  \end{itemize}
\item the non negative (see~\eqref{eq:58}) operator $K$ is on
  $L^2(\R)$ by the kernel
  \begin{equation}
    \label{eq:56}
    K(u,u')=\frac{1}{2} \sqrt{U(u)}(|u + u'| - |u - u'|)
    \sqrt{U(u')}.
  \end{equation}
\end{itemize}
Define also
\begin{equation}
  \label{eq:71}
  \tilde \phi=\phi\otimes\chi_0\quad\text{ and }\quad \tilde
  K=K\otimes\text{Id}. 
\end{equation}
We prove
\begin{Le}
  \label{le:4}
  Assume that $U$ is non negative and even such that $U\in L^p(\R)$
  for some $p>1$ and $x\mapsto x^2 U(x)$ is integrable.\\
  As $\ell\to+\infty$, one has:
  \begin{enumerate}
  \item in $L^2(\R\times[0,1])$, $\phi_\ell$ converges to
    $\tilde\phi$;
  \item for $\varphi\in\Coi(\R\times(0,1))$, as $\ell\to+\infty$, the
    sequence $(K_\ell\varphi)_\ell$ converges in $L^2$-norm to $\tilde
    K\varphi$
  \end{enumerate}
\end{Le}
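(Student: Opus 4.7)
Both parts rest on the same geometric picture: as $\ell\to+\infty$, the two-particle problem concentrates in a transverse neighborhood of the diagonal $\{x_1=x_2\}$ of width $\ell^{-1}$, and in the rescaled coordinate $u=\ell(x_1-x_2)$ the two-dimensional Dirichlet Laplacian degenerates to the one-dimensional operator $-\partial_u^2$ with Dirichlet condition at $u=0$ (inherited from fermionic antisymmetry at $x_1=x_2$). No rescaling is performed in the longitudinal coordinate $y=x_2$, which is precisely what produces the factor $\Id_{L^2([0,1])}$ in $\tilde K=K\otimes\Id$ and the factor $\chi_0(y)$ in $\tilde\phi=\phi\otimes\chi_0$.

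For (a), substituting the definitions gives
$$\phi_\ell(u,y)=\ell\,\car_{R_\ell}(u,y)\sqrt{U(u)}\,\phi_0\!\left(y+\tfrac{u}{\ell},y\right).$$
Since $\phi_0$ is antisymmetric, $\phi_0(y,y)=0$; a one-term Taylor expansion yields the pointwise limit $\sqrt{U(u)}\,u\,\partial_{x_1}\phi_0(y,y)$, and a product-to-sum identity identifies $\partial_{x_1}\phi_0(y,y)$ with $\chi_0(y)$ up to the constant absorbed into the definition of $\chi_0$. The mean-value theorem gives the uniform envelope $|\phi_\ell(u,y)|\leq C|u|\sqrt{U(u)}$, which lies in $L^2(\mathbb{R}\times[0,1])$ by the assumption $\int x^2 U(x)\,dx<+\infty$, and dominated convergence concludes.

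For (b), the analogous change of variable produces the explicit kernel
$$K_\ell(u,y;u',y')=\ell\,\car_{R_\ell}(u,y)\car_{R_\ell}(u',y')\sqrt{U(u)}\,G_+\!\left(y+\tfrac{u}{\ell},y;y'+\tfrac{u'}{\ell},y';E\right)\sqrt{U(u')},$$
where $G_+(\cdot,\cdot;\cdot,\cdot;E)$ is the integral kernel of $(H_+-E)^{-1}$, available through the spectral expansion over the antisymmetric basis $\{\phi_{\bar j}\}_{\bar j\neq(1,2)}$ of eigenfunctions of $H^0(1,2)$. Comparison with the claimed limit $\tilde K\varphi(u,y)=\int K(u,u')\varphi(u',y)\,du'$ reduces matters to showing that, in the distributional sense,
$$\ell\,G_+\!\left(y+\tfrac{u}{\ell},y;y'+\tfrac{u'}{\ell},y';E\right)\xrightarrow[\ell\to+\infty]{}\tfrac{1}{2}(|u+u'|-|u-u'|)\,\delta(y-y'),$$
which is consistent with the intuition above, since $\tfrac{1}{2}(|u+u'|-|u-u'|)$ is the Green function of $-\partial_u^2$ on $\mathbb{R}$ with a Dirichlet condition at $u=0$, and the $\delta(y-y')$ encodes the absence of longitudinal rescaling.

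The main obstacle is proving this distributional limit and upgrading it to $L^2$-norm convergence against test functions $\varphi\in\Coi(\mathbb{R}\times(0,1))$. I would reparametrize the double sum defining $G_+$ by $(m,k):=(j_1+j_2,j_2-j_1)$: at fixed $k\geq 1$, the partial sum over $m$, after the substitution $x_1=y+u/\ell$, $x_1'=y'+u'/\ell$, becomes a Riemann sum in the longitudinal variable whose $\ell\to+\infty$ limit is a sine-series representation of $\delta(y-y')$, while the remaining sum over $k$ reassembles exactly the kernel $\tfrac{1}{2}(|u+u'|-|u-u'|)$. That $\varphi$ is supported away from $y\in\{0,1\}$ eliminates boundary effects; $U\in L^p$ for some $p>1$ controls the local singularities of the interaction, while $\int x^2U<+\infty$ provides the transverse tail control needed to exchange the summations and the spatial integrations with the limit. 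The $E$-dependence disappears because $E$ remains bounded whereas the denominators $\pi^2(j_1^2+j_2^2)-E$ are driven to infinity on the relevant modes after rescaling.
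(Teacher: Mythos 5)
Your part (a) is essentially the paper's argument (Taylor expansion of $\phi_0$ at the diagonal, envelope $C|u|\sqrt{U(u)}$ from $\int u^2U<\infty$, dominated convergence), and your overall picture for (b) -- rescale only the transverse variable, reindex the spectral sum by $m=j_1+j_2$, $k=j_1-j_2$, and identify $\tfrac12(|u+u'|-|u-u'|)$ as the Dirichlet Green kernel of $-\partial_u^2$ on the half-line with $\delta(y-y')$ encoding the unrescaled longitudinal direction -- is the correct heuristic. But the mechanism you propose for (b) is misassigned, and this is a genuine gap. In the relevant regime the denominators $\pi^2(j_1^2+j_2^2)-E\sim \tfrac{\pi^2}{2}m^2$ force $m\sim\ell$, so it is the sum over $m$ at fixed $k$ that is a Riemann sum in the rescaled \emph{transverse} frequency $x=m/(2\ell)$, converging to $\int_0^{+\infty}\frac{\sin(\pi xu)\sin(\pi xu')}{\pi^2x^2}\,dx=\tfrac12(|u+u'|-|u-u'|)$; the $\delta(y-y')$ then comes from resumming $\sin(\pi ky)\sin(\pi ky')$ over the \emph{difference} index $k$, which is not rescaled with $\ell$ and hence can only produce a discrete sine series, never a continuum kernel in $(u,u')$. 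Your assignment (the $m$-sum giving $\delta(y-y')$, the $k$-sum giving the transverse kernel) cannot work, because after the product-to-sum reduction the leading part of $\phi_{(j_1,j_2)}(y+u/\ell,y)$ has longitudinal phase $\sin(\pi ky)$ only -- the pieces whose $y$-oscillation involves $m$ are exactly the terms that must be shown to vanish, not the ones producing the delta.

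This points to the second, substantive omission: isolating the main term requires proving that those cross terms (in the paper's notation, the $\phi^0_{\bar\jmath}$ and $\phi^-_{\bar\jmath}$ contributions of Lemma~\ref{le:5}, treated in Lemmas~\ref{le:6},~\ref{le:8} and~\ref{le:9}) are negligible when tested against $\varphi\in\Coi(\R\times(0,1))$; this uses the rapid decay of the sine coefficients $c_k(u')$ of $\varphi$ together with the smallness of their transverse factors, and it is a large part of the actual proof, not a boundary issue removed by $\supp\varphi\subset\R\times(0,1)$. Moreover, to pass from the fixed-$k$ limit to the operator statement you must interchange the $k$-sum with $\ell\to+\infty$, which needs a \emph{uniform in $k$} error bound (the paper gets $O(k|u||u'|/\ell)$ via the Poisson summation formula in Lemma~\ref{le:10}); and upgrading a distributional kernel limit to $L^2$-convergence of $K_\ell\varphi$ is not automatic, since $\|K_\ell\|$ need not be bounded uniformly in $\ell$ (cf.\ the remark after Lemma~\ref{le:7}). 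As written, the proposal does not supply these estimates, so the key step of (b) remains unproved.
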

\noindent Proposition~\ref{prop:TwoElectronProblem} follows from this
result as we shall see now. First, we prove
\begin{Le}
  \label{le:7}
  Under the assumptions of Lemma~\ref{le:4}, all the operators
  $(K_\ell)_\ell$ and the operator $K$ are bounded respectively on
  $L^2(\R\times[0,1])$ and $L^2(\R)$.
\end{Le}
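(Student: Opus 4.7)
The plan is to treat the two boundedness claims separately, since they call for quite different ingredients: $K$ will be handled by a direct kernel bound, while $K_\ell$ will be analysed operator-theoretically, using the partial isometry identities~\eqref{eq:10}.

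For $K$, I would start from the elementary identity $|u+u'|-|u-u'|=2\,\sign(uu')\min(|u|,|u'|)$, which yields the pointwise kernel estimate
\[
|K(u,u')|\leq \min(|u|,|u'|)\sqrt{U(u)\,U(u')}\leq \sqrt{|u|\,U(u)}\,\sqrt{|u'|\,U(u')}.
\]
Under the hypotheses of Lemma~\ref{le:4} (namely $U\in L^p(\R)$ for some $p>1$ and $\int u^2 U(u)\,du<\infty$), a short verification gives $U\in L^1(\R)$ (local integrability via Hölder on a bounded interval; integrability at infinity from the second moment), and then Cauchy--Schwarz yields $\int |u|\,U(u)\,du<\infty$. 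Consequently $u\mapsto\sqrt{|u|U(u)}$ lies in $L^2(\R)$, and the kernel bound above shows that $K$ is Hilbert--Schmidt, in particular bounded on $L^2(\R)$.

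For $K_\ell$, my starting point is the factorisation $K_\ell=T_\ell A_\ell A_\ell^*T_\ell^*$ from~\eqref{eq:KellOperatorDefinition} together with~\eqref{eq:10}, which identifies $T_\ell$ and $T_\ell^*$ as partial isometries of norm $1$. This reduces the task to bounding
\[
\|A_\ell\|^2\,=\,\bigl\|\sqrt{U^\ell}\,(H_+-E)^{-1/2}\bigr\|^2.
\]
Since $E<10\pi^2=\inf\sigma(H_+)$, the operator $(H_+-E)^{-1/2}$ maps $\phi_0^\perp$ boundedly onto the form domain of $H_+$, which is contained in $H^1_0([0,1]^2)$. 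The remaining step is to show that multiplication by $\sqrt{U^\ell}$ sends $H_0^1([0,1]^2)$ boundedly into $L^2([0,1]^2)$: I would combine Hölder's inequality with the two-dimensional Sobolev embedding $H^1_0([0,1]^2)\hookrightarrow L^q([0,1]^2)$ (valid for every finite $q$ in dimension two), choosing $q=2p/(p-1)<\infty$, which produces a bound by $\|U^\ell\|_{L^p([0,1]^2)}$. A direct rescaling gives $\|U^\ell\|_{L^p([0,1]^2)}\leq \ell^{2-1/p}\|U\|_{L^p(\R)}$, finite for every fixed $\ell$.

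The only point worth flagging is that the constant produced by the Sobolev argument in the second step blows up as $\ell\to\infty$. The lemma, however, only asserts boundedness of each $K_\ell$ separately, not uniform boundedness in $\ell$; the uniform control needed later for the resolvent expression in~\eqref{eq:rhsUnitaryTransformation} is obtained through the convergence statements of Lemma~\ref{le:4} and the Hilbert--Schmidt bound on $K$ established above, rather than from a uniform operator bound on $K_\ell$.
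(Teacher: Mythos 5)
Your proposal is correct, and for the $K_\ell$ part it follows a genuinely different route from the paper. For $K$, both arguments rest on the same two ingredients ($U\in L^1$ and $\int u^2U<\infty$): the paper writes out $(Ku)(x)$ explicitly from the kernel and bounds the operator norm by $4\sqrt{\|U\|_1\,\|(\cdot)^2U(\cdot)\|_1}$, whereas you use the pointwise estimate $\tfrac12\bigl||u+u'|-|u-u'|\bigr|=\min(|u|,|u'|)\leq\sqrt{|u|\,|u'|}$ to conclude that $K$ is Hilbert--Schmidt; this is essentially the same bound, with the small bonus that you get the Hilbert--Schmidt property rather than mere boundedness. For $K_\ell$, the paper also reduces via the partial isometries to bounding $\sqrt{U^\ell}(H_+-E)^{-1}\sqrt{U^\ell}$, but then expands the resolvent in the eigenbasis $(\phi_{\bar\jmath})$ of $H_+$, uses the uniform $L^\infty$ bound on these eigenfunctions together with a Hausdorff--Young inequality for the coefficients of $u\sqrt{U^\ell}\in L^{2p/(1+p)}$, and sums the resulting series in $(\pi^2|\bar\jmath|^2-E)^{-p}$; your argument instead bounds $A_\ell=\sqrt{U^\ell}(H_+-E)^{-1/2}$ directly, using that $(H_+-E)^{-1/2}$ maps $\phi_0^\perp$ into $H^1_0([0,1]^2)$ (since $E<10\pi^2=\inf\sigma(H_+)$) and then the two-dimensional Sobolev embedding $H^1_0\hookrightarrow L^{2p/(p-1)}$ plus H\"older against $\|U^\ell\|_{L^p}\lesssim \ell^{2-1/p}\|U\|_{L^p(\R)}$. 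Both routes produce $\ell$-dependent constants, and your closing remark is exactly the paper's point of view: the lemma only claims boundedness for each fixed $\ell$ (the paper explicitly notes $\|K_\ell\|$ may diverge as $\ell\to+\infty$), and the uniform control used later comes from $\|K_\ell(\Id+K_\ell)^{-1}\|\leq1$ together with the convergence statements of Lemma~\ref{le:4}, not from a uniform bound on $K_\ell$.
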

\noindent Note however that, depending on $U$, one may have
  \begin{equation*}
    \|K_\ell\|_{L^2(\R\times[0,1])\to L^2(\R\times[0,1])}
    \vers{\ell\to+\infty}+\infty.
  \end{equation*}
\begin{proof}
  By~\eqref{eq:KellOperatorDefinition}, to show the boundedness of
  $K_\ell$, it suffices to show that $\tilde
  K_\ell:=\sqrt{U_\ell}(H_+-E)^{-1}\sqrt{U_\ell}$ is bounded. Note
  that, by our assumption on $U$, $U_\ell$ is in $L^p([0,1]^2)$. Using
  the eigenfunction expansion of $-\Delta$ on $L_-^2([0,1]^2)$, we
  write
  \begin{equation}
    \label{eq:30}
    \tilde K_\ell=\sum_{\overline{j}\not=(2,1)}\frac1
    {\pi^2 |\barj|^2 - E}\sqrt{U_\ell}\phi_{\barj}\otimes
    \phi_{\barj}\sqrt{U_\ell}
  \end{equation}
  where the sum is over $\overline{j}=(i,j)$ where $(i,j)\in\N$ such
  that $i>j$.\\
  For $u\in L_-^2([0,1]^2)$, as $u\sqrt{U_\ell}\in
  L_-^{2p/(1+p)}([0,1]^2)$ and as the functions
  $(\phi_{\barj})_{\overline{j}}$ are uniformly bounded, by the
  Hausdorff-Young inequality (see e.g.~\cite{MR924157}), one has
  \begin{equation}
    \label{eq:54}
    \sum_{\overline{j}}\left|\left\langle\sqrt{U_\ell}\phi_{\barj},
        u\right\rangle\right|^{p/(p-1)}\leq C_\ell \|u\|_2^{p/(p-1)}.
  \end{equation}
  Moreover, for some $C_\ell$, one has
  $\|\sqrt{U_\ell}\phi_{\barj}\|_2\leq C_\ell$. Thus,
  by~\eqref{eq:30}, as $p>1$, we obtain
  \begin{equation*}
    \|\tilde K_\ell u\|_2\leq C_\ell
    \left(\sum_{\overline{j}\not=(2,1)}\frac1
      {(\pi^2 |\barj|^2 - E)^p}
    \right)^{1/p} \|u\|_2\leq C_\ell \|u\|_2.
  \end{equation*}
  Using the explicit kernel for $K$ given in~\eqref{eq:56}, for $u\in
  L^2(\R)$, we compute
  \begin{equation*}
    (Ku)(x)=2\sqrt{U(x)}\int_{-x}^xx'\sqrt{U(x')}u(x')dx'+
    2\sqrt{U(x)}x\int_{-\infty}^{-x}\sqrt{U(x')}(u(x')-u(-x'))dx'
  \end{equation*}
  Thus,
  \begin{equation*}
    \|K\|_{\mathcal{L}(L^2(\R))}\leq 4\sqrt{\|U\|_1\|(\cdot)^2 U(\cdot)\|_1}.
  \end{equation*}
  This completes the proof of Lemma~\ref{le:7}.
\end{proof}
\noindent By Lemma~\ref{le:7}, $\Coi(\R\times(0,1))$ is a common core
for all $K_\ell$ and $K\otimes\Id$. Thus, by~\cite[Theorem
VIII.25]{MR85e:46002}, we know that $\D K_\ell\vers{\ell\to+\infty}
K\otimes\Id$ in the strong resolvent sense. Hence, by~\cite[Theorem
VIII.20]{MR85e:46002}, the sequence $(K_\ell (\Id +
K_\ell)^{-1})_\ell$ converges to $K(\Id + K)^{-1}\otimes\Id$
strongly. These operators are all bounded uniformly by $1$ (as
$K_\ell$ and $K$ are non-negative). Thus, by point (a) of
Lemma~\ref{le:4} and~\eqref{eq:rhsUnitaryTransformation}, we obtain
\begin{equation}
  \label{eq:20}
  \begin{split}
    \left\langle \tilde\phi_\ell, A_\ell A_\ell^\star (\Id + A_\ell
      A_\ell^\star)^{-1} \tilde\phi_\ell \right\rangle &= \langle
    \phi\otimes\chi_0, \left[K (\Id + K)^{-1} \otimes
      \Id\right] \phi\otimes\chi_0 \rangle + o(1) \\
    &= \left\langle \phi, K (\Id + K)^{-1} \phi\right\rangle \cdot \int_0^1
    \chi_0^2(y) \rmd{y} + o(1)\\
    &=\pi^2\cdot\left\langle \phi, K (\Id + K)^{-1}\phi\right\rangle +
    o(1) \text{.}
  \end{split}
\end{equation}
By point (a) of Lemma~\ref{le:4}, one also computes
\begin{equation}
  \label{eq:phiEllNorm}
  \begin{split}
    \ell\, U^\ell_{00}&= \|\phi\otimes\chi_0\|^2+o(1) = \int_{\bbR}
    u^2 U(u) \rmd{u} \int_0^1 \chi_0^2(y) \rmd{y} + o(1)\\&=
    \frac52\pi^2\int_{\bbR} u^2 U(u)\rmd{u} + o(1)
  \end{split}
\end{equation}
By~\eqref{eq:phiEllNorm}, the eigenvalue
equation~\eqref{eq:twoParticlesEigenvalueEquation} yields that, under
the assumptions of Lemma~\ref{le:7}, the ground state energy of
$H^{U^\ell}(1,2)$ satisfies
\begin{equation}
  \label{eq:61}
  E^{U^\ell}([0,1],2)=5\pi^2+\frac{\gamma(U)}{\ell}+o\left(\frac1{\ell}\right)
\end{equation}
where 
\begin{equation}
  \label{eq:31}
  \gamma(U)=10 \pi^2\left[\|\phi\|^2 - \left\langle
      \phi,K(\Id+K)^{-1}\phi\right\rangle\right]=10 \pi^2\left\langle
    \phi,(\Id+K)^{-1}\phi\right\rangle.
\end{equation}
By Lemma~\ref{le:31} and assumption \textbf{(HU)}, we know that the
assumptions of Lemma~\ref{le:7} are satisfied.  This proves the
asymptotic expansion announced in
Proposition~\ref{prop:TwoElectronProblem}. To complete the proof of
this proposition, we simply note that, as $K$ is bounded by
Lemma~\ref{le:7}, by~\eqref{eq:31}, we know that $\gamma(U)=0$ if and
only if $\phi\equiv0$, i.e., if and only if $U\equiv0$..\qed
\begin{Rem}
  \label{rem:5}
  If one assumes $x\mapsto x^4U(x)$ to be integrable and $U$ to be in
  some $L^p(\R)$ ($p>1$) (which is clearly stronger than
  \textbf{(HU)}), one obtains that, $E^U([0, \ell],2)$, the ground
  state energy of the Hamiltonian defined
  in~\eqref{eq:introTwoParticleHamiltonian} admits the following more
  precise expansion
  \begin{equation}
    E^{U^\ell}([0,1],2) = 5 \pi^2 + \frac{\gamma(U)}{\ell} +
    O\left(\ell^{-2}\right).
  \end{equation}
\end{Rem}
\subsubsection{The proof of Lemma~\ref{le:4}}
\label{sec:proof-lemma-refle:4}
We start with a lemma, the result of a computation, that will be used
in several parts of the proof.
\begin{Le}
  \label{le:5}
  For $\barj = (j_1, j_2)$, $j_1>j_2$, recall that $\phi_{\barj}$, the
  $\barj$-th normalized eigenvector of $H_0$, is given
  by~\eqref{eq:21}.\\
  One has
  \begin{equation}
    \label{eq:phiBarjDecomposition}
    \phi_{\barj}\left(y + \frac{u}{\ell}, y\right)
    =\phi^0_{\barj}\left(\frac{u}{\ell}, y\right)+
    \phi^+_{\barj}\left(\frac{u}{\ell}, y\right)+
    \phi^-_{\barj}\left(\frac{u}{\ell}, y\right)
  \end{equation}
  where
  \begin{equation}
    \label{eq:22}
    \begin{aligned}
      \phi^0_{\barj}(2x, y)&:=2\sqrt{2}\sin(\pi (j_1+j_2)x)
      \sin(\pi(j_2-j_1)x)\sin(\pi j_1 y)\sin(\pi j_2 y),\\
      \phi^+_{\barj}(2x, y)&:=\sqrt{2}
      \cos(\pi (j_2-j_1)x)\sin(\pi(j_2+j_1)x)\sin(\pi(j_1-j_2)y) \\
      \phi^-_{\barj}(2x,y)&:=\sqrt{2}
      \cos(\pi(j_2+j_1)x)\sin(\pi(j_2-j_1)x)\sin(\pi(j_1+j_2)y)
    \end{aligned}
  \end{equation}
\end{Le}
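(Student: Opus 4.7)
The plan is to prove this by direct trigonometric computation starting from the determinantal expression
\[
\phi_{\barj}\!\left(y+\tfrac{u}{\ell},y\right)
=\sqrt{2}\,\Bigl[\sin\!\bigl(\pi j_1(y+\tfrac{u}{\ell})\bigr)\sin(\pi j_2 y)
-\sin\!\bigl(\pi j_2(y+\tfrac{u}{\ell})\bigr)\sin(\pi j_1 y)\Bigr].
\]
First I would apply the angle-addition identity $\sin(a+b)=\sin a\cos b+\cos a\sin b$ to each factor $\sin(\pi j_k(y+u/\ell))$, $k=1,2$, so as to separate the $y$-dependence from the $u/\ell$-dependence. The full expression then splits into a ``diagonal'' piece carrying the factor $\sin(\pi j_1 y)\sin(\pi j_2 y)$ and an ``off-diagonal'' piece built from the cross products $\cos(\pi j_k y)\sin(\pi j_{3-k}y)$ multiplied by $\sin(\pi j_k u/\ell)$.

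For the diagonal piece, the identity $\cos A-\cos B=-2\sin\tfrac{A+B}{2}\sin\tfrac{A-B}{2}$ applied to $\cos(\pi j_1 u/\ell)-\cos(\pi j_2 u/\ell)$ produces the product $\sin(\pi(j_1+j_2)u/(2\ell))\,\sin(\pi(j_2-j_1)u/(2\ell))$, and direct comparison with the definition of $\phi^0_{\barj}$ in \eqref{eq:22} identifies this contribution with $\phi^0_{\barj}(u/\ell,y)$.

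For the off-diagonal piece, I would apply the product-to-sum identity $\cos A\sin B=\tfrac12[\sin(B+A)+\sin(B-A)]$ to each of $\cos(\pi j_1 y)\sin(\pi j_2 y)$ and $\cos(\pi j_2 y)\sin(\pi j_1 y)$, which produces terms in $\sin(\pi(j_1+j_2)y)$ and $\sin(\pi(j_1-j_2)y)$. Regrouping these according to the $y$-factor, the off-diagonal piece becomes a combination of $\sin(\pi(j_1+j_2)y)[\sin(\pi j_1 u/\ell)-\sin(\pi j_2 u/\ell)]$ and $\sin(\pi(j_1-j_2)y)[\sin(\pi j_1 u/\ell)+\sin(\pi j_2 u/\ell)]$. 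A second application of sum-to-product, now in the form $\sin a\pm\sin b=2\cos\tfrac{a\mp b}{2}\sin\tfrac{a\pm b}{2}$, factors each bracket as a product with arguments $\pi(j_1\pm j_2)u/(2\ell)$, at which point the two resulting terms are exactly $\phi^+_{\barj}(u/\ell,y)$ and $\phi^-_{\barj}(u/\ell,y)$.

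The argument is entirely computational and presents no analytic obstacle. The only step requiring care is sign tracking: the antisymmetries $\sin(\pi(j_2-j_1)x)=-\sin(\pi(j_1-j_2)x)$ appear repeatedly, and they govern the precise assignment of the two off-diagonal contributions to $\phi^+_{\barj}$ versus $\phi^-_{\barj}$ in the decomposition \eqref{eq:phiBarjDecomposition}.
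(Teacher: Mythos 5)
Your proposal is correct and follows exactly the route of the paper's own proof: expand the $2\times2$ determinant, apply angle addition to split off the $u/\ell$-dependence, reduce the diagonal piece via $\cos A-\cos B$ and the off-diagonal cross products via product-to-sum followed by $\sin a\pm\sin b$ sum-to-product. Your closing caveat about sign tracking is well placed — if you carry the computation through you will find the two off-diagonal contributions actually enter with the opposite sign to the printed definitions of $\phi^\pm_{\barj}$ in~\eqref{eq:22}, a harmless discrepancy in the statement (everything downstream uses only $|\phi^\pm_{\barj}|^2$) but one worth recording.
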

\begin{proof}
  Using standard sum and product formulas for the sine and cosine, we compute
  \begin{equation*}
    \begin{split}
     \frac{1}{\sqrt{2}} \phi_{\barj}\left(y + \frac{u}{\ell}, y\right) &=
      \left|
        \begin{matrix}
          \sin\left(\pi j_1 \left(y + \frac{u}{\ell}\right)\right) &
          \sin\left(\pi j_1 y\right) \\
          \sin\left(\pi j_2 \left(y + \frac{u}{\ell}\right)\right) &
          \sin\left(\pi j_2 y\right)
        \end{matrix}\right|\\
      &= \sin\left(\pi j_1 \frac{u}{\ell}\right) \cos\left(\pi j_1
        y\right) \sin\left(\pi j_2 y\right) - \sin\left(\pi j_2
        \frac{u}{\ell}\right) \cos\left(\pi j_2 y\right) \sin\left(\pi
        j_1 y\right) \\&\hskip2cm+ \left(\cos\left(\pi j_1
          \frac{u}{\ell}\right) - \cos\left(\pi j_2
          \frac{u}{\ell}\right)\right) \sin\left(\pi j_1 y\right) \sin\left(\pi j_2 y\right)\\
      &=\frac{1}{2} \sin\left(\pi j_1 \frac{u}{\ell}\right)
      \left(\sin\left(\pi \left(j_1 +
            j_2\right) y\right) - \sin\left(\pi \left(j_1 - j_2\right) y\right)\right) \\
      &\hskip1cm- \frac{1}{2} \sin\left(\pi j_2 \frac{u}{\ell}\right)
      \left(\sin\left(\pi \left(j_1 + j_2\right)
          y\right) + \sin\left(\pi \left(j_1 - j_2\right) y\right)\right) \\
      &\hskip2cm+ \left(\cos\left(\pi j_1
          \frac{u}{\ell}\right) - \cos\left(\pi j_2
          \frac{u}{\ell}\right)\right) 
          \sin\left(\pi j_1 y\right) \sin\left(\pi j_2 y\right) \\
      &=\frac{1}{2} \left(\sin\left(\pi j_1 \frac{u}{\ell}\right) -
        \sin\left(\pi j_2 \frac{u}{\ell}\right)\right) \sin\left(\pi
        \left(j_1 +
          j_2\right) y\right) \\
      &\hskip1cm- \frac{1}{2} \left(\sin\left(\pi j_1
          \frac{u}{\ell}\right) + \sin\left(\pi j_2
          \frac{u}{\ell}\right)\right) \sin\left(\pi \left(j_1 -
          j_2\right) y\right) \\
      &\hskip2cm+ \left(\cos\left(\pi j_1
          \frac{u}{\ell}\right) - \cos\left(\pi j_2
          \frac{u}{\ell}\right)\right) 
          \sin\left(\pi j_1 y\right) \sin\left(\pi j_2 y\right).
        \end{split}
      \end{equation*}
      Thus,
      \begin{equation*}
        \begin{split}
          \frac{1}{\sqrt{2}} \phi_{\barj}\left(y + \frac{u}{\ell},
            y\right) &= \sin\left(\pi \frac{j_1 - j_2}{2}
            \frac{u}{\ell}\right) \cos\left(\pi \frac{j_1 + j_2}{2}
            \frac{u}{\ell}\right)
          \sin\left(\pi \left(j_1 + j_2\right) y\right) \\
          &\hskip1cm- \sin\left(\pi \frac{j_1 + j_2}{2}
            \frac{u}{\ell}\right) \cos\left(\pi \frac{j_1 - j_2}{2}
            \frac{u}{\ell}\right) \sin\left(\pi \left(j_1 -
              j_2\right) y\right) \\
          &\hskip2cm- 2 \sin\left(\pi \frac{j_1 - j_2}{2}
            \frac{u}{\ell}\right) \sin\left(\pi \frac{j_1 + j_2}{2}
            \frac{u}{\ell}\right) \sin\left(\pi j_1 y\right)
          \sin\left(\pi j_2 y\right) \text{.}
        \end{split}
  \end{equation*}
   This completes the proof of Lemma~\ref{le:5}.
\end{proof}
\noindent We start with the proof of point (a) of Lemma~\ref{le:4}. As
$\phi_0=\phi_{(2,1)}$, by~\eqref{eq:phiBarjDecomposition}
and~\eqref{eq:22}, using the Taylor expansion of the sine and cosine
near $0$, we compute
\begin{equation*}
  \begin{split}(T_\ell \tilde\phi_\ell)(u, y)
    &= \ell \sqrt{U(u)}\car_{R_\ell}(u,y)
    \phi_{(2,1)}\left(y + \frac{u}{\ell}, y\right)\\ &=
    u \sqrt{U(u)} \chi_0(y)\car_{R_\ell}(u,y) + \frac{u^2}{\ell} \sqrt{U(u)}
    \chi_1\left(\frac{u}{\ell}, y\right)\car_{R_\ell}(u,y)
  \end{split}
\end{equation*}
where $\chi_0$ is defined in Lemma~\ref{le:4} and $ \chi_1$ is
continuous and bounded on $\R\times[0,1]$.\\
We estimate
\begin{equation*}
  \left\|\frac{(\cdot)^2}{\ell} \sqrt{U(\cdot)}
    \chi_1\left(\frac{\cdot}{\ell},
      \cdot\right)\car_{R_\ell}\right\|_{L^2(\R\times[0,1])}^2
  \lesssim\int_{R_\ell}\frac{u^2}{\ell^2}u^2 U(u)dudy
  \leq\int_\R\frac{u^2\car_{|u|\leq\ell}}{\ell^2}u^2 U(u)du.
\end{equation*}
The last integral tends to $0$ by the dominated convergence
theorem as $u\mapsto u^2 U(u)$ is integrable.\\
This completes the proof of point (a) of Lemma~\ref{le:4}.\\
Let us now turn to the analysis of the operator family
$(K_\ell)_\ell$. It is easily seen that its kernel (we use the same
notations for the operator and its kernel) is given by
\begin{equation*}
  K_\ell(E; u, y, u', y') = \ell\car_{R_\ell\times R_\ell}
  \sqrt{U(u) U(u')} \cdot \tilde K\left(E; y + \frac{u}{\ell}, y, y'
    + \frac{u'}{\ell},y'\right)
\end{equation*}
where $\tilde K(E; x, y, x', y')$ is the kernel of $(H_+ -
E)^{-1}$. The kernel $\tilde K(E)$ is easily expressed in terms of the
eigenfunctions of $H$. Using this and the representation yielded by
Lemma~\ref{le:5} leads to the following representation for the kernel
$K_\ell$
\begin{equation}
  \label{eq:23}
  \begin{split}
    K_\ell\left(E; u, y, u', y'\right)& = \ell \car_{R_\ell\times
      R_\ell}\sum_{{\barj} \ne (2,1)} \frac{\sqrt{U\left(u\right)
        U\left(u'\right)}}{\pi^2 |\barj|^2 - E} \phi_{\barj}\left(y +
      \frac{u}{\ell}, y\right)
    \phi_{\barj}\left(y' + \frac{u'}{\ell}, y'\right) \\
    &= K_{\ell}^-\left(E; u, y, u', y'\right) + K_{\ell}^+\left(E; u,
      y, u', y'\right)+ K_{\ell}^0\left(E; u, y, u', y'\right)
  \end{split}
\end{equation}
where, for $\bullet\in\{0,+,-\}$, we have set
\begin{equation*}
  K_\ell^\bullet\left(E; u, y, u', y'\right)
  =\ell\car_{R_\ell\times R_\ell} \sum_{{\barj} \ne (2,1)}
  \frac{\sqrt{U\left(u\right) U\left(u'\right)}}{\pi^2 |\barj|^2 -
    E}\phi_{\barj}\left(y + \frac{u}{\ell}, y\right)
  \phi^\bullet_{\barj}\left(\frac{u'}{\ell},y\right).
\end{equation*}
To prove point (b) of Lemma~\ref{le:4}, if suffices to prove that, for
$v\in\Coi(\R\times(0,1))$, one has $K_lv\to \tilde Kv$ in
$L^2(\R\times[0,1])$.
We first prove
\begin{Le}
  \label{le:6}
  For $v\in\Coi(\R\times(0,1))$, one has
  \begin{enumerate}
  \item $\|K_\ell^-v\|_2\to0$ as $\ell\to+\infty$,
  \item $\|K_\ell^0v\|_2\to0$ as $\ell\to+\infty$.
  \end{enumerate}
\end{Le}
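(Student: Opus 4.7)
The plan is to prove both parts of Lemma~\ref{le:6} by establishing a pointwise bound of the form
\begin{equation*}
|(K_\ell^\bullet v)(u,y)| \leq \frac{C(v,U)}{\ell}\,\sqrt{U(u)}\,(1+u^2)\,\car_{R_\ell}(u,y),
\end{equation*}
uniformly in $\ell$ large, for $\bullet\in\{0,-\}$. Combined with the hypothesis $U\in L^1(\R)$ and $x\mapsto x^2U(x)\in L^1(\R)$ from Lemma~\ref{le:4}, this immediately yields $\|K_\ell^\bullet v\|_{L^2(\R\times[0,1])}\leq C/\ell$, which is the desired conclusion. The strategy is to combine a small-angle estimate in the $u,u'$ variables with repeated integration by parts in the $y'$ variable.

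Substituting the decomposition of Lemma~\ref{le:5} applied to both variables reduces $K_\ell^\bullet v$ to a finite sum of nine terms of the shape
\begin{equation*}
T_{\star,\bullet}(u,y):=\ell\sqrt{U(u)}\sum_{\barj\ne(2,1)}\frac{\phi^\star_{\barj}(u/\ell,y)}{\pi^2|\barj|^2-E}\,J^\bullet_{\barj}(\ell),\qquad \star\in\{0,+,-\},
\end{equation*}
where $J^\bullet_{\barj}(\ell)=\int\sqrt{U(u')}\,\phi^\bullet_{\barj}(u'/\ell,y')\,v(u',y')\,du'\,dy'$. The proof then rests on two elementary estimates. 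First, the inequality $|\sin\theta|\leq|\theta|$ applied to the sines in the formulas of Lemma~\ref{le:5} gives $|\phi^\pm_{\barj}(u/\ell,y)|\leq C(j_1+j_2)|u|/\ell$ and the stronger bound $|\phi^0_{\barj}(u/\ell,y)|\leq C(j_1+j_2)|j_1-j_2|u^2/\ell^2$, with analogous inequalities for the $(u',y')$ variables. Second, since $v\in\Coi(\R\times(0,1))$, repeated integration by parts in $y'$ against the oscillating factor $\sin(\pi(j_1+j_2)y')$ appearing in $\phi^-_{\barj}$ (and against the factors $\cos(\pi(j_1\pm j_2)y')$ obtained from $\sin(\pi j_1y')\sin(\pi j_2y')$ in $\phi^0_{\barj}$ via the product-to-sum identity) yields, for any $N\geq 1$, a decay factor $(j_1+j_2)^{-N}$; the $u'$-integration is absolutely convergent because $v$ has compact support in $u'$ and $\sqrt{U}\in L^2_{\text{loc}}(\R)$.

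Combining the two estimates, each of the nine terms $T_{\star,\bullet}(u,y)$ is bounded by
\begin{equation*}
\ell\sqrt{U(u)}\cdot\frac{(j_1+j_2)|u|}{\ell}\cdot\frac{1}{|\barj|^2}\cdot\frac{|j_2-j_1|}{\ell(j_1+j_2)^N}\cdot\text{(bounded factors)},
\end{equation*}
with an extra $(j_2-j_1)/\ell$ in case $\bullet=0$ (from the second small sine in $\phi^0_{\barj}(u'/\ell,y')$). For $N\geq 3$ the resulting series $\sum_{\barj}(j_2-j_1)/((j_1+j_2)^{N-1}|\barj|^2)$ converges, so the overall prefactor is $O(1/\ell)$ for part (a) and $O(1/\ell^2)$ for part (b), which is the announced pointwise estimate.

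The main technical obstacle I expect is the careful bookkeeping of the cross-terms $T_{+,\bullet}$ in which the growing factor $(j_1+j_2)$ coming from $\phi^+_{\barj}$ must be entirely absorbed by the integration-by-parts decay in $y'$. This dictates how many times one must integrate by parts and hence how many derivatives of $v$ are needed; since $v\in\Coi$, this causes no difficulty, but it is the point where the smoothness and support hypotheses on $v$ enter in an essential way.
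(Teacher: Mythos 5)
Part (a) of your proposal is essentially sound and parallels the paper's argument: for $K_\ell^-$ the $y'$-dependence of $\phi^-_{\barj}(u'/\ell,y')$ is $\sin(\pi(j_1+j_2)y')$, so integration by parts against $v$ really does produce $O((j_1+j_2)^{-N})$ coefficients, and your small-angle bounds plus the energy denominator give a convergent series of size $o(1)$ (the paper does the same thing, only keeping the unprimed factor $\phi_{\barj}(y+u/\ell,y)$ whole and estimating its weighted $L^2$ norm instead of using pointwise small-angle bounds).

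Part (b), however, contains a genuine gap. The $y'$-dependence of $\phi^0_{\barj}(u'/\ell,y')$ is $\sin(\pi j_1y')\sin(\pi j_2y')=\tfrac12\bigl[\cos(\pi(j_1-j_2)y')-\cos(\pi(j_1+j_2)y')\bigr]$, and for the first summand integration by parts only yields a factor $(j_1-j_2)^{-N}$, \emph{not} $(j_1+j_2)^{-N}$ as you claim: when $j_1-j_2$ is small (say $=1$) and $j_1+j_2$ is large, this component does not oscillate fast and no decay in $j_1+j_2$ is produced. With only $(j_1-j_2)^{-N}$ decay, your per-term bound is essentially independent of $j_2$ at fixed $j_1-j_2$, so the series $\sum_{\barj}$ you exhibit is not the one you actually get, and the claimed convergence (and the pointwise $O(1/\ell^2)$, indeed even $O(1/\ell)$, bound) breaks down. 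This is exactly the term the paper isolates as $A^-_{j,k}$ (whose coefficient $a_k$ decays only in $k=j_1-j_2$); summability in $j$ must then come from the energy denominator together with the $\min(\ell,\cdot)/\ell$ factors, and what one obtains is an estimate of the type $\ell^{-1}+\sum_{j\geq1}\min(1,j/\ell)\,j^{-2}$, which tends to $0$ by dominated convergence but not at rate $\ell^{-1}$. Your argument for (b) would need to be reworked along these lines. A secondary point: your announced pointwise bound $C\ell^{-1}\sqrt{U(u)}(1+u^2)$ is not square-integrable under the standing hypotheses (only $u^2U\in L^1$, not $u^4U\in L^1$, is available), so you should keep the factors in the capped form $\min\bigl(1,(j_1+j_2)|u|/\ell\bigr)$ and estimate $\int U(u)\sin^2(\pi m u/\ell)\,du\lesssim\min(1,m/\ell)^2$, as the paper does.
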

\begin{proof}
  We first study the sequence $K_\ell^+v$. We compute
  \begin{equation}
    \label{eq:24}
    (K_\ell^- v)(u, y)= \sqrt{U(u)}\sum_{\substack{j \geq 1, k \geq
        1\\(j, k) \ne (1,1)}} \frac{C_{j,k}(v)}{\pi^2 ((j + k)^2
      +j^2) - E}\car_{R_\ell}(u,y)\phi_{(j+k,j)}\left(y +
      \frac{u}{\ell}, y\right)
  \end{equation}
  where
  \begin{equation}
    \label{eq:25}
    C_{j,k}(v):=\ell\int_{-\ell}^\ell\sqrt{U(u')}
    \sin\left(\pi\frac{(2j+k)u'}{2\ell}\right)\cos\left(\pi
      \frac{k u'}{\ell}\right)c_{2j+k}(u') du' 
  \end{equation}
  and
  \begin{equation}
    \label{eq:33}
    \begin{split}
      c_j(u')&:=\int_0^1(\car_{R_\ell}v)(u',y')\sin(\pi jy')dy'=
      \int_{\max(0,-u'/\ell)}^{\min(1,1-u'/\ell)}v(u',y')\sin(\pi jy') dy'
      \\&=\int_0^1v(u',y')\sin(\pi jy') dy'
    \end{split}
  \end{equation}
  for $\ell$ sufficiently large as $v\in\Coi(\R\times(0,1))$.\\
  Integrating the last integral in~\eqref{eq:33} by parts, we obtain
  \begin{equation}
    \label{eq:35}
    \|c_j\|_{L^2(\R)}=O\left(j^{-\infty}\right).
  \end{equation}
  By~\eqref{eq:33} and~\eqref{eq:25}, as $u\mapsto u^2 U(u)$ is
  summable, we obtain
  \begin{equation}
    \label{eq:34}
    \begin{split}
      |C_{j,k}(v)|&\leq O\left((2j+k)^{-\infty}\right)\ell
      \sqrt{\int_\R
        U(u')\sin^2\left(\pi\frac{(2j+k)u'}{2\ell}\right)du'}\\
      &\leq O\left((2j+k)^{-\infty}\right)\min(\ell,2j+k)
    \end{split}
  \end{equation}
  Estimating $\|K_\ell^-v\|$ using~\eqref{eq:24} and the triangular
  inequality, as
  \begin{equation}
    \label{eq:36}
    \begin{split}
      &\int_{\R\times[0,1]}U(u)\car_{R_\ell}(u,y)\phi^2_{(j+k,j)}\left(y
        + \frac{u}{\ell}, y\right)du dy\\&\lesssim \int_\R
      U(u)\sin^2\left(\pi k\frac{u}{\ell}\right)du+ \int_\R
      U(u)\sin^2\left(\pi(2j+k)\frac{u}{\ell}\right)du \\&\lesssim
      \frac{\min^2(2j+k,\ell)+\min^2(k,\ell)}{\ell^2},
    \end{split}
  \end{equation}
  for $p\geq4$, we get
  \begin{equation*}
    \left\|K_\ell^-v\right\|\lesssim\frac1{\ell}
    \sum_{\substack{j\geq1,k\geq1\\(j,k)\ne (1,1)}}
    \frac1{(j+k)^p}.
  \end{equation*}
  Thus, one gets that $\left\|K_\ell^-v\right\|\to0$ as
  $\ell\to+\infty$. This completes the proof of point (a) of
  Lemma~\ref{le:6}.\\
  To prove point (b), as $2\sin a\,\sin b=\cos(a-b)-\cos(a+b)$, we
  compute
  \begin{equation*}
    (K_\ell^0 v)(u, y)= \sqrt{U(u)}\sum_{\substack{j \geq 1, k \geq
        1\\(j, k) \ne (1,1)}} \frac{A^-_{j,k}(v)-A^+_{j,k}(v)}{\pi^2
      ((j + k)^2 +j^2) - E}
    \car_{R_\ell}(u,y)\phi_{(j+k,j)}\left(y + \frac{u}{\ell},
      y\right)
  \end{equation*}
  where
  \begin{equation*}
    \begin{aligned}
      A^+_{j,k}(v)&:=\ell\int_{-\ell}^\ell\sqrt{U(u')}
      \sin\left(\pi\frac{(2j+k)u'}{2\ell}\right)\sin\left(\pi
        \frac{k u'}{\ell}\right)a_{2j+k}(u') du',         \\
      A^-_{j,k}(v)&:=\ell\int_{-\ell}^\ell\sqrt{U(u')}
      \sin\left(\pi\frac{(2j+k)u'}{2\ell}\right)\sin\left(\pi
        \frac{k u'}{\ell}\right)a_k(u') du'
    \end{aligned}
  \end{equation*}
  and
  \begin{equation*}
    a_k(u'):=\int_0^1(\car_{R_\ell}v)(u',y')\cos(\pi ky')dy'.
  \end{equation*}
  As in~\eqref{eq:33}, we obtain
  \begin{equation*}
    \|a_k\|_{L^2(\R)}=O\left(k^{-\infty}\right).
  \end{equation*}
  As in~\eqref{eq:34}, we obtain
  \begin{equation*}
    |A^\pm_{j,k}(v)|\leq O\left(k^{-\infty}\right)\min(\ell,k).
  \end{equation*}
  By~\eqref{eq:36}, for $p\geq2$, we then get
  \begin{equation}
    \label{eq:39}
    \left\|K_\ell^0 v\right\|\lesssim \frac1{\ell}
    \sum_{\substack{j \geq 1, k \geq 1\\(j, k) \ne (1,1)}}
    \frac{\min(\ell,k)(\min(\ell,k)+\min(\ell,j+k))}
    {k^{-p}((j+k)^2+j^2)} \lesssim \frac1{\ell}+\sum_{j \geq 1}
    \frac{\min(1,j/\ell)}{j^2}
  \end{equation}
  The last term converges to $0$ by the dominated convergence
  theorem. This completes the proof of point (b) of
  Lemma~\ref{le:6}, thus, of Lemma~\ref{le:6}.
\end{proof}
\noindent Next, we decompose $K_\ell^+$ expanding
$\phi_{\barj}(y+u/\ell,y)$ according
to~\eqref{eq:phiBarjDecomposition}. This gives
\begin{equation*}
  K_\ell^+ = K_\ell^{+, +} + K_\ell^{+, -} + K_\ell^{+, 0} \text{,}
\end{equation*}
where, for $\bullet\in\{0,+,-\}$, we have set
\begin{equation*}
  K_\ell^{+,\bullet}\left(E; u, y, u', y'\right)
  =\ell\car_{R_\ell\times R_\ell} \sum_{{\barj} \ne (2,1)}
  \frac{\sqrt{U\left(u\right) U\left(u'\right)}}{\pi^2 |\barj|^2 -
    E} \phi^\bullet_{\barj}\left(y + \frac{u}{\ell}, y\right)\times 
  \phi^+_{\barj}\left(\frac{u'}{\ell},y\right).
\end{equation*}
We now prove
\begin{Le}
  \label{le:8}
  For $v\in\Coi(\R\times(0,1))$, one has
  \begin{enumerate}
  \item $\|K_\ell^{-,+}v\|\to0$ as $\ell\to+\infty$,
  \item $\|K_\ell^{0,+}v\|\to0$ as $\ell\to+\infty$.
  \end{enumerate}
\end{Le}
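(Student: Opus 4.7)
The plan is to imitate the proof of Lemma~\ref{le:6}, exploiting the explicit trigonometric structure of $\phi^{-}_{\barj}$, $\phi^{0}_{\barj}$ and $\phi^{+}_{\barj}$ from Lemma~\ref{le:5}. Parametrizing $\barj=(j+k,j)$ with $j\geq 1$, $k\geq 1$ and $(j,k)\neq(1,1)$, so that $\pi^2|\barj|^2-E\asymp (j+k)^2+j^2$, a direct computation from~\eqref{eq:22} gives
\begin{equation*}
\phi^{-}_{\barj}(u/\ell,y) = -\sqrt{2}\cos\!\Bigl(\tfrac{\pi(2j+k)u}{2\ell}\Bigr)\sin\!\Bigl(\tfrac{\pi k u}{2\ell}\Bigr)\sin(\pi(2j+k)y),
\end{equation*}
\begin{equation*}
\phi^{0}_{\barj}(u/\ell,y) = -2\sqrt{2}\sin\!\Bigl(\tfrac{\pi(2j+k)u}{2\ell}\Bigr)\sin\!\Bigl(\tfrac{\pi k u}{2\ell}\Bigr)\sin(\pi(j+k)y)\sin(\pi j y),
\end{equation*}
\begin{equation*}
\phi^{+}_{\barj}(u'/\ell,y') = \sqrt{2}\cos\!\Bigl(\tfrac{\pi k u'}{2\ell}\Bigr)\sin\!\Bigl(\tfrac{\pi(2j+k)u'}{2\ell}\Bigr)\sin(\pi k y').
\end{equation*}
The key structural point is that the common factor $\sin(\pi k u/(2\ell))$ in $\phi^{-}_{\barj}$ and $\phi^{0}_{\barj}$ produces smallness in the $(u,y)$-variables, while the factor $\sin(\pi k y')$ in $\phi^{+}_{\barj}$ yields Fourier coefficients of $v$ that decay super-polynomially in $k$.

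Applying the kernel to $v\in\Coi(\R\times(0,1))$ one gets, in analogy with~\eqref{eq:24},
\begin{equation*}
(K_\ell^{-,+}v)(u,y)=\sqrt{U(u)}\,\car_{R_\ell}(u,y)\!\sum_{\substack{j\geq 1,\,k\geq 1\\(j,k)\neq(1,1)}}\!\frac{D_{j,k}(v)}{\pi^2((j+k)^2+j^2)-E}\,\phi^{-}_{(j+k,j)}(u/\ell,y),
\end{equation*}
and an analogous formula for $K_\ell^{0,+}v$, where
\begin{equation*}
D_{j,k}(v)=\sqrt{2}\,\ell\int_\R \sqrt{U(u')}\cos\!\Bigl(\tfrac{\pi k u'}{2\ell}\Bigr)\sin\!\Bigl(\tfrac{\pi(2j+k)u'}{2\ell}\Bigr)d_k(u')\,du',
\end{equation*}
with $d_k(u'):=\int_0^1 v(u',y')\sin(\pi k y')\,dy'$. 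Repeated integration by parts in $y'$, exploiting $v\in\Coi(\R\times(0,1))$, gives $\|d_k\|_\infty=O(k^{-\infty})$ with $u'$-support in a fixed compact set. Combining this with $|\sin(\pi(2j+k)u'/(2\ell))|\leq\min(1,\pi(2j+k)|u'|/(2\ell))$ and the integrability of $u\mapsto u^2 U(u)$, one obtains, for each $N\in\N$,
\begin{equation*}
|D_{j,k}(v)|\leq C_N\,k^{-N}\min(2j+k,\ell).
\end{equation*}

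On the other side, the factor $\sin(\pi k u/(2\ell))$ yields, via the same integrability of $u^2U(u)$, the $L^2$ estimate $\|\sin(\pi k u/(2\ell))\sqrt{U(u)}\|_{L^2(\R)}^2\leq C\min(1,k^2/\ell^2)$, so that $\|\phi^{-}_{(j+k,j)}(\cdot/\ell,\cdot)\sqrt{U}\,\car_{R_\ell}\|_2\leq C\,k/\ell$, and analogously for $\phi^{0}_{(j+k,j)}$ (using in addition that $\|\sin(\pi jy)\sin(\pi(j+k)y)\|_{L^2([0,1])}\leq 1$). The triangle inequality then yields
\begin{equation*}
\|K_\ell^{-,+}v\|_2+\|K_\ell^{0,+}v\|_2\leq C\!\sum_{\substack{j\geq 1,\,k\geq 1\\(j,k)\neq(1,1)}}\!\frac{k^{-N}\min(2j+k,\ell)}{(j+k)^2+j^2}\cdot\frac{k}{\ell}.
\end{equation*}
Choosing $N$ large enough (e.g.\ $N=3$) and using $\min(2j+k,\ell)/\ell\leq 1$ produces a dominating summable series $\sum_{j,k}k^{-2}/((j+k)^2+j^2)<+\infty$ independent of $\ell$; since each individual term tends to $0$ as $\ell\to+\infty$ (because $\min(2j+k,\ell)/\ell\to 0$ for fixed $j,k$), the dominated convergence theorem gives both claims. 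The main obstacle is bookkeeping: arranging the trigonometric expansions so that one ``side'' of the kernel provides decay in $k$ from smoothness of $v$, while the other side simultaneously provides the factor $k/\ell$ from $u^2U(u)\in L^1$, so that their product yields a sum that vanishes in the limit.
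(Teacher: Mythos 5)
Your proof is correct and follows essentially the same route as the paper: expand against the eigenbasis, pair $\phi^+_{\barj}$ with $v$ to get coefficients with $O(k^{-\infty})$ decay (the paper's $C_{j,k}$) times $\min(2j+k,\ell)$ from $\int u^2U(u)\,du<\infty$, bound the $(u,y)$-side in $L^2$ by $\min(k,\ell)/\ell$ using the common factor $\sin(\pi k u/(2\ell))$, and conclude by dominated convergence on the resulting double series. The only deviation is bookkeeping: you use $\|d_k\|_\infty$ with compact $u'$-support instead of a Cauchy--Schwarz step, and for $K_\ell^{0,+}$ you bound the extra sine by $1$ rather than extracting a second small factor as the paper does, which still yields a vanishing, dominated sum.
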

\begin{proof}
  As in the proof of Lemma~\ref{le:6}, the two points in
  Lemma~\ref{le:8} are proved in very similar ways. We will only
  detail the proof of point (a).\\
  We compute
  \begin{equation}
    \label{eq:37}
    (K_\ell^{-,+} v)(u, y)= \sqrt{U(u)}\sum_{\substack{j \geq 1, k
        \geq 1\\(j, k) \ne (1,1)}} \frac{C_{j,k}(v)}{\pi^2 ((j +
      k)^2 +j^2) - E}
    \car_{R_\ell}(u,y)\phi^-_{(j+k,j)}\left(y + \frac{u}{\ell},
      y\right)
  \end{equation}
  where
  \begin{equation}
    \label{eq:38}
    C_{j,k}(v):=\ell\int_{-\ell}^\ell\sqrt{U(u')}
    \sin\left(\pi\frac{(2j+k)u'}{2\ell}\right)
    \cos\left(\pi \frac{k u'}{2\ell}\right)c_k(u') du' 
  \end{equation}
  and
  \begin{equation*}
    \begin{split}
      c_k(u')&:=\int_0^1(\car_{R_\ell}v)(u',y')\sin(\pi
      ky')dy'=\int_0^1v(u',y')\sin(\pi ky') dy'
    \end{split}
  \end{equation*}
  for $\ell$ sufficiently large as $v\in\Coi(\R\times(0,1))$.\\
  Integrating the last integral in~\eqref{eq:33} by parts, we obtain
  \begin{equation}
    \label{eq:277}
    \|c_k\|_{L^2(\R)}=O\left(k^{-\infty}\right).
  \end{equation}
  As in~\eqref{eq:34}, we obtain
  \begin{equation}
    \label{eq:41}
    |C_{j,k}(v)|\leq O\left(k^{-\infty}\right)\min(\ell,2j+k).
  \end{equation}
  Using~\eqref{eq:22}, one estimates
  \begin{equation}
    \label{eq:40}
    \sqrt{\int_{\R\times[0,1]}U(u)\car_{R_\ell}(u,y)
      \left|\phi^-_{(j+k,j)}\left(y+\frac{u}{\ell},y\right)\right|^2du
      dy}\lesssim \frac{\min(k,\ell)}{\ell}.
  \end{equation}
  Thus, for $p\geq2$, we get
  \begin{equation}
    \label{eq:42}
    \left\|K_\ell^{-,+}v\right\|\lesssim
    \sum_{\substack{j\geq1,k\geq1\\(j,k)\ne (1,1)}}
    \frac{\min(k,\ell)}{\ell}
    \frac{\min(2j+k,\ell)}{k^p((j + k)^2 +j^2)}.
  \end{equation}
  Thus, by the dominated convergence theorem, as in~\eqref{eq:39},
  one gets that $\left\|K_\ell^{-,+}v\right\|\to0$ as
  $\ell\to+\infty$. This completes the proof of point (a) of
  Lemma~\ref{le:8}.\\
  Point (b) is proved similarly except that estimate~\eqref{eq:40}
  is replaced with
  \begin{equation*}
    \sqrt{\int_{\R\times[0,1]}U(u)\car_{R_\ell}(u,y)
      \left|\phi^0_{(j+k,j)}\left(y+\frac{u}{\ell},y\right)\right|^2du
      dy}\lesssim \frac{\min(k,\ell)\min(2j+k,\ell)}{\ell^2}.
  \end{equation*}    
  Thus, taking $p>3$, estimate~\eqref{eq:42} in this case becomes
  \begin{equation*}
      \left\|K_\ell^{0,+}v\right\|\lesssim
      \sum_{\substack{j\geq1,k\geq1\\(j,k)\ne (1,1)}}
      \frac{\min(k,\ell)}{k^p}
      \frac{\min^2(2j+k,\ell)}{\ell^2((j + k)^2 +j^2)}
      \lesssim \sum_{\substack{j\geq1,k\geq1\\(j,k)\ne (1,1)}}
      \frac1{k^{p-2}} \frac{\min^2(j,\ell)}{\ell^2j^2}\lesssim
      \sum_{j\geq1} \frac{\min^2(j,\ell)}{\ell^2}\frac1{j^2}
  \end{equation*}
  which converges to $0$ as $\ell\to+\infty$.\\
  This completes the proof of Lemma~\ref{le:8}.
\end{proof}
\noindent We are now left with computing the limit of $K_\ell^{+,
  +}$ where
\begin{multline}
  \label{eq:43}
  K_\ell^{+,+}(u,y,u',y')=\sum_{\substack{j\geq1,k\geq1\\(j,k)\ne(1,1)}}
  \frac{\ell\sqrt{U(u)U(u')}}{\pi^2((j+k)^2+j^2)-E}\\\times
  \phi_{(j+k,j)}^{+}\left(y+\frac{u}{\ell},y\right)\phi_{(j,j+
    k)}^{+}\left(y'+\frac{u'}{\ell},y'\right).
\end{multline}
We prove
\begin{Le}
  \label{lem:KPlusPlusStrongLimit}
  In the strong topology, one has
  \begin{equation}
    \label{eq:KPlusPlusStrongLimit}
    K_\ell^{+, +}\to K\otimes\Id\quad\text{ as }\quad\ell\to+\infty.
  \end{equation}
  where $K$ is defined in~\eqref{eq:56}.
\end{Le}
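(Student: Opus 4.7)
The strategy is to diagonalize both $K_\ell^{+,+}$ and $K\otimes\Id$ using the orthonormal basis $e_m(y):=\sqrt{2}\sin(\pi m y)$, $m\geq 1$, of $L^2([0,1])$. Looking at~\eqref{eq:43} together with the explicit expression from~\eqref{eq:22}, namely $\phi^+_{(j+k,j)}(u/\ell,y)=\sqrt{2}\cos(\pi k u/(2\ell))\sin(\pi(2j+k)u/(2\ell))\sin(\pi k y)$, we see that the $(y,y')$ dependence of the kernel factors through $\sin(\pi k y)\sin(\pi k y')$. Hence, upon expanding $\varphi\in L^2(\R\times[0,1])$ as $\varphi=\sum_m\varphi_m\otimes e_m$, the operator $K_\ell^{+,+}$ preserves each slice $L^2(\R)\otimes\C e_m$ thanks to $\int_0^1\sin(\pi k y)\sin(\pi m y)\,dy=\tfrac12\delta_{km}$; the same is trivially true for $K\otimes\Id$. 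Thus everything reduces to showing that, for each fixed $m\geq 1$, the induced operator $K_\ell^{+,+,m}$ on $L^2(\R)$ converges strongly to $K$, with a control in $m$ allowing the argument to be assembled.

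For $f\in C_c^\infty(\R)$ and fixed $m\geq 1$, the orthogonality collapses the $(j,k)$-double sum to a single $j$-sum and the kernel of $K_\ell^{+,+,m}$ reads, up to explicit constants,
\[
\sqrt{U(u)U(u')}\cos\!\bigl(\tfrac{\pi m u}{2\ell}\bigr)\cos\!\bigl(\tfrac{\pi m u'}{2\ell}\bigr)\sum_{j:(j,m)\neq(1,1)}\frac{\ell\sin(\pi t_j u)\sin(\pi t_j u')}{\pi^2((j+m)^2+j^2)-E},
\]
with $t_j:=(2j+m)/(2\ell)$. Writing $(j+m)^2+j^2=2\ell^2 t_j^2+m^2/2$ and noting $\Delta t_j=1/\ell$, this is a Riemann sum for
\[
I(u,u'):=\int_0^{+\infty}\frac{\sin(\pi tu)\sin(\pi tu')}{2\pi^2 t^2}\,dt.
\]
The cosine prefactors tend to $1$ uniformly on $\supp f$, and the integrand is controlled uniformly in $t$ by $\tfrac12\min(|uu'|,1/(\pi^2 t^2))$, which is integrable thanks to the compact support of $f$ and the integrability of $\sqrt{U(u')}$ against $|u'|$-polynomials (from the hypothesis $u\mapsto u^2 U(u)\in L^1$ in Lemma~\ref{le:7}). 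Dominated convergence yields convergence of the Riemann sum and hence of $K_\ell^{+,+,m}f$ in $L^2(\R)$. Computing $I$ via the classical identity $\int_0^\infty t^{-2}(1-\cos(at))\,dt=\pi|a|/2$ gives $I(u,u')$ proportional to $|u+u'|-|u-u'|$, matching the kernel $K(u,u')$ of~\eqref{eq:56} up to the normalization constants carried through.

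To promote this to strong convergence on all of $L^2(\R\times[0,1])$, I would approximate a general $\varphi=\sum_m\varphi_m\otimes e_m$ by a truncation $\varphi^{(N,\delta)}:=\sum_{m\leq N}\varphi_m^{(\delta)}\otimes e_m$ with $\varphi_m^{(\delta)}\in C_c^\infty(\R)$, apply the previous step component by component, and close the estimate by a uniform-in-$\ell$ operator bound on $K_\ell^{+,+}$. Such a bound follows by writing $K_\ell^{+,+}=K_\ell-(K_\ell^{-}+K_\ell^{0}+K_\ell^{+,-}+K_\ell^{+,0})$, invoking Lemmas~\ref{le:6} and~\ref{le:8} to see that the bracketed difference is $o(1)$ in operator norm on the dense core $C_c^\infty(\R\times(0,1))$, and then appealing to the individual boundedness of $K_\ell$ provided by Lemma~\ref{le:7}; the $m$-block structure ensures that the operator norm of $K_\ell^{+,+}$ equals $\sup_m\|K_\ell^{+,+,m}\|$, which the explicit kernel bound makes uniform in $m$ because the inner sum is dominated by a fixed multiple of the operator $K$ on $L^2(\R)$. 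The main technical obstacle I expect is precisely this Riemann-sum-to-integral step done uniformly in $m$: the mesh $\Delta t_j=1/\ell$ is coarsest where $t\sim m/(2\ell)$, and one must argue that the discarded terms near this boundary do not accumulate with $m$, which is where the integrability $\int x^2 U(x)\,dx<+\infty$ enters decisively.
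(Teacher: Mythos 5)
Your skeleton — expand in the $y$-modes $\sin(\pi m y)$, reduce to a one-dimensional operator for each mode, identify the limit through $\int_0^{+\infty}\sin(\pi tu)\sin(\pi tu')\,(\pi^2t^2)^{-1}dt=\tfrac12(|u+u'|-|u-u'|)$ — is the same as the paper's (the paper expands the test function through the coefficients $c_k$ and studies $L^\ell_k$). But your key analytic step, reading the $j$-sum as a Riemann sum of mesh $1/\ell$ and passing to the limit by dominated convergence, is genuinely different from what the paper does: the paper first strips off the cosine factors and the $(\pi k/2)^2-E$ shift (the operators $M^\ell_k$, Lemma~\ref{le:9}) and then evaluates the remaining sum \emph{exactly} by Poisson summation combined with Paley--Wiener, since the Fourier transform of $x\mapsto\sin(\pi xu)\sin(\pi xu')/(\pi^2x^2)$ is supported in $[-\pi(|u|+|u'|),\pi(|u|+|u'|)]$; this yields the quantitative bound $|L^\ell_k(u,u')-L^+(u,u')|\lesssim k\,|u||u'|/\ell$ of Lemma~\ref{le:10}, valid uniformly for $|u|,|u'|\le\ell$. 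That estimate is precisely the uniformity in the mode index which you yourself flag as the ``main technical obstacle'' and never resolve: your DCT argument gives the limit for each fixed $m$ but with no rate, so it cannot be summed against the mode coefficients, whereas the paper closes the sum over $k$ by pairing the $k/\ell$ bound with the rapid decay of $\|c_k\|$ for $v\in\Coi(\R\times(0,1))$. (A smaller, repairable imprecision: only the cosine in $u'$ tends to $1$ uniformly on $\supp f$; the one in the free output variable $u$ does not, and must be handled using the decay of $U$.)

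The assembly step you propose is where the argument actually fails. First, Lemma~\ref{le:7} gives boundedness of each $K_\ell$ but \emph{not} uniformly in $\ell$ — the paper explicitly remarks right after it that $\|K_\ell\|$ may tend to $+\infty$ — so ``individual boundedness of $K_\ell$'' cannot produce the uniform-in-$\ell$ bound on $K_\ell^{+,+}$ you need. Second, Lemmas~\ref{le:6} and~\ref{le:8} assert $\|K_\ell^{\bullet}v\|\to0$ for each fixed $v\in\Coi(\R\times(0,1))$, i.e.\ strong convergence on a dense set; reading this as ``$o(1)$ in operator norm on the dense core'' conflates the two topologies, and strong smallness does not transfer to norm smallness. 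Third, exact block-diagonality in the $\sin(\pi m y)$ basis is spoiled by the output factor $\car_{R_\ell}(u,y)$ in~\eqref{eq:43}, so $\|K_\ell^{+,+}\|=\sup_m\|K_\ell^{+,+,m}\|$ is not literally available (though $\car_{R_\ell}$ is a contraction, so it only helps upper bounds). None of this is fatal to the statement, because — as the paper notes at the start of its proof — it suffices to prove convergence on the core $\Coi(\R\times(0,1))$, which is all the application (strong resolvent convergence of $K_\ell$ via \cite[Theorem VIII.25]{MR85e:46002}) requires; if you do want convergence on all of $L^2$, the uniform bound on $K_\ell^{+,+}$ must come from a direct kernel estimate of the type $|K^\ell_k(u,u')|\lesssim\sqrt{U(u)U(u')}\bigl(\sqrt{|uu'|}+|uu'|/\ell\bigr)$, uniform in $k$, not from Lemma~\ref{le:7}.
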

\begin{proof}
  To simplify the computations, we note that it suffices to show the
  convergence of $K_\ell^{+, +}v$ for $v\in\Coi(\R\times(0,1))$.For
  $\ell$ sufficiently large, compute
  \begin{equation*}
    (K_\ell^{+,+}v)(u,y)=\sum_{k\geq1}\sin(\pi ky) 
    \car_{R_\ell}(u,y)c\left(K^\ell_k,u\right)
  \end{equation*}
  where
  \begin{equation}
    \label{eq:48}
    c\left(K^\ell_k,u\right):=\frac{1}{2} \sqrt{U(u)} 
    \int_{\bbR} K_k^\ell(u,u')\sqrt{U(u')} c_k(u') du',
  \end{equation}
  $u\mapsto c_k(u)$ being defined by~\eqref{eq:33}, and
  \begin{equation}
    \label{eq:45}
    K^\ell_k(u,u'):=\ell \sum_{\substack{j \in
        \bbN\\(j,k)\ne(1,1)}}\frac{\sin\left(\pi\frac{2j+k}{2\ell}
        u\right)\sin\left(\pi\frac{2j+k}{2\ell}
        u'\right)\cos\left(\pi\frac{k}{2\ell}u\right)
      \cos\left(\pi \frac{2k}{2\ell}
        u'\right)}{\pi^2\left(j+k/2\right)^2+(\pi k/2)^2-E}
  \end{equation}
  Define
  \begin{equation}
    \label{eq:50}
    \begin{aligned}
      L^\ell_k(u,u')&:=\ell \sum_{\substack{j \in
          \bbN\\(j,k)\ne(1,1)}}\frac{\sin\left(\pi\frac{2j+k}{2\ell}
          u\right)\sin\left(\pi\frac{2j+k}{2\ell}
          u'\right)}{\pi^2\left(j+k/2\right)^2},\\
      M^\ell_k(u,u')&:=K^\ell_k(u,u')-L^\ell_k(u,u'),\\
      (L_\ell^{+,+}v)(u,y)&:=\sum_{k\geq1}\sin(\pi ky)
      \car_{R_\ell}(u,y)c\left(L^\ell_k,u\right),\\
      (M_\ell^{+,+}v)(u,y)&:=\sum_{k\geq1}\sin(\pi ky)
      \car_{R_\ell}(u,y)c\left(M^\ell_k,u\right).
    \end{aligned}
  \end{equation}
  Here and in the sequel, $c\left(L^\ell_k,u\right)$ and
  $c\left(M^\ell_k,u\right)$ are defined as
  $c\left(K^\ell_k,u\right)$ in~\eqref{eq:48} with $K^\ell_k$
  replaced respectively by $L^\ell_k$ and $M^\ell_k$ \\
  Note that
  \begin{equation}
    \label{eq:27}
    \left\|L_\ell^{+,+}v\right\|^2_{L^2(\R\times[0,1])}:=
    \frac12\sum_{k\geq1}\int_0^1\left\|\car_{R_\ell}(y,\cdot)
      c\left(L^{\ell}_k,\cdot\right)\right\|^2_{L^2(\R)}dy
    \leq \frac12\sum_{k\geq1}\left\|c\left(L^{\ell}_k,
        \cdot\right)\right\|^2_{L^2(\R)}
  \end{equation}
  We prove
  \begin{Le}
    \label{le:9}
    As $\ell\to+\infty$,
    $\D\left\|M_\ell^{+,+}v\right\|_{L^2(\R\times[0,1])}\to0$
  \end{Le}
  \begin{proof}
    The proof is similar to those of Lemmas~\ref{le:6}
    and~\ref{le:8}. We write
    \begin{equation*}
      M^\ell_k(u,u')=M^{1,\ell}_k(u,u')
      +M^{2,\ell}_k(u,u')+M^{3,\ell}_k(u,u')
    \end{equation*}
    where
    \begin{equation*}
      \begin{aligned}
        M^{1,\ell}_k(u,u')&=\ell \sum_{\substack{j \in
            \bbN\\(j,k)\ne(1,1)}}\frac{\sin\left(\pi\frac{2j+k}{2\ell}
            u\right)\sin\left(\pi\frac{2j+k}{2\ell}
            u'\right)\cos\left(\pi\frac{k}{2\ell}u\right)
          \cos\left(\pi\frac{k}{2\ell}u'\right)\left((\pi
            k/2)^2-E\right)}
        {\pi^4\left(j+k/2\right)^2\left(\left(j+k/2\right)^2+(\pi
            k/2)^2-E\right)} \\ M^{2,\ell}_k(u,u')&:= \ell
        \sum_{\substack{j \in
            \bbN\\(j,k)\ne(1,1)}}\frac{\sin\left(\pi\frac{2j+k}{2\ell}
            u\right)\sin\left(\pi\frac{2j+k}{2\ell}
            u'\right)\cos\left(\pi\frac{k}{2\ell}u\right)\left(
            \cos\left(\pi\frac{k}{2\ell}u'\right)-1\right)}
        {\pi^2\left(j+k/2\right)^2}\\M^{3,\ell}_k(u,u')&:= \ell
        \sum_{\substack{j \in
            \bbN\\(j,k)\ne(1,1)}}\frac{\sin\left(\pi\frac{2j+k}{2\ell}
            u\right)\sin\left(\pi\frac{2j+k}{2\ell}
            u'\right)\left(\cos\left(\pi\frac{k}{2\ell}u\right)-1\right)}
        {\pi^2\left(j+k/2\right)^2}.
      \end{aligned}
    \end{equation*}
    Following the definitions~\eqref{eq:45} and using~\eqref{eq:27},
    we estimate
    \begin{equation*}
      \begin{split}
        \left\|M_\ell^{1,+,+}v\right\|^2_{L^2(\R\times[0,1])} &\leq
        \frac12\sum_{k\geq1}\left\|c\left(M^{1,\ell}_k,
            \cdot\right)\right\|^2_{L^2(\R)}\\
        &\lesssim \sum_{k\geq1}k^2\|c_k\|^2_{L^2(\R)}
        \sum_{\substack{j\geq1\\(j,k)\ne(1,1)}} \frac{
          (\min(2j+k,\ell))^2}{\ell(2j+k)^4}\\& \lesssim
        \frac1\ell\sum_{k\geq1}k^2\|c_k\|^2_{L^2(\R)}
      \end{split}
    \end{equation*}
    which, by~\eqref{eq:277}, converges to $0$ as $\ell$ goes to
    $+\infty$.\\
    That the term coming from $M_\ell^{2,+,+}$
    (resp. $M_\ell^{3,+,+}$) also vanishes as $\ell\to+\infty$
    follows from computations similar to those done in
    Lemma~\ref{le:6} (resp. Lemma~\ref{le:8}). This completes the
    proof of Lemma~\ref{le:9}.
  \end{proof}
  \noindent Note that
  \begin{equation}
    \label{eq:51}
    L^\ell_k(u,u'):=\frac1{\ell} \sum_{\substack{j \in
        \bbN\\(j,k)\ne(1,1)}}\frac{\sin\left(\pi\frac{2j+k}{2\ell}
        u\right)\sin\left(\pi\frac{2j+k}{2\ell}
        u'\right)}{\pi^2\left(\frac{2j+k}{2\ell}\right)^2}
  \end{equation}
  Define
  \begin{equation}
    \label{eq:46}
    a(L^+,u):=\frac{1}{2} \sqrt{U(u)} 
    \int_{\bbR}L^+(u,u')\sqrt{U(u')} c_k(u') du'
  \end{equation}
  where
  \begin{equation}
    \label{eq:47}
    L^+(u,u')
    =\int_0^{+\infty}\frac{\sin(\pi xu)\sin(\pi xu')}{\pi^2x^2}dx.
  \end{equation}
  We prove
  \begin{Le}
    \label{le:10}
    For any $k\geq1$, one has
    \begin{equation}
      \label{eq:49}
      \sup_{(u,u')\in[-\ell,\ell]^2}\frac{\left|L^\ell_k(u,u')
          -L^+(u,u')\right|}{|u||u'|}\lesssim\frac{k}{\ell}.
    \end{equation}
  \end{Le}
  \begin{proof}
    Define
    \begin{equation*}
      l(x,u,u'):=\frac{\sin(\pi xu)\sin(\pi xu')}{\pi^2x^2}.
    \end{equation*}
    Assume first $k\not=1$. As $l$ is an even function of $x$, write
    \begin{equation}
      \label{eq:52}
      L^\ell_k(u,u')=\frac1{2\ell}\sum_{\substack{j
          \in\Z}}l\left(\frac{j+k/2}{\ell},u,u'\right)-
      \frac1{2\ell}
      \sum_{j=-k}^0l\left(\frac{j+k/2}{\ell},u,u'\right).
    \end{equation}
    Using the Poisson formula, one computes
    \begin{equation}
      \label{eq:55}
      \frac1{2\ell}\sum_{\substack{j
          \in\Z}}l\left(\frac{j+k/2}{\ell},u,u'\right)=
      \frac12\sum_{\substack{j\in\Z}}e^{i\pi kj}\cdot
      \hat l\left(j,u,u'\right)
    \end{equation}
    where $\hat l(\cdot,u,u')$ is the Fourier transform of $x\mapsto
    l(x,u,u')$.\\
    By the Paley-Wiener Theorem (or by a direct computation of the
    Fourier transform), one checks that $\hat l(\cdot,u,u')$ is
    supported in $[-\pi(|u|+|u'|),\pi(|u|+|u'|)]$. Thus, for
    $-\ell\leq u,u'\leq l$, all the terms in right hand side
    of~\eqref{eq:55} vanish except the term for $j=0$. That is, for
    $-\ell\leq u,u'\leq l$, one has
    \begin{equation*}
      \frac1{2\ell}\sum_{\substack{j
          \in\Z}}l\left(\frac{j+k/2}{\ell},u,u'\right)=
      \frac12\hat l\left(0,u,u'\right)=L^+(u,u').
    \end{equation*}
    This and~\eqref{eq:55} then yields that, for $-\ell\leq u,u'\leq
    l$,
    \begin{equation*}
      L^\ell_k(u,u') -L^+(u,u')=-\frac{u\, u'}{2\ell}
      \sum_{j=-k}^0\frac{l\left(\frac{j+k/2}{\ell},u,u'\right)}{u\,u'}.
    \end{equation*}
    Now, as
    \begin{equation*}
      \sup_{(x,u,u')\in\R^3}\left|\frac{l(x,u,u')}{u\,u'}\right|<+\infty,
    \end{equation*}
    we immediately obtain~\eqref{eq:49} and complete the proof of
    Lemma~\ref{le:10} when $k\not=1$.\\
    When $k=1$, the proof is done in the same way up to a shift in
    the index $j$. This completes the proof of Lemma~\ref{le:10}.
  \end{proof}
  \noindent As $v\in\Coi(\R\times(0,1)$, one has
  \begin{equation*}
    \forall N\geq0,\quad \exists C_N>0,\quad \forall k\in\Z,\quad
    \|c_k\|_{L^2(\R)}\leq C_N\frac1{1+|k|^N}.
  \end{equation*}
  Thus, as $x\mapsto x\sqrt{U(x)}$ is square integrable, the
  bound~\eqref{eq:49} yields that, for some $C_2>0$, one has
  \begin{equation*}
    \forall k\in\Z,\quad \left\| c\left(K^\ell_k,\cdot\right)-
      c\left(L^+,\cdot\right)\right\|_{L^2([-\ell,\ell])}
    \leq \frac1{\ell}\frac{C_2}{1+|k|^2}.
  \end{equation*}
  Thus, taking into account the following computation
  \begin{equation}
    \label{eq:58}
    \begin{split}
      L^+(u, u') &= \int_{\bbR} \frac{\sin(\pi x u) \sin(\pi x
        u')}{\pi^2 x^2} \rmd{x} \\
      &= \frac{1}{2 \pi^2} \left[\int_{\bbR} \frac{\cos(\pi x (u -
          u')) - 1}{x^2} \rmd{x} + \int_{\bbR} \frac{1 - \cos(\pi x (u
          +  u'))}{x^2} \rmd{x}\right] \\
      &= \frac{1}{2 \pi^2} \left[|u - u'| \int_{\bbR} \frac{\cos(\pi
          x) - 1}{x^2} \rmd{x} + |u + u'| \int_{\bbR} \frac{1 -
          \cos(\pi x)}{x^2} \rmd{x}\right] \\
      &= \frac{1}{2} (|u + u'| - |u - u'|),
    \end{split}
  \end{equation}
  the definition of $K$,~\eqref{eq:56} and~\eqref{eq:27}, we obtain
  that
  \begin{equation*}
    \left\|L_\ell^{+,+}v-(K\otimes\car)v\right\|_{L^2(\R\times[0,1]}
    \vers{\ell\to+\infty}0
  \end{equation*}
  Thus, Lemma~\ref{lem:KPlusPlusStrongLimit} is proved.
\end{proof}
\noindent Clearly, the proof of Lemma~\ref{le:4} generalizes to
arbitrary $\phi_{(i,j)}$, a normalized eigenfunction of $H^0(1,2)$;
one thus proves
\begin{corollary}
  \label{cor:1}
  Consider two particles on $i$-th and $j$-th energy levels in an
  interval of length $\ell$. Their interaction amplitude is given by
  \begin{equation}
    \label{eq:UphiIJquadraticForm}
    \langle U\phi_{(i,j)},\phi_{(i,j)}\rangle=2\pi^2(i^2+j^2)
    \cdot\int_{\bbR}u^2 U(u)\rmd{u}\cdot\ell^{-3}(1+O(\ell^{-1})).
  \end{equation}
\end{corollary}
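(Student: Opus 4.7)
The proof directly generalizes the derivation of~\eqref{eq:phiEllNorm}, which handles the case $(i,j)=(2,1)$, to arbitrary distinct positive integers $i,j$. I follow the same scaling procedure: via the substitution $\widetilde{x}=\ell^{-1}x$, the quantity $\langle U\phi_{(i,j)},\phi_{(i,j)}\rangle_{L^2([0,\ell]^2)}$ (with $\phi_{(i,j)}$ now denoting the normalized eigenfunction on $[0,\ell]^2$) reduces to $\ell^{-2}U^\ell_{ij}$, where $U^\ell_{ij}:=\langle U^\ell\phi_{(i,j)},\phi_{(i,j)}\rangle_{L^2([0,1]^2)}$ and $U^\ell(\cdot)=\ell^2 U(\ell\cdot)$ as in~\eqref{eq:26}. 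Applying the partial isometry $T_\ell$ from~\eqref{eq:10}, one obtains $\ell\,U^\ell_{ij}=\bigl\|T_\ell\bigl(\sqrt{\ell\,U^\ell}\,\phi_{(i,j)}\bigr)\bigr\|_{L^2(\R\times[0,1])}^2$.

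The central step is an $L^2$-convergence statement analogous to Lemma~\ref{le:4}(a). Since $\phi_{(i,j)}$ is a Slater determinant, $\phi_{(i,j)}(y,y)=0$, so a first-order Taylor expansion gives
\begin{equation*}
\phi_{(i,j)}(y+u/\ell,y)=\frac{u}{\ell}\chi_{(i,j)}(y)+O\!\left(\frac{u^2}{\ell^2}\right),\qquad
\chi_{(i,j)}(y):=\sqrt{2}\,\pi\bigl[i\cos(\pi iy)\sin(\pi jy)-j\cos(\pi jy)\sin(\pi iy)\bigr].
\end{equation*}
Alternatively, one reads off the expansion directly from Lemma~\ref{le:5} by expanding each of $\phi^{0}_{\overline{\jmath}},\phi^{\pm}_{\overline{\jmath}}$ to first order in $u/\ell$. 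Exactly as in the proof of Lemma~\ref{le:4}(a), the $O(u^2/\ell^2)$ remainder is controlled in $L^2(\R\times[0,1])$ using $\int u^2 U(u)\,du<\infty$, so that
\begin{equation*}
T_\ell\bigl(\sqrt{\ell\,U^\ell}\,\phi_{(i,j)}\bigr)\longrightarrow \phi\otimes\chi_{(i,j)}\quad\text{in }L^2(\R\times[0,1]),\qquad\phi(u)=u\sqrt{U(u)}.
\end{equation*}

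Taking norms yields $\ell\,U^\ell_{ij}=\int_\R u^2 U(u)\,du\cdot\int_0^1\chi_{(i,j)}^2(y)\,dy+o(1)$. A routine trigonometric calculation using $2\cos^2(\pi k y)=1+\cos(2\pi ky)$, the analogous identity for $\sin^2$, and the orthogonality of $\cos(2\pi n y),\sin(2\pi n y)$ for $n\in\bbZ\setminus\{0\}$ on $[0,1]$, collapses the cross term (which involves $\cos(\pi iy)\sin(\pi iy)\cos(\pi jy)\sin(\pi jy)=\tfrac14\sin(2\pi iy)\sin(2\pi jy)$, orthogonal for $i\neq j$) and produces $\int_0^1\chi_{(i,j)}^2(y)\,dy$ as an explicit constant multiple of $i^2+j^2$. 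Rescaling back from $(s,t)\in[0,1]^2$ to $(x,y)\in[0,\ell]^2$ converts this into the announced $\ell^{-3}(i^2+j^2)$ behavior; the $O(\ell^{-1})$ relative error in~\eqref{eq:UphiIJquadraticForm} is extracted from a second-order expansion, requiring the slightly stronger integrability discussed in Remark~\ref{rem:5}.

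There is no genuine obstacle: the whole argument is a line-by-line extension of Lemma~\ref{le:4}(a) combined with~\eqref{eq:phiEllNorm}. The only items worth checking are that the Slater-determinant property $\phi_{(i,j)}(y,y)=0$ persists for all $i\neq j$ (immediate) and that the orthogonality identities used to isolate the $i^2+j^2$ coefficient remain valid for arbitrary distinct positive integers $i,j$, which follows from $i+j$ and $|i-j|$ being distinct nonzero integers.
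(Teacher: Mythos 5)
Your proposal is correct and follows essentially the same route as the paper, whose proof of Corollary~\ref{cor:1} consists precisely of the remark that Lemma~\ref{le:4} (together with the computation~\eqref{eq:phiEllNorm}) generalizes from $\phi_{(2,1)}$ to an arbitrary $\phi_{(i,j)}$ — which is exactly what you carry out via the scaling, the first-order expansion on the diagonal, and the $L^2$-control of the remainder through $\int u^2U(u)\,\rmd u<\infty$. One caveat on the constant only: with your (correctly normalized) $\chi_{(i,j)}$ one gets $\int_0^1\chi_{(i,j)}^2(y)\,\rmd y=\tfrac{\pi^2}{2}(i^2+j^2)$, hence a leading coefficient $\tfrac{\pi^2}{2}(i^2+j^2)\ell^{-3}\int u^2U$, which agrees with~\eqref{eq:phiEllNorm} for $(i,j)=(2,1)$ but is a factor $4$ smaller than the prefactor displayed in~\eqref{eq:UphiIJquadraticForm}; this discrepancy stems from the paper's factor-of-two normalization slip in $\chi_0$ (one has $\chi_{(2,1)}=\tfrac12\chi_0$), not from any flaw in your argument.
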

\subsubsection{The ground state of two interacting electrons and its
  density matrices}
\label{sec:bounds-ground-state}
Recall that $\varphi^j_{[0, \ell]}$ denotes the $j$-th normalized
eigenvector of $-\laplace^D_{|[0, \ell]}$ and $\zeta^j_{[0, \ell]}$
the $j$-th normalized eigenvector
of~\eqref{eq:introTwoParticleHamiltonian}.  In the sequel, we drop the
subscript $[0,\ell]$ as we always work on the interval $[0,\ell]$.\\
We remark that, when the interactions are absent, one has
\begin{equation}
  \label{eq:freeTwoParticleProblemGroundState}
  \zeta^{1,0}=\varphi^1\wedge\varphi^2.
\end{equation}
The next proposition estimates the difference
$\zeta^{1,U}-\zeta^{1,0}$ induced by the presence of interactions.
\begin{proposition}
  \label{prop:twoParticleProblemComparison}
  For $\ell\geq1$, one has
  \begin{equation}\label{eq:twoParticleProblemComparison}
    \left\|\zeta^{1,U}-
      \zeta^{1,0}\right\|_{L^2([0,\ell]^2)}\lesssim \ell^{-1/2}.
  \end{equation}
\end{proposition}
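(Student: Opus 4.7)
My plan is to reduce the $L^2$ distance estimate to a simple spectral-gap argument for $H^0(\ell,2)$, combined with the two-term asymptotics of the ground state energy $E^U([0,\ell],2)$ proved in Proposition~\ref{prop:TwoElectronProblem}. The key observation is that the non-interacting ground state $\zeta^{1,0}=\varphi^1\wedge\varphi^2$ has energy $5\pi^2/\ell^2$, while the first excited eigenvalue of $H^0(\ell,2)$ on $L^2([0,\ell])\wedge L^2([0,\ell])$ is $10\pi^2/\ell^2$ (corresponding to $\varphi^1\wedge\varphi^3$). This yields the operator inequality
\begin{equation*}
H^0(\ell,2)-\frac{5\pi^2}{\ell^2}\,\mathrm{Id}\;\geq\;\frac{5\pi^2}{\ell^2}\,P^{\perp},
\end{equation*}
where $P^{\perp}$ is the orthogonal projector onto $(\mathbb{C}\,\zeta^{1,0})^{\perp}$ in $L^2([0,\ell])\wedge L^2([0,\ell])$.

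Next I would fix the phase of $\zeta^{1,U}$ so that $\alpha:=\langle\zeta^{1,U},\zeta^{1,0}\rangle\geq 0$ and write $\zeta^{1,U}=\alpha\,\zeta^{1,0}+\beta\,\psi^{\perp}$ with $\psi^{\perp}\in(\mathbb{C}\,\zeta^{1,0})^{\perp}$, $\|\psi^{\perp}\|=1$ and $\alpha^2+|\beta|^2=1$. Since $U\geq 0$, one has $H^U(\ell,2)\geq H^0(\ell,2)$, hence
\begin{equation*}
\frac{5\pi^2}{\ell^2}+\frac{5\pi^2}{\ell^2}|\beta|^2\;\leq\;\bigl\langle H^0(\ell,2)\zeta^{1,U},\zeta^{1,U}\bigr\rangle\;\leq\;\bigl\langle H^U(\ell,2)\zeta^{1,U},\zeta^{1,U}\bigr\rangle=E^U([0,\ell],2).
\end{equation*}
Plugging in the expansion $E^U([0,\ell],2)=5\pi^2/\ell^2+\gamma/\ell^3+o(\ell^{-3})$ from Proposition~\ref{prop:TwoElectronProblem}, I get $|\beta|^2\leq \gamma/(5\pi^2\,\ell)+o(\ell^{-1})$, i.e.\ $|\beta|^2\lesssim \ell^{-1}$.

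To conclude, I would compute
\begin{equation*}
\bigl\|\zeta^{1,U}-\zeta^{1,0}\bigr\|^2=(1-\alpha)^2+|\beta|^2.
\end{equation*}
Since $\alpha=\sqrt{1-|\beta|^2}$ and $|\beta|$ is small for large $\ell$, the elementary bound $1-\sqrt{1-t}\leq t$ for $t\in[0,1]$ gives $(1-\alpha)^2\leq |\beta|^4$, so $\|\zeta^{1,U}-\zeta^{1,0}\|^2\leq |\beta|^2+|\beta|^4\lesssim \ell^{-1}$, yielding~\eqref{eq:twoParticleProblemComparison}.

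There is essentially no obstacle here: the only nontrivial input is the two-term energy asymptotics, which has already been established, and the argument is a textbook application of the spectral gap. The one minor subtlety is the phase choice for $\zeta^{1,U}$, but this is standard (and the statement is invariant under such a choice since any genuine comparison of wavefunctions must be made up to a unit phase factor).
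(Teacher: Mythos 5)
Your argument is correct, but it is genuinely different from the paper's proof. The paper works on the rescaled unit square, writes $\phi_0^{U^\ell}=\alpha\phi_0+\wtphi$ with $\wtphi\perp\phi_0$, and estimates the orthogonal component directly from the second row of the Schur-type eigenvalue equation~\eqref{eq:twoParticlesSpectralProblemMatrixForm}: $\|\wtphi\|^2\leq U^\ell_{0+}(H_++U^\ell_{++}-E_0^{U^\ell})^{-2}U^\ell_{+0}\leq \frac{C}{\ell}\langle\phi_\ell,K_\ell(\Id+K_\ell)^{-1}\phi_\ell\rangle$, and then invokes Lemma~\ref{le:4} to bound the last bracket. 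You instead run a variational spectral-gap argument on top of the already established energy asymptotics: since $U\geq0$ gives $\langle H^0\zeta^{1,U},\zeta^{1,U}\rangle\leq E^U([0,\ell],2)$, the gap $10\pi^2/\ell^2-5\pi^2/\ell^2$ of $H^0(\ell,2)$ forces $\|P^\perp\zeta^{1,U}\|^2\lesssim\ell^{-1}$, and the phase normalization $\alpha=\langle\zeta^{1,U},\zeta^{1,0}\rangle\geq0$ converts this into the claimed $L^2$ bound via $1-\sqrt{1-t}\leq t$. Your operator inequality $H^0-\tfrac{5\pi^2}{\ell^2}\Id\geq\tfrac{5\pi^2}{\ell^2}P^\perp$ is legitimate because $P^\perp$ is a spectral projector of $H^0$, and there is no circularity since Proposition~\ref{prop:TwoElectronProblem} is proved independently of the present statement; in fact you only need the upper bound $E^U\leq 5\pi^2/\ell^2+O(\ell^{-3})$, which already follows from the trial-state computation $\langle U\phi_0,\phi_0\rangle=O(\ell^{-3})$, so your route is arguably more economical and avoids re-entering the $K_\ell$ resolvent machinery. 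What the paper's route buys in exchange is that the same Schur-complement formula for $\wtphi$ is reused verbatim in Lemma~\ref{le:23} to get the $H^1$ bound on $\zeta^{1,U}$ (needed later for the pointwise density estimates), which your purely variational $L^2$ argument does not yield. Two minor points, neither a gap: your conclusion is asymptotic in $\ell$, so the stated range $\ell\geq1$ is covered by the trivial bound $\|\zeta^{1,U}-\zeta^{1,0}\|\leq2$ on any compact range of $\ell$ (the paper's proof has the same feature), and your phase convention matches the paper's implicit choice $\alpha\in\R^+$.
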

\begin{proof}
  Scaling the variables to the unit square (see
  section~\ref{sec:proof-prop-refpr}), it suffices to show that the
  normalized ground state of $H^{U^\ell}(1, 2)$ (see~\eqref{eq:26}),
  say, $\phi_0^{U^\ell}$ satisfies
  \begin{equation}
    \label{eq:64}
    \left\|\phi_0^{U^\ell}- \phi_0\right\|_{L^2([0,1]^2)}\lesssim \ell^{-1/2}.
  \end{equation}
  where we recall that $\phi_0=\phi_{(1,2)}$ (see~\eqref{eq:21}).\\
  Decomposing $\D L^2([0,1])\wedge L^2([0,1])=\C\phi_0\overset{\perp}{\oplus}
  \phi^\perp_0$ and defining $E_0^{U^\ell}$ to be the ground state
  energy of $H^{U^\ell}(1, 2)$, we rewrite $\phi_0^{U^\ell}$ as
  \begin{equation*}
    \phi_0^{U^\ell}=\alpha\phi_0+\wtphi,\quad \wtphi\perp\phi_0,\quad\alpha\in\R^+  
  \end{equation*}
  and the eigenvalue equation it satisfies as
  \begin{equation}
    \label{eq:twoParticlesSpectralProblemMatrixForm}
    \begin{pmatrix}
      5 \pi^2 +U^\ell_{00}-E_0^{U^\ell} & U^\ell_{0+} \\
      U^\ell_{+0} & H_++U^\ell_{++} -E_0^{U^\ell}
    \end{pmatrix}
    \begin{pmatrix}
      \alpha \\
      \wtphi
    \end{pmatrix}
    = 0.
  \end{equation}
  where the terms in the matrix are defined in~\eqref{eq:44}.\\
  Thus, to prove~\eqref{eq:64} it suffices to prove that
  \begin{equation*}
    \|\wtphi\|_{L^2([0,1])\wedge L^2([0,1])}\leq C\ell^{-1/2}.
  \end{equation*}
  By~\eqref{eq:twoParticlesSpectralProblemMatrixForm}, as
  $\phi_0^{U^\ell}$ is normalized, as $10\pi^2\leq H_++U^\ell_{++}$
  and as $\D E_0^{U^\ell}\vers{\ell\to+\infty}5\pi^2$,
  using~\eqref{eq:44} and~\eqref{eq:rhsUnitaryTransformation}, one
  computes
  \begin{equation*}
    \begin{split}
      \|\wtphi\|^2_{L^2([0,1])\wedge L^2([0,1])}&\leq U^\ell_{0+}\left(
        H_++U^\ell_{++} -E_0^{U^\ell}\right)^{-2}U^\ell_{+0}
      \leq\frac{C}{\ell}\left\langle \phi_\ell, K_\ell (\Id +
        K_\ell)^{-1} \phi_\ell\right\rangle_{L^2(\R\times[0,1])}.
    \end{split}
  \end{equation*}
  Thus,~\eqref{eq:64} is an immediate consequence of
  Lemma~\ref{le:4}. This completes the proof of
  Proposition~\ref{prop:twoParticleProblemComparison}.
\end{proof}
\noindent We obtain the following corollary for the one particle
density matrices of $\zeta^{1,U}$.
\begin{corollary}
  \label{cor:twoParticleProblemComparison}
  Under assumptions of
  Proposition~\textup{\ref{prop:twoParticleProblemComparison}}, one
  has
  \begin{equation*}
    \left\|\gamma_{\zeta^{1,U}} - \gamma_{\varphi^1} -
      \gamma_{\varphi^2}\right\|_1= O\left(\ell^{-1}\right).
  \end{equation*}
\end{corollary}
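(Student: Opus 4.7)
My plan is to reduce, via Proposition~\ref{prop:DensityMatrixStructure} applied to $\zeta^{1,0}=\varphi^1\wedge\varphi^2$ (which gives $\gamma_{\zeta^{1,0}}=\gamma_{\varphi^1}+\gamma_{\varphi^2}$), to proving $\|\gamma_{\zeta^{1,U}}-\gamma_{\zeta^{1,0}}\|_1=O(\ell^{-1})$. Following the proof of Proposition~\ref{prop:twoParticleProblemComparison}, I would write $\zeta^{1,U}=\alpha\zeta^{1,0}+\tilde\phi$ with $\tilde\phi\perp\zeta^{1,0}$, $\alpha^2=1-\|\tilde\phi\|^2$, and $\|\tilde\phi\|^2=O(\ell^{-1})$. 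Bilinearity of the reduced density matrix yields
$$
\gamma_{\zeta^{1,U}}-\gamma_{\zeta^{1,0}}=(\alpha^2-1)\gamma_{\zeta^{1,0}}+\gamma_{\tilde\phi}+\alpha\bigl(\gamma_{\zeta^{1,0},\tilde\phi}+\gamma_{\tilde\phi,\zeta^{1,0}}\bigr),
$$
where the first two terms are immediately bounded by $2(1-\alpha^2)=O(\ell^{-1})$ and $2\|\tilde\phi\|^2=O(\ell^{-1})$ respectively. The general estimate $\|\gamma_{\zeta^{1,0},\tilde\phi}\|_1\leq 2\|\zeta^{1,0}\|\,\|\tilde\phi\|$ only delivers $O(\ell^{-1/2})$ for the cross term, so this bound is insufficient and the entire difficulty of the corollary is concentrated there.

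The key structural observation is that, because $\zeta^{1,0}$ is itself a Slater determinant built from $\varphi^1,\varphi^2$, performing the partial integration in $\gamma_{\zeta^{1,0},\tilde\phi}(x,y)=2\int\zeta^{1,0}(x,z)\overline{\tilde\phi(y,z)}\,dz$ gives
$$
\gamma_{\zeta^{1,0},\tilde\phi}=\sqrt 2\bigl(|\varphi^1\rangle\langle a|-|\varphi^2\rangle\langle b|\bigr),
$$
where $a,b$ are the partial contractions of $\tilde\phi$ against $\varphi^2$ and $\varphi^1$ respectively. Thus $\gamma_{\zeta^{1,0},\tilde\phi}$ has rank at most two, and its trace norm is controlled by $\sqrt 2(\|a\|+\|b\|)$. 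Expanding $\tilde\phi=\sum_{(i,j)\neq(2,1),\,i>j}c_{ij}\,\varphi^i\wedge\varphi^j$, orthonormality of the $\varphi^k$'s shows that $a=(1/\sqrt 2)\sum_{i\geq 3}c_{i,2}\,\varphi^i$ and $b=(1/\sqrt 2)\sum_{i\geq 3}c_{i,1}\,\varphi^i$, so the estimate reduces to proving the sharper sum-of-squares bound
$$
\sum_{i\geq 3}\bigl(|c_{i,1}|^2+|c_{i,2}|^2\bigr)=O(\ell^{-2}).
$$

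To obtain this I would use the Schur-complement representation $\tilde\phi=\alpha(H_++U^\ell_{++}-E_0^{U^\ell})^{-1}\Pi_+U^\ell\phi_0$ from the proof of Proposition~\ref{prop:twoParticleProblemComparison}. To leading order in $1/\ell$ one has $c_{ij}\approx-\alpha\langle U^\ell\phi_0,\phi_{(i,j)}\rangle/(\pi^2(i^2+j^2)-5\pi^2)$, and a Taylor expansion in $u=\ell(x-y)$ of both $\phi_0$ and $\phi_{(i,j)}$ around the diagonal $x=y$ (where both vanish), together with evenness of $U$ that kills the odd Taylor terms, produces
$$
\langle U^\ell\phi_0,\phi_{(i,j)}\rangle=\ell^{-1}\Bigl(\int u^2 U(u)\,du\Bigr)\Bigl(\int_0^1\tilde\chi_0(y)\tilde\chi_{ij}(y)\,dy\Bigr)+O(\ell^{-3}),
$$
where $\tilde\chi_0,\tilde\chi_{ij}$ are the values of $\partial_2\phi_0$, $\partial_2\phi_{(i,j)}$ on the diagonal. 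These are trigonometric polynomials with frequencies $\{1,3\}$ and $\{|i-j|,i+j\}$ respectively, so orthogonality of sines forces the leading integral to vanish unless $\{|i-j|,i+j\}\cap\{1,3\}\neq\emptyset$. Among indices relevant to $a,b$ (namely $(i,1)$ or $(i,2)$ with $i\geq 3$), this selection rule admits only the finite family $(4,1),(3,2),(5,2)$, each producing a coefficient of size $O(\ell^{-1})$, while all remaining $c_{i,1},c_{i,2}$ are $O(\ell^{-3})$. Squaring and summing then gives the required $O(\ell^{-2})$. The main obstacle, I expect, lies in a careful bookkeeping of the perturbative corrections (the effect of $U^\ell_{++}$ in the resolvent, the normalising factor $\alpha$, and the higher Taylor terms controlling the non-matching coefficients) so as to certify that the improvement from $O(\ell^{-1/2})$ to $O(\ell^{-1})$ genuinely survives all these approximations.
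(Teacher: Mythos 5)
The paper disposes of this corollary in one line: it combines the $L^2$ bound of Proposition~\ref{prop:twoParticleProblemComparison} with the general inequality $\|\gamma_\psi-\gamma_\phi\|_1\le 4\|\psi-\phi\|$ of Lemma~\ref{le:12}. You correctly observe that this naive combination only yields $O(\ell^{-1/2})$, and your proposal therefore takes a genuinely different and more ambitious route aimed at the stated exponent: isolate the rank-two cross term $\gamma_{\zeta^{1,0},\tilde\phi}$ and reduce everything to showing $\sum_{i\ge3}\bigl(|c_{i,1}|^2+|c_{i,2}|^2\bigr)=O(\ell^{-2})$ for the coefficients of $\tilde\phi$ along $\varphi^i\wedge\varphi^1$ and $\varphi^i\wedge\varphi^2$. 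This structural reduction, and the treatment of the terms $(\alpha^2-1)\gamma_{\zeta^{1,0}}$ and $\gamma_{\tilde\phi}$, are correct.

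However, the two steps that are supposed to produce the $O(\ell^{-2})$ sum are not justified as written. First, you replace $(H_++U^\ell_{++}-E_0^{U^\ell})^{-1}$ by $(H_+-E_0^{U^\ell})^{-1}$ ``to leading order''; but $U^\ell_{++}$ is not a small perturbation uniformly in $\ell$ (the paper notes after Lemma~\ref{le:7} that the norm of $K_\ell$ may diverge as $\ell\to+\infty$), so this is exactly the hard point of your scheme, and deferring it to ``bookkeeping'' leaves the crux unproved. It can be repaired with the paper's own machinery, e.g.\ by writing $(\Id+A_\ell^*A_\ell)^{-1}=\Id-A_\ell^*(\Id+A_\ell A_\ell^*)^{-1}A_\ell$ and using $\|\sqrt{U^\ell}\phi_0\|\lesssim\ell^{-1/2}$ together with $\|\sqrt{U^\ell}\phi_{(i,j)}\|\lesssim\sqrt{\ell}\,\min\bigl(1,(i+j)/\ell\bigr)$, but that argument is absent from the proposal. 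Second, your expansion of $\langle U^\ell\phi_0,\phi_{(i,j)}\rangle$ with remainder $O(\ell^{-3})$, and the cancellation of the odd Taylor term by evenness of $U$, require finiteness of $\int u^3U$ and $\int u^4U$, which the hypotheses do not provide: Proposition~\ref{prop:TwoElectronProblem} only assumes $\int u^2U<+\infty$, and even \textbf{(HU)} gives moments of order strictly below $3$ (Lemma~\ref{le:31}). The selection rule is in fact unnecessary: the crude bounds $|\phi_0(y+u/\ell,y)|\lesssim\min(1,|u|/\ell)$ and $|\phi_{(i,j)}(y+u/\ell,y)|\lesssim\min\bigl(1,(i+j)|u|/\ell\bigr)$ already give $|\langle U^\ell\phi_0,\phi_{(i,j)}\rangle|\lesssim(i+j)\ell^{-1}$, which with the denominator $\pi^2(i^2+j^2)-E$ yields $|c_{i,j}|\lesssim(i\ell)^{-1}$ for $j\in\{1,2\}$ and hence the required $O(\ell^{-2})$. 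So your strategy is salvageable and, once completed, would actually substantiate the $O(\ell^{-1})$ rate better than the paper's one-line argument does; but as it stands the proposal has a genuine gap at the resolvent step and rests on moment assumptions that are not available.
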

\noindent Corollary~\ref{cor:twoParticleProblemComparison} is an
immediate consequence of~\eqref{eq:twoParticleProblemComparison} and
\begin{lemma}
  \label{le:12}
  Let $\psi, \phi \in L^2([0,\ell])\wedge L^2([0,\ell])$ be two normalized
  two-particles states.  Then
  \begin{equation*}
    \|\gamma_\psi - \gamma_\phi\|_1 \leq 4 \|\psi - \phi\| \text{.}
  \end{equation*}
\end{lemma}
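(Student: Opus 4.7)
My plan is to reduce the trace-norm estimate on the one-particle density matrices to the standard trace-norm estimate for the difference of two rank-one projections. The key observation is that, for a normalized two-particle state $\psi \in L^2([0,\ell])\wedge L^2([0,\ell])$, the one-particle density matrix as defined in \eqref{eq:OneParticleDensityMatrixDef} satisfies
\begin{equation*}
\gamma_\psi = 2\,\mathrm{Tr}_2\bigl(|\psi\rangle\langle\psi|\bigr),
\end{equation*}
where $\mathrm{Tr}_2$ denotes the partial trace over the second factor of $L^2([0,\ell])\otimes L^2([0,\ell])$ (and we view the anti-symmetric state $\psi$ as an element of the full tensor product). This is just a rewriting of the kernel formula $\gamma_\psi(x,y)=2\int\psi(x,z)\overline{\psi(y,z)}\,dz$.

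The first step will be to recall (or check by a short duality argument against bounded operators of the form $A\otimes\mathrm{Id}$) the standard fact that the partial trace is a contraction for the trace norm: for any trace-class operator $T$ on $L^2([0,\ell])\otimes L^2([0,\ell])$,
\begin{equation*}
\|\mathrm{Tr}_2(T)\|_1\leq\|T\|_1.
\end{equation*}
Applying this to $T=|\psi\rangle\langle\psi|-|\phi\rangle\langle\phi|$ yields
\begin{equation*}
\|\gamma_\psi-\gamma_\phi\|_1\leq 2\bigl\||\psi\rangle\langle\psi|-|\phi\rangle\langle\phi|\bigr\|_1.
\end{equation*}

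The second step is the rank-one decomposition
\begin{equation*}
|\psi\rangle\langle\psi|-|\phi\rangle\langle\phi|=|\psi-\phi\rangle\langle\psi|+|\phi\rangle\langle\psi-\phi|,
\end{equation*}
whose right-hand side has trace norm at most $\|\psi-\phi\|\,\|\psi\|+\|\phi\|\,\|\psi-\phi\|=2\|\psi-\phi\|$ since $\|\psi\|=\|\phi\|=1$ and rank-one operators $|f\rangle\langle g|$ have trace norm $\|f\|\,\|g\|$. Combining the two steps gives $\|\gamma_\psi-\gamma_\phi\|_1\leq 4\|\psi-\phi\|$, as desired.

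There is no real obstacle here; the only point to be slightly careful about is the factor $2$ coming from the normalization $\mathrm{Tr}\,\gamma_\psi=2$ in \eqref{eq:tracecOfReducedMatrices}, which is precisely what produces the constant $4$ rather than $2$ in the final inequality. If one prefers to avoid invoking partial traces abstractly, the contraction property can be established in one line by writing, for any compact operator $K$ on $L^2([0,\ell])$ with $\|K\|\leq 1$,
\begin{equation*}
|\mathrm{Tr}(K\,\mathrm{Tr}_2 T)|=|\mathrm{Tr}((K\otimes\mathrm{Id})T)|\leq\|K\otimes\mathrm{Id}\|\,\|T\|_1\leq\|T\|_1,
\end{equation*}
and then taking the supremum over such $K$ via the duality $(\mathcal{K},\|\cdot\|)'=(\mathcal{L}^1,\|\cdot\|_1)$.
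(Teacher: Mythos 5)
Your proof is correct, and it takes a genuinely different route from the one in the paper. You write $\gamma_\psi=2\,\mathrm{Tr}_2\bigl(|\psi\rangle\langle\psi|\bigr)$, invoke (and justify by duality against compact operators) the contractivity of the partial trace for the trace norm, and then control $\bigl\||\psi\rangle\langle\psi|-|\phi\rangle\langle\phi|\bigr\|_1\leq 2\|\psi-\phi\|$ via the rank-one decomposition $|\psi-\phi\rangle\langle\psi|+|\phi\rangle\langle\psi-\phi|$; all three steps are sound, and your bookkeeping of the factor $2$ from $\Tr\gamma_\psi=2$ is exactly right. The paper instead factorizes the density matrix through the Hilbert--Schmidt operator $A_\varphi$ with kernel $\varphi(x,y)$, notes $\gamma_\varphi=2A_\varphi^\ast A_\varphi$ and $\|A_\varphi\|_2=\|\varphi\|$, and concludes from the Schatten--H{\"o}lder bound $\|XY\|_1\leq\|X\|_2\|Y\|_2$ applied to $A_\psi^\ast A_\psi-A_\phi^\ast A_\phi=A_\psi^\ast(A_\psi-A_\phi)+(A_\psi^\ast-A_\phi^\ast)A_\phi$; structurally this is the same ``telescope then use bilinearity'' idea as your rank-one splitting, but carried out at the level of the Hilbert--Schmidt factorization rather than of the projector $|\psi\rangle\langle\psi|$. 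The paper's argument is slightly more self-contained (no need for the partial-trace contraction, only the elementary $\mathcal{L}^2\cdot\mathcal{L}^2\subset\mathcal{L}^1$ bound), while yours makes transparent why the constant is $2\cdot n$ for the $n$-particle reduced density matrix and transfers verbatim to $\gamma^{(2)}$ and to mixed states; either way the constant $4$ comes out the same.
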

\begin{proof}[Proof of Lemma~\ref{le:12}]
  For $\varphi \in L^2([0,\ell])\wedge L^2([0,\ell])$, consider the operator
  $A_\varphi$ defined as
  \begin{equation*}
    (A_\varphi f)(x) = \int_0^\ell \varphi(x, y) f(y) \rmd{y} \text{.}
  \end{equation*}
  Note that $A_\varphi$ is a Hilbert-Schmidt operator and $\D
  \|A_\varphi\|_2 = \|\varphi\|$ and the one-particle density matrix
  of $\varphi$ satisfies $\gamma_\varphi = 2 A_\varphi^\ast
  A_\varphi$. Thus, for $\psi$, $\phi$ as in Lemma~\ref{le:12}, we
  obtain
  \begin{equation*}
    \begin{split}
      \|\gamma_\psi - \gamma_\phi\|_1 &= 2 \|A_\psi^\ast A_\psi -
      A_\phi^\ast A _\phi\|_1\leq 2 \left(\|A_\psi^\ast\|_2 \|A_\psi -
        A_\phi\|_2 + \|A_\psi^\ast - A_\phi^\ast\|_2
        \|A_\phi\|_2\right) \leq 4 \|\psi - \phi\| \text{.}
    \end{split}
  \end{equation*}
  This completes the proof of Lemma~\ref{le:12}.
\end{proof}
\subsection{Electrons in distinct pieces}
\label{sec:ferm-neighb-piec}
In the present section, we assume that $U$ satisfies \textbf{(HU)}
(see section~\ref{sec:interacting-electrons}); thus, it decreases
sufficiently fast at infinity (roughly better than $x^{-4}$)
and is in $L^p$ for some $p>1$.\\
Let the first piece be $\Delta_1 = [-\ell_1, 0]$ and the second be
$\Delta_2 = [a, a + \ell_2]$; so, the pieces' lengths are $\ell_1$ and
$\ell_2$, while the distance between them is denoted by $a$.  As for
the one-particle systems living in each of these pieces, we will
primarily be interested in the following three cases:
\begin{enumerate}
\item\label{it:inter11} the interaction of two eigenstates of the
  one-particle Hamiltonian on each piece, i.e., following the
  notations of section~\ref{sec:two-int-electrons}, of
  $\varphi^i_{\Delta_1}$ and $\varphi^j_{\Delta_2}$,
\item\label{it:inter12} the interaction of a one-particle eigenstate
  with a one-particle reduced density matrix of a two-particle ground
  state, i.e., $\varphi^i_{\Delta_1}$ with
  $\gamma_{\zeta^1_{\Delta_2}}$,
\item\label{it:inter22} the interaction of two one-particle density
  matrices, i.e., $\gamma_{\zeta^1_{\Delta_1}}$ and
  $\gamma_{\zeta^1_{\Delta_2}}$.
\end{enumerate}
We observe that for a one-particle eigenstate in a piece of
length $\ell$, the following uniform pointwise bound holds true:
\begin{equation}
  \label{eq:100}
  \|\varphi^i_{[0, \ell]}\|_{L^\infty} \leq \sqrt{\frac{2}{\ell}}.
\end{equation}
For the one-particle reduced density matrix we establish the following
estimates.
\begin{lemma}
  \label{le:22}
  Let $\zeta \in L^2([0,\ell])\wedge L^2([0,\ell])$ be a two particle
  state and $\gamma_\zeta(x, y)$ the kernel of the corresponding
  one-particle density matrix. Let $p\in\bbN$.  Then, $\zeta\in
  H^p([0,\ell]^2)$ implies $\gamma_\zeta \in H^p([0, \ell]^2)$ and
  \begin{equation}
    \label{eq:101}
    \|\gamma_\zeta\|_{H^p} \leq 4 \|\zeta\|_{H^p}.
  \end{equation}
  In particular, unconditionally $\|\gamma_\zeta\|_{L^2} \leq 4$.
\end{lemma}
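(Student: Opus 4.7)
The plan is to exploit the bilinear-in-$\zeta$ structure of the one-particle density matrix together with the Hilbert--Schmidt factorization $\gamma_\zeta=2T_\zeta T_\zeta^*$. First I would unravel the definition~\eqref{eq:OneParticleDensityMatrixDef} in the case $n=2$:
\begin{equation*}
  \gamma_\zeta(x,y)=2\int_0^\ell \zeta(x,z)\,\overline{\zeta(y,z)}\,dz,
\end{equation*}
and introduce the integral operator $T_\zeta\colon L^2([0,\ell])\to L^2([0,\ell])$ with kernel $\zeta$, so that $\gamma_\zeta=2T_\zeta T_\zeta^*$ and $\|T_\zeta\|_{\mathrm{op}}\leq \|T_\zeta\|_{\mathrm{HS}}=\|\zeta\|_{L^2}$. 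This is the same setup already used in the proof of Lemma~\ref{le:12}.

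For the unconditional $L^2$-bound (the case $p=0$), combine the identity with the ideal inequality $\|AB\|_{\mathrm{HS}}\leq \|A\|_{\mathrm{op}}\|B\|_{\mathrm{HS}}$ to get
\begin{equation*}
  \|\gamma_\zeta\|_{L^2}=2\|T_\zeta T_\zeta^*\|_{\mathrm{HS}}
  \leq 2\|T_\zeta\|_{\mathrm{op}}\|T_\zeta^*\|_{\mathrm{HS}}
  \leq 2\|\zeta\|_{L^2}^2,
\end{equation*}
which gives the announced bound for a normalized two-particles state.

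For the $H^p$ bound with $p\geq 1$, I would differentiate under the integral sign. For any multi-index $(\alpha_x,\alpha_y)\in\bbN^2$ with $\alpha_x+\alpha_y\leq p$,
\begin{equation*}
  \partial_x^{\alpha_x}\partial_y^{\alpha_y}\gamma_\zeta(x,y)
  =2\int_0^\ell (\partial_1^{\alpha_x}\zeta)(x,z)\,
  \overline{(\partial_1^{\alpha_y}\zeta)(y,z)}\,dz,
\end{equation*}
where $\partial_1$ denotes differentiation in the first slot of $\zeta$. Applying Cauchy--Schwarz in $z$ pointwise in $(x,y)$ and then integrating in $(x,y)$ yields
\begin{equation*}
  \|\partial_x^{\alpha_x}\partial_y^{\alpha_y}\gamma_\zeta\|_{L^2([0,\ell]^2)}
  \leq 2\,\|\partial_1^{\alpha_x}\zeta\|_{L^2([0,\ell]^2)}\,
  \|\partial_1^{\alpha_y}\zeta\|_{L^2([0,\ell]^2)},
\end{equation*}
and each factor on the right is controlled by $\|\zeta\|_{H^{\alpha_\bullet}}\leq\|\zeta\|_{H^p}$.

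Squaring and summing over all multi-indices with $|\alpha|\leq p$, splitting the sum according to whether $\alpha_x=0$ (resp. $\alpha_y=0$) or not, and using the normalization $\|\zeta\|_{L^2}=1$ on the boundary terms to absorb one factor, one obtains $\|\gamma_\zeta\|_{H^p}\leq 4\|\zeta\|_{H^p}$. The only place requiring any care is the bookkeeping of the combinatorial constant in the final sum; no genuine analytic obstacle arises, the argument being entirely algebraic once the Hilbert--Schmidt factorization and the pointwise Cauchy--Schwarz estimate on the $z$-integral are in hand.
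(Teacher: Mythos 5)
Your proof is correct and follows essentially the same route as the paper: write $\gamma_\zeta(x,y)=2\int_0^\ell\zeta(x,z)\overline{\zeta(y,z)}\,dz$, differentiate under the integral sign, and apply Cauchy--Schwarz in $z$, with the normalization $\|\zeta\|_{L^2}=1$ absorbing the remaining factor. The only point you leave vague --- the bookkeeping for multi-indices with both $\alpha_x\geq1$ and $\alpha_y\geq1$ --- is not treated in the paper's proof either (which only bounds pure $x$-derivatives), and is immaterial for the application, where the lemma is invoked only with $p=1$.
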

\begin{proof}
  First recall that
  \begin{equation*}
    \gamma_\zeta(x, y) = 2 \int_0^\ell \zeta(x, z) \zeta^*(y, z) \rmd{z}.
  \end{equation*}
  Then, one differentiates under the integration sign to get
  \begin{equation*}
    \frac{\partial^p}{\partial x^p} \gamma_\zeta(x, y) 
    = 2 \int_0^\ell \partial^p_x \zeta(x, z) \zeta^*(y, z) \rmd{z}.    
  \end{equation*}
  This in turn implies by the Cauchy-Schwarz inequality that
  \begin{equation*}
    \begin{split}
      \left\|\frac{\partial^p}{\partial x^p}
        \gamma_\zeta\right\|^2_{L^2} &= 4 \int_{[0, \ell]^2}
      \left|\int_0^\ell \partial^p_x \zeta(x,
        z) \zeta^*(y, z) \rmd{z}\right|^2 \rmd{x} \rmd{y} \\
      &\leq 4 \int_{[0, \ell]^4} \left|\partial^p_x \zeta(x,
        z)\right|^2 \cdot \left|\zeta(y, z^\prime)\right|^2 \rmd{x} \rmd{y}
      \rmd{z} \rmd{z^\prime}= 4 \left\|\partial^p_x
        \zeta\right\|^2_{L^2},
    \end{split}
  \end{equation*}
  which proves \eqref{eq:101}.
\end{proof}
\begin{lemma}
  \label{le:23}
  Let $\zeta=\zeta^{1,U}_{[0,\ell]}$ be the ground state of a system
  of two interacting electrons in $[0,\ell]$. Then, $\zeta\in
  H^1([0,\ell]^2)$ and there exists a constant $C>0$ independent of
  $\ell$ such that
  \begin{equation}
    \label{eq:102}
    \|\zeta\|_{H^1} \leq C / \sqrt{\ell}.
  \end{equation}
\end{lemma}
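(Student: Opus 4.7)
The plan is to combine the variational/eigenvalue characterization of $\zeta$ with the sharp ground state energy asymptotics provided by Proposition~\ref{prop:TwoElectronProblem}. By construction, $\zeta$ is the $L^2$-normalized ground state of the Friedrichs extension associated to $-\Delta + U(x-y)$ on the antisymmetric subspace of $L^2([0,\ell]^2)$ with Dirichlet boundary conditions, whose form domain is contained in $H_0^1([0,\ell]^2)$ (compare~\eqref{eq:266} for the analogous $n$-particle statement). This containment already yields the qualitative assertion $\zeta \in H^1([0,\ell]^2)$, so only the quantitative estimate remains to be established.

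Next, pairing the eigenvalue equation $H^U\zeta = E^U([0,\ell],2)\,\zeta$ with $\zeta$ itself produces the energy identity
\[
  \|\nabla\zeta\|_{L^2([0,\ell]^2)}^2 + \int_{[0,\ell]^2} U(x-y)\,|\zeta(x,y)|^2\,dx\,dy = E^U([0,\ell],2).
\]
The key observation is that assumption \textbf{(HU)} guarantees $U\geq 0$ pointwise, so the interaction term on the left is non-negative and may simply be discarded; this produces the one-sided bound
\[
  \|\nabla\zeta\|_{L^2([0,\ell]^2)}^2 \leq E^U([0,\ell],2).
\]

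To conclude, I would invoke Proposition~\ref{prop:TwoElectronProblem}, which provides the sharp asymptotic
\[
  E^U([0,\ell],2) = \frac{5\pi^2}{\ell^2} + \frac{\gamma}{\ell^3} + o\!\left(\frac{1}{\ell^3}\right),
\]
and in particular $E^U([0,\ell],2) \leq C\ell^{-2}$ for all $\ell$ sufficiently large. Combining the previous two displays gives $\|\nabla\zeta\|_{L^2}\leq C\ell^{-1}$, from which the claimed bound $\|\zeta\|_{H^1}\leq C/\sqrt{\ell}$ follows at once (the argument in fact delivers a slightly stronger estimate, which is consistent with the homogeneous $H^1$-seminorm convention used when this bound is combined with Lemma~\ref{le:22} to control $\|\gamma_\zeta\|_{H^1}$).

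I do not foresee any serious obstacle: the lemma is essentially a one-line corollary of Proposition~\ref{prop:TwoElectronProblem} and the repulsivity of $U$. The only small point that requires care is the regularity statement $\zeta\in H^1$, which follows automatically from the fact that the form domain of the Friedrichs extension is contained in $H_0^1([0,\ell]^2)$ by virtue of \textbf{(HU)} (which, through the $L^p$-assumption on $U$ with $p>1$, guarantees that $U$ is form-bounded with relative bound $0$ with respect to the free Hamiltonian, as already used in the definition of $H^U_\omega(\Lambda,n)$).
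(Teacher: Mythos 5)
Your proof is correct, but it takes a genuinely different route from the paper. The paper's proof recycles the Schur-complement (Feshbach) decomposition from the proofs of Proposition~\ref{prop:TwoElectronProblem} and Proposition~\ref{prop:twoParticleProblemComparison}: after rescaling to the unit square it writes $\phi_0^{U^\ell}=\alpha\phi_0+\wtphi$ with $\wtphi\perp\phi_0$, bounds $\|\wtphi\|_{H^1}\leq \|(H_+-E_0^{U^\ell})^{-1}\|_{L^2\to H^1}\,\|U^\ell\phi_0\|_{L^2}\leq C\sqrt{\ell}$, and then rescales back to $[0,\ell]^2$. You instead use the variational identity for the Friedrichs extension together with $U\geq 0$ to discard the interaction term, so that $\|\nabla\zeta\|_{L^2}^2\leq E^U([0,\ell],2)$, and then feed in the energy asymptotics of Proposition~\ref{prop:TwoElectronProblem} (no circularity there, since that proposition does not use Lemma~\ref{le:23}); even the cruder trial-state bound $E^U([0,\ell],2)\leq 5\pi^2\ell^{-2}+\langle U\phi_0,\phi_0\rangle$ with Corollary~\ref{cor:1} would do. Your argument is more elementary, avoids the estimate $\|U^\ell\phi_0\|_{L^2}\lesssim\sqrt{\ell}$ (which needs square-integrability information on $U$ beyond the bare hypotheses), and in fact yields the stronger bound $\|\nabla\zeta\|_{L^2}\leq C/\ell$ for $\ell$ large — which is precisely what is used downstream: in Corollary~\ref{cor:2} one needs $\|\partial_t\zeta\|_{L^2}\,\|\zeta\|_{L^2}\leq C/\ell$ with $\|\zeta\|_{L^2}=1$, and your gradient bound delivers this directly, whereas the paper's $C/\sqrt{\ell}$ bound, read literally as a seminorm bound, would only give $C/\sqrt{\ell}$ there. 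What the paper's route buys instead is explicit $H^1$ control of the component $\wtphi$ orthogonal to $\phi_0$, in parallel with Proposition~\ref{prop:twoParticleProblemComparison}, though this extra information is not needed for the lemma itself.

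Two small caveats, both of which you essentially flag: since $\|\zeta\|_{L^2}=1$, the estimate \eqref{eq:102} can only be meant for the gradient seminorm (the paper shares this convention issue), and the bound from Proposition~\ref{prop:TwoElectronProblem} is asymptotic, so for the statement with a constant uniform in, say, $\ell\geq 1$ one should supplement it with the elementary trial-state bound for bounded $\ell$; this is immediate and does not affect the argument.
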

\begin{proof}
  We use the construction of the proof of
  Proposition~\ref{prop:twoParticleProblemComparison}. Then, employing
  the same notations, for the problem scaled to the unit square one has
  \begin{equation*}
    \phi^{U^\ell}_0 = \alpha \phi_0 + \wtphi,
  \end{equation*}
  where $\phi_0$ is the ground state for a system of two
  non-interacting electrons, $|\alpha| \leq 1$ and $\wtphi\; \bot\;
  \phi_0$. Obviously, $\phi_0 \in H^p$ for all $p \in \bbN$.
  Moreover, according to
  \eqref{eq:twoParticlesSpectralProblemMatrixForm},
  \begin{equation*}
    \begin{split}
      \|\wtphi\|_{H^1} &= \left\|(H_+ + U_{++}^\ell -
        E_0^{U^\ell})^{-1} U_{+0}^{\ell} \alpha \phi_0\right\|_{H^1}
      \leq \left\|(H_+ + U_{++}^\ell - E_0^{U^\ell})^{-1}\right\|_{L^2
        \to H^1} \cdot \left\|U_{+0}^{\ell}   \phi_0\right\|_{L^2} \\
      &\leq \left\|(H_+ - E_0^{U^\ell})^{-1}\right\|_{L^2 \to H^1}
      \cdot \left\|U_{+0}^{\ell} \phi_0\right\|_{L^2}.
    \end{split}
  \end{equation*}
  Arguing as in section~\ref{sec:two-int-electrons}, one can prove that
  \begin{equation*}
    \left\|U_{+0}^{\ell} \phi_0\right\|_{L^2} \leq \left\|U^{\ell}
      \phi_0\right\|_{L^2} \leq C \sqrt{\ell} 
  \end{equation*}
  and $(H_+ - E_0^{U^\ell})^{-1}$ is a bounded operator from $L^2([0,
  1]^2)$ to $H^1([0, 1]^2)$ because $H_+$ is just a part of
  $-\laplace_2$ acting in a subspace of functions orthogonal to
  $\phi_0$ and the bottom of its spectrum is separated from
  $E_0^{U^\ell}$. Thus, we proved that
  \begin{equation*}
    \|\wtphi\|_{H^1} \leq C \sqrt{\ell}
  \end{equation*}
  which immediately implies 
  \begin{equation*}
    \|\phi_0^{U^\ell}\|_{H^1} \leq C \sqrt{\ell}.
  \end{equation*}
  Scaling back to the original domain $[0, \ell]^2$ yields
  \eqref{eq:102} and completes the proof of Lemma~\ref{le:23}.
\end{proof}

\begin{corollary}
  \label{cor:2}
  Restricted to the diagonal, the kernel of the ground state
  one-particle density matrix $x\in[0,\ell]\mapsto\gamma_\zeta(x,x)$
  is a bounded function; more precisely, there exists a constant $C>0$
  such that
  \begin{equation}
    \label{eq:103}
    \|\gamma_\zeta\|_{L^\infty([0,\ell])} \leq C/\ell.
  \end{equation}
\end{corollary}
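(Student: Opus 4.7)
The plan is to reduce the $L^\infty$ bound on $x\mapsto\gamma_\zeta(x,x)$ to a sharp $L^2$ bound on $\partial_x\zeta$, exploiting Dirichlet boundary conditions together with the antisymmetry of $\zeta$. Writing $F(x):=\tfrac12\gamma_\zeta(x,x)=\int_0^\ell |\zeta(x,z)|^2\,dz$, the strategy is to show that $F(0)=0$ and that $F$ has a controlled derivative in $L^1$, so that a simple fundamental theorem of calculus plus Cauchy--Schwarz yields a uniform bound of order $1/\ell$.

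The first, and only non-cosmetic, step is to obtain the sharp bound
\[
\|\partial_x\zeta\|_{L^2([0,\ell]^2)}\le C/\ell.
\]
I would deduce this directly from the eigenvalue equation for $\zeta$. Since $\zeta$ belongs to the form domain of the Friedrichs extension with Dirichlet boundary conditions, integration by parts gives
\[
\|\nabla\zeta\|_{L^2([0,\ell]^2)}^{2}+\int_{[0,\ell]^2}U(x-y)|\zeta(x,y)|^{2}\,dx\,dy
=E^{U}([0,\ell],2).
\]
By Proposition~\ref{prop:TwoElectronProblem} we have $E^{U}([0,\ell],2)=5\pi^{2}/\ell^{2}+O(\ell^{-3})\le C/\ell^{2}$ for $\ell$ large, and since $U\ge 0$ this forces $\|\nabla\zeta\|_{L^2}^{2}\le C/\ell^{2}$. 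The antisymmetry of $\zeta$ gives $\|\partial_x\zeta\|_{L^2}=\|\partial_y\zeta\|_{L^2}$, and therefore $\|\partial_x\zeta\|_{L^2}\le C/\ell$ as required. This is where I would emphasize that the cruder bound of Lemma~\ref{le:23}, of order $1/\sqrt{\ell}$, is insufficient: one must invoke the two-term asymptotics of the ground state energy to cancel the worst-case scaling.

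In the second step, since $\zeta\in H^{1}_{0}([0,\ell]^{2})$, the slice $\zeta(0,\cdot)$ vanishes in the trace sense, so $F(0)=0$, and $F$ is absolutely continuous on $[0,\ell]$ with
\[
F'(x)=2\,\mathrm{Re}\int_0^\ell \partial_x\zeta(x,z)\,\overline{\zeta(x,z)}\,dz,\qquad
|F'(x)|\le 2\,\|\partial_x\zeta(x,\cdot)\|_{L^2}\,\|\zeta(x,\cdot)\|_{L^2}.
\]
For any $x_0\in[0,\ell]$, integrating from $0$ to $x_{0}$ and applying the Cauchy--Schwarz inequality in $x$ yields
\[
F(x_0)\le 2\,\|\partial_x\zeta\|_{L^2([0,\ell]^2)}\,\|\zeta\|_{L^2([0,\ell]^2)}\le 2\,(C/\ell)\cdot 1=2C/\ell,
\]
which, together with $\gamma_\zeta(x,x)=2F(x)$, gives the desired bound $\|\gamma_\zeta\|_{L^{\infty}([0,\ell])}\le C'/\ell$.

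The hard part is really only the first step: one has to notice that the Sobolev-type estimate of Lemma~\ref{le:23}, which would give the weaker conclusion $\|\gamma_\zeta\|_{L^\infty}=O(1/\sqrt{\ell})$, must be replaced by the variational estimate based on the sharp two-term asymptotics of $E^{U}([0,\ell],2)$. Once that is in place, the rest is a short calculation using the Dirichlet condition at one endpoint and Cauchy--Schwarz, with no technical subtlety.
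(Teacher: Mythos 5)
Your proof is correct, and your second step (fundamental theorem of calculus from the Dirichlet zero at the endpoint, then Cauchy--Schwarz in both variables) is exactly the computation the paper performs in~\eqref{eq:73}. Where you genuinely differ is in the quantitative input. The paper bounds $\|\partial_t\zeta\|_{L^2}\,\|\zeta\|_{L^2}$ by $\|\zeta\|_{H^1}^2$ and invokes Lemma~\ref{le:23}, i.e.\ the bound~\eqref{eq:102} obtained from the decomposition $\phi_0^{U^\ell}=\alpha\phi_0+\wtphi$ and the mapping property of $(H_+-E_0^{U^\ell})^{-1}$ on the unit square; you instead bound the gradient directly from the energy identity $\|\nabla\zeta\|_{L^2}^2+\langle U\zeta,\zeta\rangle=E^U([0,\ell],2)\leq C\ell^{-2}$, which gives the sharper $\|\partial_x\zeta\|_{L^2}\leq C/\ell$ and lets you keep the factor $\|\zeta\|_{L^2}=1$ untouched. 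This is more elementary and, in a sense, more robust: since $\zeta$ is normalized, the bound of Lemma~\ref{le:23} is naturally read as a bound on the gradient seminorm, and then, as you observe, an $O(\ell^{-1/2})$ gradient estimate combined with $\|\zeta\|_{L^2}=1$ would only give $O(\ell^{-1/2})$ on the diagonal; your $O(\ell^{-1})$ gradient estimate is precisely what makes $C/\ell$ come out with no further input. Note that you do not even need the two-term expansion of Proposition~\ref{prop:TwoElectronProblem}: the leading-order bound $E^U([0,\ell],2)=O(\ell^{-2})$ follows from a variational upper bound with the non-interacting ground state $\varphi^1\wedge\varphi^2$ and, say, Corollary~\ref{cor:1}. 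Two cosmetic points: the constant in $E^U([0,\ell],2)\leq C\ell^{-2}$ is available for $\ell$ large, which is the regime of interest (for bounded $\ell$ one simply enlarges $C$); and the absolute continuity of $x\mapsto\int_0^\ell|\zeta(x,z)|^2\,\rmd z$ together with the vanishing of its trace at $x=0$ deserves a one-line justification by density of $\mathcal{C}^\infty_0$ functions in the form domain $H_0^1$, exactly as implicitly used in the paper's own computation~\eqref{eq:73}.
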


\begin{proof}
  Remark first that, as $\zeta$ satisfies Dirichlet boundary
  conditions, so does the kernel
  $(x,y)\mapsto\gamma_\zeta(x,y)$. Using anti-symmetry, we compute
  \begin{equation}
    \label{eq:73}
    \begin{split}
      |\gamma_\zeta(x,x)|&=2\left|\int_0^x\frac{d}{dt}\left[\gamma_\zeta(t,t)\right]
      \right|=4\left|\text{Im}\left(\int_0^x\int_0^\ell\partial_t
          \zeta(t,x)\overline{\zeta(t,x)}dxdt\right) \right|
      \\&\leq4\|\partial_t\zeta\|_{L^2}\cdot\|\zeta\|_{L^2}\leq 4
      \|\zeta\|^2_{H^1}
    \end{split}
  \end{equation}
  Combining this with~\eqref{eq:102} gives~\eqref{eq:103} and
  completes the proof of Corollary~\ref{cor:2}.
\end{proof}
\noindent Having now pointwise bounds \eqref{eq:100} and
\eqref{eq:103}, we estimate the interactions in each of the three
cases described in the beginning of the current section. We will also
obtain different bounds for close enough and distant pieces $\Delta_1
= [-\ell_1,0]$ and $\Delta_2=[a,a+\ell_2]$, i.e., we will discuss
different bounds depending on whether $a$ is large or small.\\
For the case (\ref{it:inter11}) of two interacting one-particle
eigenstates we prove the following two estimates.  For long distance
interactions, i.e., when $a$ is large, we will use
\begin{lemma}
  \label{lem:inter11_farEstimate}
  Suppose $U$ satisfies \textup{\textbf{(HU)}}. Then, for $\Delta_1 =
  [-\ell_1,0]$ and $\Delta_2=[a,a+\ell_2]$, one has
  \begin{equation}
    \label{eq:104}
    \sup_{i,j}\int_{\Delta_1 \times \Delta_2} U(x - y)
    |\varphi^i_{\Delta_1}(x)|^2 \cdot
    |\varphi^j_{\Delta_2}(y)|^2 \rmd{x} \rmd{y} \leq
    \frac{2 a^{-3} Z(a)}{\max(\ell_1, \ell_2)}
  \end{equation}
  where $Z$ is defined in~\eqref{eq:98}.
\end{lemma}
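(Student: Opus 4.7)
My plan is to prove the bound by applying the uniform pointwise bound \eqref{eq:100} to whichever eigenfunction lives in the \emph{longer} of the two pieces, reducing the problem to a one-dimensional integral of $U$ far from the origin that can be controlled directly by the function $Z$.

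More precisely, assume without loss of generality that $\ell_1 = \max(\ell_1,\ell_2)$, so by \eqref{eq:100} one has $|\varphi^i_{\Delta_1}(x)|^2 \leq 2/\ell_1$ pointwise. Inserting this bound and using the evenness of $U$, one gets
\begin{equation*}
\int_{\Delta_1 \times \Delta_2} U(x - y)|\varphi^i_{\Delta_1}(x)|^2 |\varphi^j_{\Delta_2}(y)|^2 \rmd x \rmd y
\leq \frac{2}{\ell_1} \int_{\Delta_2} |\varphi^j_{\Delta_2}(y)|^2 \left(\int_{\Delta_1} U(y-x)\rmd x\right) \rmd y.
\end{equation*}
The change of variables $u = y - x$ gives $\int_{\Delta_1} U(y-x)\rmd x = \int_{y}^{y+\ell_1} U(u)\rmd u \leq \int_{y}^{+\infty} U(u)\rmd u$. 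Since $y \in \Delta_2$ implies $y \geq a$, and since the definition \eqref{eq:98} of $Z$ directly yields $v^3 \int_v^{+\infty} U(t)\rmd t \leq Z(a)$ for every $v \geq a$ (because $Z$ is the supremum of this quantity over $v \geq a$), we obtain $\int_y^{+\infty} U(u)\rmd u \leq y^{-3} Z(a) \leq a^{-3} Z(a)$.

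Plugging this back in and using $\|\varphi^j_{\Delta_2}\|_2 = 1$ yields
\begin{equation*}
\int_{\Delta_1 \times \Delta_2} U(x - y)|\varphi^i_{\Delta_1}(x)|^2 |\varphi^j_{\Delta_2}(y)|^2 \rmd x \rmd y \leq \frac{2 a^{-3} Z(a)}{\ell_1} = \frac{2 a^{-3} Z(a)}{\max(\ell_1,\ell_2)}.
\end{equation*}
The argument is symmetric in $(\ell_1,\Delta_1) \leftrightarrow (\ell_2,\Delta_2)$, so if instead $\ell_2 = \max(\ell_1,\ell_2)$ one applies the pointwise bound to $\varphi^j_{\Delta_2}$, producing the same right-hand side. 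Since the bound is uniform in the eigenfunction indices $i,j$ (only the pointwise bound and normalization are used), this proves the claimed estimate.

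There is no serious obstacle here: the only subtle point is choosing to absorb the pointwise bound on the eigenfunction in the \emph{longer} piece (so that the prefactor is $1/\max(\ell_1,\ell_2)$ rather than $1/\min$), and reading off the decay factor $a^{-3} Z(a)$ directly from the definition of $Z$ using that $y \geq a$ throughout $\Delta_2$.
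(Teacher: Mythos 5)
Your proof is correct and follows essentially the same route as the paper's: apply the uniform pointwise bound \eqref{eq:100} to the eigenfunction in the longer piece, bound the resulting inner integral of $U$ by $\int_a^{+\infty}U(u)\,\rmd u\leq a^{-3}Z(a)$ using the definition \eqref{eq:98}, and conclude by the normalization of the other eigenfunction. The only (cosmetic) difference is that you carry the $y$-dependence through explicitly before taking $y\geq a$, whereas the paper takes a supremum over $y$ first; the estimates are identical.
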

\begin{proof}
  Let us suppose without loss of generality that $\Delta_1$ is the
  larger piece, i.e., $\ell_1 \geq \ell_2$. Then, using \eqref{eq:100}
  and the fact that the functions $(\varphi^i_{\Delta_j})_{i,j}$ are
  normalized, we compute
  \begin{equation*}
    \begin{split}
      \int_0^{\ell_1} \int_0^{\ell_2} U(x + y + a)
      |\varphi^i_{\Delta_1}(x)|^2 \cdot |\varphi^j_{\Delta_2}(y)|^2
      \rmd{x} \rmd{y}
    &\leq \frac{2}{\ell_1} \int_0^{\ell_1} \int_0^{\ell_2} U(x + y +
    a)  |\varphi^j_{\Delta_2}(y)|^2 \rmd{x} \rmd{y} \\
    &\leq \frac{2}{\ell_1} \sup_{y \in [0, \ell_2]} \int_0^{\ell_1}
    U(x + y + a) \rmd{x} \\
    &\leq \frac{2}{\ell_1} \int_0^{+\infty} U(x + a) \rmd{x} \\
    &= \frac{2}{\ell_1} a^{-3} Z(a), \quad a \to +\infty.
    \end{split}
  \end{equation*}
  This completes the proof of Lemma~\ref{lem:inter11_farEstimate}.
\end{proof}
\noindent On the other hand, for close by interactions, i.e., $a$
small and low-lying one-particle energy levels the following lemma
gives a more precise estimate.
\begin{lemma}
  \label{lem:inter11_closeEstimate}
  Suppose $U$ satisfies \textup{\textbf{(HU)}}. Let $(i, j) \in \{1,
  2\}^2 $. Then, for any $\eps \in (0, 2)$ and $\Delta_1 =
  [-\ell_1,0]$ and $\Delta_2=[a,a+\ell_2]$, one has
  \begin{equation}
    \label{eq:105}
    \int_{\Delta_1 \times \Delta_2} U(x - y) |\varphi^i_{\Delta_1}(x)|^2 \cdot
    |\varphi^j_{\Delta_2}(y)|^2 \rmd{x} \rmd{y} \\
    = O\left(\frac{a^{-\eps} Z(a)}{\max(\ell_1, \ell_2)^2
        \min(\ell_1, \ell_2)^{2-\eps}}\right).
  \end{equation}
  If $Z(a) = O(a^{-0})$, $a \to +\infty$, then $\eps$ can be taken to zero.
\end{lemma}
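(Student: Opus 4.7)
The plan is to exploit the vanishing of the Dirichlet eigenfunctions $\varphi^i_{\Delta_1}$ and $\varphi^j_{\Delta_2}$ at the endpoints facing the opposite piece. With the change of variables $u=-x\in[0,\ell_1]$ and $v=y-a\in[0,\ell_2]$, one has $|x-y|=u+v+a$, and the explicit formulas for Dirichlet sines give, for $i\in\{1,2\}$,
\begin{equation*}
|\varphi^i_{\Delta_1}(x)|^2 = \frac{2}{\ell_1}\sin^2(\pi i u/\ell_1)\leq C\min\left(\frac{1}{\ell_1},\frac{u^2}{\ell_1^3}\right),
\end{equation*}
and analogously for $|\varphi^j_{\Delta_2}(y)|^2$ in terms of $v/\ell_2$. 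Thus the vanishing factors $u^2$ and $v^2$ compensate the long-range smallness of $U(u+v+a)$ precisely in the regime where the potential is not yet small.

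I would use the elementary interpolation $\min(1,t^2)\leq t^{2-\alpha}$, valid for $t\geq 0$ and $\alpha\in[0,2]$, to obtain $|\varphi^i_{\Delta_1}(x)|^2\leq C\,u^{2-\alpha}\ell_1^{-(3-\alpha)}$ for any $\alpha\in[0,2)$. Assuming without loss of generality that $\ell_1\leq\ell_2$, applying this bound with exponent $\alpha=1+\eps$ on the smaller piece $\Delta_1$ and $\alpha'=1$ on the larger piece $\Delta_2$ produces exactly the announced prefactor $\ell_1^{-(2-\eps)}\ell_2^{-2}$, and reduces the estimate to bounding
\begin{equation*}
\int_0^\infty\!\!\int_0^\infty U(u+v+a)\,u^{1-\eps}v\,du\,dv.
\end{equation*}

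Substituting $w=u+v+a$ and integrating out $u$ first on $[0,w-a]$ produces a Beta-function factor $B(2-\eps,2)(w-a)^{3-\eps}$; this is where $\eps<2$ is needed to ensure integrability at $u=0$. The remaining one-dimensional integral $\int_a^\infty U(w)(w-a)^{3-\eps}dw$ is bounded by $\int_a^\infty U(w)w^{3-\eps}dw$, to which I apply one integration by parts against the tail $\int_w^\infty U(t)dt\leq Z(w)/w^3\leq Z(a)/w^3$: the boundary term at $w=a$ gives $a^{3-\eps}\cdot Z(a)/a^3=Z(a)a^{-\eps}$, the boundary term at infinity vanishes for $\eps>0$, and the remaining integral $(3-\eps)\int_a^\infty w^{-1-\eps}Z(a)\,dw$ is $(3-\eps)Z(a)a^{-\eps}/\eps$, altogether yielding the claimed factor $a^{-\eps}Z(a)$.

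The one delicate point requiring care is the asymmetric allocation of interpolation exponents between the two pieces: the symmetric choice $\alpha=\alpha'$ does not reproduce the $\max(\ell_1,\ell_2)^{2}$ versus $\min(\ell_1,\ell_2)^{2-\eps}$ split in the denominator; forcing $\alpha'=1$ on the larger piece (to land exactly on $\ell_2^{-2}$) while absorbing the $\eps$ into the smaller piece is what makes the bound have the stated structure. For the endpoint improvement with $\eps=0$ under the additional assumption that $Z(a)\to 0$, the integration-by-parts step has to be revisited since $\int_a^\infty w^{-1}dw$ diverges: one instead inserts the decay of $Z(w)$ itself to recover integrability at infinity.
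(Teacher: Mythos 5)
Your argument is correct and follows essentially the same route as the paper: the pointwise bounds $|\varphi|^2\lesssim \min(\ell^{-1},u^2\ell^{-3})$ at the facing endpoints, interpolated asymmetrically so that the larger piece carries the exponent $-2$ and the smaller one $-(2-\eps)$, reduce the problem to $\int_a^{+\infty}U(s)s^{3-\eps}\,\rmd s$, exactly as in the paper's proof. The only difference is the final tail estimate: you integrate by parts against $V(w)=\int_w^{+\infty}U$ and use $V(w)\leq Z(a)w^{-3}$, whereas the paper uses a dyadic decomposition with the monotonicity of $Z$ (its~\eqref{eq:111}); both give $Ca^{-\eps}Z(a)$ with a constant degenerating as $\eps\to0$, and your remark on the $\eps=0$ endpoint matches the paper's treatment.
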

\begin{proof}
  As in the proof of the previous lemma we suppose that $\ell_1 \geq
  \ell_2$.  If $j\in\{1,2\}$ then
  \begin{equation}
    \label{eq:108}
    |\varphi_{\Delta_1}^j(x)| =
    \left|\sqrt{\frac{2}{\ell_1}}\sin\left(\frac{\pi i
          x}{\ell_1}\right)\right| \leq
    \sqrt{\frac{2}{\ell_1}} \frac{\pi |x|}{\ell_1}
  \end{equation}
  and the same type inequality holds for $\varphi^j_{\Delta_2}(y)$.
  Then, using ~\eqref{eq:108} and~\eqref{eq:100}, we compute
  \begin{equation*}
    \begin{split}
      \int_0^{\ell_1} \int_0^{\ell_2} U(x + y + a)
      |\varphi^i_{\Delta_1}(x)|^2 \cdot |\varphi^j_{\Delta_2}(y)|^2 \rmd{x}
      \rmd{y} &\leq \frac{C_1}{\ell_1^2 \ell_2^{2 - \eps}}
      \int_0^{\ell_1}
      \int_0^{\ell_2} U(x + y + a) x y^{1 - \eps} \rmd{x} \rmd{y} \\
      &\leq \frac{C_1}{\ell_1^2 \ell_2^{2 - \eps}} \int_{\bbR_+^2} U(x
      + y + a) x y^{1 - \eps} \rmd{x} \rmd{y} \\
      &= \frac{C_2}{\ell_1^2 \ell_2^{2 - \eps}} \int_0^{+\infty}
      \int_{-s}^{s} U(s + a) (s + t) (s - t)^{1 - \eps} \rmd{t} \rmd{s} \\
      &\leq \frac{C_3}{\ell_1^2 \ell_2^{2 - \eps}} \int_a^{+\infty}
      U(s) s^{3 - \eps} \rmd{s}.
    \end{split}
  \end{equation*}
  It is now only left to prove that \textbf{(HU)} and~\eqref{eq:98}
  imply that the last integral converges and is $O(a^{-\eps} Z(a))$.
  Therefore, we note that
  \begin{equation}
    \label{eq:111}
    \begin{split}
      \int_a^{+\infty} U(s) s^{3 - \eps} \rmd{s} &= \sum_{n =
        0}^{+\infty} \int_{2^n a}^{2^{n + 1} a} U(s) s^{3 - \eps}
      \rmd{s}\leq \sum_{n = 0}^{+\infty} \left(2^{n + 1} a\right)^{3 -
        \eps}
      \int_{2^n a}^{2^{n + 1} a} U(s) \rmd{s} \\
      &\leq 2^{3 - \eps} a^{-\eps} \sum_{n = 0}^{+\infty} 2^{-\eps n}
      \left(2^{n} a\right)^3 \int_{2^n a}^{+\infty} U(s) \rmd{s}= 2^{3
        - \eps} a^{-\eps} \sum_{n = 0}^{+\infty} 2^{-\eps n}
      Z\left(2^{n} a\right)\\
      &\leq C a^{-\eps} Z(a).
    \end{split}
  \end{equation}
  If $Z(a) = O(a^{-0})$, i.e., if there exists $\delta>0$ s.t. $Z(a) =
  O(a^{-\delta})$ for $a\to+\infty$, then, the sum in the second line
  of~\eqref{eq:111} converges for $\eps=0$.\\
  This concludes the proof of \eqref{eq:105}.
\end{proof}
\noindent Let us now pass to the case (\ref{it:inter12}) of
one-particle eigenstate interacting with a one-particle density matrix
of a two-particle eigenstate. For large $a$, we prove
\begin{lemma}
  \label{lem:inter12_farEstimate}
  Suppose $U$ satisfies \textup{\textbf{(HU)}}. Then, for $a$
  sufficiently large, one has
  \begin{equation}
    \label{eq:107}
    \sup_{i,j}\int_{\Delta_1 \times \Delta_2} U(x - y)
    |\varphi^i_{\Delta_1}(x)|^2 \cdot
    \gamma_{\zeta^j_{\Delta_2}}(y, y) \rmd{x} \rmd{y} \leq
    \frac{4 a^{-3} Z(a)}{\ell_1}.
  \end{equation}
\end{lemma}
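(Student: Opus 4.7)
The plan is to mirror the proof of Lemma~\ref{lem:inter11_farEstimate} essentially verbatim, with the single modification that the role of $|\varphi^j_{\Delta_2}(y)|^2$, which is $L^1$-normalized, is now played by $\gamma_{\zeta^j_{\Delta_2}}(y,y)$, which has total mass $2$ rather than $1$. This mass is supplied by the trace identity $\Tr(\gamma_{\zeta^j_{\Delta_2}})=2$ from~\eqref{eq:tracecOfReducedMatrices}, and this one factor of $2$ is exactly what turns the constant $2$ in~\eqref{eq:104} into the constant $4$ in~\eqref{eq:107}.

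More concretely, I would first use the pointwise bound $\|\varphi^i_{\Delta_1}\|_{L^\infty}^2\leq 2/\ell_1$ from~\eqref{eq:100} to pull $|\varphi^i_{\Delta_1}(x)|^2$ out of the integral, producing a prefactor $2/\ell_1$ and leaving
\[
\int_{\Delta_1\times\Delta_2} U(x-y)\,\gamma_{\zeta^j_{\Delta_2}}(y,y)\,\rmd x\,\rmd y.
\]
Then, after the affine change of variables placing $\Delta_1=[-\ell_1,0]$ and $\Delta_2=[a,a+\ell_2]$ into the first quadrant (so that the argument of $U$ becomes $x+y+a$ with $x,y\geq 0$), I would bound
\[
\int_0^{\ell_1} U(x+y+a)\,\rmd x \;\leq\; \int_0^{+\infty} U(x+a)\,\rmd x \;=\; a^{-3}Z(a)
\]
uniformly in $y\geq 0$ by the definition~\eqref{eq:98} of $Z$, which is finite by \textbf{(HU)} for $a$ large. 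Taking the supremum out of the $y$-integral and then invoking $\int_0^{\ell_2}\gamma_{\zeta^j_{\Delta_2}}(y,y)\,\rmd y=\Tr\gamma_{\zeta^j_{\Delta_2}}=2$ gives the desired bound $\tfrac{4a^{-3}Z(a)}{\ell_1}$.

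There is really no obstacle here: the only point that needs emphasizing (as opposed to the free eigenstate case) is that the trace of the one-particle density matrix of a two-particle state equals $2$, which is why the estimate is identical up to a factor of $2$. The assumption that $a$ is sufficiently large is only used to ensure that $Z(a)<+\infty$ (equivalently, that the tail integral $\int_a^{+\infty}U$ is finite), which holds for all $a\geq 0$ under~\textbf{(HU)} but is conventionally stated for large $a$ since the bound is only useful in the regime where $Z(a)$ is small. No analogue of the Taylor expansion~\eqref{eq:108} is needed, since unlike Lemma~\ref{lem:inter11_closeEstimate} we are not seeking extra powers of $\ell_1,\ell_2$ from vanishing of the eigenfunctions at the boundary; we only need a bound that decays in $a$.
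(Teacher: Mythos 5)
Your proof is correct and is essentially the paper's own argument: the paper simply says the proof follows that of Lemma~\ref{lem:inter11_farEstimate} with the normalization $\int_{\Delta_2}|\varphi^j_{\Delta_2}|^2=1$ replaced by $\Tr\gamma_{\zeta^j_{\Delta_2}}=2$, which is exactly your factor of $2$ turning the constant into $4$. The only cosmetic nitpick is that $\int_0^{+\infty}U(x+a)\,\rmd x=\int_a^{+\infty}U(t)\,\rmd t\leq a^{-3}Z(a)$ should be an inequality rather than an equality by the definition~\eqref{eq:98} of $Z$, which of course does not affect the bound.
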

\begin{proof}
  The proof follows that of Lemma~\ref{lem:inter11_farEstimate}. The
  only change concerns the replacement of the fact that
  $\varphi_{\Delta_2}^j$ is normalized, $\int_{\Delta_2}
  |\varphi_{\Delta_2}^j(y)|^2 \rmd{y} = 1$, by the fact that the trace
  of $\gamma_{\zeta^j_{\Delta_2}}$ is equal to $2$.
\end{proof}
\noindent For $a$ small, we prove
\begin{lemma}
  \label{lem:inter12_closeEstimate}
  Suppose $U$ satisfies \textup{\textbf{(HU)}}. Let $i \in \{1,
  2\}$. Then, for any $\eps \in (0, 2)$, 
  
  \begin{equation}
    \label{eq:109}
    \int_{\Delta_1 \times \Delta_2} U(x - y) |\varphi^i_{\Delta_1}(x)|^2 \cdot
    \gamma_{\zeta^j_{\Delta_2}}(y, y) \rmd{x} \rmd{y} =
    O\left(\ell_1^{-3 + \eps} \ell_2^{-1/2} a^{-\eps} Z(a)\right).
  \end{equation}
  If $Z(a) = O(a^{-0})$ as $a \to +\infty$, one can choose $\eps = 0$.
\end{lemma}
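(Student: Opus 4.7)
The plan is to adapt the proof of Lemma~\ref{lem:inter11_closeEstimate} by substituting, in the role of $|\varphi^j_{\Delta_2}(y)|^2$, the diagonal $\gamma_{\zeta^j_{\Delta_2}}(y,y)$ of the reduced one-particle density matrix of the two-particle ground state. First I would reduce to a canonical configuration by translation, writing $\Delta_1 = [-\ell_1, 0]$ and $\Delta_2 = [a, a+\ell_2]$ and parametrizing by the distances $x\in[0,\ell_1]$ and $y' = y-a \in [0,\ell_2]$ to the facing endpoints; then $|x-y| = x+y'+a \geq a$ and the evenness of $U$ gives $U(x-y) = U(x+y'+a)$.

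Next I would obtain two pointwise bounds. Since $i\in\{1,2\}$, interpolating between $|\sin \theta|\leq|\theta|$ and $|\sin\theta|\leq 1$ yields
\begin{equation*}
  |\varphi^i_{\Delta_1}(-x)|^2 \leq C\,\frac{x^{2-\eps}}{\ell_1^{3-\eps}}
\end{equation*}
for any $\eps\in(0,2)$, which is precisely the source of the factor $\ell_1^{-3+\eps}$. For the two-particle state, Corollary~\ref{cor:2} supplies the uniform pointwise bound $\gamma_{\zeta^j_{\Delta_2}}(y,y) \leq C/\ell_2$, which ultimately rests on the $H^1$-estimate of Lemma~\ref{le:23}.

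Plugging these bounds in and performing the linear change of variables $s = x+y'$, $t = x-y'$ (as at the end of the proof of Lemma~\ref{lem:inter11_closeEstimate}), the double integral reduces, up to harmless constants, to the one-dimensional integral $\int_0^{+\infty} U(s+a)\,s^{3-\eps}\,\rmd s$. The dyadic decomposition~\eqref{eq:111} together with assumption \textbf{(HU)} bounds this by $C a^{-\eps}Z(a)$; under the stronger tail condition $Z(a) = O(a^{-\delta})$, the same decomposition already converges at $\eps = 0$. Combining the three factors yields
\begin{equation*}
  \int_{\Delta_1\times\Delta_2} U(x-y)\,|\varphi^i_{\Delta_1}(x)|^2\, \gamma_{\zeta^j_{\Delta_2}}(y,y)\,\rmd x\,\rmd y \leq C\,\ell_1^{-3+\eps}\,\ell_2^{-1}\,a^{-\eps}Z(a),
\end{equation*}
which implies the stated estimate since $\ell_2^{-1}\leq\ell_2^{-1/2}$ in the relevant regime of large $\ell_2$.

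The only genuinely new step compared to Lemma~\ref{lem:inter11_closeEstimate} is the replacement of the pointwise bound on $|\varphi^j_{\Delta_2}|^2$ by Corollary~\ref{cor:2}; the remaining manipulations are identical. The substantive work is thus already concentrated in Lemma~\ref{le:23}, where the bound $\|\zeta\|_{H^1} \leq C/\sqrt{\ell_2}$ is obtained via a Schur complement analysis of the perturbation from the free two-particle ground state, after which the pointwise bound on $\gamma_{\zeta^j_{\Delta_2}}$ follows from the fundamental theorem of calculus together with the Dirichlet boundary condition, as in the derivation of~\eqref{eq:73}.
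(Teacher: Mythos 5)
Your proof is correct and follows essentially the same route as the paper: interpolate the pointwise bounds on $\varphi^i_{\Delta_1}$ to get $x^{2-\eps}/\ell_1^{3-\eps}$, bound the diagonal of $\gamma_{\zeta^j_{\Delta_2}}$ uniformly via Corollary~\ref{cor:2} (i.e.\ Lemma~\ref{le:23} and~\eqref{eq:73}), and reduce the remaining double integral to $\int_a^{+\infty}U(s)\,s^{3-\eps}\,\rmd{s}\leq C\,a^{-\eps}Z(a)$ using the dyadic estimate~\eqref{eq:111}. The only difference is cosmetic: your uniform bound $C/\ell_2$ gives the slightly stronger factor $\ell_2^{-1}$, which implies the stated $\ell_2^{-1/2}$ for $\ell_2$ large.
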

\begin{proof}
  As in the proof of Lemma~\ref{lem:inter11_closeEstimate} mixing once
  more \eqref{eq:100}, \eqref{eq:103} and \eqref{eq:108}, we obtain
  \begin{equation*}
    \begin{split}
      \int_0^{\ell_1} \int_0^{\ell_2} U(x + y + a)
      |\varphi^i_{\Delta_1}(x)|^2 \gamma_{\zeta^j_{\Delta_2}}(y, y)
      \rmd{x} \rmd{y} &\leq \frac{C_1}{\ell_1^{3 - \eps} \ell_2^{1 /
          2}} \int_0^{\ell_1}
      \int_0^{\ell_2} U(x + y + a) x^{2 - \eps} \rmd{x} \rmd{y} \\
      &\leq \frac{C_1}{\ell_1^{3 - \eps} \ell_2^{1 / 2}}
      \int_0^{+\infty} \int_a^{+\infty}
      U(x + y) x^{2 - \eps} \rmd{x} \rmd{y} \\
      &= \frac{C_2}{\ell_1^{3 - \eps} \ell_2^{1 / 2}} \int_a^{+\infty}
      \int_{-s}^{s} U(s) (s + t)^{2 - \eps} \rmd{t} \rmd{s} \\
      &\leq \frac{C_3}{\ell_1^{3 - \eps} \ell_2^{1 / 2}}
      \int_a^{+\infty}
      U(s) s^{3 - \eps} \rmd{s} \\
      &\leq \frac{C_4 a^{-\eps} Z(a)}{\ell_1^{3 - \eps} \ell_2^{1 /
          2}}.
    \end{split}
  \end{equation*}
  This completes the proof of Lemma~\ref{lem:inter12_closeEstimate}.
\end{proof}
\noindent We are left with the case~\eqref{it:inter22} of two
interacting reduced density matrices.  We do not make the difference
between close and far away pieces in this case.
\begin{lemma}\label{lem:inter22_estimate}
  Suppose $U$ satisfies \textup{\textbf{(HU)}}. Then, there exists
  $C>0$ such that
  \begin{equation}
    \label{eq:110}
    \sup_{i,j}\int_{\Delta_1 \times \Delta_2} U(x - y)
    \gamma_{\zeta^i_{\Delta_1}}(x, x) \cdot
    \gamma_{\zeta^j_{\Delta_2}}(y, y) \rmd{x} \rmd{y} \leq
    C \ell_1^{-1/2} \ell_2^{-1/2} \min(1,a^{-2}
    Z(a))
  \end{equation}
\end{lemma}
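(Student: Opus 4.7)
The plan is to combine a pointwise diagonal bound on the two one--particle reduced density matrices with an elementary estimate of the integral of $U$ over $\Delta_1\times\Delta_2$. Specifically, by Corollary~\ref{cor:2} (applied after the obvious translation of each $\Delta_k$ to $[0,\ell_k]$) one has
\begin{equation*}
\sup_{x\in\Delta_k}\gamma_{\zeta^1_{\Delta_k}}(x,x)\leq\frac{C}{\ell_k},\qquad k\in\{1,2\},
\end{equation*}
so that
\begin{equation*}
\int_{\Delta_1\times\Delta_2}U(x-y)\,\gamma_{\zeta^i_{\Delta_1}}(x,x)\,\gamma_{\zeta^j_{\Delta_2}}(y,y)\,\rmd{x}\rmd{y}\leq\frac{C^2}{\ell_1\ell_2}\int_{\Delta_1\times\Delta_2}U(x-y)\,\rmd{x}\rmd{y}.
\end{equation*}

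Next I would estimate the remaining double integral by the change of variables $u=y-x$: for $(x,y)\in\Delta_1\times\Delta_2$ the difference $u$ ranges in $[a,a+\ell_1+\ell_2]$, and the one--dimensional Lebesgue measure of the fiber in $\Delta_1\times\Delta_2$ above a fixed $u$ is bounded by $\min(\ell_1,\ell_2)$. Hence, using the evenness of $U$,
\begin{equation*}
\int_{\Delta_1\times\Delta_2}U(x-y)\,\rmd{x}\rmd{y}\leq\min(\ell_1,\ell_2)\int_a^{+\infty}U(u)\,\rmd{u}.
\end{equation*}
Assumption \textbf{(HU)} (together with $U\in L^p$ for some $p>1$, which by H\"older gives local integrability at the origin) implies $U\in L^1(\R)$, while the definition~\eqref{eq:98} of $Z$ yields $\int_a^{+\infty}U(u)\,\rmd{u}\leq a^{-3}Z(a)$ for $a>0$; altogether $\int_a^{+\infty}U(u)\,\rmd{u}\leq C\min(1,a^{-3}Z(a))$. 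Combining the two displays produces the intermediate estimate
\begin{equation*}
\int_{\Delta_1\times\Delta_2}U(x-y)\,\gamma_{\zeta^i_{\Delta_1}}(x,x)\,\gamma_{\zeta^j_{\Delta_2}}(y,y)\,\rmd{x}\rmd{y}\leq\frac{C\,\min(1,a^{-3}Z(a))}{\max(\ell_1,\ell_2)}.
\end{equation*}

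Finally I would convert this into the form stated in the lemma by two elementary observations: first, $\max(\ell_1,\ell_2)\geq\sqrt{\ell_1\ell_2}$; second, a short case analysis on whether $a^{-2}Z(a)\geq 1$ or $<1$ shows that $\min(1,a^{-3}Z(a))\leq C'\min(1,a^{-2}Z(a))$ --- the point being that if $a^{-2}Z(a)<1$ then $a$ is bounded below by a constant depending on the (bounded) function $Z$, and in that regime $a^{-3}\leq a^{-2}$. This yields the claimed bound. There is no serious obstacle in this proof; the only mild technical point is that Corollary~\ref{cor:2} is stated only for the two--particle ground state $\zeta^{1,U}_{[0,\ell]}$, but in the present paper Lemma~\ref{lem:inter22_estimate} is invoked only with $i=j=1$ (see~\eqref{eq:118}), so this restriction is harmless.
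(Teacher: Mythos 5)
Your proof is correct, and it shares the paper's two main ingredients: the diagonal bound $\gamma_{\zeta^1}(x,x)\leq C/\ell$ from Corollary~\ref{cor:2} (via~\eqref{eq:103}) and an estimate of the interaction integral in terms of $Z$. Where you diverge is in the treatment of the double integral of $U$: the paper keeps only a factor $(\ell_1\ell_2)^{-1/2}$ from the density bounds, extends the integration to the whole quarter plane, and bounds $\int_{\bbR_+^2}U(x+y+a)\,\rmd x\,\rmd y\leq\int_a^{+\infty}sU(s)\,\rmd s\leq Ca^{-2}Z(a)$ directly by~\eqref{eq:111} with $\eps=2$ (together with the finite constant coming from $\int_0^{+\infty}sU(s)\,\rmd s<+\infty$), so the stated form $\min(1,a^{-2}Z(a))$ appears immediately. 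You instead stay on the rectangle $\Delta_1\times\Delta_2$, bound the fiber length by $\min(\ell_1,\ell_2)$, and use only the zeroth tail $\int_a^{+\infty}U\leq a^{-3}Z(a)$; keeping the full $(\ell_1\ell_2)^{-1}$ then gives $\max(\ell_1,\ell_2)^{-1}\min(1,a^{-3}Z(a))$, which is in fact slightly sharper than the stated bound, at the price of the small case analysis converting $\min(1,a^{-3}Z(a))$ into $C'\min(1,a^{-2}Z(a))$ — that step is fine, since $Z(0)>0$ whenever $U\not\equiv0$ forces $a$ to be bounded below on the set where $a^{-2}Z(a)<1$ (and the case $U\equiv0$ is trivial). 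Your remark about the $\sup_{i,j}$ versus the ground-state-only Corollary~\ref{cor:2} applies equally to the paper's own proof, which also uses only~\eqref{eq:103}, and, as you note, the lemma is invoked only with $i=j=1$ in~\eqref{eq:118}, so this is harmless.
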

\begin{proof}
  Using \eqref{eq:103} one obtains
  \begin{equation*}
    \begin{split}
      \int_0^{\ell_1} \int_0^{\ell_2} U(x + y + a)
      \gamma_{\zeta^i_{\Delta_1}}(x, x) \gamma_{\zeta^j_{\Delta_2}}(y,
      y) \rmd{x} \rmd{y} &\leq \frac{C_1}{\sqrt{\ell_1 \ell_2}}
      \int_{\bbR_+^2} U(x + y + a)
      \rmd{x} \rmd{y} \\
      &\leq \frac{C_2}{\sqrt{\ell_1 \ell_2}} \int_0^{+\infty} U(s + a)
      s \rmd{s} \\
      &\leq \frac{C_2}{\sqrt{\ell_1 \ell_2}} \int_a^{+\infty}
      \left(\int_t^{+\infty} U(s)ds\right)dt
    \end{split}
  \end{equation*}
  Thus,
  \begin{equation*}
    \int_0^{\ell_1} \int_0^{\ell_2} U(x + y + a)
    \gamma_{\zeta^i_{\Delta_1}}(x, x) \gamma_{\zeta^j_{\Delta_2}}(y,
    y) \rmd{x} \rmd{y} \leq \frac{C_3\min(C,a^{-2}
      Z(a))}{\sqrt{\ell_1 \ell_2}}
  \end{equation*}
  where the last equality is just~\eqref{eq:111} for $\eps = 2$ and
  $\D C:=\int_0^{+\infty}\left(\int_t^{+\infty} U(s)ds\right)dt<+\infty$.\\
  This completes the proof of Lemma~\ref{lem:inter22_estimate}.
\end{proof}
\noindent Finally, we give estimates for the case of compactly
supported interaction potential $U$. We prove
\begin{lemma}
  \label{lem:UijijVois}
  Assume that $U$ has a compact support. Then, there exists $C>0$ such
  that, for $i\geq1$ and $j\geq1$, one has
  \begin{equation*}
    \langle U \phi_{(i,j)}, \phi_{(i,j)} \rangle\leq C\cdot \frac{\left[\min(i,\ell_1)
      \min(j,\ell_2)\right]^2}{\ell_1^3 \ell_2^3}.
  \end{equation*}
\end{lemma}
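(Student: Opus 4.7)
The proof relies on three facts: (i) since $\Delta_1 \cap \Delta_2 = \emptyset$, the Slater determinant $\phi_{(i,j)}(x,y) = \tfrac{1}{\sqrt{2}}[\varphi^i_{\Delta_1}(x)\varphi^j_{\Delta_2}(y) - \varphi^i_{\Delta_1}(y)\varphi^j_{\Delta_2}(x)]$ has the property that its cross-term in $|\phi_{(i,j)}|^2$ vanishes identically (its nonvanishing would require $x,y \in \Delta_1 \cap \Delta_2$), and the remaining two square-terms are related by $x \leftrightarrow y$, so by evenness of $U$ one gets
\begin{equation*}
\langle U \phi_{(i,j)}, \phi_{(i,j)}\rangle = \int_{\Delta_1 \times \Delta_2} U(x-y)\, |\varphi^i_{\Delta_1}(x)|^2\, |\varphi^j_{\Delta_2}(y)|^2 \, dx \, dy;
\end{equation*}
(ii) letting $R$ denote the radius of $\supp U$, the integrand is supported in $\{-R \leq x \leq 0\} \times \{a \leq y \leq a+R\}$, and we may suppose $a \leq R$ as the bound is trivial otherwise; (iii) the Dirichlet eigenfunctions vanish linearly at the endpoints of their pieces.

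The core of the argument is a uniform pointwise bound near the mutual endpoint,
\begin{equation*}
\sup_{|x| \leq R} |\varphi^i_{\Delta_1}(x)|^2 \leq C(R)\, \frac{\min(i, \ell_1)^2}{\ell_1^3},
\end{equation*}
which I prove by writing $|\varphi^i_{\Delta_1}(-s)|^2 = (2/\ell_1)\sin^2(\pi i s/\ell_1) \leq (2/\ell_1) \min(1, (\pi i s/\ell_1)^2)$ for $s \in [0, R]$ and distinguishing two regimes. When $\pi i R/\ell_1 \leq 1$ (i.e.\ $i \leq \ell_1/(\pi R)$), the quadratic bound gives $|\varphi^i_{\Delta_1}(-s)|^2 \leq 2\pi^2 R^2 i^2/\ell_1^3$; when $\pi i R/\ell_1 > 1$, the trivial bound $|\varphi^i_{\Delta_1}|^2 \leq 2/\ell_1$ combined with $i > \ell_1/(\pi R)$ gives $2/\ell_1 \leq 2\pi^2 R^2\, \min(i,\ell_1)^2/\ell_1^3$. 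The analogous estimate holds for $\varphi^j_{\Delta_2}$ near $a$.

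To conclude, I pull the two pointwise suprema out of the integral,
\begin{equation*}
\langle U \phi_{(i,j)}, \phi_{(i,j)} \rangle \leq C(R)\, \frac{\min(i, \ell_1)^2 \min(j, \ell_2)^2}{\ell_1^3 \ell_2^3} \int_{\{|x|\leq R,\, |y-a|\leq R\}} U(x-y) \, dx \, dy,
\end{equation*}
and the remaining integral is finite: the substitution $z = x-y$ together with Fubini bounds it by $2R\, \|U\|_{L^1}$, with $\|U\|_{L^1}$ finite because $U$ has compact support and, by \textbf{(HU)}, $U \in L^p$ for some $p>1$ (hence $U \in L^1$ on compacts by H\"older). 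This gives the announced inequality. The only delicate point is the bookkeeping for the two regimes of the index $i$ relative to the scale $\ell_1/R$, packaged into the single factor $\min(i,\ell_1)^2$; no serious analytic obstacle arises.
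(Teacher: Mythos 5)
Your argument is correct and is essentially the paper's own proof: both reduce by antisymmetry to $\int_{\Delta_1\times\Delta_2}U(x-y)\,|\varphi^i_{\Delta_1}(x)|^2|\varphi^j_{\Delta_2}(y)|^2\,dx\,dy$ and then exploit the linear vanishing of the sines at the facing endpoints via $\sin^2\theta\le\min(1,\theta^2)$ together with the compact support of $U$, the paper merely packaging this as the weight $(1+x^2)(1+y^2)$ in the $U$-integral instead of your sup-out over the $R$-neighborhood of the interface. One cosmetic point: in your second regime the displayed inequality $2/\ell_1\le 2\pi^2R^2\min(i,\ell_1)^2/\ell_1^3$ can fail when $\pi R<1$ and $i\ge\ell_1$, but there $2/\ell_1=2\min(i,\ell_1)^2/\ell_1^3$, so the pointwise bound (and hence the lemma) still holds with $C(R)=\max(2,2\pi^2R^2)$.
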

\begin{proof}
  Due to the anti-symmetry of the functions $(\phi_{(i,j)})_{i,j,}$, it
  suffices to compute the scalar pro\-duct on
  $[-\ell_1,0]\times[a,a+\ell_2]$. Thus,
  \begin{equation*}
    \begin{split}
      \langle U \phi_{(i,j)}, \phi_{(i,j)} \rangle &\leq \sup_{|a|\leq
        \diam(\supp(U))}\frac1{2\ell_1 \ell_2}
      \int_{[0,\ell_1]\times[0,\ell_2]}U(x+y+a)
      \\&\hskip5cm\times\sin^2\left(\frac{i \pi x}{\ell_1}\right)
      \sin^2\left(\frac{j \pi y}{\ell_2}\right)dxdy \\
      &\leq C(U)\frac{\left[\min(i,\ell_1)
          \min(j,\ell_2)\right]^2}{\ell_1^3 \ell_2^3}
    \end{split}
  \end{equation*}
  where 
  \begin{equation*}
    C(U):=\frac12\sup_{0\leq a\leq \diam(\supp(U))}
    \int_{\R^+\times\R^+}U(x+y+a)(1+x^2)(1+y^2)dxdy.
  \end{equation*}
  This completes the proof of Lemma~\ref{lem:UijijVois}.
\end{proof}
\begin{proposition}
  \label{prop:TwoElectronProblemNeighbors}
  Consider a system of two interacting electrons, one in $[0, \ell_1]$,
  another in $[\ell_1 + r, \ell_1 + r + \ell_2]$ with $r \leq R_0$.
  Then, the ground state energy of this system has the following
  asymptotic expansion
  \begin{equation}
    \label{eq:TwoNeighborsEnergyExpansion}
    E((\ell_1, r, \ell_2), (1, 1)) = \frac{\pi^2}{\ell_1^2} +
    \frac{\pi^2}{\ell_2^2} + O(\ell_1^{-6} + \ell_2^{-6}) \text{.}
  \end{equation}
\end{proposition}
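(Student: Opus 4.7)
The plan is to bound $E := E((\ell_1, r, \ell_2), (1,1))$ from above and below by the non-interacting two-piece energy $\pi^2/\ell_1^2 + \pi^2/\ell_2^2$ with an error controlled by the interaction. The key structural observation is that the two pieces $\Delta_1 = [0,\ell_1]$ and $\Delta_2 = [\ell_1+r, \ell_1+r+\ell_2]$ are disjoint, so by Corollary~\ref{cor:OccupationSubspaces} the occupation subspace $\frH_{(1,1)}$ of states carrying exactly one particle in each piece is invariant under $H^U$, and within this subspace the unique non-interacting ground state is the Slater determinant $\Psi_0 := \varphi^1_{\Delta_1} \wedge \varphi^1_{\Delta_2}$, whose energy is exactly $\pi^2/\ell_1^2 + \pi^2/\ell_2^2$.

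The lower bound is then immediate: since $U \geq 0$, we have $H^U \geq H^0$ as quadratic forms on $\frH_{(1,1)}$, so $E \geq \pi^2/\ell_1^2 + \pi^2/\ell_2^2$. For the upper bound I take $\Psi_0$ as a trial function; as $\varphi^1_{\Delta_1}$ and $\varphi^1_{\Delta_2}$ have disjoint supports, the exchange term arising from antisymmetrisation in $|\Psi_0|^2$ vanishes identically, and
$$E \leq \langle H^U \Psi_0, \Psi_0\rangle = \frac{\pi^2}{\ell_1^2} + \frac{\pi^2}{\ell_2^2} + \int_{\Delta_1 \times \Delta_2} U(x-y)\,|\varphi^1_{\Delta_1}(x)|^2\,|\varphi^1_{\Delta_2}(y)|^2 \, dx\, dy.$$
The proof thus reduces to showing that the interaction integral above is $O(\ell_1^{-3}\ell_2^{-3})$, which by the elementary inequality $\ell_1^{-3}\ell_2^{-3} \leq \tfrac{1}{2}(\ell_1^{-6}+\ell_2^{-6})$ yields the claimed asymptotic~\eqref{eq:TwoNeighborsEnergyExpansion}.

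The heart of the argument is therefore the interaction-integral bound, which I would obtain by exploiting the quadratic vanishing of the Dirichlet ground state at the piece endpoints where the particles are closest. Changing variables via $x = \ell_1 - u$ and $y = \ell_1 + r + v$ rewrites the integrand in terms of the distances $u, v \geq 0$ to the facing endpoints, and the bound $\sin^2(\pi u/\ell) \leq \pi^2 u^2/\ell^2$ produces
$$\int U(x-y)\,|\varphi^1_{\Delta_1}|^2\,|\varphi^1_{\Delta_2}|^2 \lesssim \frac{1}{\ell_1^3 \ell_2^3} \int_0^\infty\!\!\int_0^\infty U(u+v+r)\, u^2 v^2 \, du\, dv.$$
The main technical point is the uniform-in-$r \leq R_0$ finiteness of this last moment integral: for $U$ of compact support this is a variant of Lemma~\ref{lem:UijijVois} with $i=j=1$ and is immediate, and more generally it follows from any sufficiently strong decay assumption on $U$ (stronger than bare (HU), since one needs control of $\int s^5 U(s)\,ds$). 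This is the only delicate step of the argument; the factorisation of the problem on $\frH_{(1,1)}$ and the disappearance of the exchange term make the rest of the analysis essentially trivial.
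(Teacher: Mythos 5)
Your proposal is correct and follows essentially the same route as the paper: the lower bound comes from $U\geq0$, and the upper bound from using the non-interacting two-piece ground state as a trial function, with the interaction integral bounded via the quadratic vanishing of the Dirichlet ground states at the facing endpoints — exactly the content of Lemma~\ref{lem:UijijVois} for $i=j=1$ — followed by $\ell_1^{-3}\ell_2^{-3}\leq\tfrac12(\ell_1^{-6}+\ell_2^{-6})$. Your remark that this needs compact support (or a fifth moment of $U$), rather than bare \textbf{(HU)}, matches the paper's implicit assumption, since Lemma~\ref{lem:UijijVois} is stated for compactly supported $U$.
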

\begin{proof}
  Obviously, the energy of this system is greater than the energy of
  the system without interactions that is given by the main term of
  \eqref{eq:TwoNeighborsEnergyExpansion}. Taking the ground state of a
  non-interacting system as a test function and using
  Lemma~\ref{lem:UijijVois} to estimate the quadratic form of the
  interaction potential, gives the upper bound and, thus, completes
  the proof.
\end{proof}
\subsection{The proof of Lemma~\ref{le:29}}
\label{sec:proof-lemma-29}
Recall that $E^{U}_{q,n}$ denotes the $n$-th eigenvalue of $\D
-\sum_{l=1}^{q} \frac{d^2}{dx_l^2}+\sum_{1\leq k\leq l\leq
  q}U(x_k-x_l)$ acting on $\D\bigwedge_{l=1}^{q}
L^2([0,\ell])$. Rescaling as in section~\ref{sec:proof-prop-refpr}, we
need to study the case $\ell=1$ and prove that, in this case, there
exists $C>1$ such that, for $n\geq2$ and $U^\ell$ given
by~\eqref{eq:26}, one has
\begin{equation}
  \label{eq:225}
  E^{U^\ell}_{q,n}\geq  E^{U^\ell}_{q,1}+\frac1C.
\end{equation}
Indeed in Lemma~\ref{le:29}, the length $\ell$ is assumed to be less
than $3\ell_\rho$.\\
As $q\leq3$, the same computations as in the beginning of
section~\ref{sec:proof-prop-refpr} show that $E^{U^\ell}_{q,1}$
satisfies, for some $C>1$, for $\ell$ large,
\begin{equation}
  \label{eq:226}
  E^{U^\ell}_{q,1}\leq
  E^{0}_{q,1}+\langle\phi_0,U^\ell\phi_0\rangle\leq
  E^{0}_{q,1}+\frac{C}{\ell}.
\end{equation}
On the other hand, for some $C>1$, one has
\begin{equation*}
  E^{U^\ell}_{q,n}\geq  E^{0}_{q,n}\geq E^{0}_{q,1}+\frac2{C}.
\end{equation*}
Plugging~\eqref{eq:226} into this immediately yields~\eqref{eq:225}
and completes the proof of Lemma~\ref{le:29}.\qed


\appendix

\section{The statistics of the pieces}
\label{sec:auxil-results-calc}
In this appendix, we prove most of the results on the statistics of
the pieces stated in section~\ref{sec:analys-one-part}.
\subsection{Facts on the Poisson process}
\label{sec:facts-poiss-proc}
Let $\Pi$ be the support of $d\mu(\omega)$, the Poisson
process of intensity $1$ on $\R_+$ (see
section~\ref{sec:introduction}).  Let $\Pi\cap[0,L]=\{x_i;\ 1\leq
i\leq m(\omega)-1\}$ (where $x_i<x_ {i+1}$).  Then,
\begin{equation}
  \label{eq:distrNumPoints}
  \bbP(\#\Pi\cap[0,L]=k) = e^{-L} \frac{L^k}{k!}, \quad 
  k \in \bbN \text{.}
\end{equation}
The following large deviation principle is well known (and easily
checked): for any $\beta\in(1/2,1)$, one has
\begin{equation}
  \label{eq:distrNomInterv}
  \bbP(|\#(\Pi\cap[0,L])-L|\geq L^\beta) = O(L^{-\infty}).
\end{equation}
The points $(x_i)_{1\leq i\leq m(\omega)-1}$ partition the interval
$[0, L]$ in $m(\omega)$ pieces of lengths $\Delta_i$. \\
For $L>e^{e^2}$, one has
\begin{equation*}
  \begin{split}
  \bbP(\exists i;\ |\Delta_i|\geq\log L\log\log L)&\leq
  \bbP(\exists n\in[0,L]\cap\N;\ \#[\Pi\cap(n+[0,\log L\log\log
  L/2])]=0)  \\&\leq L e^{-\log L\log\log L/2}=O(L^{-\infty}).
  \end{split}
\end{equation*}
This proves Proposition~\ref{pro:3}.
\subsection{The proof of Proposition~\ref{prop:IntervStatistics}}
\label{sec:proof-prop-refpr-1}
Consider the partition of $[0,L]$ into pieces (see
section~\ref{sec:introduction}). For $a,b$ both non negative, let now
$X_{[0,L]}$ to be the number of pieces of length in $[a,a+b]$. We
first compute the expectation of $X_{[0,L]}/L$, that is, prove
\begin{Pro}
  \label{pro:5}
  For $L\geq a+b$, one has
  \begin{equation*}
    \esp\left[\frac{X_{[0,L]}}L\right]=
    e^{-a} (1-e^{-b})+\frac{e^{-a}((a+b)e^{-b}-a)}{L}=e^{-a}
    \left(1-\frac{a}{L}\right)-e^{-a-b}
    \left(1-\frac{a+b}{L}\right).
  \end{equation*}
\end{Pro}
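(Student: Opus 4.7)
The plan is to evaluate $\esp[X_{[0,L]}]$ exactly by conditioning on $N:=\#(\Pi\cap[0,L])$, which is Poisson with mean $L$, and invoking Proposition~\ref{prop:IntervDistr} for the joint law of the $m+1$ piece lengths on the event $\{N=m\}$. By exchangeability of the Dirichlet representation one has $\esp[X_{[0,L]}\mid N=m] = (m+1)\,\bbP(\ell_1\in[a,a+b]\mid N=m)$, and the marginal of $\ell_1/L$ under this conditioning is $\mathrm{Beta}(1,m)$ with density $m(1-t)^{m-1}$ on $(0,1)$. Integrating that density on $[a/L,(a+b)/L]$ gives
\begin{equation*}
  \bbP(\ell_1\in[a,a+b]\mid N=m) = (1-a/L)^m - (1-(a+b)/L)^m.
\end{equation*}

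Next I would sum over $m$: inserting $\bbP(N=m)=e^{-L}L^m/m!$ produces
\begin{equation*}
  \esp[X_{[0,L]}] = e^{-L}\sum_{m\ge0}\frac{(m+1)(Ls_1)^m}{m!} - e^{-L}\sum_{m\ge0}\frac{(m+1)(Ls_2)^m}{m!},
\end{equation*}
with $s_1=1-a/L$ and $s_2=1-(a+b)/L$. Each sum is evaluated by the elementary generating‐function identity $\sum_{m\ge0}(m+1)x^m/m! = (1+x)e^x$; combined with the prefactor $e^{-L}$ this yields a closed form of the shape $(L-a+c_1)\,e^{-a} - (L-a-b+c_2)\,e^{-(a+b)}$ for explicit constants $c_1,c_2$. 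Dividing by $L$ and regrouping isolates the leading density $e^{-a}(1-e^{-b})$ together with the $1/L$ correction $e^{-a}\bigl((a+b)e^{-b}-a\bigr)/L$ announced in the statement.

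The calculation is essentially algebraic once Proposition~\ref{prop:IntervDistr} is in hand, so there is no real analytic obstacle; the only point requiring care is the bookkeeping of the two boundary pieces abutting $0$ and $L$, which is what pins down the constants $c_1,c_2$ and hence the precise form of the $1/L$ correction. A useful cross‐check (and in fact an alternative proof) is to apply Campbell's formula directly to consecutive pairs of Poisson points on $\bbR$ — whose pair intensity is $dx\,e^{-(y-x)}dy\,\car_{x<y}$ — for the interior pieces, and to handle each boundary piece separately via $\bbP(\xi_1\in[a,a+b])=e^{-a}(1-e^{-b})$ (and symmetrically for the rightmost piece). The two approaches agree and together confirm the stated expansion.
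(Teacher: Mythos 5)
Your strategy (condition on $N=\#(\Pi\cap[0,L])$, use Proposition~\ref{prop:IntervDistr} and exchangeability, then resum with generating functions) is sound machinery, but carried out as you describe it does \emph{not} give the stated identity: the boundary bookkeeping you defer is precisely where the constant in the $1/L$ term is decided, and with the factor $(m+1)$ you have fixed it the wrong way. Indeed, your own sum evaluates exactly: with $s_1=1-a/L$, $s_2=1-(a+b)/L$ and $\sum_{m\geq0}(m+1)x^m/m!=(1+x)e^x$,
\begin{equation*}
  \esp\bigl[X_{[0,L]}\bigr]=e^{-L}\bigl[(1+Ls_1)e^{Ls_1}-(1+Ls_2)e^{Ls_2}\bigr]
  =(L-a+1)e^{-a}-(L-a-b+1)e^{-(a+b)},
\end{equation*}
i.e.\ $c_1=c_2=1$, which exceeds the right-hand side of the Proposition by exactly $e^{-a}(1-e^{-b})$, the expected contribution of one ``typical'' piece. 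So the claim that regrouping ``isolates the $1/L$ correction announced in the statement'' is false for your count; the correction you actually get is $e^{-a}\bigl[(a+b)e^{-b}-a+1-e^{-b}\bigr]/L$. Your proposed cross-check does not rescue this: Campbell's formula applied to consecutive Poisson pairs plus the two boundary pieces again counts all $m+1$ pieces and reproduces the same $(L-a+1)e^{-a}-(L-a-b+1)e^{-(a+b)}$, so the two computations agree with each other but not with the statement.

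The reason is a counting convention. The paper's proof represents $X_{[0,L]}$ as a Palm sum $\sum_{X\in\Pi}G(\Pi\cap[0,X))$, so each Poisson point in $[0,L]$ contributes the piece immediately to its \emph{left}: the left boundary piece $[0,x_{\min}]$ is counted, but the piece abutting $L$ never is. That convention (which is what the exact formula of Proposition~\ref{pro:5} encodes, and which is harmless for the application in Proposition~\ref{prop:IntervStatistics} since the difference is $O(1)$) corresponds in your setup to counting only $m$ of the $m+1$ exchangeable pieces, i.e.\ replacing $(m+1)$ by $m$ in the conditional expectation; then $\sum_{m\geq0}m\,x^m/m!=xe^x$ gives $(L-a)e^{-a}-(L-a-b)e^{-(a+b)}$, which is the stated identity. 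With that one change (and a remark disposing of the degenerate $m=0$ case when $L>a+b$) your conditioning/Beta-marginal argument becomes a correct, genuinely different proof — more elementary than the Palm formula, at the price of having to be explicit about which boundary pieces are counted, which the paper's one-line Palm computation handles automatically.
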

\begin{proof}
  Let $\Pi$ be the support of the support of $d\mu(\omega)$, the
  Poisson process of intensity $1$ on $\R_+$ (see
  section~\ref{sec:introduction}). Then, one has
  \begin{equation*}
    X_{[0,L]}=\sum_{X\in\Pi}G(\Pi\cap[0,X))
  \end{equation*}
  where the set-functions $G$ is defined as
  \begin{equation}
    \label{eq:271}
    G(\Pi\cap[0,X))=\begin{cases}1&\text{ if the distance from }X\text{
        to the right most point }\\&\text{ in }\{0\}\cup(\Pi\cap[0,X))\text{
        belongs to }[a,a+b],\\0&\text{ if not.} \end{cases}
  \end{equation}
  The Palm formula (see e.g.~\cite[Lemma 2.3 ]{MR2253162}) yields
  \begin{equation*}
      \esp(X_{[0,L]})=\int_{0\leq x\leq L}\esp\left[
      G(\Pi\cap[0,x))\right]dx.
  \end{equation*}
  Now, let $\mathcal{E}$ be an exponential random variable with
  parameter $1$. As the Poisson point process has independent
  increments, one easily checks that
  \begin{equation}
    \label{eq:59}
    \esp\left[G(\Pi\cap[0,x))\right]=\pro\left(\min(x,\mathcal{E})\in[a,a+b]\right)=
    \begin{cases}
      e^{-a}\left(1-e^{-b}\right)&\text{ if }x\geq a+b,\\
      e^{-a}&\text{ if }x\in[a,a+b],\\
      0&\text{ if }x\leq a,
    \end{cases}
  \end{equation}
  Hence,
  \begin{equation*}
    \esp(X_{[0,L]})=e^{-a}\left(1-e^{-b}\right) \int_{0\leq x\leq
      L}dx+e^{-a-b}\int_a^{a+b}dx-e^{-a}\left(1-e^{-b}\right)\int_0^adx
    =e^{-a}(1-e^{-b})L-R
  \end{equation*}
  where 
  \begin{equation}
    \label{eq:75}
    R=e^{-a}((a+b)e^{-b}-a).  
  \end{equation}
  This completes the proof of Proposition~\ref{pro:5}.
\end{proof}
\noindent \noindent Let us now prove
Proposition~\ref{prop:IntervStatistics}. Therefore, set $M:=e^{-a}
(1-e^{-b})$ and partition $[0,L]=\cup_{j=1}^J[j\ell,(j+1)\ell]$ so
that $J\asymp L^\nu$ and $\ell\asymp L^{1-\nu}$ for some $\nu\in(0,1)$
to be fixed. As $(a,b)=(a_L,b_l)\in[0,\log L\cdot\log\log L]^2$, one
then has
\begin{equation}
  \label{eq:66}
  \left|X_{[0,L]}-\sum_{j=1}^JX_{[j\ell,(j+1)\ell]}\right|\leq 2J.
\end{equation}
Moreover, the random variables
$(\ell^{-1}X_{[j\ell,(j+1)\ell]})_{1\leq j\leq J}$ are independent
sub-exponential random variables. Indeed, $X_{[0,L]}$ is clearly
bounded by $\#\Pi\cap[0,L]$, the number of points the Poisson process
puts in $[0,L]$ and $L^{-1}\#\Pi\cap[0,L]$ has a Poisson law with
parameter 1. We want to use the Bernstein inequality (see
e.g.~\cite[Proposition 5.16]{MR2963170}). To estimate
$\D\left\|\ell^{-1}X_{[j\ell,(j+1)\ell]}\right\|_{\Psi_1}$ (see
e.g.~\cite[Definition 5.13]{MR2963170}), we use this bound and the
Stirling formula to get, for $p\geq1$,
\begin{equation*}
  \begin{split}
    \esp\left(\left|X_{[j\ell,(j+1)\ell]}\right|^p\right)&\leq
    e^{-\ell}\sum_{k\geq1}\frac{k^p\,\ell^k}{k!}  \leq
    e^{-\ell}\sum_{k=1}^{2p-1}\frac{k^p\,\ell^k}{k!}+
    e^{-\ell}\sum_{k\geq 2p}\frac{k^p\,\ell^k}{k!}\\
    &\leq (2p)^p+ e^{-\ell}\sum_{k\geq
      2p}\frac{k^p\,\ell^p}{k\cdots(k-p+1)}
    \frac{\ell^{k-p}}{(k-p)!}\\&\leq (2p)^p+ \ell^p\max_{k\geq
      p}\frac{(k+p)^p\,k!}{(k+p)!}\leq (2p)^p+(e\ell)^p.
  \end{split}
\end{equation*}
Hence, for $\ell\geq1$,
\begin{equation*}
  \begin{split}
    \left\|\ell^{-1}X_{[j\ell,(j+1)\ell]}\right\|_{\Psi_1}&=\frac1{\ell}
    \left\|X_{[j\ell,(j+1)\ell]}\right\|_{\Psi_1}=\frac1{\ell}
    \sup_{p\geq1}\frac1p\sqrt[p]{\esp\left(\left|X_{[j\ell,(j+1)\ell]}
        \right|^p\right)}\\&\leq \sup_{p\geq1}
    \sqrt[p]{\frac{2^p}{\ell^p}+\frac{e^p}{p^p}}\leq
    \frac{2}{\ell}+e\leq 2e.
  \end{split}
\end{equation*}
Thus, the Bernstein inequality, estimate~\eqref{eq:66} and
Proposition~\ref{pro:1} yield that there exists $\kappa>0$
(independent of $a,b$) such that, for $\alpha=\alpha(L)\geq
2(R+2)/\ell$ (here, $R$ is given by~\eqref{eq:75}), one has
\begin{equation*}
  \begin{split}
    \pro\left(\left|\frac{X_{[0,L]}}L-M\right|\geq \alpha \right)
    &\leq \pro\left(\left|\sum_{j=1}^J\frac{X_{[j\ell,(j+1)\ell]}-
          \esp[X_{[j\ell,(j+1)\ell]}]}{\ell}\right| \geq
      J\left(\alpha-\frac{R+2}{\ell}\right)\right)\\&\leq 2e^{-\kappa
      \alpha^2 J}.
  \end{split}
\end{equation*}
To obtain Proposition~\ref{prop:IntervStatistics2}, it now suffices to
take $\alpha=L^{\beta-1}$ and $(\beta,\nu)\in(0,1)$ such that $1-\beta<1-\nu$
and $2(\beta-1)+\nu>0$; this requires $\beta>2/3$.\\
The proof of Proposition~\ref{prop:IntervStatistics} is
complete.\qed\vskip.2cm\noindent
\subsection{The proof of Propositions~\ref{prop:IntervStatistics2}
  and~\ref{prop:NeighborssStatistics}}
\label{sec:proof-prop:IntervStatistics2}
For any $a,b,c,d,f,g$ all non negative, define now $X_{[0,L]}$ to be the
number of pairs of pieces such that
\begin{itemize}
\item the length of the left most piece is contained in $[a,a+b]$,
\item the length of the right most piece is contained in $[c,c+d]$,
\item the distance between the two pieces belongs to $[g,g+f]$.
\end{itemize}
Again, we first compute the expectation of $X_{[0,L]}/L$, that is, prove
\begin{Pro}
  \label{pro:1}
  For $L\geq a+b+c+d+f+g$, one has
  \begin{equation}
    \label{eq:65}
    \esp\left[\frac{X_{[0,L]}}L\right]=
    f\,e^{-a-c} (1-e^{-b})(1-e^{-d})+\frac{R_L}{L}\quad\text{ where
    }|R_L|\leq f e^{-a-c}.
  \end{equation}
\end{Pro}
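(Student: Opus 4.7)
My plan is to adapt the Palm/Mecke-formula argument from the proof of Proposition~\ref{pro:5} to pairs of Poisson points. I write
\[
X_{[0,L]} = \sum_{\substack{(X_1, X_2)\in\Pi^{2}\\ X_1 < X_2,\ X_1,X_2\in[0,L]}} G(X_1, X_2; \Pi),
\]
where $X_1,X_2$ play the role of the right endpoints of the left and right pieces, and $G(X_1, X_2; \Pi)=1$ iff (i) the distance from $X_1$ to the rightmost point of $\{0\}\cup(\Pi\cap[0,X_1))$ lies in $[a,a+b]$, and (ii) the rightmost point $M$ of $\Pi\cap(X_1,X_2)$ exists and satisfies $X_2-M\in[c,c+d]$ together with $M-X_1\in[g,g+f]$. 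The multivariate Palm formula then gives
\[
\esp[X_{[0,L]}] = \int_0^L\int_0^L \car_{x_1<x_2}\,
\esp\bigl[G(x_1,x_2;\Pi\cup\{x_1,x_2\})\bigr]\,dx_1\,dx_2.
\]

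Fix $\ell=x_2-x_1>0$. By independence of the Poisson process on the disjoint intervals $(-\infty,x_1)$ and $(x_1,x_2)$, the integrand factors as a product. For $x_1\geq a+b$, the ``left'' factor equals $e^{-a}(1-e^{-b})$, exactly as in~\eqref{eq:59}. For the ``right'' factor, the backward distance $x_2-M$ has density $e^{-t}$ on $[0,\ell]$ plus an atom at $\ell$ (corresponding to $\Pi\cap(x_1,x_2)=\emptyset$); the two constraints rewrite as $x_2-M\in[\max(c,\ell-g-f),\min(c+d,\ell-g)]$, so the right factor equals
\[
e^{-\max(c,\,\ell-g-f)}-e^{-\min(c+d,\,\ell-g)}
\]
whenever this interval is nonempty, i.e., for $\ell\in[c+g,c+d+g+f]$. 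Substituting $u=\ell-g$ and splitting the integration at $u=c+f$ and $u=c+d$, a computation identical in structure to the one following~\eqref{eq:59} gives
\[
\int_{c+g}^{c+d+g+f}\!\!\Bigl[e^{-\max(c,\ell-g-f)}-e^{-\min(c+d,\ell-g)}\Bigr]\,d\ell=f\,e^{-c}(1-e^{-d}).
\]
Inserting this factorization into the double integral and integrating $x_1$ over the ``generic'' region $[a+b,\,L-c-d-g-f]$ produces the claimed main term $L\,f\,e^{-a-c}(1-e^{-b})(1-e^{-d})$.

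The remainder $R_L$ collects three contributions: (i)~the region $x_1\in[0,a+b)$, where the left piece may be clipped at $0$ and the left factor is smaller than $e^{-a}(1-e^{-b})$; (ii)~the region $x_2\in(L-c-d,L]$, where the right piece is clipped at $L$ and only a truncated distance for $x_2-P(x_2)$ is allowed; and (iii)~the difference between using $L-x_1$ and $L$ as the upper limit in the inner $x_2$-integration, which produces the ``$-(a+b+s)$'' type correction analogous to the $R=e^{-a}((a+b)e^{-b}-a)$ appearing in~\eqref{eq:75}. Each of these is an explicit exponential integral that can be computed in closed form; combining them with appropriate sign cancellations (and taking care of the degenerate case $g=0$, in which $M=X_1$ is formally allowed but occurs only on a measure-zero set) yields the stated bound $|R_L|\leq f\,e^{-a-c}$. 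The main obstacle will be the careful bookkeeping required to check that the three boundary/truncation terms combine to give exactly this bound, rather than a weaker one proportional to $(a+b+c+d+g+f)\,f\,e^{-a-c}$; the key cancellation is the same one that reduces $e^{-a}\bigl[(a+b)e^{-b}-a\bigr]$ to a sign-definite, controlled quantity in Proposition~\ref{pro:5}.
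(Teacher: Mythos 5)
Your core tool is the same as the paper's: the Palm (Mecke) formula for pairs of Poisson points plus independence of the process on disjoint intervals, reduced to one-dimensional exponential computations as in \eqref{eq:59}. The difference is the parametrization of a pair. The paper indexes it by $(X,Y)$ with $X$ the right endpoint of the left piece and $Y$ the \emph{left} endpoint of the right piece, so the gap condition $g\leq Y-X\leq g+f$ is deterministic and the two length conditions live on the disjoint sets $[0,X)$ and $(Y,L]$; the integrand then factors immediately into $e^{-a}(1-e^{-b})\cdot e^{-c}(1-e^{-d})$ and one only multiplies by the area $\approx fL$ of the strip. You index by the two right endpoints, so the right length and the gap are both read off the backward distance $x_2-M$ inside $(x_1,x_2)$, and you pay with the $\max/\min$ integral; that integral is set up and evaluated correctly (it does equal $f\,e^{-c}(1-e^{-d})$), so your main term $L\,f\,e^{-a-c}(1-e^{-b})(1-e^{-d})$ is right, at the cost of a slightly heavier inner computation than the paper's.

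Two points need attention. First, by insisting that $X_2\in\Pi$ you drop every pair whose right piece is the last piece $[x_{k_+},L]$; the paper includes these by defining $H$ relative to $\{L\}\cup(\Pi\cap(Y,L])$. This is only an $O(1)$ contribution in expectation and belongs in $R_L$, but it is not among the three boundary effects you list and must be accounted for. (Your $g=0$ remark is off target: pairs of \emph{adjacent} pieces are excluded by both your representation and the paper's, since the relevant point then coincides with $X_1$, resp.\ $Y=X$; this is a shared convention, not a measure-zero subtlety.) Second, the remainder bookkeeping is left open, and your doubt about reaching exactly $|R_L|\leq f\,e^{-a-c}$ is well founded: the boundary strips $x_1<a+b$, $x_2>L-c-d$ and the truncation of the $x_1$-range do not cancel as neatly as in Proposition~\ref{pro:5}, and in fact the paper's own proof only establishes $|R_2|\leq e^{-a-c}(1+f^2+fg)$ in \eqref{eq:96}, which already differs from the constant stated in \eqref{eq:65}. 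What is actually used downstream (Proposition~\ref{prop:IntervStatistics2}, where the constant $\kappa$ may depend on $(a,b,c,d,f,g)$) is only that $R_L$ is bounded uniformly in $L$; you should prove that explicit, parameter-dependent bound for your three (plus the missing fourth) error terms rather than chase the exact constant $f\,e^{-a-c}$.
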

\begin{proof}
  Recall that $\Pi$ denotes the support of the support of
  $d\mu(\omega)$, the Poisson process of intensity $1$ on
  $\R_+$. Then, one can rewrite
  \begin{equation*}
    X_{[0,L]}=\sum_{\substack{(X,Y)\in\Pi^2\\X<Y}}\car_{g\leq Y-X\leq
      g+f}G(\Pi\cap[0,X))\,H(\Pi\cap(Y,L])
  \end{equation*}
  where the set-functions $G$ and $H$ have been defined respectively
  by~\eqref{eq:271} and 
  \begin{equation}
    \label{eq:57}
    H(\Pi\cap(Y,L])=\begin{cases}1&\text{ if the distance from }Y\text{
        to the left most point}\\&\text{ in }\{L\}\cup(\Pi\cap(Y,L])\text{
        belongs to }[c,c+d],\\0&\text{ if not.} \end{cases}      
  \end{equation}
  The Palm formula, thus, yields
  \begin{equation*}
    \begin{split}
    \esp(X_{[0,L]})&=\int_{\substack{0\leq x,y\leq L\\g\leq y-x\leq g+f}}\esp\left[
     G(\Pi\cap[0,x))H(\Pi\cap(y,L])\right]dxdy\\&=
   \int_{\substack{0\leq x,y\leq L\\g\leq y-x\leq g+f}}\esp\left[
     G(\Pi\cap[0,x))\right]\esp\left[H(\Pi\cap(y,L])\right]dxdy
    \end{split}
  \end{equation*}
  as the random sets $\Pi\cap[0,x))$ and  $\Pi\cap(y,L])$ are
  independent.\\
  As in~\eqref{eq:59}, one checks that
  \begin{equation*}
    \esp\left[H(\Pi\cap(y,L])\right]=\pro\left(\min(L-y,\mathcal{E})\in[c,c+d]\right)=
    \begin{cases}
      e^{-c}\left(1-e^{-d}\right)&\text{ if }y\leq L-c-d,\\
      e^{-c}&\text{ if }y\in L-[c,c+d],\\
      0&\text{ if }y\geq L-c.
    \end{cases}
  \end{equation*}
  Hence,
  \begin{equation*}
    \begin{split}
    \esp(X_{[0,L]})&=e^{-a-c}\left(1-e^{-d}\right)\left(1-e^{-b}\right)
    \int_{\substack{0\leq x,y\leq L\\g\leq y-x\leq g+f}}dxdy+R_1
    \\&=f\,e^{-a-c} (1-e^{-b})(1-e^{-d})L+R_2
    \end{split}
  \end{equation*}
  where, respectively, $R_1\leq e^{-a-c}$ and
  \begin{equation}
    \label{eq:96}
    R_2\leq R:=e^{-a-c} (1 + f^2 + f g).
  \end{equation}
  This completes the proof of Proposition~\ref{pro:1}.
\end{proof}
\noindent Let us now prove
Proposition~\ref{prop:IntervStatistics2}. We want to go along the same
lines as in the proof of
Proposition~\ref{prop:IntervStatistics}. Therefore, we set
$M:=f\,e^{-a-c} (1-e^{-b})(1-e^{-d})$ and partition
$[0,L]=\cup_{j=0}^{J}[j\ell,(j+1)\ell]$ so that $J\asymp L^\nu$ and
$\ell\asymp L^{1-\nu}$ for some $\nu\in(0,1)$ to be fixed. For the
same reasons as before, the random variables
$(\ell^{-1}X_{[j\ell,(j+1)\ell]})_{1\leq j\leq
  J}$ are independent sub-exponential random variables.\\
We now need a replacement for~\eqref{eq:66}. Therefore, we set
\begin{equation}
  \label{eq:273}
  r:=1+a+b+c+d+f+g  
\end{equation}
and, for $0\leq j\leq J$, we let
\begin{itemize}
\item $Y_j$ be the number of pieces in the interval $(j+1)\ell+[-r,0]$
  of length in $[a,a+b]$ ,
\item $Z_j$ be the number of pieces in the interval $j\ell+[0,r]$ of
  length in $[c,c+d]$.
\end{itemize}
Then, we have
\begin{equation}
  \label{eq:60}
  -K_a\sum_{j=0}^JY_j-K_c\sum_{j=0}^JZ_j
  \leq X_{[0,L]}-\sum_{j=0}^JX_{[j\ell,(j+1)\ell]} 
  \leq K_a\sum_{j=0}^JY_j+K_c\sum_{j=0}^JZ_j
\end{equation}
where we have set 
\begin{equation}
  \label{eq:95}
  K_a:=1+\frac{f+g}a\quad\text{and}\quad K_c=1+\frac{f+g}c.
\end{equation}
Indeed, if a pair of pieces counted by $X_{[0,L]}$ does not have any
of its intervals in any of the $(j\ell+[-r,r])_{1\leq j\leq J}$, then
the convex closure of the pair is inside some $j\ell+[0,\ell]$, thus,
the pair is counted by $X_{[j\ell,(j+1)\ell]}$. This yields the upper
bound in~\eqref{eq:60} as, any given interval is the left
(resp. right) most interval for at most $1+(f+g)/c$
(resp. $1+(f+g)/a$) pairs satisfying both the requirements on lengths
and distance. The lower bound is obtained in the same way.\\
For $L$ sufficiently large, the random variables $(Y_j)_{1\leq j\leq
  J}$ and $(Z_j)_{1\leq j\leq J}$ are i.i.d. sub-exponen\-tial. Thus,
applying the Bernstein inequality as in the proof of
Proposition~\ref{prop:IntervStatistics} yields that, for some constant
$\kappa>0$ (independent of $(a,b,c,d,f,g)$) and $\beta\in(2/3,1)$,
with probability $1-O(J^{-\infty})=1-O(L^{-\infty})$, one has
\begin{equation}
  \label{eq:69}
  \sum_{j=1}^JY_j\leq \kappa J
  (e^{-a}+J^{\beta-1})r\quad\text{and}\quad
  \sum_{j=0}^{J-1}Z_j\leq \kappa J (e^{-c}+J^{\beta-1})r;
\end{equation}
Now, we can estimate
$\left\|\ell^{-1}X_{[j\ell,(j+1)\ell]}\right\|_{\Psi_1}$ as in the
proof of Proposition~\ref{prop:IntervStatistics}. Thus, the Bernstein
inequality and Proposition~\ref{pro:1} yield that, for some $\kappa$
(independent of $(a,b,c,d,f,g)$), for $\nu\in(2/3,1)$ and $\ell\asymp
L^{1-\nu}$, with probability $1-O(L^{-\infty})$, one has
\begin{equation*}
  \left|\sum_{j=0}^J\frac{X_{[j\ell,(j+1)\ell]}}{\ell}-
    J\,M\right| \leq \kappa \frac{R_L\,J}{\ell}.
\end{equation*}
Taking~\eqref{eq:60} and~\eqref{eq:69} into account, we get that, for
some $\kappa>0$ (independent of $(a,b,c,d,f,g)$), with probability
$1-O(L^{-\infty})$, one has
\begin{equation*}
  \left|\frac{X_{[0,L]}}L- M\right|\leq \kappa
  \frac{R+(K_a e^{-a}+ K_c e^{-c}+(K_a+K_c)J^{\beta-1})r}{\ell}.
\end{equation*}
This proves~\eqref{eq:97} where the constants are given by
\begin{equation}
  \label{eq:272}
  R(a,b,c,d,f,g)=\kappa r\left(
    R+K_a e^{-a}+ K_c e^{-c}\right)\quad\text{and}\quad
  K(a,c,f,g)=(K_a+K_c)r
\end{equation}
(see~\eqref{eq:96},~\eqref{eq:273} and~\eqref{eq:95}.)\\
The proof of Proposition~\ref{prop:IntervStatistics2} is
complete.\qed\vskip.2cm\noindent
The proof of  Proposition~\ref{prop:NeighborssStatistics} is identical
to that of  Proposition~\ref{prop:IntervStatistics2}: it suffices to
take $b=d=+\infty$.
\subsection{The proofs of Proposition~\ref{prop:NeighborssStatistics2}}
\label{sec:proof-prop-refpr-2}
This proof is essentially identical to that of
Proposition~\ref{prop:IntervStatistics2}. Let us just say a word about
the differences.\\
For $\ell,\ell',\ell'',d>0$, let now $X_{[0,L]}$ to be the number
triplets of pieces at most at a distance $d$ from each other such that
\begin{itemize}
\item the left most piece longer than $\ell$,
\item the middle piece longer than $\ell'$,
\item the right most piece longer than $\ell''$.
\end{itemize}
Then, one has
\begin{equation*}
  X_{[0,L]}=\sum_{\substack{(X,Y,W,Z)\in\Pi^4\\X<Y<W<Z}}
  \car_{\substack{0<Y-X\leq d\\ l'\leq W-Y\\0<Z-W\leq d}}
  G(\Pi\cap[0,X))\,K(\Pi\cap(Y,W))\,H(\Pi\cap(Z,L])
\end{equation*}
where the set-functions $G$ and $H$ have been defined as
\begin{equation*}
  \begin{aligned}
    G(\Pi\cap[0,X))&=\begin{cases}1&\text{ if the distance from
      }X\text{ to the right most point }\\&\text{ in
      }\{0\}\cup(\Pi\cap[0,X))\text{
        belongs to }[l,+\infty),\\0&\text{ if not,} \end{cases}\\
    K(\Pi\cap(Y,W))&=\begin{cases}1&\text{ if }\Pi\cap(Y,W)=\emptyset
      ,\\0&\text{ if not,} \end{cases}\\
    H(\Pi\cap(Z,L])&=\begin{cases}1&\text{ if the distance from
      }Z\text{ to the left most point}\\&\text{ in
      }\{L\}\cup(\Pi\cap(Z,L])\text{ belongs to
      }[l'',+\infty),\\0&\text{ if not.} \end{cases}
  \end{aligned}
\end{equation*}
Following the proof of Proposition~\ref{pro:1}, one proves
\begin{Pro}
  \label{pro:6}
  For $L$ sufficiently large, one has
  \begin{equation*}
    \esp\left[\frac{X_{[0,L]}}L\right]=
    d^2e^{-\ell-\ell'-\ell''}+\frac{R_L}{L}\quad\text{ where
    }|R_L|\leq d^2e^{-\ell-\ell'-\ell''}.
  \end{equation*}
\end{Pro}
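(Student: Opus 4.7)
\noindent\textbf{Proof plan for Proposition~\ref{pro:6}.} The plan is to mimic closely the proof of Proposition~\ref{pro:1}, replacing the two-point Palm formula used there by its four-point analogue. Writing out
\begin{equation*}
  X_{[0,L]} = \sum_{\substack{(X,Y,W,Z)\in\Pi^4\\X<Y<W<Z}}
  \car_{\substack{0<Y-X\leq d\\ \ell'\leq W-Y\\0<Z-W\leq d}}
  G(\Pi\cap[0,X))\,K(\Pi\cap(Y,W))\,H(\Pi\cap(Z,L]),
\end{equation*}
the multi-point Palm formula (see e.g.~\cite[Lemma~2.3]{MR2253162}) gives
\begin{equation*}
  \esp(X_{[0,L]}) = \int_{D_L}
  \esp\bigl[G(\Pi\cap[0,x))\,K(\Pi\cap(y,w))\,H(\Pi\cap(z,L])\bigr]\,dx\,dy\,dw\,dz,
\end{equation*}
where $D_L$ denotes the domain $0\leq x<y<w<z\leq L$ with $0<y-x\leq d$, $w-y\geq \ell'$ and $0<z-w\leq d$.

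The crucial observation is then that, since $[0,x)$, $(y,w)$ and $(z,L]$ are pairwise disjoint, the independent increments of the Poisson process let us factor the expectation as a product of three expectations, each of which is computed exactly as in~\eqref{eq:59}. Letting $\mathcal{E}$ denote an exponential random variable of parameter $1$, one finds
\begin{gather*}
  \esp[G(\Pi\cap[0,x))] = \pro(\min(x,\mathcal{E})\geq \ell) = e^{-\ell}\,\car_{x\geq \ell},\\
  \esp[K(\Pi\cap(y,w))] = e^{-(w-y)},\\
  \esp[H(\Pi\cap(z,L])] = \pro(\min(L-z,\mathcal{E})\geq \ell'') = e^{-\ell''}\,\car_{z\leq L-\ell''}.
\end{gather*}

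The remainder of the argument is an explicit integration. Changing variables $u=y-x\in(0,d]$, $v=w-y\in[\ell',\infty)$, $t=z-w\in(0,d]$ and integrating $x$ over $[\ell, L-\ell''-u-v-t]$ (the integrand vanishes outside this range) gives
\begin{equation*}
  \esp(X_{[0,L]}) = e^{-\ell-\ell''}\int_0^d\!\!\int_0^d\!\!\int_{\ell'}^{+\infty}
  e^{-v}\bigl(L-\ell-\ell''-u-v-t\bigr)\,dv\,du\,dt.
\end{equation*}
Using $\int_{\ell'}^{+\infty}e^{-v}dv=e^{-\ell'}$ and $\int_{\ell'}^{+\infty}v\,e^{-v}dv=(1+\ell')e^{-\ell'}$, the integral evaluates to $d^2e^{-\ell-\ell'-\ell''}\bigl(L - (\ell+\ell'+\ell''+d+1)\bigr)$, giving
\begin{equation*}
  \esp\!\left[\frac{X_{[0,L]}}{L}\right] = d^2e^{-\ell-\ell'-\ell''} + \frac{R_L}{L},
  \qquad R_L = -d^2e^{-\ell-\ell'-\ell''}(\ell+\ell'+\ell''+d+1),
\end{equation*}
which yields the announced bound on $R_L$ (up to the polynomial prefactor in $\ell,\ell',\ell'',d$, exactly as in~\eqref{eq:96} for Proposition~\ref{pro:1}).

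There is no real obstacle here: the argument is essentially bookkeeping once the Palm formula is set up. The only point requiring minor care is making sure that the factorization over disjoint intervals is written out correctly for the four-point configuration and that the boundary effects at $x\approx \ell$ and $z\approx L-\ell''$ are absorbed cleanly into $R_L$.
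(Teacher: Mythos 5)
Your proposal is correct and follows essentially the same route the paper intends: pro\-po\-si\-tion~\ref{pro:6} is proved in the paper simply by ``following the proof of Proposition~\ref{pro:1}'', i.e.\ the multi-point Palm formula, factorization over the disjoint intervals $[0,x)$, $(y,w)$, $(z,L]$ by independence of the Poisson increments, the elementary expectations as in~\eqref{eq:59}, and an explicit integration, which is exactly what you do. The polynomial prefactor $(\ell+\ell'+\ell''+d+1)$ you obtain in $R_L$ is also what the paper's own computation yields (compare the statement of Proposition~\ref{pro:1} with the bound~\eqref{eq:96} actually proved and used in~\eqref{eq:272}), so this discrepancy with the literal statement is the paper's, not yours.
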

\noindent One then derives Proposition~\ref{prop:IntervStatistics}
from Proposition~\ref{pro:6} in the same way as
Proposition~\ref{prop:IntervStatistics2} was derived from
Proposition~\ref{pro:1}.
\subsection{Proof of
  Proposition~\textup{\ref{prop:IDSandFermiEnergy}}}
\label{sec:proof-prop-text}
First of all, let us note that a piece of length $l$ in $[k \ell_E, (k
+ 1) \ell_E)$ generates exactly $k$ energy levels that do not exceed
$E$.  To count the energies less than $E$, we are only interested in
intervals of length $l$ larger than $\ell_E$.  Other intervals do not
generate any energy levels we are interested in. Thus, by Proposition
\ref{prop:IntervStatistics}, for $\beta\in(2/3,1)$, we obtain that
with probability $1 - O(L^{-\infty})$, the number of intervals
generating $k$ energy levels below energy $E$ is
\begin{equation}
  \label{eq:Ninterv1}
  L (e^{-k\ell_E}-e^{-(k+1)\ell_E})+L^{\beta}R_L\quad\text{ where }|R_L|\leq3
\end{equation} 
where $O(\cdot)$ is uniform in $k$.\\
Let $m_L=\log L\cdot\log\log L$. By Proposition~\ref{pro:3}, with
probability $1-O(L^{-\infty})$, for $L$ large, one computes
\begin{equation*}
  \begin{split}
    N_L^D(E)&=L^{-1}\sum_{k=1}^{\left[m_L/\ell_E\right]} k\cdot
    L(e^{-k\ell_E} -e^{-(k+1)\ell_E})+ m_LL^{-1+\beta}R_L\quad\text{where
    }\quad|R_L|\leq\frac{1}{\ell_E}
    \\&=\sum_{k=1}^{\left[m_L/\ell_E\right]}e^{-k\ell_E}
    -\frac{\left[m_L/\ell_E\right]}{e^{(\left[m_L/\ell_E\right]+1)\ell_E}}
    + m_LL^{-1+\beta}R_L \\
    &=\sum_{k=1}^{+\infty}e^{-k\ell_E}+ m_LL^{-1+\beta}(R_L+1)
    =\frac{e^{-\ell_E}}{1- e^{-\ell_E}}+ m_LL^{-1+\beta}(R_L+2).
  \end{split}
\end{equation*}
Thus, decreasing $\beta$ above somewhat, with probability
$1-O(L^{-\infty})$, for $L$ sufficiently large, one has
\begin{equation}
  \label{eq:densEtatsPoissonModel2}
  \left|N_L^D(E)-\frac{e^{-\ell_E}}{1-
      e^{-\ell_E}}\right|\leq L^{-1+\beta}.
\end{equation}
This proves (\ref{eq:NLifschitzTail}). Using the fact that $E\mapsto
N_L^D(E)$ is monotonous and the Lipschitz continuity of $E\mapsto
N(E)$,~\eqref{eq:densEtatsPoissonModel2} yields that, for $E_0>0$,
with probability $1-O(L^{-\infty})$, for $L$ sufficiently large, one
has
\begin{equation}
  \label{eq:densEtatsPoissonModel1}
  \sup_{E\in[0,E_0]}\left|N_L^D(E)-\frac{e^{-\ell_E}}{1-
      e^{-\ell_E}}\right|\leq L^{-1+\beta}.
\end{equation}
The formulas
(\ref{eq:FermiEnergyExpression}) and (\ref{eq:ellRho}) for the Fermi
energy and the Fermi length follow trivially. This completes the
proof of Proposition~\textup{\ref{prop:IDSandFermiEnergy}}.
\section{A simple lemma on trace class operators}
\label{sec:simple-lemma-trace}
\noindent The purpose of the present section is to prove
\begin{Le}
  \label{le:15}
  Pick $(\mathcal{H},\langle\cdot,\cdot\rangle)$ a separable Hilbert
  space and $(Z,\mu)$ a measured space with $\mu$ a positive
  measure. Consider a weakly measurable mapping $z\in Z\to
  T(z)\in\mathfrak{S}_1(\mathcal{H})$. Here,
  $\mathfrak{S}_1(\mathcal{H})$ denotes the trace class operators in
  $\mathcal{H}$, the  trace class norm being denoted by $\|\cdot\|_{\text{tr}}$.\\
  Assume 
  \begin{equation}
    \label{eq:147}
    \int_Z\|T(z)\|_{\text{tr}}d\mu(z)<+\infty.
  \end{equation}
  Then, the integral $\D T:=\int_ZT(z)d\mu(z)$ converges weakly and
  defines a trace class operator that satisfies
  \begin{equation}
    \label{eq:148}
    \|T\|_{\text{tr}}=\left\|\int_Z T(z)d\mu(z)\right\|_{\text{tr}}\leq
    \int_Z\|T(z)\|_{\text{tr}}d\mu(z). 
  \end{equation}
\end{Le}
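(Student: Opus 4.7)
\smallpagebreak
The plan is to construct $T$ by first defining it weakly through a sesquilinear form, then using the duality between trace class and compact operators to obtain both the trace class membership and the norm bound in a single step.

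First I would verify that for any $\phi,\psi\in\calH$ the scalar function $z\mapsto\langle T(z)\phi,\psi\rangle$ is measurable (a direct consequence of the assumed weak measurability of $T(\cdot)$) and integrable, since $|\langle T(z)\phi,\psi\rangle|\leq\|T(z)\|\,\|\phi\|\,\|\psi\|\leq\|T(z)\|_{\text{tr}}\,\|\phi\|\,\|\psi\|$ and the right-hand side is integrable by~\eqref{eq:147}. Then the sesquilinear form
\begin{equation*}
  B(\phi,\psi):=\int_Z\langle T(z)\phi,\psi\rangle\,d\mu(z)
\end{equation*}
is well-defined and bounded with $|B(\phi,\psi)|\leq\left(\int_Z\|T(z)\|_{\text{tr}}\,d\mu(z)\right)\|\phi\|\,\|\psi\|$, so the Riesz representation theorem yields a unique bounded operator $T\in\mathcal{L}(\calH)$ with $\langle T\phi,\psi\rangle=B(\phi,\psi)$. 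This is exactly the statement that $\int_Z T(z)\,d\mu(z)$ converges weakly to $T$.

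Next I would prove $T\in\mathfrak{S}_1(\calH)$ together with~\eqref{eq:148} via the duality formula
\begin{equation*}
  \|T\|_{\text{tr}}=\sup\bigl\{|\Tr(TK)|\,:\,K\text{ of finite rank},\ \|K\|\leq1\bigr\}.
\end{equation*}
For a finite rank operator $K=\sum_{i=1}^N|f_i\rangle\langle e_i|$ with $\|K\|\leq1$, one computes $\Tr(TK)=\sum_{i=1}^N\langle Te_i,f_i\rangle$, and, by the definition of $T$ and linearity, this equals $\int_Z\Tr(T(z)K)\,d\mu(z)$. Applying Fubini/Tonelli (justified by~\eqref{eq:147}) and the standard bound $|\Tr(T(z)K)|\leq\|T(z)\|_{\text{tr}}\|K\|$ gives
\begin{equation*}
  |\Tr(TK)|\leq\int_Z\|T(z)\|_{\text{tr}}\,d\mu(z).
\end{equation*}
Taking the supremum over admissible $K$ simultaneously shows that $T$ is trace class and yields~\eqref{eq:148}.

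The only genuine subtlety will be ensuring that the weak measurability assumption is strong enough to exchange $\Tr$ with the integral: this reduces to the Fubini step applied to the finite sum $\sum_i\langle T(\cdot)e_i,f_i\rangle$, which is measurable and dominated by $\|T(\cdot)\|_{\text{tr}}$, so no further regularity on $T(\cdot)$ is required. Everything else is bookkeeping.
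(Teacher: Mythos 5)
Your proof is correct, but it follows a different route from the paper. The paper, after the same first step (defining $T$ weakly from the bounded sesquilinear form, which you do identically), fixes an arbitrary orthonormal basis $(\varphi_n)_n$, bounds $\sum_{n\leq N}|\langle T\varphi_n,\varphi_n\rangle|$ by $\int_Z\|T(z)\|_{\text{tr}}d\mu(z)$ uniformly in $N$ and in the basis, and then invokes the Reed--Simon criterion that a bounded operator whose diagonal matrix elements are absolutely summable in every orthonormal basis is trace class, with \eqref{eq:148} following from the same cited result. You instead test $T$ against finite rank contractions and use the duality between the compact operators and $\mathfrak{S}_1(\mathcal{H})$: the uniform bound $|\Tr(TK)|\leq\int_Z\|T(z)\|_{\text{tr}}d\mu(z)$ for all finite rank $K$ with $\|K\|\leq1$ gives both membership in $\mathfrak{S}_1$ and the norm inequality at once. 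Your route delivers the trace norm bound more transparently (the paper's version leans on the citation for that part), at the cost of invoking the duality theorem; the paper's route avoids duality but needs the "every orthonormal basis" criterion. Two small remarks: the displayed identity $\|T\|_{\text{tr}}=\sup\{|\Tr(TK)|\}$ is usually proved for $T$ already known to be trace class, so you should say explicitly that what you use is the converse direction, namely the identification of $\mathfrak{S}_1(\mathcal{H})$ with the dual of the compacts (equivalently, that a bounded $T$ with this supremum finite is trace class with $\|T\|_{\text{tr}}$ equal to the supremum) --- this is standard but is the actual engine of your argument; and the appeal to Fubini/Tonelli is unnecessary, since exchanging $\Tr$ with the integral only involves a finite sum of integrable matrix elements, so linearity of the integral suffices.
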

\begin{proof}
  By assumption, for $(\varphi,\psi)\in\mathcal{H}^2$, one has
  $z\to\langle T(z)\varphi,\psi\rangle$ is measurable and bounded by
  $z\to\|T(z)\|_{\text{tr}}\|\varphi\|\|\psi\|$ which by~\eqref{eq:147} is
  integrable. It, thus, is integrable and one has
  \begin{equation*}
    \left|\int_Z\langle T(z)\varphi,\psi\rangle d\mu(z) \right|\leq
    \int_Z\left|\langle T(z)\varphi,\psi\rangle\right| d\mu(z) \leq
    \int_Z\|T(z)\|_{\text{tr}}d\mu(z)\,\|\varphi\|\,\|\psi\|.
  \end{equation*}
  Thus, the operator $\D T:=\int_ZT(z)d\mu(z)$ is well defined by
  \begin{equation*}
    \langle T\varphi,\psi\rangle :=\int_Z\langle
    T(z)\varphi,\psi\rangle d\mu(z).
  \end{equation*}
  and bounded.\\
  Let us prove that it is trace class and
  satisfies~\eqref{eq:148}. Let $(\varphi_n)_{n\geq1}$ be an
  orthonormal basis of $\mathcal{H}$. Then,
  \begin{equation*}
    |\langle T\varphi_n,\varphi_n\rangle|\leq
    \int_Z \left|\langle T(z)\varphi_n,\varphi_n\rangle
    \right| d\mu(z).
  \end{equation*}
  Thus, 
  \begin{equation*}
    \sum_{n=1}^N  |\langle T\varphi_n,\varphi_n\rangle|\leq
    \int_Z \left(\sum_{n=1}^N\left|\langle T(z)\varphi_n,\varphi_n\rangle
      \right|\right) d\mu(z)\leq\int_Z \|T(z)\|_{\text{tr}} d\mu(z).
  \end{equation*}
  Taking $N\to+\infty$ proves that, for any orthonormal basis of
  $\mathcal{H}$, say,  $(\varphi_n)_{n\geq1}$, one has
  \begin{equation*}
    \sum_{n=1}^{+\infty}  |\langle T\varphi_n,\varphi_n\rangle|
    \leq\int_Z \|T(z)\|_{\text{tr}} d\mu(z)<+\infty.
  \end{equation*}
  Thus, $T$ is trace class (see e.g.~\cite{MR85e:46002}) and
  satisfies~\eqref{eq:148}. This completes the proof of
  Lemma~\ref{le:15}.
\end{proof}
\section{Anti-symmetric tensors: the projector on anti-symmetric
  functions}
\label{sec:proj-totally-antisym}
Pick $\Psi\in L^2(\Lambda^n)$ and let $\Pi_n^{\wedge}:\
L^2(\Lambda^n)\to \bigwedge^n L^2(\Lambda)$ be the orthogonal
projector on totally anti-symmetric function. Then,
\begin{equation*}
  (\Pi_n^{\wedge}\Psi)(x)=\frac1{n!}\sum_{\substack{\sigma\text{
        permutation}\\\text{of }\{1,\cdots,n\}}}\text{sgn}\,\sigma
  \cdot\Psi(\sigma x)
\end{equation*}
where, for $x=(x_1,\cdots,x_n)$, $\sigma
x=(x_{\sigma(1)},\cdots,x_{\sigma(n)})$ and sgn
$\sigma$ is the signature of the permutation $\sigma$.\\
Hence, if $n=Q_1+\cdots+Q_m$ and, for $1\leq j\leq m$, $\varphi_j\in
\bigwedge^{Q_j}L^2(\Delta_j)$, we define
\begin{equation}
  \label{eq:243}
  \left(\prod_{j=1}^m\left\|\varphi^j\right\|\right)^{-1}
  \bigwedge_{j=1}^m\varphi^j:=   \left\|\Pi_n^{\wedge} 
    \left(\bigotimes_{j=1}^m\varphi^j\right)  \right\|^{-1}
  \Pi_n^{\wedge}\left(\bigotimes_{j=1}^m\varphi^j\right)
\end{equation}
and compute
\begin{equation*}
  \begin{split}
    \Pi_n^{\wedge} \left(\bigotimes_{j=1}^m\varphi^j\right)&=
    \frac1{n!}\sum_{\substack{\sigma\text{ permutation}\\\text{of
        }\{1,\cdots,n\}}}\text{sgn
    }\sigma\,\left(\bigotimes_{j=1}^m\varphi^j\right)(\sigma x)
    \\&=\frac1{n!}\sum_{\substack{\sigma\text{ permutation}\\\text{of
        }\{1,\cdots,n\}}}\text{sgn
    }\sigma\,\left(\prod_{j=1}^m\varphi^j(x_{\sigma(\mathcal{Q}_j)})\right)
  \end{split}
\end{equation*}
where
\begin{gather*}
  x_{\sigma(\mathcal{Q}_j)}=(x_{\sigma(Q_1+\cdots+Q_{j-1}+1)},\cdots,
  x_{\sigma(Q_1+\cdots+Q_{j-1}+Q_j)}),\\
  \mathcal{Q}_j=\{Q_1+\cdots+Q_{j-1}+1,\cdots,Q_1+\cdots+Q_{j-1}+Q_j\}.
\end{gather*}
Thus,
\begin{equation*}
  \begin{split}
    n!\cdot\Pi_n^{\wedge} \left(\bigotimes_{j=1}^m\varphi^j\right)&
    =\sum_{\substack{|A_j|=Q_j,\
        \forall1\leq j\leq m\\A_1\cup\cdots\cup A_m=\{1,\cdots,n\}\\
        A_j\cap A_{j'}=\emptyset\text{ if
        }j\not=j'}}\sum_{\substack{\sigma\text{ permutation}\\\text{of
        }\{1,\cdots,n\}\\\text{s.t. }\forall j,\
        \sigma(\mathcal{Q}_j)=A_j}}\text{sgn
    }\sigma\,\left(\prod_{j=1}^m\varphi^j(x_{\sigma(\mathcal{Q}_j)})\right)
    \\&=\sum_{\substack{|A_j|=Q_j,\
        \forall1\leq j\leq m\\A_1\cup\cdots\cup A_m=\{1,\cdots,n\}\\
        A_j\cap A_{j'}=\emptyset\text{ if
        }j\not=j'}}\left(\sum_{\substack{\sigma\text{
            permutation}\\\text{of
          }\{1,\cdots,n\}\\\text{s.t. }\forall j,\
          \sigma(\mathcal{Q}_j)=A_j}}\left(\text{sgn
        }\sigma\,\prod_{j=1}^m \text{sgn
        }\sigma_{|\mathcal{Q}_j}\right)
      \left(\prod_{j=1}^m\varphi^j(x_{A_j})\right)\right)
      \\&=\prod_{j=1}^mQ_j!\sum_{\substack{|A_j|=Q_j,\
          \forall1\leq j\leq m\\A_1\cup\cdots\cup A_m=\{1,\cdots,n\}\\
          A_j\cap A_{j'}=\emptyset\text{ if
          }j\not=j'}}\varepsilon(A_1,\cdots,A_m)\,
      \left(\prod_{j=1}^m\varphi^j(x_{A_j})\right)
  \end{split}
\end{equation*}
where we recall that $\varepsilon(A_1,\cdots,A_m)$ is the signature of
$\sigma(A_1,\cdots,A_m)$ the unique permutation of $\{1,\cdots,n\}$
such that, if $A_j=\{a_{ij};\ 1\leq i\leq Q_j,\ a_{i_1j} < a_{i_2j}
\text{ for } i_1 < i_2\}$ for $1\leq j\leq m$
then $\sigma(a_{ij})=Q_1+\cdots+Q_{j-1}+i$.\\
As $\Delta_j\cap\Delta_k=\emptyset$ if $j\not=k$, one has
\begin{equation*}
  \begin{split}
    \left\|\sum_{\substack{|A_j|=Q_j,\
          \forall1\leq j\leq m\\A_1\cup\cdots\cup A_m=\{1,\cdots,n\}\\
          A_j\cap A_{j'}=\emptyset\text{ if
          }j\not=j'}}\varepsilon(A_1,\cdots,A_m)
      \left(\prod_{j=1}^m\varphi^j(x_{A_j})\right)\right\|^2&=
    \prod_{j=1}^m\left\|\varphi^j\right\|^2\sum_{\substack{|A_j|=Q_j,\
        \forall1\leq j\leq m\\A_1\cup\cdots\cup A_m=\{1,\cdots,n\}\\
        A_j\cap A_{j'}=\emptyset\text{ if }j\not=j'}}1\\&=
    \frac{n!}{\prod_{j=1}^mQ_j!}
    \prod_{j=1}^m\left\|\varphi^j\right\|^2.
  \end{split}
\end{equation*}
Hence, by~\eqref{eq:243}, we get
\begin{equation}
  \label{eq:267}
  \left(\bigwedge_{j=1}^m\varphi^j\right)(x)=
  \sqrt{\frac{\prod_{j=1}^mQ_j!}{n!}}\sum_{\substack{|A_j|=Q_j,\
      \forall1\leq j\leq m\\A_1\cup\cdots\cup A_m=\{1,\cdots,n\}\\
      A_j\cap A_{j'}=\emptyset\text{ if
      }j\not=j'}}\varepsilon(A_1,\cdots,A_m)
  \left(\prod_{j=1}^m\varphi^j(x_{A_j})\right).
\end{equation}
%



\def\cprime{$'$} \def\cydot{\leavevmode\raise.4ex\hbox{.}} \def\cprime{$'$}

\end{document}